\tikzstyle{node}=[fill=black, draw=black, shape=circle, scale=0.5]
\tikzstyle{wnode}=[fill=white, draw=black, shape=circle, scale=0.5]
\tikzstyle{textbox}=[inner sep=2pt, shape=rectangle, fill=none]
\tikzstyle{textnode}=[inner sep=0mm, shape=circle, fill=white]
\tikzstyle{gnode}=[inner sep=0mm, minimum size=1mm, fill={rgb,255: red,221; green,221; blue,221}, draw={rgb,255: red,221; green,221; blue,221}, shape=circle]
\tikzstyle{refine}=[fill=black, draw=black, shape=regular polygon, regular polygon sides=3, rotate=180, scale=0.5]
\tikzstyle{coarsen}=[fill=white, draw=black, shape=regular polygon, regular polygon sides=3, scale=0.5]
\tikzstyle{bdytextbox}=[fill=white, draw=black, shape=rectangle]
\tikzstyle{redbox}=[fill=white, draw=red, shape=rectangle, text=red]
\tikzstyle{bluecirc}=[inner sep=1mm, fill=white, draw={rgb,255: red,4; green,51; blue,255}, shape=circle, text={rgb,255: red,4; green,51; blue,255}]
\tikzstyle{rednode}=[fill=red, draw=red, shape=circle, scale=0.5]
\tikzstyle{new style 0}=[fill=white, draw=red, shape=circle, scale=0.5]
\tikzstyle{bluenode}=[fill={rgb,255: red,4; green,51; blue,255}, draw={rgb,255: red,4; green,51; blue,255}, shape=circle, scale=0.5]
\tikzstyle{yellownode}=[fill={rgb,255: red,255; green,210; blue,75}, draw={rgb,255: red,255; green,210; blue,75}, shape=circle, scale=0.5]
\tikzstyle{blacksq}=[fill=black, draw=black, shape=rectangle, scale=0.5]
\tikzstyle{bluetext}=[fill=none, draw=none, shape=rectangle, text={rgb,255: red,4; green,51; blue,255}]
\tikzstyle{reg}=[draw, fill=white, rounded rectangle, rounded rectangle left arc=none, minimum height=1em, minimum width=1em, node font={\scriptsize}]
\tikzstyle{coreg}=[draw, fill=white, rounded rectangle, rounded rectangle right arc=none, minimum height=1em, minimum width=1em, node font={\scriptsize}]
\tikzstyle{turquoisenode}=[fill={rgb,255: red,115; green,255; blue,239}, draw=black, shape=circle, scale=0.5]
\tikzstyle{resistor}=[R]
\tikzstyle{inductor}=[L]
\tikzstyle{capacitor}=[C]
\tikzstyle{voltage-source}=[american voltage source]
\tikzstyle{current-source}=[american current source]
\tikzstyle{togray}=[none]
\tikzstyle{edge}=[-, draw=black]
\tikzstyle{diredge}=[->, draw=black]
\tikzstyle{dashed edge}=[-, draw=black]
\tikzstyle{loosedash}=[-, dash pattern=on 1.5pt off 3pt, draw=black]
\tikzstyle{dirdash}=[->, dashed, dash pattern=on 2pt off 0.5pt, draw=black]
\tikzstyle{mapsto}=[{|->}, draw=black]
\tikzstyle{gray diredge}=[draw={rgb,255: red,221; green,221; blue,221}, ->]
\tikzstyle{dark grey dirdash}=[->, dashed, dash pattern=on 2pt off 0.5pt, draw={rgb,255: red,81; green,81; blue,81}]
\tikzstyle{doubedge}=[-, draw=black, double=none, double distance=3pt, inner sep=0pt, thick]
\tikzstyle{thedge}=[-, line width=1.5pt, draw=black]
\tikzstyle{gray dashed}=[-, dashed, dash pattern=on 1pt off 1.5pt, draw={rgb,255: red,128; green,128; blue,128}]
\tikzstyle{rededge}=[-, draw=red]
\tikzstyle{gray edge}=[-, draw={rgb,255: red,128; green,128; blue,128}]
\tikzstyle{blthedge}=[-, thick, draw={rgb,255: red,4; green,51; blue,255}]
\tikzstyle{blthedge-extend}=[-, thick, draw={rgb,255: red,4; green,51; blue,255}, shorten >= -0.3pt, shorten <= -0.3pt]
\tikzstyle{blthdash}=[-, dashed, dash pattern=on 3pt off 1pt, thick, draw={rgb,255: red,4; green,51; blue,255}]
\tikzstyle{dirrededge}=[draw=red, ->]
\tikzstyle{object}=[inner sep=0mm, shape=circle, fill=none]
\tikzstyle{bullet}=[fill=black, draw=black, shape=circle, scale=0.3]
\tikzstyle{circ}=[fill=white, draw=black, shape=circle, scale=0.3]
\tikzstyle{objectbox}=[inner sep=3pt, shape=rectangle, fill=none]
\tikzstyle{bdyobjectbox}=[fill=white, draw=black, shape=rectangle]
\tikzstyle{morphism}=[->, draw=black]
\tikzstyle{dash morphism}=[->, dashed, dash pattern=on 1.5pt off 1pt, draw=black]
\tikzstyle{mapsto}=[{|->}, draw=black]
\tikzstyle{nat transf}=[-implies, double, double distance=3pt, thick]
\tikzstyle{gray nat transf}=[-implies, draw=gray, double, double distance=3pt, thick]
\tikzstyle{equality}=[-, double, double distance=3pt]
\tikzstyle{squig morphism}=[->, draw=black, line join=round, decorate, decoration={zigzag, segment length=4, amplitude=.9,post=lineto, post length=2pt}]
\tikzstyle{hookarrow}=[right hook->, draw=black]
\tikzstyle{hookarrowmirror}=[left hook->, draw=black]
\tikzstyle{dotted line}=[-, dotted, draw=black]
\definecolor{electroblue}{RGB}{4,51,255}
\tikzstyle{vsource}=[rmeter, t={\textsf{\tiny -- +}}] 
\tikzstyle{ammeter}=[rmeter, t={\textsf{A}}] 
\tikzstyle{vmeter}=[rmeter, t={\textsf{V}}] 
\tikzstyle{elecdot}=[circle,fill,inner sep=0.85pt]
\newcommand{\includegraffle}[1]{
  {\lower10pt\hbox{$\includegraphics[height=1cm]{tikz/#1.pdf}$}}
}
\tikzstyle{none}=[anchor=center]
\tikzstyle{vertex}=[anchor=center, fill=black, draw=black, shape=circle, minimum size=2.5mm, tikzit category=hypergraph, inner sep=0]
\tikzstyle{edge subgraph}=[fill=neutral, draw=black, shape=rectangle, font={\boxsize}, tikzit category=hypergraph, minimum width=8mm, minimum height=8mm, rounded corners=3mm, very thick, dashed]
\tikzstyle{port}=[minimum size=2.5mm, fill=none, draw=none, shape=circle, tikzit draw={rgb,255: red,154; green,154; blue,154}, anchor=center, tikzit fill=neutral]
\tikzstyle{product}=[fill=neutral, draw=black, shape=circle, scale=0.66, tikzit category=string diagram]
\tikzstyle{type}=[fill=none, draw=none, shape=circle, font={\large}, tikzit fill={rgb,255: red,37; green,193; blue,141}, inner sep=0, anchor=center, tikzit category=string diagram]
\tikzstyle{tiny box white}=[{\corners}, font={\boxsize}, fill=neutral, draw=black, shape=rectangle, tikzit category=string diagram, minimum width={\tinywidth}, minimum height={\tinywidth}, anchor=center, line width={\stringwidth}, tikzit fill=neutral, inner sep={\innersep}]
\tikzstyle{tiny box comb}=[{\corners}, font={\boxsize}, fill=comb, draw=black, shape=rectangle, tikzit category=string diagram, minimum width={\tinywidth}, minimum height={\tinywidth}, anchor=center, line width={\stringwidth}, tikzit fill=comb, inner sep={\innersep}]
\tikzstyle{tiny box seq}=[{\corners}, font={\boxsize}, fill=seq, draw=black, shape=rectangle, tikzit category=string diagram, minimum width={\tinywidth}, minimum height={\tinywidth}, anchor=center, line width={\stringwidth}, inner sep={\innersep}]
\tikzstyle{tiny signal seq}=[{\corners}, font={\boxsize}, shape=signal, signal to=west, signal pointer angle=110, fill=seq, draw=black, tikzit category=string diagram, minimum width=6mm, minimum height=5mm, anchor=center, line width={\stringwidth}, inner sep={\innersep}]
\tikzstyle{small box white}=[{\corners}, font={\boxsize}, fill=neutral, draw=black, shape=rectangle, minimum height={\smallwidth}, minimum width={\tinywidth}, tikzit category=string diagram, anchor=center, line width={\stringwidth}, inner sep={\innersep}]
\tikzstyle{small square box white}=[{\corners}, font={\boxsize}, fill=neutral, draw=black, shape=rectangle, minimum height={\smallwidth}, minimum width={\smallwidth}, tikzit category=string diagram, anchor=center, line width={\stringwidth}, inner sep={\innersep}]
\tikzstyle{medium box}=[{\corners}, font={\boxsize}, fill=neutral, draw=black, shape=rectangle, tikzit category=string diagram, minimum height={\mediumwidth}, minimum width={\smallwidth}, anchor=center, line width={\stringwidth}, tikzit fill=neutral, inner sep={\innersep}]
\tikzstyle{medium box white}=[{\corners}, font={\boxsize}, fill=neutral, draw=black, shape=rectangle, tikzit category=string diagram, minimum height={\mediumwidth}, minimum width={\smallwidth}, anchor=center, line width={\stringwidth}, tikzit fill=comb, inner sep={\innersep}]
\tikzstyle{medium box comb}=[{\corners}, font={\boxsize}, fill=comb, draw=black, shape=rectangle, tikzit category=string diagram, minimum height={\mediumwidth}, minimum width={\smallwidth}, anchor=center, line width={\stringwidth}, tikzit fill=comb, inner sep={\innersep}]
\tikzstyle{medium square box white}=[{\corners}, font={\boxsize}, fill=neutral, draw=black, shape=rectangle, tikzit category=string diagram, minimum height={\mediumwidth}, minimum width={\mediumwidth}, line width={\stringwidth}, tikzit fill=neutral, inner sep={\innersep}]
\tikzstyle{medium square box comb}=[{\corners}, font={\boxsize}, fill=comb, draw=black, shape=rectangle, tikzit category=string diagram, minimum height={\mediumwidth}, minimum width={\mediumwidth}, line width={\stringwidth}, tikzit fill=comb, inner sep={\innersep}]
\tikzstyle{medium square box seq}=[{\corners}, draw=black, font={\boxsize}, fill=seq, shape=rectangle, tikzit category=string diagram, minimum height={\mediumwidth}, minimum width={\mediumwidth}, line width={\stringwidth}, tikzit fill=seq, inner sep={\innersep}]
\tikzstyle{medium square rounded box white}=[rounded corners, font={\boxsize}, fill=neutral, draw=black, shape=rectangle, tikzit category=string diagram, minimum height={\mediumwidth}, minimum width={\mediumwidth}, line width={\stringwidth}, tikzit fill=neutral, inner sep={\innersep}]
\tikzstyle{medium square rounded box comb}=[rounded corners, font={\boxsize}, fill=comb, draw=black, shape=rectangle, tikzit category=string diagram, minimum height={\mediumwidth}, minimum width={\mediumwidth}, line width={\stringwidth}, tikzit fill=comb, inner sep={\innersep}]
\tikzstyle{medium square rounded box seq}=[rounded corners, draw=black, font={\boxsize}, fill=seq, shape=rectangle, tikzit category=string diagram, minimum height={\mediumwidth}, minimum width={\mediumwidth}, line width={\stringwidth}, tikzit fill=seq, inner sep={\innersep}]
\tikzstyle{large box}=[{\corners}, font={\boxsize}, fill=neutral, draw=black, shape=rectangle, tikzit category=string diagram, minimum height=15mm, minimum width=10mm, anchor=center, line width={\stringwidth}, tikzit fill=neutral, inner sep={\innersep}]
\tikzstyle{large square box white}=[{\corners}, font={\boxsize}, fill=neutral, draw=black, shape=rectangle, tikzit category=string diagram, minimum width=15mm, minimum height=15mm, line width={\stringwidth}, tikzit fill=neutral, inner sep={\innersep}]
\tikzstyle{large square box comb}=[{\corners}, font={\boxsize}, fill=comb, draw=black, shape=rectangle, tikzit category=string diagram, minimum width=12mm, minimum height=12mm, line width={\stringwidth}, tikzit fill=comb, inner sep={\innersep}]
\tikzstyle{huge box}=[{\corners}, fill=neutral, draw=black, shape=rectangle, minimum height=40mm, minimum width=15mm, anchor=center, tikzit category=string diagram, line width={\stringwidth}, tikzit fill=neutral, inner sep={\innersep}]
\tikzstyle{output-node}=[fill=neutral, draw=black, shape=circle, anchor=west, inner sep=0.5]
\tikzstyle{state}=[fill=neutral, draw=none, shape=rectangle, minimum height=10mm, minimum width=10mm]
\tikzstyle{delay}=[fill=neutral, draw=black, font={\boxsize}, shape=signal, tikzit category=string diagram, line width={\stringwidth}, minimum height={\tinywidth}, minimum width={\tinywidth}, inner sep=0.5mm, outer sep=0mm, minimum height=1em, minimum width=1em]
\tikzstyle{register}=[fill=seq, draw=black, font={\boxsize}, shape=signal, tikzit category=string diagram, line width={\stringwidth}, minimum height={2em}, minimum width={1.5em}, align=center, anchor=center, inner xsep=1mm, inner ysep=-1mm, outer xsep=0mm]
\tikzstyle{waveform}=[fill=seq, draw=black, font={\boxsize}, shape=signal, signal from=west, signal to=east, tikzit category=string diagram, line width={\stringwidth}, minimum height={2em}, minimum width={1.5em}, align=center, anchor=center, inner xsep=1mm, inner ysep=-1mm, outer xsep=0mm]
\tikzstyle{bproduct}=[fill=black, draw=black, shape=circle, scale=0.5, tikzit category=string diagram]
\tikzstyle{wproduct}=[fill=neutral, draw=black, shape=circle, scale=0.5, line width=0.75, tikzit category=string diagram]
\tikzstyle{gproduct}=[fill=unit, draw=unit, shape=circle, scale=0.5, tikzit category=string diagram]
\tikzstyle{bport}=[minimum size=2.5mm, fill=none, draw=none, shape=circle, tikzit draw={rgb,255: red,154; green,154; blue,154}, anchor=center, tikzit fill=neutral, tikzit category=string diagram]
\tikzstyle{mux}=[fill=neutral, draw=black, shape=trapezium, tikzit category=circuits, rotate=-90, minimum height=1em, line width={\stringwidth}, scale=1.25]
\tikzstyle{fork}=[fill=black, draw=black, shape=circle, tikzit category=circuits, scale=0.25]
\tikzstyle{box}=[fill=neutral, draw=black, shape=rectangle, tikzit category=circuits]
\tikzstyle{dangling}=[fill=none, draw=none, shape=circle, anchor=east, scale=0.01, tikzit category=string diagram, tikzit fill={rgb,255: red,162; green,76; blue,77}]
\tikzstyle{label}=[fill=none, draw=none, shape=circle, align=center, inner sep=0, outer sep=0]
\tikzstyle{small label}=[scale=0.75, fill=none, draw=none, shape=circle, outer sep=0, inner sep=0, anchor=center]
\tikzstyle{wire label left}=[scale=1.25, fill=none, draw=none, shape=rectangle, outer sep=0, inner sep=0, anchor=east]
\tikzstyle{wire label right}=[scale=1.25, fill=none, draw=none, shape=rectangle, outer sep=0, inner sep=0, anchor=west]
\tikzstyle{wire label mid}=[scale=1.25, fill=\bgcolour, shape=rectangle, outer sep=0, inner sep=0, shape=circle, minimum size=0]
\tikzstyle{tile}=[fill=neutral, draw=black, shape=rectangle, tikzit category=string diagram, {\corners}, minimum height=5mm, minimum width=5mm]
\tikzstyle{commuting label}=[fill=none, draw=none, shape=circle, scale=0.5]
\tikzstyle{and}=[fill=neutral, draw=black, shape=and gate US, line width={\stringwidth}, and gate, scale=1.75, tikzit category=circuits]
\tikzstyle{or}=[fill=neutral, draw=black, or gate US, scale=1.75, line width={\stringwidth}, tikzit category=circuits]
\tikzstyle{not}=[fill=neutral, draw=black, not gate US, scale=1.5, line width={\stringwidth}, tikzit category=circuits]
\tikzstyle{nor}=[fill=neutral, draw=black, nor gate, scale=1.75, line width={\stringwidth}, tikzit category=circuits]
\tikzstyle{nand}=[fill=neutral, draw=black, nand gate, scale=1.75, line width={\stringwidth}, tikzit category=circuits]
\tikzstyle{xor}=[fill=neutral, draw=black, xor gate, scale=1.75, line width={\stringwidth}, tikzit category=circuits]
\tikzstyle{xnor}=[fill=neutral, draw=black, xnor gate, scale=1.75, line width={\stringwidth}, tikzit category=circuits]
\tikzstyle{west}=[fill=none, draw=none, shape=circle, anchor=west]
\tikzstyle{bundler}=[fill=neutral, draw=black, shape=rounded rectangle, minimum width=3em, rounded rectangle arc length=180, rotate=90, line width={\stringwidth}]
\tikzstyle{long bundler}=[fill=neutral, draw=black, shape=rounded rectangle, minimum width=5em, rounded rectangle arc length=180, rotate=90, line width={\stringwidth}]
\tikzstyle{interface}=[-, fill={rgb,255: red,238; green,238; blue,255}, dashed, draw={rgb,255: red,170; green,170; blue,225}, ultra thick]
\tikzstyle{graph}=[-, fill={rgb,255: red,238; green,238; blue,238}, draw={rgb,255: red,191; green,191; blue,191}, dashed, ultra thick, anchor=center]
\tikzstyle{tentacle}=[-, very thick]
\tikzstyle{wire}=[-, tikzit category=string diagram, line width={\stringwidth}]
\tikzstyle{empty}=[-, densely dashed, {\corners}, dash pattern=on 0pt off 1.25pt on 1.25pt, line width={\stringwidth}]
\tikzstyle{transition}=[->]
\tikzstyle{output}=[->, decorate, decoration=zigzag]
\tikzstyle{gate}=[-, fill=neutral]
\tikzstyle{thicc}=[-, line width=1]
\tikzstyle{strikethrough}=[-, decoration={markings, mark=at position 0.5 with {
\tikzstyle{traced}=[-, densely dashed, draw=gray]
\tikzstyle{arrow}=[<-, line width={\stringwidth}]
\tikzstyle{arrow up}=[->, line width={\stringwidth}]
\tikzstyle{dashed arrow}=[<-, dashed]
\tikzstyle{unit wire}=[-, fill=none, draw=unit, line width={\stringwidth}]
\tikzstyle{interfacearrow}=[->, very thick, dashed]
\tikzstyle{tile none}=[-, draw=none, {\corners}, fill=none, line width={\stringwidth}]
\tikzstyle{tile white}=[-, {\corners}, fill=neutral, line width={\stringwidth}]
\tikzstyle{tile comb}=[-, {\corners}, fill=comb, line width={\stringwidth}]
\tikzstyle{tile seq}=[-, {\corners}, fill=seq, line width={\stringwidth}]
\tikzstyle{commute}=[->]
\tikzstyle{curved rectangle}=[-, rounded corners]
\tikzstyle{hasse}=[-]
\tikzstyle{wiredash}=[-, line width={\stringwidth}]
\tikzstyle{rewrite}=[-, dashed, line width={\stringwidth}]
\tikzstyle{juxtaposition}=[-, draw={rgb,255: red,128; green,128; blue,128}, densely dashed, line width={\stringwidth}]
\tikzstyle{functor box}=[-, thick, draw={rgb,255: red,83; green,83; blue,83}]
\tikzstyle{boundary box}=[-, draw=none, tikzit draw={rgb,255: red,255; green,0; blue,4}]
\tikzstyle{box}=[shape=rectangle, text height=1.5ex, text depth=0.25ex, yshift=0.5mm, fill=white, draw=black, minimum height=5mm, yshift=-0.5mm, minimum width=5mm, font={\small}]
\tikzstyle{Z dot}=[inner sep=0mm, minimum size=2mm, shape=circle, draw=black, fill={rgb,255: red,221; green,255; blue,221}]
\tikzstyle{Z phase dot}=[minimum size=1.2em, font={\footnotesize\boldmath}, shape=rectangle, rounded corners=0.5em, inner sep=0.2em, outer sep=-0.2em, scale=0.8, tikzit shape=circle, draw=black, fill={rgb,255: red,221; green,255; blue,221}, tikzit draw=blue]
\tikzstyle{X dot}=[Z dot, shape=circle, draw=black, fill={rgb,255: red,255; green,136; blue,136}]
\tikzstyle{X phase dot}=[Z phase dot, tikzit shape=circle, tikzit draw=blue, fill={rgb,255: red,255; green,136; blue,136}, font={\footnotesize\boldmath}]
\tikzstyle{hadamard}=[fill=yellow, draw=black, shape=rectangle, inner sep=0.6mm, minimum height=1.5mm, minimum width=1.5mm]
\tikzstyle{vertex}=[inner sep=0mm, minimum size=1mm, shape=circle, draw=black, fill=black]
\tikzstyle{vertex set}=[inner sep=0mm, minimum size=1mm, shape=circle, draw=black, fill=white, font={\footnotesize\boldmath}]
\tikzstyle{target}=[inner sep=0mm, minimum size=3mm, shape=circle, draw=black]
\tikzstyle{hadamard edge}=[-, dashed, dash pattern=on 2pt off 1.5pt, thick, draw={rgb,255: red,68; green,136; blue,255}]
\tikzstyle{brace edge}=[-, tikzit draw=blue, decorate, decoration={brace,amplitude=1mm,raise=-1mm}]
\tikzstyle{diredge}=[->]
\tikzstyle{dashed edge}=[-, dashed, dash pattern=on 2pt off 0.5pt, draw=black]
\patchcmd{\abstract}{\titlepage}{\cleardoublepage}{}{}
\patchcmd{\endabstract}{\endtitlepage}{\clearpage}{}{}
\begin{document}
\begin{titlepage}
\centering
\vspace*{4cm}
{\huge\bfseries Layered Monoidal Theories\par}
\vspace{2cm}
{\Large Leo Lobski\par}
\vfill
A thesis submitted to\par
\textsc{University College London}\par
for the degree of\par
\textsc{Doctor of Philosophy}\par
\vspace{2cm}
December 2025
\end{titlepage}

\begin{abstract}
\phantomsection
\addcontentsline{toc}{section}{Abstract}
In the first part, we develop {\em layered monoidal theories} -- a generalisation of monoidal theories combining descriptions of a system at several levels. Via their representation as {\em string diagrams}, monoidal theories provide a graphical, recursively defined syntax with a visually intuitive notion of information flow, and built-in notions of parallel and sequential composition. {\em Layered} monoidal theories allow mixing several monoidal theories (together with translations between them) within the same string diagram, while retaining mathematical precision and semantic interpretability. We define three flavours of layered monoidal theories, provide a recursively generated syntax for each one, and construct a free-forgetful adjunction with respect to three closely related semantics: {\em opfibrations}, {\em fibrations} and {\em deflations}. To illustrate the motivation and potential use cases, we provide several examples of notions appearing in the literature that are instances of layered monoidal theories.

In the second part, we develop a formal approach to {\em retrosynthesis} -- the process of backwards reaction search in synthetic chemistry. Chemical processes are treated at three levels of abstraction: (1) {\em (formal) reactions} encode all chemically feasible combinatorial rearrangements of molecules, (2) {\em reaction schemes} encode transformations applicable to ``patches'' of molecules (including the functional groups), and (3) {\em disconnection rules} encode local chemical rewrite rules applicable to a single bond or atom at a time. We show that the three levels are tightly linked: the reactions are generated by the reaction schemes via double pushout graph rewriting, while there is a functorial translation from the disconnection rules to the reactions. Moreover, the translation from the category of disconnection rules to reactions is shown to be sound, complete and universal -- allowing one to treat the disconnection rules as a formal syntax with the semantics provided by the reactions. Finally, we tie together the two parts by providing a formalisation of retrosynthesis within a certain layered monoidal theory.
\end{abstract}
\newpage

\section*{On the shoulders of variably sized creatures: Acknowledgements}\label{acknowledgements}
\addcontentsline{toc}{section}{\nameref{acknowledgements}}
Firstly, I would like to thank my supervisor Fabio Zanasi for agreeing to supervise a PhD project with such a vague idea of ``levels of abstraction'', and for his general openness towards ideas that do not necessarily follow the most mainstream trends. I'm also grateful to Fabio for encouraging me to explore concrete examples, and suggesting many of them -- the case studies in Chapter~\ref{ch:layered-examples} would certainly be much fewer if it wasn't for Fabio's efforts and suggestions.

I would like to thank Ella Gale for coauthoring a number of papers on formalising retrosynthesis -- the entirety of Part~\ref{part:chemistry} would not be possible without our discussions of (organic) chemistry and formalised retrosynthesis, as well as connections between computation, chemistry and category theory.

I thank Samson Abramsky, Mehrnoosh Sadrzadeh and Fredrik Dahlqvist for their feedback, formal and informal, during the intermediate vivas, after my talks and throughout my time at UCL. This has contributed greatly to the content and presentation of the thesis. I thank Jean Krivine and Mehrnoosh Sadrzadeh for a very thorough, encouraging and intellectually stimulating viva examination.

I thank Cole Comfort for pointing me towards the literature on pointed profunctors at an early stage of my PhD, as well as continued discussions of collages and related things.

I thank the attendees and organisers of the Categorical Logic and Type Theory~\cite{jacobs-cltt} reading group, that we ran during the academic year 2022-23: Cameron Michie, Daphne Wang, Lachlan McPheat, Luka Ilic, Tao Gu, Yll Buzoku. I am very grateful that we managed to work through almost the entire book. The fact that I've learned something is hopefully visible in parts of this thesis.

I thank Christina Vasilakopoulou and Joe Moeller for answering my questions via email, helping me understand their work on monoidal fibrations.

I thank Cheng Zhang, Kevin Batz and Wojtek Różowski for explaining the Calculus of Communicating Systems to me. Likewise, I thank Thomas Noll and Kevin Batz for answering my questions about derivable transitions in the calculus in an email exchange. I further thank Jean Krivine for explaining to me why I initially failed to capture the calculus in Section~\ref{sec:ccs} (despite the efforts of all the people mentioned above).

I thank Ralph Sarkis and Antonio Lorenzin for feedback on Section~\ref{sec:prob-channels}.

I thank Antonio Lorenzin for extremely detailed feedback on the first version of entirety of Part~\ref{part:layered} of the thesis.

I thank Cheng Zhang for (unknowingly) suggesting the title for Part~\ref{part:chemistry} of the thesis.

A very special thank you goes to Mateo Torres-Ruiz and Wojtek Różowski for sharing the two basements for the last three years or so. I am very fond of, and grateful for, the endless discussions of mathematics, politics, philosophy, literature, the State, the European Union, the lack... It has been a pleasure to live and work in such an intellectual atmosphere. I sincerely hope that a fraction of that atmosphere is discernible between the lines (or, at times, perhaps even within the lines) of this thesis.

I thank my partner Katya Piotrovskaya for support and explaining base-extension semantics to me during the breaks that I took from writing this thesis.

I would like to extend the thank yous to everyone who has worked from the (one and only) basement office of 66-72 Gower Street during these years, and thereby contributed to an amazing work collective and environment (state-mandated acknowledgements). With massive apologies to anyone I've forgotten, these people include:
Alessandro Di Giorgio,
Amin Karamlou,
Andrea Corradetti,
Antonio Lorenzin,
Carmen Constantin,
Cheng Zhang,
Fredrik Dahlqvist,
Gerco Van Heerdt,
Jas Semrl,
Jialu Bao,
Joni Puljujärvi,
Katya Piotrovskaya,
Kevin Batz,
Léo Henry,
Linpeng Zhang,
Lisa Tse,
Louis Parlant,
Luca Reggio,
Martti Karvonen,
Mateo Torres-Ruiz,
Oliver Bøving,
Rafał Stefański,
Ralph Sarkis,
Robin Piedeleu,
Sam Coward,
Stefan Zetzsche,
Thibaut Antoine,
Tiago Ferreira,
Will Smith,
Wojtek Różowski,
Yll Buzoku (honorary basement fellow via the terminal).
The third floor gang also deserves an honorary mention: Alex Gheorghiu, Tao Gu, Timo Eckhardt, Timo Lang, Yll Buzoku (third floor fellow via physical presence).

Importantly, I want to thank all the cleaning and maintenance staff responsible for the upkeep of the building, the library and information services staff, the technicians, as well as the catering staff. Due to UCL's highly fragmented hiring practises, I do not know most of their names. All the same, none of this would have been possible without the layers of labour they provide.

Lastly, I would like to thank my parents, Irina and Alexei. Nothing that could be written here could convey the scope of their support and care, so I restrict myself to saying that many parts of this thesis have been written in their house in my hometown Jyv\"askyl\"a, Finland.

\newpage

\phantomsection
\addcontentsline{toc}{section}{Contents}
\tableofcontents
\newpage

\part{Layered string diagrams}\label{part:layered}

\chapter{Introduction}\label{ch:layered-introduction}
We study {\em systems} that admit several {\em compositional} descriptions at different {\em levels} of detail.

A {\em system} here is to be interpreted very broadly: examples of what qualifies as a system include logical systems of deduction, computer architecture, cyber-physical systems, as well as natural processes such as biochemical systems or quantum systems. We shall see six extended case studies of how the general theory developed here applies to various systems in Chapter~\ref{ch:layered-examples}.

By {\em compositionality} we mean that the processes within each level of description may be combined to form composite processes. Monoidal theories provide two systematic ways to combine processes. First, any two processes can be composed in {\em parallel}, corresponding to their simultaneous occurrence or execution. Second, if the output type of a process $p:a\rightarrow b$ equals the input type of a process $q:b\rightarrow c$, the processes can be composed {\em sequentially} to obtain $p;q : a\rightarrow c$, corresponding to their occurrence or execution one after another.

An important intuition motivating this work is that the same system can be described at different {\em levels}, and that there are translations between the levels. The levels correspond to different perspectives on the system, perhaps emphasising various details or features. We usually think that some levels are {\em coarser} and some are more {\em fine-grained}, and that the translations correspond to ``zooming in'' or ``abstracting away'' the details.

While compositional view of systems is standard in categorical modelling, the novelty of this thesis lies in the considerations of the levels of description -- allowing for a more fine-grained view of compositionality.

\section{Layers of abstraction}\label{sec:layers-abstraction}

Many theories describing a given system organise naturally into a hierarchy of abstraction. For instance, different sciences operate at different levels: while chemistry has the vocabulary to talk about interactions between molecules, particle physics lacks the language to describe interactions of such complexity. Similarly, biology is able to talk about the function of a cell within an organism, while chemistry is restricted to describing reactions that are necessary to fulfil this function. Yet there is a sense in which the levels depend on each other: any chemical interaction occurs between molecules that consist of physical particles, any function of a cell results from biochemical interactions. It is therefore customary to think that the sciences can be arranged in layers of abstraction, starting from fundamental physics, each building on the previous one yet having its own distinct vocabulary, rules and laws.

Climbing up the abstraction levels, one is forced to ignore more and more details of the base level. However, new expressive power is simultaneously gained via the ability to talk about the clusters of interacting entities rather than mere elementary particles. From the point of view of logic and information processing, the syntax of scientific theories describing different levels should reflect the increased expressive power. Such differences in language also create barriers for interpretability of one science in another. Hence, choosing the right level of abstraction is paramount for successful communication between different disciplines.

The view on translations between the layers we take throughout the discussion is that objects in coarser layers uniquely name an object in the finer layer -- the difference in abstraction being captured by a richer set of rules in the finer layer, allowing for more compositions and decomposition of objects. We give an example of such a translation in Figure~\ref{fig:layers-of-abstraction}: the coarser layer contains a very restricted set of English language names for molecules, while the finer layer contains molecular graphs, together with rules to break and create chemical bonds. This view seems to be particularly suited for discrete systems with relatively few parts (or those that can be usefully modelled as such). Note that this excludes systems with dynamical qualities, such as perhaps the most well-known coarse-fine theory pair: thermodynamics and statistical mechanics in classical physics. The reason these theories do not fit the view presented here is that two different microstates (e.g.~motion of particles) in statistical mechanics (finer layer) may correspond to the same macrostate (e.g.~temperature) in thermodynamics (coarser level). Thus, in this case, the functional translation seems to be going in the reverse direction. We leave investigating usefulness of the formalism presented here for such scenarios for future work.

\begin{figure}
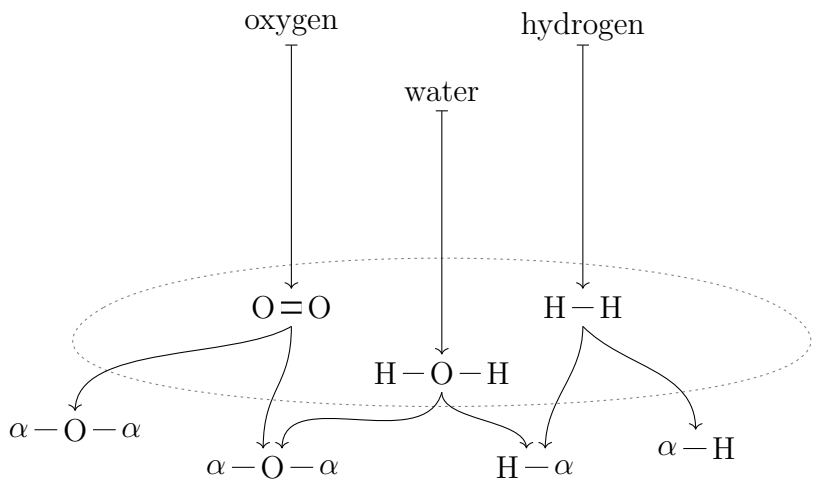

\centering
\tikzfig{layers-of-abstraction}
\caption{An example translation from a coarser language (English names) to a finer language (molecular graphs).\label{fig:layers-of-abstraction}}
\end{figure}

Another question we explicitly demarcate to lie outside of the scope of the current work is any philosophical commitment to reductionism or emergentism. Very roughly, reductionism holds that any higher level theory is ultimately reducible to some lower level (usually taken as a fundamental theory of physics), whereas emergentism is the position according to which there are emergent, higher level phenomena that are lost by translating to a lower level. While the author does find such questions important for philosophy of science and a healthy scientific culture as a whole, here we focus on developing a mathematical formalism which is as free as possible from such assumptions one way or another.

\section{Layered monoidal theories}\label{sec:lay-mod-th-intro}

A {\em monoidal signature} consists of a set of {\em generators}, each drawn as a box
\begin{center}
\scalebox{1}{\tikzfig{f-morphism}}
\end{center}
with a unique {\em arity} $a$ and {\em coarity} $b$. A monoidal signature gives rise to a strict monoidal category by closing the generators under parallel and sequential composition:
\begin{center}
\scalebox{1}{\tikzfig{monoidal-sequential-composition}},
\end{center}
where the coarity of $h$ is equal to the arity of $k$, subject to certain equations\footnote{Specifically, the structural identities given in Definition~\ref{def:str-id}.}. A {\em monoidal theory} is a monoidal signature with a set of equations between parallel terms. The resulting two-dimensional syntax is known as {\em string diagrams}, and it is sound and complete with respect to monoidal categories~\cite{joyal-street88,joyal-street91,selinger,piedeleu-zanasi}. The detailed development of this is subject of Section~\ref{ch:monoidal-theories}.

One point of view on monoidal theories is that they generalise algebraic theories (in the sense of universal algebra): every algebraic theory can be presented as a monoidal theory where all wires have the same type (called {\em colour}) and all generators have coarity of size one. The wires play the role of variables, while the generators play the role of operations. For example, the theory of monoids is given by the following generators of arity $2$ and $0$ (top row) and equations (bottom row) expressing associativity and unitality:
\begin{center}
\scalebox{1}{\tikzfig{monoid}}.
\end{center}

A crucial difference between algebraic and monoidal theories is that monoidal theories are {\em resource aware}~\cite{diagrammatic-algebra19,piedeleu-zanasi}, in the sense that information cannot be copied, created or deleted at will, unlike the variables in universal algebra. This feature makes monoidal theories particularly suitable for modelling situations where keeping track of resources is crucial: we shall see several examples of such monoidal theories in Chapter~\ref{ch:layered-examples}.

Monoidal theories have found applications in quantum theory~\cite{heunen-vicary-book}, quantum computing~\cite{zx-for2020}, linear algebra and affine algebra~\cite{zanasi-thesis,graphical-affine-algebra}, signal flow theory~\cite{survey-signal-flow}, electrical circuit theory~\cite{electrical-circuits}, digital circuit theory~\cite{kaye-thesis}, linguistics~\cite{discocat}, probability theory~\cite{fritz-markov20,jacobs-spr} and causality theory~\cite{lorenz-tull-causal}.

The main contribution of this thesis is to propose a graphical formalism to study layers of abstraction. Since the formalism consists of ``glueing'' several monoidal theories together, we call the resulting mathematical structures {\em layered monoidal theories}. Intuitively, a layered monoidal theory is a diagram of monoidal categories and functors
\begin{center}
\scalebox{1}{\tikzfig{diagram-monoidal-categories}},
\end{center}
where each dot represents a monoidal category, and each arrow a monoidal functor. A typical term (morphism) in a layered theory looks like
\begin{center}
\scalebox{1}{\tikzfig{term-layered-theory}},
\end{center}
where $x$ is a morphism in the category $\omega$ (drawn as a box on the left) and $y$ is a morphism in the category $\tau$ (drawn as a box on the right). The dotted lines indicate that $x$ and $y$ sit above $\omega$ and $\tau$, and that the ``functor boundary'' $\refine_f$ sits above $f$ -- these are formally not part of the term and are included for illustrative purposes only. The morphism $x$ can be, moreover, pushed through the functor boundary as follows:
\begin{center}
\scalebox{1}{\tikzfig{term-layered-pushing}}.
\end{center}
While on its own this does not add much to the informal diagrams for categories and functors, we additionally capture the monoidal structure of each category within the diagrams. Most importantly, in addition to the functor boundaries $\refine_f$ going in the ``forward'' direction, we add functor boundaries $\coarsen_f$ in the dual (opposite) direction, resulting in diagrams of the following shape:
\begin{center}
\scalebox{1}{\tikzfig{window}},
\end{center}
which we call {\em windows} (see Section~\ref{sec:functor-coboxes}). Windows allow one to ``peek in'' at the semantics of the functor $f$ -- without performing the full translation.

The task we undertake in Chapter~\ref{ch:layered-theories} is to formalise the intuitive picture sketched here: we introduce {\em layered signatures} and recursively generate their {\em terms}. In Chapter~\ref{ch:semantics} we further show that terms organise themselves into {\em free models} upon quotienting by appropriate equations. Overall, this thesis contains four soundness and completeness results (three of which are novel), all arising by constructing a free-forgetful adjunction. We summarise the results in Table~\ref{tab:soundness-completeness-results}: the first column indicates a theory (with a reference to its definition in the current text), while the second column indicates the semantics with respect to which the theory in question is sound and complete. These semantics are tightly linked: each (split) opfibration with indexed monoids, in particular, contains models for monoidal theories as its fibres; each (split) fibration with indexed comonoids is obtained by dualising a (split) opfibration with indexed monoids; and the 1-categorical part of a (split) monoidal deflation is obtained by glueing a (split) opfibration with indexed monoids with its dual (split) fibration with indexed comonoids\footnote{What is meant by ``glueing'' is made precise by Corollary~\ref{cor:decomposition-biretrofunctor-opfibration} and Lemma~\ref{lma:monoidal-deflation-opfibration-indexed-monoids}.}.

\begin{table}
\renewcommand{\arraystretch}{1.5}
\begin{tabular}{ c c | c c }
theory & definition & semantics & definition \\
\hhline{==|==}
monoidal theory & \ref{def:monoidal-theory} & (strict) monoidal category & -- \\
\hline
opfibrational theory & \ref{def:opfib-layered-theory} & split opfibration with indexed monoids & \ref{def:opfib-indexed-mon} \\
\hline
fibrational theory & \ref{subsec:fib-theories} & split fibration with indexed comonoids & -- \\
\hline
deflational theory & \ref{def:deflational-theory} & split monoidal deflation & \ref{def:monoidal-deflation}
\end{tabular}
\renewcommand{\arraystretch}{1}
\caption{The summary of theories and the semantics they characterise.\label{tab:soundness-completeness-results}}
\end{table}

Our construction of layered monoidal theories and their free models enjoys the following desirable properties:
\begin{itemize}
\item it gives a precise specification of multiple (potentially infinitely many) monoidal theories and functorial translations between them via finite, easily manageable syntax: we give a brief example of this in Subsection~\ref{subsec:motivate-digital-circuits} below by recursively defining addition of numbers in binary on $n$-bit wires,
\item the construction of free models is completely syntactic: we build models out of recursively generated types and terms, making the theory developed here easily amenable for formalisation and automation,
\item we provide semantics in terms of well-known mathematical structures (opfibrations and profunctor collages), making potential connections with other areas of mathematics,
\item the construction is {\em modular}, in the sense that the internal terms contained between any two functor boundaries always have an interpretation within a single monoidal category; this gives the possibility of {\em partial translations} of terms, which do no affect the other parts of the diagram,
\item the case with two layers and one translation between them recovers the usual monoidal {\em functorial semantics}; this functor is faithful (corresponding to completeness) if a certain single equation in the layered theory holds (Equation~\eqref{eq:cobox-pres-id}),
\item the framework formalises examples already existing in the literature (Chapter~\ref{ch:layered-examples}): variable amount of bits in digital circuits, impedance boxes in electrical circuits, graph rewriting for quantum circuit extraction, reaction mechanisms in chemistry, derivations in concurrency theory, conditionals in probability theory,
\item the existing notation for {\em functor boxes} naturally arises within the formalism, together with its dual notion, which we dub a {\em functor cobox}, hitherto appearing in the literature only informally.
\end{itemize}

\subsection{A remark on terminology}

Throughout the thesis, the terms {\em layered monoidal theory} and {\em layered theory} are used interchangeably. We often prefer the shorter version, and since there is no notion of a layered theory that would not be monoidal, no ambiguity will arise.

In Section~\ref{sec:opfib-models}, we shall see that certain layered theories correspond to what we term {\em (op)fibrations with indexed (co)monoids}, which, in turn, correspond to (op)indexed monoidal categories (Section~\ref{sec:opfibrations-indexed-monoids}). In this light, one might argue that a better name for what we call {\em layered monoidal theories} (Definition~\ref{def:layered-theory}) would be {\em indexed monoidal theories}. However, obtaining the theories corresponding to indexed (and opindexed) monoidal categories requires restricting to particular subclasses of layered theories. The models of deflational theories (Section~\ref{sec:deflational-models}) are what we term {\em monoidal deflations} (Definition~\ref{def:monoidal-deflation}): while each monoidal deflation restricts to both an opfibration with indexed monoids and a fibration with indexed comonoids, it does, in general, carry more information than the restrictions. The definition of a layered theory (Definition~\ref{def:layered-theory}) is flexible enough to accommodate other theories that are even further removed from indexed monoidal categories, such as introducing purely profunctorial generators. For these reasons, we feel that {\em indexed monoidal theory} would be too suggestive of the interpretation as an indexed monoidal category, hence we choose the name {\em layered monoidal theory}.

\subsection{A motivating example: Digital circuits}\label{subsec:motivate-digital-circuits}

A typical example where hierarchical levels of abstraction arise is computer architecture, where higher level descriptions have to be ultimately implemented as microelectronic circuits. For example, consider an {\em arithmetic logic unit} ({ALU}) capable of the basic logic and arithmetic operations on bits. Suppose that at the level of wires carrying a single bit signal we have a binary addition operation
\begin{center}
\scalebox{.6}{\tikzfig{plus-one}},
\end{center}
which takes as input a signal (either $0$ or $1$ in each wire), and outputs its sum represented as a binary number, with the most significant bit at the bottom. Explicitly, the behaviour is defined as follows, where the most significant bit is at the right:
\begin{align*}
00 &\mapsto 00 \\
01 &\mapsto 10 \\
10 &\mapsto 10 \\
11 &\mapsto 01.
\end{align*}
One can now recursively define addition on numbers represented by a binary sequence of any length. Hence, for each $n\geq 2$, we wish to define the operations
\begin{center}
\scalebox{.7}{\tikzfig{plus}},
\end{center}
where each wire carries an $n$-bit signal, and $+$ computes the sum of the two integers. In order to avoid memory overflow, the operation discards the most significant bit (i.e.~the bottom one). Hence, we may use the above operation on $n$-bit wires in order to compute the sum of two numbers whose binary representation has strictly less than $n$ bits. This, however, is not a problem, as we shall define such an operation for every positive integer $n$ greater than or equal to $2$. Thus, for $n=2$, we define
\begin{center}
\scalebox{.7}{\tikzfig{plus-two}},
\end{center}
where the dashed box is to be interpreted as a ``boundary'' translating the wires carrying a $2$-bit signal (labelled with $2$) to two single bit wires. The $2$-bit wires live inside the box, and become a pair of single bit wires outside of the box. This gives the base case of the recursion. For the recursive case, suppose that the operation has been defined for all $i$ with $2\geq i\geq n$, and we wish to define it for $n+1$. Let us write $n=k+1$, so that $n+1=k+2$. We thus define
\begin{center}
\scalebox{.7}{\tikzfig{plus-n}},
\end{center}
where $f(k)$ denotes $k$ parallel single bit wires, and \scalebox{.4}{$\tikzfig{or-one}$} is the bitwise {OR} operation.

Note that we mix different operations and wires within the same equation, separated by the dashed boxes. Each dashed box is what is known as a {\em functor box}, and they separate the {\em layers} of the theory. In this case, there is one layer for each positive integer $n$, representing wires and operations on $n$-bit signals. Thus, as expected, each layer can be translated to the single bit layer (this need not be the case in general). We give the fully detailed definition of this layered theory in Section~\ref{sec:digital-circuits}.

\subsection{A case for 2-cells: Reasoning within the layered theories}

In many scenarios, there are transformations between processes that are non-trivial, in the sense that they are not identities, and hence change the process in question. Such a transformation is depicted as ``an arrow between morphisms (= processes)'' as follows
\begin{center}
\scalebox{1}{\tikzfig{twocell-int-xy}}.
\end{center}
Note that the domain and the codomain of the morphisms connected by such a transformation must coincide.

It is, therefore, important to incorporate such transformations as above into the notion of a layered theory. We do this via allowing a choice of generating {\em 2-cells} (Definition~\ref{def:choice-2cells}) and {\em 2-equations} between 2-terms (Definition~\ref{def:2equations}). This is a major departure from plain monoidal theories (Chapter~\ref{ch:monoidal-theories}), as we need to deal with terms arising at two levels, as well as with equations between them.

The payoff of incorporating the 2-cells is a notion of (discrete) dynamics, or {\em computational trace} capturing the change of a process within a layered theory. We may think of a sequence of 2-cells as a progressing computation, or as a chain of consecutive rule applications. The 2-cells allow for reasoning and proofs within the layered theories, hence enriching the static representation of processes.

We hope to convince the reader that the theoretical exercise of adding 2-cells is worthwhile in Chapter~\ref{ch:functor-boxes}, where the 2-cells are used to study the properties of functor boxes and coboxes, and in Section~\ref{sec:zx-extraction}, where the 2-cells are certain graph rewrites that in~\cite{thereandback} implement an algorithm which extracts a quantum circuit from a more general graph-like structure.

\section{Structure of Part~\ref{part:layered} of the thesis}

This work separates into four parts:
\begin{enumerate}
\item {\em Preliminaries} --- covered in Chapters~\ref{ch:preliminaries} and~\ref{ch:monoidal-theories},
\item {\em Syntax and general theory} --- covered in Chapters~\ref{ch:layered-theories} and~\ref{ch:functor-boxes},
\item {\em Applications} --- covered in Chapter~\ref{ch:layered-examples},
\item {\em Semantics} --- covered in Chapters~\ref{ch:indexed-monoidal}, \ref{ch:profunctor-collages} and~\ref{ch:semantics}.
\end{enumerate}
In particular, we point out that the last part (semantics) is relevant to the other parts only insofar as (op)fibrational and deflational layered theories are given a class of mathematical models, and it should thus be possible to get a good understanding of how layered theories work by just reading parts 1--3.

For a reader who wishes to see concrete examples first, it is certainly possible to follow the general ideas of the case studies in Chapter~\ref{ch:layered-examples} without reading all the preceding theory. The syntax developed in Chapter~\ref{ch:layered-theories} can then be taken as a mathematical formalisation of such ideas. Finally, the chapters on semantics provide an abstract universe of models in which the syntax is interpreted.

A more semantically inclined reader may want to read Chapters~\ref{ch:indexed-monoidal} and~\ref{ch:profunctor-collages} on indexed monoids and profunctor collages before looking at the syntax in Chapter~\ref{ch:layered-theories}, as these two chapters are independent of any preceding developments (apart from the fairly standard terminology and results recalled in the preliminaries).

In more detail, the remaining chapters are structured as follows.
\begin{itemize}
\item[Chapter~\ref{ch:preliminaries}:] We establish some terminological and notational conventions, define fibrations and indexed categories (and related notions), state the well-known equivalence between them, as well as its extension to the monoidal case.
\item[Chapter~\ref{ch:monoidal-theories}:] We define monoidal theories, and establish a free-forgetful adjunction between monoidal theories and strict monoidal categories.
\item[Chapter~\ref{ch:layered-theories}:] We introduce layered signatures and theories, as well as three classes of theories: opfibrational, fibrational and deflational theories.
\item[Chapter~\ref{ch:functor-boxes}:] We show how functor boxes and coboxes arise within deflational theories, and show how they can be used for reasoning within the theory. This provides first examples of proofs within layered theories.
\item[Chapter~\ref{ch:layered-examples}:] We present six extensive case studies of existing formalisms where layered theories provide conceptual clarity or are useful for computations. It is certainly not necessary to follow all (or, indeed, any) of the case studies in detail: they do, however, provide an important justification for development of the general theory.
\item[Chapter~\ref{ch:indexed-monoidal}:] We define categories and opfibrations with indexed monoids, and study their properties. Opfibrations with indexed monoids (Definition~\ref{def:opfib-indexed-mon}) provide the semantics of opfibrational theories (Definition~\ref{def:opfib-layered-theory}) in Section~\ref{sec:opfib-models}.
\item[Chapter~\ref{ch:profunctor-collages}:] We state the definition of profunctors, displayed categories and collages, as well as the theorem establishing an equivalence between displayed categories and functors. We introduce deflations and their monoidal versions, and show that minimal monoidal deflations are equivalent to indexed monoidal categories. Monoidal deflations (Definition~\ref{def:monoidal-deflation}) provide the semantics of deflational theories (Definition~\ref{def:deflational-theory}) in Section~\ref{sec:deflational-models}.
\item[Chapter~\ref{ch:semantics}:] We establish three free-forgetful adjunctions for the opfibrational, fibrational and deflational theories, and the corresponding categories of models.
\item[Chapter~\ref{ch:layered-discussion}:] We conclude with a discussion of related work, future applications and theoretical developments.
\end{itemize}

\section{Summary of contributions}\label{sec:summary-contributions}

The work in the first part of the thesis is a significant extension of~\cite{lobski-zanasi}. We list novel contributions per each chapter, and for the case studies (Chapter~\ref{ch:layered-examples}) per each section.
\begin{itemize}
\item[Chapter~\ref{ch:monoidal-theories}:] Nothing here is new, albeit our presentation in terms of fibrations of theories and models is somewhat nonstandard.
\item[Chapter~\ref{ch:layered-theories}:] Everything in this chapter is new. Most importantly, this includes the following definitions: layered signature (Definition~\ref{def:layered-signature}), layered theory (Definition~\ref{def:layered-theory}), opfibrational theory (Definition~\ref{def:opfib-layered-theory}) and deflational theory (Definition~\ref{def:deflational-theory}).
\item[Chapter~\ref{ch:functor-boxes}:] While functor boxes are a standard notation for functors, their decomposition into the functor boundaries (Proposition~\ref{prop:cowindow-box-equality}) is a feature of deflational theories. To the best knowledge of the author, the notion of a cobox (Section~\ref{sec:functor-coboxes}) is new.
\item[Section~\ref{sec:digital-circuits}:] Only the axiomatisation as a layered theory (Figure~\ref{fig:dig-circ-eqns}) is new. This builds on the work of Kaye~\cite{kaye-thesis} on string diagrammatic digital circuit theory.
\item[Section~\ref{sec:elec-circuits}:] Only the presentation as a layered theory and the decomposition of the impedance box (Definition~\ref{def:impedance-box}) are new. This builds on the work of Boisseau and Sobociński~\cite{electrical-circuits,boisseau-thesis} on string diagrammatic electrical circuit theory.
\item[Section~\ref{sec:zx-extraction}:] The presentation as a layered theory in new. Likewise, presentation of the gflow-preserving graph rewrite rules on abstract MBQC+LC-graphs (not interpreted in the ZX-calculus) is new. This builds on many works on the ZX-calculus, especially on circuit extraction by Backens~et~al.~\cite{thereandback}.
\item[Section~\ref{sec:ccs}:] The string diagrammatic axiomatisations of reduction and labelled transition system semantics for CCS are new. The definitions of these two semantics are due to Milner~\cite{milner}. All the proofs and constructions are new, although we, of course, recover the standard results. The diagrams are inspired by Krivine~\cite{krivine-talk}.
\item[Section~\ref{sec:glucose}:] Everything here is new, albeit there is a significant overlap with Part~\ref{part:chemistry} of this thesis. The example is inspired by Krivine~\cite{krivine-talk}.
\item[Section~\ref{sec:prob-channels}:] The presentation of conditionals as parametric channels is new, as is the presentation as a layered theory. This builds on the work of Jacobs~\cite{jacobs-spr} on representing conditionals as boxes.
\item[Chapter~\ref{ch:indexed-monoidal}:] To the best knowledge of the author, the notions of categories and opfibrations with indexed monoids (Definitions~\ref{def:indexed-monoids} and~\ref{def:opfib-indexed-mon}), as well as the subsequent theory are new.
\item[Chapter~\ref{ch:profunctor-collages}:] To the best knowledge of the author, the definition of a deflation (Definition~\ref{def:deflation}) and the subsequent theory are new, including monoidal deflations (Definition~\ref{def:monoidal-deflation}).
\item[Chapter~\ref{ch:semantics}:] Everything in this chapter is new. Most importantly, this includes the soundness and completeness results for opfibrational theories (Corollary~\ref{cor:opfibrational-completeness}) and deflational theories (Corollary~\ref{cor:deflational-completeness}).
\end{itemize}

\chapter{Preliminaries}\label{ch:preliminaries}
We assume some knowledge of category theory. Namely, we assume the reader is familiar with the definition of a {\em monoidal category} and {\em monoidal functor} between monoidal categories~\cite{joyal-street91,maclane-cwm,selinger}. Some rudimentary notions from 2-category theory~\cite{leinster98,lack10,johnson-yau21} are required to follow the discussion of profunctor collages and deflations in Chapter~\ref{ch:profunctor-collages}. Familiarity with defining (strict) monoidal categories via monoidal signatures and theories is useful, although we give detailed definitions and discussion in Chapter~\ref{ch:monoidal-theories}. Similarly, familiarity with fibred category theory and profunctors is useful, although we give the minimal necessary definitions in the current chapter, as well as in Section~\ref{sec:profunctors}.

\section{Notation, conventions and terminology}

Here we fix some notation and terminology to be used throughout the thesis.

\subsection{Categories}

Arbitrary categories are denoted by capital Latin letters written in \texttt{mathcal}: $\cat X,\cat Y,\cat Z,\dots$.

Given a small category $\cat X$, its set of objects is denoted by $\Ob(\cat X)$, while its set of morphisms is denoted by $\Mor(\cat X)$. If no confusion is likely to arise, $f\in\cat X$ shall mean that $f\in\Mor(\cat X)$. Given a pair of objects $x,y\in\Ob(\cat X)$, the homset of morphisms from $x$ to $y$ is denoted by $\cat X(x,y)$. The identity morphism of $x$ is denoted by $\id_x$.

Named categories are given a name written in \texttt{mathbf}.

$\Set$ is the category of sets and functions.

$\Set_*$ is the category of pointed sets and point-preserving functions.

$\Mon$ is the category of monoids (in $\Set$).

$\Cat$ is the 2-category of (small) categories, functors and natural transformations.

$\MonCat$ is the category of monoidal categories and strong monoidal functors.

$\MonCat_{\mathsf{st}}$ is the category of strict monoidal categories and strict monoidal functors.

\subsection{Morphisms}

Domain and codomain are part of the morphism data. That is, morphisms $f:x\rightarrow y$ and $f:z\rightarrow w$ are, {\em a priori}, distinct and unrelated, despite the same symbol being used for both. In practice, usage of the same symbol will always indicate close relationship between the two morphisms. The same applies to any other entities that have a domain and codomain, such as terms.

In Part~\ref{part:layered} of the thesis, we use the diagrammatic order of composition. That is, given two composable entities $f:x\rightarrow y$ and $g:y\rightarrow z$ (morphisms in a category, functors, 1-cells, 2-cells etc.), their composite
$$x\xrightarrow{f}y\xrightarrow{g}z$$
is denoted by $f;g$. Most of the time we use the semicolon to emphasise the order of composition is diagrammatic, but even when we write $fg$ or $f\circ g$ the order is still the same. We choose this order so as to match the composition of diagrams, which are used heavily throughout the thesis.

\subsection{Functors}

A functor is usually denoted by a lowercase Latin letter $p:\cat Y\rightarrow\cat X$, or just by an arrow $\cat Y\rightarrow\cat X$ with no symbol.

Given a functor $p:\cat Y\rightarrow\cat X$, we say that an object $y\in\Ob(\cat Y)$ is {\em above} an object $x\in\Ob(\cat X)$ if $py=x$. Similarly, a morphism $g\in\cat Y$ is above a morphism $f\in\cat X$ if $pg=f$. Given a morphism $f\in\cat X$, we denote by $\cat Y_f$ the set of all morphisms above $f$. In the special case when $f$ is an identity on some object $x\in\Ob(\cat X)$, this defines the subcategory of $\cat Y$ known as the {\em fibre} of $x$: the objects are all objects of $\cat Y$ above $x$, the morphisms are all morphisms above $\id_x$. We denote the fibre of $x$ simply by $\cat Y_x$.

Given a function $p:\Ob(\cat Y)\rightarrow\Ob(\cat X)$, a {\em liftable pair} $(f:x\rightarrow y,b)$ consists of a morphism $f\in\cat X$ and an object $b\in\Ob(\cat Y)$ with $pb=y$. Similarly, an {\em opliftable pair} $(a,f:x\rightarrow y)$ consists of a morphism $f\in\cat X$ and an object $a\in\Ob(\cat Y)$ with $pa=x$.

\subsection{2-cells}

Any 2-cells (e.g.~natural transformations) are denoted by lowercase Greek letters $\alpha,\beta,\gamma,\dots$. Given 2-cells
\begin{center}
\scalebox{1}{\tikzfig{twocells-horizontal}},
\end{center}
their horizontal composite is written as $\alpha *\beta$.

Given a 2-category $\cat Y$, its collection of 2-cells is denoted by $\cat Y_2$.

\subsection{Function application}

Given an element of a set $x\in X$ and a function $f:X\rightarrow Y$, the image of $x$ in $Y$ under $f$ is denoted by $f(x)$. We often omit the brackets if this does not introduce any ambiguities, so $fx\coloneq f(x)$. Note that function application is written in the reverse order to composition:
$$(f;g)(x) = gfx = g(f(x)).$$
We choose this order of function application over being consistent with composition as the functions (or functors) are often more important than the objects they apply to, so it is worth putting an emphasis on them by writing them first.

\section{Fibrations and opfibrations}

We spell out the definition of an {\em opfibration}, and define a {\em fibration} as its dual. This is because we primarily focus on opfibrations as the semantics for opfibrational layered theories, although fibrations will also feature, especially when organising models and theories into categories. While (op)fibrations were originally introduced by Grothendieck~\cite{grothendieck2004}, we mostly follow Jacobs~\cite{jacobs-cltt} in our presentation. Other expository texts include Borceux~\cite{borceux94-fibred}, Johnstone~\cite{johnstone02-indexed} and Loregian \& Riehl~\cite{loregian-riehl20}.

Let us fix a functor $p:\cat Y\rightarrow\cat X$.
\begin{definition}[Opcartesian map]\label{def:opcartesian-map}
A morphism $F:a\rightarrow b$ in $\cat Y$ is called {\em $p$-opcartesian} if for any morphism $G:a\rightarrow c$ in $\cat Y$ and any morphism $h:pb\rightarrow pc$ in $\cat X$ such that $pF;h=pG$ (i.e.~the lower triangle below commutes), there is a unique morphism $H:b\rightarrow c$ above $h$ such that $F;H=G$ (i.e.~the upper triangle below commutes):
\begin{center}
\scalebox{1}{\tikzfig{opcartesian-definition}}.
\end{center}
\end{definition}

\begin{remark}[Weakly opcartesian map]\label{rem:weakly-opcartesian-map}
We obtain the notion of a {\em weakly $p$-opcartesian map} upon restricting the quantification to $h$ being the identity map in the above definition: for any morphism $G:a\rightarrow c$ {\em with $pc=pb$} such that $pF=pG$, there is a unique map $H$ above $\id_{pb}$ with $F;H=G$. Clearly, each opcartesian map is, in particular, weakly opcartesian. The converse does not hold in general.
\end{remark}

\begin{definition}[Opfibration]\label{def:opfibration}
A functor $p:\cat Y\rightarrow\cat X$ is an {\em opfibration} if for every opliftable pair $(a,f:x\rightarrow y)$, there is a $p$-opcartesian morphism $F:a\rightarrow b$ above $f$:
\begin{center}
\scalebox{1}{\tikzfig{opfibration-definition}}.
\end{center}
Such morphism $F$ is called an {\em opcartesian lifting} of $(a,f:x\rightarrow y)$. Note that we drop the prefix $p$ if the functor is clear from the context.
\end{definition}
In the context of (op)fibrations, we refer to the domain category $\cat Y$ as the {\em total} category, and to the codomain category $\cat X$ as the {\em base} category.

\begin{remark}[Preopfibration]\label{rem:preopfibration}
We obtain the notion of a {\em preopfibration} if we require $F$ in the above definition to be {\em weakly} $p$-opcartesian rather than $p$-opcartesian.
\end{remark}

By the defining universal property of opcartesian maps, opcartesian liftings are unique up to an isomorphism. An opfibration which comes with a choice of an opcartesian lifting for each opliftable pair is called {\em cloven}.

If $p:\cat Y\rightarrow\cat X$ is a cloven opfibration, then every morphism $f:x\rightarrow y$ in $\cat X$ induces a functor between the fibres $f^*:\cat Y_x\rightarrow\cat Y_y$ by mapping each $a\in\cat Y_x$ to the codomain of the chosen opcartesian lifting of $(a,f:x\rightarrow y)$, and each morphism $G\in\cat Y_x$ to the unique map above $\id_y$ induced by the universal property of the opcartesian lifting of the liftable pair $(\dom(G),f)$, where $\dom(G)$ is the domain of $G$. The functor $f^*$ is referred to as the {\em reindexing functor} induced by $f$. In fact, defining a reindexing functor for each morphism in the base category of an opfibration is equivalent to providing a choice of opcartesian liftings.
\begin{proposition}\label{prop:opfibration-to-pseudofunctor}
Any cloven opfibration $p:\cat Y\rightarrow\cat X$ induces a pseudofunctor $\cat X\rightarrow\Cat$ by sending each object to its fibre and each morphism $f$ to the functor $f^*$.
\end{proposition}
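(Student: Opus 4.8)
The plan is to exhibit the data of a pseudofunctor $P:\cat X\rightarrow\Cat$ and then verify its coherence axioms, using the universal property of opcartesian maps at every step. On objects I set $P(x)\coloneq\cat Y_x$, and on morphisms $P(f)\coloneq f^*$; since the preceding discussion has already established that each reindexing $f^*$ is a genuine functor between fibres, what remains is to supply the unitor and compositor isomorphisms and to check that they cohere.

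For the unitor, fix $x\in\Ob(\cat X)$ and observe that for any $a\in\cat Y_x$ the identity $\id_a$ is itself an opcartesian lifting of the opliftable pair $(a,\id_x)$: this is immediate from Definition~\ref{def:opcartesian-map}, since every $G:a\rightarrow c$ above $\id_x$ factors uniquely as $\id_a;G$. Comparing $\id_a$ with the \emph{chosen} lifting $a\rightarrow(\id_x)^*(a)$ via the universal property therefore yields a unique isomorphism $(\id_x)^*(a)\cong a$ above $\id_x$, and these assemble into a natural isomorphism $\phi_x:(\id_x)^*\Rightarrow\id_{\cat Y_x}$.

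For the compositor, fix $f:x\rightarrow y$ and $g:y\rightarrow z$. The key lemma is that a composite of opcartesian maps is opcartesian: if $F:a\rightarrow b$ is opcartesian above $f$ and $G:b\rightarrow c$ is opcartesian above $g$, then $F;G$ is opcartesian above $f;g$, which one proves by factoring any competing morphism first through $F$ and then through $G$, invoking the uniqueness clause at each stage. Applying this to the chosen liftings $F:a\rightarrow f^*(a)$ and $G:f^*(a)\rightarrow g^*(f^*(a))$ shows that $F;G$ is an opcartesian lifting of $(a,f;g)$; comparing it with the chosen lifting $a\rightarrow(f;g)^*(a)$ produces a unique fibre isomorphism $g^*(f^*(a))\cong(f;g)^*(a)$, natural in $a$. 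Recalling the reverse convention for function application, so that $(f^*;g^*)(a)=g^*(f^*(a))$, this gives the compositor $\phi_{f,g}:f^*;g^*\Rightarrow(f;g)^*$.

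Finally I would verify the coherence axioms: the two unit laws relating $\phi_{\id_x,f}$ and $\phi_{f,\id_y}$ to the unitors, and the associativity coherence condition relating $\phi_{f,g}$, $\phi_{g,h}$, $\phi_{f;g,h}$ and $\phi_{f,g;h}$ for a composable triple $f:x\rightarrow y$, $g:y\rightarrow z$, $h:z\rightarrow w$. I expect this to be the main, albeit routine, obstacle. The decisive point is that every isomorphism in sight is the \emph{unique} morphism above an identity induced by opcartesianness; consequently the two composites appearing in each coherence square are both morphisms above an identity that factor a common chosen lifting through the same opcartesian composite (either $F;G;H$, or $\id_a$ in the unit cases), and hence they coincide by the uniqueness part of the universal property. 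Thus the coherence conditions hold automatically, and $P$ is a well-defined pseudofunctor.
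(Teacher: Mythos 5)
Your proof is correct and is precisely the standard argument: the paper states this proposition without proof, treating it as classical, and your construction — identity morphisms are opcartesian, composites of opcartesian maps are opcartesian, and both unitors, compositors and all coherence conditions fall out of the uniqueness clause of the universal property of the chosen liftings — is exactly the argument it implicitly relies on. Two cosmetic remarks only: your compositor $\phi_{f,g}:f^*;g^*\Rightarrow(f;g)^*$ points in the opposite direction to the paper's stated convention $c_{f,g}:I(f;g)\xrightarrow{\sim}I(f);I(g)$ (harmless, since it is invertible), and in the unit coherence cases the relevant opcartesian composite is $\id_a;F=F$ rather than $\id_a$ itself, though your uniqueness argument goes through verbatim.
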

We say that an opfibration is {\em split} if the induced pseudofunctor is, in fact, a functor: i.e.~if for all $x\in\Ob(\cat X)$, the induced functor $\id_x^*$ is the identity functor on $\cat Y_x$ and for all composable maps we have $(f;g)^*=f^*;g^*$.

\begin{remark}
Note that the construction of the reindexing functor $f^*:\cat Y_x\rightarrow\cat Y_y$ already works for a preopfibration. However, the resulting assignment is not compositional (not even up to an isomorphism): the composition of two weakly opcartesian maps need not be weakly opcartesian. Thus, Proposition~\ref{prop:opfibration-to-pseudofunctor} fails for a preopfibration: its proof relies on the liftings being opcartesian (rather than merely weakly opcartesian). However, a preopfibration does induce a normal lax functor $\cat X\rightarrow\Prof$ into the 2-category of profunctors. We return to this in Section~\ref{sec:collages} (see specifically Proposition~\ref{prop:displayed-factors-opfibration}).
\end{remark}

A fibration is the dual notion to opfibration, hence all the discussion here also holds for fibrations -- with the directions appropriately reversed.
\begin{definition}[Fibration]\label{def:fibration}
Given a functor $p:\cat Y\rightarrow\cat X$, a morphism $F:a\rightarrow b$ in $\cat Y$ is {\em $p$-cartesian} if it is $p$-opcartesian for $p:\cat Y^{op}\rightarrow\cat X^{op}$. The functor $p:\cat Y\rightarrow\cat X$ is a fibration if $p:\cat Y^{op}\rightarrow\cat X^{op}$ is an opfibration.
\end{definition}

The intuition behind (op)fibrations is that they capture indexing via disjoint unions of indexed sets. We demonstrate this with the following example.
\begin{example}[Family fibration]\label{ex:family-fibration}
Define the category $\Fam(\Set)$ of {\em families of sets} as follows:
\begin{itemize}
\item the objects are pairs $(I,f)$ of a set $I$ and a function $f:I\rightarrow\Set$ assigning to each element $i\in I$ a set $f(i)$,
\item a morphism $(I,f)\rightarrow (J,g)$ is a pair $\left(u,\left\{u_i\right\}_{i\in I}\right)$ of a function $u:I\rightarrow J$ and a family of functions $u_i:f(i)\rightarrow g(u(i))$,
\item the identity on $(I,f)$ is given by $(\id_I,\id_{fi})$,
\item the composition of $(u,u_i) : (I,f)\rightarrow (J,g)$ and $(v,v_j) : (J,g)\rightarrow (K,h)$ is given by $(u;v, u_i;v_{ui})$.
\end{itemize}
The projection $\Fam(\Set)\rightarrow\Set$ to the first component is a fibration:
\begin{itemize}
\item the cartesian maps $(u,\id_{fi}) : (I,f)\rightarrow (J,g)$ are given by commutative triangles
\begin{center}
\scalebox{1}{\tikzfig{families-fibration-cartesian}}
\end{center}
and the family of identity maps on each set $f(i)=g(u(i))$,
\item given a family of sets $(I,f)$ and a function $u:J\rightarrow I$, the cartesian lifting is given by
$$\left(u,\id_{(u;f)(j)}\right) : (J, u;f)\rightarrow (I,f).$$
\end{itemize}
\end{example}
Note that in Example~\ref{ex:family-fibration}, we did not use the structure of $\Set$ in the second component as the target of the indexing, apart from the fact that it is a category. This example, therefore, generalises to an arbitrary category: the category of {\em families over $\cat C$} is denoted by $\Fam(\cat C)$, and the projection to the first component $\Fam(\cat C)\rightarrow\Set$ is still a fibration.

\subsection{Categories of (op)fibrations}

Fibrations and opfibrations organise into categories. We will mostly focus on the fixed base case, however, the general case will also appear occasionally.
\begin{definition}[Morphism of (op)fibrations]
Let $p$ and $q$ in the diagram below be (op)fibrations. A {\em morphism} $(H,K):p\rightarrow q$ is a pair of functors such that the square
\begin{center}
\scalebox{1}{\tikzfig{morphism-opfibrations}}
\end{center}
commutes, and $H$ sends $p$-(op)cartesian maps to $q$-(op)cartesian maps.
\end{definition}
We denote the resulting categories of fibrations and opfibrations by $\Fib$ and $\OpFib$. For a fixed base category $\cat X$, there are also categories of (op)fibrations into $\cat X$, whose morphisms are commutative triangles which preserve the (op)cartesian maps: this is a special case of a morphism defined above when the functor $K$ between the bases is the identity. We denote the fixed base categories by $\Fib(\cat X)$ and $\OpFib(\cat X)$.

Given a morphism of (op)fibrations $(H,K):p\rightarrow q$, observe that $H$ sends $p$-(op)cartesian liftings to $q$-(op)cartesian liftings of the image of the (op)liftable pair under $K$. In the case of cloven (and hence split) opfibrations, we additionally require that the chosen liftings are preserved. The subcategories of split (op)fibrations are denoted by adding a subscript $\mathsf{sp}$ in all four cases defined above: the objects are split (op)fibrations, while morphisms are morphisms of (op)fibrations such that chosen liftings are mapped to chosen liftings.

Two important properties of (op)fibrations are closure under composition and pullbacks. We record this in the following propositions.
\begin{proposition}
If $p:\cat Y\rightarrow\cat X$ and $q:\cat X\rightarrow\cat Z$ are (op)fibrations, then $p;q : \cat Y\rightarrow\cat Z$ is also an (op)fibration.
\end{proposition}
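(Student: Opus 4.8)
The plan is to prove the opfibration case directly and obtain the fibration case by duality. By Definition~\ref{def:fibration}, a fibration is exactly a functor whose opposite is an opfibration; since $(p;q)^{op}=p^{op};q^{op}$, closure of opfibrations under composition immediately gives closure of fibrations. So I assume $p$ and $q$ are opfibrations. First I would unwind what an opliftable pair for $p;q$ is: a pair $(a,h)$ with $a\in\Ob(\cat Y)$ and $h:qpa\rightarrow z$ in $\cat Z$ (recall $(p;q)a=qpa$ in function-application order). The lifting is then built in two stages, lifting first through $q$ and then through $p$. Since $q$ is an opfibration, the opliftable pair $(pa,h)$ in $\cat X$ has a $q$-opcartesian lifting $g:pa\rightarrow x$ above $h$. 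Since $p$ is an opfibration, the opliftable pair $(a,g)$ in $\cat Y$ has a $p$-opcartesian lifting $F:a\rightarrow b$ above $g$. Then $(p;q)F=qpF=qg=h$, so $F$ lies above $h$ and is the candidate opcartesian lifting.

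The heart of the argument is checking that $F$ is $(p;q)$-opcartesian, and here the two universal properties must be chained in the correct order. Given a competitor $G:a\rightarrow c$ in $\cat Y$ and a morphism $k:(p;q)b\rightarrow(p;q)c$ in $\cat Z$ with $(p;q)F;k=(p;q)G$, I would first invoke the $q$-opcartesianness of $g$: since $qg;k=h;k=q(pG)$, there is a unique $m:x\rightarrow pc$ in $\cat X$ above $k$ with $g;m=pG$. I then feed $m$ into the $p$-opcartesianness of $F$: since $pF;m=g;m=pG$, there is a unique $H:b\rightarrow c$ above $m$ with $F;H=G$. A short check gives $(p;q)H=qpH=qm=k$, so $H$ lies above $k$, which is exactly the factorisation required.

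For uniqueness the same two-step structure is run in reverse. Suppose $H':b\rightarrow c$ satisfies $(p;q)H'=k$ and $F;H'=G$. Applying $p$ yields $g;pH'=p(F;H')=pG$ together with $q(pH')=k$, so the uniqueness clause of the $q$-opcartesian property of $g$ forces $pH'=m$; then $H'$ lies above $m$ and factors $G$ through $F$, so the uniqueness clause of the $p$-opcartesian property of $F$ gives $H'=H$. I do not expect a genuine obstacle: the only delicate point is the bookkeeping of which morphism sits above which, and the order in which the two universal properties are applied --- lift through $q$ first and then $p$ for existence, and descend through $p$ first and then $q$ for uniqueness. I would also remark that the construction preserves chosen liftings, so composing the two clefts shows the composite of split opfibrations is split, which is the form actually used later in the thesis.
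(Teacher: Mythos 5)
Your proof is correct. The paper states this proposition without proof (it is a standard fact), and your argument --- constructing the lifting of $(a,h)$ by first taking a $q$-opcartesian lift $g$ of $(pa,h)$ and then a $p$-opcartesian lift $F$ of $(a,g)$, verifying that $F$ is $(p;q)$-opcartesian by chaining the two universal properties in the order you describe, and obtaining the fibration case by duality via $(p;q)^{op}=p^{op};q^{op}$ --- is exactly the canonical argument the paper leaves implicit, with your closing observation that chosen liftings compose (so split opfibrations compose to split ones) also being correct and indeed the form used later in the thesis.
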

\begin{proposition}\label{prop:opfibrations-pullbacks}
Let $p$ in the diagram below be an (op)fibration. If the square
\begin{center}
\scalebox{1}{\tikzfig{opfibrations-pullbacks}}
\end{center}
is a pullback, then $q$ is also an (op)fibration.
\end{proposition}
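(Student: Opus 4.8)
The plan is to prove the statement for opfibrations using the explicit description of a pullback of categories together with the universal property of opcartesian maps; the fibration case then follows by dualising, since the functor $(-)^{op}$ is an isomorphism $\Cat\cong\Cat$ and hence carries the pullback square to another pullback square, while by Definition~\ref{def:fibration} the functor $q$ is a fibration precisely when $q:\cat Y^{op}\rightarrow\cat X^{op}$ is an opfibration.

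First I would fix notation for the pullback square: write $p:\cat Y\rightarrow\cat X$ for the given opfibration, $K:\cat Z\rightarrow\cat X$ for the functor along which we pull back, and $q:\cat W\rightarrow\cat Z$, $\bar K:\cat W\rightarrow\cat Y$ for the two projections, so that $\bar K;p=q;K$ exhibits $\cat W$ as the pullback. Concretely, an object of $\cat W$ is a pair $(z,y)$ with $Kz=py$; a morphism $(z,y)\rightarrow(z',y')$ is a pair $(k,g)$ of a morphism $k:z\rightarrow z'$ in $\cat Z$ and $g:y\rightarrow y'$ in $\cat Y$ satisfying $Kk=pg$; and $q$, $\bar K$ are the first and second projections. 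Having recorded this description, the verification reduces to bookkeeping with these pairs.

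Next, given an opliftable pair $(\omega,f:z\rightarrow z')$ for $q$ — so $\omega=(z,a)$ with $pa=Kz$ — I would push it forward along $K$ to the opliftable pair $(a,Kf:Kz\rightarrow Kz')$ for $p$. Since $p$ is an opfibration, this pair admits an opcartesian lifting $F:a\rightarrow b$ above $Kf$, with $pb=Kz'$. I then assemble the candidate lifting $F'\coloneq(f,F):\omega\rightarrow(z',b)$ in $\cat W$; this is a legitimate morphism because $pF=Kf$, it lands in $\cat W$ because $p b = Kz'$, and it lies above $f$ since $qF'=f$.

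The substance of the argument is checking that $F'$ is $q$-opcartesian, and here the pullback description makes the universal property of $F$ transfer almost mechanically. Given any $G'=(qG',G):\omega\rightarrow\gamma$ with $\gamma=(w,c)$, and any $h:z'\rightarrow w$ in $\cat Z$ with $f;h=qG'$, applying $K$ yields $pF;Kh=Kf;Kh=K(qG')=pG$, so the opcartesian property of $F$ supplies a unique $H:b\rightarrow c$ above $Kh$ with $F;H=G$. Setting $H'\coloneq(h,H)$ then gives a morphism $b'\to\gamma$ above $h$ with $F';H'=(f;h,F;H)=(qG',G)=G'$, and $H'$ is unique because its first component is forced to equal $h$ while its second component is forced to equal the unique $H$ provided by $F$. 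I expect no genuine obstacle: the only point requiring care is confirming at each stage that the pairs used are bona fide morphisms of $\cat W$ — i.e.~that their $K$- and $p$-images agree — which is exactly what the commuting triangles coming from the universal properties guarantee.
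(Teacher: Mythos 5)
Your argument is correct: the paper records this proposition as a standard fact without proof, and your proof is precisely the canonical one — compute the strict pullback as pairs $(z,y)$ with $Kz=py$, lift $(a,Kf)$ opcartesianly in $p$, pair the lifting with $f$, and check that the universal property transfers componentwise, with the fibration case following by applying $(-)^{op}$ to the square. The only point worth making explicit is that a pullback is determined only up to isomorphism, so one should note that being an (op)fibration is invariant under isomorphism of categories over a fixed base, whence verifying the property on the canonical strict pullback suffices; apart from this remark (and the typo $b'$ for $(z',b)$), nothing is missing.
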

The above propositions imply that the categories $\Fib(\cat X)$ and $\OpFib(\cat X)$ are cartesian, with the cartesian products given by pullbacks. The terminal object is given by the identity functor $\id_{\cat X}:\cat X\rightarrow\cat X$. We denote the resulting cartesian categories by $(\Fib(\cat X),\boxtimes,\id_{\cat X})$ and $(\OpFib(X),\boxtimes,\id_{\cat X})$.

\subsection{Retrofunctors}

A split opfibration may be thought of as a way of providing chosen liftings for a functor in a way that is (1) functorial (it strictly preserves identities and composition), (2) compatible with the functor in the sense that the lifting is above the morphism that induces it, and (3) minimal in the sense made precise by the requirement of being opcartesian (Definition~\ref{def:opcartesian-map}). If we drop the third requirement (minimality), we obtain the notion of a {\em lens}, which provides {\em some} functorial assignment of liftings compatible with the functor. Further, dropping the second condition (compatibility), we obtain the notion of a {\em retrofunctor}, which is simply a functorial assignment of lifts. Note that this does not require any functor to be present in order to be defined -- a function on object suffices. We remark that retrofunctors are more commonly known in the literature as {\em cofunctors}. However, we agree with Di~Meglio~\cite{dimeglio-mscthesis} that the term cofunctor is misleading, as the concept is not dual to a functor in any meaningful way. We therefore stick to {\em retrofunctor}. We follow Clarke~\cite{clarke-thesis} in our definition of a retrofunctor.

\begin{definition}[Retrofunctor]\label{def:retrofunctor}
A {\em retrofunctor} $(p,\varphi) : \cat Y\rightarrow\cat X$ consists of a function $p:\Ob(\cat Y)\rightarrow\Ob(\cat X)$ and a function $\varphi$ from the opliftable pairs to $\Mor(\cat Y)$ satisfying the following properties:
\begin{itemize}
\item the opliftable pair $(a,f:x\rightarrow y)$ is mapped to
$$\varphi(a,f) : a\rightarrow a^{\varphi},$$
where the codomain $a^{\varphi}$ is above $y$, i.e.~$p\left(a^{\varphi}\right)=y$,
\item $\varphi(a,\id_{pa})=\id_a$,
\item $\varphi(a,f;g)=\varphi(a,f);\varphi\left(a^{\varphi},g\right)$.
\end{itemize}
\end{definition}
We depict the action of a retrofunctor as follows:
\begin{center}
\scalebox{1}{\tikzfig{retrofunctor-definition}}.
\end{center}

\section{Indexed categories}\label{sec:indexed-categories}

We define indexed and opindexed categories, and state the well-known equivalence between (op)fibrations and (op)indexed categories.
\begin{definition}[(Op)indexed category]
An {\em $\cat X$-opindexed category} is a pseudofunctor $I:\cat X\rightarrow\Cat$. Dually, an {\em $\cat X$-indexed category} is a pseudofunctor $I:\cat X^{op}\rightarrow\Cat$.
\end{definition}
We write the natural isomorphisms that witness the pseudofunctor structure as $c_{f,g}:I(f;g)\xrightarrow{\sim}I(f); I(g)$ and $u_x:I(\id_x)\xrightarrow{\sim}\id_{Ix}$.

We say that an (op)indexed category is {\em strict} when the pseudofunctor involved is an actual functor.

Given a small category $\cat X$, we denote the categories of $\cat X$-indexed and $\cat X$-opindexed categories by $\ICat(\cat X)$ and $\OpICat(\cat X)$, whose morphisms are given by pseudonatural transformations. The strict versions are denoted by adding the subscript $\mathsf{st}$, where the morphisms are ordinary natural transformations.

\begin{definition}[Grothendieck construction]
Given a pseudofunctor $I:\cat X\rightarrow\Cat$, the {\em Grothendieck construction} is the category $\Gr(\cat X)$ defined as follows:
\begin{itemize}
\item objects are pairs $(x,a)$, where $x\in\Ob(X)$ and $a\in\Ob(Iy)$,
\item morphisms $(f,F):(x,a)\rightarrow (y,b)$ are pairs of a morphism $f:x\rightarrow y$ in $\cat X$ and a morphism $F:(If)(a)\rightarrow b$ in $I(y)$,
\item the identity on $(x,a)$ is given by $(\id_x,(u_x)_a)$,
\item the composite of $(f,F):(x,a)\rightarrow (y,b)$ and $(g,G):(y,b)\rightarrow (z,c)$ is given by
$$\left(f;g, (c_{f,g})_a ; (Ig)(F); G\right).$$
\end{itemize}
\end{definition}

The following equivalence is well-known. We give a proof sketch, and suggest the reader who has not seen the details before to complete the proof, as it is instructive, and all of the subsequent developments will build on this equivalence.
\begin{theorem}\label{thm:opfibrations-opindexed-categories-equivalence}
Let $\cat X$ be a small category. There are equivalences of categories
\begin{align*}
\Fib(\cat X) &\simeq \ICat(\cat X) \\
\OpFib(\cat X) &\simeq \OpICat(\cat X).
\end{align*}
Moreover, the equivalences restrict to split (op)fibrations and strict (op)indexed categories:
\begin{align*}
\Fib_{\mathsf{sp}}(\cat X) &\simeq \ICat_{\mathsf{st}}(\cat X) \\
\OpFib_{\mathsf{sp}}(\cat X) &\simeq \OpICat_{\mathsf{st}}(\cat X).
\end{align*}
\end{theorem}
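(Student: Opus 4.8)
The plan is to prove the opfibration/opindexed equivalence $\OpFib(\cat X)\simeq\OpICat(\cat X)$ directly, and to deduce the fibration/indexed statement $\Fib(\cat X)\simeq\ICat(\cat X)$ formally by duality: since a fibration over $\cat X$ is by definition (Definition~\ref{def:fibration}) an opfibration over $\cat X^{op}$, and an $\cat X$-indexed category is exactly an $\cat X^{op}$-opindexed category, applying $(-)^{op}$ to total category, base and all morphisms turns the opfibrational equivalence into the fibrational one. So I would concentrate all the real work on the opfibration side and treat both $\OpFib(\cat X)$ and $\OpICat(\cat X)$ as ordinary $1$-categories.

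For that side I would exhibit two functors and show they form an equivalence. In one direction, $\Psi\colon\OpICat(\cat X)\to\OpFib(\cat X)$ sends a pseudofunctor $I\colon\cat X\to\Cat$ to the projection $\pi_I\colon\Gr(I)\to\cat X$ out of its Grothendieck construction. In the other, $\Phi\colon\OpFib(\cat X)\to\OpICat(\cat X)$ chooses a cleavage for each opfibration and returns the induced pseudofunctor of Proposition~\ref{prop:opfibration-to-pseudofunctor}. On morphisms, a morphism of opfibrations $(H,\id_{\cat X})\colon p\to q$ restricts to functors $H_x\colon\cat Y_x\to\cat Y'_x$ between the fibres, since $H$ commutes with the projections and preserves opcartesian maps; these fibrewise functors assemble into a pseudonatural transformation $\Phi(p)\Rightarrow\Phi(q)$, the comparison cells arising precisely because $H$ need not send chosen liftings to chosen liftings on the nose. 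Conversely a pseudonatural transformation induces a fibrewise functor between Grothendieck constructions preserving the canonical liftings, hence a morphism of opfibrations.

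Two lemmas carry the argument. First, $\pi_I$ really is an opfibration: given an opliftable pair $\bigl((x,a),f\colon x\to y\bigr)$, the morphism $\bigl(f,\id_{(If)(a)}\bigr)\colon(x,a)\to(y,(If)(a))$ is opcartesian, which one checks straight from Definition~\ref{def:opcartesian-map} using the universal property of $(If)(a)$ in its fibre; this also equips $\Gr(I)$ with a canonical cleavage, so $\Psi$ lands in cloven opfibrations. Second, for the roundtrip $\Psi\circ\Phi\cong\id$, I would write down the comparison functor $\Gr(\Phi(p))\to\cat Y$ sending an object $(x,a)$ with $a\in\cat Y_x$ to $a$, and a morphism $(f,F)$ to $\overline{f}_a;F$, where $\overline{f}_a$ is the chosen opcartesian lifting and $F$ lives in a fibre. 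The crucial point is that this is an isomorphism of categories over $\cat X$: by the defining universal property of opcartesian maps, every morphism $g$ of $\cat Y$ lying above $f$ factors uniquely as $\overline{f}_a;F$ with $F$ above an identity, which supplies the inverse assignment. The other roundtrip $\Phi\circ\Psi\cong\id$ is easier, as feeding the canonical cleavage of $\Gr(I)$ back into Proposition~\ref{prop:opfibration-to-pseudofunctor} returns $I$ (on the nose when $I$ is strict, and otherwise up to a canonical invertible pseudonatural transformation).

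Finally I would check the split/strict restriction: an opfibration is split exactly when $\Phi(p)$ is an honest functor (this is the definition given after Proposition~\ref{prop:opfibration-to-pseudofunctor}), and $\Gr(I)$ carries a split cleavage precisely when $I$ is strict; since the comparison isomorphisms above are built directly from the cleavages, they respect strictness, and morphisms of split opfibrations correspond to ordinary natural transformations. The main obstacle I anticipate is not any single verification but the coherence bookkeeping in the $2$-dimensional data: making $\Phi$ well defined requires choosing a cleavage for every opfibration (an appeal to choice), and one must track the comparison cells $c_{f,g}$ and $u_x$ carefully to confirm that $\Phi$ is functorial on morphisms, that the assembled fibrewise families are genuinely pseudonatural, and that the two roundtrip comparisons are themselves natural in the opfibration (respectively, the pseudofunctor). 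Everything else reduces to the single, repeatedly used fact that morphisms factor uniquely through opcartesian liftings.
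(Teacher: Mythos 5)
Your proposal is correct and follows essentially the same route as the paper, whose proof is only a two-sentence sketch: the left-to-right functor is the induced pseudofunctor of Proposition~\ref{prop:opfibration-to-pseudofunctor} (via a choice of cleavage) and the right-to-left functor is the Grothendieck projection, exactly as you set up, with the fibrational case handled by duality. Your added details --- the opcartesianness of $\bigl(f,\id_{(If)(a)}\bigr)$, the unique-factorisation argument making $\Gr(\Phi(p))\to\cat Y$ an isomorphism over $\cat X$, and the observation that splitness of the canonical cleavage of $\Gr(I)$ corresponds precisely to strictness of $I$ --- are exactly the verifications the paper leaves to the reader, and they are all sound.
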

\begin{proof}[Proof sketch]
The left-to-right functor sends each (op)fibration to the pseudofunctor defined in Proposition~\ref{prop:opfibration-to-pseudofunctor}. The right-to-left functor sends each $\cat X$-(op)indexed category to the projection $\Gr(\cat X)\rightarrow\cat X$ to the first component from the Grothendieck construction.
\end{proof}
In fact, the above equivalences extend to 2-equivalences of 2-categories by adding the appropriate 2-cells to all the categories: we will, however, focus on the 1-categorical case here.

\subsection{(Op)fibrations with pseudomonoids and indexed monoidal categories}\label{subsec:opfibrations-indexed-monoidal}

Moeller and Vasilakopoulou~\cite{monoidal-gro} have extended the equivalence between (op)fibrations and (op)indexed monoidal categories in Theorem~\ref{thm:opfibrations-opindexed-categories-equivalence} to include monoidal structure. In fact, they have extended it in two ways: in the first case, each fibre is equipped with monoidal structure, which is preserved by the morphisms between the fibres; in the second case, there is a monoidal structure on the total and the base categories, which is preserved by the (op)fibration. Following Shulman~\cite{shulman2008}, we refer to the first (`fiberwise') case as {\em internal} monoidal structure, while to the second (`global') case as {\em external} monoidal structure. The two cases are, in general, distinct: we refer the reader to~\cite{monoidal-gro} for the details. Here we briefly sketch the situation of the first case, as our theory will build on a special case thereof.

On the side of (op)fibrations, the monoidal structure is captured by a pseudomonoid in the cartesian categories $(\Fib(\cat X),\boxtimes,\id_{\cat X})$ and $(\OpFib(X),\boxtimes,\id_{\cat X})$. A pseudomonoid in one of these categories is given by a triple $(\cat Y,\otimes,\one)$, where $\cat Y\rightarrow\cat X$ is an object (i.e.~an (op)fibration), while $\otimes:\cat Y\boxtimes\cat Y\rightarrow\cat Y$ and $\one :\cat X\rightarrow\cat Y$ are morphisms, i.e.~the diagrams
\begin{center}
\scalebox{1}{\tikzfig{gr-pseudomonoid}}
\end{center}
commute, and each functor preserves (op)cartesian maps, such that the usual unitality and associativity equations hold up to natural isomorphisms. A morphism of pseudomonoids is given by a morphism of (op)fibrations which commutes with the multiplication and the unit, again up to a natural isomorphism. We denote the resulting categories of pseudomonoids by $\PsMonFib(\cat X)$ and $\PsMonOpFib(\cat X)$. If the pseudomonoids are required to be actual monoids strictly preserved by the morphisms, the resulting categories of monoids are denoted by $\MonFib(\cat X)$ and $\MonOpFib(\cat X)$.

On the side of (op)indexed categories, the monoidal structure is captured by requiring the image of the pseudofunctor into $\Cat$ to be, in fact, contained in $\MonCat$. We make the following definitions.
\begin{definition}[(Op)indexed monoidal category]\label{def:opindexed-monoidal-category}
An {\em $\cat X$-opindexed monoidal category} is a pseudofunctor $I:\cat X\rightarrow\MonCat$. Dually, an {\em $\cat X$-indexed monoidal category} is a pseudofunctor $I:\cat X^{op}\rightarrow\MonCat$.
\end{definition}
We say that an (op)indexed monoidal category is {\em strict} if the pseudofunctor is an actual functor and its image is contained in $\MonCat_{\mathsf{st}}$, that is, each monoidal category and functor are strict.

Similarly to the plain indexed case, we denote the resulting categories by $\IMonCat(\cat X)$ and $\OpIMonCat(\cat X)$, and add the subscript $\mathsf{st}$ in the strict case.

\begin{warning}
The terms ``indexed monoidal category'' and ``monoidal indexed category'' are not used consistently in the literature. Some authors use it to refer to the external case when the indexing category $\cat X$ itself has a monoidal structure, and the pseudofunctor is required to preserve it in some sense. The terminology of Definition~\ref{def:opindexed-monoidal-category} is consistent with Hofstra \& De~Marchi~\cite{hofstra-demarchi2006} and Ponto \& Shulman~\cite{ponto-shulman12}.
\end{warning}

The following extends Theorem~\ref{thm:opfibrations-opindexed-categories-equivalence} to (op)fibrations with pseudomonoids and (op)indexed monoidal categories.
\begin{theorem}[Moeller and Vasilakopoulou~\cite{monoidal-gro}, Theorem 3.14]\label{thm:moeller-and-vasilakopoulou}
Let $\cat X$ be a small category. There are equivalences of categories
\begin{align*}
\PsMonFib(\cat X) &\simeq \IMonCat(\cat X) \\
\PsMonOpFib(\cat X) &\simeq \OpIMonCat(\cat X).
\end{align*}
\end{theorem}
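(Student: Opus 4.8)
The plan is to transport the equivalence of Theorem~\ref{thm:opfibrations-opindexed-categories-equivalence} across the relevant monoidal structures and then read off what a pseudomonoid becomes on each side. I will treat the opfibration case $\PsMonOpFib(\cat X)\simeq\OpIMonCat(\cat X)$ in detail and obtain the fibration case by substituting $\cat X^{op}$ for $\cat X$. First I would upgrade the equivalence $\OpFib(\cat X)\simeq\OpICat(\cat X)$ to the 2-equivalence alluded to after Theorem~\ref{thm:opfibrations-opindexed-categories-equivalence}: the pseudo-inverse pair is the Grothendieck construction $\Gr$ together with the assignment of Proposition~\ref{prop:opfibration-to-pseudofunctor}, with 2-cells being natural transformations of the underlying functors compatible with the projections on the left, and modifications on the right.

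Second, I would check that this 2-equivalence is (cartesian) monoidal. The 2-category $\OpFib(\cat X)$ carries the cartesian structure $(\boxtimes,\id_{\cat X})$ coming from Proposition~\ref{prop:opfibrations-pullbacks}, while $\OpICat(\cat X)$ has finite products computed pointwise, $(I\times J)(x)=I(x)\times J(x)$, with unit the constant pseudofunctor at the terminal category $\mathbf 1$. The key observation is that $\Gr$ preserves these products up to canonical isomorphism: the total category of $I\times J$ has fibre $I(x)\times J(x)$ over $x$, which is exactly the pullback $\Gr(I)\boxtimes\Gr(J)$ over $\cat X$, and $\Gr(\mathbf 1)\cong\id_{\cat X}$. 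Hence both directions are finite-product-preserving, so the 2-equivalence is symmetric monoidal for the cartesian structures. Since a symmetric monoidal 2-equivalence induces an equivalence on categories of pseudomonoids, and since the morphisms on both sides are required to commute with multiplication and unit only ``up to isomorphism'', I obtain $\PsMonOpFib(\cat X)\simeq$ the category of pseudomonoids in $(\OpICat(\cat X),\times,\mathbf 1)$.

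The crux, which I expect to be the main obstacle, is to identify a pseudomonoid in $(\OpICat(\cat X),\times,\mathbf 1)$ with a pseudofunctor $\cat X\rightarrow\MonCat$. Unwinding the definition, such a pseudomonoid consists of a pseudofunctor $I:\cat X\rightarrow\Cat$ together with pseudonatural transformations $m:I\times I\Rightarrow I$ and $e:\mathbf 1\Rightarrow I$ and invertible coherence modifications. Evaluating at each $x\in\Ob(\cat X)$, the components $m_x:I(x)\times I(x)\rightarrow I(x)$ and $e_x:\mathbf 1\rightarrow I(x)$ equip the fibre $I(x)$ with a tensor product and unit, and the associativity and unitality modifications restrict to the associators and unitors, making $I(x)$ monoidal. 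The content of pseudonaturality of $m$ and $e$ is precisely that each reindexing functor $I(f)$ carries strong-monoidal structure maps $I(f)(a)\otimes I(f)(b)\xrightarrow{\sim}I(f)(a\otimes b)$ and $\one\xrightarrow{\sim}I(f)(\one)$, and compatibility of these modifications with the pseudofunctor coherences $c_{f,g},u_x$ is exactly the statement that $I$ lifts along the forgetful functor $\MonCat\rightarrow\Cat$ to a pseudofunctor into $\MonCat$.

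This passage is a diagram chase matching the pentagon and triangle coherence cells on the pseudomonoid side with the monoidal-category and monoidal-functor axioms on the $\MonCat$ side; the same bookkeeping identifies pseudomonoid morphisms with monoidal pseudonatural transformations, that is, with the morphisms of $\OpIMonCat(\cat X)$. Composing with the equivalence established above yields $\PsMonOpFib(\cat X)\simeq\OpIMonCat(\cat X)$, and the dual argument, obtained by running the whole construction on fibrations or equivalently by replacing $\cat X$ with $\cat X^{op}$ throughout, gives $\PsMonFib(\cat X)\simeq\IMonCat(\cat X)$.
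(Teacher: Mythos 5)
This theorem is imported into the thesis without proof -- it is cited as Theorem~3.14 of Moeller and Vasilakopoulou~\cite{monoidal-gro} -- so there is no in-paper argument to compare against; your proposal is instead a reconstruction of the cited source's own strategy, and it is a correct one. The three moves you make (upgrading Theorem~\ref{thm:opfibrations-opindexed-categories-equivalence} to a symmetric monoidal 2-equivalence for the cartesian structures $(\boxtimes,\id_{\cat X})$ and pointwise products, transporting pseudomonoids across it, and identifying a pseudomonoid in $(\OpICat(\cat X),\times,\mathbf 1)$ with a lift of the pseudofunctor along $\MonCat\rightarrow\Cat$, where pseudonaturality of $m$ and $e$ supplies the strong monoidal constraints on each reindexing functor) are exactly the skeleton of the proof in~\cite{monoidal-gro}. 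The only point worth flagging is your final dualisation: passing from $\PsMonFib(\cat X)$ to $\PsMonOpFib(\cat X^{op})$ uses $(-)^{op}$ on total categories, which is contravariant on 2-cells, so pseudomonoids are carried to pseudomonoids only because their structural 2-cells are invertible -- a one-line remark, but it should be said rather than absorbed into ``replacing $\cat X$ with $\cat X^{op}$ throughout.''
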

As for Theorem~\ref{thm:opfibrations-opindexed-categories-equivalence}, this is, in fact, part of a 2-equivalence between 2-categories upon adding the appropriate 2-cells to all the categories.

While it would be interesting to study graphical languages for the general (non-strict) case as presented above, this would take us too far from the currently existing string diagrams for monoidal categories, almost all of which are only understood well in the strict case. Here, we therefore focus on the strict case stated below, and leave studying the non-strict case using the methods developed here for future work.
\begin{theorem}\label{thm:monoids-opfibrations-strict-imoncat-equivalence}
Let $\cat X$ be a small category. There are equivalences of categories
\begin{align*}
\MonFib_{\mathsf{sp}}(\cat X) &\simeq \IMonCat_{\mathsf{st}}(\cat X) \\
\MonOpFib_{\mathsf{sp}}(\cat X) &\simeq \OpIMonCat_{\mathsf{st}}(\cat X).
\end{align*}
\end{theorem}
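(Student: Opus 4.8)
The plan is to obtain the strict equivalence as a consequence of the strict Grothendieck correspondence (Theorem~\ref{thm:opfibrations-opindexed-categories-equivalence}), by upgrading that equivalence to one of \emph{cartesian} categories and then transporting monoid objects across it. I would treat the opfibration case; the fibration case follows by the same argument applied to the dual situation of Definition~\ref{def:fibration}.

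First I would recast both sides as categories of monoid objects. By the definitions preceding the statement, $\MonOpFib_{\mathsf{sp}}(\cat X)$ is the category of monoid objects in the cartesian category $(\OpFib_{\mathsf{sp}}(\cat X),\boxtimes,\id_{\cat X})$; here one uses that pullbacks of split opfibrations are again split with liftings computed componentwise, by Proposition~\ref{prop:opfibrations-pullbacks}, so that $\boxtimes$ and $\id_{\cat X}$ restrict to the split subcategory. On the other side, $\OpICat_{\mathsf{st}}(\cat X)$ is cartesian under the pointwise product $(I\times J)(x)=Ix\times Jx$ with terminal object the constant functor $\Delta_{\mathbf 1}$ at the terminal category. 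Unwinding the definition of a monoid object here is the key identification: such a monoid is a functor $I:\cat X\rightarrow\Cat$ with natural transformations $m:I\times I\Rightarrow I$ and $e:\Delta_{\mathbf 1}\Rightarrow I$ obeying the monoid laws strictly; evaluating at each $x\in\Ob(\cat X)$ makes $(Ix,m_x,e_x)$ a monoid in $(\Cat,\times,\mathbf 1)$ --- that is, a strict monoidal category --- while naturality of $m$ and $e$ forces each reindexing functor $I(f)$ to be strict monoidal. Thus a monoid object is exactly a functor $\cat X\rightarrow\MonCat_{\mathsf{st}}$, and a monoid homomorphism is exactly a natural transformation with strict monoidal components, recovering precisely $\OpIMonCat_{\mathsf{st}}(\cat X)$.

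Next I would prove the central lemma that the split/strict equivalence of Theorem~\ref{thm:opfibrations-opindexed-categories-equivalence} preserves finite products, i.e.\ is an equivalence of cartesian categories; it suffices to check this on the Grothendieck functor $\Gr$. For the terminal objects, $\Gr(\Delta_{\mathbf 1})$ has a single object over each $x\in\Ob(\cat X)$ and its projection to $\cat X$ is an isomorphism, so it is $\id_{\cat X}$. For binary products, an object of $\Gr(I\times J)$ is a triple $(x,a,b)$ with $a\in\Ob(Ix)$ and $b\in\Ob(Jx)$, and likewise on morphisms; this matches exactly the pullback $\Gr(I)\boxtimes\Gr(J)$ over $\cat X$, and the chosen opcartesian liftings on both sides are given by reindexing and hence agree. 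So $\Gr(\Delta_{\mathbf 1})\cong\id_{\cat X}$ and $\Gr(I\times J)\cong\Gr(I)\boxtimes\Gr(J)$, naturally in $I$ and $J$.

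Finally I would invoke the general fact that a finite-product-preserving equivalence induces an equivalence between the associated categories of monoid objects, since a monoid and its homomorphisms are specified using only finite products and are therefore transported by any such equivalence. Combined with the two identifications above, this yields $\MonOpFib_{\mathsf{sp}}(\cat X)\simeq\OpIMonCat_{\mathsf{st}}(\cat X)$, and dualising gives $\MonFib_{\mathsf{sp}}(\cat X)\simeq\IMonCat_{\mathsf{st}}(\cat X)$. The step requiring genuine care --- and hence the main obstacle --- is the cartesian lemma: one must verify not merely that objects match, but that the monoid-structure data transport correctly, i.e.\ that the requirement ``$\otimes$ and $\one$ are split opfibration morphisms preserving chosen liftings'' corresponds under $\Gr^{-1}$ to strict monoidality of each reindexing functor, and that splitness throughout corresponds to honest functoriality (strictness) of $I$. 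This is exactly the point at which the restriction of the pseudo-level statement of Theorem~\ref{thm:moeller-and-vasilakopoulou} to the strict case is pinned down.
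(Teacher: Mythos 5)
Your proof is correct, but it takes a genuinely different route from the paper's. The paper argues by direct construction: given a monoid $(\cat Y,\otimes,\one)$ on a split opfibration it reads off the fibrewise monoidal structures $\otimes_x:\cat Y_x\times\cat Y_x\rightarrow\cat Y_x$ with units $\one(x)$, and conversely, given a strict functor $\cat X\rightarrow\MonCat_{\mathsf{st}}$ it defines $\otimes$ and $\one$ on the Grothendieck construction componentwise via $(\otimes_x,I_x)$, checking by hand that the triangles over $\cat X$ commute and that opcartesian maps (pairs with identity second component) are preserved by strictness of the fibrewise structure. You instead factor the theorem through transport of structure: identify both sides as categories of internal monoids in cartesian categories (monoids in $(\OpFib_{\mathsf{sp}}(\cat X),\boxtimes,\id_{\cat X})$ on one side, monoids in strict opindexed categories with pointwise products on the other, the latter unwinding to functors $\cat X\rightarrow\MonCat_{\mathsf{st}}$ exactly as you say), prove that the split Grothendieck equivalence of Theorem~\ref{thm:opfibrations-opindexed-categories-equivalence} preserves the terminal object and binary products including chosen liftings, and invoke the standard fact that a finite-product-preserving equivalence induces an equivalence of monoid categories (here one also uses that for cartesian structures every natural transformation between product-preserving functors is automatically monoidal, so the unit and counit of the equivalence lift to monoid homomorphisms). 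Your route buys modularity and a cleaner treatment of morphisms: the paper's proof sketch constructs the correspondence on objects and verifies the key properties, leaving functoriality and the action on morphisms implicit, whereas your transport lemma delivers the full equivalence, morphisms included, once the cartesian lemma is checked; it also isolates precisely where splitness enters (splitness of the opfibration corresponds to honest functoriality of $I$, and preservation of chosen liftings by $\otimes$ and $\one$ to strict monoidality of reindexing), which is the point you correctly flag as the crux. The paper's direct approach is shorter, stays close to the structures actually used later (the fibrewise products $\otimes_x$ reappear throughout the semantics chapters), and avoids setting up the pointwise cartesian structure on $\OpICat_{\mathsf{st}}(\cat X)$; your approach, on the other hand, is essentially the strict 1-categorical shadow of the Moeller--Vasilakopoulou strategy behind Theorem~\ref{thm:moeller-and-vasilakopoulou}, and would generalise more readily.
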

\begin{proof}
Let $(\cat Y,\otimes,\one)$ be a monoid on a split opfibration $\cat Y\rightarrow\cat X$. For each $x\in\Ob(\cat X)$, this defines a functor
\begin{align*}
\otimes_x : \cat Y_x\times\cat Y_x &\rightarrow\cat Y_x \\
(a,b) &\mapsto a\otimes b,
\end{align*}
while the unit is given by $\one(x)\in\Ob(\cat Y_x)$. Associativity and unitality follow from those of $\otimes$ and $\one$.

Conversely, given a functor $\cat X\rightarrow\MonCat_{\mathsf{st}}$ (i.e.~a strict opindexed monoidal category), we define the monoid on the Grothendieck construction as follows:
\begin{center}
\scalebox{1}{\tikzfig{gr-monoid-induces-by-opindexed}},
\end{center}
where $(\otimes_x,I_x)$ denotes the monoidal structure in the category indexed by $x\in\Ob(\cat X)$. Associativity and unitality now follow from the corresponding properties of strict monoidal categories. Moreover, the triangles above commute by definition, and opcartesian maps (i.e.~pairs whose second component is the identity) are preserved by strictness of the monoidal structure.
\end{proof}

\chapter{Monoidal theories}\label{ch:monoidal-theories}
Here we define monoidal signatures, theories and models. We recover the standard result that monoidal theories are sound and complete with respect to strict monoidal categories (Corollary~\ref{cor:monoidal-free-forgetful}). Everything in this chapter is well-known, perhaps except for a couple of examples in Section~\ref{sec:examples-monoidal-theories}. However, we will mimic the developments here when constructing layered theories in Chapter~\ref{ch:layered-theories}, so we advice the reader to at least skim through. We give several examples of monoidal theories in Section~\ref{sec:examples-monoidal-theories}, many of which will play a role in subsequent developments.

\section{Monoidal signatures}

A monoidal signature consists of a set of generating objects called {\em colours}, and for each pair of lists of colours, a set of generating morphisms called {\em monoidal generators}.

Let us denote by $(-)^*:\Set\rightarrow\Mon$ the free monoid functor.
\begin{definition}[Monoidal signature]
A {\em monoidal signature} is a tuple $(C,\Sigma)$, where $C$ is a set, and $\Sigma : C^*\times C^*\rightarrow\Set$ is a function.
\end{definition}
Given a monoidal signature $(C,\Sigma)$, we call $C$ the set of {\em colours}, and for $a,b\in C^*$, the elements in $\Sigma(a,b)$ the {\em monoidal generators} with arity $a$ and coarity $b$.

\begin{definition}[Morphism of monoidal signatures]
A {\em morphism} between two monoidal signatures
$$\left(f,f_{a,b}\right) : (C,\Sigma)\rightarrow (D,\Gamma)$$
is given by a function $f:C\rightarrow D$, and for each pair $(a,b)\in C^*\times C^*$, a function
$$f_{a,b}:\Sigma(a,b)\rightarrow\Gamma\left(f^*a,f^*b\right).$$
\end{definition}
We denote the resulting category of monoidal signatures by $\MSgn$. We often denote a morphism in $\MSgn$ simply by $f:(C,\Sigma)\rightarrow (D,\Gamma)$ ranging over the whole family of functions: $f:C\rightarrow D$ (without subscripts) is the function on the colours, and $f_{a,b}:\Sigma(a,b)\rightarrow\Gamma(f^*a,f^*b)$ are the functions on the monoidal generators.

Note that there are two ways of ``forgetting'' the structure of $\MSgn$. There is the projection to the first component in the tuple defining the monoidal signature $\MSgn\rightarrow\Set$, and there is the functor into the category of families of sets\footnote{See Example~\ref{ex:family-fibration} for the definition of the category of families of sets.} $\MSgn\rightarrow\Fam(\Set)$ defined by $(C,\Sigma)\mapsto (C^*\times C^*,\Sigma)$, which ``forgets'' that morphisms need to preserve (co)arities. These functors arise in the following situation:
\begin{proposition}\label{prop:msgn-pullback}
The square below is a pullback, where the right vertical map is the family fibration:
\begin{center}
\tikzfig{msgn-pullback}.
\end{center}
\end{proposition}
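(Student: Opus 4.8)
The plan is to exhibit the canonical comparison functor from $\MSgn$ into the pullback and show it is an isomorphism of categories. Write $T\colon\Set\to\Set$ for the bottom arrow $C\mapsto C^*\times C^*$ (acting on a function $f$ by $f\mapsto f^*\times f^*$), and $\pi\colon\Fam(\Set)\to\Set$ for the family fibration $(I,\Sigma)\mapsto I$. First I would verify that the square commutes strictly: on an object $(C,\Sigma)$ both composites give $C^*\times C^*$, and on a morphism $(f,f_{a,b})$ both give $f^*\times f^*$. The universal property of the strict pullback $P\coloneq\Set\times_{\Set}\Fam(\Set)$ then yields a unique comparison functor $\Phi\colon\MSgn\to P$, and the whole content is to check that $\Phi$ is invertible.

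Next I would unfold $P$ explicitly. Its objects are pairs $\bigl(C,(I,\Sigma)\bigr)$ with $C\in\Ob(\Set)$ and $(I,\Sigma)\in\Ob(\Fam(\Set))$ subject to $T(C)=\pi(I,\Sigma)$, i.e.\ $I=C^*\times C^*$; such an object is therefore exactly the datum of a set $C$ together with a function $\Sigma\colon C^*\times C^*\to\Set$ — precisely a monoidal signature. A morphism from (the object corresponding to) $(C,\Sigma)$ to $(D,\Gamma)$ in $P$ consists of a colour map $f\colon C\to D$ in $\Set$ together with a $\Fam(\Set)$-morphism $(v,\{v_i\})\colon(C^*\times C^*,\Sigma)\to(D^*\times D^*,\Gamma)$ satisfying $f^*\times f^*=v$. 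The compatibility condition pins down $v=f^*\times f^*$, so the only remaining freedom is the family $\{v_{(a,b)}\colon\Sigma(a,b)\to\Gamma(f^*a,f^*b)\}$, which is exactly the collection $\{f_{a,b}\}$ constituting a morphism of monoidal signatures. Under these identifications $\Phi$ is merely the repackaging $(C,\Sigma)\mapsto\bigl(C,(C^*\times C^*,\Sigma)\bigr)$ and $(f,f_{a,b})\mapsto\bigl(f,(f^*\times f^*,\{f_{a,b}\})\bigr)$.

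Finally I would read off that $\Phi$ is a bijection on objects and on morphisms, and that it preserves identities and composition: identities match because the $\Fam(\Set)$-identity on $(C^*\times C^*,\Sigma)$ carries the identity family, and composition matches because the $\Fam(\Set)$ composite $(v;v',\,v_i;v'_{vi})$ restricts, along the forced base maps $v=f^*\times f^*$, to the componentwise composition of signature morphisms. Hence $\Phi$ is an isomorphism of categories onto $P$, which is the strongest form of the claim that the square is a (strict) pullback. There is no genuinely hard step; the only point requiring care is the bookkeeping that identifies the ``forgotten (co)arities'' — the freedom in the base component $v$ of a $\Fam(\Set)$-morphism — with the (co)arity-preservation built into morphisms of $\MSgn$, which is exactly the constraint $v=f^*\times f^*$ that makes the correspondence exact.
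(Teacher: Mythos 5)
Your proof is correct and follows essentially the same route as the paper, whose own argument is just the one-sentence observation that a functor into $\Fam(\Set)$ whose indexing data has the form $C^*\times C^*$ and $f^*\times f^*$ is the same thing as a functor into $\MSgn$. Your write-up merely spells out in full the comparison-functor bookkeeping that the paper compresses, including the key point that the base component of a $\Fam(\Set)$-morphism is forced to be $f^*\times f^*$, leaving exactly the family $\{f_{a,b}\}$ as free data.
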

\begin{proof}
A functor into the category of families of sets such that the indexing sets and functions are of the form $C^*\times C^*$ and $f^*\times f^*$ is the same as a functor into the category of monoidal signatures.
\end{proof}
\begin{corollary}\label{cor:forgetful-msgn-fibration}
The forgetful functor $\MSgn\rightarrow\Set$ is a fibration.
\end{corollary}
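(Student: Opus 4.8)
The plan is to read the corollary straight off the pullback square of Proposition~\ref{prop:msgn-pullback}, combined with the stability of fibrations under pullback. First I would note that the only genuinely nontrivial input is already available: the right-hand vertical map in that square is the family fibration $\Fam(\Set)\rightarrow\Set$, which is a fibration by Example~\ref{ex:family-fibration} (taking the target category to be $\Set$). Its cartesian liftings and their universal property are exactly what we will transport.

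Next I would invoke Proposition~\ref{prop:msgn-pullback} to identify the forgetful functor $\MSgn\rightarrow\Set$ as a \emph{change of base}. Concretely, that square has the forgetful functor as its left-hand vertical leg, the functor $C\mapsto C^*\times C^*$ as its bottom edge, the assignment $(C,\Sigma)\mapsto(C^*\times C^*,\Sigma)$ as its top edge, and the family fibration as its right-hand vertical leg; and the square is a pullback. Hence the forgetful functor is obtained from the family fibration by pulling back along $(-)^*\times(-)^*$.

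Finally I would apply Proposition~\ref{prop:opfibrations-pullbacks}: since the family fibration is a fibration and the square is a pullback, the induced leg $\MSgn\rightarrow\Set$ is a fibration as well. This closes the argument, and no further computation is needed.

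The only point requiring care is the orientation of the square. Proposition~\ref{prop:opfibrations-pullbacks} takes the \emph{given} (op)fibration to be the map being pulled back and concludes that the \emph{induced} leg is an (op)fibration; so one must confirm that in Proposition~\ref{prop:msgn-pullback} the family fibration is the map being pulled back (the right vertical) and the forgetful functor is the induced leg (the left vertical), which is indeed the stated configuration. There is essentially no computational obstacle here, all of it having been discharged upstream. If one preferred an explicit route, one could instead build the cartesian liftings by hand: given a signature $(D,\Gamma)$ and a function $u:C\rightarrow D$, set $\Sigma(a,b)\coloneq\Gamma(u^*a,u^*b)$ for all $a,b\in C^*$ and take the lifting to be the identity on each such fibre set, with its cartesian universal property reducing to that of the family fibration; but this is precisely the data packaged by the pullback, so the pullback route is the more economical one.
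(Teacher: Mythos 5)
Your proof is correct and follows exactly the paper's own route: the paper likewise cites Example~\ref{ex:family-fibration} for the family fibration and then applies stability under pullback (Proposition~\ref{prop:opfibrations-pullbacks}) via the pullback square of Proposition~\ref{prop:msgn-pullback}. Your additional remarks on the orientation of the square and the explicit liftings are sound but not needed beyond what the paper records.
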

\begin{proof}
We have seen in Example~\ref{ex:family-fibration} that the projection $\Fam(\Set)\rightarrow\Set$ is a fibration. The claim then follows by stability of fibrations under pullback (Proposition~\ref{prop:opfibrations-pullbacks}).
\end{proof}
We note that, due to uniqueness of pullback, we could take Proposition~\ref{prop:msgn-pullback} as the definition of the category of monoidal signatures.

\section{Terms and theories}\label{sec:monoidal-theories}

In order to equip monoidal signatures with non-trivial equations, we need a way to build more complicated expressions from the generators. The following defines the {\em terms} of a monoidal signature, which are used to define monoidal theories as well as free models. Upon quotienting by the structural identities (Definition~\ref{def:str-id}), the terms become what is known as {\em string diagrams}.

\begin{definition}[Terms of a monoidal signature]
Given a monoidal signature $(C,\Sigma)$, the set of {\em sorts} is given by $C^*\times C^*$. The {\em terms} are generated by the recursive sorting procedure below:
\begin{center}
  \centering\small\noindent
  \begin{prooftree}
    \AxiomC{$\sigma\in\Sigma(a,b)$}
    \RightLabel{\;\;}
    \UnaryInfC{$\scalebox{.9}{\tikzfig{sigmadiag}} : (a,b)$}
    \DisplayProof
    \AxiomC{$a\in C$}
    \RightLabel{\;\;}
    \UnaryInfC{$\scalebox{1}{\tikzfig{iddiag-dashed}} : (a,a)$}
    \DisplayProof
    \AxiomC{}
    \UnaryInfC{$\scalebox{.6}{\tikzfig{emptydiag}} : (\varepsilon,\varepsilon)$}
  \end{prooftree}
  \begin{prooftree}
    \AxiomC{$\scalebox{.9}{\tikzfig{t-box}} : (a,b)$}
    \AxiomC{$\scalebox{.9}{\tikzfig{s-box}} : (b,c)$}
    \RightLabel{\;\;}
    \BinaryInfC{$\scalebox{.9}{\tikzfig{t-s-compose}} : (a,c)$}
    \DisplayProof
    \AxiomC{$\scalebox{.9}{\tikzfig{t-box}} : (a,b)$}
    \AxiomC{$\scalebox{.9}{\tikzfig{s-box}} : (c,d)$}
    \RightLabel{.}
    \BinaryInfC{$\scalebox{.9}{\tikzfig{t-s-tensor}} : (ac,bd)$}
  \end{prooftree}
  \end{center}
When using the linear notation, we denote the terms generated by the above rules by $\sigma$, $\id_a$, $\id_{\varepsilon}$, $(t;s)$ and $(t\otimes s)$, respectively. We denote the set of terms by $\Term_{C,\Sigma}$, and by $S:\Term_{C,\Sigma}\rightarrow C^*\times C^*$ the ``underlying sort function" mapping $t:(a,b)\mapsto (a,b)$.
\end{definition}

A pair of terms $(t,s)\in\Term_{C,\Sigma}\times\Term_{C,\Sigma}$ is {\em parallel} if $S(t)=S(s)$. Let us denote by $P_{C,\Sigma}$ the set of parallel terms for the monoidal signature $(C,\Sigma)$.

Given a morphism $f:(C,\Sigma)\rightarrow (D,\Gamma)$ of monoidal signatures, it immediately extends to a function on terms, also denoted by $f:\Term_{C,\Sigma}\rightarrow\Term_{D,\Gamma}$, by recursively defining:\label{p:extend-morphism-signatures-terms}
\begin{align*}
\sigma : (a,b) &\mapsto f_{a,b}(\sigma) : (f^*a,f^*b), \\
\id_a : (a,a) &\mapsto \id_{fa} : (fa,fa), \\
\id_{\varepsilon} : (\varepsilon,\varepsilon) &\mapsto \id_{\varepsilon} : (\varepsilon,\varepsilon), \\
(t;s) : (a,c) &\mapsto \left(f(t);f(s)\right) : (f^*a,f^*c), \\
(t\otimes s) : (ac,bd) &\mapsto \left(f(t)\otimes f(s)\right) : (f^*(ac),f^*(bd)).
\end{align*}
Observe that $f:\Term_{C,\Sigma}\rightarrow\Term_{D,\Gamma}$ preserves parallel terms: if $(t,s)\in P_{C,\Sigma}$, then $(ft,fs)\in P_{D,\Gamma}$.

\begin{definition}[Monoidal theory]\label{def:monoidal-theory}
A {\em monoidal theory} $\mathcal T$ is a triple $(C,\Sigma,E)$, where $(C,\Sigma)$ is a monoidal signature, and $E\sse P_{C,\Sigma}$ is a set of parallel terms. We refer to $E$ as the {\em equations} of $\mathcal T$.
\end{definition}

A morphism of monoidal theories
$$f:(C,\Sigma,E)\rightarrow (D,\Gamma,F)$$
is given by a morphism of monoidal signatures $f:(C,\Sigma)\rightarrow (D,\Gamma)$ such that for all $(s,t)\in E$, we have $(fs,ft)\in F$. We denote the resulting category of monoidal theories by $\MTh$.

In the next section, we will define {\em free models} by quotienting the set of terms by the following equations, which one may think of as the ``minimal'' monoidal theory.
\begin{definition}[Structural identities]\label{def:str-id}
Given a monoidal signature $(C,\Sigma)$, the {\em structural identities} are the following equations, where $s$, $s_i$ and $t_i$ range over the terms of the appropriate type:
\begin{center}
\scalebox{1}{\tikzfig{structural-identities}}.
\end{center}
We denote the structural identities by $S$.
\end{definition}

\begin{definition}[Term congruence]
Given a monoidal theory $(C,\Sigma,E)$, the {\em term congruence} $\Eeq$ is the smallest equivalence relation on $\Term_{C,\Sigma}$ generated by $E\cup S$, which is a congruence with respect to $;$ and $\otimes$, i.e.~if $t_1\Eeq t_2$ and $s_1\Eeq s_2$, then $t_1;s_1\Eeq t_2;s_2$
and $t_1\otimes s_1\Eeq t_2\otimes s_2$, whenever the composition is defined.
\end{definition}

Note that the structural equations justify dropping all the dashed boxes when drawing the terms up to a term congruence: each diagram uniquely determines an equivalence class of terms. Such equivalence classes of terms are known as {\em string diagrams}. We utilise this in the following convention.
\begin{remark}\label{rem:id-terms}
If $w\in C^*$ with $w=a_1\cdots a_n$, we often abbreviate the identity term on $w$ to
\begin{center}
\tikzfig{const-id-terms}.
\end{center}
The structural identities guarantee that the above expression unambiguously identifies an equivalence class of terms, no matter how the diagram is parsed.
\end{remark}

\section{Models of monoidal theories}\label{sec:models-monoidal-theories}

A model interprets a monoidal signature (or theory) in a strict monoidal category.

\begin{definition}\label{def:model-monoidal-signature}
A {\em model} of a monoidal signature $(C,\Sigma,E)$ is a strict monoidal category $(\cat C,\otimes,I)$ together with a function $i:C\rightarrow\Ob(\cat C)$, and for each $a,b\in C^*$, a function $i_{a,b}:\Sigma(a,b)\rightarrow\cat C(i^*a,i^*b)$.
\end{definition}
We often denote a model of a monoidal signature $(C,\Sigma)$ simply by $(\cat C,i)$, where $\cat C$ is a strict monoidal category, and $i$ a family of functions as in the definition -- referred to as the {\em interpretation functions}.

\begin{definition}
The category of {\em monoidal models} $\MMod$ has as objects quadruples $(C,\Sigma,\cat C,i)$, where $(C,\Sigma)$ is a monoidal signature and $(\cat C,i)$ its model. A morphism
$$(f,F):(C,\Sigma,\cat C,i)\rightarrow (D,\Gamma,\cat D,j)$$
is given by a morphism of monoidal signatures $f:(C,\Sigma)\rightarrow (D,\Gamma)$ and a strict monoidal functor $F:\cat C\rightarrow\cat D$ such that the diagram on the left commutes (where $F_0$ is the action of $F$ on objects), and the diagram on the right commutes for all $a,b\in C^*$,
\ctikzfig{mor-mon-models}
where we denote by $F$ the appropriate restriction of the functor to the hom-set. Note that the restriction is indeed well-defined, as commutativity of the left diagram together with the fact that $F$ is strict monoidal imply that $i^* ; F_0 = f^* ; j^*$.
\end{definition}

Given a model $(\cat C,i)$ of a monoidal signature $(C,\Sigma)$, it extends to a function from terms to the morphisms of $\cat C$, also denoted by $i:\Term_{C,\Sigma}\rightarrow\Mor(\cat C)$, as follows:\label{p:extend-model-terms}
\begin{align*}
\sigma : (a,b) &\mapsto i_{a,b}(\sigma) : (i^*a,i^*b), \\
\id_a : (a,a) &\mapsto \id_{ia} : (ia,ia), \\
\id_{\varepsilon} : (\varepsilon,\varepsilon) &\mapsto \id_I : (I,I), \\
(t;s) : (a,c) &\mapsto i(t);i(s) : (i^*a,i^*c), \\
(t\otimes s) : (ac,bd) &\mapsto i(t)\otimes i(s) : (i^*(ac),i^*(bd)),
\end{align*}
where $I$ is the unit of $\cat C$, while $;$ and $\otimes$ on the right-hand side are the composition and the monoidal product of $\cat C$.

\begin{definition}\label{def:model-monoidal-theory}
A {\em model} of a monoidal theory $(C,\Sigma,E)$ is a model $(\cat C,i)$ of the monoidal signature $(C,\Sigma)$ such that for all $(s,t)\in E$, we have $i(s)=i(t)$.
\end{definition}

We denote by $\MThMod$ the category of {\em models of monoidal theories}, whose objects are pairs of a monoidal theory and its model, and whose morphisms are pairs $(f,F)$ such that $f$ is a morphism of monoidal theories (hence, in particular, a morphism of monoidal signatures) and $(f,F)$ is a morphism in $\MMod$.

We note that every model of a monoidal signature $(C,\Sigma)$ can be viewed as a model of the monoidal theory $(C,\Sigma,\eset)$, so that $\MMod$ can be identified with the full subcategory of $\MThMod$ of theories with no equations.

We summarise the relationship between monoidal signatures, theories and their models in the following proposition.
\begin{proposition}\label{prop:monoidal-signatures-theories-models}
The vertical forgetful functors in the diagram below are fibrations, while the horizontal forgetful functors form a morphism of fibrations:
\begin{center}
\tikzfig{mthmod-pullback}.
\end{center}
Moreover, the objects in the fibre $\MMod(C,\Sigma)$ are precisely the models of the monoidal signature $(C,\Sigma)$, and the objects in the fibre $\MThMod(\mathcal T)$ are precisely the models of the monoidal theory $\mathcal T$.
\end{proposition}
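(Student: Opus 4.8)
The plan is to treat the two vertical functors uniformly, exploiting that the signature-level statement is a special case of the theory-level one: a signature is a theory with $E=\emptyset$, and $\MMod$ is the full subcategory of $\MThMod$ of theories without equations. I would first establish that $q\colon\MMod\to\MSgn$ is a fibration, then transport the argument to $p\colon\MThMod\to\MTh$, and finally assemble the square.

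For $q$ the key construction is \emph{reindexing} of a model along a signature morphism. Given a model $(\cat D,j)$ of $(D,\Gamma)$ and a morphism $f\colon(C,\Sigma)\to(D,\Gamma)$ in $\MSgn$, I define a model of $(C,\Sigma)$ on the \emph{same} strict monoidal category $\cat D$ by precomposition: $i\coloneq f;j$ on colours and $i_{a,b}\coloneq f_{a,b};j_{f^*a,f^*b}$ on generators (the types match since $i^*=f^*;j^*$). The claim is that $(f,\id_{\cat D})\colon(C,\Sigma,\cat D,f;j)\to(D,\Gamma,\cat D,j)$ is a cartesian lifting. That it is a morphism in $\MMod$ is immediate from the defining compatibility diagrams. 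Cartesianness follows because any competitor $(g,G)\colon(E,\Theta,\cat E,k)\to(D,\Gamma,\cat D,j)$ sitting over $g=h;f$ must, upon composing with $(f,\id_{\cat D})$, keep the functor-component $G$, so the unique factor is forced to be $(h,G)$; its morphism conditions then drop out of those for $(g,G)$ together with $g=h;f$. As a byproduct I would record the canonical factorization $(f,F)=(\id,F);(f,\id_{\cat D})$ through the chosen lift, which shows that $(f,F)$ is $q$-cartesian exactly when $F$ is an isomorphism, and makes the cleavage manifestly split.

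For $p\colon\MThMod\to\MTh$ the same reindexing works, but I must additionally check that $(\cat D,f;j)$ satisfies the equations $E$ of $(C,\Sigma,E)$ whenever $f$ is a morphism of theories. This is where the one load-bearing lemma enters, and I expect it to be the main obstacle: the \emph{naturality} of the extension of a model to terms, namely $i(t)=j(f(t))$ for every $t\in\Term_{C,\Sigma}$, where on the left $i=f;j$ is the reindexed model extended to terms as on page~\pageref{p:extend-model-terms}, and on the right $f$ is the induced map on terms from page~\pageref{p:extend-morphism-signatures-terms}. This is a routine structural induction, but it has to line up three distinct recursive extensions (of $f$, of $j$, and of the composite model), and the generator and sequential/parallel cases must be checked against the matching clauses in both recursions. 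Granting it, for $(s,t)\in E$ one has $(fs,ft)\in F$ since $f$ is a theory morphism, hence $i(s)=j(fs)=j(ft)=i(t)$ because $(\cat D,j)$ models the theory; so the reindexed model indeed lies in $\MThMod$, and the cartesianness argument transfers verbatim.

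It remains to assemble the square. The horizontal functors forget the equations, $H\colon\MThMod\to\MMod$ and $K\colon\MTh\to\MSgn$, acting as the identity on the underlying $(f,F)$ data, so commutativity $H;q=p;K$ is immediate on objects and morphisms. To see that $(H,K)$ is a morphism of fibrations I invoke the characterization above: $H$ preserves the functor-component $F$, and $p$-cartesian (resp.\ $q$-cartesian) maps are precisely those whose $F$ is an isomorphism, so $H$ sends cartesian maps to cartesian maps. Finally, the fibre descriptions are definitional: an object of $\MMod(C,\Sigma)$ is a quadruple whose signature is fixed to $(C,\Sigma)$, i.e.\ exactly a model $(\cat C,i)$ of $(C,\Sigma)$, and likewise an object of $\MThMod(\mathcal T)$ is exactly a model of $\mathcal T$.
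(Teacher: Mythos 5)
Your proposal is correct and takes essentially the same route as the paper's proof: cartesian liftings are constructed by reindexing via precomposition, with chosen lift $(f,\id_{\cat D})$ whose domain model is $f;j$ on colours and $f_{a,b};j_{f^*a,f^*b}$ on generators, cartesian maps are characterized by the functor component being an isomorphism, and the fibre descriptions are read off definitionally. The only difference is presentational — you make explicit the term-naturality lemma $i(t)=j(f(t))$ and the verification that reindexed models satisfy the theory's equations, both of which the paper's terser proof leaves implicit.
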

\begin{proof}
The cartesian maps are those pairs whose monoidal functor part is (naturally isomorphic to) the identity functor.

Given an object $(C,\Sigma,E,\cat C,i)\in\MThMod_0$ and a morphism of monoidal theories $f:(D,\Gamma,F)\rightarrow (C,\Sigma,E)$, the cartesian lifting is given by
$$(f,\id_{\cat C}) : (D,\Gamma,F,\cat C,f;i)\rightarrow (C,\Sigma,E,\cat C,i),$$
where the interpretation functions of the domain model are given by $f;i : D\rightarrow\cat C$ and
$$f_{a,b};i_{f^*a,f^*b} : \Gamma(a,b)\rightarrow\cat C(i^*f^*a,i^*f^*b).$$

For $\MMod\rightarrow\MSgn$, the situation is the same, except that all the theories are empty. It is then clear that the horizontal forgetful functors make the diagram commute, and that the top functor preserves cartesian liftings.
\end{proof}
Note that, while $\MTh\rightarrow\MSgn$ is also a fibration, the functor $\MThMod\rightarrow\MMod$ is {\em not} itself a fibration.

\subsection{Free models}

The usefulness of string diagrams lies in the fact that they are the {\em free models} of monoidal theories: an equation is derivable in the string diagrams of a monoidal theory if and only if it holds in all models of that theory.

\begin{definition}[Term model]\label{def:term-model}
Given a monoidal theory $\mathcal T = (C,\Sigma,E)$, the {\em term model} (or the {\em free model}) $F(\mathcal T)$ is the strict monoidal category defined by the following:
\begin{itemize}
\item the objects are $C^*$,
\item the morphisms $a\rightarrow b$ are the equivalence classes of terms with sort $(a,b)$ under the term congruence $\Eeq$,
\item the composition of $[s]$ and $[t]$ is given by $[s;t]$,
\item the monoidal unit is the empty word $\varepsilon\in C^*$,
\item the monoidal product is given by concatenation on objects and by $\otimes$ on morphisms.
\end{itemize}
We note that the operations $;$ and $\otimes$ above are well-defined since $\Eeq$ is a congruence. The interpretation function $i:C\rightarrow C^*$ sends the colour $a$ to the one-element word $a$, and
$$i_{a,b}:\Sigma(a,b)\rightarrow F(\mathcal T)(a,b)$$
sends each generator to its equivalence class.
\end{definition}
For a monoidal signature $(C,\Sigma)$, we denote by $F(C,\Sigma)$ the free model on the theory $(C,\Sigma,\eset)$.

\begin{proposition}
The term model construction extends to functors $F:\MTh\rightarrow\MThMod$ and $F:\MSgn\rightarrow\MMod$, which are moreover the left adjoints to the respective forgetful functors.
\end{proposition}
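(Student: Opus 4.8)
The statement asserts that the term-model construction is functorial and left adjoint to the forgetful functors $\MThMod\to\MTh$ and $\MMod\to\MSgn$. I would prove the two cases uniformly, treating the signature case as the special instance where all equation sets are empty. The plan is to (1) verify functoriality of $F$, (2) exhibit the unit of the adjunction, and (3) establish the universal property by showing every morphism from a free model factors uniquely through the unit. I expect the bulk of the work to be bookkeeping that mirrors the recursive definitions already set up in the excerpt, with the one genuinely delicate point being well-definedness on congruence classes.

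\medskip
\noindent\textbf{Functoriality.}
First I would define $F$ on morphisms. Given a morphism of monoidal theories $f:(C,\Sigma,E)\to(D,\Gamma,E')$, the induced map on terms $f:\Term_{C,\Sigma}\to\Term_{D,\Gamma}$ (defined on page~\pageref{p:extend-morphism-signatures-terms}) is sort-preserving and preserves both $;$ and $\otimes$ by its recursive definition. I would set $F(f)$ to be the strict monoidal functor sending the object $a\in C^*$ to $f^*a$ and the morphism $[t]$ to $[f(t)]$. The key verification here is that this descends to congruence classes: since $f$ sends the generating equations $E$ into $E'$ and visibly preserves the structural identities $S$, it maps the generators $E\cup S$ of $\Eeq[E]$ into the generators of $\Eeq[E']$, and since $f$ is a congruence-compatible map (preserving $;$ and $\otimes$), it sends the whole generated congruence into the target congruence; hence $t_1\Eeq t_2$ implies $f(t_1)\Eeq f(t_2)$. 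Strict monoidality of $F(f)$ is immediate from $f^*$ being a monoid homomorphism and $f$ preserving $\otimes$. That $F$ preserves identities and composition follows because the term-level extension $(-)$ does so, which is a routine induction on the term-formation rules.

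\medskip
\noindent\textbf{The adjunction.}
I would exhibit $F\dashv U$ by producing a natural unit and verifying the universal property, rather than checking the triangle identities directly. Given a theory $\mathcal T=(C,\Sigma,E)$, the unit $\eta_{\mathcal T}:\mathcal T\to U F(\mathcal T)$ is the morphism of theories whose object part is the interpretation function $i:C\to C^*$ sending a colour to its one-letter word, and whose generator part sends $\sigma\in\Sigma(a,b)$ to its class $[\sigma]$; this is a morphism of theories because every equation in $E$ becomes an equality of classes in $F(\mathcal T)$ by construction. For the universal property, let $(\cat C,j)$ be any model of a theory $\mathcal U=(D,\Gamma,E')$ and let $g:\mathcal T\to U(\cat C,j)=(D,\Gamma,E')$ be a morphism of theories. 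I would show there is a unique morphism of models $(g,G):F(\mathcal T)\to(\cat C,j)$ with $\eta_{\mathcal T};U(g,G)=g$. Existence: the composite interpretation $g;j:\Term_{C,\Sigma}\to\Mor(\cat C)$ (using the extension of a model to terms on page~\pageref{p:extend-model-terms}) respects the congruence $\Eeq$—it sends structural identities to genuine equalities in the strict monoidal category $\cat C$, and it sends each equation in $E$ to an equality because $g$ maps $E$ into $E'$ and $(\cat C,j)$ satisfies $E'$—so it descends to a strict monoidal functor $G:F(\mathcal T)\to\cat C$ on classes. Uniqueness: any model morphism extending $g$ must act on objects as $g^*$ and must agree with $g;j$ on generators, hence on all terms by the recursive clauses, hence on all classes; so $G$ is forced. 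Naturality of $\eta$ in $\mathcal T$ is then a direct check on generators.

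\medskip
\noindent\textbf{The main obstacle.}
The only step requiring genuine care is the well-definedness of $G$ on congruence classes, i.e.\ that $s\Eeq t$ implies $i(s)=i(t)$ under the extended interpretation. This is where soundness of the model enters: one must confirm that the strict monoidal structure of $\cat C$ validates exactly the structural identities of Definition~\ref{def:str-id}, and that the model's defining condition (Definition~\ref{def:model-monoidal-theory}) validates $E$. Since $\Eeq$ is the \emph{smallest} congruence containing $E\cup S$, it suffices to check the interpretation sends each generating pair to an equality and that the validated relation is closed under $;$ and $\otimes$—the latter being immediate because interpretation is defined by those very operations. Everything else is routine induction over the term-formation rules, and the signature case follows by specialising to $E=\eset$.
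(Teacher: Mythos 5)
Your overall strategy---functoriality of $F$ plus the universal property at the unit---establishes the same adjunction as the paper, which instead exhibits the unit and counit directly: since $U:\MThMod\rightarrow\MTh$ is the projection $(C,\Sigma,E,\cat C,i)\mapsto (C,\Sigma,E)$, the paper observes that $UF(\mathcal T)=\mathcal T$ on the nose, takes $\eta_{\mathcal T}=\id_{\mathcal T}$, and takes the counit on $(\mathcal T,\cat C,i)$ to be $(\id_{\mathcal T},i)$ with $i$ extended recursively to terms. Your existence-and-uniqueness argument for $G$ is exactly the content of that counit being well-defined together with the resulting hom-set bijection, and your care over well-definedness on congruence classes (structural identities hold in any strict monoidal category; equations in $E$ are validated because $g$ maps $E$ into $E'$ and $(\cat C,j)$ satisfies $E'$; the validated relation is closed under $;$ and $\otimes$ since interpretation is defined by those operations) is precisely the point the paper leaves implicit. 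Your check that $F(f)$ descends to congruence classes is likewise correct and more explicit than the paper's one-line definition.

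There is, however, one concrete misstep: your description of the unit is wrong for the paper's forgetful functor, and it renders your factorization equation ill-typed. Later in the same paragraph you correctly compute $U(\cat C,j)=(D,\Gamma,E')$, so $U$ is the projection; but then $UF(\mathcal T)$ is the theory $\mathcal T$ itself---\emph{not} a theory with colour set $C^*$---and the unit must be $\id_{\mathcal T}$, not the map sending a colour to its one-letter word and a generator $\sigma$ to $[\sigma]$. With your $\eta_{\mathcal T}$, whose codomain has colours $C^*$, the condition $\eta_{\mathcal T};U(g,G)=g$ does not typecheck, since $U(g,G)=g$ has domain with colour set $C$. What you wrote down is the unit of the neighbouring adjunction between theories and strict monoidal categories (the coslice formulation of Corollary~\ref{cor:monoidal-free-forgetful}), or equivalently the interpretation data $i$ of the term model, which in the paper's presentation lives in the \emph{counit}. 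The repair is immediate: set $\eta_{\mathcal T}=\id_{\mathcal T}$, after which the factorization condition reduces to $U(g,G)=g$ and your existence/uniqueness argument for $G$ carries the entire proof unchanged, including the signature case by specialising to $E=\eset$.
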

\begin{proof}
The action of the functor on morphisms is given by the recursive extension of the morphism of monoidal signatures to terms, defined on page~\pageref{p:extend-morphism-signatures-terms}.

The unit $\eta_{\mathcal T}:\mathcal T\rightarrow UF(\mathcal T) =\mathcal T$ is given by the identity $\id_{\mathcal T}$. The counit $\varepsilon_{\mathcal T,\mathcal M} : (\mathcal T, F(\mathcal T)) = FU(\mathcal T,\mathcal M)\rightarrow (\mathcal T,\mathcal M)$, where we denote $\mathcal M\coloneq (\mathcal C,i)$, is given by $(\id_{\mathcal T},i)$, where $i:F(\mathcal T)\rightarrow\mathcal C$ is given by $i^*$ on objects, and on morphisms by the recursive extension of $i$ to terms defined on page~\pageref{p:extend-model-terms}.
\end{proof}
\begin{corollary}\label{cor:monoidal-free-forgetful}
For every monoidal theory $\mathcal T$ and monoidal signature $(C,\Sigma)$, there are the following equivalences with coslice categories on the right:
\begin{align*}
\MThMod(\mathcal T) &\simeq \quot{F(\mathcal T)}{\MonCat_{\mathsf{st}}}, \\
\MMod(C,\Sigma) &\simeq \quot{F(C,\Sigma)}{\MonCat_{\mathsf{st}}}.
\end{align*}
\end{corollary}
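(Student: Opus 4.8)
The plan is to prove the first equivalence directly, as an isomorphism of categories (which in particular gives the claimed equivalence), and to obtain the second as the special case $E=\eset$, since $\MMod(C,\Sigma)=\MThMod(C,\Sigma,\eset)$ and $F(C,\Sigma)=F(C,\Sigma,\eset)$. By Proposition~\ref{prop:monoidal-signatures-theories-models} the objects of the fibre $\MThMod(\mathcal T)$ are precisely the models $(\cat C,i)$ of $\mathcal T$, and a morphism in this fibre is a pair $(\id_{\mathcal T},H)$ with $H:\cat C\rightarrow\cat D$ a strict monoidal functor whose interpretation-compatibility squares (with $f=\id_{\mathcal T}$) say exactly that $H$ preserves the interpretations, i.e.~$H_0\circ i=j$ on colours and $i_{a,b};H=j_{a,b}$ on generators. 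On the other side, an object of $\quot{F(\mathcal T)}{\MonCat_{\mathsf{st}}}$ is a strict monoidal functor $G:F(\mathcal T)\rightarrow\cat D$ and a morphism is a commuting triangle.

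First I would define $\Phi:\MThMod(\mathcal T)\rightarrow\quot{F(\mathcal T)}{\MonCat_{\mathsf{st}}}$ by sending a model $(\cat C,i)$ to the strict monoidal functor $\hat\imath:F(\mathcal T)\rightarrow\cat C$ extending $i$, given by $i^*$ on objects and by the recursive extension of $i$ to terms on morphisms. This $\hat\imath$ is exactly the counit component $\varepsilon_{\mathcal T,(\cat C,i)}$ built in the proof of the preceding proposition, so its well-definedness on $\Eeq$-classes --- which encodes soundness of $\cat C$ with respect to both the structural identities and the equations $E$ --- is already established there; on a fibre morphism $(\id_{\mathcal T},H)$ I would set $\Phi(H)=H$. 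Conversely I would define $\Psi$ by precomposition: a strict monoidal functor $G:F(\mathcal T)\rightarrow\cat D$ is sent to the model $(\cat D,i_G)$ whose interpretation $i_G$ is the canonical inclusion $\iota$ of colours and generators into $F(\mathcal T)$ (a colour $a$ to the one-letter word $a$, a generator $\sigma$ to its class $[\sigma]$) followed by $G$. One checks that $(\cat D,i_G)$ satisfies $\mathcal T$: for $(s,t)\in E$ we have $[s]=[t]$ in $F(\mathcal T)$ since $E$ generates $\Eeq$, whence $i_G(s)=G[s]=G[t]=i_G(t)$; and a coslice morphism $H$ with $G;H=G'$ restricts to interpretation-preservation, so $\Psi(H)=(\id_{\mathcal T},H)$ is a fibre morphism.

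It then remains to verify that $\Phi$ and $\Psi$ are mutually inverse. On objects, $\Psi\Phi(\cat C,i)=(\cat C,\iota;\hat\imath)=(\cat C,i)$ because $\hat\imath$ restricts to $i$ on colours and generators, while $\Phi\Psi(G)=\widehat{i_G}=G$ because both are strict monoidal functors out of $F(\mathcal T)$ agreeing on all generating data; the same two observations dispatch the action on morphisms (in both categories a morphism is literally an interpretation/triangle-compatible $H$, and the compatibility conditions correspond under $\Phi$ and $\Psi$). Hence $\Phi$ is an isomorphism of categories, a fortiori an equivalence.

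The only genuine content --- and the step I expect to be the main obstacle --- is the pair of round-trip identities $\iota;\hat\imath=i$ and $\widehat{i_G}=G$, together with the claim that a fibre morphism and the corresponding coslice triangle encode the same data. All three rest on the universal property of the term model: a strict monoidal functor out of $F(\mathcal T)$ is completely determined by its values on the one-letter words and on the $\Eeq$-classes of single generators. This is immediate from the recursive definition of the extension of an interpretation to terms (page~\pageref{p:extend-model-terms}) together with strict monoidality, which guarantees that agreement on generators propagates along $;$ and $\otimes$ to all morphisms of $F(\mathcal T)$, so that matching on generators forces equality of the two strict monoidal functors throughout.
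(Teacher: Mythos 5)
Your proposal is correct and is essentially the paper's argument made explicit: the paper states this corollary without separate proof as an immediate consequence of the preceding free--forgetful adjunction, and your functors $\Phi$ and $\Psi$ are precisely the counit-extension $i\mapsto\hat\imath$ (the map defined on page~\pageref{p:extend-model-terms}) and the restriction-along-$\iota$ that witness that adjunction, restricted to the fibre over $\mathcal T$. Your observation that the round trips hold on the nose, so that one in fact gets an isomorphism of categories rather than a mere equivalence, is a correct (slightly sharper) refinement of the stated claim.
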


We conclude our discussion of models by pointing out that the notion of a model of a monoidal theory can be extended to non-strict monoidal categories upon interpreting the term congruence as isomorphism rather than equality. However, obtaining a free model via string diagrams is more involved in this case, as one needs to explicitly keep track of the isomorphisms capturing non-strict associativity and unitality. We refer the reader to Wilson, Ghica and Zanasi~\cite{non-strict-monoidal,non-strict-monoidal-journal} for the details.

\section{Examples of monoidal theories}\label{sec:examples-monoidal-theories}

We give examples of monoidal theories, with not much comment. Most of these are very well-known and commonly used. The exceptions to this are the theories in Definitions~\ref{def:thy-11-nat-monoids} and~\ref{def:thy-indexed-monoids}, which are included as they will play a prominent role in the theory of layered monoidal theories.

Monoidal theories with more involved generators and equations appear in Chapter~\ref{ch:layered-examples}, where we discuss several domain specific theories. An example of obtaining a monoidal theory from an existing one appears in Appandix~\ref{ch:para-copara}, where we present a construction that allows morphisms to be {\em parameterised}.

\begin{definition}[Symmetric monoidal theory]\label{def:symm-mon-thy}
A {\em symmetric monoidal theory} is a monoidal theory $(C,\Sigma,\mathcal S)$ such that
\begin{itemize}
\item for all $a,b\in C$, the set $\Sigma(ab,ba)$ contains the special generator, denoted by $\scalebox{.4}{\tikzfig{symmetry}}$, called the {\em symmetry},
\item we extend the symmetry by constructing for all $v,w\in C^*$ the term $\sw^v_w:(vw,wv)$ by the following recursion, where $a,b\in C$ and $v,w\in C^*$:
\begin{center}
\scalebox{1}{\tikzfig{swap}},
\end{center}
\item for $v,w\in C^*$, we denote $\sw^v_w$ by the same symbol as the symmetry, making the above recursive definition more intuitive:
\begin{center}
\scalebox{1}{\tikzfig{swap-redefine}},
\end{center}
\item the set $\mathcal S$ contains the following equations, where $s$ ranges over all terms of the appropriate type:
\begin{center}
\scalebox{1}{\tikzfig{symmetry-eqns}}.
\end{center}
\end{itemize}
\end{definition}
We remark that most of the concrete examples of monoidal theories we will see in this thesis will be symmetric.

\begin{definition}[Theory of monoids]\label{def:thy-mon}
The {\em theory of monoids} is the monoidal theory $(\{\bullet\},\Sigma,\mathcal M)$ with the following generators and equations:
\begin{center}
\scalebox{.7}{\tikzfig{monoid}}.
\end{center}
\end{definition}

\begin{definition}[Theory of comonoids]\label{def:thy-comon}
The {\em theory of comonoids} is the monoidal theory $(\{\bullet\},\Sigma,\mathcal C)$ with the following generators and equations:
\begin{center}
\scalebox{.7}{\tikzfig{comonoid}}.
\end{center}
\end{definition}

\begin{definition}[Theory with uniform comonoids]\label{def:thy-univ-comonoids}
We say that a symmetric monoidal theory $(C,\Sigma,\mathcal U)$ has {\em uniform comonoids} if
\begin{itemize}
\item every $a\in C$ has a comonoid structure (i.e.~the sets $\Sigma(a,aa)$ and $\Sigma(a,\varepsilon)$ contain the comonoid generators and $\mathcal U$ contains the comonoid equations from Definition~\ref{def:thy-comon}),
\item upon extending the comonoid structure to all $w\in C^*$ by the following recursion:
\begin{center}
\scalebox{1}{\tikzfig{comon-extend}},
\end{center}
the set $\mathcal U$ contains the following equations, where $s$ ranges over all terms with appropriate type:
\begin{center}
\scalebox{1}{\tikzfig{comon-nat-eqns}}.
\end{center}
\end{itemize}
\end{definition}

\begin{proposition}\label{prop:universal-commutative}
Let $(C,\Sigma,\mathcal U)$ be a symmetric monoidal theory with uniform comonoids. Then for every $w\in C^*$, the construction $(w,d_w,e_w)$ is a cocommutative comonoid, i.e.~the following equations hold:
\begin{center}
\scalebox{0.8}{\tikzfig{univ-cocomm-comon}}.
\end{center}
\end{proposition}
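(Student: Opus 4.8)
The plan is to verify the three laws of a cocommutative comonoid — counitality, coassociativity and cocommutativity — for the extended structure $(w,d_w,e_w)$ by induction on the length of the word $w$. The inductive step reduces to the standard fact that, in a symmetric monoidal theory, the tensor of two (cocommutative) comonoids — with comultiplication assembled from the two comultiplications and an intervening symmetry — is again a (cocommutative) comonoid. Concretely, the recursion defining $d_w$ and $e_w$ (the equations displayed in Definition~\ref{def:thy-univ-comonoids}) writes $d_{av}$ and $e_{av}$ in terms of $d_a,d_v,e_a,e_v$ and the symmetry; so once the three laws are known for the single colour $a$ and the strictly shorter word $v$, the laws for $av$ follow by routine diagram manipulation using the symmetry axioms $\mathcal S$ of Definition~\ref{def:symm-mon-thy} (naturality of the symmetry together with the hexagon/Yang--Baxter identities).

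The base cases are $w=\varepsilon$, which is trivial, and $w=a$ a single colour. For $w=a$, coassociativity and counitality are part of the assumed comonoid structure (Definition~\ref{def:thy-comon}); the only nontrivial point is cocommutativity, which is \emph{not} among the comonoid axioms and must be extracted from the naturality equations of Definition~\ref{def:thy-univ-comonoids}. This is the crux of the argument.

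To derive single-colour cocommutativity, I would instantiate the naturality equation for comultiplication at the term $d_a:a\rightarrow aa$ itself, obtaining
\[
d_a;d_{aa} = d_a;(d_a\otimes d_a).
\]
Expanding $d_{aa}$ by its recursive definition as $(d_a\otimes d_a)$ followed by the symmetry $\sigma_{a,a}$ swapping the two middle wires, and then post-composing both sides with counits on the two outer wires, namely $e_a\otimes\id_a\otimes\id_a\otimes e_a$, the right-hand side collapses by counitality to $d_a$, while the left-hand side collapses to $d_a;\sigma_{a,a}$. This yields exactly $d_a;\sigma_{a,a}=d_a$, i.e.\ cocommutativity of the single-colour comonoid.

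The main obstacle is precisely this base case: cocommutativity is a genuine consequence of the naturality equations rather than of the comonoid laws alone, so the derivation must apply naturality to the copy map $d_a$ itself and then use counitality to isolate the symmetry. Everything else — counitality and coassociativity for single colours (given) and the inductive extension of all three laws to arbitrary words — is standard symmetric-monoidal bookkeeping and should go through without difficulty.
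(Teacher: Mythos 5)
Your proposal is correct and follows essentially the same route as the paper: the paper also derives single-colour cocommutativity (calling it a version of the Eckmann--Hilton argument) by instantiating the naturality equation at the copy map itself, expanding $d_{aa}$ via the recursion, and post-composing with counits so that one side collapses to $d_a$ and the other to $d_a;\sigma_{a,a}$, after which the general case follows by a straightforward induction on the word, exactly as you describe. Your identification of the base case as the crux, and of the inductive step as routine symmetric-monoidal bookkeeping, matches the paper's proof precisely.
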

\begin{proof}
Commutativity on the colours is a version of the {\em Eckmann-Hilton argument}. By naturality of the comonoids, we have the equation below left, whence we obtain the equation on the right by composing with the counits:
\begin{center}
\scalebox{0.8}{\tikzfig{naturality-cocomm}}.
\end{center}
We then see that the left-hand side simplifies to the comonoid, while the right-hand side simplifies to the comonoid followed by a symmetry, giving the commutativity equation. This establishes the claim for all $a\in C$. The full claim then follows by a straightforward induction.
\end{proof}

\begin{definition}[Theory with $1\mdash 1$-natural monoids]\label{def:thy-11-nat-monoids}
We say that a symmetric monoidal theory $(C,\Sigma,\mathcal N)$ has {\em $1\mdash 1$-natural monoids} if every $a\in C$ has a monoid structure (i.e.~the sets $\Sigma(aa,a)$ and $\Sigma(\varepsilon,a)$ contain the monoid generators and $\mathcal N$ contains the monoid equations from Definition~\ref{def:thy-mon}), and for every $\sigma\in\Sigma(a,b)$ with $a,b\in C$, we have the following equations:
\begin{center}
\scalebox{1}{\tikzfig{locally-nat-monoids}}.
\end{center}
\end{definition}

\begin{definition}[Theory with indexed monoids]\label{def:thy-indexed-monoids}
We say that a symmetric monoidal theory has {\em indexed monoids} if it has both $1\mdash 1$-natural monoids (Definition~\ref{def:thy-11-nat-monoids}) and uniform comonoids (Definition~\ref{def:thy-univ-comonoids}).
\end{definition}

\begin{proposition}\label{prop:indexed-bialgebra}
In any theory with indexed monoids, the bialgebra equations are derivable from naturality of the comonoids:
\begin{center}
\scalebox{1}{\tikzfig{bialgebra-eqns}}.
\end{center}
\end{proposition}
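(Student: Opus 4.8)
The plan is to observe that the four bialgebra equations are nothing but the comonoid naturality equations of Definition~\ref{def:thy-univ-comonoids} instantiated at the monoid generators. Recall that for uniform comonoids the set $\mathcal U$ contains, for \emph{every} term $s:(v,w)$, the two naturality equations $s;d_w = d_v;(s\otimes s)$ and $s;e_w = e_v$. The decisive observation is that the monoid multiplication $m_a\in\Sigma(aa,a)$ and the unit $u_a\in\Sigma(\varepsilon,a)$ supplied by Definition~\ref{def:thy-11-nat-monoids} are themselves generators, hence terms, so these naturality equations apply to them verbatim. Thus the entire content of the proof reduces to instantiating $s:=m_a$ and $s:=u_a$ and unfolding the recursively defined word-level comonoid structure appearing on the right-hand sides.

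First I would treat the multiplication. Instantiating comultiplication naturality at $s=m_a:(aa,a)$ gives $m_a;d_a = d_{aa};(m_a\otimes m_a)$. Unfolding the recursion for the two-letter word $aa$ yields $d_{aa} = (d_a\otimes d_a);(\id_a\otimes \sw^a_a\otimes\id_a)$, so the right-hand side is precisely the comultiply-both / swap-the-middle / multiply-both composite that constitutes the right-hand side of the multiplication--comultiplication bialgebra law. Instantiating counit naturality at the same $s=m_a$ gives $m_a;e_a = e_{aa}$, and since the recursion sets $e_{aa}=e_a\otimes e_a$, this is exactly the law stating that multiplication followed by the counit equals two counits.

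Next I would treat the unit. Instantiating comultiplication naturality at $s=u_a:(\varepsilon,a)$ gives $u_a;d_a = d_\varepsilon;(u_a\otimes u_a)$; the base case of the recursion gives $d_\varepsilon=\id_\varepsilon$, so this reads $u_a;d_a = u_a\otimes u_a$, the unit--comultiplication law. Finally, counit naturality at $s=u_a$ gives $u_a;e_a = e_\varepsilon = \id_\varepsilon$ (again by the base case), which is the remaining law relating the unit and the counit.

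There is no genuine obstacle here; the only point requiring care is the bookkeeping in unfolding $d_{aa}$ from the recursion, so that its four output wires are regrouped as two copies of the word $aa$ rather than as the raw $d_a\otimes d_a$ — this regrouping is exactly what introduces the symmetry $\sw^a_a$ on the middle two wires and reproduces the characteristic shape of the bialgebra law. The collapse of the nullary cases $d_\varepsilon=e_\varepsilon=\id_\varepsilon$ is immediate from the recursion. Notably, neither the symmetry axioms nor the $1\mdash 1$-naturality of the \emph{monoids} are used anywhere: every bialgebra equation follows from comonoid naturality alone, which is precisely the assertion of the proposition.
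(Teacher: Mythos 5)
Your proof is correct and is exactly the intended argument: the paper states Proposition~\ref{prop:indexed-bialgebra} without proof precisely because the bialgebra laws are the comonoid naturality equations of Definition~\ref{def:thy-univ-comonoids} instantiated at the generators $m_a$ and $u_a$, with the recursive definitions $d_{aa}=(d_a\otimes d_a);(\id_a\otimes\sw^a_a\otimes\id_a)$, $e_{aa}=e_a\otimes e_a$ and $d_\varepsilon=e_\varepsilon=\id_\varepsilon$ unfolding to the familiar right-hand sides. Your closing remark is also accurate: only the presence of the symmetry term (inside $d_{aa}$) is needed, not its axioms, and the $1\mdash 1$-naturality of the monoids plays no role.
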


\chapter{Layered theories}\label{ch:layered-theories}
This chapter contains the main syntactic constructions of the thesis. We define {\em layered signatures} in Section~\ref{sec:layered-signatures}, their types and terms in Section~\ref{sec:types-terms} and finally give the full general definition of a layered theory in Section~\ref{sec:equations-theories}. We then define {\em opfibrational}, {\em fibrational} and {\em deflational} theories in Section~\ref{sec:opfib-defl-theories}.

We try to mimic as close as possible the structure of the presentation in Chapter~\ref{ch:monoidal-theories} on monoidal theories, with additions where necessary, such as the 2-terms.

We remind the reader that the terms {\em layered theory} and {\em layered monoidal theory} are used interchangeably; ditto the terms {\em layered signature} and {\em layered monoidal signature}.

\section{Layered signatures}\label{sec:layered-signatures}

A layered monoidal signature can be thought of as an indexing of monoidal signatures.

\begin{definition}[Layered signature]\label{def:layered-signature}
A {\em layered signature} is a tuple $\left(\Omega,\mathcal F,\left\{\M_{\omega}\right\}_{\omega\in\Omega}\right)$,
where $\Omega$ is a set, $\mathcal F:\Omega\times\Omega\rightarrow\Set$ is a function, and $\M_{\omega}$ is a monoidal signature for each $\omega\in\Omega$.
\end{definition}
Given a layered signature with monoidal signatures $\M_{\omega}$, we write $\M_{\omega}=(C_{\omega},\Sigma_{\omega})$. We often abbreviate a layered signature to $(\Omega,\mathcal F,\M_{\omega})$, where the index $\omega$ is implicitly assumed to range over $\Omega$. We refer to the elements of $\Omega$ as {\em layers} and to the elements of $\mathcal F(\omega,\tau)$ as {\em generators} with domain layer $\omega$ and codomain layer $\tau$.

\begin{definition}[Morphism of layered signatures]
A morphism of layered signatures
$$\left(F,F_{\tau,\omega},F^{\omega}\right):\left(\Omega,\mathcal F,\M_{\omega}\right)\rightarrow\left(\Psi,\mathcal G,\M_{\psi}\right)$$
is given by:
\begin{itemize}
\item a function $F:\Omega\rightarrow\Psi$,
\item for every pair $(\omega,\tau)\in\Omega\times\Omega$, a function $F_{\omega,\tau}:\mathcal F(\omega,\tau)\rightarrow\mathcal G(F(\omega), F(\tau))$,
\item for every $\omega\in\Omega$, a morphism of monoidal signatures $F^{\omega}:\M_{\omega}\rightarrow\M_{F(\omega)}$.
\end{itemize}
\end{definition}
We denote the resulting category of layered signatures by $\LSgn$. As for the category of monoidal signatures $\MSgn$, we denote a morphism in $\LSgn$ simply by $F:(\Omega,\mathcal F,\M_{\omega})\rightarrow (\Psi,\mathcal G,\M_{\psi})$ as follows: $F:\Omega\rightarrow\Psi$ (without subscripts or superscripts) denotes the function on layers, $F_{\omega,\tau}:\mathcal F(\omega,\tau)\rightarrow\mathcal G(F(\omega),F(\tau))$ the functions on the generators, and $F^{\omega}:\M_{\omega}\rightarrow\M_{F(\omega)}$ the morphisms of monoidal signatures.

As for monoidal signatures, there are two ways of ``forgetting'' the structure of $\LSgn$. First, one can ``forget'' the monoidal signatures, obtaining a functor $\LSgn\rightarrow\mathtt{Fam}(\Set)$ given by $(\Omega,\mathcal F,\M_{\omega})\mapsto (\Omega\times\Omega,\mathcal F)$. Second, one can ``forget'' the generators between the layers, obtaining a functor $\LSgn\rightarrow\mathtt{Fam}(\MSgn)$ simply given by $(\Omega,\mathcal F,\M_{\omega})\mapsto (\Omega,\M_{\omega})$. These functors arise as part of a pullback of categories (cf.~Proposition~\ref{prop:msgn-pullback}).
\begin{proposition}\label{prop:lsgn-pullback}
The square below is a pullback, where the right vertical map is the family fibration, and the bottom left horizontal map is the forgetful functor from families of monoidal signatures to sets:
\begin{center}
\tikzfig{lsgn-pullback}.
\end{center}
\end{proposition}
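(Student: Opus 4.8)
The plan is to argue exactly as in Proposition~\ref{prop:msgn-pullback}, one level up: I will compute the strict pullback of the cospan objectwise and identify it, on both objects and morphisms, with $\LSgn$ together with its two forgetful functors.

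First I would pin down the cospan. The right-hand map is the family fibration $\Fam(\Set)\to\Set$ of Example~\ref{ex:family-fibration}, sending $(I,f)\mapsto I$ and a morphism to its underlying function on index sets. The bottom map $\Fam(\MSgn)\to\Set$ is the functor $(\Omega,\M_\bullet)\mapsto\Omega\times\Omega$ --- that is, the forgetful functor taking the index set, postcomposed with the squaring $\Omega\mapsto\Omega\times\Omega$ (and $F\mapsto F\times F$ on morphisms). This is the precise analogue of the map $C\mapsto C^*\times C^*$ in Proposition~\ref{prop:msgn-pullback}, the only difference being that generators run between single layers rather than words of layers, so we square $\Omega$ rather than $\Omega^*$. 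Commutativity of the square is then immediate, since both legs send $(\Omega,\mathcal F,\M_\omega)$ to $\Omega\times\Omega$ and a morphism with layer-function $F$ to $F\times F$.

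Next I would unwind the pullback on objects and morphisms. An object of the pullback is a pair consisting of $(\Omega,\M_\bullet)\in\Fam(\MSgn)$ and $(I,\mathcal F)\in\Fam(\Set)$ whose images in $\Set$ agree, i.e.\ $I=\Omega\times\Omega$; this is exactly a function $\mathcal F:\Omega\times\Omega\to\Set$ together with a monoidal signature $\M_\omega$ for each $\omega$, namely a layered signature. For morphisms I would use the description of $\Fam$-morphisms from Example~\ref{ex:family-fibration}: a $\Fam(\MSgn)$-morphism supplies a layer-function $F:\Omega\to\Psi$ and signature morphisms $F^\omega:\M_\omega\to\M_{F\omega}$, while a $\Fam(\Set)$-morphism supplies a function on index sets together with fibrewise functions on the sets of generators.

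The one step that needs genuine (if routine) care --- the analogue of the single sentence proving Proposition~\ref{prop:msgn-pullback} --- is the bookkeeping in the morphism layer. The requirement that the two $\Fam$-morphisms agree over $\Set$ forces the underlying function of the $\Fam(\Set)$-morphism to be $F\times F$, so its fibre components necessarily have the form $F_{\omega,\tau}:\mathcal F(\omega,\tau)\to\mathcal G(F\omega,F\tau)$; this is precisely the constraint that a morphism of layered signatures reindexes $\mathcal F$ along $(\omega,\tau)\mapsto(F\omega,F\tau)$. Hence a pullback-morphism is exactly a triple $(F,F_{\omega,\tau},F^\omega)$, i.e.\ a morphism of layered signatures, and the assignment is manifestly functorial and compatible with the two projections. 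With this identification in hand the square satisfies the universal property of the pullback; by uniqueness of pullbacks we could, as after Proposition~\ref{prop:msgn-pullback}, alternatively take it as the definition of $\LSgn$.
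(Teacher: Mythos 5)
Your proposal is correct and is exactly the argument the paper intends: the paper leaves this proposition unproved, deferring by analogy to the one-line proof of Proposition~\ref{prop:msgn-pullback}, and your objectwise computation of the strict pullback (forcing $I=\Omega\times\Omega$ on objects and $u=F\times F$ on morphisms, whence the fibre components become $F_{\omega,\tau}:\mathcal F(\omega,\tau)\to\mathcal G(F\omega,F\tau)$) is precisely that identification spelled out at the next level. You also correctly spot the only notational divergence from the monoidal-signature case, namely squaring $\Omega$ rather than forming $\Omega^*\times\Omega^*$, since inter-layer generators have single layers as (co)arities.
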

As for Proposition~\ref{prop:msgn-pullback}, we observe that Proposition~\ref{prop:lsgn-pullback} could be taken as the definition of layered signatures, using the uniqueness of pullbacks.
\begin{corollary}
The forgetful functor $\LSgn\rightarrow\Fam(\MSgn)$ is a fibration.
\end{corollary}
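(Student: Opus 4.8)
The plan is to transport the proof of Corollary~\ref{cor:forgetful-msgn-fibration} essentially verbatim from the monoidal-signature setting to the layered one: the genuine content has already been packaged into Proposition~\ref{prop:lsgn-pullback}, so the corollary is a formal consequence of that pullback together with two facts established earlier. Concretely, I would first read off from the pullback square of Proposition~\ref{prop:lsgn-pullback} that the forgetful functor $\LSgn\rightarrow\Fam(\MSgn)$ named in the corollary sits as the leg of the square opposite (parallel to) the right vertical map, the latter being the family fibration $\Fam(\Set)\rightarrow\Set$.

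Next I would recall that the family fibration $\Fam(\Set)\rightarrow\Set$ is indeed a fibration, as shown in Example~\ref{ex:family-fibration}. This is the map occupying the right vertical position in the pullback of Proposition~\ref{prop:lsgn-pullback}, and it is the fibrational structure we wish to transport across the square.

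I would then invoke stability of fibrations under pullback, Proposition~\ref{prop:opfibrations-pullbacks}: since the square of Proposition~\ref{prop:lsgn-pullback} is a pullback and its right vertical leg is the fibration $\Fam(\Set)\rightarrow\Set$, the opposite leg $\LSgn\rightarrow\Fam(\MSgn)$ is a fibration as well. This completes the argument.

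I do not anticipate any real obstacle here; the construction of the opcartesian/cartesian liftings is subsumed by the pullback, and no new computation is needed. The only point demanding a little care is the bookkeeping of identifying the legs: one must check that the leg singled out by Proposition~\ref{prop:lsgn-pullback} as running into $\Fam(\MSgn)$, namely $(\Omega,\mathcal F,\M_\omega)\mapsto(\Omega,\M_\omega)$, is exactly the forgetful functor of the corollary, and not the other forgetful functor $\LSgn\rightarrow\Fam(\Set)$ that also appears in the square. Once the two legs are matched correctly against the hypotheses of Proposition~\ref{prop:opfibrations-pullbacks}, the conclusion is immediate, precisely mirroring the monoidal-signature case.
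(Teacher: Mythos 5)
Your argument is exactly the paper's: the corollary is deduced, just as in Corollary~\ref{cor:forgetful-msgn-fibration}, from the pullback square of Proposition~\ref{prop:lsgn-pullback}, the fact that the family fibration $\Fam(\Set)\rightarrow\Set$ is a fibration (Example~\ref{ex:family-fibration}), and stability of fibrations under pullback (Proposition~\ref{prop:opfibrations-pullbacks}). Your care in matching the leg $(\Omega,\mathcal F,\M_{\omega})\mapsto(\Omega,\M_{\omega})$ against the hypotheses is the only checking needed, and it is correct.
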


\subsection{Shape of a layered signature}

When defining a layered signature (especially one that has finitely many layers and generators), it will be useful to simply draw the graph with the layers and the generators. We formalise this as the notion of {\em shape}.
\begin{definition}[Shape of a layered signature]\label{def:shape-layered-signature}
Given a layered signature $(\Omega,\mathcal F,\M_{\omega})$, its {\em shape} is the multigraph with vertices $\Omega$ and the edges from $\omega$ to $\tau$ given by $\mathcal F(\omega,\tau)$. Conversely, given a multigraph $M$, a layered signature with shape $M$ has the vertices of $M$ as the layers and the edges of $M$ as the generators.
\end{definition}

\begin{example}\label{ex:layered-sgn-one}
A layered signature with the shape $\bullet$ (a multigraph with one object and no vertices) is an ordinary monoidal signature.
\end{example}

\begin{example}\label{ex:layered-sgn-two}
A layered signature with the shape
$$\bullet\longrightarrow\bullet$$
consists of two monoidal signatures and a single generator between them.
\end{example}

\section{Types and terms}\label{sec:types-terms}

We define types, terms and 2-terms generated by a layered signature, which will be used to generate, respectively, objects, morphisms and 2-cells in syntactic models (Chapter~\ref{ch:semantics}). All three levels may come with equations, ultimately giving rise the the notion of a {\em layered theory} (Definition~\ref{def:layered-theory}). We summarise the three levels of generated objects in the following table:
\begin{center}
\renewcommand{\arraystretch}{1.5}
\begin{tabular}{ c | c | c | c }
level & name & meaning & equations \\
\hhline{=|=|=|=}
$\Type_{\mathcal L}$ & types & objects / 0-cells & 0-equations (Definition~\ref{def:0equations}) \\
\hline
$\Term^1_{\mathcal L}$ & terms & morphisms / 1-cells & 1-equations (Definition~\ref{def:1equations}) \\
\hline
$\Term^2_{\mathcal L}$ & 2-terms & 2-morphisms / 2-cells & 2-equations (Definition~\ref{def:2equations})
\end{tabular}
\renewcommand{\arraystretch}{1}
\end{center}

\begin{definition}[Types of a layered signature]\label{def:types-layered-signature}
Given a layered signature $(\Omega,\mathcal F,\M_{\omega})$ with $\M_{\omega}=(C_{\omega},\Sigma_{\omega})$, the set of {\em types} is recursively generated as follows:
\begin{itemize}
\item $\varepsilon : \varepsilon$ is a type,
\item for every $\omega\in\Omega$, the expression $\varepsilon : \omega$ is a type,
\item for every $\omega\in\Omega$ and $a\in C_{\omega}$, the expression $a : \omega$ is a type,
\item if $A : \omega$ is a type and $f\in\mathcal F(\omega,\tau)$, then the expression $f(A) : \tau$ is a type,
\item if $A : \omega$ and $B : \omega$ are types, then so is $AB : \omega$,
\item if $T$ and $S$ are types, then so is $T,S$,
\end{itemize}
such that the types formed with the last rule define a monoid with unit $\varepsilon : \varepsilon$:
\begin{align*}
(T,S),K &= T,(S,K) \\
\varepsilon : \varepsilon, T &= T = T, \varepsilon : \varepsilon.
\end{align*}
\end{definition}
Given a layered signature $\mathcal L$, we denote its set of types by $\Type_{\mathcal L}$. Given a morphism of layered signatures $F:\mathcal L\rightarrow\mathcal K$ (with $\mathcal L = (\Omega,\mathcal F,\mathcal M_{\omega})$), it induces a function $F:\Type_{\mathcal L}\rightarrow\Type_{\mathcal K}$ recursively defined as follows:
\begin{align*}
\varepsilon : \varepsilon &\mapsto \varepsilon : \varepsilon \\
\varepsilon : \omega &\mapsto \varepsilon : F(\omega) \\
a : \omega &\mapsto F^{\omega}(a) : F(\omega) \\
f(A) : \tau &\mapsto F_{\omega,\tau}(f)(F(A)) : F(\tau) \\
AB : \omega &\mapsto F(A)F(B) : F(\omega) \\
T,S &\mapsto F(T),F(S).
\end{align*}

The types which are generated without applying the first and the last rules in Definition~\ref{def:types-layered-signature} are called {\em internal}. Thus, internal types are of the form $A:\omega$ for some $\omega\in\Omega$. We denote the subset of internal types by $\IntType_{\mathcal L}$. Let us define the binary relation $P^0_{\mathcal L}\sse\IntType_{\mathcal L}\times\IntType_{\mathcal L}$ on the internal types as containing those terms which are in the same layer: $(A:\omega,B:\tau)\in P^0_{\mathcal L}$ if and only if $\omega = \tau$. We note that this relation is preserved by any function $F:\Type_{\mathcal L}\rightarrow\Type_{\mathcal K}$ induced by a morphism of layered signatures: if $(T,S)\in P^0_{\mathcal L}$, then $(F(T),F(S))\in P^0_{\mathcal K}$.

\begin{definition}[0-equations]\label{def:0equations}
Given a layered signature $\mathcal L$, a set of {\em 0-equations} is a subset $E^0\sse P^0_{\mathcal L}$.
\end{definition}
Given a set of 0-equations $E^0$, we extend it to the {\em type congruence} $\Zeq$, i.e.~to the smallest equivalence relation on $\Type_{\mathcal L}$ satisfying:
\begin{itemize}
\item $E^0\sse {\Zeq}$,
\item if $A:\omega\Zeq B:\omega$ and $f\in\mathcal F(\omega,\tau)$, then $f(A):\tau\Zeq f(B):\tau$,
\item if $A:\omega\Zeq C:\omega$ and $B:\omega\Zeq D:\omega$, then $AB:\omega\Zeq CD:\omega$,
\item if $T\Zeq S$ and $U\Zeq K$, then $T,U\Zeq S,K$.
\end{itemize}

We call the pairs of types $\Type_{\mathcal L}\times\Type_{\mathcal L}$ {\em sorts}, and denote a sort by $(T \mid S)$.
\begin{definition}[Basic terms]\label{def:terms-layered-signature}
Let $(\Omega,\mathcal F,\M_{\omega})$ be a layered signature. The {\em basic terms} are generated by the recursive sorting procedure in Figure~\ref{fig:layered-terms}, with the side condition that the rules~\ref{term:int-box} and \ref{term:int-tensor} only apply to {\em internal} terms, defined as follows:
\begin{itemize}[align=parleft, leftmargin=*, labelsep=3.5cm]
\item[\bf Base~case:] the terms generated by the rules~\ref{term:int-unit},~\ref{term:int-id},~\ref{term:int-gen},~\ref{term:int-box} or~\ref{term:int-tensor} are internal,
\item[\bf Recursive~case:] if the terms $x : (T \mid S)$ and $y : (S \mid U)$ are internal, then so is the term $x;y : (T \mid U)$ obtained by~\ref{term:comp}.
\end{itemize}
\end{definition}

\begin{figure}
  \centering\small\noindent
  \begin{prooftree}
    \AxiomC{$\omega\in\Omega$}
    \RightLabel{\customlabel{term:int-unit}{(int-unit)}\;\;}
    \UnaryInfC{$\scalebox{.9}{\tikzfig{emptydiag-sheet}} : (\varepsilon : \omega\mid\varepsilon : \omega)$}
    \DisplayProof
    \AxiomC{$A:\omega$}
    \AxiomC{$A\neq\varepsilon$}
    \RightLabel{\customlabel{term:int-id}{(int-id)}\;\;}
    \BinaryInfC{$\scalebox{.9}{\tikzfig{iddiag-sheet}} : (A:\omega \mid A:\omega)$}
  \end{prooftree}
  \begin{prooftree}
    \AxiomC{$\sigma\in\Sigma_{\omega}(a,b)$}
    \RightLabel{\customlabel{term:int-gen}{(int-gen)}\;\;}
    \UnaryInfC{$\scalebox{.9}{\tikzfig{internalsigmadiag}} : (a:\omega\mid b:\omega)$}
  \end{prooftree}
  \begin{prooftree}
    \AxiomC{$\scalebox{.9}{\tikzfig{internalxdiag}} : (A:\omega\mid B:\omega)$}
    \AxiomC{$f\in\mathcal F(\omega,\tau)$}
    \RightLabel{\customlabel{term:int-box}{(int-box)}\;\;}
    \BinaryInfC{$\scalebox{.9}{\tikzfig{f-box}} : (f(A):\tau \mid f(B):\tau)$}
  \end{prooftree}
  \begin{prooftree}
    \AxiomC{$\scalebox{.9}{\tikzfig{internalxdiag}} : (A:\omega\mid B:\omega)$}
    \AxiomC{$\scalebox{.9}{\tikzfig{internalydiag}} : (C:\omega\mid D:\omega)$}
    \RightLabel{\customlabel{term:int-tensor}{(int-tensor)}\;\;}
    \BinaryInfC{$\scalebox{.9}{\tikzfig{internalxydiag}} : (AC:\omega \mid BD:\omega)$}
  \end{prooftree}
  \begin{prooftree}
    \AxiomC{}
    \RightLabel{\customlabel{term:ext-unit}{(ext-unit)}\;\;}
    \UnaryInfC{$\scalebox{.9}{\tikzfig{emptydiag}} : (\varepsilon : \varepsilon\mid\varepsilon : \varepsilon)$}
  \end{prooftree}
  \begin{prooftree}
    \AxiomC{$x : (T\mid S)$}
    \AxiomC{$y : (S\mid U)$}
    \RightLabel{\customlabel{term:comp}{(comp)}\;\;}
    \BinaryInfC{$x;y : (T\mid U)$}
    \DisplayProof
    \AxiomC{$x : (T\mid S)$}
    \AxiomC{$y : (U \mid W)$}
    \RightLabel{\customlabel{term:ext-tensor}{(ext-tensor)}\;\;}
    \BinaryInfC{$x\otimes y : (T,U \mid S,W)$}
  \end{prooftree}
  \caption{Rules for generating the basic terms of a layered signature\label{fig:layered-terms}}
\end{figure}

In the definition of a layered signature (Definition~\ref{def:layered-signature}), the generators between the layers are all of the type 1-1, i.e.~the arity and coarity both consist of a single layer (this is in contrast to the generators within the layers, whose arity and coarity are lists of colours). This restriction is imposed as there is no uniform way to draw an arbitrary generator between a list (product) of layers. Since we do want the layers to interact, we introduce the mixed (co)arities at the level of terms. To this end, we need the notion of a {\em recursive sorting procedure}, which imposes some restrictions on what kinds of terms can be introduced. Each layered theory comes with a fixed recursive sorting procedure, within which the terms of the theory are generated.

\begin{definition}[Recursive sorting procedure]\label{def:sorting-procedure}
A {\em recursive sorting procedure} is a set of recursive rules such that
\begin{itemize}
\item given a layered signature $\mathcal L$, the rules generate the set of expressions $\Term^1_{\mathcal L}$ called {\em terms},
\item each term has a unique sort: we denote a term with its sort by $t:(T\mid S)$,
\item the set contains the rules for the {\em basic terms} of Figure~\ref{fig:layered-terms},
\item any additional rules do not change the set of internal terms, as defined in Definition~\ref{def:terms-layered-signature},
\item any morphism of layered signatures $F:\mathcal L\rightarrow\mathcal K$ uniquely extends to a function
$$F:\Term^1_{\mathcal L}\rightarrow\Term^1_{\mathcal K}$$
by translating the types and terms in the assumption of each recursive rule, and then applying the corresponding rule in $\Term^1_{\mathcal K}$, which is moreover consistent with the induced function on types: $t:(T\mid S)\mapsto F(t) : (F(T)\mid F(S))$.
\end{itemize}
\end{definition}
The additional rules we shall consider are given in Definition~\ref{def:symmetry-terms} as well as Figures~\ref{fig:opfibrational-terms} and~\ref{fig:fibrational-terms}. In the next chapter, we will also add the rule in Definition~\ref{def:cobox}, although the terms generated by this rule can always be eliminated.

We say that two terms are {\em parallel} when they have the same sort, up to 0-equations, if there are any. Formally, given a set of 0-equations $E^0$, we define the binary relation $P^1_{\mathcal L}\sse\Term^1_{\mathcal L}\times\Term^1_{\mathcal L}$ by $\left(t:(T_t\mid S_t),s:(T_s\mid S_s)\right)\in P^1_{\mathcal L}$ if and only if both $T_t\Zeq T_s$ and $S_t\Zeq S_s$, and call the pairs of terms in $P^1_{\mathcal L}$ {\em parallel} with respect to $E^0$. Note that if $E^0$ is empty (i.e.~there are no equations), then terms are parallel if and only if they have the same sort.

\begin{proposition}\label{prop:parallel-terms-preserved}
Let $F:(\mathcal L,E^0_{\mathcal L})\rightarrow (\mathcal K,E^0_{\mathcal K})$ be a morphism of layered signatures such that the induced function on types $F:\Type_{\mathcal L}\rightarrow\Type_{\mathcal K}$ preserves the chosen 0-equations. Then the induced function on terms $F:\Term^1_{\mathcal L}\rightarrow\Term^1_{\mathcal K}$ preserves parallel terms: if $(s,t)\in P^1_{\mathcal L}$, then $(F(s),F(t))\in P^1_{\mathcal K}$.
\end{proposition}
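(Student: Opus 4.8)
The plan is to reduce the statement to a single fact about types, namely that the induced map $F\colon\Type_{\mathcal L}\to\Type_{\mathcal K}$ preserves the type congruence: whenever $T\Zeq S$ holds in $\mathcal L$, we have $F(T)\Zeq F(S)$ in $\mathcal K$. Granting this, the proposition follows immediately. By the consistency clause in the definition of a recursive sorting procedure (Definition~\ref{def:sorting-procedure}), a term $t:(T_t\mid S_t)$ is sent to $F(t):(F(T_t)\mid F(S_t))$, and likewise $s:(T_s\mid S_s)$ is sent to $F(s):(F(T_s)\mid F(S_s))$. So if $(s,t)\in P^1_{\mathcal L}$, meaning $T_t\Zeq T_s$ and $S_t\Zeq S_s$, then preservation of the congruence gives $F(T_t)\Zeq F(T_s)$ and $F(S_t)\Zeq F(S_s)$, which is exactly the assertion that $(F(s),F(t))\in P^1_{\mathcal K}$.

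The substance is therefore to show that $F$ preserves $\Zeq$, and I would do this via the minimality property of $\Zeq_{\mathcal L}$. Define a relation $R$ on $\Type_{\mathcal L}$ by setting $T\mathrel{R}S$ if and only if $F(T)\Zeq F(S)$ holds in $\mathcal K$. Since $\Zeq_{\mathcal K}$ is an equivalence relation, $R$ is at once reflexive, symmetric and transitive. It then suffices to check that $R$ contains $E^0_{\mathcal L}$ and is closed under the three generating clauses of the type congruence; minimality of $\Zeq_{\mathcal L}$ then forces $\Zeq_{\mathcal L}\sse R$, which is precisely the desired preservation. Containment of $E^0_{\mathcal L}$ is exactly the hypothesis that $F$ preserves the chosen 0-equations, since it maps $E^0_{\mathcal L}$ into $E^0_{\mathcal K}\sse{\Zeq_{\mathcal K}}$. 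For the closure conditions the key point is that the induced map on types commutes with each type former by its very definition, namely $F(f(A):\tau)=F_{\omega,\tau}(f)(F(A)):F(\tau)$, together with $F(AB:\omega)=F(A)F(B):F(\omega)$ and $F(T,S)=F(T),F(S)$. Thus, for example, if $A:\omega\mathrel{R}B:\omega$ and $f\in\mathcal F(\omega,\tau)$, then $F(A)\Zeq F(B)$ in the layer $F(\omega)$, and as $F_{\omega,\tau}(f)\in\mathcal G(F(\omega),F(\tau))$ the corresponding congruence clause in $\mathcal K$ yields $F_{\omega,\tau}(f)(F(A))\Zeq F_{\omega,\tau}(f)(F(B))$, i.e.\ $f(A):\tau\mathrel{R}f(B):\tau$. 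The concatenation and comma clauses are verified in the same manner.

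I expect the only genuine content to be this verification that each generating clause of $\Zeq_{\mathcal L}$ is carried to an instance of the corresponding clause of $\Zeq_{\mathcal K}$; everything else is bookkeeping against the recursive definitions. The point to be careful about is that this is the one place the hypothesis is actually consumed: it supplies exactly the base case $E^0_{\mathcal L}\sse R$, so the proposition genuinely requires the assumption that $F$ preserve the chosen 0-equations rather than merely being a morphism of layered signatures. I would also note that when $E^0$ is empty the type congruence collapses to equality of sorts, and the statement degenerates to the observation (already recorded before Definition~\ref{def:0equations}) that $F$ sends equal sorts to equal sorts.
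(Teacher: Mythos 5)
Your proof is correct and is essentially the paper's own argument, which is stated there in one line as ``by the fact that terms with the same sort are mapped to terms with the same sort, and by induction on the construction of the type congruence'': your reduction via the sorting-procedure clause $t:(T\mid S)\mapsto F(t):(F(T)\mid F(S))$ is the first half, and your minimality argument with the relation $T\mathrel{R}S\iff F(T)\Zeq F(S)$ is just a clean packaging of that induction. The closure checks against the generating clauses, using $F(f(A):\tau)=F_{\omega,\tau}(f)(F(A)):F(\tau)$ and its analogues, are exactly the steps the paper leaves implicit, so you have simply filled in the details correctly.
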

\begin{proof}
By the fact that terms with the same sort are mapped to the terms with the same sort, and by induction on the construction of the type congruence.
\end{proof}

\begin{definition}[1-equations]\label{def:1equations}
Given a layered signature $\mathcal L$ together with a recursive sorting procedure and a set of 0-equations $E^0$, a set of {\em 1-equations} with respect to $E^0$ is a subset $E^1\sse P^1_{\mathcal L}$.
\end{definition}
Given a set of 1-equations $E^1$, we extend it to the {\em term congruence} $\Oeq$, i.e.~the smallest equivalence relation on $P^1_{\mathcal L}$ containing $E^1$ that is preserved by the recursive rules in Figure~\ref{fig:layered-terms}.

In many cases, equality between terms is too strong a notion. Often, two 1-cells are not equal but merely adjoint, as in deflational models of layered signatures, or there is a clear direction to a transformation that preserves semantics, as in the graph rewrites for MBQC-graphs (Section~\ref{sec:zx-extraction}). To capture this, we need to extend the layered signature to include generating 2-cells (in addition to 1-equations).
\begin{definition}[Choice of 2-cells]\label{def:choice-2cells}
Given a layered signature $\mathcal L$ together with a recursive sorting procedure and a set of 0-equations $E^0$, a {\em choice of 2-cells} with respect to $E^0$ is a function $\eta : P^1_{\mathcal L}\rightarrow\Set$, assigning to each parallel pair of terms a set of {\em generating 2-cells}.
\end{definition}
Note that any layered signature in the sense of Definition~\ref{def:layered-signature} can be seen as having a choice of 2-cells by setting $\eta$ to be the constant function returning the empty set. A morphism between layered signatures with a choice of 2-cells
$$(F,F^2) : (\mathcal L, E^0_{\mathcal L}, \eta_{\mathcal L})\rightarrow (\mathcal K, E^0_{\mathcal K}, \eta_{\mathcal K})$$
is given by a morphism of layered signatures $F:\mathcal L\rightarrow\mathcal K$ such that the induced function $F:\Type_{\mathcal L}\rightarrow\Type_{\mathcal K}$ preserves the 0-equations, and for each pair $(t,s)\in P^1_{\mathcal L}$, a function
$$F^2_{t,s}:\eta_{\mathcal L}(t,s)\rightarrow\eta_{\mathcal K}(F(t),F(s)).$$

\begin{definition}[2-terms]\label{def:2terms}
Given a layered signature $\mathcal L$ with a choice of 2-cells $\eta$, a {\em 2-term} is an expression of the form $\alpha : (s,t)$, where $(s,t)\in P^1_{\mathcal L}$ is its {\em sort}, generated by the following recursive procedure:
\begin{center}
  \small
  \begin{prooftree}
    \AxiomC{$a\in\eta(t,s)$}
    \RightLabel{\;\;}
    \UnaryInfC{$a : (t,s)$}
    \DisplayProof
    \AxiomC{$\alpha : (t,s)$}
    \AxiomC{$\beta : (s,k)$}
    \RightLabel{\;\;}
    \BinaryInfC{$\alpha;\beta : (t,k)$}
    \DisplayProof
    \AxiomC{$\alpha : (t_1,s_1)$}
    \AxiomC{$\beta : (t_2,s_2)$}
    \RightLabel{\;\;}
    \BinaryInfC{$\alpha\otimes\beta : (t_1\otimes t_2,s_1\otimes s_2)$}
  \end{prooftree}
  \begin{prooftree}
    \AxiomC{$t\in\Term^1_{\mathcal L}$}
    \RightLabel{\;\;}
    \UnaryInfC{$\id_t : (t,t)$}
    \DisplayProof
    \AxiomC{$\alpha : (t_1,s_1)$}
    \AxiomC{$\beta : (t_2,s_2)$}
    \AxiomC{$t_1 : (T\mid S)$}
    \AxiomC{$t_2 : (S\mid K)$}
    \RightLabel{.}
    \QuaternaryInfC{$\alpha *\beta : (t_1;t_2, s_1;s_2)$}
  \end{prooftree}
  \end{center}
\end{definition}
\begin{remark}
We have chosen not to extend the 2-terms to the recursively defined internal terms (i.e.~the ones generated by the rules~\ref{term:int-box} and~\ref{term:int-tensor}). There is no inherent reason for omitting these terms: upon adding them, one would just need to add the corresponding structural 2-equations (see Definition~\ref{def:str-2eqns}). However, in the models and the examples we will consider the 2-categorical structure only arises at the level of external terms, or does not need to be propagated between layers. Thus, for the sake of slightly reducing the complexity of the current presentation, we omit these terms. Moreover, as we shall observe in Corollary~\ref{cor:defl-twocells-extend}, in deflational theories, such terms are automatically induced by the external ones.
\end{remark}
Akin to terms, we denote the set of 2-terms by $\Term^2_{\mathcal L}$. Let
$$(F,F^2) : (\mathcal L, E^0_{\mathcal L}, \eta_{\mathcal L})\rightarrow (\mathcal K, E^0_{\mathcal K}, \eta_{\mathcal K})$$
be a morphism between layered signatures with a choice of 2-cells. The functions between the generating 2-cells $F^2_{t,s}$ then yield a function $F^2:\Term^2_{\mathcal L}\rightarrow\Term^2_{\mathcal K}$ recursively defined as follows:
\begin{align*}
a : (t,s) &\mapsto F^2_{t,s}(a) : (F(t),F(s)) \\
\id_t : (t,t) &\mapsto \id_{F(t)} : (F(t),F(t)) \\
\alpha;\beta : (t,k) &\mapsto F^2(\alpha); F^2(\beta) : (F(t),F(k)) \\
\alpha\otimes\beta : (t_1\otimes t_2,s_1\otimes s_2) &\mapsto F^2(\alpha)\otimes F^2(\beta) : (F(t_1)\otimes F(t_2),F(s_1)\otimes F(s_2)) \\
\alpha *\beta : (t_1;t_2, s_1;s_2) &\mapsto F^2(\alpha) * F^2(\beta) : (F(t_1);F(t_2), F(s_1);F(s_2)).
\end{align*}

As for the terms, given a set of 1-equations $E^1$, a pair of 2-terms is in the {\em parallel 2-term relation} $P^2_{\mathcal L}$ with respect to $E^1$ if and only if both 2-terms have the same sort up to the 1-equations: $\left(\alpha:(t_{\alpha},s_{\alpha}), \beta:(t_{\beta},s_{\beta})\right)\in P^2_{\mathcal L}$ if and only if both $t_{\alpha}\Oeq t_{\beta}$ and $s_{\alpha}\Oeq s_{\beta}$. We draw a 2-term $\alpha : (t,s)$ either as
\begin{center}
\tikzfig{2term},
\end{center}
or simply as an arrow $t\xrightarrow{\alpha} s$ (see e.g.~Figure~\ref{fig:structural-twocells-adjoints} on page~\pageref{fig:structural-twocells-adjoints}).

\begin{proposition}\label{prop:parallel-2terms-preserved}
Let
$$(F,F^2) : (\mathcal L, E^0_{\mathcal L}, \eta_{\mathcal L}, E^1_{\mathcal L})\rightarrow (\mathcal K, E^0_{\mathcal K}, \eta_{\mathcal K}, E^1_{\mathcal K})$$
be a morphism between layered signatures with a choice of 2-cells such that the induced function on terms $F:\Term^1_{\mathcal L}\rightarrow\Term^1_{\mathcal K}$ preserves the specified 1-equations. Then the induced function on 2-terms $F^2:\Term^2_{\mathcal L}\rightarrow\Term^2_{\mathcal K}$ preserves parallel terms: if $(\alpha,\beta)\in P^2_{\mathcal L}$, then $(F^2(\alpha),F^2(\beta))\in P^2_{\mathcal K}$.
\end{proposition}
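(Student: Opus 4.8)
The plan is to reduce the statement to two observations and then combine them. First, I would note that $F^2$ acts on the sort of a 2-term simply by applying the induced map $F$ on $1$-terms componentwise; second, I would show that this same map $F:\Term^1_{\mathcal L}\rightarrow\Term^1_{\mathcal K}$ descends to the term congruences, i.e.~that $t\Oeq s$ in $\mathcal L$ implies $F(t)\Oeq F(s)$ in $\mathcal K$. Granting both, the conclusion is immediate: if $(\alpha,\beta)\in P^2_{\mathcal L}$ with $\alpha:(t_\alpha,s_\alpha)$ and $\beta:(t_\beta,s_\beta)$, then by definition $t_\alpha\Oeq t_\beta$ and $s_\alpha\Oeq s_\beta$; the second observation upgrades these to $F(t_\alpha)\Oeq F(t_\beta)$ and $F(s_\alpha)\Oeq F(s_\beta)$, while the first identifies the sorts of $F^2(\alpha)$ and $F^2(\beta)$ as $(F(t_\alpha),F(s_\alpha))$ and $(F(t_\beta),F(s_\beta))$. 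Together these say exactly that $(F^2(\alpha),F^2(\beta))\in P^2_{\mathcal K}$. This follows the same template as the one-level-down argument of Proposition~\ref{prop:parallel-terms-preserved}, raised from the type congruence to the term congruence.

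The first observation can be read off directly from the recursive definition of $F^2$ following Definition~\ref{def:2terms}: a straightforward induction on the generation of 2-terms shows that every clause sends a 2-term of sort $(t,s)$ to one of sort $(F(t),F(s))$. The base clauses ($a:(t,s)\mapsto F^2_{t,s}(a)$ and $\id_t\mapsto\id_{F(t)}$) exhibit this on the nose, and the inductive clauses for vertical composition $\alpha;\beta$, monoidal product $\alpha\otimes\beta$, and horizontal composition $\alpha*\beta$ preserve it because $F$ commutes with the operations $;$ and $\otimes$ on $1$-terms, which is part of $F$ being consistent with the recursive sorting procedure (Definition~\ref{def:sorting-procedure}).

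The substantive step is the second observation, and it is where the hypothesis that $F$ preserves the specified $1$-equations enters; this is the part I expect to require the most care, though the difficulty is only bookkeeping. I would prove it by the standard minimal-congruence argument: consider the pulled-back relation $R=\{(t,s)\in\Term^1_{\mathcal L}\times\Term^1_{\mathcal L} : F(t)\Oeq F(s)\}$. Since $\Oeq$ in $\mathcal K$ is an equivalence relation, so is $R$; since $F$ preserves $1$-equations, each $(t,s)\in E^1_{\mathcal L}$ is sent to a pair $(F(t),F(s))\in E^1_{\mathcal K}$ and hence lies in $\Oeq$, so $E^1_{\mathcal L}\sse R$; and since $F$ commutes with every rule of Figure~\ref{fig:layered-terms} (again by Definition~\ref{def:sorting-procedure}) while $\Oeq$ in $\mathcal K$ is preserved by those same rules, $R$ is closed under them too. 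As $\Oeq$ in $\mathcal L$ is by construction the smallest such relation containing $E^1_{\mathcal L}$, we conclude $\Oeq\,\sse R$, which is precisely $t\Oeq s\Rightarrow F(t)\Oeq F(s)$. The one point to verify explicitly is that $F$ commutes with \emph{all} rules of the chosen sorting procedure, including any extra rules beyond the basic ones of Figure~\ref{fig:layered-terms}, but this is guaranteed uniformly by the final clause of Definition~\ref{def:sorting-procedure}.
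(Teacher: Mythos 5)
Your proposal is correct and follows the same route as the paper's (much terser) proof: the paper likewise argues that $F^2$ maps 2-terms of a given sort to 2-terms of the corresponding image sort, and then handles the passage from equal sorts to sorts-up-to-$\Oeq$ by induction on the construction of the term congruence, which is exactly the content of your minimal-congruence argument with the pulled-back relation $R$. Your write-up merely makes explicit the bookkeeping (including the appeal to Definition~\ref{def:sorting-procedure} for commutation with all sorting rules) that the paper leaves implicit.
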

\begin{proof}
By the fact that 2-terms with the same sort are mapped to 2-terms with the same sort, and by induction on the construction of the term congruence.
\end{proof}

\begin{definition}[2-equations]\label{def:2equations}
Let $\mathcal L$ be layered signature with a set of 0-equations $E^0$. Given a set of 1-equations $E^1$ and a choice of 2-cells $\eta$ with respect to $E^0$, a set of {\em 2-equations} with respect to $E^1$ and $\eta$ is a subset $E^2\sse P^2_{\mathcal L}$.
\end{definition}
Given a set of 2-equations $E^2$, we extend it to the {\em 2-term congruence} $\Teq$, i.e.~the smallest equivalence relation on $P^2_{\mathcal L}$ containing $E^2$ that is preserved by the recursive rules in Definition~\ref{def:2terms}.

\section{Equations and theories}\label{sec:equations-theories}

We now have all the ingredients to define a layered theory.

\begin{definition}[Layered theory]\label{def:layered-theory}
Let us fix a recursive sorting procedure (Definition~\ref{def:sorting-procedure}). A {\em layered theory} $(\mathcal L,E^0,E^1,\eta,E^2)$ consists of the following:
\begin{itemize}
\item a layered signature $\mathcal L$ (Definition~\ref{def:layered-signature}),
\item a set of 0-equations $E^0$ (Definition~\ref{def:0equations}),
\item a set of 1-equations $E^1$ with respect to $E^0$ (Definition~\ref{def:1equations}),
\item a choice of 2-cells $\eta$ with respect to $E^0$ (Definition~\ref{def:choice-2cells}),
\item a set of 2-equations $E^2$ with respect to $E^1$ and $\eta$ (Definition~\ref{def:2equations}).
\end{itemize}
\end{definition}
For a fixed recursive sorting procedure, a {\em morphism of layered theories}
$$(F,F^2) : (\mathcal L,E^0_{\mathcal L},E^1_{\mathcal L},\eta_{\mathcal L},E^2_{\mathcal L})\rightarrow (\mathcal K,E^0_{\mathcal K},E^1_{\mathcal K},\eta_{\mathcal K},E^2_{\mathcal K})$$
is given by a morphism of layered signatures with a choice of 2-cells $(F,F^2)$ (thus, in particular, the induced function $F:\Type_{\mathcal L}\rightarrow\Type_{\mathcal K}$ preserves the 0-equations), such that the induced functions $F:\Term^1_{\mathcal L}\rightarrow\Term^1_{\mathcal K}$ and $F^2:\Term^2_{\mathcal L}\rightarrow\Term^2_{\mathcal K}$ preserve the 1-equations and the 2-equations, respectively. We denote any category of layered theories by $\LTh$: note that each recursive sorting procedure results in a different category.
\begin{proposition}\label{prop:lth-lsgn-fibration}
Let us fix a recursive sorting procedure. The forgetful functor $\LTh\rightarrow\LSgn$ is a fibration. Moreover, the objects in the fibre $\LTh(\mathcal L)$ are precisely the layered theories with the signature $\mathcal L$.
\end{proposition}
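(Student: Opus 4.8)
The plan is to mimic the argument that $\MTh\to\MSgn$ is a fibration (Proposition~\ref{prop:monoidal-signatures-theories-models}), producing an explicit cartesian lifting by transporting all of the equational and $2$-cell data backwards along a signature morphism. Fix the recursive sorting procedure. Recall the forgetful functor sends a layered theory $(\mathcal L,E^0,E^1,\eta,E^2)$ to its underlying signature $\mathcal L$ and a morphism $(F,F^2)$ to $F$. So, given a liftable pair consisting of $F:\mathcal K\to\mathcal L$ in $\LSgn$ and a layered theory $\mathcal T=(\mathcal L,E^0,E^1,\eta,E^2)$ over $\mathcal L$, I must exhibit a cartesian lifting $\Phi:F^*\mathcal T\to\mathcal T$ over $F$ whose source $F^*\mathcal T$ has underlying signature $\mathcal K$. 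The idea is that every piece of structure on $\mathcal T$ is pulled back by \emph{preimage} along the functions $F$ and $F^2$ induce on types, terms and $2$-terms, with the $2$-cell sets pulled back family-fibration-style.

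I would assemble $F^*\mathcal T=(\mathcal K,\hat E^0,\hat E^1,\hat\eta,\hat E^2)$ in the order forced by the dependencies among the data. First set $\hat E^0=(F\times F)^{-1}(\Zeq)\cap P^0_{\mathcal K}$, the parallel internal types of $\mathcal K$ that $F$ identifies in $\mathcal L$. Since $\hat E^0\sse(F\times F)^{-1}(\Zeq)$, the map $F$ preserves $\hat E^0$ into $\Zeq$, so by Proposition~\ref{prop:parallel-terms-preserved} the induced $F$ sends $P^1_{\mathcal K}$ into $P^1_{\mathcal L}$. This makes $\hat\eta(t,s):=\eta(F(t),F(s))$ (for $(t,s)\in P^1_{\mathcal K}$) well-typed, and I likewise set $\hat E^1=(F\times F)^{-1}(\Oeq)\cap P^1_{\mathcal K}$; then $F$ preserves $\hat E^1$ into $\Oeq$, so by Proposition~\ref{prop:parallel-2terms-preserved} the induced $F^2$ sends $P^2_{\mathcal K}$ into $P^2_{\mathcal L}$, and finally $\hat E^2=(F^2\times F^2)^{-1}(\Teq)\cap P^2_{\mathcal K}$. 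The lifting is $\Phi=(F,\id)$, with $2$-cell component the identity $\hat\eta(t,s)=\eta(F(t),F(s))\to\eta(F(t),F(s))$; it is a genuine morphism of layered theories precisely because each $\hat E$ was defined as a preimage of the corresponding congruence, so preservation of all equations holds by construction.

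For cartesianness take any $\Psi:\mathcal S\to\mathcal T$ in $\LTh$ and any $h:p\mathcal S\to\mathcal K$ in $\LSgn$ with $h;F=p\Psi$; I must supply a unique $H:\mathcal S\to F^*\mathcal T$ over $h$ with $H;\Phi=\Psi$. The underlying signature morphism of $H$ is forced to equal $h$, so on types, terms and $2$-terms $H$ acts exactly as $h$ does, and the only remaining freedom is in the $2$-cell functions. Because $\Phi^2$ is the identity and the $2$-cell part of a composite is $(H;\Phi)^2_{t,s}=\Phi^2_{h(t),h(s)}\circ H^2_{t,s}=H^2_{t,s}$, the equation $H;\Phi=\Psi$ forces $H^2_{t,s}=\Psi^2_{t,s}$ (the codomains agree, both being $\eta(F(h(t)),F(h(s)))$), giving uniqueness. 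For existence I check that $H=(h,\Psi^2)$ lands in $F^*\mathcal T$: if $(A,B)\in E^0_{\mathcal S}$ then $(F(h(A)),F(h(B)))=((h;F)(A),(h;F)(B))\in\Zeq$ because $\Psi$ preserves $0$-equations, and since $h$ preserves internal types and layers we have $(h(A),h(B))\in P^0_{\mathcal K}$, hence $(h(A),h(B))\in\hat E^0$; the $1$- and $2$-equation cases are verbatim with $\Oeq$, $\Teq$ and Propositions~\ref{prop:parallel-terms-preserved}--\ref{prop:parallel-2terms-preserved}. The ``moreover'' clause is immediate: an object over $\mathcal L$ is by definition a layered theory whose signature is $\mathcal L$, and morphisms lying over $\id_{\mathcal L}$ are exactly morphisms of such theories, so $\LTh(\mathcal L)$ is the category of layered theories with signature $\mathcal L$.

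The main obstacle I anticipate is bookkeeping rather than a single hard idea. One must keep the dependency order $\hat E^0\leadsto(\hat\eta,\hat E^1)\leadsto\hat E^2$ strict, defining each $P^i_{\mathcal K}$ before it is used as a domain and invoking Propositions~\ref{prop:parallel-terms-preserved} and~\ref{prop:parallel-2terms-preserved} at each stage to see the pullbacks are well-typed. The more delicate point is that the argument must be robust to an \emph{arbitrary} recursive sorting procedure, not only the basic rules of Figure~\ref{fig:layered-terms}: one needs that a preimage of a congruence is closed under every rule of the procedure, and this is exactly where the defining clause of Definition~\ref{def:sorting-procedure} does the work, since it guarantees $F$ maps an instance of any rule in $\mathcal K$ to the corresponding instance in $\mathcal L$, so the preimage relation respects that rule.
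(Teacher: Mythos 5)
Your proof is correct, and it is exactly the argument the paper intends: the paper in fact states Proposition~\ref{prop:lth-lsgn-fibration} without proof, and your cartesian lifting by taking preimages along the induced maps on types, terms and 2-terms mirrors the cartesian-lifting constructions the paper does spell out (e.g.\ in Proposition~\ref{prop:monoidal-signatures-theories-models}). You also get the two genuinely delicate points right -- pulling back the \emph{generated congruences} $\Zeq$, $\Oeq$, $\Teq$ rather than just the generating equations (which is what makes the universal factorisation $H=(h,\Psi^2)$ land in $F^*\mathcal T$), and respecting the dependency order $\hat E^0\leadsto(\hat\eta,\hat E^1)\leadsto\hat E^2$ via Propositions~\ref{prop:parallel-terms-preserved} and~\ref{prop:parallel-2terms-preserved}.
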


Next, we define the structural equations that will hold in nearly all the models we will encounter, with the exception the ``conditional box'' notation defined in Section~\ref{sec:prob-channels}, where we have to drop the functoriality requirement.

\begin{definition}[Structural 0-equations]\label{def:str-0eqns}
Given a layered signature, the {\em structural 0-equations} are given by:
\begin{align*}
(AB)C : \omega &= A(BC) : \omega, \\
\varepsilon A : \omega &= A : \omega = A\varepsilon : \omega, \\
f(AB):\tau &= f(A)f(B):\tau, \\
f(\varepsilon) : \tau &= \varepsilon : \tau.
\end{align*}
We denote the structural 0-equations by $S^0$.
\end{definition}

\begin{definition}[Structural 1-equations]\label{def:str-1eqns}
Given a layered signature with the structural 0-equations (Definition~\ref{def:str-0eqns}), the {\em structural 1-equations} are given by the following:
\begin{itemize}
\item the structural identities for monoidal theories (Definition~\ref{def:str-id}), where sequential composition is given by the rule~\ref{term:comp}, the identities are given by the terms~\ref{term:int-id}, parallel composition is given by the rule~\ref{term:ext-tensor}, and the monoidal unit is given by~\ref{term:ext-unit},
\item the structural identities for the monoidal product (the bottom half of Definition~\ref{def:str-id}) hold for the internal terms, where parallel composition is given by the rule~\ref{term:int-tensor}, the monoidal unit is given by~\ref{term:int-unit}, and the sequential composition is given by~\ref{term:comp}\footnote{We note that the structural identities for internal terms which only involve sequential composition and identities are already implied by requiring them for all terms.},
\item the following identity for the identity terms with sort $(AB:\omega\mid AB:\omega)$ generated by the rules~\ref{term:int-id} and~\ref{term:int-tensor}:
\begin{equation*}
\scalebox{.9}{\tikzfig{int-id-equals-int-tensor}} : (AB:\omega\mid AB:\omega),
\end{equation*}
where the term on the left-hand side is generated by the rule~\ref{term:int-id}, while the term on the right-hand side is obtained by applying~\ref{term:int-tensor} to the identity terms with sorts $(A:\omega\mid A:\omega)$ and $(B:\omega\mid B:\omega)$, which are, in turn, obtained from~\ref{term:int-id},
\item the identities in Figure~\ref{fig:structural-twocells-functors-int} involving the~\ref{term:int-box} terms, making each $f$ a monoidal functor.
\end{itemize}
We denote the structural 1-equations by $S^1$.
\end{definition}

\begin{figure}
  \centering\small\noindent
  \scalebox{1}{\tikzfig{structural-twocells-functors-int}}
  \caption{1-equations defining monoidal functors inside the fibres\label{fig:structural-twocells-functors-int}}
\end{figure}

\begin{definition}[Structural 2-equations]\label{def:str-2eqns}
Given a layered signature and the structural 1-equations of Definition~\ref{def:str-1eqns}, the {\em structural 2-equations} are given in Figure~\ref{fig:str-2eqns} by the expressions on the left, as long as both sides are defined (on the right, we draw the situation in which each equation applies). We denote the structural 2-equations by $S^2$.
\end{definition}

\begin{figure}
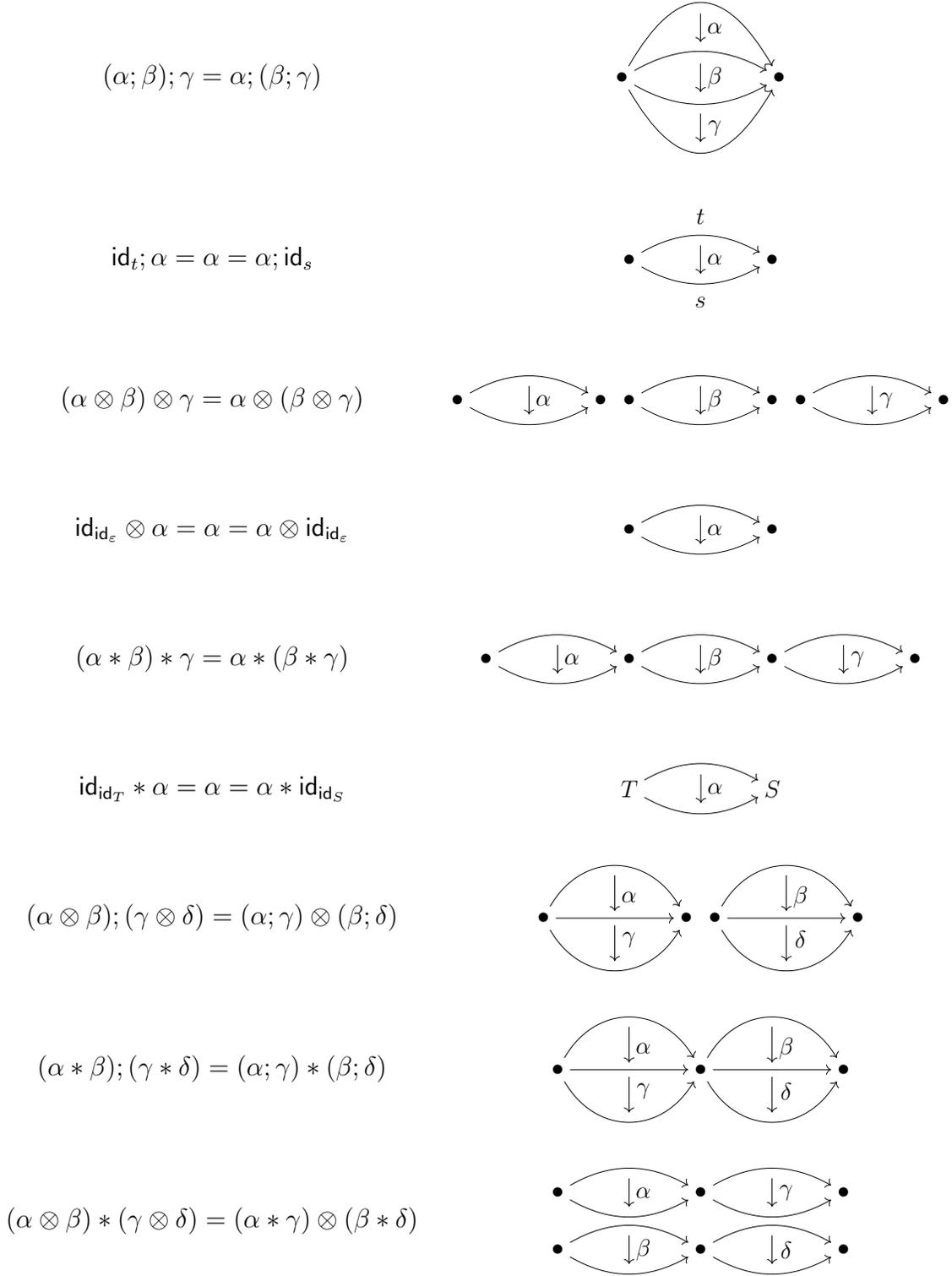

\centering
\renewcommand{\arraystretch}{4}
\begin{tabular}{c c}
$(\alpha;\beta);\gamma = \alpha;(\beta;\gamma)$ & \scalebox{.9}{\tikzfig{2cells-compose-assoc}} \\
$\id_t;\alpha = \alpha = \alpha;\id_s$ & \scalebox{.9}{\tikzfig{2cells-compose-id}} \\
$(\alpha\otimes\beta)\otimes\gamma = \alpha\otimes(\beta\otimes\gamma)$ & \scalebox{.9}{\tikzfig{2cells-tensor-assoc}} \\
$\id_{\id_{\varepsilon}}\otimes\alpha = \alpha = \alpha\otimes\id_{\id_{\varepsilon}}$ & \scalebox{.9}{\tikzfig{2cells-tensor-id}} \\
$(\alpha *\beta) *\gamma = \alpha *(\beta *\gamma)$ & \scalebox{.9}{\tikzfig{2cells-horizontal-assoc}} \\
$\id_{\id_T} *\alpha = \alpha = \alpha *\id_{\id_S}$ & \scalebox{.9}{\tikzfig{2cells-horizontal-id}} \\
$(\alpha\otimes\beta);(\gamma\otimes\delta) = (\alpha;\gamma)\otimes (\beta;\delta)$ & \scalebox{.9}{\tikzfig{2cells-tensor-compose}} \\
$(\alpha *\beta);(\gamma *\delta) = (\alpha;\gamma) * (\beta;\delta)$ & \scalebox{.9}{\tikzfig{2cells-horizontal-compose}} \\
$(\alpha\otimes\beta) * (\gamma\otimes\delta) = (\alpha *\gamma)\otimes (\beta *\delta)$ & \scalebox{.9}{\tikzfig{2cells-tensor-horizontal}}
\end{tabular}
\renewcommand{\arraystretch}{1}
\caption{The structural 2-equations. Here $\id_T$ and $\id_S$ are the identity terms on the types $T$ and $S$, while $\id_{\varepsilon}$ is the identity term on $\varepsilon : \varepsilon$ obtained by the rule~\ref{term:ext-unit}.\label{fig:str-2eqns}}
\end{figure}

We say that a layered theory {\em has structural equations} when its sets of equations contain all the structural equations of Definitions~\ref{def:str-0eqns}, \ref{def:str-1eqns} and~\ref{def:str-2eqns}.

\begin{definition}[Symmetry terms]\label{def:symmetry-terms}
Given a layered signature, the {\em symmetry terms} are generated by the following rule:
\begin{center}
  \begin{prooftree}
    \AxiomC{$A:\omega$}
    \AxiomC{$B:\tau$}
    \RightLabel{\customlabel{term:swap}{(swap)}.\;\;}
    \BinaryInfC{$\scalebox{.9}{\tikzfig{symdiag-sheet1}} : (A:\omega, B:\tau \mid B:\tau, A:\omega)$}
  \end{prooftree}
\end{center}
\end{definition}

\begin{definition}[Externally symmetric layered theory]\label{def:symmetry-1equations}
A layered theory is {\em externally symmetric} if it has the symmetry terms, and the structural 0-equations (Definition~\ref{def:str-0eqns}) and 1-equations (Definition~\ref{def:str-1eqns}) hold, and the 1-equations contain the equations for symmetric monoidal theories in Definition~\ref{def:symm-mon-thy} for the symmetry terms.
\end{definition}

\begin{example}
Consider a layered theory $(\bullet,S^0,E^1,\eset,\eset)$ with no generating 2-cells or 2-equations, where $\bullet$ is a layered signature from Example~\ref{ex:layered-sgn-one} (one layer and no generators). The internal terms contain the terms of monoidal theories as defined in Section~\ref{sec:monoidal-theories}. If further $S^1\sse E^1$, then there is a one-to-one correspondence between the equivalence classes of internal terms and the equivalence classes of monoidal terms.
\end{example}

\begin{example}
Consider a layered theory $(\omega\rightarrow\tau,S^0,S^1,\eset,\eset)$ with no generating 2-cells or 2-equations, where $\omega\rightarrow\tau$ is a layered signature from Example~\ref{ex:layered-sgn-two} (two layers with one generator between them). Then the internal terms with type $\omega$ give the free monoidal category generated by the signature indexed by $\omega$, while the internal terms with type $\tau$ give the free monoidal category generated by the disjoint union of the signature indexed by $\tau$ and the generators obtained by the rule~\ref{term:int-box}. The 1-equations of Figure~\ref{fig:structural-twocells-functors-int} then make $f:\omega\rightarrow\tau$ a strict monoidal functor between the two free models. The functor is free in the sense that all the terms obtained by applying $f$ to the internal terms with type $\omega$ (the \ref{term:int-box} rule) remain uninterpreted, and are simply added as generators to the monoidal category defined by the internal terms with type $\tau$.
\end{example}

\section{(Op)fibrational and deflational theories}\label{sec:opfib-defl-theories}

Here we define three classes of layered theories that we will focus on: opfibrational, fibrational and deflational theories. In each case, we specify the recursive sorting procedure as well as the corresponding structural equations. In the case of deflational theories, we also define the structural 2-cells. The models of these theories are studied in Chapter~\ref{ch:semantics}, building on the theory of Chapters~\ref{ch:indexed-monoidal} and~\ref{ch:profunctor-collages}.

\subsection{Opfibrational theories}\label{subsec:opfib-theories}

Opfibrational theories are defined for the recursive procedure specified in Definition~\ref{def:opfib-layered-theory} and have no generating 2-cells.

\begin{figure}
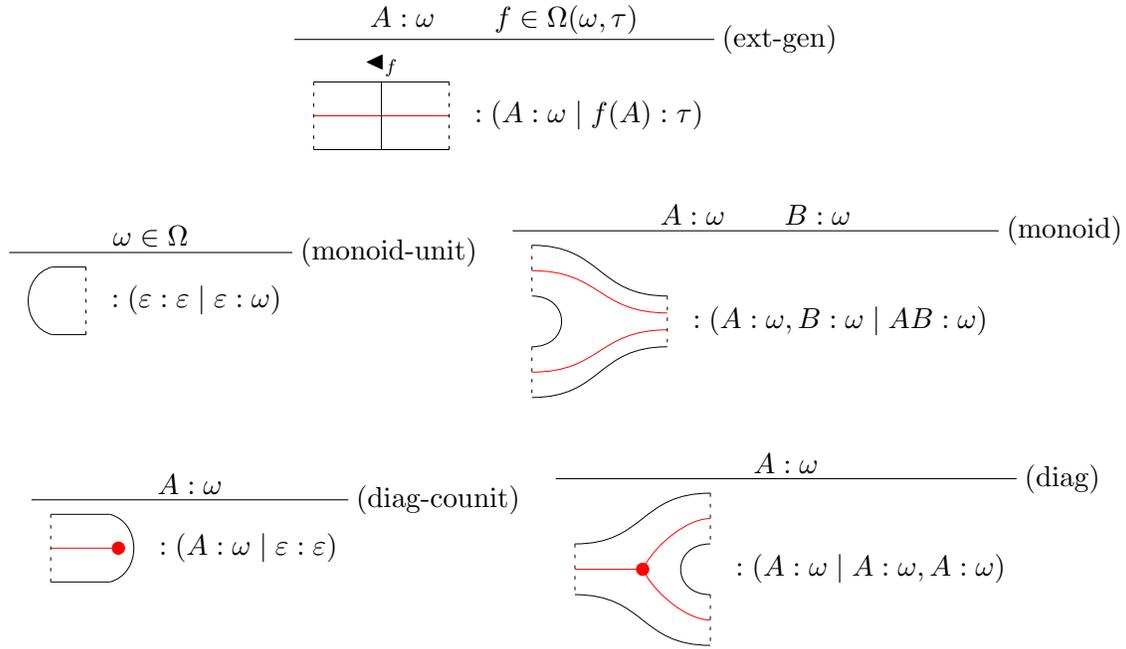

  \centering\small\noindent
  \begin{prooftree}
    \AxiomC{$A:\omega$}
    \AxiomC{$f\in\Omega(\omega,\tau)$}
    \RightLabel{\customlabel{term:ext-gen}{(ext-gen)}\;\;}
    \BinaryInfC{$\scalebox{.9}{\tikzfig{refine-sheet}} : (A:\omega \mid f(A):\tau)$}
  \end{prooftree}
  \begin{prooftree}
    \AxiomC{$\omega\in\Omega$}
    \RightLabel{\customlabel{term:monoid-unit}{(monoid-unit)}\;\;}
    \UnaryInfC{$\scalebox{.9}{\tikzfig{cup}} : (\varepsilon : \varepsilon\mid\varepsilon : \omega)$}
    \DisplayProof
    \AxiomC{$A:\omega$}
    \AxiomC{$B:\omega$}
    \RightLabel{\customlabel{term:monoid}{(monoid)}\;\;}
    \BinaryInfC{$\scalebox{.9}{\tikzfig{pants}} : (A:\omega, B:\omega \mid AB:\omega)$}
  \end{prooftree}
  \begin{prooftree}
    \AxiomC{$A:\omega$}
    \RightLabel{\customlabel{term:diag-counit}{(diag-counit)}\;\;}
    \UnaryInfC{$\scalebox{.9}{\tikzfig{a-cap}} : (A:\omega \mid \varepsilon : \varepsilon)$}
    \DisplayProof
    \AxiomC{$A:\omega$}
    \RightLabel{\customlabel{term:diag}{(diag)}\;\;}
    \UnaryInfC{$\scalebox{.9}{\tikzfig{copants-copy}} : (A:\omega \mid A:\omega, A:\omega)$}
  \end{prooftree}
  \caption{Rules for generating the opfibrational terms\label{fig:opfibrational-terms}}
\end{figure}

\begin{definition}[Opfibrational layered theory]\label{def:opfib-layered-theory}
We say that a layered theory is {\em opfibrational} if it has no generating 2-cells, and its recursive sorting procedure consists of the rules in Figure~\ref{fig:opfibrational-terms} and the symmetry terms of Definition~\ref{def:symmetry-terms} (in addition to the basic terms in Figure~\ref{fig:layered-terms}).
\end{definition}

The intuition behind the opfibrational terms is that we take an ``external'' view of the monoidal categories and monoidal functors between them. In more detail:
\begin{itemize}
\item the terms~\ref{term:monoid} and~\ref{term:monoid-unit} capture the monoidal product and unit in the layer $\omega$: e.g.~sliding the internal terms through the term~\ref{term:monoid} defines the monoidal product on morphisms (see Figure~\ref{fig:structural-twocells-monoidal}),
\item the terms~\ref{term:ext-gen} capture the monoidal functors $f:\omega\rightarrow\tau$: any internal term appearing on the left-hand side of the boundary is in the domain $\omega$, and can be pushed through the boundary into the codomain $\tau$ (see Figure~\ref{fig:structural-twocells-functors-ext}),
\item the terms~\ref{term:diag-counit} and~\ref{term:diag} capture the cartesian monoidal structure of the category of monoidal categories; note that they do not correspond to any deleting or copying {\em inside} a layer, rather, it might be instructive to think of them as {\em branching} or {\em nondeterminism}: \ref{term:diag} represents branching into two potential histories of a system, while \ref{term:diag-counit} corresponds to discarding one of the branches or histories.
\end{itemize}

\begin{remark}
It is somewhat inadequate to call the layered theories of Definition~\ref{def:opfib-layered-theory} {\em opfibrational}. First, exclusion of non-trivial 2-cells does not allow for any 2-categorical structure, forcing all the equations to be strict. Second, the recursive sorting procedure already includes the terms that define indexed monoids on the opfibration. Thus, a more accurate name for what we have defined would be {\em strict layered theory for opfibrations with indexed monoids}. Since this is somewhat cumbersome, and we shall not consider any other layered theories featuring an opfibration so that no confusion can arise, we stick with simply calling these theories {\em opfibrational}.
\end{remark}

We denote the resulting category of opfibrational layered theories by $\OpFTh$. Note that in this case the morphisms are simply equation preserving morphisms of signatures rather than pairs of morphisms, as we assume there are no generating 2-cells.

\begin{definition}[Structural opfibrational equations]\label{def:str-opfib-eqns}
By the {\em structural opfibrational equations} we mean the structural equations, the external symmetry equations, as well as the following sets of 1-equations:
\begin{itemize}
\item the equations of uniform comonoids (Definition~\ref{def:thy-univ-comonoids}) for the terms~\ref{term:diag} and~\ref{term:diag-counit},
\item the defining identities of monoidal categories in Figure~\ref{fig:structural-twocells-monoidal},
\item the defining identities of monoidal functors in Figure~\ref{fig:structural-twocells-functors-ext}.
\end{itemize}
We denote the structural opfibrational 1-equations by $S_{\mathsf{opf}}^1$.
\end{definition}

\begin{figure}
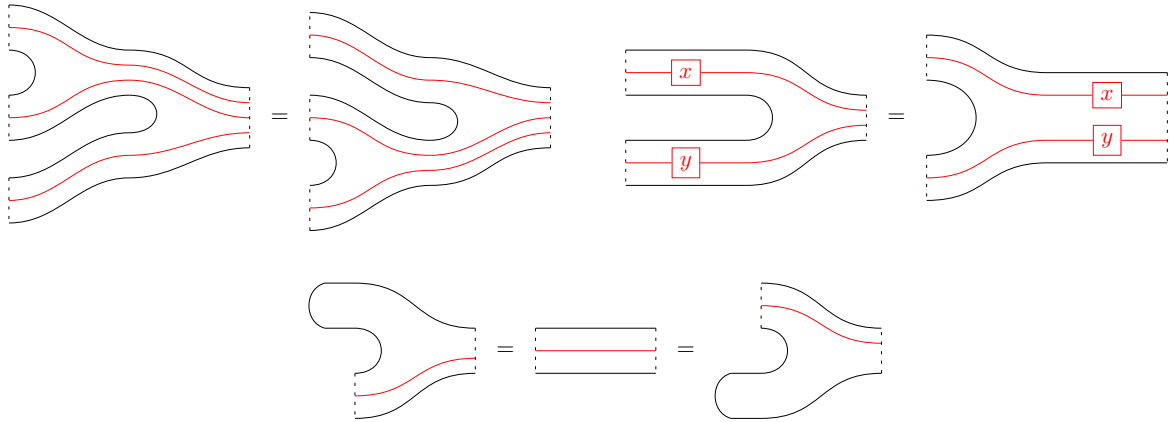

  \centering\small\noindent
  \scalebox{.8}{\tikzfig{structural-twocells-monoidal}}
  \caption{1-equations defining monoidal categories\label{fig:structural-twocells-monoidal}}
\end{figure}

\begin{figure}
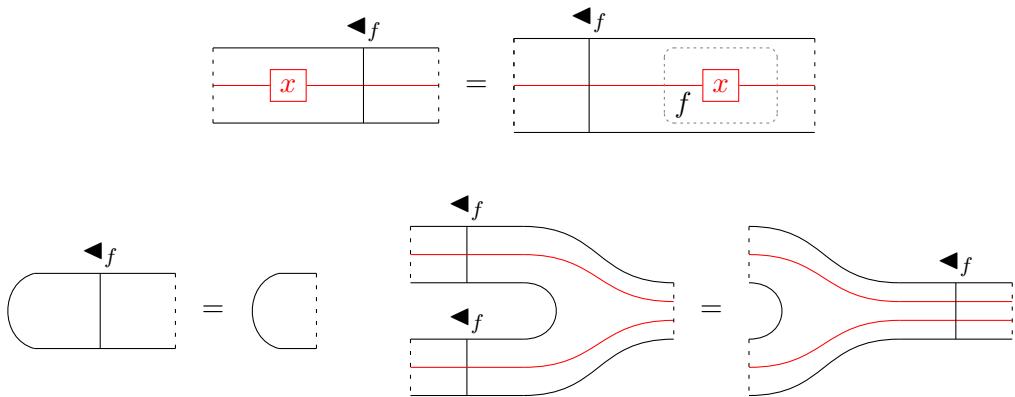

  \centering\small\noindent
  \scalebox{1}{\tikzfig{structural-twocells-functors-ext}}
  \caption{1-equations defining monoidal functors\label{fig:structural-twocells-functors-ext}}
\end{figure}

\subsection{Fibrational theories}\label{subsec:fib-theories}

Dualising the construction of opfibrational theories, we obtain {\em fibrational theories}, where the composition of internal and external terms go in opposite directions. Formally, we obtain opfibrational theories by replacing the opfibrational terms (Figure~\ref{fig:opfibrational-terms}) by the {\em fibrational terms} (Figure~\ref{fig:fibrational-terms}) in Definition~\ref{def:opfib-layered-theory}. Note that each fibrational term is a horizontal reflection of an opfibrational term, and that in the rule~\ref{term:ext-gen-op} we use of an unfilled triangle for emphasis -- the lack of filling plays no formal role.

\begin{figure}
  \centering\small\noindent
  \begin{prooftree}
    \AxiomC{$A:\omega$}
    \AxiomC{$f\in\Omega(\omega,\tau)$}
    \RightLabel{\customlabel{term:ext-gen-op}{(ext-gen-op)}\;\;}
    \BinaryInfC{$\scalebox{.9}{\tikzfig{coarsen-sheet}} : (f(A):\tau \mid A:\omega)$}
  \end{prooftree}
  \begin{prooftree}
    \AxiomC{$\omega\in\Omega$}
    \RightLabel{\customlabel{term:counit}{(counit)}\;\;}
    \UnaryInfC{$\scalebox{.9}{\tikzfig{cap}} : (\varepsilon : \omega \mid \varepsilon : \varepsilon)$}
    \DisplayProof
    \AxiomC{$A:\omega$}
    \AxiomC{$B:\omega$}
    \RightLabel{\customlabel{term:comonoid}{(comonoid)}\;\;}
    \BinaryInfC{$\scalebox{.9}{\tikzfig{copants}} : (AB:\omega \mid A:\omega, B:\omega)$}
  \end{prooftree}
  \begin{prooftree}
    \AxiomC{$A:\omega$}
    \RightLabel{\customlabel{term:codiag-unit}{(codiag-unit)}\;\;}
    \UnaryInfC{$\scalebox{.9}{\tikzfig{a-cup}} : (\varepsilon : \varepsilon \mid A:\omega)$}
    \DisplayProof
    \AxiomC{$A:\omega$}
    \RightLabel{\customlabel{term:codiag}{(codiag)}\;\;}
    \UnaryInfC{$\scalebox{.9}{\tikzfig{pants-copy}} : (A:\omega, A:\omega \mid A:\omega)$}
  \end{prooftree}
  \caption{Rules for generating the fibrational terms\label{fig:fibrational-terms}}
\end{figure}

The structural fibrational equations are likewise obtained by dualising the structural opfibrational equations (Definition~\ref{def:str-opfib-eqns}). For the sake of completeness, and since they will be used for defining the structural deflational equations, we state the definition explicitly.
\begin{definition}[Structural fibrational equations]\label{def:str-fib-eqns}
By the {\em structural fibrational equations} we mean the structural equations, the external symmetry equations, as well as the following sets of 1-equations, where ``dual'' and ``dualising'' means reflecting the terms horizontally:
\begin{itemize}
\item the equations of uniform monoids (the dual of Definition~\ref{def:thy-univ-comonoids}) for the terms~\ref{term:codiag} and~\ref{term:codiag-unit},
\item the defining identities of monoidal categories obtained by dualising Figure~\ref{fig:structural-twocells-monoidal},
\item the defining identities of monoidal functors obtained by dualising Figure~\ref{fig:structural-twocells-functors-ext}.
\end{itemize}
\end{definition}
With these modifications, any properties of fibrational theories can be obtained by dualising the corresponding properties of opfibrational theories: all the diagrams are simply written in reverse direction. Their study does not, therefore, provide any novelty to the study of opfibrational theories. We, nonetheless, chose to briefly discuss them for two reasons. First, fibrational theories whose models have $\Fim(\cat X^{op})^{op}$ as the base provide string diagrams for monoids in $(\Fib_{\mathsf{sp}}(\cat X),\boxtimes,\one)$, and therefore, for functors $\cat X^{op}\rightarrow\MonCat_{\mathsf{st}}$, making an explicit connection with the fibrational version of Theorem~\ref{thm:moeller-and-vasilakopoulou} of Moeller and Vasilakopoulou. Second, next we shall define deflational theories and models as kinds of glueings of an opfibrational and a fibrational theories along the internal terms.

\subsection{Deflational theories}\label{subsec:deflational-theories}

We now arrive to our last -- and most important -- class of theories. Deflational theories can be thought of as glueing a pair of opfibrational and fibrational theories together along the internal terms.

\begin{definition}[Deflational layered theory]\label{def:deflational-theory}
A layered theory is {\em deflational} if its recursive sorting procedure consists of the rules for the symmetry terms of Definition~\ref{def:symmetry-terms}, as well as of the rules for opfibrational and fibrational terms in Figures~\ref{fig:opfibrational-terms} and~\ref{fig:fibrational-terms} (in addition to the basic terms in Figure~\ref{fig:layered-terms}).
\end{definition}
We denote the resulting category of deflational layered theories by $\DeflTh$.

In a deflational theory, every opfibrational term $x:(T\mid S)$ generated by a rule in Figure~\ref{fig:opfibrational-terms} has a corresponding fibrational term $\bar x:(S\mid T)$ generated by the corresponding rule in Figure~\ref{fig:fibrational-terms}. We use this to define the {\em structural 2-cells}, expressing $\bar x$ as the right adjoint to $x$.
\begin{definition}[Structural 2-cells]\label{def:str-twocells}
Given a deflational layered theory with layered signature $\mathcal L$, the {\em structural 2-cells} are given by the choice of 2-cells $\eta_{\mathsf{str}}$ defined as follows: for every opfibrational term $x:(T\mid S)$ generated by a rule in Figure~\ref{fig:opfibrational-terms} one has
\begin{align*}
\eta_{\mathsf{str}}\left(\id_{T},x;\bar x\right) &\coloneq\{\eta_x\} \\
\eta_{\mathsf{str}}\left(\bar x;x,\id_{S}\right) &\coloneq\{\varepsilon_x\},
\end{align*}
and $\eta_{\mathsf{str}}$ returns the empty set otherwise.
\end{definition}
We display the 10 structural 2-cells explicitly in Figure~\ref{fig:structural-twocells-adjoints}: the left column contains the unit 2-cells ($\eta_x$) and the right column contains the counit 2-cells ($\varepsilon_x$). We consistently omit the labels of the structural 2-cells, as the domain and the codomain determine them uniquely.

\begin{figure}
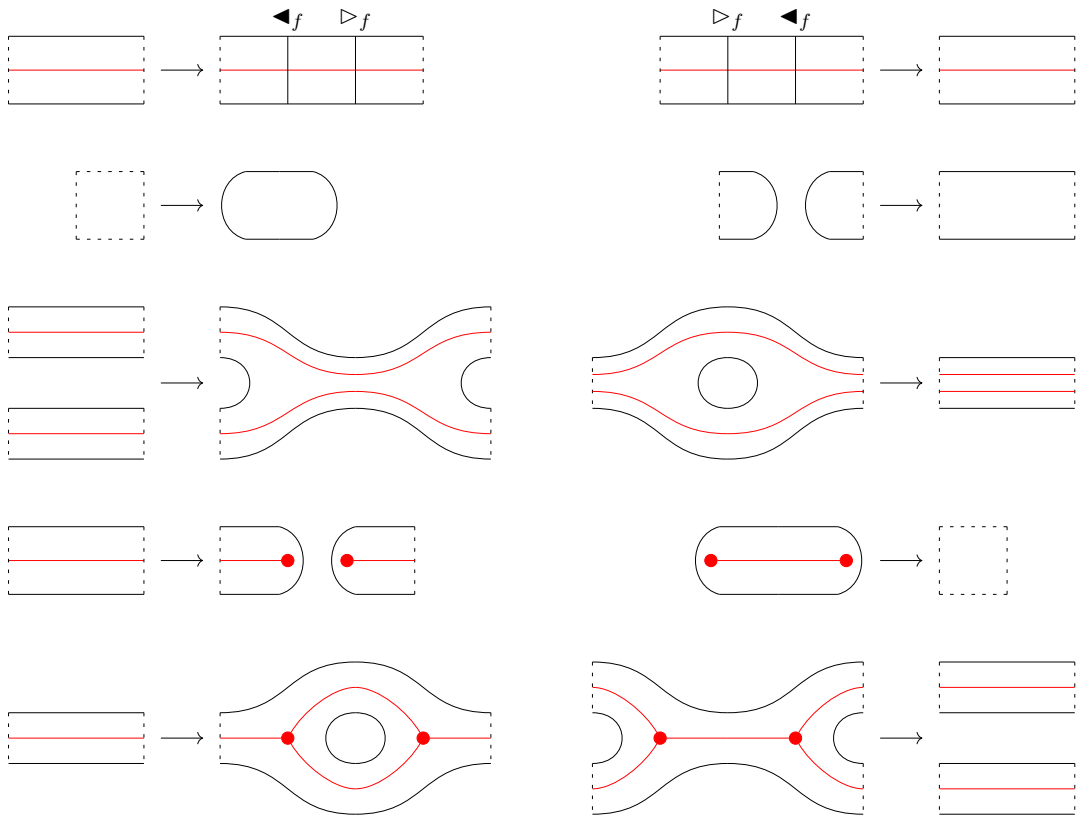

  \centering\small\noindent
  \scalebox{.9}{\tikzfig{structural-twocells-adjoints}}
  \caption{Structural 2-cells for deflational theories\label{fig:structural-twocells-adjoints}}
\end{figure}

\begin{definition}[Structural deflational equations]\label{def:str-defl-eqns}
By the {\em structural deflational equations} we mean: the structural opfibrational equations (Definition~\ref{def:str-opfib-eqns}), the structural fibrational equations (Definition~\ref{def:str-fib-eqns}) and the following 2-equations for each opfibrational term $x:(T\mid S)$ and all terms $y:(T\mid T)$ and $z:(S\mid S)$ that are either internal or a product (obtained by the~\ref{term:ext-tensor}) of internal terms:
\begin{align*}
\left(\eta_x*\id_x\right);\left(\id_x*\varepsilon_x\right) &=\id_x & \left(\id_{\bar x}*\eta_x\right);\left(\varepsilon_x*\id_{\bar x}\right) &=\id_{\bar x} \\
\id_y*\eta_x &= \eta_x*\id_y & \id_z*\varepsilon_x &= \varepsilon_x*\id_z.
\end{align*}
We denote the structural deflational {1-} and 2-equations by $S_{\mathsf{defl}}^1$ and $S_{\mathsf{defl}}^2$.
\end{definition}
Note that the structural deflational equations contain, in particular, the structural {0-}, {1-} and 2-equations of Definitions~\ref{def:str-0eqns}, \ref{def:str-1eqns} and~\ref{def:str-2eqns}. The additional 2-equations that are required to hold are the usual zigzag equations defining an adjunction (see Definition~\ref{def:zigzag-category} for their diagrammatic depiction), as well as the equations stating the compatibility of the structural 2-cells with the ``sliding through'' 1-equations for internal terms in Figures~\ref{fig:structural-twocells-monoidal} and~\ref{fig:structural-twocells-functors-ext}.

Given a deflational theory $\mathcal T$, let us write $\mathcal T_{\varepsilon}$ for the theory obtained from $\mathcal T$ by adding a section to each counit 2-cell in Figure~\ref{fig:structural-twocells-adjoints}, i.e.~$\kappa$ such that $\kappa;\varepsilon=\id$. While $\mathcal T$ and $\mathcal T_{\varepsilon}$ are, in general, not equivalent as deflational theories, in~\ref{subsec:defl-th-opin-moncat} we will show that they have the same (op)indexed monoidal categories as models. Since these are the main models of interest, in Chapters~\ref{ch:functor-boxes} and~\ref{ch:layered-examples} we will, therefore, make the assumption that the counits in Figure~\ref{fig:structural-twocells-adjoints} have sections $\kappa$, as this will allow us to derive more properties within the formalism of deflational theories.

\chapter{Functor boxes and coboxes}\label{ch:functor-boxes}
{\em Functor boxes} (or {\em functorial boxes}) extend the graphical syntax for monoidal categories (i.e.~string diagrams) to monoidal functors. Given a monoidal functor $F:\cat C\rightarrow\cat D$, the idea is to draw a morphism $f:A\rightarrow B$ in $\cat C$ inside a box:
\begin{center}
\tikzfig{functor-box}
\end{center}
which turns it into a morphism of type $FA\rightarrow FB$ in $\cat D$. Thus, one may treat such boxes as terms composable with morphisms in $\cat D$. This idea extends to (co)lax monoidal functors. We refer the reader to the tutorial by Melli\` es~\cite{functorial-boxes} for the details.

In this chapter, we show that functor boxes arise naturally (and formally) within deflational theories. We will further observe that a box decomposes into the functor boundaries. Within our formalism, a functor box is, therefore, a derived notion rather than a primitive one. This flexibility allows for defining the dual notion, which we dub a {\em cobox} (Definition~\ref{def:cobox}), by composing the refinement and coarsening maps in the opposite order.

While a box can be thought of as acting on its interior, a cobox can be thought of as acting on its exterior. Another intuition is to think of a cobox as a {\em window}, allowing one to peak at the semantics (or at a finer level of granularity) of a part of a diagram.

This chapter also provides the first examples of reasoning {\em within} layered theories. This is in contrast with the previous chapter as well as Chapter~\ref{ch:semantics}, which focus on proving properties {\em of} layered theories.

\section{Functor boxes}\label{sec:functor-boxes}

In any (op)fibrational or deflational theory, boxes are readily provided by the basic terms generated by the rule~\ref{term:int-box}. In deflational theories, boxes decompose into an internal term sandwiched between a coarsening and a refinement map, so that the term can slide out through either side of the box. We make this observation precise in Proposition~\ref{prop:cowindow-box-equality}. We call a decomposed box a {\em cowindow}.

\begin{definition}[Cowindow]\label{def:cowindow}
In a deflational theory, we call the 1-cell of the following form a {\em cowindow}:
\begin{center}
\tikzfig{cowindow},
\end{center}
where $x$ is an internal term.
\end{definition}

The following proposition makes explicit the connection between cowindows and the usual notation for functor boxes.
\begin{proposition}\label{prop:cowindow-box-equality}
In a deflational theory, the following bidirectional 2-cells exist:
\begin{center}
\tikzfig{cowindow-box-equality}.
\end{center}
\end{proposition}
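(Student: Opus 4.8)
The plan is to exhibit the two 2-cells directly from the adjunction data $\refine_f\dashv\coarsen_f$ carried by a deflational theory, together with the naturality (``sliding'') 1-equation that turns $\refine_f$ into a monoidal-functor boundary. Fix an internal term $x:(A:\omega\mid B:\omega)$ and a generator $f\in\mathcal F(\omega,\tau)$, and write $b:(f(A):\tau\mid f(B):\tau)$ for the term produced by rule~\ref{term:int-box} from $x$ and $f$; thus the box is $b$ and the cowindow is $\coarsen_f;x;\refine_f$, both of sort $(f(A):\tau\mid f(B):\tau)$. The two ingredients I will use are: (i) the 1-equation $x;\refine_f=\refine_f;b$ from Figure~\ref{fig:structural-twocells-functors-ext} (so part of $S_{\mathsf{opf}}^1\subseteq S_{\mathsf{defl}}^1$), which lets an internal term slide through the refinement boundary; and (ii) the structural counit $\varepsilon_{\refine_f}:\coarsen_f;\refine_f\Rightarrow\id_{f(A):\tau}$ of Definition~\ref{def:str-twocells}, together with its section $\kappa_{\refine_f}:\id_{f(A):\tau}\Rightarrow\coarsen_f;\refine_f$, which is available under the standing assumption (stated at the end of Section~\ref{subsec:deflational-theories}) that we work in $\mathcal T_\varepsilon$.

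For the 2-cell from the cowindow to the box I first rewrite the cowindow by sliding: applying (i) to the subterm $x;\refine_f$ gives $\coarsen_f;x;\refine_f=\coarsen_f;\refine_f;b=(\coarsen_f;\refine_f);b$. Whiskering the counit on the right by $b$ then produces $\varepsilon_{\refine_f}*\id_b:(\coarsen_f;\refine_f);b\Rightarrow\id_{f(A):\tau};b=b$, i.e.\ a 2-cell from the cowindow to the box. Symmetrically, for the 2-cell from the box to the cowindow I whisker the section by $b$ to obtain $\kappa_{\refine_f}*\id_b:b=\id_{f(A):\tau};b\Rightarrow(\coarsen_f;\refine_f);b$, and read the target back as $\coarsen_f;x;\refine_f$ using (i) in the reverse direction. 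This furnishes the required pair of opposing 2-cells pictured in the figure.

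Finally, I would record that one composite is the identity: box $\Rightarrow$ cowindow $\Rightarrow$ box equals $(\kappa_{\refine_f}*\id_b);(\varepsilon_{\refine_f}*\id_b)$, which by the interchange law $(\alpha*\beta);(\gamma*\delta)=(\alpha;\gamma)*(\beta;\delta)$ of Figure~\ref{fig:str-2eqns} equals $(\kappa_{\refine_f};\varepsilon_{\refine_f})*\id_b=\id_{\id_{f(A):\tau}}*\id_b=\id_b$, using the section property $\kappa_{\refine_f};\varepsilon_{\refine_f}=\id$. Hence the box is a retract of the cowindow, which is the precise content carried by the ``bidirectional'' arrows. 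I expect the main difficulty to be bookkeeping rather than conceptual: one must keep the two occurrences of $\refine_f$ (over the objects $A$ and $B$) and the whiskering sides straight, and—crucially—must invoke the section $\kappa_{\refine_f}$, which exists only by virtue of the $\mathcal T_\varepsilon$ assumption; without it only the cowindow-to-box direction survives. Since every rewriting step is a genuine 1-equation and every composite is governed by the already-postulated structural 2-equations, no further coherence is required.
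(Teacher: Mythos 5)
Your proof is correct and takes essentially the same route as the paper's: slide the internal term $x$ through the functor boundary using the first equation of Figure~\ref{fig:structural-twocells-functors-ext}, then cancel the resulting $\coarsen_f;\refine_f$ pair with the structural counit $\varepsilon$ (whiskered by the boxed term) and, for the reverse direction, its section $\kappa$ supplied by the standing $\mathcal T_{\varepsilon}$ assumption. Your closing observation that one composite is the identity, exhibiting the box as a retract of the cowindow, is a correct strengthening beyond what the statement requires.
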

\begin{proof}
The internal term $x$ slides through $\refine_f$ (or $\coarsen_f$) by the first equation in Figure~\ref{fig:structural-twocells-functors-ext}, after which we remove the fibrational-opfibrational generator pair by applying the counit $\varepsilon$ (Figure~\ref{fig:structural-twocells-adjoints}), which has a section $\kappa$.
\end{proof}
Note that in Proposition~\ref{prop:cowindow-box-equality}, the term on the right-hand side is obtained by a rule for basic terms (Figure~\ref{fig:layered-terms}), while the left-hand side is a composition of three terms in a deflational theory. Thus, functor boxes correspond directly to the terms on the right-hand side, while the proposition demonstrates that we can take one step further and decompose a functor box into three constituent parts. This implies that the standard functor equalities (Figure~\ref{fig:structural-twocells-functors-int}) that had to be imposed in fibrational and opfibrational models are, in fact, provable in deflational theories:
\begin{corollary}
In a deflational theory, the internal identities for monoidal functors (Figure~\ref{fig:structural-twocells-functors-int}) are derivable from the external identities for monoidal functors (Figure~\ref{fig:structural-twocells-functors-ext}), up to a bidirectional 2-cell.
\end{corollary}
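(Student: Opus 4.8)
The plan is to transport each internal functor identity across the bidirectional 2-cell of Proposition~\ref{prop:cowindow-box-equality}, turning every box into a cowindow, and then to discharge the resulting cowindow equations using only external data: the sliding 1-equations and monoidal-functor identities of Figure~\ref{fig:structural-twocells-functors-ext}, the monoidal-category identities of Figure~\ref{fig:structural-twocells-monoidal}, and the adjunction 2-cells of Figure~\ref{fig:structural-twocells-adjoints} (the unit $\eta$, the counit $\varepsilon$, and the section $\kappa$ with $\kappa;\varepsilon=\id$). Concretely, for an internal term $x:(A:\omega\mid B:\omega)$ and $f\in\mathcal F(\omega,\tau)$, Proposition~\ref{prop:cowindow-box-equality} gives bidirectional 2-cells between the box $f(x)$ produced by~\ref{term:int-box} and the cowindow $\coarsen_f;x;\refine_f$. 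Since these 2-cells are compatible with composition and tensoring, it suffices to verify each identity of Figure~\ref{fig:structural-twocells-functors-int} after replacing every occurrence of a box by its cowindow.

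First I would treat preservation of internal identities: the cowindow of $f(\id_A)$ is $\coarsen_f;\id_A;\refine_f=\coarsen_f;\refine_f$, and the counit $\varepsilon:\coarsen_f;\refine_f\to\id_{f(A)}$ together with its section $\kappa$ supplies the required bidirectional 2-cell to $\id_{f(A)}$. Next, for preservation of composition, the composite of cowindows $\coarsen_f;x;\refine_f;\coarsen_f;y;\refine_f$ carries a central pair $\refine_f;\coarsen_f$; sliding $x$ through the boundary (the first equation of Figure~\ref{fig:structural-twocells-functors-ext}) exposes a $\coarsen_f;\refine_f$ pair which the counit $\varepsilon$ collapses, and a further slide lands on the single cowindow $\coarsen_f;x;y;\refine_f$. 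In the opposite direction, whiskering the unit $\eta:\id_B\to\refine_f;\coarsen_f$ into the single cowindow reproduces the composite, so we obtain 2-cells both ways, which is exactly the claimed equality up to a bidirectional 2-cell.

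The delicate case is preservation of the internal tensor~\ref{term:int-tensor} (and, similarly, of the internal unit). Here the box $f(x\otimes y)$ has cowindow $\coarsen_f;(x\otimes y);\refine_f$ formed at the type $f(A)f(C)\mid f(B)f(D)$, using the type-level equation $f(AC)=f(A)f(C)$, whereas $f(x)\otimes f(y)$ is an internal tensor of two cowindows. To connect them I would invoke the external monoidal-functor identities of Figure~\ref{fig:structural-twocells-functors-ext}, which record how $\refine_f$ and $\coarsen_f$ commute with the monoid multiplication and unit~\ref{term:monoid},~\ref{term:monoid-unit}, together with the equations of Figure~\ref{fig:structural-twocells-monoidal} that express the internal tensor in terms of these monoid terms and sliding. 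This is the step I expect to be the main obstacle: it requires mediating between two a priori distinct notions of product — the internal tensor~\ref{term:int-tensor} acting inside a layer and the external monoid~\ref{term:monoid} acting on cowindows — and tracking the coherence of $\refine_f,\coarsen_f$ against both, with the sorts kept aligned by $f(AC)=f(A)f(C)$. Once the tensor of cowindows has been rewritten through the monoid terms, the same slide-and-cancel pattern (sliding along Figure~\ref{fig:structural-twocells-functors-ext}, cancelling via $\varepsilon$, $\eta$ and $\kappa$) closes the argument, and assembling the cases yields every identity of Figure~\ref{fig:structural-twocells-functors-int} up to a bidirectional 2-cell.
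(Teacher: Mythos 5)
Your proposal is correct and takes essentially the same route as the paper: the paper likewise derives the internal identities by replacing each box with its cowindow via Proposition~\ref{prop:cowindow-box-equality} and then sliding along the external identities of Figure~\ref{fig:structural-twocells-functors-ext} and cancelling the $\coarsen_f;\refine_f$ pairs using the structural 2-cells $\varepsilon$, its section $\kappa$, and $\eta$. The paper only displays the monoidal-product derivation (precisely the case you flag as delicate, mediated through the monoid terms of Figure~\ref{fig:structural-twocells-monoidal}) and remarks that the other identities follow similarly, so your case-by-case treatment is simply a more explicit rendering of the same argument.
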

\begin{proof}
We give the derivation of preservation of monoidal products below; other identities follow similarly:
\begin{center}
\scalebox{.9}{\tikzfig{ext-to-int-monoidal-product}}.
\end{center}
\end{proof}

\begin{corollary}\label{cor:defl-twocells-extend}
In a deflational theory, the 2-cells recursively extend to all internal 1-terms.
\end{corollary}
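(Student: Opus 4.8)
The plan is to show that the two internal term-forming rules omitted from the 2-term calculus of Definition~\ref{def:2terms}, namely \ref{term:int-box} and \ref{term:int-tensor}, can be simulated on 2-cells using only the rules that are present (identities, vertical and horizontal composition, and the external tensor). Since internal composition \ref{term:comp} of internal terms is already realised on 2-cells by horizontal composition $*$, and the base internal terms \ref{term:int-unit}, \ref{term:int-id}, \ref{term:int-gen} carry only identities and generating 2-cells, an induction on the construction of an internal 1-term then produces an induced 2-cell for every internal 1-term; this is precisely what ``recursively extend'' asks for. Concretely, I would establish two facts: first, given a 2-cell $\alpha:(x,y)$ with $x,y:(A:\omega\mid B:\omega)$ internal and $f\in\mathcal F(\omega,\tau)$, there is an induced 2-cell between the \ref{term:int-box} terms $f(x)$ and $f(y)$; and second, given $\alpha:(x,y)$ and $\beta:(x',y')$, there is an induced 2-cell between the \ref{term:int-tensor} terms built from $x,x'$ and from $y,y'$.

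For the box case I would invoke the cowindow decomposition directly. By Proposition~\ref{prop:cowindow-box-equality} the term $f(x)$ is connected by bidirectional 2-cells to the cowindow $\coarsen_f;x;\refine_f$, and likewise $f(y)$ to $\coarsen_f;y;\refine_f$. The horizontal composite $\id_{\coarsen_f}*\alpha*\id_{\refine_f}$ is a 2-cell from $\coarsen_f;x;\refine_f$ to $\coarsen_f;y;\refine_f$ (the sorts compose, since $\coarsen_f$ ends in $A:\omega$ and $\refine_f$ begins in $B:\omega$). Composing this vertically, on both sides, with the bidirectional 2-cells of Proposition~\ref{prop:cowindow-box-equality} (which are themselves built from the counit $\varepsilon$ and its section $\kappa$) yields the desired 2-cell $f(x)\Rightarrow f(y)$.

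For the tensor case I would first prove the analogue of Proposition~\ref{prop:cowindow-box-equality}: writing $p$ for the \ref{term:int-tensor} term of $x,x'$ (of sort $(AC:\omega\mid BD:\omega)$), $d$ for the \ref{term:comonoid} term on $A,C$, $m$ for the \ref{term:monoid} term on $B,D$, and $m'$ for the monoid on $A,C$, I claim $p$ is connected by bidirectional 2-cells to $d;(x\otimes x');m$, where $x\otimes x'$ is now the \emph{external} tensor \ref{term:ext-tensor}. The ingredient is the naturality equation of Figure~\ref{fig:structural-twocells-monoidal}, i.e.\ the 1-equation $m';p=(x\otimes x');m$ stating that sliding the external tensor through the monoid yields the internal tensor, together with the fact that $d;m'$ is a retract of the identity, since the counit $\varepsilon$ of Figure~\ref{fig:structural-twocells-adjoints} has a section $\kappa$ (Definition~\ref{def:str-defl-eqns}). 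Concretely $\kappa*\id_p$ gives a 2-cell $p\Rightarrow d;m';p=d;(x\otimes x');m$, while $\varepsilon*\id_p$ gives the converse. With this decomposition established, the horizontal composite $\id_d*(\alpha\otimes\beta)*\id_m$ is a 2-cell between the two external representations, and transporting along the bidirectional 2-cells exactly as in the box case yields the required 2-cell between the internal tensors.

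A routine remaining check is that the induced 2-cells are compatible with the existing identities and with vertical, horizontal and external-tensor composition, so that the recursion is coherent; this follows from the zigzag and interchange equations among the structural deflational 2-equations (Definition~\ref{def:str-defl-eqns}). I expect the main obstacle to be the tensor case, specifically the decomposition $p\cong d;(x\otimes x');m$: unlike the box case it is not already packaged as a proposition, and one must be careful because $d;m'$ is only a retract of (not isomorphic to) the identity, so the directions of $\kappa$ and $\varepsilon$ must be tracked precisely, and one must confirm that the naturality equation of Figure~\ref{fig:structural-twocells-monoidal} is available for the internal tensor term itself rather than only for single internal morphisms.
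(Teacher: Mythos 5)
Your proposal is correct and follows essentially the same route as the paper: the box case is handled exactly via the cowindow decomposition of Proposition~\ref{prop:cowindow-box-equality} (sandwiching $\alpha$ horizontally between $\coarsen_f$ and $\refine_f$ and transporting back along the counit $\varepsilon$ and its section $\kappa$), and the tensor case via the monoid/comonoid decomposition using the counit for the monoid and its section, which is precisely what the paper's (much terser) proof indicates. Your worry about the tensor case is resolved as you suspected: the sliding equations of Figure~\ref{fig:structural-twocells-monoidal} hold for all internal terms, of which internal tensors are instances, so your decomposition $p\rightleftarrows d;(x\otimes x');m$ goes through.
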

\begin{proof}
A 2-cell between internal terms $\alpha:x\rightarrow y$ with sort $(A:\omega\mid B:\omega)$ extends to
\begin{center}
\scalebox{.9}{\tikzfig{twocells-extend-internal-terms}}
\end{center}
for any $f\in\mathcal F(\omega,\tau)$. The extension to internal monoidal terms is similar, and uses the counit for the monoid (and its section).
\end{proof}

\section{Functor coboxes}\label{sec:functor-coboxes}

In the definition of a cowindow (Definition~\ref{def:cowindow}), there was no inherent reason to start with coarsening and end with refinement. The reason we chose this order of composition is to obtain the usual functor box: the internal term $x$ is in the domain of $f$, while the whole term is in the codomain. Reversing the order of composition leads to the dual notion, a {\em cobox}, which, to the best knowledge of the author, has hitherto not been studied abstractly, and has only appeared as a notational shorthand in string diagrammatic electrical circuit theory~\cite{electrical-circuits,boisseau-thesis}, which we make explicit as one of our case studies in Section~\ref{sec:elec-circuits}.

Unlike a box, a cobox does not correspond to any internal term, that is, there is no analogue of Proposition~\ref{prop:cowindow-box-equality} for coboxes. Nonetheless, in many respects a cobox behaves as if it was an internal term, as we demonstrate in Proposition~\ref{prop:cobox-slides}. To do that, we need to separate the notion of a window (Definition~\ref{def:window}) from that of a cobox (Definition~\ref{def:cobox}), which allows some wires to bypass the window.

\begin{definition}[Window]\label{def:window}
In a deflational model, we call the 1-cell of the following form a {\em window}:
\begin{center}
\tikzfig{window},
\end{center}
where $x$ is an internal term.
\end{definition}

A the level of syntax, a cobox is obtained by dualising the rule~\ref{term:int-box} for generating boxes. We then impose a 1-equation expressing the cobox as an existing term.
\begin{definition}[Cobox]\label{def:cobox}
Given a deflational theory, a {\em cobox} is a term generated by the rule
\begin{center}
  \begin{prooftree}
    \AxiomC{$f\in\mathcal F(\omega,\tau)$}
    \AxiomC{$A,B,C,D:\omega$}
    \AxiomC{$\scalebox{.9}{\tikzfig{internalxdiag}} : (f(A):\tau\mid f(B):\tau)$}
    \RightLabel{\customlabel{term:cobox}{(cobox)}\;\;}
    \TrinaryInfC{$\scalebox{.9}{\tikzfig{f-cobox}} : (CAD:\omega \mid CBD:\omega)$}
  \end{prooftree}
\end{center}
subject to the following 1-equations:
\begin{center}
\tikzfig{cobox-def}.
\end{center}
\end{definition}
\begin{definition}[Symmetric deflational theory]
A deflational theory is {\em symmetric} if it contains the 1-equations for a symmetric monoidal theory (Definition~\ref{def:symm-mon-thy}) for all internal terms for all layers, as well as the following isomorphic generating 2-cells:
\begin{center}
\tikzfig{symmetric-layered-theory},
\end{center}
together with the dual 2-cells obtained by horizontally reflecting the terms above.
\end{definition}
\begin{proposition}\label{prop:cobox-slides}
In a symmetric deflational theory, the following equations hold up to bidirectional 2-cells:
\begin{center}
\scalebox{.9}{\tikzfig{cobox-naturality}}.
\end{center}
\end{proposition}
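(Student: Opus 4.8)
The plan is to reduce every sliding equation for the cobox to the already-established sliding laws for its constituent pieces. By the defining 1-equations of Definition~\ref{def:cobox}, a cobox unfolds into a refinement boundary $\refine_f$, the internal term $x$, and a coarsening boundary $\coarsen_f$, with the bypass wires $C$ and $D$ running alongside. So I would first rewrite both sides of each equation in \tikzfig{cobox-naturality} by replacing every cobox with this threefold unfolding. This moves the whole computation into the setting of opfibrational and fibrational generators together with internal terms, where the relevant laws are all available: the internal term $x$ slides through $\refine_f$ and $\coarsen_f$ by the identities of Figure~\ref{fig:structural-twocells-functors-ext}; the boundaries commute with the copy/delete comonoid terms~\ref{term:diag},~\ref{term:diag-counit} (and dually the monoid terms~\ref{term:monoid},~\ref{term:monoid-unit}) by the uniform (co)monoid equations of Definitions~\ref{def:str-opfib-eqns} and~\ref{def:str-fib-eqns}; and the bypass wires cause no trouble because the comonoid is extended functorially to arbitrary words (Definition~\ref{def:thy-univ-comonoids}).

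The crucial difference from the box case is that a cobox, unlike a box (Proposition~\ref{prop:cowindow-box-equality}), has \emph{no} internal-term representative, so the two sides of each equation will not be equal on the nose; the discrepancy must be mediated by the adjunction data. Accordingly, the second step is to track the units $\eta_x$ and counits $\varepsilon_x$ of Figure~\ref{fig:structural-twocells-adjoints} introduced and cancelled along the way. Here the compatibility 2-equations of Definition~\ref{def:str-defl-eqns}, namely $\id_y*\eta_x = \eta_x*\id_y$ and $\id_z*\varepsilon_x = \varepsilon_x*\id_z$, are exactly what let the units and counits pass through the internal and bypass structure, and the zigzag laws let me collapse the units and counits that appear in cancelling pairs. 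To upgrade the resulting one-directional 2-cell into a \emph{bidirectional} one, I would use the section $\kappa$ of the counit available in $\mathcal T_{\varepsilon}$ (exactly as in the proof of Proposition~\ref{prop:cowindow-box-equality}), which supplies the reverse 2-cell. The symmetry 2-cells of the symmetric deflational theory are needed whenever the sliding crosses a braiding, in order to reorder the bypass wires $C,D$ around $A,B$.

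The step I expect to be the main obstacle is controlling the interaction between the copy/delete structure and the functor boundaries in the presence of the bypass wires. When a cobox slides through a~\ref{term:diag} copy, each copy must duplicate the entire $\refine_f;x;\coarsen_f$ sandwich, and one must verify that duplicating $x$ before refinement agrees, up to the adjunction 2-cells, with duplicating the refined term afterwards. This is a bialgebra-style coherence between the comonoid and the functor boundary (cf.~Proposition~\ref{prop:indexed-bialgebra}) fused with the zigzag equations; the delicate bookkeeping is checking that the two directions of the 2-cell genuinely compose to identities once $\kappa$ and the zigzag laws are inserted, rather than merely existing. The remaining cases — sliding through delete~\ref{term:diag-counit}, and the dual monoid cases — then follow by the same template or by horizontal reflection, since the fibrational terms are defined precisely as the horizontal reflections of the opfibrational ones.
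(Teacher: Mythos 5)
The paper states Proposition~\ref{prop:cobox-slides} without proof, so there is no official argument to compare against line by line; judged against the machinery the paper actually provides, your reconstruction is essentially the intended one. Unfolding the cobox via the defining 1-equations of Definition~\ref{def:cobox} into the $\refine_f;x;\coarsen_f$ window flanked by the bypass wires, sliding internal terms with the structural opfibrational and fibrational 1-equations (Figures~\ref{fig:structural-twocells-monoidal} and~\ref{fig:structural-twocells-functors-ext} and their duals), mediating the leftover $\refine_f;\coarsen_f$ and $\coarsen_f;\refine_f$ pairs with the structural 2-cells of Figure~\ref{fig:structural-twocells-adjoints} together with the compatibility equations of Definition~\ref{def:str-defl-eqns} and the zigzag laws, and invoking the section $\kappa$ (in force throughout Chapter~\ref{ch:functor-boxes}, exactly as in Proposition~\ref{prop:cowindow-box-equality}) to reverse the resulting 2-cells, plus the generating symmetry 2-cells wherever a braiding crosses a functor boundary --- this is the correct and, as far as the paper's toolkit goes, the only available route.

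One substantive correction: the step you single out as the main obstacle is not actually where the difficulty lies. Naturality of the copy/delete terms~\ref{term:diag} and~\ref{term:diag-counit} with respect to \emph{all} terms --- the cobox included --- is imposed outright by the uniform-comonoid clause of Definition~\ref{def:str-opfib-eqns} (inherited into the structural deflational equations), so sliding the whole $\refine_f;x;\coarsen_f$ sandwich through a copy is an on-the-nose structural 1-equation requiring no bialgebra-style coherence; your appeal to Proposition~\ref{prop:indexed-bialgebra} is misplaced in any case, since that result concerns the \emph{internal} monoid--comonoid interaction in a monoidal theory with indexed monoids, not the external branching terms. The genuinely 2-cell-mediated content of the proposition sits precisely where you locate it in your second paragraph: in the creation and cancellation of boundary pairs via $\eta_x$, $\varepsilon_x$ and $\kappa$, and in the symmetry 2-cells needed to reorder the bypass wires past the boundaries --- which is also why the hypothesis of a \emph{symmetric} deflational theory cannot be dropped. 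With the misdiagnosed obstacle removed, the remaining bookkeeping you describe (checking that the two directions compose to identities after inserting $\kappa$ and the zigzag equations) is exactly the verification the proposition requires.
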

Note that a cobox is {\em not} an internal term. However, we may take ``fibrewise monoidal products'' with the cobox, as well as ``apply the box to the cobox'' via the following identifications:
\begin{center}
\scalebox{1}{\tikzfig{cobox-products-box}}.
\end{center}
With Proposition~\ref{prop:cobox-slides} and the above identifications, the cobox may, in most cases, be treated as an internal term. One should, however, be aware that in reality it is not: for example, the cobox is, in general, not in the fibre of its domain category. One should also beware that while some identities that hold for boxes also hold for coboxes, in general, the box identities will not hold. We give some examples of this below.
\begin{proposition}
The following 2-cells are derivable in any deflational theory:
\begin{center}
\scalebox{1}{\tikzfig{cobox-derivable-twocells}}.
\end{center}
There is, in general, no 2-cell in the other direction.
\end{proposition}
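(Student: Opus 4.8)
The plan is to unfold every cobox through its defining $1$-equations (Definition~\ref{def:cobox}), rewriting the cobox on an internal term $x$ as the composite $\refine_f ; x ; \coarsen_f$, and then to read the two asserted comparison $2$-cells directly off the structural unit and counit of the adjunction $\refine_f\dashv\coarsen_f$ recorded in Figure~\ref{fig:structural-twocells-adjoints}. These are exactly the comparisons witnessing that a cobox, unlike a genuine box (which is \emph{strictly} functorial by Proposition~\ref{prop:cowindow-box-equality}), is only lax. For the identity comparison, the cobox on an identity unfolds to $\refine_f;\coarsen_f$, an endo-$1$-cell on its domain $A:\omega$, and the structural unit $\eta_{\refine_f}:\id\to\refine_f;\coarsen_f$ is precisely a $2$-cell from the identity to this cobox. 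For the composition comparison I would horizontally compose two coboxes to obtain $\refine_f;x;(\coarsen_f;\refine_f);y;\coarsen_f$, whisker the structural counit $\varepsilon_{\refine_f}:\coarsen_f;\refine_f\to\id$ by the identities on the flanking terms, and reassociate using the structural $2$-equations (Definition~\ref{def:str-defl-eqns}); this collapses the central $\coarsen_f;\refine_f$ and leaves $\refine_f;(x;y);\coarsen_f$, the cobox on the composite. Both derivations use only the structural $2$-cells, the structural $2$-equations, and the sliding identities for internal terms (Figure~\ref{fig:structural-twocells-functors-ext}), so they hold in any deflational theory; this positive half is routine whiskering and I would present it diagrammatically.

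The content of the statement is the negative half, where a purely syntactic manipulation will not do: one must exclude \emph{any} $2$-term of the reverse sort, so I would argue by soundness of interpretation into a concrete deflational model. A reverse $2$-cell to the identity comparison would be a $2$-cell $\refine_f;\coarsen_f\to\id$, i.e.\ a retraction of the unit $\eta_{\refine_f}$. Crucially this is \emph{not} the counit $\varepsilon_{\refine_f}$, whose composite $\coarsen_f;\refine_f$ occurs in the opposite order, nor is it supplied by a section $\kappa$ of the counit, so it is not forced by any structural $2$-equation. To see that it genuinely fails I would exhibit a deflational model in which $\refine_f$ and $\coarsen_f$ interpret as a free--forgetful adjunction for a nontrivial monad $T$ on a fibre --- concretely the free-monoid (list) monad on $\Set$ --- so that the cobox on the identity interprets as $T$ and the identity comparison as the monad unit. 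Since there is no natural transformation $T\Rightarrow\id$ at all (on the empty set $T$ carries a point while $\id$ does not), and since the interpretation sends every $2$-term to an actual $2$-cell of the model, no reverse $2$-term can exist in the theory.

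The main obstacle is producing this witness model rather than merely its underlying adjunction: one must check that the chosen interpretation validates \emph{all} the structural deflational equations of Definition~\ref{def:str-defl-eqns} --- the zigzag laws, the monoidal- and functor-compatibility $1$-equations, and, in the section-enriched setting of this chapter, the section $\kappa$ --- and not just the two adjunction triangles. I would stress that the obstruction lies with the unit and not the counit, so the failure persists even once the counits are assumed to have sections; this is exactly what singles out the identity comparison (rather than the composition comparison, which does reverse via $\kappa$) as the robust example of a box identity that fails to reverse for coboxes.
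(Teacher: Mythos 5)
Your positive half is correct and is essentially the intended argument: the paper states this proposition without spelling out a proof, and the derivation is exactly the whiskering you describe. You unfold the cobox through its defining 1-equations into a window $\refine_f;(\,\cdot\,);\coarsen_f$ (with the bypassing wires handled as identity padding), obtain the identity comparison from the structural unit $\eta$ of Definition~\ref{def:str-twocells}, and obtain the composition comparison by whiskering the structural counit $\varepsilon$ to collapse the central $\coarsen_f;\refine_f$; the sliding 2-equations of Definition~\ref{def:str-defl-eqns} are what makes the whiskering legitimate, and since only the structural cells are used, both derivations do hold in any deflational theory, as claimed. Your diagnosis of the asymmetry is also the one the surrounding text confirms: a reverse cell for the identity comparison would be a retraction of $\eta$ (not $\varepsilon$, and not supplied by $\kappa$), and Proposition~\ref{prop:cobox-iff-faithful} shows its existence is equivalent to a faithfulness property; the composition comparison, by contrast, does reverse once the counits acquire sections. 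Your worry about whether $\kappa$ survives in concrete models is, incidentally, already settled in the paper: by Remark~\ref{rem:counit-surjective-twocells} and Subsection~\ref{subsec:defl-th-opin-moncat}, the relevant counit components in collage models always have sections.

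There is, however, a concrete gap in your negative half: the witness model as you state it does not type-check as a deflational model. In such a model the interpretation of $f$ is the reindexing functor of a strict opindexed monoidal category, so it must be a strict monoidal functor between the fibres; the free-monoid functor on $\Set$ is not (even strongly) monoidal for the cartesian fibrewise structures, so ``$\refine_f$ and $\coarsen_f$ interpret as the free--forgetful adjunction of the list monad on a fibre'' cannot be realised literally --- this is precisely the verification you flagged as the main obstacle but did not carry out, and it genuinely fails for your chosen interpretation. The repair is to push the detection from the fibres down to the base. Take any \emph{minimal} split monoidal deflation (Definition~\ref{def:minimal-deflation}, e.g.\ the one produced by Theorem~\ref{thm:monoidal-deflations-indexed-monoidal}); minimality means every 2-cell of the total 2-category is a lifting of a 2-cell of $\Zg(\Fim(\cat X))$, so a reverse 2-term would force a 2-cell $f;\overline{f}\rightarrow\id_{\omega}$ in the zigzag 2-category (Definition~\ref{def:zigzag-category}). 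That hom is empty: on a single generating arrow, $\Zg$ is the free adjunction, and the 2-functor into $\Cat$ classified by the list-monad adjunction --- which requires no monoidality whatsoever at this level --- would send such a cell to a natural transformation $T\Rightarrow\mathrm{Id}$ on $\Set$, impossible by evaluating at the empty set. This is exactly your $T\Rightarrow\mathrm{Id}$ argument, relocated to the level where it applies; combined with soundness via the free--forgetful adjunction (Corollary~\ref{cor:deflational-completeness}), it rules out the reverse 2-terms in a general deflational theory, and for the identity comparison even in the presence of the sections $\kappa$, as you correctly emphasised.
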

\begin{proposition}
The following bidirectional 2-cell is derivable in any deflational theory:
\begin{center}
\scalebox{1}{\tikzfig{cobox-composition}}.
\end{center}
\end{proposition}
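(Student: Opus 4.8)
The plan is to reduce the statement to the behaviour of a single coarsening--refinement pair, exactly as in the proof of Proposition~\ref{prop:cowindow-box-equality}. First I would unfold both coboxes via their defining $1$-equations (Definition~\ref{def:cobox}): writing $x:(f(A):\tau\mid f(B):\tau)$ and $y:(f(B):\tau\mid f(C'):\tau)$ for the two composable internal terms sharing the functor $f$ and the bypass types $C,D:\omega$, each cobox is a composite that refines into the layer $\tau$, runs the internal term there, and coarsens back out, with the $C$ and $D$ strands bypassing. Composing the cobox of $x$ with the cobox of $y$ therefore produces a term in which a coarsening $\coarsen_f$ is immediately followed by a refinement $\refine_f$ on the $f(B):\tau$ strand, i.e.\ precisely the pattern $\coarsen_f;\refine_f$ that the adjunction is designed to collapse.

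For the forward $2$-cell I would apply the counit $\varepsilon$ of Figure~\ref{fig:structural-twocells-adjoints} to this $\coarsen_f;\refine_f$ subterm, rewriting it to the identity on $f(B):\tau$; whiskering by identities on the remaining wires (via the structural $2$-equations of Definition~\ref{def:str-2eqns}) turns the composite of the two coboxes into $\refine_f;x;y;\coarsen_f$ with bypass, which is the single cobox of the composite $x;y$. For the reverse direction I would use the assumed section $\kappa$ of $\varepsilon$ (the standing assumption recorded after Definition~\ref{def:str-defl-eqns}), whose $2$-cell $\id\to\coarsen_f;\refine_f$ reinserts the collapsed pair and recovers the composite of coboxes from the single one. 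This yields $2$-cells in both directions, which is what ``bidirectional'' means here, and matches the mechanism already used in Proposition~\ref{prop:cowindow-box-equality}; since $\kappa;\varepsilon=\id$, one of the two round-trips is moreover the identity.

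The main obstacle is the bookkeeping around the bypass wires $C$ and $D$. Because the cobox has internal sort $(CAD:\omega\mid CBD:\omega)$ rather than an external product, the $C,D$ strands are concatenated with the active strand inside a single layer, so collapsing the middle $\coarsen_f;\refine_f$ must be shown to act only on the active strand while leaving $C,D$ untouched. I expect this to follow from the defining $1$-equations of the cobox together with the compatibility $2$-equations $\id_y*\eta_x=\eta_x*\id_y$ and $\id_z*\varepsilon_x=\varepsilon_x*\id_z$ of Definition~\ref{def:str-defl-eqns}, which let the structural unit and counit slide past the internal and bypass terms; the remaining work is the routine verification that the whiskerings compose correctly under the interchange law (the last three equations of Figure~\ref{fig:str-2eqns}).
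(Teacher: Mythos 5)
Your central mechanism --- contract a $\coarsen_f;\refine_f$ pair with the counit $\varepsilon$ of Figure~\ref{fig:structural-twocells-adjoints} and re-expand it with its section $\kappa$, exactly as in Proposition~\ref{prop:cowindow-box-equality} --- is the right engine, and for the proposition as actually stated it is essentially the whole proof: with no bypass wires a cobox is just the window $\refine_f;x;\coarsen_f$, composing two of them exposes a bare $\coarsen_f;\refine_f$ on the $f(B):\tau$ strand, $\varepsilon$ contracts it, $\kappa$ expands it, and $\kappa;\varepsilon=\id$ makes one round-trip the identity. The gap is that you set out to prove the statement \emph{with} the bypass types $C,D$ present, and there the key step fails. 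The sort of a cobox is $(CAD:\omega\mid CBD:\omega)$ --- a single internal product, not an external one --- so the defining 1-equations of Definition~\ref{def:cobox} must first split $CAD$ using the comonoid terms~\ref{term:comonoid} before refining the active strand, and re-merge $C,B,D$ using the monoid terms~\ref{term:monoid} after coarsening. Consequently the composite of two coboxes does \emph{not} contain $\coarsen_f;\refine_f$ back-to-back: between them sits the merge-then-split composite (a \ref{term:monoid} term followed by a \ref{term:comonoid} term) on the strands $C,B,D$. Collapsing that pair would require a 2-cell from merge-then-split to the identity, and the theory provides none: for the adjunction generated by~\ref{term:monoid} and~\ref{term:comonoid} the structural cells are $\eta:\id\rightarrow(\text{merge};\text{split})$ and $\varepsilon:(\text{split};\text{merge})\rightarrow\id$, and the sections $\kappa$ assumed after Definition~\ref{def:str-defl-eqns} are sections of the counits only. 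With context one therefore obtains a 2-cell from the single cobox of $x;y$ \emph{into} the composite of coboxes (insert $\kappa$, then a whiskered $\eta$), but not the reverse, so bidirectionality is lost.

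This is not an accident of your method but the point of the sentence immediately following the proposition: ``since the cobox is not an internal term, this equality does not imply the general composition of coboxes (with the contextual wires present).'' Your reading of the statement overshoots --- the proposition concerns coboxes occupying the entire boundary, and the contextual version you argue for is precisely the claim being disavowed. Restricted to empty $C$ and $D$, your argument is correct and is evidently the intended one; as written, however, the assertion that ``composing the cobox of $x$ with the cobox of $y$ produces a term in which a coarsening $\coarsen_f$ is immediately followed by a refinement $\refine_f$'' is false for nonempty bypass strands, and the compatibility equations $\id_y*\eta_x=\eta_x*\id_y$ and $\id_z*\varepsilon_x=\varepsilon_x*\id_z$ that you invoke only let units and counits slide past internal terms --- they cannot manufacture the missing cell out of merge-then-split.
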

Note that since the cobox is not an internal term, this equality does not imply the general composition of coboxes (with the contextual wires present).

The following is the most important property of coboxes we will use: it allows to detect when the functor the cobox is representing is faithful (up to bidirectional 2-cells).
\begin{proposition}\label{prop:cobox-iff-faithful}
There are bidirectional 2-cells
\begin{equation}\label{eq:cobox-pres-id}
\scalebox{1}{\tikzfig{cobox-pres-id}}
\end{equation}
in a deflational theory if and only if for all terms $x$ and $y$ with the sort $(A:\omega\mid B:\omega)$, the existence of 2-cells on the right implies the existence of 2-cells on the left:
\begin{equation}\label{eq:box-term-faithfulness}
\scalebox{1}{\tikzfig{box-term-faithfulness}}.
\end{equation}
Moreover, if~\eqref{eq:cobox-pres-id} is an equality, then the implication~\eqref{eq:box-term-faithfulness} holds also when the bidirectional 2-cells on both sides are replaced with equalities (the converse does not hold in general).
\end{proposition}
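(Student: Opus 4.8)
The plan is to unfold both sides of the claimed equivalence into statements about the adjoint pair $\refine_f\dashv\coarsen_f$ supplied by the structural $2$-cells (Figure~\ref{fig:structural-twocells-adjoints}), and then to prove the biconditional by direct $2$-cell manipulation. First I would observe that, by the defining $1$-equations of a cobox (Definition~\ref{def:cobox}), the cobox of the identity internal term is exactly the composite $\refine_f;\coarsen_f$, so that the bidirectional $2$-cell~\eqref{eq:cobox-pres-id} asserts the existence of $2$-cells in both directions between $\refine_f;\coarsen_f$ and $\id$. One of these, the unit $\eta\colon\id\Rightarrow\refine_f;\coarsen_f$, is always present as a structural $2$-cell; hence the genuine content of~\eqref{eq:cobox-pres-id} is the reverse $2$-cell $\theta\colon\refine_f;\coarsen_f\Rightarrow\id$. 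Dually, I read the ``box of $x$'' appearing on the right of~\eqref{eq:box-term-faithfulness} as the cowindow $\coarsen_f;x;\refine_f$ (Definition~\ref{def:cowindow}), which is defined for an \emph{arbitrary} term $x\colon(A:\omega\mid B:\omega)$, not merely an internal one, and which agrees with the functor box $f(x)$ for internal $x$ by Proposition~\ref{prop:cowindow-box-equality}. This reading is precisely what makes the external composite $\refine_f;\coarsen_f$ an admissible instance of $x$, and it is the crux of the backward direction.

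For the forward implication, suppose $\theta$ exists and let $\phi\colon\coarsen_f;x;\refine_f\Rightarrow\coarsen_f;y;\refine_f$ be a $2$-cell between the boxes. Whiskering $\phi$ on the left by $\refine_f$ and on the right by $\coarsen_f$ produces a $2$-cell from $(\refine_f;\coarsen_f);x;(\refine_f;\coarsen_f)$ to $(\refine_f;\coarsen_f);y;(\refine_f;\coarsen_f)$. Pre-composing with the $2$-cell $x\Rightarrow(\refine_f;\coarsen_f);x;(\refine_f;\coarsen_f)$ built from two copies of the unit $\eta$, and post-composing with the $2$-cell $(\refine_f;\coarsen_f);y;(\refine_f;\coarsen_f)\Rightarrow y$ built from two copies of $\theta$, yields the desired $2$-cell $x\Rightarrow y$. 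The only facts used here are horizontal/vertical interchange and the structural status of $\eta$, so this half is routine once the set-up is in place.

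For the backward implication I would apply the reflection hypothesis~\eqref{eq:box-term-faithfulness} to the specific pair $x\coloneq\refine_f;\coarsen_f$ and $y\coloneq\id_A$. Their boxes are $\coarsen_f;\refine_f;\coarsen_f;\refine_f$ and $\coarsen_f;\refine_f$, and whiskering the counit $\varepsilon\colon\coarsen_f;\refine_f\Rightarrow\id$ by $\coarsen_f;\refine_f$ exhibits a $2$-cell between them in the correct direction. The reflection hypothesis then returns a $2$-cell $\refine_f;\coarsen_f\Rightarrow\id$, which is exactly the missing $\theta$; together with the structural unit this gives~\eqref{eq:cobox-pres-id}. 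The main obstacle, and the step I would be most careful about, is precisely this choice: one must verify that the faithfulness condition is quantified over all terms (so that the non-internal composite $\refine_f;\coarsen_f$ is permitted as an argument) and that the counit supplies the $2$-cell in the direction that reflection can consume.

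Finally, for the ``moreover'' clause I would argue that if~\eqref{eq:cobox-pres-id} is a genuine equality $\refine_f;\coarsen_f=\id$, then every term is recovered on the nose from its box via $x=\refine_f;(\coarsen_f;x;\refine_f);\coarsen_f$, since the two outer copies of $\refine_f;\coarsen_f$ collapse to identities. Hence equal boxes force $x=y$, giving the equality version of~\eqref{eq:box-term-faithfulness}; the converse fails because the reflection property constrains $\refine_f;\coarsen_f$ only up to a pair of $2$-cells and not up to equality.
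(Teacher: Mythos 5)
Your proof is correct and is essentially the paper's own argument: the forward direction is the same whisker-and-cancel computation (sandwich the hypothesised 2-cell between $\refine_f$ and $\coarsen_f$, then collapse the outer $\refine_f;\coarsen_f$ factors using~\eqref{eq:cobox-pres-id} and the structural unit), the backward direction instantiates~\eqref{eq:box-term-faithfulness} at $x=\refine_f;\coarsen_f$ and $y=\id$ exactly as the paper does --- correctly identifying that the quantification over \emph{all} (not just internal) terms is what licenses this --- and the ``moreover'' clause is handled by the same on-the-nose recovery $x=\refine_f;(\coarsen_f;x;\refine_f);\coarsen_f$. The one detail to patch: the antecedent of~\eqref{eq:box-term-faithfulness} asks for \emph{bidirectional} 2-cells between the two boxes, and you exhibit only the counit direction; the reverse cell $\coarsen_f;\refine_f\Rightarrow\coarsen_f;\refine_f;\coarsen_f;\refine_f$ is the middle whiskering $\id_{\coarsen_f}*\eta*\id_{\refine_f}$ of the structural unit you already invoked, so the fix is one line.
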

\begin{proof}
Suppose the 2-cells~\eqref{eq:cobox-pres-id} exist, and let $x$ and $y$ be terms such that $fx\rightleftarrows fy$. We then compute as follows:
\begin{center}
\scalebox{1}{\tikzfig{box-term-faithfulness-proof-1}}.
\end{center}
It is then clear that the same argument shows the equational case, upon replacing the bidirectional 2-cells with equalities in~\eqref{eq:cobox-pres-id} and in the antecedent of~\eqref{eq:box-term-faithfulness}.

Conversely, suppose that the implication (as stated with the bidirectional 2-cells) holds for all terms. We observe that
\begin{center}
\scalebox{1}{\tikzfig{box-term-faithfulness-proof-2}},
\end{center}
whence existence of 2-cells~\eqref{eq:cobox-pres-id} follows.
\end{proof}

\chapter{Case studies}\label{ch:layered-examples}
Here we give extended examples of layered monoidal theories, boxes and coboxes. The examples with digital circuits (Section~\ref{sec:digital-circuits}), electrical circuits (Section~\ref{sec:elec-circuits}), ZX-calculus (Section~\ref{sec:zx-extraction}), and probabilistic channels (Section~\ref{sec:prob-channels}) all build on monoidal theories from existing literature. We discuss how several levels of monoidal structure are used within existing work, and show how layered monoidal theories explicate and formalise such levels. In the remaining two examples, the calculus of communicating systems (Section~\ref{sec:ccs}) and chemical reactions (Section~\ref{sec:glucose}), the monoidal theories are constructed here for the first time in order to allow for a layered perspective. We note that in Chapter~\ref{ch:retrosynthesis} in Part~\ref{part:chemistry} of this thesis, we give a formalisation of chemical synthesis as a layered monoidal theory, which is closely related to the example in Section~\ref{sec:glucose}.

The case studies are intended to be self-contained and independent of each other. At the beginning of each case study, we give a minimal introduction required to construct the layered theory at hand, and refer the reader to the literature for more details. It is certainly not necessary to read the case studies in order, nor are the details in any one of them required to understand the others.

\section{Digital circuits}\label{sec:digital-circuits}
We give a very simple example of an {\em arithmetic logic unit} ({ALU}) that is able to perform two operations on tuples of bits, the choice between which is controlled by one bit. To this end, we define the layered monoidal theory of {\em simple arithmetic circuits} with one layer for each $n\in\N_{+}$: the interpretation of logic and arithmetic gates within a layer is that they operate on $n$-bit wires. One benefit of our approach is that the monoidal signatures within each layer are nearly identical (a total of eight generators with arities and coarities in $\{0,1,2\}$), resulting in a rather compact definition. While the example {ALU} is very simple, the theory we define is expressive enough to represent an {ALU} for any computing circuit with arbitrary bitwise logic and arithmetic operations, such as a central processing unit ({CPU}).

In Kaye~\cite[Example~3.20]{kaye-thesis}, the simple arithmetic circuits are introduced as the {\em generalised circuit signature for simple arithmetic circuits}, which is defined as an ordinary monoidal signature with an infinite number of colours (one for each number of bits $n\in\N_{+}$), and an infinite number of generators. The generators represent the same (finite) operations on bits as we define below in~\eqref{eq:sac-generators}: an infinite number of generators is needed as their arities and coarities keep track of the number of incoming and outgoing wires, as well as the number of bits within each wire. In contrast, the layered theory presented here has a finite number of generators (eight) and colours (just a single one) in each layer, but there are now infinitely many layers (again, one for each number of bits).

Consider the following layered signature: $(\N_{+}, \mathcal F, \Sigma_n)$, where for each $n\in\N_{+}$, the set $\mathcal F(n,1)$ has exactly one element, and $\mathcal F$ returns the empty set otherwise. For $n\neq 1$, the monoidal signature $\Sigma_n$ is defined to have a single colour $n$, and the following monoidal generators:
\begin{equation}\label{eq:sac-generators}
\scalebox{.6}{\tikzfig{wireunit}}, \scalebox{.6}{\tikzfig{cofork}}, \scalebox{.6}{\tikzfig{wirecounit}}, \scalebox{.6}{\tikzfig{fork}}, \scalebox{.6}{\tikzfig{and}}, \scalebox{.6}{\tikzfig{or}}, \scalebox{.6}{\tikzfig{plus}}, \scalebox{.6}{\tikzfig{not}},
\end{equation}
while for $n=1$, the monoidal signature $\Sigma_1$ is as above, except that we replace the adder (generator labelled with `+') with
\begin{center}
\scalebox{.6}{\tikzfig{plus-one}}.
\end{center}
Note that the label $n$ is added for emphasis: there is just one colour within each layer, and hence the generators are the same for all layers, except for when $n=1$. We refer to this layered signature as {\em simple arithmetic circuits}. The intended interpretation is that the wires in layer $n$ carry an $n$-bit signal, while the above generators modify the signal as follows: introduce a wire with no signal, combine the signal from two wires into a single wire, delete a wire, copy the signal, bitwise {AND}, bitwise {OR}, binary sum, bitwise {NOT}.

Next, we define the deflational theory for the simple arithmetic circuits. For the bitwise operators, the usual expected equalities hold in layer $1$. Additionally, we define the binary sum of two bits as follows:
\begin{center}
\scalebox{.6}{\tikzfig{plusdef}}.
\end{center}
The remaining equations of the layered theory are introduced in Figure~\ref{fig:dig-circ-eqns}, defined by recursion on $n$, so that each $f\in\mathcal F(n,1)$ can be thought of as a functor expressing the logical and arithmetic operations on $n$-bit wires in terms of $1$-bit wires.
\begin{figure}
    \centering
    \scalebox{0.6}{
        \tikzfig{digital-thick-wire-eqns}
    }
    \caption{Layered theory defining the simple arithmetic circuits.\label{fig:dig-circ-eqns}}
\end{figure}

We demonstrate the organisational power of layered theories by recasting Example~3.22 from Kaye~\cite{kaye-thesis} as a term for the theory of simple arithmetic circuits. It represents a simple {ALU} (arithmetic logic unit) operating on four bit wires with a one bit control wire: when the control bit is false, it performs bitwise {AND}, while when the control bit is true, it performs bitwise addition:
\begin{center}
\scalebox{.6}{\tikzfig{ALU}}.
\end{center}
In the above term, the labels $1$ and $4$ refer to the whole enclosed region, corresponding to the layers $n=1$ and $n=4$. The layers are separated by a functor boundary labelled with $\coarsen$, corresponding to what is called a {\em bundler} in Kaye~\cite{kaye-thesis}. The translation from $4$-bit wires to $1$-bit wires can, therefore, be thought of as a bracketing, demarcating the two layers from each other. The formalism, however, allows for more than just separating different layers. Using the equations within this layered theory, we can see that the above term is equal to:
\begin{center}
\scalebox{.6}{\tikzfig{ALU-rewritten}}.
\end{center}
Note that while the functor boundary $\coarsen$ ``acts'' from right to left, the overall logical flow of the diagram is from left to right.

\section{Electrical circuits}\label{sec:elec-circuits}
We formalise the notion of an {\em impedance box}, introduced in the study of compositional electrical circuit theory by Boisseau \& Sobociński~\cite{electrical-circuits} and Boisseau~\cite{boisseau-thesis}. We find that it corresponds to a composition of a box with a cobox in a layered monoidal theory generated by translating electrical circuits to graphical affine algebra (Definition~\ref{def:impedance-box}). Within this deflational theory, we are able to replicate what is called the {\em impedance calculus} in~\cite{electrical-circuits}. To illustrate this, we derive the rule governing the sequential composition of resistors using our formalism. We summarise the layers we define in the table below:
\begin{center}
\begin{tabular}{ c | c }
$\ECirc$ & The electrical circuits \\
\hline
$\GAA$ & Graphical affine algebra: axiomatises affine relations \\
\hline
$\Bip$ & Bipole electrical circuits \\
\hline
$\Imp$ & All terms of $\GAA$ with single input and output
\end{tabular}
\end{center}

Compositional electrical circuit theory~\cite{electrical-circuits,boisseau-thesis,compositional-networks,graphical-affine-algebra} treats the components of electrical circuits as generators in certain monoidal theory. The terms (string diagrams) in the layered theory bear very close resemblance to the electrical circuit diagrams in classical electrical circuit theory, and hence to the physical circuit. As formal mathematics, the diagrams allow for equational reasoning and a semantic interpretation as affine relations, bridging the gap between the physical wiring of the diagrams and proving their properties by the means of calculations. Our starting point is the work of Boisseau and Sobociński~\cite{electrical-circuits}, which introduced impedance boxes as a notational device for simplifying and clarifying proofs: the idea is to allow parts of the electrical circuit diagram to be translated to the semantics (graphical affine algrbra).

Let $\R(x)$ denote the field of fractions of the polynomial ring over the real numbers. We define the layered monoidal theory with the following shape:
\begin{equation}\label{ecirc-layers}
\scalebox{1}{\tikzfig{ecirc-layers}},
\end{equation}
where $\Bip$ is the layer of {\em bipoles}, $\ECirc$ the layer of {\em electrical circuits}, $\GAA_{\R(x)}$ the layer of {\em graphical affine algebra} $\R(x)$, and $\Imp$ is the {\em impedance layer}, consisting of all terms that may appear inside an impedance box (Definition~\ref{def:impedance-box}). We drop the subscript $\R(x)$ for brevity, implicitly assuming that all the parameters in $\GAA$ come from the field $\R(x)$.

Each of the four layers has exactly one colour. Following~\cite{electrical-circuits}, the wires and the generators in $\ECirc$ and $\Bip$ are coloured blue, while in $\GAA$ and $\Imp$ they are coloured black, however, this plays no formal role. The generators in $\GAA$ are given by the generators of {\em graphical affine algebra}:
\begin{equation}\label{gaa-generators}
\scalebox{.8}{\tikzfig{gaa-generators}},
\end{equation}
where $k\in\R(x)$. The generators in $\Bip$ are given by the {\em bipoles}:
\begin{equation}\label{ecirc-bipoles}
\scalebox{.8}{\tikzfig{ecirc-bipoles}},
\end{equation}
where $R,L,C\in\R_{+}$ and $V,I\in\R$. The generators in $\ECirc$ are given by the bipoles as above~\eqref{ecirc-bipoles}, together with the following:
\begin{equation}\label{ecirc-non-bipoles}
\scalebox{.8}{\tikzfig{ecirc-non-bipoles}}.
\end{equation}
Finally, the generators in $\Imp$ are given by all the terms in $\GAA$ with sort $(1,1)$.

The equations of the layered monoidal theory are given by the following:
\begin{itemize}
\item in $\GAA$, the equations of the graphical affine algebra hold: for the generators on the first two lines in~\eqref{gaa-generators}, the equations of {\em interacting Hopf algebras}~\cite{survey-signal-flow,zanasi-thesis} hold, together with three additional equations that make the interpretation of the last generator the relation relating the unique element in the zero dimensional vector space to the unit vector in the one dimensional vector space~\cite{graphical-affine-algebra,electrical-circuits},
\item in $\Imp$, the equations of $\GAA$ hold inside the generators, together with the following additional equations, where the identity on the left-hand side of the right equation is the identity term in $\Imp$, {\em not} the identity in $\GAA$:
\begin{center}
\scalebox{.8}{\tikzfig{imp-equations}},
\end{center}
\item on the colours, the functors $\mathcal I:\ECirc\rightarrow\GAA$ and $W:\Imp\rightarrow\GAA$ are defined by $\bullet\mapsto\bullet\bullet$, so that the resulting map on objects is $n\mapsto 2n$,
\item the functor $B:\Bip\rightarrow\Imp$ is identity on colours,
\item the functor $\Bip\hookrightarrow\ECirc$ is identity on both colours and the generators, so that the resulting map is the inclusion,
\item the functor $\mathcal I:\ECirc\rightarrow\GAA$ is defined by equations in Figure~\ref{fig:ecirc-layered-theory},
\item the functor $B:\Bip\rightarrow\Imp$ is defined by equations in Figure~\ref{fig:ecirc-layered-theory-2},
\item the functor $W:\Imp\rightarrow\GAA$ is defined by the following equation:
\begin{center}
\scalebox{.8}{\tikzfig{wrapping-equation}}.
\end{center}
\end{itemize}

\begin{figure}
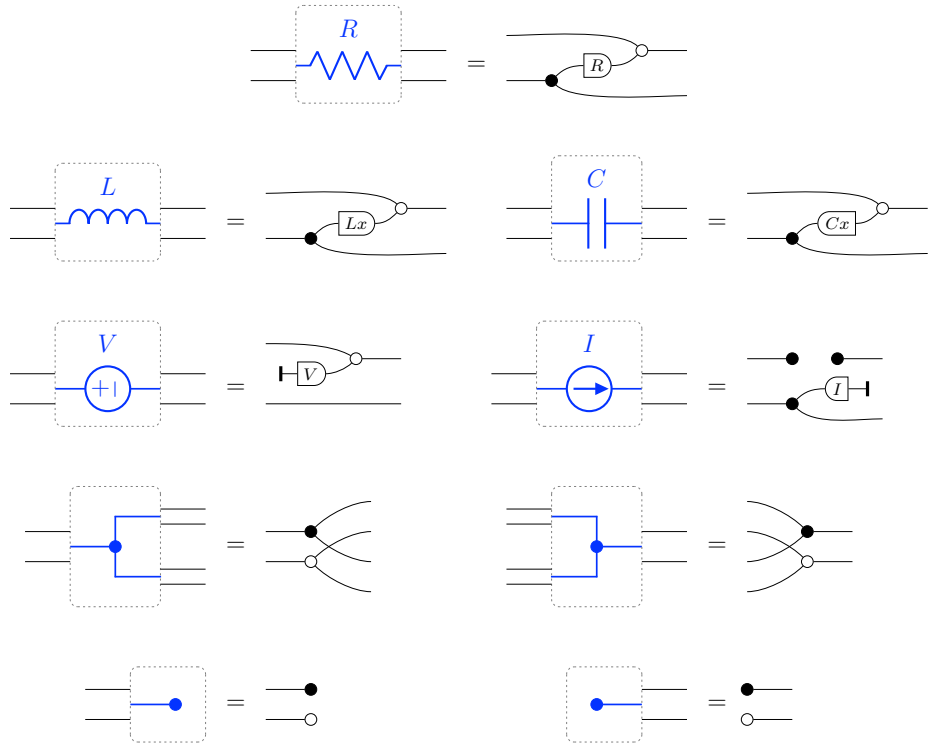

    \centering
    \scalebox{0.8}{
        \tikzfig{ecirc-layered-theory}
    }
    \caption{Equations between terms defining the functor $\mathcal I:\ECirc\rightarrow\GAA$\label{fig:ecirc-layered-theory}}
\end{figure}

\begin{figure}
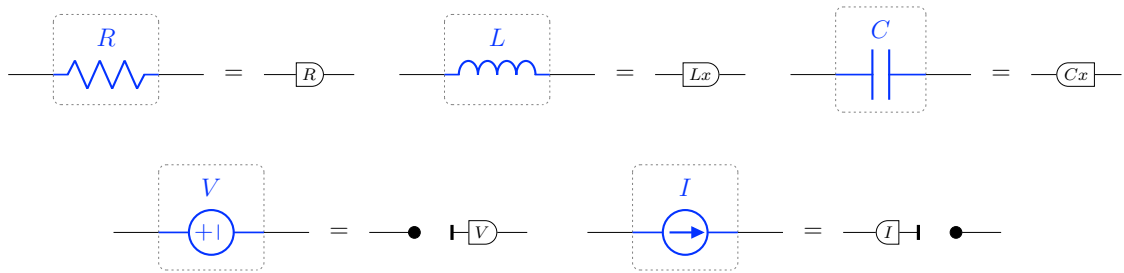

    \centering
    \scalebox{0.8}{
        \tikzfig{ecirc-layered-theory-2}
    }
    \caption{Equations between terms defining the functor $B:\Bip\rightarrow\Imp$\label{fig:ecirc-layered-theory-2}}
\end{figure}

\begin{remark}
The definition of the layer $\Imp$ might appear somewhat strange: it seems that all the terms disconnect due to the identity term disconnecting. This is, however, not the case, as only the identity generator of $\Imp$ disconnects: the wires inside the terms of $\GAA$ remain connected. Note that the composition is defined by the monoidal product of terms that is then connected on both sides to obtain a term with a single input and output. The composition and the identity are well-defined due to the black and white generators of $\GAA$ (and hence ultimately those of interacting Hopf algebras) being (co)associative and (co)unital. For example, the identity indeed works as expected:
\begin{center}
\scalebox{0.8}{\tikzfig{imp-identity-example}}.
\end{center}
\end{remark}

Note that the equations of the layered monoidal theories imply that~\eqref{ecirc-layers} is a commutative square. We use this to define the {\em impedance box} of~\cite{electrical-circuits}.
\begin{definition}[Impedance box]\label{def:impedance-box}
Let $C$ be a generator of $\Imp$ (i.e.~a term with exactly one input and output in $\GAA$). Define the {\em impedance box} as the following box-cobox combination:
\begin{center}
\scalebox{0.8}{\tikzfig{impedance-box-definition}}.
\end{center}
\end{definition}

The impedance box can now be treated like in~\cite{electrical-circuits}, where it is introduced as an additional set of generators for $\ECirc$. However, viewing the impedance box as an emergent feature of the layered monoidal theory reveals some subtleties, as we will see in the following proposition.
\begin{proposition}[Lemma~1 in~\cite{electrical-circuits}]\label{prop:impedance-box-derived}
The following bidirectional 2-cells and equations are derivable:
\begin{center}
\begin{tabularx}{\textwidth}{X X}
\begin{equation}\label{eq:imp-box-i}
\scalebox{.65}{\tikzfig{imp-box-compose-seq}}\tag{i}
\end{equation}%
&
\begin{equation}\label{eq:imp-box-ii}
\scalebox{.65}{\tikzfig{imp-box-compose-par}}\tag{ii}
\end{equation} \\
\begin{equation}\label{eq:imp-box-iii}
\scalebox{.65}{\tikzfig{imp-box-snake}}\tag{iii}
\end{equation}%
&
\begin{equation}\label{eq:imp-box-iv}
\scalebox{.65}{\tikzfig{imp-box-units}}\tag{iv},
\end{equation}
\end{tabularx}
\end{center}
where the superscript $\mathcal I$ in~\eqref{eq:imp-box-i}, \eqref{eq:imp-box-ii} and~\eqref{eq:imp-box-iii} means we need to quotient the terms in $\ECirc$ by equality under the functor $\mathcal I$ in order to derive these equations (the bidirectional 2-cells in~\eqref{eq:imp-box-i} are derivable without any additional equations). In other words, by Proposition~\ref{prop:cobox-iff-faithful}, we need to add the following equation to the layered monoidal theory:
\begin{center}
\scalebox{.8}{\tikzfig{i-cobox-id-eqn}}.
\end{center}
\end{proposition}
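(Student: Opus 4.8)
The plan is to unfold each impedance box into its constituent box and cobox according to Definition~\ref{def:impedance-box}, and then reduce every claim to the general calculus of boxes and coboxes from Chapter~\ref{ch:functor-boxes}. The impedance box for a generator $C$ of $\Imp$ wraps the $\GAA$-term $C$ through the functor boundary of $\mathcal I:\ECirc\rightarrow\GAA$, and commutativity of the square~\eqref{ecirc-layers} identifies the two paths $\Bip\rightarrow\ECirc\rightarrow\GAA$ and $\Bip\rightarrow\Imp\rightarrow\GAA$, so I may move freely between the enclosed $\GAA$-terms and their boxed images. The uniform strategy is: slide the enclosed term out of or into the boundary using naturality of coboxes (Proposition~\ref{prop:cobox-slides}) and functoriality of boxes (Proposition~\ref{prop:cowindow-box-equality} and the corollary it entails), perform the required composition inside $\GAA$, and re-fold the result into a single impedance box.

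First I would treat the bidirectional-2-cell part of~\eqref{eq:imp-box-i}, which the statement asserts holds with no extra equations. Composing two impedance boxes sequentially produces two adjacent coboxes; by the cobox-composition property established just before Proposition~\ref{prop:cobox-iff-faithful}, together with the naturality of Proposition~\ref{prop:cobox-slides}, these merge into a single cobox wrapping the $\GAA$-composite of the two enclosed terms, yielding a bidirectional 2-cell to the impedance box of the composite. The same bookkeeping, now using that boxes and coboxes respect monoidal products (the identifications following Proposition~\ref{prop:cobox-slides}), gives~\eqref{eq:imp-box-ii}. For~\eqref{eq:imp-box-iii} I would slide the cobox across the cup and cap via Proposition~\ref{prop:cobox-slides} and close the resulting configuration with the zigzag (adjunction) equations that are part of the structural deflational equations (Definition~\ref{def:str-defl-eqns}). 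The unit claim~\eqref{eq:imp-box-iv} follows from preservation of units by boxes together with the structural identities, and needs no $\mathcal I$-quotient.

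The crux is upgrading the bidirectional 2-cells in~\eqref{eq:imp-box-i}--\eqref{eq:imp-box-iii} to genuine equalities, and here the subtlety flagged in the text enters: a cobox is \emph{not} an internal term and in general fails to preserve identities, so the two directions of a 2-cell need not collapse. By Proposition~\ref{prop:cobox-iff-faithful}, imposing the equation $\scalebox{.8}{\tikzfig{i-cobox-id-eqn}}$, i.e.\ that the $\mathcal I$-cobox preserves the identity as an equality, forces the faithfulness implication~\eqref{eq:box-term-faithfulness} to hold with equalities, which is exactly the statement that we may quotient $\ECirc$ by equality under $\mathcal I$. Under this hypothesis each bidirectional 2-cell constructed above has its two components sent by $\mathcal I$ to the same $\GAA$-term, so faithfulness converts them into equalities. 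I expect this last step to be the main obstacle: one must carefully match domains and codomains so that Proposition~\ref{prop:cobox-iff-faithful} applies, and track precisely which identities genuinely require the cobox to preserve $\id$ as opposed to those already available from cobox naturality; everything else is a routine reshuffling of functor boundaries.
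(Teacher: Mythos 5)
Your overall architecture -- unfold each impedance box into its box--cobox decomposition (Definition~\ref{def:impedance-box}), push everything through the cobox calculus of Chapter~\ref{ch:functor-boxes}, and invoke Proposition~\ref{prop:cobox-iff-faithful} with the equation $\mathcal I$-cobox-of-identity $=$ identity to obtain equalities -- is exactly the intended derivation, and your treatment of \eqref{eq:imp-box-i} and \eqref{eq:imp-box-iv} is correct: the bidirectional cobox-composition 2-cell handles \eqref{eq:imp-box-i}, and \eqref{eq:imp-box-iv} holds outright because the cobox covers all wires on both sides.

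The genuine gap is your claim that \eqref{eq:imp-box-ii} is ``the same bookkeeping'' as \eqref{eq:imp-box-i}, and more generally your two-phase plan of first deriving bidirectional 2-cells for all of \eqref{eq:imp-box-i}--\eqref{eq:imp-box-iii} without extra equations and only then upgrading to equalities. The proposition's parenthetical grants the free bidirectional 2-cells \emph{only} for \eqref{eq:imp-box-i}, and for a structural reason: in \eqref{eq:imp-box-i} the coboxes cover every wire on both sides, so the bidirectional cobox-composition 2-cell applies, whereas in \eqref{eq:imp-box-ii} each cobox has the other branch's wires bypassing it, and in \eqref{eq:imp-box-iii} the cup and cap wires bypass the box. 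A cobox is not an internal term, so Proposition~\ref{prop:cobox-slides} (which slides \emph{internal} terms) cannot extend one cobox over a region containing another cobox, and the derivable cobox 2-cells available in that configuration are explicitly one-directional (``there is, in general, no 2-cell in the other direction''); the remark following the cobox-composition proposition warns precisely that it does not imply composition of coboxes with contextual wires present. Consequently the quotient equation is needed already to \emph{derive} \eqref{eq:imp-box-ii} and \eqref{eq:imp-box-iii}, not merely to collapse 2-cells you have already built: its operational role is to let you rewrite a bare $\ECirc$-wire as the cobox of the (doubled) identity in $\GAA$, thereby extending the cobox over the bypassing wires before merging and computing inside $\GAA$. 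Relatedly, your final step misapplies Proposition~\ref{prop:cobox-iff-faithful}: the faithfulness implication \eqref{eq:box-term-faithfulness} concerns internal terms $x,y$ of a single layer under the box, but both sides of \eqref{eq:imp-box-i}--\eqref{eq:imp-box-iii} are mixed terms containing coboxes, so one cannot ``send the two components through $\mathcal I$ and conclude by faithfulness''; the equalities must instead be derived directly from the i-cobox-identity equation together with the cobox defining 1-equations, as in the resistor-composition derivation that closes the section.
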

The first difference between Proposition~\ref{prop:impedance-box-derived} and Lemma~1 in~\cite{electrical-circuits} is that we are able to derive equation~\eqref{eq:imp-box-iv} and the bidirectional 2-cells in~\eqref{eq:imp-box-i} without quotienting $\ECirc$ by the equality under the translation functor to $\GAA$. Instead, they hold by virtue of the cobox being applied to all the wires on both sides of the equation: compare this to~\eqref{eq:imp-box-iii}, where we have to assume additional equations in order to apply the cobox to the wires that bypass the impedance box on the top and on the bottom on the left-hand side of the equation. The interpretation is that equations~\eqref{eq:imp-box-i}, \eqref{eq:imp-box-ii} and~\eqref{eq:imp-box-iii} require more than existence of a functorial translation $\ECirc\rightarrow\GAA$: they transfer information from the lower level ($\GAA$) to the higher one ($\ECirc$). The second difference is the explicit appearance of a cobox in equation~\eqref{eq:imp-box-iv}, which can be viewed as ensuring that the types on both side of the equality remain in $\ECirc$.

We conclude our discussion of electrical circuits by showing how the impedance box can be used to derive the rule for composing two resistors in $\ECirc$:
\begin{center}
\scalebox{.8}{\tikzfig{resistors-high}},
\end{center}
replicating part~(i) of Proposition~3 of~\cite{electrical-circuits}. As indicated above, this derivation is only possible once we quotient by the equality under the translation functor. The derivation is given below:
\begin{center}
\scalebox{.8}{\tikzfig{resistors-add}}.
\end{center}

\section{ZX-calculus and quantum circuit extraction}\label{sec:zx-extraction}
We show how using rewriting of graphs representing measurement based quantum computations (MBQC) to extract a quantum circuit can be seen as a procedure inside a layered monoidal theory, whose layers are quantum circuits, graphs representing MBQC-computations and the ZX-calculus. We demonstrate the advantage of the layered approach by clearly separating the layers of graph rewriting -- i.e.~formal manipulations of labelled graphs -- from the semantics in terms of the ZX-calculus. The procedure of circuit extraction can then be seen as an interaction between these layers. We summarise the four layers at play in the following table:
\begin{center}
\begin{tabular}{ c | c }
$\ZX$ & The ZX-calculus: describes all linear maps between qubits \\
\hline
$\QCirc$ & Quantum circuits defined as compositions of gates \\
\hline
$\MBQC$ & Graphs representing a measurement based quantum computation \\
\hline
$\MBQCLC$ & An extension of MBQC-graphs convenient for technical reasons
\end{tabular}
\end{center}

The {\em ZX-calculus} is a prop (monoidal theory with a single colour) with the following generators:
\begin{equation}\label{eq:zx-generators}
\scalebox{1}{\tikzfig{zx-generators}},
\end{equation}
called the {\em Z-spider}, the {\em X-spider} and the {\em Hadamard gate}, subject to the following monoidal theory:
\begin{equation}\label{eq:zx-rules}
\scalebox{1}{\tikzfig{ZX-rules}},
\end{equation}
where $\alpha, \beta \in [0, 2 \pi)$ is called the {\em phase} and the addition is modulo $2\pi$. Note that the ellipsis is to be read as `zero or more wires', hence e.g.~on the left-hand side of the spider fusion equation \SpiderRule the spiders are connected by one or more wires. In addition to these generators and equations, the Hadamard gate is often abbreviated to dashed blue edge to avoid clutter:
\begin{center}
\scalebox{1}{\tikzfig{hadamard-shorthand}}.
\end{center}

The ZX-calculus is able to describe any linear map between any finite number of qubits: we discuss this in more detail in Appendix~\ref{ch:zx-mbqc}. While this makes the ZX-calculus highly expressive, in practice one has to restrict available linear maps to those that can be realistically implemented as physical procedures. There are two dominant paradigms to obtain such an operational restriction of quantum maps: {\em circuit based quantum computation} and {\em measurement based quantum computation} (MBQC for short). In the former, the computation is performed by applying {\em gates}, i.e.~certain subset of unitary linear maps, to qubits, and the resulting circuit diagrams look very similar to electrical and digital circuits as encountered in engineering and computer science literature (as well as in Sections~\ref{sec:digital-circuits} and~\ref{sec:elec-circuits} of this work). In the latter, the computation proceeds by applying single qubit destructive measurements to a state consisting of a finite number of qubits whose every pair may be entangled. Both models of quantum computation can be represented within the ZX-calculus. For circuits, this is straightforward by restricting the ZX-diagrams to those that are circuit-like:
\begin{definition}[Quantum circuits]\label{def:quantum-circ}
The prop $\QCirc$ of {\em quantum circuits} is the sub-prop of the ZX-calculus generated by the following morphisms:
\begin{equation}\label{eq:circuit-generators-1}
\scalebox{1}{\tikzfig{circuit-generators}}.
\end{equation}
\end{definition}
Other gates commonly used in circuit based quantum computation are then definable using the generating gates~\eqref{eq:circuit-generators-1}. In particular, we have:
\begin{equation}\label{eq:circuit-generators-2}
\scalebox{1}{\tikzfig{circuit-derivable}}.
\end{equation}

The translation from MBQC protocols (known as {\em measurement patterns}, Definition~\ref{def:measurement-pattern}) to the ZX-calculus is more complicated: we give the details in Appendix~\ref{ch:zx-mbqc}. However, the translation of {\em deterministic} patterns (i.e.~those which result in the same linear map at each run) is given by the composition of functors $D\iota:\MBQC\rightarrow\ZX$ that we define as part of the layered theory presented here.

In MBQC, the computation starts with a {\em graph state}: a finite number of qubits, which may be pairwise entangled by applying the CZ gate. This is represented by an {\em undirected graph}, whose vertices represent qubits (Z-spiders with the zero phase) and whose edges represent CZ-gates (i.e. a Hadamard edge between two qubits). Once we add the information about how the qubits are measured, we obtain the notion of an MBQC-graph (formalised in Definition~\ref{def:mbqc-graph}). It turns out that certain graph-theoretic operations, whose definitions we now state, preserve the semantics of measurement patterns.

We denote a graph by $G=(V,E)$, where $V$ is a set of {\em vertices} and $E$ is a binary relation on $V$ that specifies the {\em edges}. A graph is {\em undirected} when $E$ is symmetric. Given a vertex $u\in V$, we denote the set of {\em neighbours} of $u$ by
$$N_G(u)\coloneq \{v\in V : (u,v)\in E\}.$$
The graphs we consider are undirected, in which case there is no distinction between forward and backward neighbours.
\begin{definition}[Local complementation]\label{def:local-comp}
Let $G=(V,E)$ be an undirected irreflexive graph, and let $u\in V$ be a vertex. The {\em local complementation} about $u$ is the graph $G\star u$ defined by
$$G\star u\coloneq (V, E \Delta \{(a,b) : a,b\in N_G(u) \text{ and } a\neq b\}),$$
where $\Delta$ denotes the symmetric difference.
\end{definition}
In other words, $G\star u$ has the same vertices as $G$, and any two neighbours $a$ and $b$ of $u$ are connected in $G\star u$ if and only if they are not connected in $G$. All other edges are the same as in $G$.
\begin{definition}[Pivot]\label{def:pivot}
Let $G=(V,E)$ be an undirected irreflexive graph, and let $(u,v)\in E$. The {\em pivot} about $(u,v)$ is defined as $G\wedge uv\coloneq ((G\star u)\star v)\star u$.
\end{definition}
A pivot about $(u,v)$ results in interchanging $u$ and $v$, and complementing the edges between the three sets $N_G(u)\cap N_G(v)$, $(N_G(u)\cap N_G(v)^c)\setminus\{v\}$ and $(N_G(v)\cap N_G(u)^c)\setminus\{u\}$: any vertices $a$ and $b$ from any two distinct sets are connected in $G\wedge uv$ if and only if they are not connected in $G$. We illustrate the effect of a pivot in the following picture, where $A\coloneq N_G(u)\cap N_G(v)$, $B\coloneq (N_G(u)\cap N_G(v)^c)\setminus\{v\}$, $C\coloneq (N_G(v)\cap N_G(u)^c)\setminus\{u\}$, and crossing lines between two sets indicate complementing the edges:
\begin{center}
\scalebox{1}{\tikzfig{pivot}}.
\end{center}

In order to define an MBQC-graph, we define the following sets and terminology. Let $\mathcal P\coloneq\{\XYplane,\XZplane,\YZplane\}$ denote the set of {\em measurement planes}. Let $\VS$ be a fixed countable set of {\em vertex names}. By an {\em ordered set} we mean a finite list with no repeated elements.
\begin{definition}[MBQC-graph]\label{def:mbqc-graph}
An {\em MBQC-graph} is a tuple $(V,E,I,O,\lambda)$ such that
\begin{itemize}
\item $V\sse\VS$ is a finite set,
\item $(V,E)$ is an undirected irreflexive graph,
\item $I\sse V$ is an ordered set of {\em inputs},
\item $O\sse V$ is an ordered set of {\em outputs},
\item $\lambda : V\setminus O\rightarrow\mathcal P\times [0,2\pi)$ is a {\em measurement labelling function}, assigning a measurement plane and angle to each non-output vertex.
\end{itemize}
\end{definition}
An MBQC-graph represents the ``desired'' execution of an MBQC-computation, where all the measurements have yielded the expected outcome, so that no corrections were needed. Formally, an MBQC-graph carries the same information as the branch of a measurement pattern where all signals are evaluated to $0$, vanishing all the correction commands (see the discussion in Appendix~\ref{ch:zx-mbqc}). Thus, in this scenario, the measurement angles specified by $\lambda$ translate directly to measurements in the ZX-diagram, as depicted in Table~\ref{tab:measurement-planes}.

 \begin{table}
  \centering
  \renewcommand{\arraystretch}{3}
  \begin{tabular}{ c || c | c | c}
   measurement plane & $\XYplane$ & $\XZplane$ & $\YZplane$ \\ \hline
   ZX-diagram & \tikzfig{XY-effect-uncorrected} & \tikzfig{XZ-effect-uncorrected} & \tikzfig{YZ-effect-uncorrected}
  \end{tabular}
  \renewcommand{\arraystretch}{1}
  \caption{Correspondence between the measurement planes and the ZX-diagrams.\label{tab:measurement-planes}}
 \end{table}

In general, different executions of a measurement pattern will yield different linear maps. In practise, one is interested in those patterns where every execution gives the same outcome linear map (up to a global scalar). Such patterns are called {\em deterministic} (Definition~\ref{def:determinism}). Thus, for a deterministic pattern, every execution is equal to the desired one with no correction commands, i.e.~the one represented by an MBQC-graph. It turns out that there is a condition on MBQC-graphs -- called {\em generalised flow}, or {\em gflow} -- that characterises deterministic measurement patterns~\cite{browne-gflow}. We omit the definition here, as it is somewhat lengthy and requires defining additional notation, and we will not need the details. We refer the reader to~\cite{extended-meas-calculus,thereandback} for the definition.

\begin{theorem}[Browne et al.~\cite{browne-gflow}]\label{thm:gflow}
A measurement pattern with no correction commands can be completed to an equivalent strongly, uniformly and stepwise deterministic pattern (by adding signals and corrections) if and only if the MBQC-graph corresponding to the pattern has a gflow.
\end{theorem}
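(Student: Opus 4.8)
Since this is the theorem of Browne, Kashefi, Mhalla and Perdrix cited as \cite{browne-gflow}, I would follow the shape of their argument, which splits into the two implications of the biconditional and is made tractable by the \emph{stepwise} hypothesis: it lets one analyse the measurements one at a time.

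The plan is to first record the behaviour of Pauli corrections on graph states: applying an $X$ correction at a vertex $v$ of $(V,E)$ acts on the graph state in the same way as applying $Z$ to its neighbourhood $N_G(v)$ (and dually), so that the net action of correcting on a set $K\sse V$ is governed, over $\mathrm{GF}(2)$, by the pair $(K,\mathrm{Odd}(K))$, where $\mathrm{Odd}(K)=\{u : |N_G(u)\cap K|\text{ is odd}\}$ is the odd neighbourhood. With this in hand, the heart of the argument is a \textbf{single-measurement lemma}: measuring a non-output vertex $v$ in plane $\lambda(v)\in\mathcal P$ at angle $\alpha$ can be made strongly deterministic, uniformly in $\alpha$, by a Pauli correction supported on a set $K$ of not-yet-measured vertices if and only if $K$ and $\mathrm{Odd}(K)$ satisfy the plane-specific incidence conditions at $v$: for $\XYplane$ one needs $v\notin K$ and $v\in\mathrm{Odd}(K)$; for $\YZplane$ one needs $v\in K$ and $v\notin\mathrm{Odd}(K)$; for $\XZplane$ one needs $v\in K$ and $v\in\mathrm{Odd}(K)$. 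These are precisely the defining conditions (G3) of a gflow correction set.

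For the direction gflow $\Rightarrow$ determinism I would take a gflow $(g,\prec)$, extend $\prec$ to a total order, and process the measurements in that order. For each measured $v$ with outcome $s_v$, the single-measurement lemma supplies a correction built from $g(v)$ and $\mathrm{Odd}(g(v))$ that cancels the $s_v=1$ error; the order conditions of a gflow (any $w\in g(v)$ or $w\in\mathrm{Odd}(g(v))$ with $w\neq v$ forces $v\prec w$) guarantee that every correction acts only on qubits measured later, so the corrections can be carried as signals depending solely on past outcomes. This makes the pattern runnable and, branch by branch, equal to the all-zero branch represented by the MBQC-graph, yielding strong, stepwise determinism; uniformity in the angles is automatic because the corrections are Pauli and independent of $\alpha$.

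For the converse, strong/uniform/stepwise determinism $\Rightarrow$ gflow, I would use the stepwise hypothesis to fix an order in which the measurements are individually correctable, reading off $\prec$ from this order (setting $v\prec w$ whenever $w$ is touched by the correction of $v$), and applying the \emph{only if} half of the single-measurement lemma to each vertex to obtain a correction set, which I would take as $g(v)$. I expect the main obstacle to be exactly this \emph{only if} direction: showing that \emph{uniform} strong determinism admits no correction other than the $\mathrm{GF}(2)$-linear one captured by $(K,\mathrm{Odd}(K))$ requires ruling out angle-dependent or non-Pauli cancellations, which is where the uniformity over $\alpha$ does the real work in forcing the plane conditions (G3). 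The remaining delicate bookkeeping is to check that the partial order assembled from the per-vertex correction sets is genuinely acyclic and consistent across all branches simultaneously, rather than merely for each measurement in isolation.
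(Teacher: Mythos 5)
The paper does not prove this statement at all: Theorem~\ref{thm:gflow} is imported from Browne, Kashefi, Mhalla and Perdrix~\cite{browne-gflow} with a citation, and the surrounding text explicitly omits even the definition of gflow, so there is no in-paper proof to compare against. Your sketch is a faithful reconstruction of the cited proof's structure: the graph-state stabiliser identity relating an $X$ at $v$ to $Z$ on $N_G(v)$, the $\mathrm{GF}(2)$ bookkeeping via the odd neighbourhood $\mathrm{Odd}(K)$, plane conditions that are exactly the (G3) clauses of the gflow definition, the forward direction by processing measurements along a total order extending $\prec$ so that corrections touch only later qubits, and the correct diagnosis that in the converse it is uniformity over the measurement angles that forces the corrections to be the linear ones captured by $(K,\mathrm{Odd}(K))$ --- which is indeed where the delicate work sits in the original argument.
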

Thus, instead of working with measurement patterns (Definition~\ref{def:measurement-pattern}) with intricate determinism conditions (Definition~\ref{def:determinism}), this characterisation allows us to work with MBQC-graphs that have a gflow.

For the purposes of rewriting, it is convenient to work with a slight generalisation of MBQC-graphs that allow local Clifford operators at each input and output.

Let us denote by $\LC$ the free monoid with the following generators:
$$\left\{\frac{r}{2},r,-\frac{r}{2},\frac{g}{2},g,-\frac{g}{2}\right\}.$$
The monoid $\LC$ captures syntactically the {\em local Clifford operators}, i.e.~X- and Z-spiders with a single input and output whose phase is an integer multiple of $\frac{\pi}{2}$. The translation functor $\LC\rightarrow\ZX$ is defined by the action on the generators in Table~\ref{tab:local-cliffords}.

 \begin{table}
  \centering
  \renewcommand{\arraystretch}{2}
  \begin{tabular}{ c || c | c | c | c | c | c }
   $\LC$ generator & $\frac{r}{2}$ & $r$ & $-\frac{r}{2}$ & $\frac{g}{2}$ & $g$ & $-\frac{g}{2}$ \\ \hline
   ZX-diagram & \tikzfig{red-half-pi} & \tikzfig{red-pi} & \tikzfig{minus-red-half-pi} & \tikzfig{green-half-pi} & \tikzfig{green-pi} & \tikzfig{minus-green-half-pi}
  \end{tabular}
  \renewcommand{\arraystretch}{1}
  \caption{Translation from $\LC$ to ZX.\label{tab:local-cliffords}}
 \end{table}

\begin{definition}[MBQC+LC-graph]\label{def:mbqc-lc-graph}
An {\em MBQC+LC-graph} is a tuple $(V,E,I,O,\lambda,\ell^i,\ell^o)$ where $(V,E,I,O,\lambda)$ is an MBQC-graph, while $\ell^i:I\rightarrow\LC$ and $\ell^o:O\rightarrow\LC$ are {\em input} and {\em output labelling functions}.
\end{definition}
Note that every MBQC-graph can be viewed as an MBQC+LC-graph by choosing all the additional labels (i.e.~values returned by the functions $\ell^i$ and $\ell^o$) to be the empty word. While it seems that we are capturing a larger class of computations by allowing the local Clifford operators at the ends, Proposition~\ref{prop:mbqc-is-mbqc-lc} will show that, in fact, the class of linear maps that is captured by the MBQC+LC-graphs is exactly the same as the one of MBQC-graphs. We say that an MBQC+LC-graph has a gflow if the underlying MBQC-graph has a gflow.

The following definitions combine local complementation (Definition~\ref{def:local-comp}) and pivoting (Definition~\ref{def:pivot}) with the additional data of an MBQC+LC-graph. While the precise details are somewhat technical, the importance lies in the facts that all the operations (1) preserve the semantics of the graph when interpreted as a ZX-diagram (Theorem~\ref{thm:zx-soundness}), and (2) preserve the existence of gflow (Theorem~\ref{thm:pres-gfow}). We point out that the operations add local Clifford operators to inputs and outputs, which is the reason we need to work with MBQC+LC-graphs rather than MBQC-graphs.

\begin{definition}[Local complementation on MBQC+LC-graphs]\label{def:local-comp-mbqc}
Given a vertex name $u\in\VS$, we define the partial function $\star u$ on the set of MBQC+LC-graphs as follows. A graph $G$ is in the domain if $u\in V_G$, and the output graph $G_{\star}$ is defined by the following cases:
\begin{enumerate}
\item if $u\notin I_G$, then, on the vertices and edges we apply the local complementation about $u$: $G_{\star}\coloneq G\star u$, while $I_{\star}\coloneq I_G$, $O_{\star}\coloneq O_G$, and the labelling functions are defined as follows:
\begin{itemize}
\item if $u\in O_G$, let $\ell^o_{\star}(u)\coloneq \frac{r}{2}\cdot\ell^o_G(u)$,
\item if $u\notin O_G$ and $\lambda_G(u)=(\XYplane,\alpha)$, let $\lambda_{\star}(u)\coloneq \left(\XZplane,\frac{\pi}{2}-\alpha\right),$
\item if $u\notin O_G$ and $\lambda_G(u)=(\XZplane,\alpha)$, let $\lambda_{\star}(u)\coloneq \left(\XYplane,\alpha -\frac{\pi}{2}\right),$
\item if $u\notin O_G$ and $\lambda_G(u)=(\YZplane,\alpha)$, let $\lambda_{\star}(u)\coloneq \left(\YZplane,\alpha +\frac{\pi}{2}\right),$
\item if $v\in N_G(u)\cap O_G$, let $\ell^o_{\star}(v)\coloneq -\frac{g}{2}\cdot\ell^o_G(v),$
\item if $v\in N_G(u)\cap O_G^c$ and $\lambda_G(v)=(\XYplane,\alpha)$, let $\lambda_{\star}(v)\coloneq \left(\XYplane, \alpha - \frac{\pi}{2}\right),$
\item if $v\in N_G(u)\cap O_G^c$ and $\lambda_G(v)=(\XZplane,\alpha)$, let $\lambda_{\star}(v)\coloneq \left(\YZplane, \alpha \right),$
\item if $v\in N_G(u)\cap O_G^c$ and $\lambda_G(v)=(\YZplane,\alpha)$, let $\lambda_{\star}(v)\coloneq \left(\XZplane, -\alpha \right),$
\item all other labels are the same as in $G$,
\end{itemize}
\item if $u\in I_G$, we first choose a vertex name $u'\in\VS$ not appearing in $G$, and define the graph $G'$ as follows:
\begin{itemize}
\item $V'\coloneq V\cup\{u'\}$,
\item $E'\coloneq E\cup\{(u,u')\}$,
\item $I'\coloneq I_G[u'/u]$,
\item $O'\coloneq O_G$,
\item $\lambda'(u')\coloneq (\XYplane, 0)$,
\item $(\ell^i)'(u')\coloneq \ell^i_G(u)\cdot\frac{g}{2}\cdot\frac{r}{2}\cdot\frac{g}{2}$,
\item all other labels are the same as in $G$,
\end{itemize}
now, by construction, $u\notin I'$, so that Case~1 applies, whence we define the output graph $G'_{\star}$ using the construction of Case~1.
\end{enumerate}
\end{definition}
Given an MBQC+LC graph $G$ in the domain of $\star u$, we denote the resulting graph by $G\star u$ -- we will point out whether we mean a simple graph or an MBQC+LC-graph, unless the context is clear enough to disambiguate between the two.

\begin{definition}[Pivot on MBQC+LC-graphs]\label{def:pivot-mbqc}
Given two vertex names $u,v\in\VS$, we define the partial function $\wedge uv$ on the set of MBQC+LC-graphs as follows. A graph $G$ is in the domain if $u,v\in V_G$ and $(u,v)\in E_G$, in which case we define $G\wedge uv\coloneq ((G\star u)\star v)\star u$ via three consecutive applications of the partial functions implementing local complementation.
\end{definition}

Certain vertices can also be removed altogether without altering the semantics of the computation. This serves as a starting point for many simplification methods for circuits and MBQC-patterns that use the ZX-calculus.
\begin{definition}[Vertex removal on MBQC+LC-graphs]\label{def:vertex-removal}
Given a vertex name $u\in\VS$, we define the partial function $\setminus u$ as follows. A graph $G$ is in the domain if the following conditions hold:
\begin{itemize}
\item $u\in V_G$ and $u\notin I_G\cup O_G$,
\item $\lambda_G(u)=(P,a\pi)$ with $P\in\{\YZplane,\XZplane\}$ and $a\in\{0,1\}$,
\end{itemize}
in which case we define the output graph $G\setminus u = (V_u,E_u,I_u,O_u,\lambda_u,\ell^i_u,\ell^o_u)$ by letting $V_u\coloneq V_G\setminus\{u\}$, $E_u\coloneq E\setminus\{(u,x) : x\in V_G\}$, $I_u\coloneq I_G$, $O_u\coloneq O_G$, while the labels are defined by the following cases:
\begin{itemize}
\item if $v\in N_G(u)\cap O_G$, let $\ell^o_u(v)\coloneq g^a\cdot\ell^o_G(v)$,
\item if $v\in N_G(u)\cap O_G^c$ and $\lambda_G(v)=(Q,\alpha)$, define $\lambda_u(v)\coloneq (Q, (-1)^a\alpha)$ if $Q\in\{\YZplane,\XZplane\}$, and $\lambda_u(v)\coloneq (\XYplane, \alpha + a\pi)$ if $Q=\XYplane$,
\item if $v\notin N_G(u)$, the labels are the same as in $G$.
\end{itemize}
\end{definition}

Finally, we need to be able to rename vertices.
\begin{definition}[Renaming]\label{def:zx-renaming}
Let $G$ be an MBQC+LC-graph, let $U\sse V_G$ be an ordered subset of vertices of $G$, and let $W\sse\VS$ be an ordered set of vertices such that $|U|=|W|$ and $W\cap V_G=\eset$. We define the MBQC+LC-graph $G[W/U]$ by replacing each vertex in $U$ with a vertex in $W$ in the specified order, keeping the rest of the data as in $G$.
\end{definition}

We now have all the ingredients to define the layered monoidal theory with the following shape:
\begin{center}
\scalebox{1}{\tikzfig{zx-layers}}.
\end{center}
All the layers are props, i.e.~generated by a single colour, while the generators and the equations within the layers are defined as follows:
\begin{itemize}
\item $\ZX$ is the ZX-calculus, with generators~\eqref{eq:zx-generators} and equations~\eqref{eq:zx-rules},
\item $\QCirc$ is generated by the ZX-diagrams corresponding to quantum circuits~\eqref{eq:circuit-generators-1} and~\eqref{eq:circuit-generators-2}, with the equations of the ZX-calculus,
\item a generator in $\MBQCLC$ with type $n\rightarrow m$ is given by an MBQC+LC-graph $G$ such that $|I_G|=n$ and $|O_G|=m$; the 2-cells and the 2-equations are defined in Figure~\ref{fig:twocells-eqns-mbqclc},
\item the layer $\MBQC$ is defined as $\MBQCLC$, with the graphs restricted to MBQC-graphs.
\end{itemize}

\begin{figure}
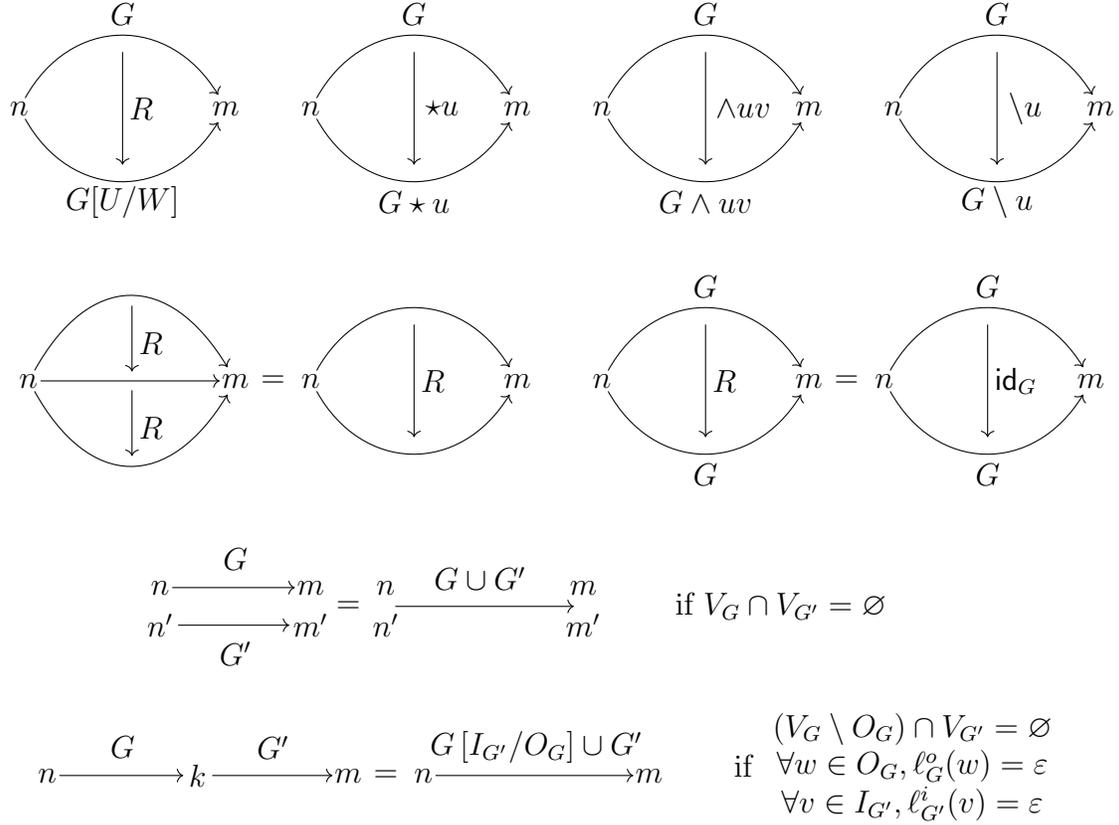

    \centering
    \scalebox{1}{%
        \tikzfig{twocells-eqns-mbqclc}
    }
    \caption{Generating 2-cells and equations in $\MBQCLC$.\label{fig:twocells-eqns-mbqclc}}
\end{figure}

The equations defining the functors make $\QCirc\hookrightarrow\ZX$ and $\iota:\MBQC\hookrightarrow\MBQCLC$ into inclusions, while the functor $D:\MBQCLC\rightarrow\ZX$ is identity on objects and is defined on a generator $G:n\rightarrow m$ as follows:
\begin{enumerate}
\item for each $u\in V_G$, draw a Z-spider with zero phase,
\item for each $(u,v)\in E_G$, draw a Hadamard gate connecting $u$ and $v$,
\item connect all the vertices in $I_G$ to the $n$ input wires in the specified order,
\item connect all the vertices in $O_G$ to the $m$ output wires in the specified order,
\item to each vertex $u\in V_G\setminus O_G$ with $\lambda_G(P,\alpha)$, attach the measurement angle $\alpha$ in the plane $P$, as specified in Table~\ref{tab:measurement-planes},
\item precompose each input $v\in I_G$ with the translation of the monoid element $\ell^i_G(v)$, as specified in Table~\ref{tab:local-cliffords},
\item postcompose each output $w\in O_G$ with the translation of the monoid element $\ell^o_G(w)$, as specified in Table~\ref{tab:local-cliffords}.
\end{enumerate}
Graphically, the action of $D$ on a generator $G:n\rightarrow m$ is depicted by the following equation between terms (note that the inputs $I$ and the outputs $O$ need not be disjoint):
\begin{center}
\scalebox{.8}{\tikzfig{mbqclc-to-zx}}.
\end{center}

The layered point of view allows for a rather succinct statement of the following result.
\begin{proposition}[Lemma~4.1 in~\cite{thereandback}]\label{prop:mbqc-is-mbqc-lc}
We have $D(\MBQCLC) = D(\iota(\MBQC))$. In other words, every ZX-diagram that is a translation of an MBQC+LC-graph is equal to a ZX-diagram that is a translation of an MBQC-graph.
\end{proposition}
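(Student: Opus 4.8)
The plan is to establish the two inclusions of the claimed equality of images separately. The inclusion $D(\iota(\MBQC)) \subseteq D(\MBQCLC)$ is immediate: every MBQC-graph is an MBQC+LC-graph all of whose input and output labels are the empty word of $\LC$, and $\iota$ is precisely this inclusion, so its $D$-image is contained in that of $\MBQCLC$. All of the content therefore lies in the reverse inclusion: given an arbitrary MBQC+LC-graph $G$, I must produce an MBQC-graph $H$ (equivalently, an MBQC+LC-graph with empty boundary labels) such that $D(G) = D(H)$.

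First I would reduce the problem to eliminating the boundary labels $\ell^i$ and $\ell^o$ one vertex at a time. Since each label is an element of the free monoid $\LC$, and $D$ translates it, via Table~\ref{tab:local-cliffords}, into a single-qubit X/Z spider whose phase is a multiple of $\tfrac{\pi}{2}$, I can first normalise each label to a representative of the finite single-qubit Clifford group using the ZX-equations already holding in the layer $\ZX$; this is legitimate because $D$ factors the labels through $\ZX$ and a composite of such spiders is again of this form. It then suffices to show that a single normalised Clifford label can be driven to the empty word while preserving $D(G)$.

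The key mechanism is that the graph operations already defined on MBQC+LC-graphs — local complementation (Definition~\ref{def:local-comp-mbqc}), pivot (Definition~\ref{def:pivot-mbqc}) and vertex removal (Definition~\ref{def:vertex-removal}) — are exactly the moves that transport local Cliffords between the boundary labels and the bulk measurement data, and each of them preserves the ZX-semantics by soundness (Theorem~\ref{thm:zx-soundness}, realised as the 2-equations of Figure~\ref{fig:twocells-eqns-mbqclc}). Thus I would sweep each label into the interior using these moves: the defined action of local complementation prepends a Clifford generator such as $\frac{r}{2}$ to an output label and simultaneously transforms the neighbouring planes, angles and labels (Definition~\ref{def:local-comp-mbqc}), while the input branch (Case~2) introduces a fresh $\XYplane$-vertex carrying the Euler decomposition $\frac{g}{2}\cdot\frac{r}{2}\cdot\frac{g}{2}$ of a Hadamard and thereby relocates an input label into the graph. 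Composing such sound moves lets me realise any single-qubit Clifford as a net transformation and drive every boundary label to the empty word, yielding an MBQC-graph $H$ with $D(G) = D(H)$.

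I expect the main obstacle to be the bookkeeping of the finite case analysis over single-qubit Cliffords together with the boundary constraints. I must check that each elimination step respects the requirement that inputs remain inputs and outputs remain outputs, that the process terminates (some well-founded measure on the collection of labels strictly decreases at each move), and that a vertex lying in both $I$ and $O$, or a label interacting with an adjacent label, does not silently reintroduce a nontrivial Clifford elsewhere in the graph. Verifying that each of the finitely many Clifford representatives is reachable by the allowed sound moves, and that the measure genuinely decreases, is the delicate part; the semantic soundness of the moves themselves is already supplied by Theorem~\ref{thm:zx-soundness}.
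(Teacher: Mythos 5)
There is a genuine gap, and it sits exactly where you locate the ``delicate part''. First, note that the paper does not prove this proposition at all: it imports it as Lemma~4.1 of~\cite{thereandback}, and the proof there is a direct construction, not a rewriting sweep. One decomposes each boundary label, an element of the single-qubit Clifford group, into an Euler product of $\pm\frac{\pi}{2}$ Z- and X-rotations, and then \emph{absorbs} each rotation into the graph by appending fresh measured vertices at the corresponding input or output (Z-rotations via added $\XYplane$-measured vertices on an extended wire, X-rotations via a $\YZplane$-measured gadget vertex), redesignating the boundary vertex. The graph grows by a bounded number of vertices per label, the labels become empty by construction, and $D$ is preserved because each insertion is an identity in $\ZX$.

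Your mechanism, by contrast, is to eliminate labels using only the moves already present on MBQC+LC-graphs: local complementation, pivoting and vertex removal. This is where the argument fails to close. None of these moves converts a label into a measurement: local complementation (Definition~\ref{def:local-comp-mbqc}) only \emph{prepends} fixed generators such as $\frac{r}{2}$ or $-\frac{g}{2}$ to boundary labels, and it does so simultaneously on the complemented vertex and on \emph{all} its boundary neighbours while rewriting the bulk measurement data; vertex removal carries side conditions ($\lambda_G(u)=(P,a\pi)$ with $P\in\{\YZplane,\XZplane\}$, $u$ not on the boundary) that generally fail; and the input branch of local complementation \emph{adds} vertices and appends $\frac{g}{2}\cdot\frac{r}{2}\cdot\frac{g}{2}$ to an input label rather than shrinking it. So your claim that ``composing such sound moves lets me realise any single-qubit Clifford as a net transformation and drive every boundary label to the empty word'' is precisely the assertion that needs proof, and you give no reachability argument and no well-founded measure -- indeed any plausible measure on labels can \emph{increase} at a move, since cancelling a label on one output pollutes the labels of its neighbours. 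The thesis itself signals this obstruction: immediately after the proposition it remarks that to prove it \emph{within} the layered theory one would have to \emph{add} new 2-cells transforming local Clifford labels into measurements, i.e.\ the existing 2-cells of Figure~\ref{fig:twocells-eqns-mbqclc} that you rely on are not known (and not claimed) to suffice. To repair your proof, replace the sweeping strategy with the vertex-insertion construction above: your first reduction step (normalising each $\LC$-word to a Clifford representative, justified because $D$ factors the labels through $\ZX$) is fine and can be kept.
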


The main observations about the translation $D:\MBQCLC\rightarrow\ZX$ are that all the 2-cells preserve the semantics of the translation as well as the existence of gflow, which we record in the following two theorems. They can be seen as high-level summaries of the results presented in Sections~3.1, 4.2 and~4.3 of~\cite{thereandback}.

\begin{theorem}[Soundness]\label{thm:zx-soundness}
Let $G,G':n\rightarrow m$ be two terms in $\MBQCLC$ such that there is a 2-cell $\eta:G\rightarrow G'$. Then $D(G)=D(G')$ in $\ZX$.
\end{theorem}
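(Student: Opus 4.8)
The plan is to prove the statement by induction on the structure of the 2-term $\eta$, following the recursive generation of 2-terms in Definition~\ref{def:2terms}. All of the mathematical content sits in the base case; the inductive steps are immediate from $D$ being a strict monoidal functor. Note that although $\eta$ is a 2-cell, the conclusion only concerns its underlying domain and codomain 1-cells $G$ and $G'$, so I never need $D$ to act on 2-cells --- I only need to compare $D$ evaluated on the two 1-cells attached to each node of the 2-term.

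For the inductive steps, assume $D$ equalises the underlying 1-cells of all strictly smaller 2-terms. If $\eta = \alpha;\beta$ with $\alpha : G\rightarrow H$ and $\beta : H\rightarrow G'$, the induction hypothesis gives $D(G) = D(H) = D(G')$. If $\eta = \alpha\otimes\beta$ with $\alpha : G_1\rightarrow G_1'$ and $\beta : G_2\rightarrow G_2'$, then the underlying 1-cells are $G_1\otimes G_2$ and $G_1'\otimes G_2'$, and strict monoidality gives $D(G_1\otimes G_2) = D(G_1)\otimes D(G_2) = D(G_1')\otimes D(G_2') = D(G_1'\otimes G_2')$; the horizontal case $\eta = \alpha * \beta$ is identical with $;$ in place of $\otimes$, using functoriality $D(G_1;G_2) = D(G_1);D(G_2)$. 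The identity 2-term is trivial.

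The substance is the base case, where $\eta$ is one of the generating 2-cells of $\MBQCLC$ specified in Figure~\ref{fig:twocells-eqns-mbqclc}, namely local complementation, pivot, vertex removal, or renaming. For each I must check that the corresponding graph operation leaves the ZX-translation invariant. Renaming (Definition~\ref{def:zx-renaming}) is immediate, since $D$ depends only on the graph structure together with the ordering of inputs and outputs, which renaming preserves. Pivoting (Definition~\ref{def:pivot-mbqc}) is by definition three consecutive local complementations, so it reduces to the local complementation case. The two genuinely nontrivial cases are local complementation (Definition~\ref{def:local-comp-mbqc}) and vertex removal (Definition~\ref{def:vertex-removal}).

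For these, the strategy is to exhibit explicit equational ZX-calculus derivations of $D(G) = D(G\star u)$ and $D(G) = D(G\setminus u)$, translating the graph-theoretic definitions into their diagrammatic meaning via the correspondences in Tables~\ref{tab:measurement-planes} and~\ref{tab:local-cliffords}. The main obstacle, and the heart of the bookkeeping, is that both operations introduce local Clifford operators at the affected vertices and at inputs/outputs; one must verify that the phase and measurement-plane updates prescribed in Definitions~\ref{def:local-comp-mbqc} and~\ref{def:vertex-removal} are exactly those produced by pushing these Clifford operators through the spiders using the spider-fusion, colour-change and $\pi$-copy rules of~\eqref{eq:zx-rules}. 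The separate treatment of inputs (Case~2 of Definition~\ref{def:local-comp-mbqc}) must likewise be matched by the ZX-identity inserting a degree-two $\XYplane$-vertex on an input wire. These derivations are precisely the content of Sections~3.1, 4.2 and~4.3 of~\cite{thereandback}, from which the base case can be imported; combining them with the induction above completes the proof.
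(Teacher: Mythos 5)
Your proposal is correct and takes essentially the same route as the paper: the paper offers no independent proof, explicitly presenting Theorem~\ref{thm:zx-soundness} as a high-level summary of Sections~3.1, 4.2 and~4.3 of~\cite{thereandback}, which is exactly where you import the substantive base cases (local complementation and vertex removal), while renaming and pivoting reduce just as you describe. The structural induction over Definition~\ref{def:2terms} that you add --- with the vertical, tensor and horizontal cases discharged by functoriality and strict monoidality of $D$, and the observation that only the underlying 1-cells matter --- is the routine scaffolding the paper leaves implicit, and you carry it out correctly.
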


\begin{theorem}[Preservation of gflow]\label{thm:pres-gfow}
Let $G,G':n\rightarrow m$ be two terms in $\MBQCLC$ such that there is a 2-cell $\eta:G\rightarrow G'$. Then, if $G$ has a gflow, then so does $G'$.
\end{theorem}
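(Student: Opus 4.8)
The plan is to argue by structural induction on the 2-term $\eta$, following the generating procedure of Definition~\ref{def:2terms}. A 2-cell $G\to G'$ is built from generating 2-cells by means of identities, vertical composition $;$, tensor $\otimes$ and horizontal composition $*$, so it suffices to verify that the existence of a gflow is preserved by each generating 2-cell and that this property is stable under the three ways of combining 2-cells. Because the conclusion is an implication (a gflow on $G$ forcing one on $G'$), the induction simply threads this hypothesis through each construction step, exactly as one would for the companion soundness statement, Theorem~\ref{thm:zx-soundness}.

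For the base case, the generating 2-cells of $\MBQCLC$ (Figure~\ref{fig:twocells-eqns-mbqclc}) are precisely the graph rewrites: local complementation (Definition~\ref{def:local-comp-mbqc}), pivot (Definition~\ref{def:pivot-mbqc}), vertex removal (Definition~\ref{def:vertex-removal}) and renaming (Definition~\ref{def:zx-renaming}). For each I would invoke the corresponding preservation result of Backens et~al.~\cite{thereandback}: local complementation carries a gflow to a gflow of the rewritten graph, pivot reduces to this since it is a threefold local complementation (Definition~\ref{def:pivot-mbqc}), vertex removal preserves gflow by the focusing argument underlying Theorem~\ref{thm:gflow}, and renaming is an isomorphism of the underlying MBQC-graph and hence transports any gflow verbatim. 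This is where all the genuine content lives; it is cited wholesale from \cite{thereandback,browne-gflow}, and the contribution here is only the repackaging as generating 2-cells.

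The inductive cases are then routine. For $\id_G$ we have $G=G'$, so there is nothing to prove; for vertical composition $\alpha;\beta$ with $\alpha\colon G\to G'$ and $\beta\colon G'\to G''$ the two inductive hypotheses compose transitively to give the implication for $G\to G''$. The parallel composite $\alpha\otimes\beta$ corresponds to the disjoint union of the underlying graphs, for which a gflow exists exactly when each summand carries one, so the statement factors componentwise and follows from the inductive hypotheses on $\alpha$ and $\beta$.

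The main obstacle is the horizontal composite $\alpha*\beta$, which acts on the sequential gluing of the two underlying graphs. Here the inductive hypotheses only tell us that $\alpha$ and $\beta$ preserve gflow of their respective factors separately, whereas the theorem supplies a gflow of the glued graph; deducing gflows of the two factors from one on the composite (reflection) need not hold in general. Rather than prove reflection, I would circumvent it: using the interchange law for 2-cells together with the locality of each rewrite (every operation in Definitions~\ref{def:local-comp-mbqc}, \ref{def:pivot-mbqc} and~\ref{def:vertex-removal} touches only a bounded neighbourhood and fixes the remaining vertices, edges and labels), I would rewrite an arbitrary 2-cell as a finite vertical composite of 2-cells, each applying a single generating rewrite to the full glued graph while whiskering by identities elsewhere. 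Every such step is then an instance of the base case, and preservation for the whole 2-cell follows by transitivity. The point to check carefully is that whiskering a generating 2-cell by identities realises the same graph operation on the glued graph as on the isolated generator, which is immediate from the locality just noted.
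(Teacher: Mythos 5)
The paper does not actually prove this theorem: it states Theorems~\ref{thm:zx-soundness} and~\ref{thm:pres-gfow} explicitly as ``high-level summaries of the results presented in Sections~3.1, 4.2 and~4.3 of~\cite{thereandback}'', so the entire content is the wholesale citation that you place in your base case. Up to that point your proposal agrees with the paper, and your identity and vertical-composition cases are unproblematic: 2-cells between single generators are vertical composites of generating rewrites, and transitivity of the implication handles them. This is plausibly all the theorem is meant to assert, since the paper never defines what it means for a \emph{composite} term in $\MBQCLC$ to have a gflow --- gflow is defined only for an (MBQC+LC-)graph, i.e.~for a generator.

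The genuine gap is in your treatment of horizontal (and tensor) composition, and specifically in the final claim that whiskering a generating 2-cell by identities ``realises the same graph operation on the glued graph as on the isolated generator'', which you call immediate from locality. It is not, because the rewrites are not local in the required sense. Local complementation (Definition~\ref{def:local-comp-mbqc}) modifies the labels of \emph{every} neighbour of $u$: it changes the measurement planes and angles of non-output neighbours and multiplies the $\ell^o$ (resp.~$\ell^i$) labels of output (resp.~input) neighbours by local Clifford generators --- and the input/output vertices are exactly the interface along which a horizontal composite is glued. Consequently, gluing $G\star u$ onto $K$ does not yield the same MBQC+LC-graph as performing the local complementation on the glued graph: in the former the change at the interface is recorded as a local Clifford label sitting on the glued wire, in the latter it would appear as a modified measurement label of the now-internal vertex. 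These are different data, and the generating 2-cells of Figure~\ref{fig:twocells-eqns-mbqclc} contain no moves converting one into the other (the paper even remarks, below Proposition~\ref{prop:mbqc-is-mbqc-lc}, that 2-cells transforming local Clifford labels into measurements \emph{could} be added but are not present). Pivot and vertex removal inherit the same defect, since both relabel neighbours of the affected vertices. So the whiskered step is not an instance of the base case, and your reduction of an arbitrary 2-cell to base-case steps on ``the full glued graph'' does not go through; repairing it would require either defining gflow for composite terms together with a gluing lemma compatible with the rewrites, or enlarging the set of generating 2-cells, neither of which the paper (or the cited results of~\cite{thereandback,browne-gflow}) supplies.
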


Compositions of the generating 2-cells already allow for some simplification of MBQC+LC-graphs (and hence of the corresponding measurement patterns). For example, by combining the vertex removal 2-cells with the local complementation and pivoting 2-cells, one can show that any non-input Clifford vertices\footnote{Vertices whose measurement angle is an integer multiple of $\frac{\pi}{2}$} can be removed (\cite[Theorem~4.12]{thereandback}). One could, of course, add more 2-cells. For example, we could add 2-cells that transform local Clifford labels (given by the functions $\ell^i$ and $\ell^o$) into measurements, hence transforming an MBQC+LC-graph into an MBQC-graph one step at a time. This, in turn, could be used to prove Proposition~\ref{prop:mbqc-is-mbqc-lc} within the layered theory.

Theorems~\ref{thm:zx-soundness} and~\ref{thm:pres-gfow} give a general pattern on how to add new 2-cells to $\MBQCLC$. Namely, one needs to choose the 2-cells in such a way that the two theorems remain valid. In other words, one has to check that the graph is changed in such a way that both preserves the semantics under the functor $D$ and the existence of gflow.

Likewise, we could add the remaining 2-cells needed for {\em circuit extraction} -- the process of transforming any MBQC-graph with a gflow into a semantically equivalent circuit. If an MBQC-graph has gflow, there is an efficient algorithm for rewriting it step-by-step into an equivalent circuit form. This shows that the image of graphs (MBQC or MBQC+LC) with a gflow is, in fact, contained in $\QCirc$. In fact, the whole algorithm can be presented within a layered theory. For example, the following equation appears as Step~1 of the extraction procedure (\cite[p.~42]{thereandback}):
\begin{center}
\scalebox{1}{\tikzfig{example-unfuse-gates}}.
\end{center}
We conclude this case study by noting that the informal boundary between the ``unextracted'' and ``extracted'' parts of the diagram now becomes a formal term within the layered theory:
\begin{center}
\scalebox{1}{\tikzfig{circuit-extraction-in-layers}},
\end{center}
and each step of the algorithm consists of applying a 2-cell on the left-hand side ($\MBQCLC$) and moving a part of the graph to the right-hand side ($\QCirc$).

\section{Calculus of communicating systems}\label{sec:ccs}
We construct a layered monoidal theory capturing two kinds of semantics for a fragment of the calculus of communicating systems (CCS)~\cite{milner} -- {\em reduction semantics} and {\em labelled transition system} (LTS) semantics -- as well as a functorial translation from a subcategory of the former to the latter. Intuitively, the LTS semantics may be seen as a lower level implementation of the concurrent processes described more coarsely by the reduction semantics. We use the layered monoidal theory to show soundness and completeness of the reduction semantics with respect to a fragment of the LTS semantics (Corollary~\ref{cor:ccs-soundness-completeness}). All the results we obtain here are standard (cf.~Corollaries~\ref{cor:lts-simulation} and~\ref{cor:ccs-soundness-completeness}); however, the treament via monoidal theories and string diagrams is novel.

The calculus of communicating systems~\cite{milner} is widely used to reason about concurrent programs and processes. One of its main features is {\em synchronisation} of two processes whose input and output match, modelled as a {\em reduction} or a {\em handshake}. Here we consider a restricted version of CCS with only action prefixing and parallel composition, and two ways to give semantics to the CCS expressions: the reduction semantics (Definition~\ref{def:reduction-semantics}) operates at a higher level, while the LTS semantics (Definition~\ref{def:lts-semantics}) is somewhat more fine-grained. We loosely follow the diagrams from the talk by Krivine~\cite{krivine-talk}.

Let us fix a set $A$, whose elements are called the {\em action names}. We define the sets $\bar A\coloneqq\{\bar a : a\in A\}$ and $\Act\coloneqq A\cup\bar A\cup\{\tau\}$, the latter called the set of {\em actions}, and $\tau$ called the {\em silent action}. The set of {\em processes} is defined recursively as follows, where $x$ ranges over $\Act$:
\begin{center}
\begin{tabular}{ c c | c | c}
  $P\Coloneqq$ & $0$ & $x.P$ & $P\parallel P$.
\end{tabular}
\end{center}

\begin{definition}[Congruence]
  Define the {\em congruence} as the equivalence relation $\sim$ on the set of processes generated by:
  \begin{center}
  \begin{tabular}{ c c c }
    $P\parallel Q\sim Q\parallel P$, & \quad & $(P\parallel Q)\parallel R\sim P\parallel (Q\parallel R)$, \\ \\
    $0\parallel P\sim P$, & \quad & if $P\sim P'$ and $Q\sim Q'$, then $P\parallel Q\sim P'\parallel Q'$.
  \end{tabular}
\end{center}
\end{definition}

\begin{definition}[Reduction semantics]\label{def:reduction-semantics}
A {\em rewrite rule} in {\em reduction semantics} is an ordered pair of processes, which we write as $P\rightarrow Q$, generated by the following three production rules, where $a\in A$:
\begin{center}
  \begin{prooftree}
    \AxiomC{\phantom{$P\rightarrow Q$}}
    \RightLabel{\customlabel{red-sem:tau}{\texttt s}}
    \UnaryInfC{$\tau.P\rightarrow P$}
    \DisplayProof
    \AxiomC{$P\rightarrow Q$}
    \RightLabel{\customlabel{red-sem:p}{\texttt p}}
    \UnaryInfC{$P\parallel R\rightarrow Q\parallel R$}
    \DisplayProof
    \AxiomC{\phantom{$P\rightarrow Q$}}
    \RightLabel{\customlabel{red-sem:r}{\texttt r}}
    \UnaryInfC{$a.P\parallel\bar a.Q\rightarrow P\parallel Q$}
  \end{prooftree}
  \begin{prooftree}
    \AxiomC{$P\rightarrow Q$}
    \AxiomC{$P\sim P'$}
    \AxiomC{$Q\sim Q'$}
    \RightLabel{\customlabel{red-sem:c}{\texttt c}}
    \TrinaryInfC{$P'\rightarrow Q'$}
  \end{prooftree}
\end{center}
\end{definition}
In other words, the rewrite rules are parallel compositions~\ref{red-sem:p} of either the silent action~\ref{red-sem:tau} or the {\em reduction}~\ref{red-sem:r}, up to the congruence~\ref{red-sem:c}. For instance, we can derive the following rewrite rule, where $a\in A$ and $x\in\Act$:
\begin{equation}\label{eqn:CCS-derivation}
a.0\parallel (x.0\parallel\bar a.0)\rightarrow 0\parallel (x.0\parallel 0).
\end{equation}

\begin{definition}[LTS semantics]\label{def:lts-semantics}
A {\em labelled transition} is a triple $(P,x,Q)$, written as $P\xrightarrow x Q$, where $P$ and $Q$ are processes and $x\in\Act$, generated by the production rules below, where $a\in A$:
\begin{center}
  \begin{prooftree}
    \AxiomC{\phantom{$P\xrightarrow x Q$}}
    \RightLabel{\customlabel{lts-sem:a}{\texttt{a}}}
    \UnaryInfC{$x.P\xrightarrow x P$}
    \DisplayProof
    \AxiomC{$P\xrightarrow x Q$}
    \RightLabel{\customlabel{lts-sem:pl}{\texttt{pl}}}
    \UnaryInfC{$R\parallel P\xrightarrow x R\parallel Q$}
    \DisplayProof
    \AxiomC{$P\xrightarrow x Q$}
    \RightLabel{\customlabel{lts-sem:pr}{\texttt{pr}}}
    \UnaryInfC{$P\parallel R\xrightarrow x Q\parallel R$}
  \end{prooftree}
  \begin{prooftree}
    \AxiomC{$P\xrightarrow a P'$}
    \AxiomC{$Q\xrightarrow{\bar a} Q'$}
    \RightLabel{\customlabel{lts-sem:r1}{\texttt{r1}}}
    \BinaryInfC{$P\parallel Q\xrightarrow{\tau} P'\parallel Q'$}
    \DisplayProof
    \AxiomC{$P\xrightarrow{\bar a} P'$}
    \AxiomC{$Q\xrightarrow a Q'$}
    \RightLabel{\customlabel{lts-sem:r2}{\texttt{r2}}}
    \BinaryInfC{$P\parallel Q\xrightarrow{\tau} P'\parallel Q'$}
  \end{prooftree}
\end{center}
\end{definition}

We are now ready to define the layered monoidal theory for the two semantics of CCS defined above, which we refer to as CCS. CCS has three layers: $\Red$ capturing reduction semantics, $\LTS$ capturing LTS semantics, and $\Comp(\Red)$ capturing those reduction semantics rules where some {\em computation} occurs, that is, the rules without the congruence (i.e.~only the production rules~\ref{red-sem:tau}, \ref{red-sem:r} and~\ref{red-sem:p}). CCS has the following shape:
\begin{center}
\scalebox{1}{\tikzfig{ccs-layers}}.
\end{center}

The colours of the monoidal signature $\Red$ are given by the set containing three symbols for every process: $P$, $P\uparrow$ and $P\quest$. We refer to the plain process symbols as the {\em active fragment}, to the process symbols with $\uparrow$ as the {\em silent fragment}, and to the processes with $\quest$  as the {\em subsilent fragment}. The intuition behind the fragments is as follows:
\begin{itemize}
\item for the processes $P$ in the active fragment, no computation has yet occurred -- the only transformations that keep a process inside the active fragment correspond to the congruence,
\item for the processes $P\uparrow$ in the silent fragment, a computation (silent action, reduction or parallel composition with a process in the silent fragment) has occurred, and no further reductions or silent actions are possible
\item the processes $P\quest$ in the subsilent fragment are needed in order to capture the congruence in the silent fragment -- each process in the subsilent fragment needs to be eventually merged with a process in the silent fragment in order to obtain a valid rewrite rule (see Proposition~\ref{prop:red-string-corresp}).
\end{itemize}

The generators of $\Red$ are given by: (1) the following generators capturing the production rules~\ref{red-sem:tau}, \ref{red-sem:r} and~\ref{red-sem:p} of Definition~\ref{def:reduction-semantics}, where $a\in A$:
\begin{equation}\label{generators-red}
\scalebox{.9}{\tikzfig{generators-CSS}},
\end{equation}
and (2) the {\em structural generators}, consisting of the symmetry and the generators in Figure~\ref{fig:ccs-structural-generators}.

\begin{figure}
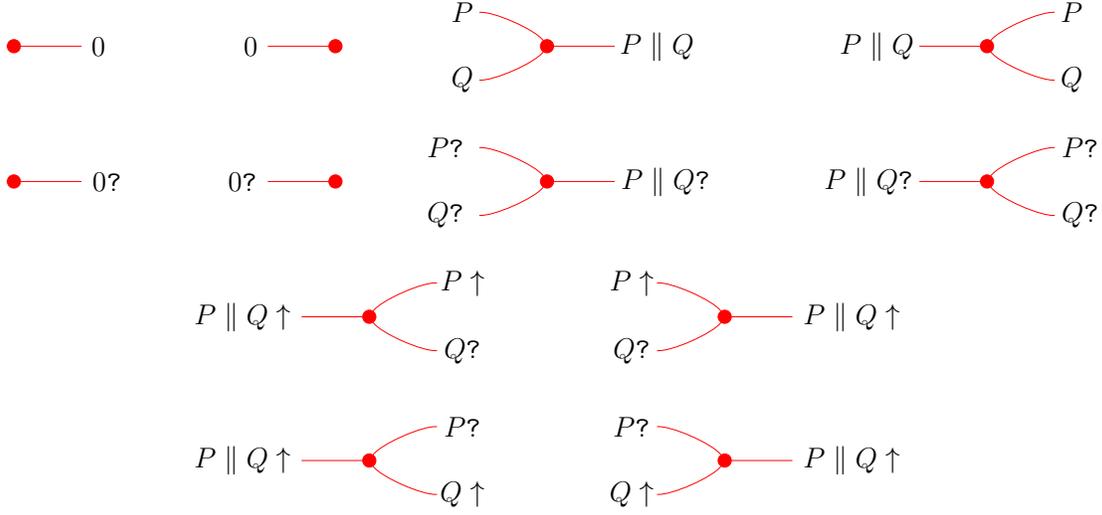

    \centering
    \scalebox{0.9}{
        \tikzfig{ccs-structural-generators}
    }
    \caption{The structural generators of $\Red$.\label{fig:ccs-structural-generators}}
\end{figure}

The monoidal signature $\Comp(\Red)$ is given by restricting the signature $\Red$ to the active and the silent fragments on colours, and the only generators are given by~\eqref{generators-red} and the ``splitting'' structural generator:
\begin{equation}\label{generator-ccs-splitting}
\scalebox{.9}{\tikzfig{generator-ccs-splitting}}.
\end{equation}
Note that, in particular, $\Comp(\Red)$ does not have the symmetry generators.

The monoidal signature $\LTS$ has as colours the symbols $P$ and $P\uparrow x$ for every process $P$ and an action $x\in\Act$. Similarly to $\Red$, we refer to the plain process symbols as the {\em active fragment}. The process symbols with $\uparrow x$ are referred to as the {\em pending fragment}: one may think of $P\uparrow x$ as a process with a ``pending'' action $x$. In this case, we call the {\em silent fragment} the colours of the form $P\uparrow\tau$, i.e.~a pair of a process and the silent action. The generators of $\LTS$ consist of the ``splitting'' structural generator~\eqref{generator-ccs-splitting}, together with the following generators capturing the production rules in Definition~\ref{def:lts-semantics}, where $a\in A$ and $x\in\Act$:
\begin{equation}\label{generators-LTS}
\scalebox{.9}{\tikzfig{generators-LTS}}.
\end{equation}

The equations of the layered theory for CCS are given in Figure~\ref{fig:ccs-layered-theory}, in addition to which the usual identities of a symmetric monoidal theory (Definition~\ref{def:symm-mon-thy}) hold in $\Red$. The reason for calling the generators in Figure~\ref{fig:ccs-structural-generators} {\em structural} is that they capture the congruence by casting it as (non-strict) symmetric monoidal structure. The first set of equations in Figure~\ref{fig:ccs-layered-theory} says that the structural generators in $\Red$ are isomorphisms. One then obtains the congruence relation by considering isomorphisms which have exactly one input and one output wire such that both are in the same fragment (active, silent or subsilent). We make this precise in the following lemma.

\begin{figure}
    \centering
    \scalebox{0.9}{
        \tikzfig{ccs-layered-theory}
    }
    \caption{Layered theory for CCS. The equations in the first four rows hold for any structural generators in $\Red$ for which the composition is defined. The remaining equations define the functor $\Comp(\Red)\rightarrow\LTS$, where the arrows $\mapsto$ denote the 0-equations. The functor $\Comp(\Red)\hookrightarrow\Red$ is defined as the inclusion, whose defining equations we have, therefore, omitted from the figure.\label{fig:ccs-layered-theory}}
\end{figure}

\begin{lemma}\label{lma:iso-congruence}
Let $P$ and $Q$ be processes. The following are equivalent:
\begin{enumerate}[label={(\arabic*)}]
\item $P\sim Q$,
\item there is an isomorphism $P\xrightarrow{\sim}Q$,
\item there is an isomorphism ${P\quest} \xrightarrow{\sim}Q\quest$,
\item there is an isomorphism ${P\uparrow} \xrightarrow{\sim}Q\uparrow$,
\end{enumerate}
where all the isomorphisms are in $\Red$.
\end{lemma}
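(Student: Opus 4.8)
The plan is to prove the four-way equivalence by exploiting the fact, announced in the paragraph preceding the lemma, that the structural generators of $\Red$ (the symmetry together with those in Figure~\ref{fig:ccs-structural-generators}) are isomorphisms encoding exactly the generating clauses of the congruence. I treat $(1)\Leftrightarrow(2)$ as the core equivalence and recover $(1)\Leftrightarrow(3)$ and $(1)\Leftrightarrow(4)$ by repeating the argument inside the other fragments.

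First I would establish $(1)\Rightarrow(2)$ by induction on the derivation of $P\sim Q$. The congruence $\sim$ is the least equivalence relation closed under commutativity, associativity, unitality and the parallel-congruence clause; reflexivity, symmetry and transitivity are handled by identities, by inverses (every structural generator is invertible by the first equations of Figure~\ref{fig:ccs-layered-theory}), and by composites of isomorphisms, respectively. For each remaining clause I exhibit the corresponding structural isomorphism of $\Red$: commutativity $P\parallel Q\xrightarrow{\sim}Q\parallel P$ from the symmetry (conjugated by the split/merge generators), associativity and unitality from the associator and unitor isomorphisms, and the clause ``if $P\xrightarrow{\sim}P'$ and $Q\xrightarrow{\sim}Q'$ then $P\parallel Q\xrightarrow{\sim}P'\parallel Q'$'' by splitting $P\parallel Q$ into $P\otimes Q$, applying the monoidal product of the two given isomorphisms, and merging back.

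The substance of the lemma is the converse $(2)\Rightarrow(1)$. Here I would first use the structural isomorphisms repeatedly to decompose any process colour $P$ into a tensor product $A_1\otimes\dots\otimes A_n$ of \emph{atomic} colours --- those processes that are neither $0$ nor of the form $R\parallel S$, i.e.\ the action-prefixed $x.R$ --- so that $P\xrightarrow{\sim}A_1\otimes\dots\otimes A_n$ and likewise $Q\xrightarrow{\sim}B_1\otimes\dots\otimes B_m$ inside $\Red$. Since $\Red$ presents a symmetric monoidal theory (Definition~\ref{def:symm-mon-thy}), I would then classify the isomorphisms between such tensor products: any isomorphism $A_1\otimes\dots\otimes A_n\to B_1\otimes\dots\otimes B_m$ must be a permutation matching the atomic factors, forcing $n=m$ and $A_i=B_{\pi(i)}$ for some permutation $\pi$. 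Transporting the isomorphism of~(2) across the two decompositions yields such a permutation, whence the multisets of atomic components of $P$ and $Q$ coincide, which is precisely $P\sim Q$.

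The main obstacle is this classification step: I must rule out ``exotic'' isomorphisms by showing (a) that no non-structural generator of $\Red$ (those of~\eqref{generators-red} capturing $\mathtt{s}$, $\mathtt{r}$, $\mathtt{p}$) can occur in an invertible morphism, since these move colours irreversibly between the active and silent fragments and so admit no two-sided inverse, and (b) that distinct atomic colours are non-isomorphic, as no structural generator relates two distinct atoms. Both are established by a combinatorial analysis of which generators can appear in a morphism whose source and target are single colours of the same fragment, tracking the fragment label as an invariant preserved by structural generators but strictly altered by the computational ones. Finally, $(1)\Leftrightarrow(3)$ and $(1)\Leftrightarrow(4)$ follow by repeating the identical argument inside the subsilent and silent fragments, whose structural generators mirror those of the active fragment, so that the decomposition into atomic components and its classification carry over verbatim.
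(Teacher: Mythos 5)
Your proposal is correct and takes essentially the same route as the paper, which proves the lemma by the same two observations you make explicit: that isomorphisms between single colours can consist only of structural generators (the computational generators~\eqref{generators-red} being irreversible across fragments), and that structural-generator terms with single-colour boundaries alter a process exactly up to adding or removing $0$, rebracketing and reordering --- i.e., precisely up to $\sim$. One caution on your final step: the silent-fragment case does not carry over quite ``verbatim'', since splitting ${P\parallel Q\uparrow}$ produces mixed silent/subsilent factors and the $\uparrow$ marker can migrate (e.g.\ ${P\uparrow}\otimes{Q\quest}\cong{P\quest}\otimes{Q\uparrow}$), so the classification is not a fragment-preserving permutation matching $A_i=B_{\pi(i)}$; your conclusion survives because only the multiset of underlying atomic processes matters, and this is exactly the choice of splittings the paper notes is reconciled by the equations of Figure~\ref{fig:ccs-layered-theory}.
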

\begin{proof}
One observes that the terms consisting of only the structural generators whore arity and coarity are both a single colour change the process only up to adding or removing $0$, reassociating the brackets and reordering, i.e.~precisely up to the congruence. For example, $(P\parallel Q)\parallel R \sim P\parallel (Q\parallel R)$ is witnessed by the following isomorphism:
\begin{center}
\scalebox{.9}{\tikzfig{ccs-associativity-iso}}.
\end{center}
Note that there is an apparent choice of two other terms that achieve the same rebracketing, corresponding to choosing which of the processes $P$ and $Q$ remains in the silent fragment (i.e.~whether $P\uparrow$ or $Q\uparrow$ appears in the term). All three terms are, however, equal under the theory CCS in Figure~\ref{fig:ccs-layered-theory}.
\end{proof}

Certain terms in $\Red$ and $\LTS$ correspond to the derivable rewrite rules and transitions in reduction semantics (Definition~\ref{def:reduction-semantics}) and LTS semantics (Definition~\ref{def:lts-semantics}). We begin with the correspondence with reduction semantics.
\begin{proposition}\label{prop:red-string-corresp}
Let $P$ and $Q$ be processes. There is a term in $\Red(P,{Q\uparrow})$ if and only if the rewrite rule $P\rightarrow Q$ is derivable using the production rules of reduction semantics in Definition~\ref{def:reduction-semantics}.
\end{proposition}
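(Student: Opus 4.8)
The plan is to prove the two implications separately: the forward direction (derivability gives a term) by induction on the derivation in reduction semantics, and the backward direction (a term gives derivability) by induction on the structure of the term, exploiting the active/silent/subsilent fragment discipline.

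For the forward direction I would induct on the derivation of $P\rightarrow Q$. The three non-congruence production rules each correspond directly to a generator of $\Red$ from \eqref{generators-red}: rule \ref{red-sem:tau} supplies a term $\tau.P\rightarrow P\uparrow$, rule \ref{red-sem:r} supplies a term witnessing the reduction of $a.P\parallel\bar a.Q$ to $(P\parallel Q)\uparrow$, and rule \ref{red-sem:p} is witnessed by composing an inductively obtained term in $\Red(P,{Q\uparrow})$ with the parallel generator that adjoins the active context $R$ and moves the result into the silent fragment. The congruence rule \ref{red-sem:c} is handled by Lemma~\ref{lma:iso-congruence}: from $P\sim P'$ and $Q\sim Q'$ we obtain isomorphisms $P'\xrightarrow{\sim}P$ and ${Q\uparrow}\xrightarrow{\sim}{Q'\uparrow}$ in $\Red$, and pre- and postcomposing the inductive term with these isomorphisms yields a term in $\Red(P',{Q'\uparrow})$.

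For the backward direction I would induct on the structure of a term $t\in\Red(P,{Q\uparrow})$, using the fragment typing to constrain how $t$ can be built. The essential auxiliary observation is that the structural generators of Figure~\ref{fig:ccs-structural-generators} only permute, reassociate, and introduce or delete $0$-summands within a single fragment, so (generalising Lemma~\ref{lma:iso-congruence}) any subterm staying entirely within the active fragment, or entirely within the silent fragment, realises a congruence of processes. A term from the fully active colour $P$ to the silent colour $Q\uparrow$ must therefore factor as a congruence on the input, followed by exactly one use of a computational generator (\ref{red-sem:tau} or \ref{red-sem:r}) possibly propagated through parallel contexts via \ref{red-sem:p}, followed by a congruence on the output. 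Reading this factorisation off as a derivation yields $P\rightarrow Q$.

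The main obstacle is the backward direction, specifically the bookkeeping forced by the subsilent fragment $\quest$. Because parallel composition and the symmetry let the single computational step interleave with structural rearrangements in many syntactically distinct ways, one cannot simply read off the outermost constructor of $t$; instead I expect to need a normal-form argument showing that every term is equal, under the equations of Figure~\ref{fig:ccs-layered-theory}, to one in which the congruence-implementing structural generators are pushed to the input and output boundaries with a single computational generator in the middle. Establishing that every $\quest$-typed wire must eventually merge with a silent process, so that no dangling partial computation can produce a term without a corresponding derivation, is the delicate step, and is precisely what the three-fragment typing was designed to enforce.
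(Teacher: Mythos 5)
Your proposal is correct, and your forward direction coincides with the paper's proof: the generators~\eqref{generators-red} witness the rules \ref{red-sem:tau}, \ref{red-sem:r} and \ref{red-sem:p} (the last by composing the inductively obtained term with the parallel-composition generator), and Lemma~\ref{lma:iso-congruence} converts the congruence rule \ref{red-sem:c} into pre- and postcomposition with isomorphisms. The backward direction is where you diverge, in the normal form you aim for. The paper's pivotal step is a counting observation you leave implicit: since no generator can combine two objects of the silent fragment, any term in $\Red(P,{Q\uparrow})$ contains exactly one occurrence of $\Sgen$ and none of $\Rgen$, or exactly one of $\Rgen$ and none of $\Sgen$. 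Given this, the paper rewrites the term only into the \emph{interspersed} generic forms \eqref{red-sem-generic-form-1} and \eqref{red-sem-generic-form-2}, in which isomorphism blocks remain between the single computational generator and each successive parallel-composition generator $\Pgen$; each isomorphism block is read off as an application of \ref{red-sem:c} and each intervening segment as an application of \ref{red-sem:tau}/\ref{red-sem:r} or \ref{red-sem:p}. Because the congruence rule may be invoked at every stage of a derivation, there is no need to commute the structural generators past the $\Pgen$'s to the boundaries -- which is exactly the delicate step your plan worries about. Your stronger boundary-only factorisation (one input congruence, a computational core, one output congruence) is in fact true, but the paper establishes it only afterwards, as Lemma~\ref{lma:red-term-to-comp}, where it is genuinely needed to land in the standard terms of $\Comp(\Red)$; deriving the present proposition from it is sound but proves more than required. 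Likewise, your concern about dangling $\quest$-typed wires dissolves once the single-computational-generator count is in place: the structural generators are isomorphisms, and the equations of Figure~\ref{fig:ccs-layered-theory} force every subsilent wire to be absorbed into the unique silent one, so no separate argument is needed.
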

\begin{proof}
To show that we have a term whenever a rewrite rule is derivable (the `if' direction), we show that every deduction rule holds as a recursion scheme for terms. First, the silent action rule~\ref{red-sem:tau} holds as we have the generators $\Sgen : {\tau.P}\rightarrow {P\uparrow}$, and the reduction rule~\ref{red-sem:r} holds since we have the terms
\begin{center}
\scalebox{.9}{\tikzfig{red-sem-r-string}}.
\end{center}
For the parallel composition rule~\ref{red-sem:p}, suppose that we have a term $\mathtt t:P\rightarrow {Q\uparrow}$. One then obtains the term as required by the rule:
\begin{center}
\scalebox{.9}{\tikzfig{red-sem-p-string}}.
\end{center}
The congruence rule~\ref{red-sem:c} follows from Lemma~\ref{lma:iso-congruence} by replacing the assumed congruences with isomorphisms in the active and silent fragments.

To show the `only if' direction, i.e.~that every term corresponds to a derivable rewrite rule, we show that every term in $\Red(P,{Q\uparrow})$ is equal to a diagram that can be obtained by the recursion scheme above. Hence suppose that there is a term $\mathtt t:P\rightarrow {Q\uparrow}$. Since there is no way to combine the objects in the silent fragment, we have two cases: (1) $\mathtt t$ contains exactly one instance of the generator $\Sgen : {\tau.R}\rightarrow {R\uparrow}$ and no instances of the generator $\Rgen$, and (2) $\mathtt t$ contains exactly one instance of the generator
\begin{center}
\scalebox{.9}{\tikzfig{red-sem-rgen}},
\end{center}
and no instances of the generator $\Sgen$.

In Case~(1) we can rewrite the term $\mathtt t$ to have the following form:
\begin{equation}\label{red-sem-generic-form-1}
\scalebox{.9}{\tikzfig{red-sem-generic-form-1}},
\end{equation}
where the vertical dashed lines indicate any number of isomorphisms, that is, symmetries and the structural generators (Figure~\ref{fig:ccs-structural-generators}). The ellipsis stands for a sequence of parallel composition generators, of which there are assumed to be $n$: at step $i$, the process $R_i\uparrow$ in the silent fragment is composed with the $i$th identity wire $T_i$, resulting in ${R_i\parallel T_i}\uparrow$, which is isomorphic to ${R_{i+1}\uparrow}$ (taking ${R_{n+1}\uparrow}={Q\uparrow}$). We conclude Case~(1) by observing that the segments between the vertical dashed lines corresponds to an application of a reduction semantics production rule (either~\ref{red-sem:tau} or~\ref{red-sem:p}), and the dashed lines to an application of the congruence rule~\ref{red-sem:c}.

In Case~(2), the term may likewise be rewritten to have the following form:
\begin{equation}\label{red-sem-generic-form-2}
\scalebox{.9}{\tikzfig{red-sem-generic-form-2}},
\end{equation}
where the structure of the isomorphisms and parallel composition is depicted as in Case~(1). We then read off a derivation in reduction semantics exactly as in Case~(1), except that we replace the silent action rule~\ref{red-sem:tau} with the reduction rule~\ref{red-sem:r}.
\end{proof}

For LTS semantics, the connection between derivable labelled transitions and string diagrammatic terms requires a further restriction on the latter, as in this case there are no isomorphisms. For example, one can construct the term
\begin{equation}\label{lts-non-derivable-term}
\scalebox{.9}{\tikzfig{lts-non-derivable-term}},
\end{equation}
but the corresponding labelled transition
$$x.P\parallel (Q\parallel R)\xrightarrow x (P\parallel Q)\parallel R$$
is not derivable in LTS semantics. This is because string diagrammatic representation imports some associativity equations into the terms. The correct term corresponding to the labelled transition
$$x.P\parallel (Q\parallel R)\xrightarrow x P\parallel (Q\parallel R)$$
would be
\begin{equation}\label{lts-correct-derivable-term}
\scalebox{.9}{\tikzfig{lts-correct-derivable-term}}.
\end{equation}
We, therefore, need to restrict to the terms of the following form.
\begin{definition}[Standard terms]
The set of {\em standard} terms in $\LTS$ is recursively generated as follows:
\noindent\paragraph{Base case:} for every process $P$, the following terms are standard:
\begin{center}
\scalebox{.9}{\tikzfig{ccs-normal-form-base-case}},
\end{center}
\noindent\paragraph{Recursive case:} if the terms $\mathtt t$ and $\mathtt s$ are standard, then so is
\begin{center}
\scalebox{.9}{\tikzfig{ccs-normal-form-recursive-case}},
\end{center}
where $\mathtt G\in\Gen\coloneq\{\Pgenl,\Pgenr,\Rgenone,\Rgentwo\}$, whenever the composition is defined.
\end{definition}
As expected, the term~\eqref{lts-correct-derivable-term} is standard, whereas the term~\eqref{lts-non-derivable-term} is not.

\begin{proposition}\label{prop:lts-string-corresp}
Let $P$ and $Q$ be processes, and let $x\in\Act$ be an action. There is a standard term in $\LTS(P,{Q\uparrow x})$ if and only if the labelled transition $P\xrightarrow x Q$ is derivable using the production rules of the LTS semantics in Definition~\ref{def:lts-semantics}.
\end{proposition}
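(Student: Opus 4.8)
The plan is to mirror the two-directional argument used for Proposition~\ref{prop:red-string-corresp}, but to replace every appeal to isomorphisms (which there implemented the congruence via Lemma~\ref{lma:iso-congruence}) with structural induction on \emph{standard} terms, since $\LTS$ carries no symmetry or structural isomorphisms. Concretely, for the `if' direction I would show that each production rule of Definition~\ref{def:lts-semantics} lifts to an operation on standard terms, and then induct on the derivation of $P\xrightarrow{x}Q$. The action rule \ref{lts-sem:a} is witnessed by a base-case standard term, namely the action generator of sort $(x.R\mid R\uparrow x)$ for a process $R$ and action $x$. Given a standard term $\mathtt t$ realising a premise transition, rules \ref{lts-sem:pl} and \ref{lts-sem:pr} are obtained by the recursive case using $\Pgenl$ and $\Pgenr$, carrying the untouched process $R$ on a parallel identity wire; the reduction rules \ref{lts-sem:r1} and \ref{lts-sem:r2} are obtained by combining the two premise terms (with pending labels $a$ and $\bar a$) via $\Rgenone$ and $\Rgentwo$, yielding a term with pending $\tau$. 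In every case the constructed term is standard by definition, and its codomain is exactly $Q\uparrow x$, matching the conclusion of the rule.

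For the `only if' direction I would induct on the recursive generation of standard terms. A base-case standard term in $\LTS(P,Q\uparrow x)$ can only be an action generator, so it corresponds to a transition derived by \ref{lts-sem:a}. A recursive standard term is built from standard subterms $\mathtt t,\mathtt s$ via a generator $\mathtt G\in\{\Pgenl,\Pgenr,\Rgenone,\Rgentwo\}$; by the induction hypothesis the subterms correspond to derivable transitions, and applying the LTS rule matching $\mathtt G$ produces a derivation of the transition read off the domain and codomain of the whole term. The key feature making this work is that the syntactic shape of a standard term is forced to mirror the shape of an LTS derivation tree, with the outermost generator determining the last rule applied.

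The main obstacle will be precisely this rigidity claim: verifying that the standard-term restriction captures the derivation-tree structure exactly, with no spurious reassociation or reordering. This is what fails for unrestricted terms, as the example~\eqref{lts-non-derivable-term} shows, where a general term silently imports an associativity that has no counterpart among the LTS rules. Hence the heart of the argument is a normal-form (unique-decomposition) analysis: I would show that any standard term factors uniquely at its outermost generator into standard subterms, so that the domain and codomain processes of each subterm are determined without ambiguity and the induction goes through. Establishing that standardness simultaneously forbids the illegal reassociations and still permits every legal derivation to be realised — in other words, that the restriction is neither too strong nor too weak — is where the genuine work lies; the remaining bookkeeping about pending labels $P\uparrow x$ and the $\tau$-labelling in the reduction rules is routine once this decomposition is in place.
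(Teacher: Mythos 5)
Your proposal is correct and takes essentially the same route as the paper, whose proof is exactly this straightforward two-directional induction exploiting the one-to-one correspondence between the generators~\eqref{generators-LTS} and the production rules of Definition~\ref{def:lts-semantics}. The unique-decomposition analysis you single out as the heart of the matter is not actually needed: since the statement asserts only existence, it suffices to induct on \emph{some} generation of the standard term by the recursive scheme (available by definition of the recursively generated set), so no normal-form argument is required.
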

\begin{proof}
By straightforward induction on labelled transitions / standard terms, after observing that there is a one-to-one correspondence between the production rules in Definition~\ref{def:lts-semantics} and the generators~\ref{generators-LTS}.
\end{proof}

A term in $\Comp(\Red)$ is {\em standard} if its translation to $\LTS$ is standard. The following proposition shows the connection between the layer $\Red$ and the standard terms in $\Comp(\Red)$.
\begin{lemma}\label{lma:red-term-to-comp}
Let $\mathtt t\in\Red(P,{Q\uparrow})$ be a term. Then there is a term in the layered theory CCS
\begin{center}
\scalebox{.9}{\tikzfig{red-iso-standard}},
\end{center}
such that the term $t'\in\Comp(\Red)(P',{Q'\uparrow})$ is standard, and $\mathtt i$ and $\mathtt j$ are isomorphisms.
\end{lemma}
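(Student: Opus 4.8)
The plan is to reuse the normal-form analysis already carried out in the proof of Proposition~\ref{prop:red-string-corresp}, and then to separate the ``computational skeleton'' of the term from the congruence isomorphisms surrounding it. Given $\mathtt t\in\Red(P,{Q\uparrow})$, I would first invoke that proof to rewrite $\mathtt t$ into one of the two generic forms~\eqref{red-sem-generic-form-1} or~\eqref{red-sem-generic-form-2}, according to whether $\mathtt t$ contains a silent-action generator $\Sgen$ or a reduction generator $\Rgen$. In either case the term becomes an alternating composite of two kinds of pieces: the computation generators (a single $\Sgen$ or $\Rgen$ followed by a sequence of parallel-composition generators, all of which also live in $\Comp(\Red)$ via the inclusion $\Comp(\Red)\hookrightarrow\Red$), and the isomorphisms built from symmetries and the structural generators of Figure~\ref{fig:ccs-structural-generators}.

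The core of the argument is to slide all the interleaved isomorphisms out to the two boundaries. Each structural generator is an isomorphism (by the first block of equations in Figure~\ref{fig:ccs-layered-theory}) and is natural with respect to the computation generators---this is precisely the content that makes Lemma~\ref{lma:iso-congruence} work, namely that pre- or post-composing a computation step with a congruence isomorphism can be absorbed by relabelling the processes it acts on. Using these naturality equations repeatedly, I would push every isomorphism appearing before a computation generator past it (adjusting the intermediate process labels), collecting them on the left into a single isomorphism $\mathtt i:P\to P'$, and symmetrically collecting those after the last computation step on the right into $\mathtt j:{Q'\uparrow}\to{Q\uparrow}$. What remains in the middle is a term built solely from the computation generators (and the splitting generator), hence it is the image under $\Comp(\Red)\hookrightarrow\Red$ of a term $t'\in\Comp(\Red)(P',{Q'\uparrow})$.

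It then remains to verify that $t'$ is standard, i.e.\ that its translation to $\LTS$ is a standard term. I would do this by induction on the number of parallel-composition generators in $t'$: the base case is the single $\Sgen$ or $\Rgen$ step, whose $\LTS$-translation is a base-case standard term, and the inductive step attaches one more parallel composition, which under the functor $\Comp(\Red)\to\LTS$ becomes a $\Pgenl$ or $\Pgenr$ gate glued onto a standard subterm exactly as in the recursive clause of the definition of standard terms.

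The main obstacle I anticipate is the bookkeeping in this last step. The generic forms of Proposition~\ref{prop:red-string-corresp} still carry isomorphisms between the computation steps, and after sliding them away one must check that the computation steps genuinely assemble into the rigid, symmetry-free tree shape required of standard terms---in particular that each parallel-composition generator can be oriented on the correct side, and that in case~\eqref{red-sem-generic-form-2} the two action branches emanating from the reduction generator attach compatibly. Guaranteeing this is exactly where the choice of target process $P'$ (and hence of the isomorphism $\mathtt i$) gets pinned down: one must perform the sliding so that the residual middle term is already in standard form rather than merely congruent to one. Making this orientation data explicit, and confirming that it is preserved by the naturality moves, is the delicate part of the proof; the rest is routine structural induction.
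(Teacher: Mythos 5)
There is a genuine gap at the heart of your argument: the sliding step. You claim the structural isomorphisms are ``natural with respect to the computation generators'' and that this is ``precisely the content that makes Lemma~\ref{lma:iso-congruence} work''. It is not. Lemma~\ref{lma:iso-congruence} only says that isomorphisms with single-colour boundaries in the same fragment witness the congruence; it provides no equations of the form $\mathtt i;\Pgen = \Pgen';\mathtt i'$ letting you commute an isomorphism past $\Sgen$, $\Rgen$ or $\Pgen$. Inspecting the theory (Figure~\ref{fig:ccs-layered-theory}), the equations in $\Red$ only involve the structural generators among themselves (plus the equations defining the functor to $\LTS$); no naturality equations relating structural generators to computation generators are imposed, so your rewriting has no equational basis. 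Indeed, the paper's own proof flags this explicitly: the standardised term ``will not, in general, be equal to the original term, as they correspond to different derivation trees, but they do derive the same rewrite rule''. If your sliding argument went through, it would establish the equality $\mathtt t \equiv \mathtt i;\iota(t');\mathtt j$ in the theory, contradicting this remark.

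The repair is to notice that the lemma is weaker than what you are trying to prove: it asserts only the \emph{existence} of a term of the shape $\mathtt i;\iota(t');\mathtt j$, not that $\mathtt t$ equals it. The paper's proof accordingly does not rewrite $\mathtt t$ at all; it reuses the case analysis from Proposition~\ref{prop:red-string-corresp} (exactly one $\Sgen$ and no $\Rgen$, or vice versa), reads off the processes $R_i$, $T_i$ from the generic form~\eqref{red-sem-generic-form-1} or~\eqref{red-sem-generic-form-2}, and then \emph{constructs a fresh term} in which the equations $R_{i+1}=R_i\parallel T_i$ hold strictly at every stage, with a single residual isomorphism ${R_n\parallel T_n\uparrow}\simeq{Q\uparrow}$ on the right and the congruence on the left absorbed into $\mathtt i$. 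The middle segment then uses only $\Sgen$ (or $\Rgen$) and $\Pgen$ with left-nested bracketing, so it lies in $\Comp(\Red)$ and is standard by construction -- your worry about orienting the parallel compositions is resolved by this choice of $P'$, not by any naturality moves. Your concluding induction showing that the middle term translates to a standard $\LTS$-term is fine, but it applies to the freshly constructed term, not to a normal form of~$\mathtt t$.
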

\begin{proof}
By the proof of Proposition~\ref{prop:red-string-corresp}, there are two cases: (1) $\mathtt t$ contains exactly one instance of the generator $\Sgen$ and no instances of the generator $\Rgen$, and (2) $\mathtt t$ contains exactly one instance of the generator $\Rgen$ and no instances of the generator $\Sgen$. In Case~(1), existence of the term~\eqref{red-sem-generic-form-1} implies existence of the following term:
\begin{center}
\scalebox{.9}{\tikzfig{red-sem-generic-form-1-standard}},
\end{center}
where, as before, the vertical dashed lines stand for some isomorphisms, and we have $n$ parallel composition generators $\Pgen$. This time, the equations $R_{i+1}=R_i\parallel T_i$ hold strictly for all $i=1,\dots,n-1$, and we still have an isomorphism ${R_n\parallel T_n\uparrow}\simeq {Q\uparrow}$. We observe that the term between the vertical dashed lines is in $\Comp(\Red)$ and is standard, so that it is indeed of the required form. This term will not, in general, be equal to the original term, as they correspond to different derivation trees, but they do derive the same rewrite rule.

Case~(2) is nearly identical, except that we replace the generator $\Sgen$ with $\Rgen$.
\end{proof}

The following proposition shows that the congruence is a simulation relation for the standard terms in $\LTS$.
\begin{proposition}\label{prop:lts-bisimulation}
Let $\mathtt t\in\LTS(P,{Q\uparrow x})$ be a standard term. If there is a process $P'$ with $P\sim P'$, then there is a standard term $\mathtt t'\in\LTS(P',{Q'\uparrow x})$ such that $Q\sim Q'$.
\end{proposition}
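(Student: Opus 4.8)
The plan is to transport the statement across the correspondence of Proposition~\ref{prop:lts-string-corresp} and then prove the classical fact that the structural congruence is a strong bisimulation for this fragment of CCS. By Proposition~\ref{prop:lts-string-corresp}, a standard term in $\LTS(P,{Q\uparrow x})$ exists precisely when the labelled transition $P\xrightarrow x Q$ is derivable; since the correspondence is an equivalence, it suffices to prove the purely transition-theoretic statement that if $P\xrightarrow x Q$ is derivable and $P\sim P'$, then $P'\xrightarrow x Q'$ is derivable for some $Q'$ with $Q\sim Q'$. Producing such a transition immediately yields, again by Proposition~\ref{prop:lts-string-corresp}, the required standard term $\mathtt t'\in\LTS(P',{Q'\uparrow x})$.

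To prove the transition statement, I would consider the relation holding between $P$ and $P'$ whenever every derivable transition $P\xrightarrow x Q$ is matched by a derivable transition $P'\xrightarrow x Q'$ with $Q\sim Q'$, and show that this relation contains $\sim$. For this it is enough to verify that it is reflexive, transitive and closed under the parallel-composition congruence, and that it contains each generating axiom of $\sim$ in both orientations. Reflexivity is immediate, transitivity uses transitivity of $\sim$ on the targets, and together with the axioms-and-reverses check these closure properties suffice because $\sim$ is the least relation closed in this way that contains the axioms (symmetry being absorbed by checking both orientations of each axiom). As $\sim$ is symmetric, the resulting simulation is in fact a bisimulation, though only the simulation direction is needed here.

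The verifications for the base axioms proceed by inversion on the LTS production rules (Definition~\ref{def:lts-semantics}). For the unit law $0\parallel P\sim P$ one uses that $0$ admits no transition, so every move of $0\parallel P$ is forced by rule~\ref{lts-sem:pl} from a move of $P$, with targets related through $0\parallel P'\sim P'$; the reverse orientation re-wraps a move of $P$ by the same rule. For commutativity $P\parallel Q\sim Q\parallel P$, each of the three ways a transition of $P\parallel Q$ can arise (rules~\ref{lts-sem:pl}, \ref{lts-sem:pr}, and a handshake~\ref{lts-sem:r1}/\ref{lts-sem:r2}) is mirrored by the corresponding rule for $Q\parallel P$. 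The main obstacle is the associativity axiom $(P\parallel Q)\parallel R\sim P\parallel(Q\parallel R)$ combined with synchronisation: a $\tau$-transition of $(P\parallel Q)\parallel R$ obtained by rule~\ref{lts-sem:r1} synchronising an $a$-move of $P\parallel Q$ with a $\bar a$-move of $R$ must first be analysed further (the $a$-move of $P\parallel Q$ itself comes from~\ref{lts-sem:pl} or~\ref{lts-sem:pr}) and then re-derived under the right-hand bracketing with an extra layer of~\ref{lts-sem:pl}/\ref{lts-sem:pr} nested inside the handshake. I would therefore first record an inversion lemma stating that any transition $P\parallel Q\xrightarrow x R$ comes from exactly one of rules~\ref{lts-sem:pl}, \ref{lts-sem:pr}, \ref{lts-sem:r1}, \ref{lts-sem:r2}, and then exhaust the finitely many nested combinations, checking in each case that the reassembled target is related to the original by a single associativity (and possibly commutativity) step.

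Finally, closure under parallel congruence follows the same pattern: given the matching property for $P,P'$ and for $Q,Q'$, any transition of $P\parallel Q$ is analysed by the inversion lemma, the hypothesis is applied to whichever component(s) move, and rule~\ref{lts-sem:pl}, \ref{lts-sem:pr}, \ref{lts-sem:r1} or~\ref{lts-sem:r2} reassembles a transition of $P'\parallel Q'$ whose target is $\sim$-related to the original by congruence. Collecting these cases shows that $\sim$ transfers transitions; translating back through Proposition~\ref{prop:lts-string-corresp} then delivers the desired standard term $\mathtt t'$.
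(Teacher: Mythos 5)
Your proof is correct, but it follows a genuinely different route from the paper's. You transport the statement through Proposition~\ref{prop:lts-string-corresp} and prove the classical process-algebra fact that $\sim$ is a simulation for the LTS --- in effect establishing Corollary~\ref{cor:lts-simulation} first and reading the proposition off from it, which reverses the paper's logical order but is legitimate, since Proposition~\ref{prop:lts-string-corresp} is proved independently (and earlier) by a direct induction, so there is no circularity. The paper instead stays inside the layered theory: it inducts on the derivation of $P\sim P'$ with the standard term $\mathtt t$ in hand, and in each base case inspects the possible shapes of $\mathtt t$ and rebuilds $\mathtt t'$ by explicit manipulations of the generators (the swap $(-)^*$ on $\Gen$ for commutativity, the function $(-',-')$ for associativity with its subcases) --- the string-diagrammatic counterpart of your inversion lemma and nested case analysis. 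Your closure-relation argument is sound: a reflexive, transitive relation closed under the parallel rule and containing each axiom in both orientations contains $\sim$, because its intersection with its converse is then an equivalence relation with the same closure properties (one small slip: your inversion lemma should say the last rule of a derivation with source $P\parallel Q$ is \emph{one of}~\ref{lts-sem:pl}, \ref{lts-sem:pr}, \ref{lts-sem:r1}, \ref{lts-sem:r2}, not \emph{exactly one}, since a given transition triple may admit several derivations; this does not affect the argument, which inverts the derivation rather than the triple). As for what each approach buys: yours is more modular and reuses standard bisimulation technology; the paper's demonstrates that the reasoning can be carried out natively on the diagrammatic terms --- one of the section's stated aims --- and is constructive at the level of terms, a feature it quietly exploits downstream: Lemma~\ref{lma:lts-term-to-comp} relies on the term produced for a specific left-associated $P'$ containing only $\Pgenr$ generators, which is visible from the paper's construction but recoverable from your proof only by unwinding the constructive content of Proposition~\ref{prop:lts-string-corresp} and of your simulation argument.
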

\begin{proof}
We argue by induction on construction of the congruence $P\sim P'$.

\noindent\paragraph{Base case~1:} $P={0\parallel R}$ and $P'=R$. Since the term $\mathtt t:{0\parallel R}\rightarrow {Q\uparrow x}$ is standard, it must, in fact, be of the form
\begin{center}
\scalebox{.9}{\tikzfig{lts-bisimulation-case1}},
\end{center}
so that the sought-after term is given by $\mathtt r:R\rightarrow {R'\uparrow x}$. The argument for the case when $P=R$ and $P'={0\parallel R}$ is symmetric.

\noindent\paragraph{Base case~2:} $P=(R\parallel S)\parallel T$ and $P'=R\parallel (S\parallel T)$. There are two cases for the form of the standard term $\mathtt t:(R\parallel S)\parallel T\rightarrow {Q\uparrow x}$. Either it is of the form
\begin{center}
\scalebox{.9}{\tikzfig{lts-bisimulation-case2-1}},
\end{center}
or it is of the form
\begin{center}
\scalebox{.9}{\tikzfig{lts-bisimulation-case2-2}},
\end{center}
where $\mathtt G,\mathtt H\in\Gen$. In the first case, the sought-after term $\mathtt t'$ is simply given by
\begin{center}
\scalebox{.9}{\tikzfig{lts-bisimulation-case2-1-new}}.
\end{center}
The second case has two subcases: first, when $\mathtt G=\mathtt H=\Pgenr$ (and consequently $\mathtt s$ and $\mathtt u$ are identities), and second, when either $\mathtt G\neq\Pgenr$ or $\mathtt H\neq\Pgenr$. In the first subcase, the sought-after term $\mathtt t'$ is given by
\begin{center}
\scalebox{.9}{\tikzfig{lts-bisimulation-case2-2-1-new}}.
\end{center}
For the second subcase, we first define the function $(-',-'):\Gen\times\Gen\rightarrow\Gen\times\Gen$ as follows: if $\mathtt G\neq\Pgenr$, then $(\mathtt G',\mathtt H')\coloneq (\mathtt H,\mathtt G)$, and $(\Pgenr',\mathtt H')\coloneq (\Pgenl,\mathtt H)$. With this notation, taking $(\mathtt G,\mathtt H)$ that appear in the term as the input for the function $(-',-')$, the sought-after term $\mathtt t'$ is given by
\begin{center}
\scalebox{.9}{\tikzfig{lts-bisimulation-case2-new}}.
\end{center}
The argument for the case when $P$ and $P'$ are interchanged is symmetric.

\noindent\paragraph{Base case~3:} $P=R\parallel S$ and $P'=S\parallel R$. In this case, the standard term $\mathtt t:R\parallel S\rightarrow {Q\uparrow x}$ is of the generic form
\begin{center}
\scalebox{.9}{\tikzfig{lts-bisimulation-case3}}.
\end{center}
Let us define $\Pgenl^*\coloneq\Pgenr$, $\Pgenr^*\coloneq\Pgenl$, $\Rgenone^*\coloneq\Rgentwo$ and $\Rgentwo^*\coloneq\Rgenone$. With this notation, the sought-after term $\mathtt t'$ is given by
\begin{center}
\scalebox{.9}{\tikzfig{lts-bisimulation-case3-new}}.
\end{center}

\noindent\paragraph{Inductive case:} $P=R\parallel S$ and $P'=R'\parallel S'$, such that $R\sim R'$ and $S\sim S'$, and the statement holds for these instances of the congruence. As in the previous case, the standard term $\mathtt t:R\parallel S\rightarrow {Q\uparrow x}$ is of the generic form
\begin{center}
\scalebox{.9}{\tikzfig{lts-bisimulation-case4}}.
\end{center}
By the induction hypothesis, there are standard terms $\mathtt r':R'\rightarrow T'$ and $\mathtt s':S'\rightarrow U'$ such that $T'\sim T$ and $U'\sim U$ (note that if the codomain of either $\mathtt r$ or $\mathtt s$ remains in the active fragment, then the term must be the identity). The sought-after term $\mathtt t'$ is thus given by
\begin{center}
\scalebox{.9}{\tikzfig{lts-bisimulation-case4-new}},
\end{center}
thereby completing the induction.
\end{proof}
Translating the above proposition into standard notation (using Proposition~\ref{prop:lts-string-corresp}), we obtain the following.
\begin{corollary}\label{cor:lts-simulation}
The following rule is admissible in the LTS semantics:
\begin{center}
  \begin{prooftree}
    \AxiomC{$P'\sim P$}
    \AxiomC{$P\xrightarrow x Q$}
    \BinaryInfC{$\exists Q'.~{Q'\sim Q}.~{P'\xrightarrow x Q'}$}
  \end{prooftree}
\end{center}
\end{corollary}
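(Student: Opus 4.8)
The plan is to obtain this corollary as a direct transcription of Proposition~\ref{prop:lts-bisimulation} into the standard notation for labelled transitions, using the correspondence between standard terms and derivable transitions recorded in Proposition~\ref{prop:lts-string-corresp}. The mathematical content of the result already resides in the string-diagrammatic statement; the corollary merely repackages it in the syntax of Definition~\ref{def:lts-semantics}. So the whole proof is a short chain of three invocations, with no new induction required.

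First I would start from the two hypotheses, the congruence $P'\sim P$ and a derivation of the labelled transition $P\xrightarrow{x}Q$. Applying the `only if' direction of Proposition~\ref{prop:lts-string-corresp}, derivability of $P\xrightarrow{x}Q$ produces a standard term $\mathtt t\in\LTS(P,{Q\uparrow x})$. Since $\sim$ is an equivalence relation, $P'\sim P$ gives $P\sim P'$, so I may feed $\mathtt t$ and the process $P'$ into Proposition~\ref{prop:lts-bisimulation}. This yields a process $Q'$ together with a standard term $\mathtt t'\in\LTS(P',{Q'\uparrow x})$ satisfying $Q\sim Q'$, hence $Q'\sim Q$ by symmetry. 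Finally, the `if' direction of Proposition~\ref{prop:lts-string-corresp} converts the standard term $\mathtt t'\in\LTS(P',{Q'\uparrow x})$ back into a derivation of $P'\xrightarrow{x}Q'$. Combined with $Q'\sim Q$, this is exactly the asserted conclusion $\exists Q'.~{Q'\sim Q}.~{P'\xrightarrow x Q'}$.

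There is no genuine obstacle at this stage: the substantive work, the case analysis on the form of the standard term and the induction on the construction of the congruence, was already carried out in the proof of Proposition~\ref{prop:lts-bisimulation}. The only points I would take care to state explicitly are the two appeals to symmetry of $\sim$ (to pass between $P'\sim P$ and $P\sim P'$, and between $Q\sim Q'$ and $Q'\sim Q$) and the fact that both directions of the term/transition correspondence in Proposition~\ref{prop:lts-string-corresp} are being used, once to enter the string-diagrammatic setting and once to exit it. The restriction to \emph{standard} terms is essential throughout, since it is precisely standardness that makes the correspondence with derivable transitions a bijection and that is preserved by Proposition~\ref{prop:lts-bisimulation}.
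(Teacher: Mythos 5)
Your proof is correct and is exactly the paper's argument: the corollary is obtained by translating Proposition~\ref{prop:lts-bisimulation} into standard notation via the term/transition correspondence of Proposition~\ref{prop:lts-string-corresp}, with the two symmetry appeals handled just as you describe. One cosmetic nit: you have the labels of the two directions of Proposition~\ref{prop:lts-string-corresp} swapped (derivability of $P\xrightarrow{x}Q$ yields a standard term by the \emph{if} direction of ``standard term exists iff transition derivable'', and the final step uses the \emph{only if} direction), but this does not affect the mathematics.
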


The following lemma connects the silent fragments of $\LTS$ and $\Red$. Note that the conclusion of the lemma is exactly the same as the conclusion of Lemma~\ref{lma:red-term-to-comp}, but this time we start from a term in $\LTS$.
\begin{lemma}\label{lma:lts-term-to-comp}
Let $\mathtt t\in\LTS(P,{Q\uparrow\tau})$ be a standard term. Then there is a term in the layered theory CCS
\begin{center}
\scalebox{.9}{\tikzfig{red-iso-standard}},
\end{center}
such that the term $t'\in\Comp(\Red)(P',{Q'\uparrow})$ is standard, and $\mathtt i$ and $\mathtt j$ are isomorphisms.
\end{lemma}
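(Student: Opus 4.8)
The plan is to proceed exactly as in the proof of Lemma~\ref{lma:red-term-to-comp}, but reading off the structure of the standard term $\mathtt t\in\LTS(P,{Q\uparrow\tau})$ instead of a term in $\Red$. First I would note that, since the codomain lies in the silent fragment, the pending label $\tau$ must be introduced by a unique generator along $\mathtt t$: the parallel generators $\Pgenl,\Pgenr$ (rules~\ref{lts-sem:pl},~\ref{lts-sem:pr}) only propagate an existing pending action, so the $\tau$ originates either at a single action generator (rule~\ref{lts-sem:a}) with label $x=\tau$, i.e.~$\tau.R\xrightarrow{\tau}R$, or at a single reduction generator $\Rgenone$ or $\Rgentwo$ combining matched $a$- and $\bar a$-transitions. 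This gives the two cases that mirror the split between the $\Sgen$ and $\Rgen$ generators in $\Comp(\Red)$, and it parallels precisely the case distinction used in the proofs of Proposition~\ref{prop:red-string-corresp} and Lemma~\ref{lma:red-term-to-comp}.

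In each case I would rewrite $\mathtt t$ into a generic normal form analogous to~\eqref{red-sem-generic-form-1} and~\eqref{red-sem-generic-form-2}: the unique $\tau$-introducing generator at the source, followed by a sequence of parallel-composition generators $\Pgenl,\Pgenr$ (interspersed with identities) that merge the remaining components into the silent fragment. Replacing each $\LTS$ generator by its $\Comp(\Red)$ counterpart under the functor $\Comp(\Red)\rightarrow\LTS$ — the action generator with $x=\tau$ by $\Sgen$, the reduction generators $\Rgenone/\Rgentwo$ by $\Rgen$, and the parallel generators by $\Pgen$ — yields a candidate standard term $t'\in\Comp(\Red)(P',{Q'\uparrow})$ whose image under the functor is again standard and ends in ${Q'\uparrow\tau}$. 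The isomorphisms $\mathtt i$ and $\mathtt j$ are then supplied as in Lemma~\ref{lma:red-term-to-comp}, using Lemma~\ref{lma:iso-congruence} to realise the bracketing and reordering that relate $P$ to $P'$ and $Q'$ to $Q$. Equivalently, and perhaps more cleanly, one may route the whole argument through the established correspondences: the case analysis above extracts from $\mathtt t$ a reduction derivation of $P\rightarrow Q$ (Case~A via rules~\ref{red-sem:tau} and~\ref{red-sem:p}, Case~B via~\ref{red-sem:r} and~\ref{red-sem:p}, closing under the congruence~\ref{red-sem:c}), whence Proposition~\ref{prop:red-string-corresp} gives a term in $\Red(P,{Q\uparrow})$ and Lemma~\ref{lma:red-term-to-comp} delivers the factorisation directly.

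The main obstacle I anticipate is the bookkeeping of associativity and ordering. The standard $\LTS$ terms are rigid — there are no symmetry generators and the bracketing is baked into the diagrams, as witnessed by the non-derivable term~\eqref{lts-non-derivable-term} — whereas the reduction side absorbs these discrepancies freely through the congruence. The delicate point is therefore to check that the isomorphisms $\mathtt i,\mathtt j$ absorb exactly this difference, and that the reconstructed $t'$ genuinely satisfies the recursive standardness condition (and not merely that its image is standard). I expect this to require the same careful tracking of which parallel component is kept in the silent fragment that appears in the relabelling $(-',-')$ of Proposition~\ref{prop:lts-bisimulation}; once that tracking is set up, the two cases close by the already-verified correspondence of the generators between $\LTS$ and $\Comp(\Red)$.
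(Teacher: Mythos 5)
Your main route is essentially the paper's proof: the same case split on whether the pending $\tau$ originates from $\Agen:\tau.R\rightarrow R\uparrow\tau$ or from a reduction generator, and the obstacle you flag at the end is resolved exactly as you suspect --- the paper rearranges $P$ up to congruence into the left-associated form $(\dots((\tau.R\parallel T_1)\parallel T_2)\dots)\parallel T_n$ with the $\tau$-source leftmost, so that Proposition~\ref{prop:lts-bisimulation} produces a standard term containing only $\Pgenr$ (never $\Pgenl$), which is then literally in the image of $\Comp(\Red)\rightarrow\LTS$ and is pulled back by replacing $\Agen$ with $\Sgen$ (respectively $\Rgenone/\Rgentwo$ with $\Rgen$) and each $\Pgenr$ with $\Pgen$, with $\mathtt i$ and $\mathtt j$ supplied by Lemma~\ref{lma:iso-congruence}. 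One caution: your phrase ``replacing each $\LTS$ generator by its $\Comp(\Red)$ counterpart'' is not available before this rearrangement, since $\Pgenl$ has no counterpart under the functor ($\Pgen$ is sent to $\Pgenr$ only), so the detour through Proposition~\ref{prop:lts-bisimulation} is necessary rather than optional; likewise your ``cleaner'' alternative via Proposition~\ref{prop:lts-string-corresp} and Lemma~\ref{lma:red-term-to-comp} is non-circular but buries the same induction inside the unproven derivation-level implication from $P\xrightarrow{\tau}Q$ to $P\rightarrow Q$, so it saves no work.
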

\begin{proof}
There are two cases: either (1) $\mathtt t$ contains the generator $\Agen : \tau.R\rightarrow R\uparrow\tau$ for some process $R$ and all other generators in $\mathtt t$ are instances of the parallel composition generators $\Pgenr$ and $\Pgenl$, or (2) $\mathtt t$ contains exactly one of the generators
\begin{center}
\scalebox{.9}{\tikzfig{lts-reduction-gens}}
\end{center}
for some processes $R$ and $S$ and some $a\in A$, the generators $\Agen : a.R\rightarrow R\uparrow a$ and $\Agen :\bar a.S\rightarrow S\uparrow\bar a$, and all other generators in $\mathtt t$ are instances of the parallel composition generators $\Pgenr$ and $\Pgenl$.

In Case~(1), let us rearrange the processes separated by $\parallel$ in $P$ in such a way that $\tau.R$ is the leftmost process, and the parentheses are associated to the left; let the resulting process be called $P'$:
$$P'\coloneq (\dots((\tau.R\parallel T_1)\parallel T_2)\dots)\parallel T_n.$$
Since $P'\sim P$, Proposition~\ref{prop:lts-bisimulation} produces a standard term $\mathtt t'\in\LTS(P', {Q'\uparrow\tau})$ such that $Q'\sim Q$. Due to the choice of bracketing, only the right parallel composition generators $\Pgenr$ can appear in $\mathtt t'$ (in addition to the generator $\Agen : \tau.R\rightarrow R\uparrow\tau$). Thus, $\mathtt t'$ is in the image of the translation $\Comp(\Red)\rightarrow\LTS$, producing the required standard term in $\Comp(\Red)$ upon replacing $\Agen$ with $\Sgen$ and each $\Pgenr$ with $\Pgen$.

In Case~(2), we similarly rearrange the processes in $P$ so as to obtain the following $P'$, with $P'\sim P$:
$$P'\coloneq (\dots(a.R\parallel\bar a.S)\parallel T_1)\dots)\parallel T_n.$$
The same argument as in Case~(1) then produces the required term.
\end{proof}

\begin{theorem}\label{thm:red-lts-term-iff}
There is a term in $\Red(P,{Q\uparrow})$ if and only if there is a process $Q'$ with $Q\sim Q'$ and a standard term $\LTS(P,{Q'\uparrow\tau})$.
\end{theorem}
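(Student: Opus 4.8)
The plan is to establish the two directions of the biconditional by leveraging the two preparatory lemmas that connect $\Red$ and $\LTS$ to the common intermediate layer $\Comp(\Red)$, namely Lemma~\ref{lma:red-term-to-comp} and Lemma~\ref{lma:lts-term-to-comp}. The key observation is that these two lemmas have \emph{identical} conclusions: each produces, from a term in the respective top layer, a standard term $t'\in\Comp(\Red)(P',{Q'\uparrow})$ sandwiched between isomorphisms $\mathtt i$ and $\mathtt j$. Since $\Comp(\Red)$ embeds into $\Red$ via the inclusion functor and translates into $\LTS$ via the functor $\Comp(\Red)\rightarrow\LTS$ (both fixed by Figure~\ref{fig:ccs-layered-theory}), a standard term in $\Comp(\Red)$ simultaneously witnesses a term in $\Red$ and a standard term in $\LTS$. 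The whole proof is thus a matter of routing terms through $\Comp(\Red)$ and tracking how the isomorphisms act on processes via Lemma~\ref{lma:iso-congruence}.

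\textbf{Forward direction.} Suppose there is a term $\mathtt t\in\Red(P,{Q\uparrow})$. By Lemma~\ref{lma:red-term-to-comp}, I obtain a standard term $t'\in\Comp(\Red)(P',{Q'\uparrow})$ together with isomorphisms $\mathtt i:P\xrightarrow{\sim}P'$ and $\mathtt j:{Q'\uparrow}\xrightarrow{\sim}{Q\uparrow}$. By Lemma~\ref{lma:iso-congruence}, these isomorphisms witness $P\sim P'$ and $Q\sim Q'$. Now apply the translation functor $\Comp(\Red)\rightarrow\LTS$ to $t'$: since $t'$ is standard in $\Comp(\Red)$ (meaning precisely that its translation is standard), its image is a standard term in $\LTS(P',{Q'\uparrow\tau})$, using that the silent fragment generator $\Sgen$ maps to $\Agen$ and the colour ${Q'\uparrow}$ maps to ${Q'\uparrow\tau}$. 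Finally, to match the target sort $\LTS(P,\cdots)$ rather than $\LTS(P',\cdots)$, I invoke Proposition~\ref{prop:lts-bisimulation} with $P\sim P'$ to transport the standard term back to one starting at $P$, obtaining a standard term $\LTS(P,{Q''\uparrow\tau})$ with $Q''\sim Q'\sim Q$; taking $Q'\coloneq Q''$ in the statement completes this direction.

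\textbf{Reverse direction.} Conversely, suppose $Q\sim Q'$ and there is a standard term $\mathtt s\in\LTS(P,{Q'\uparrow\tau})$. By Lemma~\ref{lma:lts-term-to-comp}, I obtain a standard term $t'\in\Comp(\Red)(P'',{Q''\uparrow})$ and isomorphisms $\mathtt i:P\xrightarrow{\sim}P''$, $\mathtt j:{Q''\uparrow}\xrightarrow{\sim}{Q'\uparrow}$, whence $P\sim P''$ and $Q''\sim Q'\sim Q$ by Lemma~\ref{lma:iso-congruence}. Composing $t'$ with the inclusion $\Comp(\Red)\hookrightarrow\Red$ yields a term in $\Red(P'',{Q''\uparrow})$, and pre/post-composing with the isomorphisms $\mathtt i,\mathtt j$ (which exist in $\Red$ by Lemma~\ref{lma:iso-congruence}) produces a term in $\Red(P,{Q\uparrow})$, as required.

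\textbf{The main obstacle} I anticipate is bookkeeping around sorts rather than any deep difficulty: the two lemmas give terms starting at \emph{possibly different} processes $P'$ and $P''$ congruent to $P$, so I must be careful to reconcile these with the exact sorts demanded by the statement, and this is precisely where Proposition~\ref{prop:lts-bisimulation} (simulation along the congruence for standard terms) and Lemma~\ref{lma:iso-congruence} (transporting along isomorphisms in $\Red$) do the work. A subtle point requiring care is ensuring that ``standard in $\Comp(\Red)$'' is used consistently with its definition as ``translation to $\LTS$ is standard,'' so that the image under the functor is genuinely a standard $\LTS$ term in the forward direction; this should follow directly from the construction in Lemma~\ref{lma:red-term-to-comp}, but I would verify it explicitly. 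Everything else is a routine application of functoriality of the two maps out of $\Comp(\Red)$.
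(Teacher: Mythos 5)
Your proof is correct and follows essentially the same route as the paper: the forward direction goes through Lemma~\ref{lma:red-term-to-comp}, the translation $\Comp(\Red)\rightarrow\LTS$, and Proposition~\ref{prop:lts-bisimulation}, while the reverse direction goes through Lemma~\ref{lma:lts-term-to-comp} and composition with the isomorphism ${Q'\uparrow}\simeq{Q\uparrow}$ furnished by Lemma~\ref{lma:iso-congruence}. Your extra bookkeeping (explicitly routing the $\Comp(\Red)$ term through the inclusion into $\Red$ and verifying the meaning of ``standard in $\Comp(\Red)$'') is exactly how the paper's more terse argument is meant to be read.
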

\begin{proof}
First, let $\mathtt t\in\Red(P,{Q\uparrow})$ be a term. By Lemma~\ref{lma:red-term-to-comp}, we obtain a standard term $\mathtt t':P'\rightarrow {Q'\uparrow}$ in $\Comp(\Red)$, with $P\sim P'$ and $Q\sim Q'$. Therefore, the translation of $\mathtt t'$ to $\LTS$ gives the standard term $P'\rightarrow {Q'\uparrow\tau}$. By Proposition~\ref{prop:lts-bisimulation}, there is a standard term in $\LTS(P,{Q''\uparrow\tau})$ such that $Q''\sim Q'$, whence also $Q''\sim Q$, as required.

Conversely, let $\mathtt t\in\LTS(P,{Q'\uparrow\tau})$ be a standard term and let $Q\sim Q'$. By Lemma~\ref{lma:lts-term-to-comp}, there is a term
\begin{center}
\scalebox{.9}{\tikzfig{red-iso-standard-equiv}},
\end{center}
so that we obtain the required term in $\Red(P,{Q\uparrow})$ by composing with the isomorphism ${Q'\uparrow}\simeq {Q\uparrow}$.
\end{proof}

In light of Theorem~\ref{thm:red-lts-term-iff}, one can think of the layered theory CCS as extending the (rather restricted) derivations in $\Comp(\Red)$ in two different but equivalent ways: $\Red$ closes all the terms under isomorphism (congruence), while $\LTS$ adds the symmetric counterparts of the parallel composition and reduction generators, as well as permits pending actions. The theorem then shows that $\Red$ and $\LTS$ have the same transitions from the active to the silent fragment. Using Propositions~\ref{prop:red-string-corresp} and~\ref{prop:lts-string-corresp}, we translate this to the usual terminology as follows.
\begin{corollary}[Soundness and completeness]\label{cor:ccs-soundness-completeness}
The reduction semantics is sound and complete with respect to the silent transitions in the LTS semantics: a rewrite rule $P\rightarrow Q$ is derivable if and only if there is a process $Q'$ such that $Q'\sim Q$ and the labelled transition $P\xrightarrow{\tau}Q'$ is derivable.
\end{corollary}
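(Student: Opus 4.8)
The plan is to derive this corollary purely by translating the string-diagrammatic Theorem~\ref{thm:red-lts-term-iff} into the traditional inductive syntax, using the two dictionaries already established: Proposition~\ref{prop:red-string-corresp}, which matches derivable rewrite rules $P\rightarrow Q$ with terms in $\Red(P,{Q\uparrow})$, and Proposition~\ref{prop:lts-string-corresp}, which (specialised to $x=\tau$) matches derivable silent transitions $P\xrightarrow{\tau}Q'$ with standard terms in $\LTS(P,{Q'\uparrow\tau})$. No fresh diagrammatic reasoning is needed; the entire argument reduces to chaining three equivalences and keeping track of the congruence $\sim$.

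For the forward direction, I would start from a derivable rewrite rule $P\rightarrow Q$. Proposition~\ref{prop:red-string-corresp} yields a term in $\Red(P,{Q\uparrow})$. Feeding this into Theorem~\ref{thm:red-lts-term-iff} produces a process $Q'$ with $Q\sim Q'$ together with a standard term in $\LTS(P,{Q'\uparrow\tau})$. Since $\sim$ is an equivalence relation, $Q\sim Q'$ gives $Q'\sim Q$, and Proposition~\ref{prop:lts-string-corresp} converts the standard term into a derivable silent transition $P\xrightarrow{\tau}Q'$. This exhibits exactly the witness $Q'$ demanded by the corollary.

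For the converse, suppose we are given a process $Q'$ with $Q'\sim Q$ and a derivable transition $P\xrightarrow{\tau}Q'$. Proposition~\ref{prop:lts-string-corresp} turns this into a standard term in $\LTS(P,{Q'\uparrow\tau})$. To apply Theorem~\ref{thm:red-lts-term-iff} in the reverse direction and obtain a term in $\Red(P,{Q\uparrow})$, I would instantiate the theorem's existential witness with $Q'$ itself: its right-hand condition asks for some process congruent to $Q$ admitting such a standard term, and $Q'\sim Q$ (hence $Q\sim Q'$ by symmetry) supplies precisely this. Theorem~\ref{thm:red-lts-term-iff} then delivers a term in $\Red(P,{Q\uparrow})$, and a final application of Proposition~\ref{prop:red-string-corresp} recovers the derivable rewrite rule $P\rightarrow Q$.

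The only point demanding care---and the closest thing to an obstacle---is the bookkeeping of the congruence: Theorem~\ref{thm:red-lts-term-iff} places $\sim$ on the $Q$-side as $Q\sim Q'$, whereas the corollary phrases its witness condition as $Q'\sim Q$, so both directions hinge on the symmetry of $\sim$ to line these up. Beyond this symmetric rewriting of $\sim$, the argument is a mechanical substitution of logically equivalent conditions, and I expect no genuine difficulty.
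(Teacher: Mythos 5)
Your proposal is correct and is exactly the paper's argument: the corollary is stated there as a direct translation of Theorem~\ref{thm:red-lts-term-iff} through the two correspondence results, Propositions~\ref{prop:red-string-corresp} and~\ref{prop:lts-string-corresp} (with the latter instantiated at $x=\tau$), just as you chain them. Your extra remark about using symmetry of $\sim$ to reconcile $Q\sim Q'$ with $Q'\sim Q$ is the only bookkeeping involved, and it is handled the same way implicitly in the paper.
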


The terms in $\Red(P,{Q\uparrow})$ and $\LTS(P,{Q\uparrow x})$ correspond to the {\em derivations} of $P\rightarrow Q$ and $P\xrightarrow x Q$. Note, however, that the string diagrammatic terms carry slightly more information than the derivation trees, as they record the precise way in which the congruence is constructed.

We conclude this section by using the layered monoidal theory to study different derivations of the rewrite rule~\eqref{eqn:CCS-derivation}: we give its derivation in $\Red$, and two derivations in $\LTS$. The first derivation in $\LTS$ is obtained by first decomposing the term as in Lemma~\ref{lma:red-term-to-comp} and then translating from $\Comp(\Red)$, while the second one is constructed directly and does not corresponding to any derivation in $\Red$. We begin with a derivation in $\Red$ (term on the top):
\begin{center}
\scalebox{.8}{\tikzfig{CSS-red}}.
\end{center}
In the term on the bottom, we have applied the translation $\Comp(\Red)\rightarrow\LTS$, drawn as the cobox. The isomorphisms at each side can be seen as ``preparations'' that ensure that the computations in $\Comp(\Red)$ can occur.

Next, we give the direct derivation as a standard term in $\LTS$:
\begin{center}
\scalebox{.9}{\tikzfig{example-term-lts}}.
\end{center}
Note that this derivation is not in in the image of the translation $\Comp(\Red)\rightarrow\LTS$. However, if we apply to this term the procedure of Lemma~\ref{lma:lts-term-to-comp}, we obtain precisely the term in $\Red$ we started with.

From the point of view of communication, one can think of the LTS semantics as a {\em refinement} of the reduction semantics: while the same synchronisations can be achieved in both, the LTS semantics has more redundancy in their implementation, in the sense that there are multiple places where the passage from the active to the pending fragment can occur.

\section{Chemical reactions}\label{sec:glucose}
Here we construct an example that illustrates how in a layered theory of the shape $A\rightarrow C\leftarrow B$ the terms internal to $B$ and $C$ can be used to construct terms whose types are internal to $A$. We think of $A$ as a {\em high-level} language, of $C$ as a {\em low-level} description with few constraints, and of $B$ as an {\em explanatory} level describing those low-level processes which will actually occur. The fact that we can construct a term between the types in $A$ using only the terms from $B$ and $C$ can be seen, in a sense, as an instance of an emergent phenomenon: there is seemingly a process between the entities in $A$, while in reality all the processes live in $B$ and $C$.

We work inside a model for organic chemistry that has just enough structure to talk about an important biochemical process known as {\em phosphorylation of glucose}. The discussion here is inspired by a talk by Krivine~\cite{krivine-talk}, which, in turn, is motivated by the problem of systematising a vast amount of experimental data in systems biology in a way that is easy for humans to both understand and use~\cite{krivine-siglog}. Our strategy is to define a layered monoidal theory with three layers, each corresponding to viewing chemical change at a different abstraction level:
\begin{center}
\begin{tabular}{ c | c }
$\Name$ & English names of the relevant molecules \\
\hline
$\Scheme$ & Reaction schemes that represent reaction mechanisms \\
\hline
$\Disc$ & Local rewrite rules that capture all possible chemical change
\end{tabular}
\end{center}

Let us fix the following notions needed for Definition~\ref{def:chemlabgraph-layered}:
\begin{itemize}
\item a countable set of {\em vertex names} $\VS$,
\item a finite set of {\em vertex labels} $\Atset$, which contains the special symbol $\alpha$,
\item the set of {\em edge labels} $\Lab\coloneqq\{0,1,2,3,4,\ib\}$.
\end{itemize}
We denote the vertex names by either positive integers or lowercase Latin letters, as appropriate to the situation. While formally the set of vertex labels can contain arbitrary symbols, in what follows we shall assume that $\Atset$ contains a symbol for each main-group element of the periodic table: $\{H,C,O,P,\dots\}\sse\Atset$. For this reason, we will also refer to $\Atset$ as the {\em atom labels}. The special symbol $\alpha$ may be thought of as representing an unpaired electron or a free charge. The integers $\{0,1,2,3,4\}$ in the set of edge labels stand for covalent bonds, while $\ib$ stands for an ionic bond.

\begin{definition}[Chemically labelled graph]\label{def:chemlabgraph-layered}
A {\em chemically labelled graph} is a triple $(V,\tau,m)$, where $V\sse\VS$ is a finite set of {\em vertices}, $\tau:V\rightarrow\Atset\times\Z$ is a {\em vertex labelling function}, and $m:V\times V\rightarrow\Lab$ is an {\em edge labelling function} satisfying $m(v,v)=0$ and $m(v,w)=m(w,v)$ for all $v,w\in V$.
\end{definition}
Thus, a chemically labelled graph is irreflexive (we interpet the edge label $0$ as no edge) and symmetric, and each of its vertices is labelled with an element of $\Atset$, together with an integer indicating the charge.

When drawing chemically labelled graphs, we adopt the following conventions:
\begin{enumerate}[label=(\arabic*)]
\item the vertex label from $\Atset$ is drawn at the centre of a vertex,
\item the vertex name is drawn as a superscript on the left (so within a single graph, no left superscript appears twice),
\item a non-zero charge is drawn as a superscript on the right (hence the lack of a right superscript indicates zero charge)
\item the charge $-1$ is abbreviated as $-$, and similarly the charge $1$ as $+$,
\item $n$-ary covalent bonds are drawn as $n$ parallel lines,
\item ionic bonds are drawn as dashed lines.
\end{enumerate}
\begin{example}\label{ex:phosphate}
The phosphate functional group with one unpaired electron is drawn as a chemically labelled graph below:
\begin{center}
\scalebox{1}{\tikzfig{example-phosphate}}.
\end{center}
\end{example}

A {\em disconnection rule} is a partial endofunction on the set of chemical graphs that only changes one atom or bond (and the atoms it connects). We define four classes of disconnection rules, all of which have a clear chemical significance: two versions of {\em electron detachment}, {\em ionic bond breaking} and {\em covalent bond breaking}. We give an informal definition in Figure~\ref{fig:disc-rules-layered}: the left-hand side of the arrow with the conditions written above define the application conditions, and the output of the function is obtained by replacing the left-hand side with the right-hand side. The full formal definition can be found in Chapter~\ref{ch:disc-rules}.

\begin{figure}
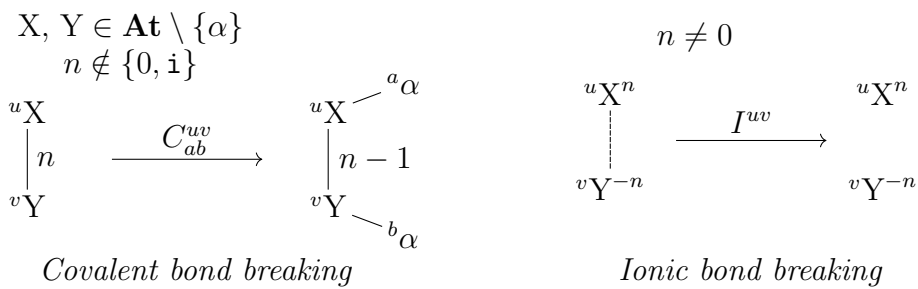

\centering
\scalebox{1}{\tikzfig{disc-rules-as-rewrites}}
\caption{The four disconnection rules.\label{fig:disc-rules-layered}}
\end{figure}

We observe that each disconnection rule is injective (as a partial function), and hence has an inverse partial function. The disconnection rules or their inverses only add or remove the $\alpha$-vertices, hence they are ``matter preserving''. Moreover, valence is preserved locally at each atom, in the sense that the sum of the absolute value of the charge and the number of incident bonds remains unchanged; while the charge is preserved globally, in the sense that the net charge is the same before and after a rule is applied. In Chapter~\ref{ch:disc-rules}, we show that the disconnection rules (together with their inverses) capture all the charge and matter preserving combinatorial rearrangements of chemical graphs, which can be thought of as formal reactions.

We define a layered monoidal theory of the shape
\begin{center}
\scalebox{1}{\tikzfig{glucose-layers}}.
\end{center}
There are no generators in $\Name$, whose colours are given by
$$\{\glucose, \ATP, \glucosesix, \ADP, \hydrogenion\}.$$
Here $\ATP$ and $\ADP$ stand for {\em adenosine triphosphate} and {\em adenosine diphosphate}.

For both $\Disc$ and $\Scheme$, the colours are the chemically labelled graphs. The {\em structural generators} of $\Disc$ are the symmetry together with the following:
\begin{center}
\scalebox{.9}{\tikzfig{glucose-structural-generators}},
\end{center}
where $\eset$ is the unique chemically labelled graph on the empty set, $A$ and $B$ have disjoint sets of vertex names and $u(A,B)$ is the union graph, while $r(A)$ is any graph obtained from $A$ by only renaming some vertex names. We assume that the union of two graphs is a symmetric operation: $u(A,B)=u(B,A)$. The structural generators capture the up-to-an-isomorphism notion of the disjoint union of chemically labelled graphs as a non-strict monoidal structure. The ``union'' generators can also be given a chemical interpretation: either the molecules $A$ and $B$ have come close enough ($u(A,B)$) for a reaction to occur, or, dually, a group of molecules $u(A,B)$ splits into spatially separated groups $A$ and $B$, so that reactions can no longer occur between them.

The other generators in $\Disc$ are given by:
\begin{center}
\scalebox{.9}{\tikzfig{disc-generators}},
\end{center}
where $d$ is a disconnection rule (defined as four classes of partial function in Figure~\ref{fig:disc-rules-layered}), $A$ is in the domain of $d$, and $d(A)$ is the result of applying $d$ to $A$.

The only generators of $\Scheme$ are given by the family
\begin{center}
\scalebox{.9}{\tikzfig{scheme-generator-phosph}},
\end{center}
which is universally quantified over all vertex names: that is, the vertices can be renamed arbitrarily, as long as renaming is done simultaneously in the domain and the codomain graphs.

The layered monoidal theory is defined as follows. The structural generators are isomorphisms in $\Disc$: the first four equations in Figure~\ref{fig:ccs-layered-theory} hold whenever the composition is defined and the types match, and similarly, we add the following equations\footnote{Note that the first equation is well-typed since the union is symmetric.}:
\begin{center}
\scalebox{.9}{\tikzfig{glucose-structural-eqns}}.
\end{center}
The functor $\Name\rightarrow\Disc$ is defined in Figure~\ref{fig:translation-name-disc}, while the functor $\Scheme\rightarrow\Disc$ is identity on colours, and is defined on the generators by the following equation:
\begin{center}
\scalebox{.9}{\tikzfig{phosph-scheme-as-disc-rules}}.
\end{center}
While it might seem that the functor $\Scheme\rightarrow\Disc$ is not doing much, as it simply encodes a sequence of disconnection rules in $\Disc$, it will allow us to define where the non-trivial chemical change is happening: namely, we shall require that the only such change happens if a morphism in $\Disc$ is in the image of this functor.

\begin{figure}
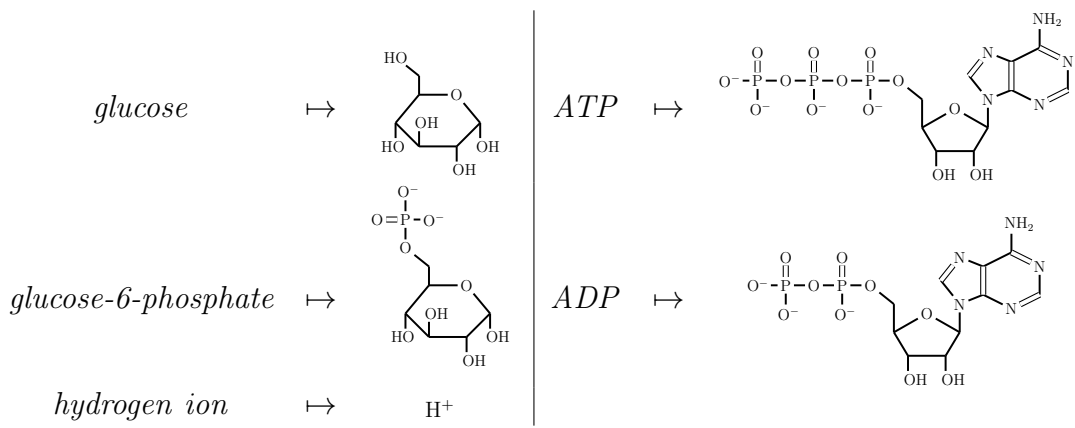

\centering
\begin{tabular}{ c c c | c c c }
$\glucose$ & $\mapsto$ & \scalebox{.5}{\tikzfig{glucose}} & $\ATP$ & $\mapsto$ & \scalebox{.5}{\tikzfig{ATP}} \\
$\glucosesix$ & $\mapsto$ & \scalebox{.5}{\tikzfig{glucosesix}} & $\ADP$ & $\mapsto$ & \scalebox{.5}{\tikzfig{ADP}} \\
$\hydrogenion$ & $\mapsto$ & \scalebox{.7}{\tikzfig{H}} & & &
\end{tabular}
\caption{The functor $\Name\rightarrow\Disc$. We adopt the usual organic chemistry convention that an unlabelled vertex stands for a carbon atom with an appropriate number of hydrogen atoms attached to make the valence add up to $4$. For clarity, we omit the vertex names, as their precise choice is immaterial: the only requirement we impose is that the vertex sets of all five graphs are pairwise disjoint.\label{fig:translation-name-disc}}
\end{figure}

We are now ready to show that this mathematically fairly simple setup allows us to derive the reaction for glucose phosphorylation, which in $\Name$ would be expressed as:
\begin{equation}\label{eqn:phosphorylation}
\glucose+\ATP \longrightarrow \glucosesix+\ADP+\hydrogenion,
\end{equation}
where $+$ denotes the monoidal product. Of course, such reaction cannot exist as a morphism in $\Name$, as it only contains the identity morphisms. However, it is derivable within the layered theory. Note that reaction~\eqref{eqn:phosphorylation} is very close to what one might see as a high-level explanation in a chemistry textbook. We give the derivation below, where we omit the parts of the larger molecules that remain unchanged:
\begin{center}
\scalebox{.8}{\tikzfig{phosphorylation-layers}}.
\end{center}
Note that the above term can be constructed such that the following properties hold: (1) inside the cobox, no other generators of $\Disc$ occur after the renaming generators, (2) the vertex names in the subscript introduced by a single disconnection rule ($ab$ or $cd$) must also be removed by a single inverse rule (the generators with a bar). Importantly, restriction (2) does not apply to the generators appearing inside a box in the translation of $\mathtt{phosph}$. The interpretation is that part of a sequence of disconnection rules is recognised as a translation of a reaction scheme in $\Scheme$. Properties (1) and (2) guarantee that the only non-trivial chemical change occurs within the reaction scheme: the other disconnections must be patched back to restore the original configuration.

Without restrictions (1) and (2), one could always derive a sequence of disconnection rules in $\Disc$ resulting in reaction~\ref{eqn:phosphorylation}, as both sides of the reaction have the same atoms and charge as the ``building material''. The layers $\Disc$ and $\Scheme$ can, therefore, be thought to carry information at distinct epistemic levels: the terms in $\Disc$ are {\em possibilistic}, expressing which chemical change is possible as far as conservation of atoms and charge is concerned, while the terms in $\Scheme$ contain {\em reaction mechanisms}, which can represent empirical knowledge or hypothetical assumptions.

\section{Probabilistic channels}\label{sec:prob-channels}
Here our aim is to formalise the `shaded box' notation introduced by Jacobs~\cite{jacobs-spr} in order to represent {\em conditionals} in probability theory. In particular, we define the conditional of a channel as a certain term (in fact, a box) in a three-layered monoidal theory. In our formalisation of the shaded box notation, we upgrade it in several significant ways, which are discussed in detail in Remark~\ref{rem:shaded-box}. As the first step, we use {\em (co)parametric channels}, which allow part of the output (or input) of a channel as a parameter to be conditioned over. We then define the conditional of a coparametric channel as a certain parametric channel. The original shaded box notation for conditionals fails to be functorial for two reasons: its action on objects depends on whether an object appears as a domain or a codomain of a channel, while on the level of channels, the normalisation results in non-preservation of composition. Our construction remedies the first issue by passing to (co)parametric categories. We further show that the construction is functorial on an important fragment: it preserves composition with the `trivially' coparameterised channels (Lemma~\ref{lma:disintegration-functorial}). We use this ``restricted functoriality'' to prove Proposition~\ref{prop:disintegration-depara}, which gives a diagrammatic proof of uniqueness of conditional channels with marginally full support (Definition~\ref{def:marg-full-support}).

By a {\em distribution} on a set $X$ we mean a function $\omega:X\rightarrow [0,1]$ from $X$ to the unit interval. The {\em support} of a distribution $\omega:X\rightarrow [0,1]$ is the subset $\supp(\omega)\coloneq\{x\in X : \omega(x)\neq 0\}$. We say that a distribution $\omega:X\rightarrow [0,1]$ has {\em full support} if $\supp(\omega) = X$.

Let $\mathcal D:\Set\rightarrow\Set$ be the finite distribution monad, i.e.
$$\mathcal D(X)\coloneq\left\{ \omega : X\rightarrow [0,1] : \supp(\omega) \text{ is finite and } \sum_{x\in X} \omega(x) = 1\right\},$$
and given a function $f:X\rightarrow Y$, the map $\mathcal D(f):\mathcal D(X)\rightarrow\mathcal D(Y)$ is defined by
$$\mathcal D(f)(\omega)(y)\coloneq\sum_{x\in f^{-1}(y)}\omega(x).$$

\begin{definition}[Probabilistic channels]
The category of {\em finite probabilistic channels} $\Chan$\footnote{The morphisms in this category are often called {\em finite stochastic maps / processes / matrices}. We stick with probabilistic channels, often simply saying `channels'.} has the finite sets as objects, while a morphism $X\circarrow Y$ is given by a function $X\rightarrow\mathcal D(Y)$. The identity $X\circarrow X$ is given by the delta distribution: $x\mapsto\delta_x$, while the composition of $f:X\circarrow Y$ and $g:Y\rightarrow Z$ is given by the formula
$$gf(x)(z) = \sum_{y\in Y} f(x)(y)\cdot g(y)(z).$$
\end{definition}
Note that $\Chan$ is the full subcategory on the finite sets of the Kleisli category of the finite distribution modad $\mathcal D$.

We say that a channel $f:X\circarrow Y$ has {\em full support} if for every $x\in X$, the distribution $f(x)$ has full support.

The parametric channels arise via the {\em (co)para construction}, whose construction as a monoidal theory we cover in Appendix~\ref{ch:para-copara}. The construction has appeared several times in the (applied) category theory literature, explicitly under this name in~\cite{fong2019backprop,gavranovic19,cruttwell21}, and it allows morphisms to depend (or be indexed by) objects in the monoidal category at hand.

We refer to the morphisms in $\Para(\Chan)$ as {\em parametric channels}, and to the morphisms in $\Copara(\Chan)$ as {\em coparametric channels}. Explicitly, the parametric channel $(X,f):Z\rightarrow Y$ and the coparametric channel $(V,g):Z\rightarrow Y$ look as follows:
\begin{center}
\scalebox{1}{\tikzfig{para-copara-channels}}.
\end{center}
In the above situation, we call $X$ and $V$ the {\em parameter} and the {\em coparameter}, and say that $f$ is parameterised by $X$ and that $g$ is coparameterised by $V$.

\begin{definition}[Jacobs~\cite{jacobs-spr}, Definition~7.3.1]\label{def:disintegration}
A parametric channel $f':X\times Z\rightarrow Y$ is the {\em conditional} for a coparametric channel $(X,f):Z\rightarrow Y$ if
\begin{equation}\label{eq:disintegration}
\scalebox{1}{\tikzfig{disintegration-eqn}},
\end{equation}
and for any channel $h:Z\rightarrow X$ with full support and any parametric channel $(X,g):Z\rightarrow Y$, the equality on the left implies the equalities on the right:
\begin{center}
\scalebox{1}{\tikzfig{disintegration-unique}}.
\end{center}
\end{definition}

We will solely focus on coparametric channels with marginally full support, as for them the conditional channels exist in the sense of Definition~\ref{def:disintegration} (Proposition~\ref{prop:disintegration-depara}).
\begin{definition}\label{def:marg-full-support}
We say that a morphism $(X,f):Z\rightarrow Y$ in $\Copara(\Chan)$ has {\em marginally full support} if the channel
\begin{center}
\scalebox{1}{\tikzfig{marg-full-support}},
\end{center}
has full support.
\end{definition}
Note that a coparametric channel $(X,f):Z\rightarrow Y$ has marginally full support if and only if for all $z\in Z$ and $x\in X$ we have that the sum $\sum_{y\in Y}f(z)(x,y)$ is non-zero. Let us denote by $\Mfs$ those coparametric channels in $\Copara(\Chan)$ that have marginally full support.

Define the {\em conditional box} $B:\Mfs\rightarrow\ParaSwap(\Chan)$ by mapping $(X,f):Z\rightarrow Y$ to $(X,B(f))$, where $B(f):X\times Z\circarrow Y$ is the channel defined by
$$B(f)(x,z)(y)\coloneq\frac{f(z)(x,y)}{\mathcal N_f(z,x)},$$
where $\mathcal N_f(z,x)\coloneq\sum_{y\in Y}f(z)(x,y)$.

We thus have the following diagram, where the unlabelled arrows are the faithful embeddings mapping each morphism to itself (co)parameterised by the monoidal unit:
\begin{center}
\scalebox{1}{\tikzfig{b-embeddings}}.
\end{center}
Note that $B$ is {\em not} a functor -- hence we cannot assume the functoriality equations of a layered theory (Figures~\ref{fig:structural-twocells-functors-int} and~\ref{fig:structural-twocells-functors-ext}). However, since $B$ preserves the domain and the codomain, we may still use the terms generated by treating the above diagram as a layered signature to reason about the properties of conditionals. Moreover, Lemma~\ref{lma:disintegration-functorial} shows that $B$ is functorial on (the superset of) the embedding of $\Chan$, and that the above diagram commutes.

Graphically, the action of $B$ is denoted as follows:
\begin{center}
\scalebox{1}{\tikzfig{disintegration-functor}}.
\end{center}
Note that we omit the subscript (label) $B$ as there is only one translation involved.

\begin{remark}[A note on deparameterisation]\label{rem:deparameterisation}
Since applying the conditional box to a coparametric channel yields a parametric channel, the result admits deparameterisation, as defined in Appendix~\ref{ch:para-copara}, denoted by
\begin{center}
\scalebox{1}{\tikzfig{disintegration-channel}}.
\end{center}
The above notation makes explicit the computational steps taken to obtain this channel: one starts with a {\em bona fide} channel $f:Z\rightarrow X\times Y$, one then specifies which part of the codomain is to be regarded as the coparameter, thus passing to the coparametric channel $(X,f):Z\rightarrow Y$, after which one applies the conditional box, obtaining the parametric channel $(X,Bf):Z\rightarrow Y$, finally, one views the parametric channel as the honest channel $Bf:X\times Z\rightarrow Y$. Proposition~\ref{prop:disintegration-depara} that we are about to prove shows that the deparameterisation of a conditional box gives the conditional of the original channel (Definition~\ref{def:disintegration}).
\end{remark}

\begin{remark}\label{rem:shaded-box}
The graphical depiction of the conditional box has been introduced as an informal notation by Jacobs~\cite[Section~7.3]{jacobs-spr}, where it is called the `shaded box'. However, our notation differs from the shaded box notation in three crucial ways. First, it is defined formally as a certain term in a layered monoidal theory; consequently, inside our shaded box, all the usual equalities of channels hold. Second, we keep a track of the order of the parameters. For example, the second equality in~\cite[Exercise~7.3.7]{jacobs-spr} becomes
\begin{center}
\scalebox{1}{\tikzfig{disintegration-nested}},
\end{center}
making it apparent that there is a swap of order of the parameters on the right-hand side. This equality can then be used to justify ignoring the order. Third, thinking of the coparameter wire as being attached to the right side of the box gives a good visual intuition for the equalities that hold for the shaded box: any morphism that is not attached (so that there are no obstructions to dragging it out) is able to slide out of the box -- see Lemma~\ref{lma:disintegration-functorial} and~\cite[Exercise~7.3.7]{jacobs-spr}. Since the parameter on the left has to have the same type as the coparameter attached to the right side of the box, the intuition of ``bending" the wire backwards as in~\cite{jacobs-spr} is preserved by thinking of the boundary of the box as carrying the type information.
\end{remark}

\begin{lemma}\label{lma:disintegration-functorial}
Let $(1,g):Y\rightarrow W$ be a channel coparameterised by the terminal set $1$. Then $B$ acts as identity on $(1,g)$, in the sense that $B(1,g)=(1,g)$ is the channel parameterised by $1$ whose action is determined by $g$:
\begin{center}
\scalebox{1}{\tikzfig{disintegration-id}}.
\end{center}
Moreover, if $(X,f):Z\rightarrow Y$ is any coparametric channel in $\Mfs$ composable with $(1,g)$, then $B$ preserves the composition $(X,f);(1,g)$:
\begin{center}
\scalebox{1}{\tikzfig{disintegration-functorial-wobox}}.
\end{center}
\end{lemma}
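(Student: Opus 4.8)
The plan is to verify both claims by direct computation with the defining formula for $B$, using the diagrammatic intuition of Remark~\ref{rem:shaded-box} as a guide: since $g$ is attached only to the output wire $W$ and not to the coparameter being conditioned over, it ought to slide freely out of the conditional box. The computational content of this sliding is that the normalisation constant $\mathcal N$ is insensitive to post-composition with a coparameter-free channel.

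First I would prove the identity claim. Writing $\ast$ for the unique element of the terminal set $1$, the conditional box sends $(1,g)$ to the parametric channel with
\[
B(g)(\ast,y)(w) = \frac{g(y)(\ast,w)}{\mathcal N_g(y,\ast)}, \qquad \mathcal N_g(y,\ast) = \sum_{w\in W} g(y)(\ast,w).
\]
Since $g$ is a channel, $g(y)$ is a distribution on $1\times W$; as the factor $1$ contributes only the point $\ast$, the normalisation $\mathcal N_g(y,\ast)=\sum_{w} g(y)(\ast,w)$ equals $1$. Hence $B(g)(\ast,y)(w)=g(y)(\ast,w)$, i.e.\ $B(1,g)=(1,g)$ with the trivial coparameter simply reread as a trivial parameter. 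This is exactly where triviality of the coparameter is used: a nontrivial coparameter would leave a genuine normalisation factor.

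Next I would treat the composition. On the one hand, the coparametric composite $(X,f);(1,g):Z\rightarrow W$ has coparameter $X$ (after absorbing $X\times 1\cong X$) and underlying channel
\[
h(z)(x,w) = \sum_{y\in Y} f(z)(x,y)\, g(y)(\ast,w).
\]
Applying $B$ requires $\mathcal N_h(z,x)=\sum_{w} h(z)(x,w)$; interchanging the two sums and using $\sum_{w} g(y)(\ast,w)=1$ from the first part gives $\mathcal N_h(z,x)=\sum_{y} f(z)(x,y)=\mathcal N_f(z,x)$. On the other hand, composing $B(X,f)=(X,Bf)$ with $B(1,g)=(1,g)$ in $\ParaSwap(\Chan)$ yields the parametric channel whose value at $w$ is $\sum_{y} Bf(x,z)(y)\, g(y)(\ast,w)$. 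Since $\mathcal N_f(z,x)$ is independent of both $y$ and $w$, it factors out of the sum over $y$, and the identity $\mathcal N_h=\mathcal N_f$ shows that $B(h)(x,z)(w)$ coincides with $\sum_{y} Bf(x,z)(y)\, g(y)(\ast,w)$. This is precisely $B((X,f);(1,g))=B(X,f);B(1,g)$.

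The main obstacle is the identification $\mathcal N_h=\mathcal N_f$: functoriality of $B$ fails in general exactly because the normalisation of a composite is not the composite of normalisations. The restriction to a trivially coparameterised second factor $(1,g)$ is what rescues the argument, since there $g$ contributes a normalisation of $1$ and the constant $\mathcal N_f$ survives unchanged under post-composition. I would also take care to match the bookkeeping of parameters in $\ParaSwap(\Chan)$ (the order in which $X$ and $1$ are concatenated) against the coparameter accumulation in $\Copara(\Chan)$, so that the two sides genuinely have the same parameter object before the channels are compared.
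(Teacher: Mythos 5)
Your proof is correct and is exactly the direct computation the paper leaves implicit (the lemma is stated without proof, with Remark~\ref{rem:shaded-box} supplying the sliding intuition you invoke): the two key points are that a trivially coparameterised channel has $\mathcal N_g\equiv 1$, and that by Fubini $\mathcal N_{(X,f);(1,g)}=\mathcal N_f$, after which the constant factors out of the sum over $Y$. Note that your identity $\mathcal N_h=\mathcal N_f$ also quietly discharges the side condition that the composite $(X,f);(1,g)$ lies in $\Mfs$, so that $B$ may legitimately be applied to it.
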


\begin{lemma}\label{lma:fs-id}
Let $h:Z\rightarrow X$ be any channel with full support. We then have
\begin{center}
\scalebox{1}{\tikzfig{disintegration-fs-id}}.
\end{center}
\end{lemma}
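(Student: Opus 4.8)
The plan is to reduce the claimed identity to a pointwise computation in $\Chan$. Since $\Chan$ is a concrete category (finite sets together with $\mathcal D$-valued functions) and every generator in this layered signature has a completely explicit action on distributions, the cleanest route is to unfold both sides of the statement into their defining formulas, expressing each side as a real number in $[0,1]$ indexed by the relevant elements $z\in Z$, $x\in X$ and (if present) $y$, and then to verify equality value-by-value. The only subtle ingredient is the conditional box $B$, so everything hinges on computing the normalisation constant it introduces.

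First I would identify the coparametric channel $f$ to which $B$ is applied on the left-hand side. Reading off the copy and $h$ generators, the $X$-coparameter of $f$ is produced precisely by $h$, so that the $X$-marginal of $f$ equals $h$ and hence the normalisation constant appearing in $B$ is
\[
\mathcal N_f(z,x)=\sum_{y} f(z)(x,y)=h(z)(x).
\]
Substituting into the definition of the conditional box, $B(f)(x,z)(y)=f(z)(x,y)/\mathcal N_f(z,x)$, the factor $h(z)(x)$ carried by the numerator cancels against $h(z)(x)$ in the denominator, leaving exactly an identity (equivalently a copy/$\delta$) on the surviving wire, which is what the right-hand side records. This cancellation is the sole point at which the hypothesis is used: full support of $h$ means $h(z)(x)\neq 0$ for every $z\in Z$ and $x\in X$, and this is exactly the condition that makes $B$ well-defined here and licenses the division. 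Having matched the two sides as functions of $z,x$ (and $y$), the identity follows.

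The main obstacle is bookkeeping rather than mathematics: one must track the parameter and coparameter wires through the $\Para$ and $\Copara$ constructions carefully enough to be certain that the channel fed into $B$ really has $X$-marginal equal to $h$, so that $\mathcal N_f(z,x)=h(z)(x)$ rather than some other sum, and that the full-support hypothesis indeed covers every denominator that occurs. Once the marginal is correctly computed, the remaining step is the elementary cancellation above, so the two sides agree pointwise and the lemma is established; this is the concrete instance of $B$ ``inverting'' marginalisation that will later feed the uniqueness argument in Proposition~\ref{prop:disintegration-depara}.
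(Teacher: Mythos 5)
Your proof is correct, and it is essentially the argument the paper intends: the lemma is stated there without proof, being exactly the routine pointwise computation you carry out. Your two key observations --- that the coparametric channel fed into $B$ (copy $z$, apply $h$ to produce the coparameter) has normalisation constant $\mathcal N_f(z,x)=\sum_{z'}f(z)(x,z')=h(z)(x)$, and that full support of $h$ is precisely the marginally-full-support condition making $B$ applicable and licensing the cancellation $h(z)(x)\delta_z(z')/h(z)(x)=\delta_z(z')$ --- are the whole content of the lemma.
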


\begin{proposition}\label{prop:disintegration-depara}
Every channel in $\Mfs$ admits has the conditional channel, and the unique conditional channel of $(X,f):Z\rightarrow Y$ is given by the deparameterisation of $B(X,f)$:
\begin{center}
\scalebox{1}{\tikzfig{disintegration-channel}}.
\end{center}
\end{proposition}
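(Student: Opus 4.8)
The plan is to establish the two defining clauses of Definition~\ref{def:disintegration} for the explicit candidate $f' \coloneq Bf : X \times Z \circarrow Y$, namely the deparameterisation of the parametric channel $(X,Bf)$ described in Remark~\ref{rem:deparameterisation}, and then to extract uniqueness from the second (implicational) clause. Throughout, the marginal full support hypothesis (Definition~\ref{def:marg-full-support}) is what makes $Bf$ well defined: it guarantees $\mathcal N_f(z,x) = \sum_{y} f(z)(x,y) \neq 0$ for all $z$ and $x$, so that each $Bf(x,z)$ is a genuine distribution on $Y$.

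First I would verify the recovery equation~\eqref{eq:disintegration}. Pointwise this is the elementary identity $f(z)(x,y) = \mathcal N_f(z,x)\cdot Bf(x,z)(y)$, which holds by the very definition of $B$ once the denominator is cancelled; graphically it says that reassembling the conditional box $Bf$ with the $X$-marginal of $f$ returns the coparametric channel $(X,f)$. Since this clause is purely computational, I expect it to be routine.

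The substance lies in the implicational clause, which also yields uniqueness. Suppose $h : Z \circarrow X$ has full support and $(X,g)$ is a parametric channel satisfying the left-hand equality of Definition~\ref{def:disintegration}. The strategy is to apply the conditional box $B$ to both sides of this equality and simplify. On the side involving $h$, I would invoke Lemma~\ref{lma:fs-id} to absorb the full-support channel, while the trivially coparameterised pieces are moved across the box using the restricted functoriality of Lemma~\ref{lma:disintegration-functorial} (recall that $B$ preserves composition only with channels coparameterised by the terminal set $1$). Carefully chaining these two moves should collapse the hypothesis to the desired right-hand equalities, showing that $g$ is forced to coincide with $Bf$; applying this with $g$ ranging over any two putative conditionals then gives uniqueness.

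The main obstacle will be the bookkeeping in this last step: because $B$ is \emph{not} a functor, every diagrammatic manipulation must be routed through the narrow class of $1$-coparameterised channels for which Lemma~\ref{lma:disintegration-functorial} applies, and one must check that the normalisation factors $\mathcal N_f$ introduced and removed by $B$ cancel exactly. The full support of $h$ is indispensable here --- it is precisely the hypothesis that lets Lemma~\ref{lma:fs-id} erase the auxiliary channel without leaving a residual normalisation --- so the argument does not survive if that assumption is dropped.
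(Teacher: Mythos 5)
Your proposal is correct and follows essentially the same route as the paper's proof: equation~\eqref{eq:disintegration} is checked by direct computation of the composite channel, and the implicational clause of Definition~\ref{def:disintegration} (which carries uniqueness) is derived diagrammatically using exactly the two tools you identify --- the restricted functoriality of Lemma~\ref{lma:disintegration-functorial} to move trivially coparameterised pieces across the conditional box, and Lemma~\ref{lma:fs-id} to absorb the full-support channel $h$. Your remarks on why the full-support hypothesis is indispensable and why all manipulations must be routed through $1$-coparameterised channels (since $B$ is not a functor) accurately reflect the constraints the paper's diagrammatic argument works under.
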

\begin{proof}
Verifying that equation~\ref{eq:disintegration} holds when we take $f'$ to be the above box is a matter of computing the composite channel. For uniqueness, we give a diagrammatic argument using Lemmas~\ref{lma:disintegration-functorial} and~\ref{lma:fs-id}. Hence suppose $h$ and $g$ are as in Definition~\ref{def:disintegration}, such that their composite is equal to $f$. The first equality then follows straightforwardly:
\begin{center}
\scalebox{1}{\tikzfig{disintegration-uniqueness-eq1}}.
\end{center}
The second equality relies on the aforementioned Lemmas:
\begin{center}
\scalebox{1}{\tikzfig{disintegration-uniqueness-eq2}}.
\end{center}
\end{proof}

We have chosen to present the construction of the conditional box concretely in $\Chan$ for the sake of demonstrating how one can capture existing constructions using layered theories. However, it should be possible to express the conditional box more generally (and more synthetically) in the setting of copy-discard categories. Such a construction would follow existing axiomatisations of normalisation by Lorenz and Tull~\cite{lorenz-tull-causal} and its use for axiomatising conditionals in the discrete case by~\cite{lorenz-tull-causal} and Jacobs, Sz\'eles and Stein~\cite{jacobs-szeles-stein25}.

\chapter{Indexed monoids}\label{ch:indexed-monoidal}
In this chapter, we study categories with indexed monoids (Definition~\ref{def:indexed-monoids}). We ultimately extend this notion to opfibrations with indexed monoids (Definition~\ref{def:opfib-indexed-mon}), a special case of which will be used as a semantics for opfibrational theories in Section~\ref{sec:opfib-models}.

A symmetric monoidal category is denoted by $(\cat C,\otimes,I,\alpha,\lambda,\rho,\sigma)$, where $I$ is the monoidal unit, while the Greek letters denote the natural isomorphisms: $\alpha$ is the associator, $\lambda$ and $\rho$ are, respectively, the left and right unitors, while $\sigma$ is the symmetry. For brevity, we shall leave the natural isomorphisms implicit, and simply write $(\cat C,\otimes,I)$ for a monoidal category. We take the liberty to use string diagrams even when reasoning about non-strict monoidal categories: we trust that the reader will be able to infer the missing natural isomorphisms where necessary.

\section{Categories with indexed monoids}

Here we define categories with indexed monoids, and show that free models of monoidal theories with indexed monoids give the left adjoint to the forgetful functor from the categories with indexed monoids to categories.

The following definition should be compared to Definition~\ref{def:thy-univ-comonoids}, where we defined the monoidal theory with uniform comonoids.
\begin{definition}[Uniform comonoids]
A symmetric monoidal category $(\cat C,\otimes,I)$ has {\em uniform comonoids} if the following obtain:
\begin{itemize}
\item for every object $x\in\Ob(\cat C)$, there are morphisms $d_x: x\rightarrow x\otimes x$ and $e_x:x\rightarrow I$,
\item for every object $x$, the morphism $d_x$ is a comonoid with the counit $e_x$ (up to the natural isomorphisms),
\item $d_I=\rho_I^{-1}$ and $e_I=\id_1$,
\item $d$ and $e$ are natural, i.e.~for every morphism $f:x\rightarrow y$, we have $d_x;(f\otimes f)=f;d_y$ and $f;e_y = e_x$,
\item for all objects $x,y\in\Ob(\cat C)$, we have (up to associators)
$$(d_x\otimes d_y) ; (\id_x\otimes\sigma_{xy}\otimes\id_y)= d_{x\otimes y}.$$
\end{itemize}
\end{definition}

Note that the naturality conditions imply that the uniform comonoids are unique, in the sense that if both $(d,e)$ and $(d',e')$ are uniform comonoids on the same symmetric monoidal category, then $(d,e)=(d',e')$. The following result is known as {\em Fox's theorem}. We refer the reader to Melli\` es~\cite[Section~6.4]{mellies-cat-semantics-linear-logic} for a more detailed discussion, including more refined versions of the equivalence.
\begin{proposition}\label{prop:fox}
A symmetric monoidal category $(\cat C,\otimes,I)$ is cartesian monoidal (i.e.~$\otimes$ is the cartesian product and $I$ is the terminal object) if and only if it has uniform comonoids.
\end{proposition}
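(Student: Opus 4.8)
The plan is to prove the two implications separately, treating the forward direction as routine and concentrating on the reverse.

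For the forward direction ($\Rightarrow$), suppose $(\cat C,\otimes,I)$ is cartesian monoidal, so $I$ is terminal and $\otimes=\times$. I would define $e_x:x\rightarrow I$ to be the unique map to the terminal object and $d_x:x\rightarrow x\times x$ to be the diagonal $\langle\id_x,\id_x\rangle$ induced by the universal property of the product. All five clauses then follow from that universal property: coassociativity and counitality are the standard comonoid identities for a diagonal; $e_I=\id_I$ and $d_I=\rho_I^{-1}$ hold by terminality of $I$; naturality $d_x;(f\otimes f)=f;d_y$ and $f;e_y=e_x$ holds because both composites are determined by their projections (respectively by terminality); and the compatibility equation is the familiar fact that the diagonal of a product is the shuffle of the diagonals of the factors. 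These verifications are routine and I would dispatch them quickly, suppressing coherence isomorphisms throughout via the string-diagram calculus.

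For the reverse direction ($\Leftarrow$), suppose $\cat C$ has uniform comonoids $(d,e)$; I must show $I$ is terminal and that $\otimes$, equipped with suitable projections, is the cartesian product. First, $I$ is terminal: for any $g:x\rightarrow I$, naturality of $e$ applied to $g$, together with $e_I=\id_I$, gives $g=g;e_I=e_x$, so $e_x$ is the unique map $x\rightarrow I$. Next I define projections
$$\pi_1\coloneq(\id_x\otimes e_y);\rho_x:x\otimes y\rightarrow x,\qquad \pi_2\coloneq(e_x\otimes\id_y);\lambda_y:x\otimes y\rightarrow y,$$
and for $f:z\rightarrow x$, $g:z\rightarrow y$ I set $\langle f,g\rangle\coloneq d_z;(f\otimes g)$. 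Existence of the pairing, i.e.\ $\langle f,g\rangle;\pi_1=f$ and $\langle f,g\rangle;\pi_2=g$, follows from naturality of $e$ (to rewrite $g;e_y$ as $e_z$), the right and left counit laws (which collapse $d_z;(\id_z\otimes e_z)$ to $\rho_z^{-1}$), and naturality of the unitors.

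The crux, and the step I expect to be the main obstacle, is \emph{uniqueness} of the pairing, since this is where the compatibility axiom for $d$ does its essential work. The key lemma is that
$$d_{x\otimes y};(\pi_1\otimes\pi_2)=\id_{x\otimes y},$$
which I would prove by expanding $d_{x\otimes y}=(d_x\otimes d_y);(\id_x\otimes\sigma_{xy}\otimes\id_y)$ via the compatibility condition and then discarding the appropriate legs using the counit laws: after the shuffle, $\pi_1$ keeps the first copy of $x$ and deletes $y$, while $\pi_2$ deletes $x$ and keeps the second copy of $y$, so the two counit laws return the identity (notably, no cocommutativity is required). Granting this lemma, if $k:z\rightarrow x\otimes y$ satisfies $k;\pi_1=f$ and $k;\pi_2=g$, I apply naturality of $d$ to $k$, namely $d_z;(k\otimes k)=k;d_{x\otimes y}$, and postcompose both sides with $\pi_1\otimes\pi_2$; the left side becomes $d_z;\bigl((k;\pi_1)\otimes(k;\pi_2)\bigr)=d_z;(f\otimes g)=\langle f,g\rangle$, while the right side becomes $k;d_{x\otimes y};(\pi_1\otimes\pi_2)=k$ by the lemma, yielding $k=\langle f,g\rangle$. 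Carrying out these manipulations as string diagrams makes the counit and compatibility laws visually transparent. This is the classical Fox theorem, and the argument follows the line recorded in Melli\`es~\cite{mellies-cat-semantics-linear-logic}.
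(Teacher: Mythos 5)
Your proposal is correct and follows essentially the same route as the paper's proof: the pairing $d_z;(f\otimes g)$ with projections built from the counits, terminality of $I$ via naturality of $e$, and uniqueness by applying naturality of $d$ to the candidate map and collapsing $d_{x\otimes y};(\pi_1\otimes\pi_2)$ to the identity using the compatibility axiom and counit laws. Your explicit isolation of the lemma $d_{x\otimes y};(\pi_1\otimes\pi_2)=\id_{x\otimes y}$ (with the correct observation that no cocommutativity is needed) is exactly the content of the paper's diagrammatic uniqueness computation.
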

\begin{proof}
Suppose $(\cat C,\otimes,I)$ has uniform comonoids $(d,e)$. Given maps $f$ and $g$ as below
\begin{center}
\scalebox{1}{\tikzfig{product-diagram}},
\end{center}
the map $(f,g) : x\rightarrow a\otimes b$ is given by $d_x;(f\otimes g)$, while the projections are given by $\pi_a\coloneq(\id_a\otimes e_b);\rho_a$ and $\pi_b\coloneq(e_a\otimes\id_b);\lambda_b$. It is then immediate that composing $(f,g)$ with the projections recovers the original maps $f$ and $g$. For uniqueness, suppose there is some map $h:x\rightarrow a\otimes b$ such that
\begin{center}
\scalebox{1}{\tikzfig{fox-proof-uniqueness-equations}}.
\end{center}
We then have that $h=(f,g)$ as follows:
\begin{center}
\scalebox{1}{\tikzfig{fox-proof-uniqueness}}.
\end{center}
Finally, observe that $I$ is the terminal object by naturality of $e$.

Conversely, if $(\cat C,\otimes,I)$ is cartesian monoidal, then $d_x$ is the diagonal and $e_x$ is the unique map into the terminal object.
\end{proof}

\begin{definition}[Consistent monoids]
{\em Consistent monoids} in a symmetric monoidal category $(\cat C,\otimes,I)$ consist of the following data:
\begin{itemize}
\item for every object $x\in\Ob(\cat C)$, there are morphisms $m_x: x\otimes x\rightarrow x$ and $u_x:I\rightarrow x$,
\item for every object $x$, the morphism $m_x$ is a monoid with the unit $u_x$ (up to the natural isomorphisms),
\item $m_I=\rho_I$ and $u_I=\id_I$,
\item $u$ is natural, i.e.~for every morphism $f:x\rightarrow y$, we have $u_x ; f = u_y$,
\item for all objects $x,y\in\Ob(\cat C)$, we have (up to associators)
$$(\id_x\otimes\sigma_{yx}\otimes\id_y) ; (m_x\otimes m_y) = m_{x\otimes y}.$$
\end{itemize}
\end{definition}
Similarly to uniform comonoids, the units of consistent monoids are unique: in fact, naturality and $u_I=\id_I$ imply that $I$ is the initial object. Contrariwise, since naturality is not asked of monoid multiplication, they need not be unique, and therefore correspond to a choice of structure rather than being a property of a category. This motivates the following definition.
\begin{definition}[Indexed monoids]\label{def:indexed-monoids}
We call a choice of consistent monoids in a symmetric monoidal category with uniform comonoids {\em indexed monoids}. We denote a {\em category with indexed monoids} by $(\cat C,\otimes,I,m)$.
\end{definition}
Using Fox's theorem (Proposition~\ref{prop:fox}), we restate the definition of a category with indexed monoids as follows.
\begin{proposition}
A symmetric monoidal category $(\cat C,\otimes,I)$ has indexed monoids if
\begin{itemize}
\item it is cartesian monoidal,
\item $I$ is also an initial object (thus a zero object),
\item for every object $x\in\Ob(\cat C)$, there is a monoid $m_x: x\otimes x\rightarrow x$ with unit $u_x:I\rightarrow x$,
\item $m_I=\rho_I$,
\item for all objects $x,y\in\Ob(\cat C)$, we have (up to associators)
$$(\id_x\otimes\sigma_{yx}\otimes\id_y) ; (m_x\otimes m_y) = m_{x\otimes y}.$$
\end{itemize}
\end{proposition}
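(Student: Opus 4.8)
The plan is to show that the five listed conditions are equivalent to Definition~\ref{def:indexed-monoids}, i.e.\ to a symmetric monoidal category with uniform comonoids equipped with a choice of consistent monoids. The first listed condition, being cartesian monoidal, is handled immediately by Fox's theorem (Proposition~\ref{prop:fox}): a symmetric monoidal category has uniform comonoids precisely when it is cartesian monoidal, so throughout we may freely trade one hypothesis for the other. Moreover, three pieces of the consistent-monoid data appear verbatim in both formulations---the choice of a monoid $m_x$ with unit $u_x$ for each object, the normalisation $m_I=\rho_I$, and the symmetry-compatibility equation $(\id_x\otimes\sigma_{yx}\otimes\id_y);(m_x\otimes m_y)=m_{x\otimes y}$---so no work is required there. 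The only substantive point is to reconcile the remaining clauses of consistent monoids, namely naturality of $u$ together with $u_I=\id_I$, against the single clause ``$I$ is an initial object''.

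First I would establish the direction from the restatement to Definition~\ref{def:indexed-monoids}. Assuming $I$ is initial, for each object $x$ the morphism $u_x:I\rightarrow x$ must coincide with the unique map out of the initial object. Naturality of $u$ is then automatic: given $f:x\rightarrow y$, both $u_x;f$ and $u_y$ are morphisms $I\rightarrow y$, hence equal by uniqueness. Likewise $u_I=\id_I$, since $\id_I$ is the unique endomorphism of an initial object. Combined with cartesian monoidality (via Fox's theorem) and the retained monoid data, this yields exactly a choice of consistent monoids in a symmetric monoidal category with uniform comonoids.

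For the converse direction, cartesian monoidality again follows from Fox's theorem, and the only nontrivial step is deriving initiality of $I$ from naturality of $u$ and $u_I=\id_I$---precisely the observation recorded in the text immediately after the definition of consistent monoids. Concretely, given any object $x$ and any morphism $g:I\rightarrow x$, naturality of $u$ applied to $g$ gives $u_I;g=u_x$, whence $g=\id_I;g=u_x$; thus $u_x$ is the unique morphism $I\rightarrow x$ and $I$ is initial. This closes the equivalence.

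I do not expect a genuine obstacle: the argument is a routine unpacking of definitions, with Fox's theorem absorbing the comonoid half and the initial-object characterisation of a natural, normalised unit absorbing the monoid half. The only point demanding a moment's care is confirming that the chosen monoid unit $u_x$ is forced to agree with the canonical map from the initial object, so that the selected monoid structure does not clash with the universal property; but this is immediate from the uniqueness of maps out of an initial object.
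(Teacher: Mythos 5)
Your proof is correct and matches the paper's intent exactly: the paper states this proposition without proof, presenting it as an immediate restatement via Fox's theorem (Proposition~\ref{prop:fox}) together with the observation, recorded right after the definition of consistent monoids, that naturality of $u$ and $u_I=\id_I$ force $I$ to be initial. Your two directions simply make that implicit routine verification explicit, so there is nothing to fix.
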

\begin{example}
Let $(\Vect_k,\oplus,\{0\})$ be the category of vector spaces over a fixed field $k$, with the direct sum as the monoidal product. Then, for every vector space $V$, addition of vectors $+:V\oplus V\rightarrow V$ gives an indexed monoid structure. This example generalises to any semiadditive category.
\end{example}
\begin{example}
Consider the forgetful functor $U:\Mon\rightarrow\Set_*$ from the category of monoids to the category of pointed sets. Define the category $U\mdash\Mon$ to have the same objects as $\Mon$, and the hom-sets as $U\mdash\Mon(M,N)\coloneqq\Set_*(UM,UN)$, i.e.~the maps are required to preserve the monoidal unit, but not the binary operation. The category $U\mdash\Mon$ has a cartesian monoidal structure given by the product of monoids, the zero object is the monoid with one element, and for every monoid $M=(X,\cdot,1)$, we define the morphism $m_M:M\times M\rightarrow M$ by the multiplication of the monoid: $(x,y)\mapsto x\cdot y$. This gives $U\mdash\Mon$ an indexed monoid structure.
\end{example}
Let us denote by $\IMon$ the category whose objects are symmetric monoidal categories with indexed monoids, and whose morphisms are strong monoidal functors $(F,\mu):(\cat C,\otimes,I,m)\rightarrow (\cat D,\otimes,I,m)$ such that for all $x\in\Ob(\cat C)$, we have $\mu_{xx};F(m_x)=m_{Fx}$.
\begin{definition}
Let $\cat I=(\cat C,\otimes,I,m)$ be a category with indexed monoids. A morphism $f:x\rightarrow y$ in $\cat C$ is a {\em monoid homomorphism} if $m_x;f = (f\otimes f) ; m_y$. We denote the set of monoid homomorphisms by $\hom(\cat I)$.
\end{definition}
\begin{proposition}\label{prop:imon-subcat-preserved}
Given a category with indexed monoids, its monoid homomorphisms form a wide monoidal subcategory. Moreover, the monoid homomorphisms are preserved by the morphisms in $\IMon$.
\end{proposition}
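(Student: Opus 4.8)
The plan is to establish the first claim by checking, in turn, that $\hom(\cat I)$ contains all identities, is closed under composition and under $\otimes$, and that the ambient monoidal structure then restricts; the second claim is a short computation with a strong monoidal functor. Throughout I would work in the diagrammatic order of composition and use the interchange law for $\otimes$ freely. First, each $\id_x$ is a monoid homomorphism, since $m_x;\id_x = m_x = (\id_x\otimes\id_x);m_x$; in particular every object carries its identity, so the subcategory is wide. For closure under composition, given homomorphisms $f:x\to y$ and $g:y\to z$ I would chain the two homomorphism conditions and the interchange law:
\[
m_x;(f;g) \;=\; (f\otimes f);m_y;g \;=\; (f\otimes f);(g\otimes g);m_z \;=\; \bigl((f;g)\otimes(f;g)\bigr);m_z .
\]
This already shows $\hom(\cat I)$ is a wide subcategory of $\cat C$.

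Next I would prove closure under the monoidal product. Given homomorphisms $f:x\to x'$ and $g:y\to y'$, the target is $m_{x\otimes y};(f\otimes g)=\bigl((f\otimes g)\otimes(f\otimes g)\bigr);m_{x'\otimes y'}$, and I would expand each mixed monoid through the compatibility equation $m_{a\otimes b}=(\id_a\otimes\sigma_{ba}\otimes\id_b);(m_a\otimes m_b)$. Applying the homomorphism conditions for $f$ and $g$ and the interchange law rewrites the left-hand side as
\[
(\id_x\otimes\sigma_{yx}\otimes\id_y);\bigl((f\otimes f)\otimes(g\otimes g)\bigr);(m_{x'}\otimes m_{y'}),
\]
and the right-hand side as
\[
\bigl((f\otimes g)\otimes(f\otimes g)\bigr);(\id_{x'}\otimes\sigma_{y'x'}\otimes\id_{y'});(m_{x'}\otimes m_{y'}).
\]
Both terminate in $m_{x'}\otimes m_{y'}$, so it suffices to equate the two prefixes, and this is precisely the naturality square of the symmetry applied to $f$ and $g$ on the strands being crossed. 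Finally, to see the monoidal structure genuinely restricts, I would record that the unit object lies in $\hom(\cat I)$ and that the coherence isomorphisms are monoid homomorphisms: in the strict setting the associators and unitors are identities, while the symmetry $\sigma_{xy}$ is a homomorphism by the same compatibility equation together with naturality of $\sigma$ and the coherence theorem (both composites reduce, after factoring out $m_{y}\otimes m_{x}$, to symmetry-built morphisms $x\otimes y\otimes x\otimes y\to y\otimes y\otimes x\otimes x$ inducing the same permutation, hence equal).

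For the second claim I would take $(F,\mu):(\cat C,\otimes,I,m)\to(\cat D,\otimes,I,m)$ in $\IMon$ and a homomorphism $f:x\to y$, and verify that $Ff$ is again a homomorphism. Substituting $m_{Fx}=\mu_{xx};F(m_x)$, then using functoriality, the homomorphism condition for $f$, naturality of $\mu$ in both arguments, and finally the defining condition $\mu_{yy};F(m_y)=m_{Fy}$ of an $\IMon$-morphism, I would compute
\[
m_{Fx};Ff \;=\; \mu_{xx};F(m_x;f) \;=\; \mu_{xx};F\bigl((f\otimes f);m_y\bigr) \;=\; (Ff\otimes Ff);\mu_{yy};F(m_y) \;=\; (Ff\otimes Ff);m_{Fy}.
\]

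The only genuinely nontrivial points I expect are the $\otimes$-closure step and the companion verification that the symmetry is a monoid homomorphism, since these are the places where the compatibility equation for $m_{a\otimes b}$ must be combined with naturality and coherence of $\sigma$; everything else reduces to routine bookkeeping with functoriality and the interchange law.
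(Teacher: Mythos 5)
Your proposal is correct and follows essentially the same route as the paper: the paper's one-line appeal to ``the defining equation and the consistency requirement'' is exactly your expansion via the compatibility equation $m_{x\otimes y}=(\id_x\otimes\sigma_{yx}\otimes\id_y);(m_x\otimes m_y)$ and naturality of $\sigma$, and your final computation for preservation under $\IMon$-morphisms is precisely the paper's two-square diagram chase (bottom square: $f$ a homomorphism; top square: naturality of the strongator $\mu$). Your extra check that the symmetry itself is a monoid homomorphism goes slightly beyond what the paper records, and it is a correct and harmless addition.
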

\begin{proof}
The identity morphisms are evidently monoid homomorphisms. Closure under composition and monoidal products follows from the defining equation of monoid homomorphisms and the consistency requirement. Preservation of a monoid homomorphism $f:x\rightarrow y$ under a functor $(F,\mu)$ in $\IMon$ amounts to the equation
$$m_{Fx};Ff = (Ff\otimes Ff);m_{Fy},$$
which we obtain as the commutativity of the following diagram:
\begin{center}
\scalebox{1}{\tikzfig{monoid-homomorphism-preserved}},
\end{center}
where the bottom square commutes since $f$ is a monoid homomorphism, and the top square commutes by naturality of the strongator $\mu$.
\end{proof}

We define the functor $\hom:\IMon\rightarrow\Cat$ by mapping each $\cat I$ to $\hom(\cat I)$, and each functor to its restriction to monoid homomorphisms. We conclude this section by showing that the monoid homomorphism functor so defined has a left adjoint, obtained by freely adding indexed monoids.

\begin{definition}[Theory with indexed monoids generated by a category]\label{def:indexed-by-category}
Let $\cat X$ be a small category. The {\em theory with indexed monoids generated by $\cat X$} is the theory with indexed monoids (Definition~\ref{def:thy-indexed-monoids}) $\mathsf{im}(\cat X)=(\Ob(\cat X),\Sigma,\mathcal I)$ whose set of colours is given by the objects of $\cat X$, and for each $a,b\in\Ob(\cat X)$ we have $\Sigma(a,b)\coloneqq\cat X(a,b)$, together with the following additional equations for all $a\in\Ob(\cat X)$ and all $f\in\cat X(a,b)$, $g\in\cat X(b,c)$ and $h\in\cat X(a,c)$ such that $f;g=h$:
\begin{center}
\scalebox{1}{\tikzfig{id-eqn}}.
\end{center}
\end{definition}

\begin{definition}[Free category with indexed monoids]\label{def:free-indexed-cat}
Let $\cat X$ be a small category. The {\em free category with indexed monoids generated by $\cat X$} is the term model (Definition~\ref{def:term-model}) $\Fim(\cat X)$ of the theory with indexed monoids generated by $\cat X$ (Definition~\ref{def:indexed-by-category}).
\end{definition}

\begin{proposition}
In the situation of Definition~\ref{def:free-indexed-cat}, the interpretation functions $i:\Ob(\cat X)\rightarrow\Ob(\cat X)^*$ and $i_{a,b}:\cat X(a,b)\rightarrow \Fim(\cat X)(a,b)$ define a functor
$$i:\cat X\rightarrow \hom\left(\Fim(\cat X)\right).$$
Moreover, the functor $i$ satisfies the following universal property: for every category with indexed monoids $\cat I$ and a functor $G:\cat X\rightarrow\hom\left(\cat I\right)$ in $\Cat$, there is a unique (up to a monoidal natural isomorphism) morphism $\bar G : \Fim(\cat X)\rightarrow\cat I$ in $\IMon$ such that the diagram
\begin{center}
\scalebox{1}{\tikzfig{indexed-universal}}
\end{center}
commutes; thus exhibiting $\Fim$ as the left adjoint to $\hom:\IMon\rightarrow\Cat$.
\end{proposition}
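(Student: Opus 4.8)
The plan is to derive both claims from the free--forgetful adjunction for monoidal theories (Corollary~\ref{cor:monoidal-free-forgetful}) specialised to the theory $\mathsf{im}(\cat X)$, combined with Fox's theorem (Proposition~\ref{prop:fox}) and the characterisation of monoid homomorphisms in Proposition~\ref{prop:imon-subcat-preserved}.

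First I would check that $i$ is a functor landing in $\hom(\Fim(\cat X))$. Functoriality of $i$ on morphisms --- preservation of composites and identities --- is exactly what the equations adjoined in Definition~\ref{def:indexed-by-category} assert, so $i(f;g)=i(f);i(g)$ and $i(\id_a)=\id_{ia}$ hold in the term model. That each generator $i_{a,b}(f)$ is a monoid homomorphism is precisely the $1\mdash1$-natural monoid condition (Definition~\ref{def:thy-11-nat-monoids}) built into a theory with indexed monoids, so every $i_{a,b}(f)$ lies in $\hom(\Fim(\cat X))$; that this is a genuine (wide monoidal) subcategory is Proposition~\ref{prop:imon-subcat-preserved}. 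Hence $i:\cat X\to\hom(\Fim(\cat X))$ is well defined.

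Next I would construct $\bar G$ for a given functor $G:\cat X\to\hom(\cat I)$. Post-composing $G$ with the inclusion $\hom(\cat I)\hookrightarrow\cat I$ yields interpretation data, sending each colour $a$ to $G(a)$ and each generator $f\in\cat X(a,b)$ to $G(f)$, and I would verify that this is a model of $\mathsf{im}(\cat X)$ in $\cat I$. The functoriality equations of Definition~\ref{def:indexed-by-category} hold because $G$ is a functor; the $1\mdash1$-natural monoid equations hold because $G$ factors through $\hom(\cat I)$, so each $G(f)$ commutes with the monoid multiplication and unit; and the uniform comonoid equations (Definition~\ref{def:thy-univ-comonoids}) hold automatically, since $\cat I$ is cartesian monoidal (Proposition~\ref{prop:fox}) and the cartesian comonoids are natural with respect to \emph{every} morphism, in particular the images of the generators. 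The universal property of the term model (the counit construction underlying Corollary~\ref{cor:monoidal-free-forgetful}, extended to the possibly non-strict target $\cat I$ by reading the term congruence up to the coherence isomorphisms as in the remark closing Section~\ref{sec:models-monoidal-theories}) then produces a strong monoidal functor $\bar G:\Fim(\cat X)\to\cat I$, unique up to a monoidal natural isomorphism, extending this interpretation.

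Finally I would confirm that $\bar G$ is a morphism in $\IMon$ and that the triangle commutes. Since $\bar G$ sends each monoid generator $m_a$ to the chosen monoid $m_{G a}$ of $\cat I$, and the monoid $m_x$ on a word $x=a_1\cdots a_n$ in $\Fim(\cat X)$ is assembled from the $m_{a_i}$ via the consistency equation, strong monoidality together with consistency of the monoids in $\cat I$ yields $\mu_{xx};\bar G(m_x)=m_{\bar G x}$, so $\bar G\in\IMon$. Commutativity $i;\hom(\bar G)=G$ holds by construction on objects and generators, and uniqueness up to monoidal natural isomorphism is inherited from the term model's universal property; this is precisely the statement that $i$ is the unit of an adjunction $\Fim\dashv\hom$. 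The main obstacle I anticipate is the bookkeeping in the non-strict setting: Corollary~\ref{cor:monoidal-free-forgetful} is stated for strict monoidal targets, so the extension $\bar G$ and its uniqueness must be obtained against a possibly non-strict $\cat I$, which both forces the ``up to monoidal natural isomorphism'' clauses and requires care that the coherence data of $\cat I$ (associators, unitors and the strongator $\mu$) are threaded correctly through the recursive interpretation of terms and through the monoid-preservation equation.
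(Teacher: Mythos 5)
Your proof is correct and takes essentially the same route as the paper's: functoriality of $i$ is read off from the equations adjoined in Definition~\ref{def:indexed-by-category}, the homomorphism property of $i_{a,b}(f)$ from the $1\mdash 1$-natural monoid axioms (Definition~\ref{def:thy-11-nat-monoids}), and existence and uniqueness of $\bar G$ from the freeness of the term model, the only freedom being the images of the colours and of the generators not belonging to the indexed monoid structure, pinned down up to the structural isomorphisms of $\cat I$ by commutativity of the triangle. Your extra verifications --- comonoid naturality via Fox's theorem (Proposition~\ref{prop:fox}), the assembly of word-monoids via the consistency equation, and the non-strictness caveat around Corollary~\ref{cor:monoidal-free-forgetful} --- simply make explicit what the paper's two-sentence proof leaves implicit.
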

\begin{proof}
Functoriality of $i$ follows from the equations in Definition~\ref{def:indexed-by-category}. The fact that $i_{a,b}(f)=f$ is a monoid homomorphism is part of the definition of a theory with $1\mdash 1$-natural monoids (Definition~\ref{def:thy-11-nat-monoids}).

When defining a morphism $\bar G : \Fim(\cat X)\rightarrow\cat I$ in $\IMon$, the only freedoms we have is choosing where to map the colours (objects of $\cat X$) and the generators that are not part of the indexed monoid structure (the morphisms in $\hom\left(\Fim(\cat X)\right)$), both of which are uniquely (up to the structural isomorphisms of $\cat I$) determined by commutativity of the diagram.
\end{proof}

\section{Opfibrations with indexed monoids}\label{sec:opfibrations-indexed-monoids}

Here we extend the notion of indexed monoids from categories to opfibrations. The central notion is that of an {\em opfibration with indexed monoids}, or {\em im-opfibration}, which is shown to be very closely related to monoids in the category of split opfibrations $\MonOpFib_{\mathsf{sp}}(\cat X)$ (Theorem~\ref{thm:monopfib-imopfib}). We choose to work with opfibrations rather than fibrations, however, all the developments dualise to fibrations, in which the appropriate notion is {\em indexed comonoids} rather than indexed monoids.

\begin{definition}[Im-opfibration]\label{def:opfib-indexed-mon}
We say that an opfibration $\cat Y\rightarrow\cat X$ {\em has indexed monoids} if the base category $\cat X$ has indexed monoids, $\cat Y$ is cartesian monoidal, the functor preserves and reflects the cartesian products, and the cartesian product of opcartesian maps in $\cat Y$ is opcartesian. For the sake of brevity, we refer to an opfibration with indexed monoids as an {\em im-opfibration}.
\end{definition}
Note that in the above definition, $\cat X$ having indexed monoids implies that $\cat X$ is cartesian, so that it makes sense to ask the functor to preserve and reflect the cartesian products.

An im-opfibration is {\em split} if it is a split opfibration and the chosen cartesian products are preserved and reflected strictly.

\begin{definition}[Morphism of im-opfibrations]\label{def:morphism-im-opfib}
A {\em morphism of im-opfibrations} is a morphism of opfibrations $(H,K)$ such that the functor between the base categories $K$ is a morphism in $\IMon$, and the functor between the total categories $H$ preserves the cartesian monoidal structure.
\end{definition}
A morphism of split im-opfibrations is a morphism of split opfibrations such that both functors are strict monoidal. Let us denote the category of im-opfibrations by $\imOpFib$, and the fixed base case by $\imOpFib(\cat X)$. As for (op)fibrations, the subcategory of split im-opfibrations is denoted by adding the subscript $\mathsf{sp}$ in each case.

Given an im-opfibration $\cat Y\rightarrow\cat X$, define its {\em restriction} as the functor $\cat Y^*\rightarrow\hom(\cat X)$, where $\hom(\cat X)$ is the subcategory of monoid homomorphisms, and $\cat Y^*$ is the wide subcategory of $\cat Y^*$ defined by $F\in\cat Y^*$ if and only if the image of $F$ is in $\hom(\cat X)$, while the action of the functor is given by that of the original functor.

\begin{proposition}
For any im-opfibration $\cat Y\rightarrow\cat X$, its restriction $\cat Y^*\rightarrow\hom(\cat X)$ is an opfibration. Moreover, the restriction extends to a functor $\hom:\imOpFib\rightarrow\OpFib$.
\end{proposition}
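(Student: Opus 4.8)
The plan is to prove the two claims in sequence: first that the restriction $\cat Y^*\rightarrow\hom(\cat X)$ is an opfibration, and then that the construction is functorial. For the first claim, I would take an opliftable pair $(a, f)$ in the restricted setting, where $a\in\Ob(\cat Y^*)=\Ob(\cat Y)$ lies above some object of $\hom(\cat X)$ and $f$ is a monoid homomorphism in $\cat X$ (i.e.\ $f\in\hom(\cat X)$). Since $\cat Y\rightarrow\cat X$ is an opfibration, there is already an opcartesian lifting $F:a\rightarrow b$ of $(a,f)$ in $\cat Y$. The key step is to verify that $F$ actually lands in $\cat Y^*$, that is, that the image of $F$ under the functor is a monoid homomorphism --- but this is immediate, since that image is exactly $f$, which we assumed to be in $\hom(\cat X)$. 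So the candidate lifting lives in $\cat Y^*$. The real content is checking that $F$ remains opcartesian \emph{within} $\cat Y^*$: given $G:a\rightarrow c$ in $\cat Y^*$ and a monoid homomorphism $h:pb\rightarrow pc$ with $pF;h=pG$, I would invoke the universal property of $F$ in $\cat Y$ to get the unique $H:b\rightarrow c$ above $h$ with $F;H=G$, and then argue that $H$ itself lies in $\cat Y^*$. Since $H$ is above $h$ and $h\in\hom(\cat X)$ by hypothesis, $H$ is in $\cat Y^*$ by definition of the restriction. Uniqueness is inherited directly from uniqueness in $\cat Y$. This establishes that $\cat Y^*\rightarrow\hom(\cat X)$ is an opfibration.

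For functoriality, given a morphism of im-opfibrations $(H,K):(\cat Y\rightarrow\cat X)\rightarrow(\cat Z\rightarrow\cat W)$, I would define $\hom(H,K)$ to be the pair of restricted functors $(H^*, K^*)$, where $K^*$ is the restriction of $K$ to $\hom(\cat X)\rightarrow\hom(\cat W)$ (which exists because $K$ is a morphism in $\IMon$, hence preserves monoid homomorphisms by Proposition~\ref{prop:imon-subcat-preserved}), and $H^*$ is the restriction of $H$ to $\cat Y^*\rightarrow\cat Z^*$. The main things to check are that $H^*$ is well-defined (i.e.\ $H$ maps $\cat Y^*$ into $\cat Z^*$) and that $(H^*,K^*)$ is a morphism of opfibrations. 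Well-definedness follows because if $F\in\cat Y^*$ lies above some $f\in\hom(\cat X)$, then $H(F)$ lies above $K(f)$, and $K(f)\in\hom(\cat W)$ by preservation of monoid homomorphisms; hence $H(F)\in\cat Z^*$. Preservation of opcartesian maps by $H^*$ is inherited from the fact that $(H,K)$ preserves opcartesian maps in the ambient opfibrations, together with the observation (from the first part) that opcartesian maps in the restriction are precisely the restrictions of opcartesian maps in $\cat Y$. Preservation of identities and composition of $\hom$ is routine, as the restricted functors compose as the original functors do.

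The main obstacle I anticipate is the \emph{compatibility of the two restrictions} needed for $(H^*, K^*)$ to genuinely commute as a square: one must confirm that the base restriction $K^*:\hom(\cat X)\rightarrow\hom(\cat W)$ and the total restriction $H^*:\cat Y^*\rightarrow\cat Z^*$ fit into a commuting square with the restricted projections, and that $H^*$ sends the restricted-opcartesian liftings to the correct restricted-opcartesian liftings over the image pair under $K^*$. This requires carefully tracking that the opcartesian liftings chosen in the restriction agree with (are the restrictions of) those in the ambient opfibration --- essentially the content established in the first paragraph --- so that the preservation property of $(H,K)$ transfers without loss. A secondary subtlety worth verifying is that $\hom(\cat X)$ is genuinely a subcategory closed under the relevant structure (wide monoidal subcategory), which is exactly what Proposition~\ref{prop:imon-subcat-preserved} supplies; so the groundwork for this obstacle is already in place, and the argument should go through by chasing the definitions.
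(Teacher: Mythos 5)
Your proposal is correct and follows essentially the same route as the paper: opcartesian maps and liftings are inherited from the ambient opfibration (the lifting of $f\in\hom(\cat X)$ lies in $\cat Y^*$ since its projection is $f$, and the induced mediating maps land in $\cat Y^*$ because they lie above homomorphisms), while functoriality rests on Proposition~\ref{prop:imon-subcat-preserved} to restrict the total and base functors. The only point you share with the paper in leaving implicit is the converse half of ``opcartesian in $\cat Y^*$ iff opcartesian in $\cat Y$'', which your appeal to the first part uses when showing $H^*$ preserves opcartesian maps: your first paragraph proves only the forward direction, and the converse needs the standard observation that any $p^*$-opcartesian map factors as an ambient opcartesian lifting followed by a vertical isomorphism.
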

\begin{proof}
The opcartesian maps are inherited by restriction: $F\in\cat Y^*$ is opcartesian if and only if $F\in\cat Y$ is opcartesian. The opcartesian liftings are likewise given by opcartesian liftings in $\cat Y$. The morphisms restrict to morphisms of opfibrations by Proposition~\ref{prop:imon-subcat-preserved}.
\end{proof}

\begin{theorem}\label{thm:monopfib-imopfib}
Let $\cat X$ be a small category. The category of split opfibrations with monoids with base $\cat X$ is equivalent to the category of split im-opfibrations with base $\Fim(\cat X)$:
$$\MonOpFib_{\mathsf{sp}}(\cat X)\simeq\imOpFib_{\mathsf{sp}}(\Fim(\cat X)).$$
\end{theorem}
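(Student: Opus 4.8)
The plan is to factor the claimed equivalence through the indexed (Grothendieck) descriptions on both sides, and then to match those descriptions using the freeness of $\Fim(\cat X)$. By Theorem~\ref{thm:monoids-opfibrations-strict-imoncat-equivalence}, the left-hand side is already equivalent to $\OpIMonCat_{\mathsf{st}}(\cat X)$, i.e.\ to strict functors $M:\cat X\to\MonCat_{\mathsf{st}}$, a monoid $(\cat Y,\otimes,\one)$ being sent to the assignment $x\mapsto\cat Y_x$ with the fibrewise monoidal product it induces. It therefore suffices to produce an equivalence $\imOpFib_{\mathsf{sp}}(\Fim(\cat X))\simeq\OpIMonCat_{\mathsf{st}}(\cat X)$ and to compose, obtaining $\imOpFib_{\mathsf{sp}}(\Fim(\cat X))\simeq\OpIMonCat_{\mathsf{st}}(\cat X)\simeq\MonOpFib_{\mathsf{sp}}(\cat X)$.

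The heart of the argument is a lemma translating the definition of an im-opfibration through the Grothendieck equivalence $\OpFib_{\mathsf{sp}}(\Fim(\cat X))\simeq\OpICat_{\mathsf{st}}(\Fim(\cat X))$ of Theorem~\ref{thm:opfibrations-opindexed-categories-equivalence}: I claim the split im-opfibrations correspond exactly to the strict functors $I:\Fim(\cat X)\to\Cat$ that preserve finite products, sending the concatenation $ww'$ to $I(w)\times I(w')$ and the empty word $\varepsilon$ (the terminal object of $\Fim(\cat X)$) to the terminal category. The key step is to check that for product-preserving $I$ the total category $\Gr(I)$ is cartesian, with the product of $(w,a)$ and $(w',a')$ given by $\bigl(ww',(a,a')\bigr)$ lying over the base product $ww'$, the projections being the opcartesian lifts of the base projections $\pi_w,\pi_{w'}$; conversely one recovers product preservation from the im conditions by restricting to fibres. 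The requirements that the projection preserve and reflect products and that products of opcartesian maps be opcartesian are precisely what pin the correspondence down at the split/strict level.

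Next I would read off a functor $\cat X\to\MonCat_{\mathsf{st}}$ from a product-preserving $I:\Fim(\cat X)\to\Cat$. Each letter $x$ gives a category $I(x)$, and the chosen monoid $m_x:xx\to x$, $u_x:\varepsilon\to x$ of $\Fim(\cat X)$ is sent to a multiplication $I(m_x):I(x)\times I(x)\to I(x)$ with unit $I(u_x)$, exhibiting $I(x)$ as a monoid object in $(\Cat,\times)$, i.e.\ a strict monoidal category; each $\cat X$-morphism $f:a\to b$ is a monoid homomorphism in $\Fim(\cat X)$ by the $1$–$1$-natural monoid equations (Definition~\ref{def:thy-11-nat-monoids}), so $I(f)$ is strict monoidal; and the comonoid maps $d_x,e_x$ are forced to be the cartesian diagonals and terminal maps by product preservation, carrying no extra data. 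Conversely, given $M:\cat X\to\MonCat_{\mathsf{st}}$, I extend it to $\tilde M:\Fim(\cat X)\to\Cat$ by $x_1\cdots x_n\mapsto M(x_1)\times\cdots\times M(x_n)$, sending $m_x,u_x$ to the monoidal product and unit of $M(x)$, $f$ to $M(f)$, and $d_x,e_x$ to diagonals and terminal maps; here one verifies all equations of the theory $\mathsf{im}(\cat X)$ of Definition~\ref{def:indexed-by-category}, the comonoid and comonoid-naturality laws holding because diagonals are natural and cartesian, and crucially the monoid-naturality law $m_a;f=(f\otimes f);m_b$ holding because $M(f)$ is strict monoidal. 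These assignments are mutually inverse: re-extending the restriction of a product-preserving $I$ recovers $I$ since the two functors agree on every generator of $\Fim(\cat X)$, while restricting $\tilde M$ returns $M$ by construction. On morphisms, a map of split im-opfibrations over $\Fim(\cat X)$ corresponds to a strict natural transformation whose components are strict monoidal, i.e.\ a morphism in $\OpIMonCat_{\mathsf{st}}(\cat X)$.

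The main obstacle is the lemma of the second paragraph: translating the three conditions defining an im-opfibration into strict finite-product preservation of the associated indexed category, and in particular confirming that products in the total space of the opfibration are computed over products in the base as described. Once that bridge is established, the matching of product-preserving functors on $\Fim(\cat X)$ with $\MonCat_{\mathsf{st}}$-valued functors on $\cat X$ is essentially the universal property of $\Fim(\cat X)$ as the free category with indexed monoids — base monoid multiplications become fibre monoidal products, base monoid homomorphisms become strict monoidal functors — and the remaining bookkeeping is routine.
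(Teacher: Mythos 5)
Your proposal is correct in outline, but it takes a genuinely different route from the paper. The paper proves the equivalence by a direct two-sided construction: from a monoid $(\cat Y,\otimes,\one)$ on a split opfibration it builds the total category $\cat Y_{\otimes}$ \emph{syntactically}, as the term model of a monoidal theory extending $\cat Y$ with explicit fibrewise-product generators and sliding equations, maps these to the freely added multiplication and unit in $\Fim(\cat X)$, and generates the opcartesian morphisms by hand; conversely it restricts a split im-opfibration to the wide subcategory $\cat Y^*$ over $\iota^*\cat X$ and reassembles the monoid $(\otimes_{\cat Y},\one_{\cat Y})$ from the isomorphisms $\cat Y_f\times\cat Y_g\simeq\cat Y_{f\times g}$ and the reindexing along units, then checks the two round trips. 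You instead factor through the indexed pictures on both sides: $\MonOpFib_{\mathsf{sp}}(\cat X)\simeq\OpIMonCat_{\mathsf{st}}(\cat X)$ by Theorem~\ref{thm:monoids-opfibrations-strict-imoncat-equivalence}, a recognition lemma identifying split im-opfibrations over $\Fim(\cat X)$ with finite-product-preserving strict functors $\Fim(\cat X)\rightarrow\Cat$ under Theorem~\ref{thm:opfibrations-opindexed-categories-equivalence}, and the freeness of $\Fim(\cat X)$ to match such functors with functors $\cat X\rightarrow\MonCat_{\mathsf{st}}$. In effect you prove Corollary~\ref{cor:opfibrations-indexed-monoids-opindexed-monoidal} directly and derive the theorem from it, reversing the paper's deductive order; this is not circular, since Theorem~\ref{thm:monoids-opfibrations-strict-imoncat-equivalence} is established independently in the preliminaries. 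Your approach is more modular and conceptual -- the recognition lemma cleanly explains \emph{why} the theorem holds (monoids in the base become fibre monoidal structures precisely because products are created over the base), and the corollary comes for free -- whereas the paper's term-model construction is self-contained and exhibits the free im-opfibration explicitly, a style it reuses when building syntactic models in Chapter~\ref{ch:semantics}. You correctly identify the recognition lemma as the real obstacle: the delicate point is strictness, namely whether the indexed category associated to a split im-opfibration satisfies $I(ww')=I(w)\times I(w')$ on the nose rather than up to isomorphism; this hinges on reading the ``strictly preserved and reflected'' clause of the split case strongly enough to make the comparison $\cat Y_w\times\cat Y_{w'}\rightarrow\cat Y_{ww'}$ invertible, which is exactly the assertion the paper itself makes ($\cat Y_f\times\cat Y_g\simeq\cat Y_{f\times g}$ via the product functor), so your argument sits at the same level of rigor. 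Also note that your freeness step needs a mild extension of the paper's universal property of $\Fim$ (which is stated for $\IMon$-valued functors via $\hom$, not for product-preserving functors into $\Cat$), but as you say this is routine from the term-model presentation of $\mathsf{im}(\cat X)$ in Definition~\ref{def:indexed-by-category}: one only checks the uniform-comonoid and $1\mdash 1$-naturality equations hold under your extension $\tilde M$, which is precisely the verification you sketch.
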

\begin{proof}
Given a split opfibration $\cat Y\rightarrow\cat X$ with a monoid
\begin{center}
\scalebox{1}{\tikzfig{gr-pseudomonoid}},
\end{center}
define the split im-opfibration $\cat Y_{\otimes}\rightarrow\Fim(\cat X)$ as follows. The total category $\cat Y_{\otimes}$ is the term model of the following monoidal theory:
\begin{itemize}
\item the theory contains the monoidal theory with uniform comonoids generated by $\cat Y$,
\item for all $x\in\Ob(\cat X)$ and $a,b\in\cat Y_x$, we have the following additional generators:
\begin{center}
\scalebox{1}{\tikzfig{im-extension-generators}},
\end{center}
\item for all $a,b,c\in\Ob(\cat Y)$ in the same fibre, $(\sigma,\tau)\in\cat Y\boxtimes\cat Y$, and $f:x\rightarrow w$ in $\cat X$, the theory contains the following equations:
\begin{center}
\scalebox{1}{\tikzfig{im-extension-equations}}.
\end{center}
\end{itemize}
The action of the functor $\cat Y_{\otimes}\rightarrow \Fim(\cat X)$ is defined by extending $\cat Y\rightarrow\cat X$ to cartesian products, and by mapping the newly added generators to the freely added monoid multiplication and unit in the base. The opcartesian morphisms in $\cat Y_{\otimes}$ are generated as follows:
\begin{itemize}
\item if $f\in\cat Y$ is opcartesian, then so is $f\in\cat Y_{\otimes}$,
\item the newly added generators are opcartesian,
\item if $f$ and $g$ are opcartesian, then so are the composition $f;g$ (whenever defined) and the cartesian product $f\times g$.
\end{itemize}
It is then straightforward to construct the opcartesian liftings. Since the cartesian structure is preserved and reflected by construction, we indeed obtain an im-opfibration.

Conversely, given a split im-opfibration $\cat Y\rightarrow\Fim(\cat X)$, we proceed to define a monoid
\begin{center}
\scalebox{1}{\tikzfig{im-opfibration-to-monoid}}
\end{center}
on the split opfibration $\cat Y^*\rightarrow\cat X$, where $\cat Y^*$ is the wide subcategory of $\cat Y$ obtained by restriction, i.e.~$F\in\cat Y^*$ if and only if the image of $F$ is in $\cat X$. First, observe that, since the cartesian products are strictly preserved and reflected, for all morphisms $f,g\in\cat X$, we have $\cat Y_f\times\cat Y_g\simeq\cat Y_{f\times g}$, and the isomorphism is given (from left to right) by the action of the product functor on $\mathcal Y$. Further, each $f:x\rightarrow y$ in $\cat X$ is a monoid homomorphism by construction of $\Fim(\cat X)$, which induces a function $\cat Y_{f\times f}\rightarrow\cat Y_f$ by the universal property of the opcartesian liftings of the monoid multiplication on $x$. Hence, for every morphism $f\in\cat X$, we define the mapping $\otimes_f:\cat Y_f\times\cat Y_f\rightarrow\cat Y_f$ by composing the isomorphism $\cat Y_f\times\cat Y_f\simeq \cat Y_{f\times f}$ with the function $\cat Y_{f\times f}\rightarrow\cat Y_f$. We now define the functor $\otimes_{\cat Y}:\cat Y^*\boxtimes\cat Y^*\rightarrow\cat Y^*$ as the disjoint union of these mappings:
$$\otimes_{\cat Y}\coloneq\coprod_{f\in\cat X}\otimes_f.$$
Next, once again by strict preservation and reflection of cartesian products, we have that $\cat Y_{I}=\{I\}$, so that the reindexing functor $\one_x:\cat Y_{I}\rightarrow\cat Y_x$ induced by the monoidal unit on $x$ constitutes a choice of an object in $\cat Y_x$, which we denote by $\one(x)$. Moreover, the reindexing functor $f^*:\cat Y_x\rightarrow\cat Y_w$ induced by a morphism $f:x\rightarrow w$ in $\cat X$ preserves these chosen object in the sense that $f^*(\one(x))=\one(w)$. We thus define the unit functor $\one_{\cat Y}:\cat X\rightarrow\cat Y^*$ by $x\mapsto\one(x)$ on objects and by the opcartesian liftings on morphisms. Associativity and unitality are now induced by associativity and unitality of the monoids in $\Fim(\cat X)$, so that $(\cat Y^*,\otimes_{\cat Y},\one_{\cat Y})$ indeed defines a monoid in $(\OpFib_{\mathsf{sp}}(\cat X),\boxtimes,\id_{\cat X})$.

Applying the monoid construction to the im-opfibration obtained from a monoid recovers the original monoid: $\otimes_{\cat Y_{\otimes}}=\otimes$ and $\one_{\cat Y_{\otimes}}=\one$, since the restriction of $\cat Y_{\otimes}$ is $\cat Y$. In the other direction, we recover the original im-opfibration up to an isomorphism: $(\cat Y^*)_{\otimes}\simeq\cat Y$, as the objects and morphisms in $\cat Y$ that are not in $\cat Y^*$ are first removed and then added with potentially new labels.
\end{proof}

We observe that Theorem~\ref{thm:monopfib-imopfib} can be thought of as a kind of internalisation of the monoid on the opfibration. On the left-hand side of the equivalence, the monoid is an externally imposed structure, whereas on the right-hand side, the monoids are absorbed into the internal structure of the base and the total categories. In particular, given an im-opfibration $\cat Y\rightarrow\cat X$ and an object $x\in\Ob(\cat X)$, we denote by $\otimes_x : \cat Y_x\times\cat Y_x\rightarrow\cat Y_x$ the monoidal functor on the fibre $\cat Y_x$ obtained from $\otimes_f$ by setting $f=\id_x$ in the above proof. We refer to this monoidal structure as {\em fibrewise}. It is precisely this internalisation that will make opfibrations with indexed monoids convenient to work with when defining the semantics of layered theories in Section~\ref{sec:opfib-models}.

Combining Theorem~\ref{thm:monopfib-imopfib} with Theorem~\ref{thm:monoids-opfibrations-strict-imoncat-equivalence}, we also have the following:
\begin{corollary}\label{cor:opfibrations-indexed-monoids-opindexed-monoidal}
Let $\cat X$ be a small category. The category of split im-opfibrations with base $\Fim(\cat X)$ is equivalent to the category of strict $\cat X$-opindexed monoidal categories:
$$\imOpFib_{\mathsf{sp}}\left(\Fim(\cat X)\right)\simeq\OpIMonCat_{\mathsf{st}}(\cat X).$$
\end{corollary}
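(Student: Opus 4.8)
The plan is to prove this corollary purely by composing the two equivalences of categories established immediately beforehand, since both of them share the category $\MonOpFib_{\mathsf{sp}}(\cat X)$ of split opfibrations with monoids as one of their sides. Concretely, Theorem~\ref{thm:monopfib-imopfib} supplies the equivalence
$$\MonOpFib_{\mathsf{sp}}(\cat X)\simeq\imOpFib_{\mathsf{sp}}\left(\Fim(\cat X)\right),$$
which realises the passage from an externally imposed monoid on a split opfibration over $\cat X$ to a split im-opfibration over the free category with indexed monoids $\Fim(\cat X)$. Meanwhile, Theorem~\ref{thm:monoids-opfibrations-strict-imoncat-equivalence} (the opfibrational half of Moeller and Vasilakopoulou's result, restricted to the strict split case) supplies
$$\MonOpFib_{\mathsf{sp}}(\cat X)\simeq\OpIMonCat_{\mathsf{st}}(\cat X).$$

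First I would invoke both theorems verbatim, with no reconstruction of their proofs, which I am entitled to assume. Then I would use the standard fact that equivalences of categories are closed under composition (and under taking quasi-inverses), so that chaining the quasi-inverse of the first equivalence with the second yields a composite equivalence
$$\imOpFib_{\mathsf{sp}}\left(\Fim(\cat X)\right)\simeq\MonOpFib_{\mathsf{sp}}(\cat X)\simeq\OpIMonCat_{\mathsf{st}}(\cat X),$$
which is precisely the asserted statement. No compatibility check between the two equivalences is required, because both are full equivalences of categories (not merely equivalences over a fixed structure that would need to be matched up), so their composite is automatically an equivalence.

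There is essentially no genuine obstacle here: the corollary is a formal consequence of the two preceding theorems, and the only content is the bookkeeping of directions, namely observing that $\MonOpFib_{\mathsf{sp}}(\cat X)$ appears as a common side of both equivalences so that they may be spliced together. If one wished to be slightly more explicit, one could name the composite functor $\imOpFib_{\mathsf{sp}}(\Fim(\cat X))\to\OpIMonCat_{\mathsf{st}}(\cat X)$ by first sending a split im-opfibration to its associated monoid on a split opfibration over $\cat X$ (the reverse direction of Theorem~\ref{thm:monopfib-imopfib}, extracting the fibrewise monoidal structure $\otimes_x$ as in the internalisation discussed after that theorem) and then applying the pseudofunctor-extraction of Theorem~\ref{thm:monoids-opfibrations-strict-imoncat-equivalence}; but since both steps are already known to be equivalences, spelling this out is unnecessary for the proof and I would keep the argument at the level of composing the two stated equivalences.
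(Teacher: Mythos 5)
Your proposal is correct and matches the paper exactly: the corollary is introduced with the words ``Combining Theorem~\ref{thm:monopfib-imopfib} with Theorem~\ref{thm:monoids-opfibrations-strict-imoncat-equivalence}'', i.e.~the paper also obtains it by splicing the two equivalences along their common side $\MonOpFib_{\mathsf{sp}}(\cat X)$. Your observation that no compatibility check is needed, since composing equivalences of categories is automatic, is the right justification for keeping the argument at this level.
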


The reader may wonder what happens in the general case, when the base of an im-opfibration is not freely generated. In such a case, each monoid in the base still induces some monoidal structure on the fibres, but now we are able to detect (or impose) properties of monoidal categories at the level of the base. For example, requiring the commutativity equation
\begin{center}
\scalebox{.7}{\tikzfig{monoid-commutative}}
\end{center}
to strictly hold in the split case would enforce the equation $a\otimes b=b\otimes a$ in the corresponding fibre. In the general (non-split) case, this defines a natural isomorphism, i.e. a symmetric monoidal category. Thus, in order to properly exploit the additional equations in the base, one needs to move beyond the split (strict) setting. It is unclear whether an analogue of Theorem~\ref{thm:monopfib-imopfib} holds in this case, as it is not obvious what the equations between the (pseudo)monoids on opfibrations should be. We leave exploring these developments to future work.

\chapter{Profunctor collages}\label{ch:profunctor-collages}
Our aim here is to obtain a structure that is able to interpret both the forward $\refine_f$ and the reverse $\coarsen_f$ functor boundaries. We seek a characterisation similar to opfibrations with indexed monoids whose base category is $\Fim(\cat X)$: the goal is to obtain a structure that is equivalent to an (op)indexed monoidal category, and has the bidirectional functor boundaries. We achieve this in the notion of a {\em monoidal deflation} (Definition~\ref{def:monoidal-deflation}). To motivate this and to connect it with well-known structures, we first discuss profunctors (Section~\ref{sec:profunctors}) and their collages (Section~\ref{sec:collages}).

We present and state existing work and results in Sections~\ref{sec:profunctors} and~\ref{sec:collages}, while Section~\ref{sec:deflations} contains results that are, to the best knowledge of the author, novel.

\section{Profunctors}\label{sec:profunctors}

We follow Loregian~\cite{loregian} in our discussion of profunctors. We fix the convention that profunctors are contravariant in the first variable, and covariant in the second one. Profunctors are also known under the names of {\em distributors}, {\em relators}, {\em correspondences} and {\em bimodules}.

\begin{definition}[Profunctor]
Let $\cat A$ and $\cat B$ be small categories. A {\em profunctor} from $\cat A$ to $\cat B$ is a functor $\cat A^{op}\times\cat B\rightarrow\Set$.
\end{definition}
We denote a profunctor $P$ from $\cat A$ to $\cat B$ by $P:\cat A\srarrow\cat B$. Thus, a profunctor assigns to each objects $a\in\Ob(\cat A)$ and $b\in\Ob(\cat B)$ a set $P(a,b)$, which we think of as ``morphisms'' $a\rightsquigarrow b$ between objects in different categories\footnote{Such arrows are sometimes called {\em proarrows} or {\em heteromorphisms}.}. The action of $P$ on morphisms is contravariant in the first variable (morphisms of $\cat A$) and covariant in the second variable (morphisms of $\cat B$):
\begin{center}
\scalebox{1}{\tikzfig{profunctor-def}}.
\end{center}
The assignment is functorial: $P(\id_a,\id_b)$ is the identity function on the set $P(a,b)$, and $P(f';f,g;g') = P(f,g);P(f',g')$. These assumption ensure that the arrows $a\rightsquigarrow b$ behave essentially like morphisms within a category, which leads one to ask how to compose two such arrows. This leads to defining composition of two profunctors.

Given profunctors $P:\cat A\srarrow\cat B$ and $Q:\cat B\srarrow\cat C$, their composite profunctor $P;Q:\cat A\srarrow\cat C$ is defined on objects by the following coend formula
\begin{equation}\label{eqn:profunctor-composition}
P;Q(a,c)\coloneq\int^{b\in\cat B}P(a,b)\times Q(b,c)\coloneq\quot{\coprod\limits_{b\in\cat B}P(a,b)\times Q(b,c)}{\sim},
\end{equation}
where $\sim$ is the equivalence relation on the set $\coprod_{b\in\cat B}P(a,b)\times Q(b,c)$ generated by requiring that for all $f\in P(a,b)$, $g\in\cat B(b,b')$ and $h\in Q(b',c)$ one has
$$\left(f,Q(g,\id_c)(h)\right)\sim\left(P(\id_a,g)(f),h\right).$$
On morphisms, we define
\begin{equation}\label{eqn:profunctor-composition-morphisms}
\scalebox{1}{\tikzfig{profunctor-composition-def}}.
\end{equation}

Note that the equivalence relation defined above can be thought of as a kind of associativity condition, ensuring that the situation
\begin{center}
\scalebox{1}{\tikzfig{profunctor-composition}}
\end{center}
unambiguously defines an element in $P;Q(a,c)$. In the next section, we shall see a construction making this formal: the above arrows will become actual morphisms within the same category, and the equivalence relation will guarantee associativity. We do not discuss the general theory of coends here, and simply treat the expression on the right-hand side of~\eqref{eqn:profunctor-composition} as the definition of the coend on the left. For a detailed exposition, we refer the reader to Loregian~\cite{loregian}.

\begin{example}\label{ex:identity-profunctor}
Let $\cat A$ be a category. Then the assignment
\begin{align*}
\cat A^{op}\times\cat A &\rightarrow\Set \\
(a,b) &\mapsto\cat A(a,b) \\
\left(f : a'\rightarrow a,g : b\rightarrow b'\right) &\mapsto f;{-};g : A(a,b)\rightarrow A(a',b')\ ::\ h\mapsto f;h;g
\end{align*}
defines a profunctor $\cat A\srarrow\cat A$.
\end{example}

\begin{definition}\label{def:bicat-profunctors}
The {\em bicategory of profunctors} $\Prof$ has
\begin{itemize}
\item all small categories as the 0-cells,
\item all profunctors as the 1-cells,
\item given profunctors $P,Q:\cat A\srarrow\cat B$, the 2-cells $P\rightarrow Q$ are given by the natural transformations,
\item the composition $\Prof(\cat A,\cat B)\times\Prof(\cat B,\cat C)\rightarrow\Prof(\cat A,\cat C)$ is given by~\eqref{eqn:profunctor-composition} and~\eqref{eqn:profunctor-composition-morphisms},
\item the identity on a 0-cell $\cat A$ is given by the profunctor in Example~\ref{ex:identity-profunctor},
\item composition and identities for 2-cells are those for the natural transformations.
\end{itemize}
\end{definition}

There are two ways to embed $\Cat$ into $\Prof$: both embedding functors are identity on objects, one is contravariant on the 2-cells, while the other is contravariant on the 1-cells. Hence, let us denote by $\Cat^{co}$ the category with the same 0- and 1-cells as in $\Cat$, and whose 2-cells are those of $\Cat$ with the direction reversed. Similarly, let us denote by $\Cat^{op}$ the category with the same 0- and 2-cells as in $\Cat$, but whose 1-cells are reversed.

First, let us define
$$-^{\refine}:\Cat^{co}\rightarrow\Prof$$
by mapping the functor $F:\cat A\rightarrow\cat B$ to the following profunctor:
\begin{align*}
F^{\refine} : \cat A^{op}\times\cat B &\rightarrow\Set \\
(a,b) &\mapsto\cat B(Fa,b) \\
\left(f : a'\rightarrow a,g : b\rightarrow b'\right) &\mapsto Ff;{-};g : B(Fa,b)\rightarrow B(Fa',b')\ ::\ h\mapsto Ff;h;g,
\end{align*}
and by mapping a natural transformation $\eta:F\rightarrow G$ to the natural transformation $\eta^{\refine}:G^{\refine}\rightarrow F^{\refine}$ whose $(a,b)$-component is given by $\eta_a;{-}$.

Likewise, we define
$$-^{\coarsen}:\Cat^{op}\rightarrow\Prof$$
by mapping the functor $F:\cat A\rightarrow\cat B$ to the following profunctor:
\begin{align*}
F^{\coarsen} : \cat B^{op}\times\cat A &\rightarrow\Set \\
(b,a) &\mapsto\cat B(b,Fa) \\
\left(f : b'\rightarrow b,g : a\rightarrow a'\right) &\mapsto f;{-};Fg : B(b,Fa)\rightarrow B(b',Fa')\ ::\ h\mapsto f;h;Fg,
\end{align*}
and by mapping a natural transformation $\eta:F\rightarrow G$ to the natural transformation $\eta^{\coarsen}:F^{\coarsen}\rightarrow G^{\coarsen}$ whose $(b,a)$-component is given by ${-};\eta_a$.

\begin{proposition}\label{prop:prof-embedding-adjoints}
Let $F:\cat A\rightarrow\cat B$ be a functor (i.e.~a 1-cell in $\Cat$). Then the 1-cells $F^{\refine}$ and $F^{\coarsen}$ are adjoint with
$$F^{\refine}\dashv F^{\coarsen}$$
in the bicategory $\Prof$.
\end{proposition}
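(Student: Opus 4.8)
The plan is to exhibit an explicit unit and counit and verify the triangle (zigzag) identities, all computations being governed by the co-Yoneda (density) lemma. First I would compute the two composites arising from $F^{\refine}$ and $F^{\coarsen}$. Unwinding the coend formula~\eqref{eqn:profunctor-composition} gives
$F^{\refine};F^{\coarsen}(a,a')=\int^{b\in\cat B}\cat B(Fa,b)\times\cat B(b,Fa')$; since the first factor is covariant and the second contravariant in $b$, the co-Yoneda lemma identifies this coend with $\cat B(Fa,Fa')$, the isomorphism sending the class of $(p,q)$ to the composite $p;q$. For the other composite one has $F^{\coarsen};F^{\refine}(b,b')=\int^{a\in\cat A}\cat B(b,Fa)\times\cat B(Fa,b')$; here the interposed $F$ obstructs a co-Yoneda reduction, but there is a natural composition map into $\cat B(b,b')$ sending $(g,h)\mapsto g;h$.

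Using these, I would define the unit $\eta:\id_{\cat A}\Rightarrow F^{\refine};F^{\coarsen}$ to have $(a,a')$-component the action of $F$ on morphisms, $\cat A(a,a')\to\cat B(Fa,Fa')$, $k\mapsto Fk$ (equivalently, the class of $(\id_{Fa},Fk)$ under the co-Yoneda identification), and the counit $\varepsilon:F^{\coarsen};F^{\refine}\Rightarrow\id_{\cat B}$ to have $(b,b')$-component the composition map $(g,h)\mapsto g;h$. The two routine checks here are that $\varepsilon$ is well defined on the coend and that both $\eta$ and $\varepsilon$ are natural. Well-definedness of $\varepsilon$ is precisely associativity of composition in $\cat B$: the generating relation of the coend identifies $(g;Ff,h)$ with $(g,Ff;h)$ for $f\in\cat A(a,a')$, and $(g;Ff);h=g;(Ff;h)$; functoriality of $F$ together with bifunctoriality of the hom-sets makes naturality immediate.

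The remaining, and most delicate, step is the triangle identities, where the care lies in re-associating the iterated coends through the isomorphisms of the first step. Tracking a morphism $h\colon Fa\to b$, an element of $F^{\refine}(a,b)$, through $F^{\refine}\Rightarrow F^{\refine};F^{\coarsen};F^{\refine}\Rightarrow F^{\refine}$: inserting $\eta$ produces the class of $(\id_{Fa},h)$, and applying $\varepsilon$ collapses $\id_{Fa};h=h$, recovering $h$; the dual triangle for $F^{\coarsen}$ reduces in the same way to $g;\id_{Fa}=g$ for $g\colon b\to Fa$. I expect the bookkeeping of these re-associations across the coends to be the main obstacle, even though each identity ultimately reduces to a unit law of composition in $\cat B$. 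Alternatively, one may recognise $F^{\refine}$ and $F^{\coarsen}$ as the companion and conjoint of $F$ in the proarrow equipment $\Cat\hookrightarrow\Prof$, for which this adjointness is standard; the explicit verification above keeps the argument self-contained.
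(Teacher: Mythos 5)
Your proof is correct and takes essentially the same route as the paper: both exhibit the counit as the composition map $(g,h)\mapsto g;h$ and the unit as a representative in the coend — your $(\id_{Fa},Fk)$ equals the paper's $(Fk,\id_{Fa'})$ under precisely the relation $(f,g)\sim(\id_{Fa},f;g)$ that the paper records — and both verify naturality and the triangle identities by chasing representatives through the coend relations, each reducing to a unit law in $\cat B$. The co-Yoneda identification of $F^{\refine};F^{\coarsen}(a,a')$ with $\cat B(Fa,Fa')$ and the remark on companions and conjoints in the equipment $\Cat\hookrightarrow\Prof$ are presentational extras on top of what is substantively the paper's argument.
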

\begin{proof}
The unit 2-cell $\eta : \id_{\cat A}\rightarrow F^{\refine};F^{\coarsen}$ is defined as
\begin{align*}
\eta_{a,a'} : \cat A(a,a') &\rightarrow\int^{b\in\cat B}\cat B(Fa,b)\times\cat B(b,Fa') \\
f &\mapsto (Ff,\id_{Fa'}),
\end{align*}
while the counit 2-cell $\varepsilon : F^{\coarsen};F^{\refine}\rightarrow\id_{\cat B}$ is defined as
\begin{align*}
\varepsilon_{b,b'} : \int^{a\in\cat A}\cat B(b,Fa)\times\cat B(Fa,b') &\rightarrow\cat B(b,b') \\
(g,h) &\mapsto g;h.
\end{align*}
Naturality and the triangle equations now follow by observing that for the composite $F^{\refine};F^{\coarsen}$ we have $(f,g)\sim (\id_{Fa},f;g)$ for all $f\in\cat B(Fa,b)$ and $g\in\cat B(b,Fa')$, and similarly, for the composite $F^{\coarsen};F^{\refine}$ we have $(Ff,g)\sim (\id_{Fa},Ff;g)$ for all $f\in\cat A(a,a')$ and $g\in\cat B(Fa',b)$.
\end{proof}

\begin{remark}\label{rem:counit-surjective-twocells}
We observe that the counit 2-cell defined in the proof of Proposition~\ref{prop:prof-embedding-adjoints} has surjective components on the image of the functor $F$. More precisely, for all $a,a'\in\cat A$, the function
$$\varepsilon_{Fa,Fa'} : \int^{a''\in\cat A}\cat B(Fa,Fa'')\times\cat B(Fa'',Fa') \rightarrow\cat B(Fa,Fa')$$
is surjective: a section is given by mapping $h:Fa\rightarrow Fa'$ to $(\id_{Fa},h)$. The surjectivity of the components is what allows us to introduce the sections for the counit 2-cells for deflational theories in~\ref{subsec:deflational-theories}.
\end{remark}

\section{Collages}\label{sec:collages}

The notion of a {\em collage} generalises the Grothendieck construction (see Section~\ref{sec:indexed-categories}) from (op)fibrations to arbitrary functors. What one gets on the other side of the equivalence are {\em displayed categories}, and a functor is obtained from a displayed category by taking its collage.

The term {\em displayed category} was introduced by Ahrens and Lumsdaine~\cite{displayed-categories} for the purposes of studying fibrations in the type theoretic setting, where equivalence of their presentation with the notion used in Definition~\ref{def:displayed-category} is pointed out. The definition used here, as well as Theorem~\ref{thm:collage-functors} and Proposition~\ref{prop:displayed-pseudo-decompose} are discussed in the notes taken by Streicher based on B\' enabou's course~\cite{benabou}.

Let $\cat X$ and $\cat Y$ be bicategories. Recall that a {\em lax functor} $(D,\varphi):\cat X\rightarrow\cat Y$ consists of the following data:
\begin{itemize}
\item a function $D:\Ob(\cat X)\rightarrow\Ob(\cat Y)$,
\item for all $x,y\in\cat X$, a functor $D:\cat X(x,y)\rightarrow\cat Y(Dx,Dy)$,
\item for all $x\in\cat X$, a 2-cell $\varphi^x:\id_{Dx}\rightarrow D(\id_x)$ in $\cat Y$,
\item for all composable morphisms $f:x\rightarrow y$ and $g:y\rightarrow z$ in $\cat X$, a 2-cell $\varphi^{f,g}:D(f);D(g)\rightarrow D(f;g)$ in $\cat Y$ which is natural in $f$ and $g$
\end{itemize}
such that lax associativity and unitality diagrams commute (see e.g.~\cite{leinster98,johnson-yau21}). We refer to the natural collections of 2-cells $\varphi$ as the {\em laxators}. We say that a lax functor is {\em normal} if for each $x\in\Ob(\cat X)$, the laxator $\varphi^x$ is the identity, i.e.~$\id_{Dx}=D(\id_x)$.
\begin{definition}[Displayed category]\label{def:displayed-category}
Let $\cat X$ be a small 1-category. A {\em displayed category} is a normal lax functor $(D,\varphi):\cat X\rightarrow\Prof$.
\end{definition}

Every functor $p:\cat Y\rightarrow\cat X$ induces a displayed category $D_p:\cat X\rightarrow\Prof$ as follows:
\begin{itemize}
\item for every $x\in\Ob(\cat X)$, let $D_p(x)\coloneq\cat Y_x$,
\item for every morphism $f:x\rightarrow y$ in $\cat X$, define
\begin{center}
\scalebox{1}{\tikzfig{functor-to-displayed}},
\end{center}
\item given $x\xrightarrow{f}y\xrightarrow{g}z$, the laxator $\varphi_p^{f,g} : D_p(f);D_p(g)\rightarrow D_p(f;g)$ is defined componentwise for each $a\in\cat Y_x$ and $c\in\cat Y_z$ by
\begin{align*}
\left(\varphi_p^{f,g}\right)_{a,c} : \int^{b\in\cat Y_y} D_p(f)(a,b)\times D_p(g)(b,c) &\rightarrow D_p(f;g)(a,c) \\
(F,G) &\mapsto (F;G).
\end{align*}
\end{itemize}
Note that the lax functor so defined is indeed normal: $D_p(\id_x)=\cat Y_x(-,-)=\id_{D_p(x)}$.

\begin{definition}[Collage]\label{def:collage}
Given a displayed category $(D,\varphi):\cat X\rightarrow\Prof$, its {\em collage} is the category $\coprod D$ defined as follows:
\begin{itemize}
\item the objects are pairs $(x,a)$, where $x\in\cat X$ and $a\in D(x)$,
\item a morphism $(f,F) : (x,a)\rightarrow (y,b)$ consists of a morphism $f:x\rightarrow y$ in $\cat X$ and an element $F\in D(f)(a,b)$,
\item the identity on $(x,a)$ is given by $(\id_x,\id_a)$,
\item the composition of $(f,F) : (x,a)\rightarrow (y,b)$ and $(g,G) : (y,b)\rightarrow (z,c)$ is given by $\left(f;g,\varphi^{f,g}_{a,c}(F,G)\right)$.
\end{itemize}
\end{definition}
Note that there is a forgetful functor $\coprod D\rightarrow\cat X$ projecting the objects and the morphisms to their first component. The key observation by B\' enabou is that {\em any} functor into $\cat X$ arises as the collage of some displayed category. In order to state this result, let us define the category $\Disp(\cat X)$ as having all displayed categories $\cat X\rightarrow\Prof$ as objects, and the morphisms are all lax transformations whose components are in the image of the embedding $-^{\refine}:\Cat^{co}\rightarrow\Prof$.
\begin{theorem}[B\' enabou]\label{thm:collage-functors}
Let $\cat X$ be a small category. The slice category over $\cat X$ is equivalent to the category of displayed categories on $\cat X$:
$$\quot{\Cat}{\cat X}\simeq\Disp(\cat X).$$
\end{theorem}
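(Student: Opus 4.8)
The plan is to exhibit the equivalence as a pair of mutually (pseudo)inverse functors, reusing the two constructions given just before the statement: the assignment $p\mapsto D_p$ sending a functor into $\cat X$ to its displayed category, and the assignment $D\mapsto\left(\coprod D\to\cat X\right)$ sending a displayed category to the projection out of its collage. First I would promote these object-level assignments to functors $\Phi:\quot{\Cat}{\cat X}\to\Disp(\cat X)$ and $\Psi:\Disp(\cat X)\to\quot{\Cat}{\cat X}$. On morphisms, a map in $\quot{\Cat}{\cat X}$ is a functor $H:\cat Y\to\cat Y'$ with $H;p'=p$; since such an $H$ preserves fibres, it restricts to functors $H_x:\cat Y_x\to\cat Y'_x$, and I set the $x$-component of $\Phi(H)$ to be $H_x^{\refine}$, checking that the laxators $\varphi_p^{f,g}$ and $\varphi_{p'}^{f,g}$ intertwine to make this a lax transformation with components in the image of $-^{\refine}$. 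Dually, a morphism of displayed categories $\theta:D\to D'$ has each component of the form $(\theta_x^0)^{\refine}$ for a functor $\theta_x^0:D(x)\to D'(x)$, and these assemble into a functor $\coprod\theta:\coprod D\to\coprod D'$ over $\cat X$, acting as $\theta_x^0$ on objects in the fibre over $x$ and via the 2-cell data of $\theta$ on displayed morphisms.

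Second I would verify $\Psi\Phi\cong\id$. Starting from $p:\cat Y\to\cat X$, the collage $\coprod D_p$ has objects $(x,a)$ with $a\in\cat Y_x$ and morphisms $(f,F)$ with $F:a\to b$ lying above $f$; since $x=pa$ and $f=pF$ are redundant, the assignment $(x,a)\mapsto a$, $(f,F)\mapsto F$ is an isomorphism $\coprod D_p\xrightarrow{\sim}\cat Y$ commuting with the projections to $\cat X$. Composition is preserved because the laxator of $D_p$ was defined to be composition in $\cat Y$, and identities match on the nose. These isomorphisms are natural in $p$, giving $\Psi\Phi\cong\id_{\quot{\Cat}{\cat X}}$.

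Third I would verify $\Phi\Psi\cong\id$. For a displayed category $D$, normality gives $D(\id_x)=\id_{D(x)}$, so the fibre $(\coprod D)_x$, whose morphisms are the pairs $(\id_x,F)$ with $F\in D(\id_x)(a,b)$, is isomorphic to $D(x)$; and for $f:x\to y$ the profunctor $D_{\coprod D}(f)$ sends $(a,b)$ to the set $\{(f,F):F\in D(f)(a,b)\}$, naturally isomorphic to $D(f)(a,b)$ via $(f,F)\mapsto F$. The laxators agree since composition in $\coprod D$ is defined through $\varphi^{f,g}$. Collecting these componentwise isomorphisms yields an invertible lax transformation $D_{\coprod D}\cong D$, natural in $D$, hence $\Phi\Psi\cong\id_{\Disp(\cat X)}$.

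The object-level bijections are essentially bookkeeping; the real work lies at the level of morphisms and coherence. The two steps I expect to be most delicate are (i) checking that $\coprod D$ is a genuine category, where associativity of composition is exactly the content of the lax-associativity coherence of $\varphi$ and unitality is where normality is used, and (ii) showing that the morphisms match up, namely that lax transformations whose components lie in the image of $-^{\refine}$ correspond precisely to functors commuting over $\cat X$. The restriction of $\Disp(\cat X)$-morphisms to $-^{\refine}$-components, rather than arbitrary natural transformations of profunctors, is what forces a strict functor rather than a mere profunctorial comparison; making this correspondence precise, by verifying that the lax-transformation naturality data translate into functoriality of $\coprod\theta$ and, conversely, that $H$ preserving fibres and the projection forces its comparison cells to be $-^{\refine}$-images, is the crux of the argument.
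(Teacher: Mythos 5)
Your proposal is correct and follows essentially the same route as the paper, which exhibits exactly the two constructions $p\mapsto D_p$ and $D\mapsto\left(\coprod D\rightarrow\cat X\right)$ as mutually pseudoinverse and delegates the remaining verifications to the literature. The details you supply — the action on morphisms via $-^{\refine}$-components, the round-trip isomorphisms, and the observation that associativity of the collage is the lax-associativity coherence of $\varphi$ while unitality is normality — are precisely the checks the paper's proof sketch leaves implicit.
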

\begin{proof}[Proof sketch]
Given a functor $p$ into $\cat X$, the corresponding displayed category is given by $D_p$. Conversely, given a displayed category $D$ on $\cat X$, the corresponding functor is given by the projection from the collage $\coprod D$ into $\cat X$
\end{proof}
We refer the reader to~\cite{benabou}, \cite{street-powerful-functors} and~\cite[Theorem 5.4.5]{loregian} for the detailed discussion and proofs.

Collage of a displayed category, together with Theorem~\ref{thm:collage-functors} allow to detect properties of displayed categories on $\cat X$ via properties of functors into $\cat X$. We shall need two such properties, namely, we characterise when $D:\cat X\rightarrow\Prof$ is a pseudofunctor (rather than merely a lax functor) and when it factors through $\Cat$. In the latter case, each morphism in $\cat X$ is, in fact, indexing a functor, so that we have an indexed category $\cat X\rightarrow\Cat$, showing that displayed categories generalise indexed categories, while the collage is indeed a generalised Grothendieck construction.

Definition~\ref{def:decomposition-lifting} characterises those functors which correspond to pseudofunctorial displayed categories. In order to state the definition, we need the notion of {\em path components} above a pair of composable maps, which we now proceed to describe.

Given a functor $\cat Y\rightarrow\cat X$, a pair of composable maps $f:x\rightarrow y$ and $g:y\rightarrow z$ in $\cat X$ and objects $a\in\cat Y_x$ and $c\in\cat Y_z$ in the fibres over $x$ and $z$, let us define the category $\cat Y(f,g)(a,c)$ as follows:
\begin{itemize}
\item the objects $(F,b,G)$ are composable pairs $F:a\rightarrow b$ and $G:b\rightarrow c$ such that $b\in\cat Y_y$, $F$ is above $f$ and $G$ is above $g$,
\item a morphism $H:(F,b,G)\rightarrow (F',b',G')$ is a map $H:b\rightarrow b'$ in the fibre $\cat Y_y$ such that the diagram below commutes:
\begin{center}
\scalebox{1}{\tikzfig{composable-pairs-morphism}}.
\end{center}
\end{itemize}
We say that two objects in $\cat Y(f,g)(a,c)$ are in the same {\em path component} if they are related by the equivalence relation generated by stipulating that $(F,b,G)\sim (F',b',G')$ if there is a morphism $(F,b,G)\rightarrow (F',b',G')$. In other words, two composable pairs are in the same path component if there is a zigzag of morphisms between them.

\begin{definition}[Factorisation lifting]\label{def:decomposition-lifting}
We say that a functor $\cat Y\rightarrow\cat X$ is a {\em factorisation lifting} if for any morphism $F:a\rightarrow c$ in $\cat Y$ above $f:x\rightarrow z$, whenever there are morphisms $g:x\rightarrow y$ and $g':y\rightarrow z$ such that $g;g'=F$, there are induced morphisms $G:a\rightarrow b$ and $G':b\rightarrow c$ above $G$ and $G'$, respectively, such that $G;G'=F$:
\begin{center}
\scalebox{1}{\tikzfig{decomposition-pair-induced}},
\end{center}
and, moreover, any such induced pairs of morphisms $(G,b,G')$ and $(H,b',H')$ are in the same path component of $\cat Y(g,g')(a,c)$.
\end{definition}
\begin{remark}
What we call a {\em factorisation lifting} is usually called a {\em Conduch\' e fibration}, a {\em Conduch\' e functor} or an {\em exponentiable functor}. The author considers the term {\em factorisation lifting} more suggestive of the property such functor is required to satisfy. The term {\em exponentiable} should be reserved to the property of the last bullet point of Proposition~\ref{prop:displayed-pseudo-decompose}. Said proposition then establishes that a functor is a factorisation lifting if and only if it is exponentiable.
\end{remark}
By analogy with (op)fibrations, a factorisation lifting is {\em cloven} if it comes with chosen lifts for each factorisation. Further, a cloven factorisation lifting is {\em split} if the composition of any chosen lifts is the chosen lift of the composition, and the chosen lift of the trivial factorisation (either $f;\id_z$ or $\id_x;f$) is the corresponding trivial factorisation ($F;\id_c$ or $\id_a;F$).

The following proposition is due to Giraud~\cite{giraud64} and Conduch\'e~\cite{conduche72}. A detailed discussion can be found in Street~\cite{street-powerful-functors}.
\begin{proposition}\label{prop:displayed-pseudo-decompose}
For any functor $p:\cat Y\rightarrow\cat X$, the following are equivalent:
\begin{itemize}
\item $p$ is a factorisation lifting,
\item the displayed category $D_p:\cat X\rightarrow\Prof$ is a pseudofunctor,
\item the functor $p^*:\quot{\Cat}{\cat X}\rightarrow\quot{\Cat}{\cat Y}$ defined by taking pullbacks has a right adjoint.
\end{itemize}
\end{proposition}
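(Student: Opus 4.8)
The plan is to prove the three-way equivalence in two stages: first the equivalence of the first two bullet points, which is the combinatorial heart of the statement and follows directly from the explicit formula for the laxators of $D_p$ given just before Definition~\ref{def:decomposition-lifting}; and then the equivalence with exponentiability, which I would obtain by transporting along the collage equivalence of Theorem~\ref{thm:collage-functors} and invoking the classical pushforward construction, citing the original sources rather than reproducing it in full.

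For the equivalence of the first two points, recall that $D_p$ is by construction a \emph{normal} lax functor, so it is a pseudofunctor precisely when each laxator $\varphi_p^{g,g'}$ is a natural isomorphism. The key observation is that for composable $g:x\rightarrow y$ and $g':y\rightarrow z$ the domain of the component
\[
\left(\varphi_p^{g,g'}\right)_{a,c} : \int^{b\in\cat Y_y}\cat Y_g(a,b)\times\cat Y_{g'}(b,c)\rightarrow\cat Y_{g;g'}(a,c)
\]
is exactly the set of path components of the category $\cat Y(g,g')(a,c)$ from Definition~\ref{def:decomposition-lifting}. Indeed, unwinding the coend relation of~\eqref{eqn:profunctor-composition} for these two profunctors identifies the pair $(G,\,k;G')$ with $(G;k,\,G')$ for every fibre morphism $k:b\rightarrow b'$ in $\cat Y_y$, and this is precisely the identification induced by a morphism $k:(G,b,G')\rightarrow(G;k,b',G')$ of $\cat Y(g,g')(a,c)$. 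The laxator sends the class $[(G,b,G')]$ to the composite $G;G'$ lying above $g;g'$.

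Consequently, $\left(\varphi_p^{g,g'}\right)_{a,c}$ is \emph{surjective} if and only if every morphism $F:a\rightarrow c$ above $g;g'$ admits a factorisation $F=G;G'$ through some object of the fibre over $y$ (the existence clause of a factorisation lifting), and it is \emph{injective} if and only if any two such factorisations lie in the same path component (the uniqueness clause). Quantifying over all $a,c$ and all composable $g,g'$, the conjunction of these two conditions is exactly the statement that $p$ is a factorisation lifting; hence $D_p$ is a pseudofunctor if and only if $p$ is a factorisation lifting. Naturality of each $\varphi_p^{g,g'}$ holds automatically since $D_p$ is already a lax functor, so no extra verification is needed there.

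For the equivalence with the existence of a right adjoint to $p^*$, I would use Theorem~\ref{thm:collage-functors} to identify $\quot{\Cat}{\cat X}$ with $\Disp(\cat X)$, sending $p$ to $D_p$; the condition that the pullback functor $p^*:\quot{\Cat}{\cat X}\rightarrow\quot{\Cat}{\cat Y}$ has a right adjoint is by definition the exponentiability of $p$ as an object of $\quot{\Cat}{\cat X}$. The substantive direction, namely that a factorisation lifting is exponentiable, amounts to constructing the pushforward $\Pi_p:\quot{\Cat}{\cat Y}\rightarrow\quot{\Cat}{\cat X}$ right adjoint to $p^*$; on $q:\cat Z\rightarrow\cat Y$ its fibre over $x$ is assembled from compatible choices of lifts of factorisations in the base, and it is exactly the factorisation lifting property that makes this assignment well-defined and functorial. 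Conversely, if $p^*$ has a right adjoint, testing the adjunction against the representable arrow categories recovers the factorisation and path-component conditions. This last step is the main obstacle: the explicit construction of $\Pi_p$ and the verification of the triangle identities are genuinely technical, so rather than grind through them I would cite Giraud~\cite{giraud64}, Conduch\'e~\cite{conduche72} and the detailed treatment in Street~\cite{street-powerful-functors}, as already flagged in the surrounding discussion.
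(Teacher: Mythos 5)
Your proposal is correct and follows essentially the same route as the paper: the paper also proves only the equivalence of the first two bullet points, by observing that the laxator components of $D_p$ are isomorphisms precisely when every morphism above a composite factors and the factorisation is unique up to the coend-defining equivalence (which, as you spell out more explicitly, is exactly the path-component relation on $\cat Y(g,g')(a,c)$), and it likewise defers the third bullet point to Street~\cite{street-powerful-functors}. Your extra detail identifying the coend classes with path components, and your sketch of the pushforward, merely make explicit what the paper leaves implicit.
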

We only prove the equivalence of first two bullet points, as this suffices for developing the subsequent theory. The third one is stated for the sake of completeness of the presentation: the reader is referred to Street~\cite{street-powerful-functors} for the proof.
\begin{proof}
Observe that the laxator of $D_p$
$$\left(\varphi_p^{f,g}\right)_{a,c} : \int^{b\in\cat Y_y} D_p(f)(a,b)\times D_p(g)(b,c) \rightarrow D_p(f;g)(a,c)$$
is an isomorphism if and only if for all $H:a\rightarrow c$ above $f;g$ there are $F:a\rightarrow b$ and $G:b\rightarrow c$ above $f$ and $g$ such that $F;G=H$, and such pair is unique up to the equivalence defining the coend, i.e.~precisely when $p$ is a factorisation lifting.
\end{proof}
In light of Proposition~\ref{prop:displayed-pseudo-decompose}, a factorisation lifting being cloven corresponds to a choice of an inverse for each component of the laxator of $D_p$. If such choice of inverses is, moreover, functorial in both $f$ and $g$, the factorisation lifting is split.

The following proposition and the ensuing corollary connect our discussion of displayed categories and collages to opfibrations.
\begin{proposition}\label{prop:displayed-factors-opfibration}
A functor $p:\cat Y\rightarrow\cat X$ is an preopfibration if and only if the displayed category $D_p:\cat X\rightarrow\Prof$ factors through the embedding $-^{\refine}:\Cat^{co}\rightarrow\Prof$.
\end{proposition}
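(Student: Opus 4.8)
The plan is to reduce both conditions to the single levelwise statement that each profunctor $D_p(f)$ is \emph{representable on the left} by a functor between fibres, and to recognise this representability as precisely the universal property of weakly opcartesian liftings. Recall that $D_p(f):\cat Y_x\srarrow\cat Y_y$ is the profunctor sending $(a,b)$ to the set $\cat Y_f(a,b)$ of morphisms $a\to b$ lying above $f$, whereas for a functor $g:\cat Y_x\to\cat Y_y$ the embedded profunctor is $g^{\refine}(a,b)=\cat Y_y(ga,b)$. Thus $D_p$ factoring through $-^{\refine}$ amounts to exhibiting, for each $f:x\to y$, a functor $f^{*}:\cat Y_x\to\cat Y_y$ together with an isomorphism $D_p(f)\cong (f^{*})^{\refine}$ of profunctors, natural in both arguments, and compatible with the lax structure.

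For the forward direction I would start from a preopfibration $p$. The remark following Proposition~\ref{prop:opfibration-to-pseudofunctor} already supplies a reindexing functor $f^{*}:\cat Y_x\to\cat Y_y$ for each $f$ (its construction only uses weak opcartesianness), together with a chosen weakly opcartesian lift $F_a:a\to f^{*}a$ above $f$. I then define $\theta_{a,b}:\cat Y_f(a,b)\to\cat Y_y(f^{*}a,b)$ by sending $G:a\to b$ (above $f$) to the unique $H$ above $\id_y$ with $F_a;H=G$, as guaranteed by Remark~\ref{rem:weakly-opcartesian-map}. This is a bijection, and naturality (contravariant in $a$, covariant in $b$) is a routine check, giving $D_p(f)\cong (f^{*})^{\refine}$. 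To upgrade this to a factorisation of the whole normal lax functor, I set $\tilde D(x)\coloneq\cat Y_x$ and $\tilde D(f)\coloneq f^{*}$ as a lax functor $\cat X\to\Cat^{co}$: its laxator is the comparison $(f;g)^{*}\Rightarrow f^{*};g^{*}$ obtained by applying weak opcartesianness of the $(f;g)$-lift to the composite lift $a\to f^{*}a\to g^{*}f^{*}a$, and since $-^{\refine}$ is contravariant on 2-cells this transports exactly to the laxator $\varphi_p^{f,g}$ of $D_p$. Normality is immediate since $\id_x^{*}=\id_{\cat Y_x}$ and $(\id_{\cat Y_x})^{\refine}=\id$.

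For the converse I assume $D_p$ factors through $-^{\refine}$, so that for each $f$ there is a natural isomorphism $\theta_{a,b}:\cat Y_f(a,b)\xrightarrow{\sim}\cat Y_y(f^{*}a,b)$. Setting $F_a\coloneq\theta_{a,f^{*}a}^{-1}(\id_{f^{*}a})$ produces a morphism $a\to f^{*}a$ above $f$, and naturality in $b$ shows it is weakly opcartesian: for any $G\in\cat Y_f(a,b)$, writing $H\coloneq\theta_{a,b}(G)$, the fact that the profunctor $\cat Y_y(f^{*}a,-)$ acts by postcomposition forces $F_a;H=G$, and the bijectivity of $\theta_{a,b}$ gives uniqueness of $H$ among maps above $\id_y$. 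Hence every opliftable pair $(a,f)$ admits a weakly opcartesian lifting, which is exactly the definition of a preopfibration (Remark~\ref{rem:preopfibration}).

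The main obstacle is not either half of the equivalence but pinning down the intended meaning of \emph{factors through}: whether it is the levelwise condition that each $D_p(f)$ lies in the essential image of $-^{\refine}$, or whether the full lax-functorial coherence must be carried along. My contention is that the essential content is the levelwise representability, which is equivalent to weak opcartesianness by the argument above, and that the coherence of the laxators is then automatic from the contravariance of $-^{\refine}$ on 2-cells; the only genuinely fiddly bookkeeping is verifying that the comparison $(f;g)^{*}\Rightarrow f^{*};g^{*}$ matches $\varphi_p^{f,g}$ under the embedding, which I would dispatch by a direct appeal to the defining universal property rather than an explicit coend computation.
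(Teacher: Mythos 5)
Your proof is correct and follows essentially the same route as the paper's: both directions reduce to the levelwise statement that $D_p(f)\cong (f^*)^{\refine}$ naturally, with the forward map given by unique factorisation through a weakly opcartesian lift and the converse lift defined as $\theta^{-1}_{a,f^*a}(\id_{f^*a})$, exactly as in the paper. Your additional verification that the comparison $(f;g)^*\Rightarrow f^*;g^*$ transports to the laxator $\varphi_p^{f,g}$ under $-^{\refine}$ is a point the paper leaves implicit, but it is a refinement of the same argument rather than a different approach.
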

\begin{proof}
Observe that we have a natural isomorphism
\begin{center}
\scalebox{1}{\tikzfig{preopfib-factors}}
\end{center}
for some functor $F:\cat Y_x\rightarrow\cat Y_y$ if and only if the opliftable pair $(a,f:x\rightarrow y)$ has a weak opcartesian lifting $f_a:a\rightarrow F(a)$ for each $a\in\cat Y_x$. To prove this, we argue as follows. If such liftings exist, then $F\coloneq f^*$ is the usual reindexing functor between the fibres, and the isomorphism is given by the universal property of weak opcartesian maps: any $F':a\rightarrow b$ in $D_p(f)(a,b)$ uniquely factorises as
\begin{center}
\scalebox{1}{\tikzfig{preopfib-factors-factorises}}.
\end{center}
Conversely, given such a functor and a natural isomorphism, define the lifting of the opliftable pair $(a,f:x\rightarrow y)$ as $\alpha^{-1}_{a,Fa}\left(\id_{Fa}\right)$.
\end{proof}
\begin{corollary}\label{cor:preopfib-opfib-factlift}
For a preopfibration $\cat Y\rightarrow\cat X$, the following are equivalent:
\begin{itemize}
\item it is an opfibration,
\item it is a factorisation lifting,
\item the composition of weakly opcartesian maps is weakly opcartesian.
\end{itemize}
\end{corollary}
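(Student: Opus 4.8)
The plan is to prove the two equivalences $(\text{opfibration})\Leftrightarrow(\text{factorisation lifting})$ and $(\text{opfibration})\Leftrightarrow(\text{closure of weakly opcartesian maps under composition})$ separately, treating the opfibration condition as the hub. Throughout I work under the standing hypothesis that $p:\cat Y\rightarrow\cat X$ is a preopfibration, so that every opliftable pair $(a,f)$ has a weakly opcartesian lift, and I will repeatedly use the \emph{rigidity} of such lifts: any two weakly opcartesian lifts of the same opliftable pair are isomorphic via a unique vertical (i.e.\ above $\id$) isomorphism, by exactly the argument that gives uniqueness up to isomorphism for opcartesian lifts (Definition~\ref{def:opcartesian-map}, Remark~\ref{rem:weakly-opcartesian-map}). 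The first equivalence can be routed through the displayed-category machinery of Section~\ref{sec:collages}; the second is best done by hand from the universal properties.

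\textbf{Opfibration $\Leftrightarrow$ closure under composition.} For the forward direction I would first show that in an opfibration \emph{every} weakly opcartesian map is in fact opcartesian: given a weakly opcartesian $F:a\rightarrow b$ over $f$, take the opcartesian lift $F^{\dagger}$ of $(a,f)$; both are weakly opcartesian, so rigidity yields a vertical isomorphism $\theta$ with $F^{\dagger}=F;\theta$, and since opcartesian maps are stable under composition with vertical isomorphisms, $F$ is opcartesian. Closure of opcartesian maps under composition (the fact underlying Proposition~\ref{prop:opfibration-to-pseudofunctor}, and the reason it fails for mere weak liftings) then gives the third condition. For the converse, given $F:a\rightarrow b$ weakly opcartesian over $f:x\rightarrow y$ and test data $G:a\rightarrow c$, $h:y\rightarrow pc$ with $f;h=pG$, I would weakly lift $(b,h)$ to some $F':b\rightarrow c'$, observe that $F;F'$ is weakly opcartesian over $pG$ by hypothesis, and use its weak universal property to produce the unique vertical $K:c'\rightarrow c$ with $(F;F');K=G$; then $H:=F';K$ is the required factor over $h$. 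Uniqueness of $H$ follows by pushing any two candidates back through $F'$ and invoking the uniqueness clause of the weakly opcartesian map $F;F'$.

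\textbf{Opfibration $\Leftrightarrow$ factorisation lifting.} Here I would invoke Proposition~\ref{prop:displayed-pseudo-decompose}, which reduces ``factorisation lifting'' to ``$D_p$ is a pseudofunctor,'' and Proposition~\ref{prop:displayed-factors-opfibration}, which expresses the preopfibration hypothesis as the factorisation of $D_p$ through $-^{\refine}:\Cat^{co}\rightarrow\Prof$. It then remains to show that, for a preopfibration, $D_p$ is a pseudofunctor precisely when $p$ is an opfibration; since $-^{\refine}$ is a pseudofunctor that is fully faithful on $2$-cells (natural transformations between representable profunctors correspond to natural transformations of the representing functors by Yoneda), pseudofunctoriality of $D_p$ is equivalent to coherent pseudofunctoriality of the reindexing assignment $f\mapsto f^{*}$. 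Alternatively, and more self-containedly, I would argue directly: if $p$ is an opfibration, an opcartesian lift of $f$ provides a factorisation of any $H$ over $f;g$, and the opcartesian universal property shows this factorisation is initial in $\cat Y(f,g)(a,c)$, hence every factorisation lies in its path component; conversely, given a preopfibration that is a factorisation lifting, I would factor each test morphism $G$ over $f;h$ to recover the opcartesian universal property of the weak lifts.

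\textbf{Main obstacle.} The delicate point is the \emph{uniqueness} clause in the backward direction of the factorisation-lifting equivalence: showing that two lifts $H,H':b\rightarrow c$ over $h$ with $F;H=G=F;H'$ must agree. Existence uses only the weak universal property, but uniqueness must combine it with the ``unique up to path component'' clause of Definition~\ref{def:decomposition-lifting}. Concretely, each object $(F_i,b_i,G_i)$ of $\cat Y(f,h)(a,c)$ on the connecting zigzag admits a unique vertical $\psi_i:b\rightarrow b_i$ with $F;\psi_i=F_i$ (weak universal property of $F$), and one checks that the value $\psi_i;G_i$ is invariant under the morphisms of $\cat Y(f,h)(a,c)$, hence constant along path components; evaluating at the two endpoints $(F,b,H)$ and $(F,b,H')$, where $\psi=\id_b$, forces $H=H'$. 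Getting this interaction between the zigzag equivalence of the coend and the uniqueness part of weak opcartesianness to fit together cleanly is the technical heart of the proof.
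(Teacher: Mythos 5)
Your proposal is correct, and its primary route differs from the paper's. The paper gives no standalone proof: the corollary is positioned to fall out of Proposition~\ref{prop:displayed-factors-opfibration} (a preopfibration is exactly a functor whose displayed category factors through $-^{\refine}:\Cat^{co}\rightarrow\Prof$) together with Proposition~\ref{prop:displayed-pseudo-decompose} (factorisation lifting $\Leftrightarrow$ the laxator of $D_p$ is invertible), so that invertibility of $\left(\varphi_p^{f,g}\right)_{a,c}$ on representables is read off as ``composites of the weak liftings are weak liftings,'' which is the classical criterion for a preopfibration to be an opfibration. You mention this route but actually execute the elementary one, proving both equivalences directly from the universal properties, and your execution is sound: the rigidity argument (weak lifts of the same opliftable pair are uniquely vertically isomorphic, and opcartesian maps absorb vertical isomorphisms) correctly upgrades weakly opcartesian to opcartesian in an opfibration; the factor-through-$F'$ trick correctly extracts the full opcartesian universal property from closure under composition; and your handling of the genuinely delicate point --- uniqueness in the direction (preopfibration + factorisation lifting) $\Rightarrow$ opfibration --- is exactly right: the assignment $(F_i,b_i,G_i)\mapsto\psi_i;G_i$ is invariant under morphisms of $\cat Y(f,h)(a,c)$ (since uniqueness of the vertical comparison forces $\psi_2=\psi_1;K$), hence constant on path components, and evaluating at the two endpoints with $\psi=\id_b$ forces $H=H'$. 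This invariance computation is precisely the content that the coend equivalence relation packages on the profunctor side, so your direct argument makes explicit what the paper's route keeps abstract; conversely, the paper's route is shorter given the two propositions and locates the statement in the displayed-category picture. One cosmetic caveat: your sketch of the displayed route via ``$-^{\refine}$ fully faithful on 2-cells'' is thinner than your direct argument and would need the representability bookkeeping spelled out, but since the direct argument is complete, nothing in the proposal depends on it.
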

Proposition~\ref{prop:displayed-factors-opfibration} and Corollary~\ref{cor:preopfib-opfib-factlift}, of course, dualise to (pre)fibrations. Namely, a functor $p:\cat Y\rightarrow\cat X$ is an prefibration if and only if the displayed category $D_p:\cat X\rightarrow\Prof$ factors through the embedding $-^{\coarsen}:\Cat^{op}\rightarrow\Prof$, and Corollary~\ref{cor:preopfib-opfib-factlift} holds verbatim upon removing `op' throughout.

\section{Deflations}\label{sec:deflations}

We now wish to impose additional structure on a functor $\cat Y\rightarrow\cat X$ so that it can be used as a model for layered theories with bidirectional functor boundaries. Proposition~\ref{prop:displayed-pseudo-decompose} suggests it should be a factorisation lifting, while Proposition~\ref{prop:displayed-factors-opfibration} suggests it should somehow restrict to an opfibration, which would give the forward functor boundaries. We ensure that it also restricts to a fibration, giving the reverse functor boundary. We do this by freely adding right adjoints to the base category $\cat X$ (Definition~\ref{def:zigzag-category}), and by requiring that the adjoints lift to the total category (Definition~\ref{def:local-retrofunctor}).

Let $\cat X$ be a category. The collection of {\em zigzags} on $\cat X$ is recursively defined as follows:
\begin{itemize}
\item for every morphism $f:x\rightarrow y$ in $\cat X$, the expressions $f:x\rightarrow y$ and $\overline f:y\rightarrow x$ are zigzags,
\item if $\varphi:x\rightarrow y$ and $\gamma:y\rightarrow z$ are zigzags, then so is $\varphi ;\gamma : x\rightarrow z$.
\end{itemize}
In other words, a zigzag is a (bracketed) list of morphisms with possibly alternating directions; for example, the following is a zigzag $x\rightarrow s$:
$$x\rightarrow y\leftarrow z\leftarrow w\rightarrow s.$$

\begin{definition}[Zigzag 2-category]\label{def:zigzag-category}
Given a small 1-category $\cat X$, the {\em zigzag 2-category} $\Zg(\cat X)$ of $\cat X$ is the strict 2-category whose 0-cells are the objects of $\cat X$ and the 1-cells are given by the zigzags, subject to the following equations:
\begin{align*}
\overline{\id}_x &= \id_x & f;g &= fg \\
(\varphi;\gamma);\psi &= \varphi;(\gamma;\psi) & \overline f;\overline g &= \overline{gf},
\end{align*}
where $;$ is used for composition of zigzags and concatenation is used for composition in $\cat X$. Given a morphism $f:x\rightarrow y$ in $\cat X$, we have the following generating 2-cells:
\begin{align*}
\eta_f &: \id_x\rightarrow f;\overline f \\
\varepsilon_f &: \overline f;f\rightarrow\id_y,
\end{align*}
subject to the zigzag equations defining an adjunction:
\begin{center}
\scalebox{.9}{\tikzfig{zigzag-equations-twocells}},
\end{center}
as well as the following coherence equations:
\begin{align*}
\eta_{\id_x} &= \id_{\id_x}  & \eta_{f;h} &= \eta_f;(\id_f*\eta_h*\id_{\overline f}) \\
\varepsilon_{\id_x} &= \id_{\id_x} & \varepsilon_{f;h} &= (\id_{\overline h}*\varepsilon_f*\id_h);\varepsilon_h.
\end{align*}
\end{definition}
For any category $\cat X$, there are the following identity-on-objects inclusions into its zigzag 2-category:
\begin{alignat*}{2}
\cat X &\xhookrightarrow{\ \iota^*\ }\Zg(\cat X) {} &&\xhookleftarrow{\ \iota^{\circ}\ }\cat X^{op} \\
f &\longmapsto f\quad\ \overline f {} &&\longmapsfrom f.
\end{alignat*}

\begin{definition}[Local retrofunctor]\label{def:local-retrofunctor}
Let $\cat X$ and $\cat Y$ be strict 2-categories. A {\em local retrofunctor} $(p,\varphi):\cat Y\rightarrow\cat X$ consists of a 1-functor $p:\cat Y\rightarrow\cat X$, and for all pairs of objects $a,b\in\Ob(\cat Y)$, a retrofunctor $(p,\varphi_{a,b}) : \cat Y(a,b)\rightarrow\cat X(pa,pb)$ whose function part is given by the action of $p$ on morphisms, satisfying the following coherence equation for all $F:a\rightarrow b$, $F':b\rightarrow c$, $\alpha:pF\rightarrow g$, $\alpha':pF'\rightarrow g'$:
\begin{equation*}
\varphi_{a,b}(F,\alpha)*\varphi_{b,c}(F',\alpha') = \varphi_{a,c}(F;F',\alpha *\alpha').
\end{equation*}
\end{definition}

We will now focus on local retrofunctors whose codomain is a zigzag 2-category, whose functor part is a factorisation lifting, together with a certain condition on the liftings of counits. This situation will play such a prominent role in the subsequent development that we give it a special name: {\em deflation}. We shall further introduce monoidal deflations, which will be used to give semantics to layered theories in Section~\ref{sec:deflational-models}.

Let $(p,\varphi):\cat Y\rightarrow\Zg(\cat X)$ be a local retrofunctor such that $p:\cat Y\rightarrow\Zg(\cat X)$ is a factorisation lifting. A {\em lifting} of the opliftable pair $(a,f:x\rightarrow y)$ such that $f\in\cat X$ is any morphism $F:a\rightarrow b$ obtained by decomposing the codomain of the 2-cell $\varphi_{a,a}\left(\id_a,\eta_f\right)$ into $F:a\rightarrow b$ above $f:x\rightarrow y$ and $\overline F:b\rightarrow a$ above $\overline f:y\rightarrow x$:
\begin{center}
\scalebox{1}{\tikzfig{decomposition-opfibration-lifting}}.
\end{center}
Dually, we say that $\overline F:b\rightarrow a$ is a {\em lifting} of the liftable pair $(\overline f:y\rightarrow x,a)$. In the context of deflations (which we about to define), the morphism of an opliftable pair will always be assumed to lie in $\iota^*\cat X$, and similarly, the morphism of a liftable pair will always be assumed to lie in $\iota^{\circ}\cat X$. While {\em a priori} the same pair can have many liftings, as we shall see imminently (Lemma~\ref{lma:deflation-unique-liftings}), in our situation of interest the liftings are essentially unique.
\begin{definition}[Deflation]\label{def:deflation}
Let $\cat X$ be a 1-category and $\cat Y$ be a strict 2-category. A {\em deflation} from $\cat Y$ to $\cat X$ is a local retrofunctor $(p,\varphi):\cat Y\rightarrow\Zg(\cat X)$ such that $p:\cat Y\rightarrow\Zg(\cat X)$ is a factorisation lifting, and for every lifting $F:a\rightarrow b$ of an opliftable pair $(a,f:x\rightarrow y)$, the codomain of the 2-cell $\varphi_{b,b}\left(\overline F;F,\varepsilon_f\right)$ is $\id_b$:
\begin{center}
\scalebox{1}{\tikzfig{deflation-def-counit-iso}}.
\end{center}
\end{definition}
We say that a deflation is {\em cloven} or {\em split} if the underlying factorisation lifting is.

\begin{lemma}\label{lma:deflation-unique-liftings}
Let $(p,\varphi):\cat Y\rightarrow\Zg(\cat X)$ be a deflation, and let $F:a\rightarrow b$ be a lifting of an opliftable pair $(a,f:x\rightarrow y)$. Then for any morphism $F':a\rightarrow b'$ above $f$, there is a unique morphism $\alpha:b\rightarrow b'$ in the fibre $\cat Y_y$ such that $F;\alpha =F'$:
\begin{center}
\scalebox{1}{\tikzfig{deflation-unique-liftings}}.
\end{center}
\end{lemma}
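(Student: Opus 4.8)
The plan is to recognise that the stated universal property is precisely weak opcartesianness (Remark~\ref{rem:weakly-opcartesian-map}) for the lifting $F$, and to manufacture the comparison morphism $\alpha$ by lifting the counit $\varepsilon_f$ through the local retrofunctor. Concretely, given $F':a\rightarrow b'$ above $f$, I would form the opliftable pair $(\overline F;F',\varepsilon_f)$ in the hom-category $\cat Y(b,b')$ --- here $\overline F:b\rightarrow a$ is the partner morphism coming from the decomposition $(\id_a)^{\varphi}=F;\overline F$ that defines the lifting --- and set $\alpha$ to be the codomain of $\varphi_{b,b'}(\overline F;F',\varepsilon_f)$. Since $\varepsilon_f$ has codomain $\id_y$, the morphism $\alpha:b\rightarrow b'$ lies above $\id_y$, hence in the fibre $\cat Y_y$, as required.

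For existence ($F;\alpha=F'$), the idea is to exhibit a single retrofunctor-lifted $2$-cell that the retrofunctor laws force to be an identity. First I would use the coherence law of the local retrofunctor (Definition~\ref{def:local-retrofunctor}) together with the identity law $\varphi(-,\id)=\id$ (Definition~\ref{def:retrofunctor}) to identify two whiskered $2$-cells: $\varphi_{a,a}(\id_a,\eta_f)*\id_{F'}=\varphi_{a,b'}(F',\eta_f*\id_f)$, call it $\beta:F'\Rightarrow F;\overline F;F'$ (its codomain is $F;\overline F;F'$ because the codomain of $\varphi_{a,a}(\id_a,\eta_f)$ is $F;\overline F$ by definition of the lifting), and $\id_F*\varphi_{b,b'}(\overline F;F',\varepsilon_f)=\varphi_{a,b'}(F;\overline F;F',\id_f*\varepsilon_f)$, call it $\gamma:F;\overline F;F'\Rightarrow F;\alpha$. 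By the retrofunctor composition law their vertical composite is $\beta;\gamma=\varphi_{a,b'}\bigl(F',(\eta_f*\id_f);(\id_f*\varepsilon_f)\bigr)$, and the triangle identity for the adjunction $f\dashv\overline f$ in $\Zg(\cat X)$ (Definition~\ref{def:zigzag-category}) collapses the base $2$-cell to $\id_f$. The identity law then gives $\beta;\gamma=\id_{F'}$; since an identity $2$-cell has equal domain and codomain, this yields $F'=F;\alpha$.

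For uniqueness, this is where the defining condition of a deflation enters. Suppose $\alpha':b\rightarrow b'$ lies in $\cat Y_y$ with $F;\alpha'=F'$, so that $\overline F;F'=(\overline F;F);\alpha'$. I would compute the horizontal composite $\varphi_{b,b}(\overline F;F,\varepsilon_f)*\id_{\alpha'}$ in two ways. On one hand, by coherence it equals $\varphi_{b,b'}((\overline F;F);\alpha',\varepsilon_f)=\varphi_{b,b'}(\overline F;F',\varepsilon_f)$, whose codomain is $\alpha$ by construction. On the other hand, the deflation condition (Definition~\ref{def:deflation}) says the codomain of $\varphi_{b,b}(\overline F;F,\varepsilon_f)$ is $\id_b$, so the same horizontal composite has codomain $\id_b;\alpha'=\alpha'$. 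Comparing the two codomains of one and the same $2$-cell gives $\alpha=\alpha'$.

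The main obstacle I anticipate is purely bookkeeping: keeping straight which $2$-cells are genuine retrofunctor liftings versus whiskerings, and applying the interchange between the coherence law of the local retrofunctor and the triangle identities in $\Zg(\cat X)$ in exactly the right order. The conceptual content is small --- a lifting is weakly opcartesian, with the counit providing the comparison cell and the deflation condition guaranteeing its uniqueness --- but making each codomain match on the nose (and checking the routine identities $\varepsilon_f*\id_{\id_y}=\varepsilon_f$ and $(\eta_f*\id_f);(\id_f*\varepsilon_f)=\id_f$) requires care with the strict $2$-categorical structure.
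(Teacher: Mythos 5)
Your proposal is correct and follows essentially the same route as the paper's own proof: you define $\alpha$ as the codomain of $\varphi_{b,b'}(\overline F;F',\varepsilon_f)$, establish $F;\alpha=F'$ by collapsing the composite $\left(\varphi_{a,a}(\id_a,\eta_f)*\id_{F'}\right);\left(\id_F*\varphi_{b,b'}(\overline F;F',\varepsilon_f)\right)$ to $\id_{F'}$ via the retrofunctor coherence law and the triangle identity, and derive uniqueness by rewriting $\varphi_{b,b'}(\overline F;F',\varepsilon_f)$ as $\varphi_{b,b}(\overline F;F,\varepsilon_f)*\id_{\alpha'}$ and invoking the deflation condition that the latter's left factor has codomain $\id_b$. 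The bookkeeping identities you flag ($\varepsilon_f*\id_{\id_y}=\varepsilon_f$ and the zigzag equation) are exactly the ones the paper uses.
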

\begin{proof}
Define $\alpha:b\rightarrow b'$ as the codomain of the 2-cell $\varphi_{b,b'}\left(\overline F;F',\varepsilon_f\right)$:
\begin{center}
\scalebox{1}{\tikzfig{unique-lifting-mediator}}.
\end{center}
We claim that this 1-cell has the desired property. To see this, consider the following diagram of 2-cells in $\cat Y$:
\begin{center}
\scalebox{1}{\tikzfig{unique-lifting-mediator-commutes}},
\end{center}
which simplifies to:
\begin{align*}
&\phantom{=} \left(\varphi_{a,a}\left(\id_a,\eta_f\right)*\id_{F'}\right);\left(\id_F*\varphi_{b,b'}\left(\overline F;F',\varepsilon_f\right)\right) \\
&= \left(\varphi_{a,a}\left(\id_a,\eta_f\right)*\varphi_{a,b'}\left(F',\id_f\right)\right);\left(\varphi_{a,b}\left(F,\id_f\right)*\varphi_{b,b'}\left(\overline F;F',\varepsilon_f\right)\right) \\
&= \varphi_{a,b'}\left(F',\eta_f*\id_f\right);\varphi_{a,b'}\left(F;\overline F;F',\id_f*\varepsilon_f\right) \\
&= \varphi_{a,b'}\left(F',\left(\eta_f*\id_f\right);\left(\id_f*\varepsilon_f\right)\right) \\
&= \varphi_{a,b'}\left(F',\id_f\right) \\
&= \id_{F'},
\end{align*}
which, in particular, implies that $F;\alpha=F'$, as required.

For uniqueness, suppose that $\alpha':b\rightarrow b'$ is some 1-cell such that $F;\alpha'=F'$. We then compute
\begin{align*}
\varphi_{b,b'}\left(\overline F;F',\varepsilon_f\right) &= \varphi_{b,b'}\left(\overline F;F;\alpha',\varepsilon_f*\id_{\id_y}\right) \\
&= \varphi_{b,b}\left(\overline F;F,\varepsilon_f\right)*\varphi_{b,b'}\left(\alpha',\id_{\id_y}\right) \\
&= \varphi_{b,b}\left(\overline F;F,\varepsilon_f\right)*\id_{\alpha'}.
\end{align*}
Since the codomain of $\varphi_{b,b}\left(\overline F;F,\varepsilon_f\right)$ is $\id_b$, this, in particular, implies that $\alpha'=\alpha$.
\end{proof}

\begin{corollary}\label{cor:decomposition-biretrofunctor-opfibration}
Let $(p,\varphi):\cat Y\rightarrow\Zg(\cat X)$ be a (split) deflation. Consider the restrictions of $p$
\begin{center}
\scalebox{1}{\tikzfig{decomposition-biretrofunctor-opfibration}}
\end{center}
to the wide subcategories $\cat Y^*$ and $\cat Y^{\circ}$ of $\cat Y$ defined by $F\in\cat Y^*$ if and only if $pF\in\iota^*\cat X$, and $F\in\cat Y^{\circ}$ if and only if $pF\in\iota^{\circ}\cat X^{op}$. Then $p^*$ is a (split) opfibration, and $p^{\circ}$ is a (split) fibration.
\end{corollary}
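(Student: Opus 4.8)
The plan is to prove the opfibration statement in full and to obtain the fibration statement as its mirror image, using the duality already flagged after Corollary~\ref{cor:preopfib-opfib-factlift}. First I would check that $\cat Y^*$ is genuinely a wide subcategory and that $p$ corestricts to a functor $p^*:\cat Y^*\rightarrow\cat X$, identifying $\cat X$ with its image $\iota^*\cat X\subseteq\Zg(\cat X)$. Identities lie in $\cat Y^*$ since $p\,\id_a=\id_{pa}$, and $\cat Y^*$ is closed under composition because $\iota^*$ is a functor, so its image is closed under composition of forward morphisms. The same remarks dualise verbatim to $\cat Y^\circ$ and $p^\circ:\cat Y^\circ\rightarrow\cat X^{op}$ via $\iota^\circ$, using the relation $\bar f;\bar g=\overline{gf}$ of Definition~\ref{def:zigzag-category}.

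Next I would exhibit $p^*$ as a preopfibration. Given an opliftable pair $(a,f:x\rightarrow y)$ with $f\in\cat X$, the deflation structure provides a lifting $F:a\rightarrow b$ by decomposing the codomain $F;\bar F$ of the unit $2$-cell $\varphi_{a,a}(\id_a,\eta_f)$; since $F$ sits above $\iota^*f$, it is a morphism of $\cat Y^*$. Lemma~\ref{lma:deflation-unique-liftings} states precisely that for every $F':a\rightarrow b'$ above $f$ there is a unique $\alpha$ in the fibre $\cat Y_y$ with $F;\alpha=F'$; as $\alpha$ lies above $\id_y\in\iota^*\cat X$ and $F,F'\in\cat Y^*$, this is exactly the weak $p^*$-opcartesian universal property computed inside $\cat Y^*$. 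Hence $p^*$ is a preopfibration.

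To upgrade weak opcartesianness to genuine opcartesianness I would invoke Corollary~\ref{cor:preopfib-opfib-factlift}, which reduces the claim to showing that $p^*$ is a factorisation lifting. This is inherited from the factorisation lifting $p:\cat Y\rightarrow\Zg(\cat X)$: a factorisation $f=g;g'$ in $\cat X$ is in particular a factorisation $\iota^*f=\iota^*g;\iota^*g'$ in $\Zg(\cat X)$, whose lifts $G,G'$ land above forward morphisms and therefore in $\cat Y^*$; moreover the connecting maps witnessing ``same path component'' are fibre morphisms (above identities), so they too lie in $\cat Y^*$ and the path-component relation is unchanged by the restriction. Thus $p^*$ is a factorisation lifting, and Corollary~\ref{cor:preopfib-opfib-factlift} yields that it is an opfibration. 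In the split case the chosen factorisation lifts of the deflation single out chosen opcartesian liftings, and I would verify $\id_x^*=\id$ and $(f;g)^*=f^*;g^*$ using splitness of the factorisation lifting together with the coherence equations $\eta_{\id_x}=\id_{\id_x}$ and $\eta_{f;h}=\eta_f;(\id_f*\eta_h*\id_{\bar f})$ of Definition~\ref{def:zigzag-category}.

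Finally, for $p^\circ$ I would run the mirror image of the whole argument. The same $2$-cell $\varphi_{a,a}(\id_a,\eta_f)$, whose codomain is $F;\bar F$, supplies through its backward half $\bar F:b\rightarrow a$ the candidate cartesian lifting over $\iota^\circ\cat X^{op}$; the dual of Lemma~\ref{lma:deflation-unique-liftings}, proved by the same computation but whiskering on the opposite side and using the other triangle identity $(\id_{\bar f}*\eta_f);(\varepsilon_f*\id_{\bar f})=\id_{\bar f}$, shows that $\bar F$ is weakly cartesian, and the restricted factorisation lifting plus the fibrational form of Corollary~\ref{cor:preopfib-opfib-factlift} upgrades this to a fibration. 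I expect the main obstacle to be bookkeeping rather than conceptual: confirming that the factorisation lifting really restricts to $\cat Y^*$ and $\cat Y^\circ$ with an unchanged path-component relation, and, in the split case, checking the strict functoriality equations for the inherited liftings. The one point demanding genuine care is that the asymmetric deflation condition (a hypothesis on counits only) still suffices for cartesian uniqueness on the fibrational side; this works because the uniqueness argument in Lemma~\ref{lma:deflation-unique-liftings} already uses only the triviality of $\varphi_{b,b}(\bar F;F,\varepsilon_f)$, and the dual uniqueness simply whiskers that same counit $2$-cell on the left.
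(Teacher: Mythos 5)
Your proposal is correct and follows essentially the same route as the paper's proof: Lemma~\ref{lma:deflation-unique-liftings} gives that $p^*$ is a preopfibration, the factorisation-lifting property restricts from $p$, and Corollary~\ref{cor:preopfib-opfib-factlift} upgrades this to an opfibration, with the fibration case handled dually. Your closing observation is also accurate and fills in what the paper compresses into ``the argument is dual'': the dual uniqueness really does rest only on the codomain of $\varphi_{b,b}(\overline F;F,\varepsilon_f)$ being $\id_b$, whiskered on the other side via the triangle identity $(\id_{\overline f}*\eta_f);(\varepsilon_f*\id_{\overline f})=\id_{\overline f}$, so the counit-only deflation axiom suffices on the fibrational side.
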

\begin{proof}
Lemma~\ref{lma:deflation-unique-liftings} establishes that $p^*$ is a preopfibration. It is also a factorisation lifting by restricting the liftings of factorisations under $p$. Thus, by Corollary~\ref{cor:preopfib-opfib-factlift}, $p^*$ is an opfibration. The argument establishing that $p^{\circ}$ is a fibration is dual.
\end{proof}

\begin{definition}[Morphism of deflations]\label{def:morphism-deflations}
Let $(p,\varphi):\cat Y\rightarrow\Zg(\cat X)$ and $(q,\varphi):\cat Y'\rightarrow\Zg(\cat X')$ be deflations. A {\em morphism} $(H,K):p\rightarrow q$ is given by a 2-functor $H:\cat Y\rightarrow\cat Y'$ and a 1-functor $K:\cat X\rightarrow\cat X'$ such that the diagram
\begin{center}
\scalebox{1}{\tikzfig{morphism-deflations}},
\end{center}
where $K$ is freely extended to the zigzag 2-categories, commutes, and for every $F:a\rightarrow b$ in $\cat Y$ above $f:x\rightarrow y$ in $\Zg(\cat X)$ and every 2-cell $\alpha:f\rightarrow g$ we have
$$H\left(\varphi_{a,b}(F,\alpha)\right)=\varphi_{Ha,Hb}(HF,K\alpha).$$
\end{definition}
A morphism of split deflations is additionally required to strictly preserve the chosen liftings of factorisations. We denote the category of deflations by $\Defl$, and the fixed base case by $\Defl(\cat X)$. As before, we add the subscript $\mathsf{sp}$ for the subcategories of split deflations.

\begin{definition}[Minimal deflation]\label{def:minimal-deflation}
We say that a deflation $(p,\varphi):\cat Y\rightarrow\Zg(\cat X)$ is {\em minimal} if the only non-identity 2-cells of $\cat Y$ are the ones in the image of the retrofunctors $(p,\varphi_{a,b})$.
\end{definition}
An equivalent way to define a minimal deflation would be to require $p$ to be a 2-functor, and each retrofunctor $(p,\varphi_{a,b}):\cat Y(a,b)\rightarrow\cat X(pa,pb)$ to be a {\em lens}, i.e.~the lifted 2-cells should also be compatible with the action of $p$ on 2-cells, not just the 1-cells. Minimal deflations are equivalent to (op)indexed categories (Theorem~\ref{thm:minimal-deflation-indexed-category}).

\begin{remark}\label{rem:minimal-deflation}
The reason we wish to consider non-minimal deflations are many scenarios in which there are transformations between morphisms which are non-trivial, in the sense that they do not arise as liftings of the 2-cells in the base. For example, in Section~\ref{sec:zx-extraction} we shall consider 2-cells that are semantics-preserving graph rewrites. Moreover, the syntax developed in Chapter~\ref{ch:layered-theories} is sound and complete with respect to all deflations, not just the minimal ones. The general deflations, therefore, seem like a more natural semantic domain for interpreting layered theories (Chapter~\ref{ch:layered-theories}) than merely the minimal ones. Notwithstanding, at the moment of writing, it is unclear to the author what the general 2-cells of a deflation correspond to on the profunctor side. It seems that one would have to expand the allowed 2-cells from natural transformations to some kind of ``natural relations''.
\end{remark}
Let us denote by $\MinDefl(\cat X)$ the full subcategory of $\Defl(\cat X)$ on minimal deflations.

\begin{theorem}\label{thm:minimal-deflation-indexed-category}
Let $\cat X$ be a small category. There is an equivalence of categories between minimal deflations into $\cat X$ and opindexed categories on $\cat X$:
$$\MinDefl(\cat X)\simeq\OpICat(\cat X),$$
which restricts to split minimal deflations and strict functors $\cat X\rightarrow\Cat$.
\end{theorem}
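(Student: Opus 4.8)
The plan is to route the equivalence through the collage machinery of Section~\ref{sec:collages} together with the opfibration--opindexed equivalence of Theorem~\ref{thm:opfibrations-opindexed-categories-equivalence}. Given a minimal deflation $(p,\varphi):\cat Y\rightarrow\Zg(\cat X)$, the functor $p$ is by definition a factorisation lifting, so Proposition~\ref{prop:displayed-pseudo-decompose} yields that the associated displayed category $D_p:\Zg(\cat X)\rightarrow\Prof$ is a pseudofunctor. First I would restrict attention to the forward arrows $\iota^*\cat X$: by Corollary~\ref{cor:decomposition-biretrofunctor-opfibration} the restriction $p^*:\cat Y^*\rightarrow\cat X$ is an opfibration, and Proposition~\ref{prop:displayed-factors-opfibration} then shows that $D_p$ restricted along $\iota^*$ factors through $-^{\refine}:\Cat^{co}\rightarrow\Prof$. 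This produces a pseudofunctor $I:\cat X\rightarrow\Cat$ with $D_p(f)\cong (If)^{\refine}$ for each $f\in\cat X$, i.e.\ precisely an object of $\OpICat(\cat X)$; concretely $I$ sends $x$ to the fibre $\cat Y_x$ and $f$ to the reindexing functor $f^*$, exactly as in the left-to-right functor of Theorem~\ref{thm:opfibrations-opindexed-categories-equivalence}.

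The crucial observation is that the backward data of the deflation carry no new information. Since $f\dashv\overline f$ in $\Zg(\cat X)$ (with unit $\eta_f$ and counit $\varepsilon_f$) and $D_p$ is a pseudofunctor, $D_p(f)\dashv D_p(\overline f)$ in $\Prof$; but by Proposition~\ref{prop:prof-embedding-adjoints} the right adjoint of $(If)^{\refine}$ is $(If)^{\coarsen}$, and adjoints are unique up to canonical isomorphism, so $D_p(\overline f)\cong (If)^{\coarsen}$ and the lifted $2$-cells $\varphi(-,\eta_f)$, $\varphi(-,\varepsilon_f)$ are forced to be the profunctor adjunction data. To build the inverse functor I would start from an opindexed category $I:\cat X\rightarrow\Cat$ and define $D:\Zg(\cat X)\rightarrow\Prof$ by $x\mapsto Ix$, $f\mapsto (If)^{\refine}$, $\overline f\mapsto (If)^{\coarsen}$, sending the generators $\eta_f,\varepsilon_f$ to the unit and counit of Proposition~\ref{prop:prof-embedding-adjoints}. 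The minimal deflation is then recovered as the collage $\coprod D$ (Definition~\ref{def:collage}) over base $\Zg(\cat X)$, equipped with the lens/retrofunctor structure coming from the lifted $2$-cells.

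The main obstacle, and where the bulk of the work lies, is checking that this $D$ really is a well-defined pseudofunctor $\Zg(\cat X)\rightarrow\Prof$: one must verify that it respects every defining and coherence equation of the zigzag $2$-category (Definition~\ref{def:zigzag-category}), namely $\overline f;\overline g=\overline{gf}$, the triangle (zigzag) identities, and the laxator-coherence equations $\eta_{f;h}=\eta_f;(\id_f*\eta_h*\id_{\overline f})$ together with their $\varepsilon$-duals. Concretely this amounts to showing that $\Prof$ carries a canonical ``free right adjoint'' structure compatible with the one freely generated in $\Zg(\cat X)$, so that the pseudofunctor $I;(-)^{\refine}:\cat X\rightarrow\Prof$ extends uniquely to $\Zg(\cat X)$; here the explicit unit and counit formulae from the proof of Proposition~\ref{prop:prof-embedding-adjoints}, together with the coherence of $I$ as a pseudofunctor, do the heavy lifting. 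I would also verify the deflation counit axiom of Definition~\ref{def:deflation} for $\coprod D$ directly from the explicit counit formula of Proposition~\ref{prop:prof-embedding-adjoints}, whose componentwise surjectivity (Remark~\ref{rem:counit-surjective-twocells}) supplies the required liftings.

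Finally I would check that the two assignments are mutually inverse up to natural isomorphism. Starting from a minimal deflation and reconstructing it recovers $\cat Y$ up to equivalence, because $\cat Y^*$ is an opfibration (hence determined by $I$ via Theorem~\ref{thm:opfibrations-opindexed-categories-equivalence}), while minimality forces $p$ to be a $2$-functor with lens retrofunctor, so the remaining $1$- and $2$-cells are exactly the forced adjoint data above; starting from $I$ and extracting the opindexed category returns $I$ up to the pseudofunctor structure isomorphisms. Functoriality on morphisms follows since morphisms of minimal deflations restrict to morphisms of the underlying opfibrations (Definition~\ref{def:morphism-deflations}) and conversely extend uniquely, matching pseudonatural transformations under Theorem~\ref{thm:opfibrations-opindexed-categories-equivalence}. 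The restriction to the split/strict case is then immediate: a split deflation is one whose underlying factorisation lifting is split, which under the correspondence is precisely the condition that $I$ be an honest functor $\cat X\rightarrow\Cat$ rather than a mere pseudofunctor, matching the split-opfibration/strict-functor clause of Theorem~\ref{thm:opfibrations-opindexed-categories-equivalence}.
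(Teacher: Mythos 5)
Your overall architecture coincides with the paper's proof: the forward direction extracts the opindexed category by restricting the deflation to an opfibration (Corollary~\ref{cor:decomposition-biretrofunctor-opfibration}) and factoring the displayed category through $-^{\refine}:\Cat^{co}\rightarrow\Prof$ (Propositions~\ref{prop:displayed-factors-opfibration} and~\ref{prop:displayed-pseudo-decompose}), and the backward direction extends $I$ to a normal pseudofunctor $\Zg(\cat X)\rightarrow\Prof$ via $f\mapsto (If)^{\refine}$, $\overline f\mapsto (If)^{\coarsen}$, sends $\eta_f,\varepsilon_f$ to the adjunction data of Proposition~\ref{prop:prof-embedding-adjoints}, and takes the collage, with the equivalence concluded by restricting Theorem~\ref{thm:collage-functors} and matching the 2-cells with the minimality condition. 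This is the paper's proof, with the pseudofunctor coherence checks for the extension (which the paper asserts without detail) correctly identified by you as the place where actual verification is needed.

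One step in your second paragraph does not work as justified. You argue that $D_p(f)\dashv D_p(\overline f)$ in $\Prof$ ``since $D_p$ is a pseudofunctor'', but $D_p$ is a displayed category (Definition~\ref{def:displayed-category}), i.e.~a (pseudo)functor out of the \emph{underlying 1-category} of $\Zg(\cat X)$: it has no action on the 2-cells $\eta_f,\varepsilon_f$, and the adjunction $f\dashv\overline f$ is invisible at the 1-categorical level, so preservation of adjunctions cannot be invoked. The data playing the role of $D_p$'s 2-cell action is the local retrofunctor $\varphi$, which is separate structure subject to the deflation axioms. Your conclusion $D_p(\overline f)\cong (If)^{\coarsen}$ is nonetheless true, and is repairable entirely with the paper's tools: Corollary~\ref{cor:decomposition-biretrofunctor-opfibration} also gives the fibration restriction $p^{\circ}$, so the dual of Proposition~\ref{prop:displayed-factors-opfibration} factors $D_p$ on the backward arrows through $-^{\coarsen}$, and the essential uniqueness of liftings (Lemma~\ref{lma:deflation-unique-liftings} together with the counit axiom of Definition~\ref{def:deflation}) identifies the fibrational reindexing with the opfibrational one, forcing the backward data to be redundant as you intend. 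Separately, your appeal to Remark~\ref{rem:counit-surjective-twocells} when verifying the deflation counit axiom for $\coprod D$ is off-target: that remark concerns sections of the counit components (relevant to the theories $\mathcal T_{\varepsilon}$), whereas the axiom follows directly from the explicit counit formula $(g,h)\mapsto g;h$ of Proposition~\ref{prop:prof-embedding-adjoints} evaluated on a lifting composed with its companion.
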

\begin{proof}
By Corollary~\ref{cor:decomposition-biretrofunctor-opfibration}, each deflation restricts to an opfibration into $\cat X$. By Proposition~\ref{prop:displayed-factors-opfibration}, the corresponding displayed category $\cat X\rightarrow\Prof$ factors as
$$\cat X\rightarrow\Cat\xrightarrow{\refine}\Prof,$$
thus producing the desired indexed category. It is indeed a pseudofunctor by Proposition~\ref{prop:displayed-pseudo-decompose}, since the original deflation is a factorisation lifting.

Conversely, any opindexed category $D:\cat X\rightarrow\Cat$ extends to a normal pseudofunctor $\hat D:\Zg(\cat X)\rightarrow\Prof$ given by the action of the original functor $D$ on the 0-cells, and by the following recursive definition on the 1-cells:
\begin{align*}
f &\mapsto (Df)^{\refine} \\
\overline f &\mapsto (Df)^{\coarsen} \\
\varphi;\gamma &\mapsto\hat D(\varphi);\hat D(\gamma),
\end{align*}
where $;$ on the right-hand side denotes the composition of profunctors. The generating 2-cells $\eta_f$ and $\varepsilon_f$ are, respectively, sent to the unit $\id_{Dx}\rightarrow (Df)^{\refine};(Df)^{\coarsen}$ and the counit $(Df)^{\coarsen};(Df)^{\refine}\rightarrow\id_{Dy}$ constructed in Proposition~\ref{prop:prof-embedding-adjoints}. The collage of $\hat D$ then gives the sought-after deflation $\coprod\hat D\rightarrow\Zg(\cat X)$, with the 2-cells between two parallel morphisms $(f,F)$ and $(g,G)$ of type $(x,a)\rightarrow (y,b)$ given by a 2-cell $\alpha:f\rightarrow g$ in $\Zg(\cat X)$ such that $\hat D(\alpha)_{a,b}(F)=G$.

The equivalence on the 1-categorical part then follows by restricting the equivalence of Theorem~\ref{thm:collage-functors} to deflations and indexed categories. Moreover, the 2-cells of the collage precisely recover the 2-cells of a minimal deflation, hence concluding the proof.
\end{proof}

\subsection{Monoidal deflations}

Next, we wish to incorporate monoidal structure into deflations. As for opfibrations, we do this via the means of indexed monoids. Recall that $\Fim(\cat X)$ denotes the free category with indexed monoids generated by a small category $\cat X$ (Definition~\ref{def:free-indexed-cat}). We denote the cartesian monoidal structure of $\Fim(\cat X)$ by $(\times,1)$. First, we observe that $\Zg(\Fim(\cat X))$ can be given a monoidal structure as follows.
\begin{proposition}\label{prop:zg-monoidal-structure}
Let $\cat X$ be a small category with a cartesian monoidal structure $(\times,1)$. The zigzag 2-category $\Zg(\cat X)$ becomes a strict monoidal 2-category upon imposing the following additional equation for all $f:x\rightarrow y$ and $g:z\rightarrow w$ in $\cat X$:
$$(f\times\id_w);\overline{(\id_y\times g)} = \overline{(g\times\id_x)};(\id_z\times f).$$
\end{proposition}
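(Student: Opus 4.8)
The plan is to exhibit the monoidal product as a $2$-functor built functorially out of the cartesian product on $\cat X$, and to identify the single imposed equation as the one missing instance of the interchange (Godement) law that makes this assignment well defined.

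First I would record that $\Zg$ is functorial on $1$-categories: any functor $F\colon\cat X\to\cat Y$ extends to a strict $2$-functor $\Zg(F)\colon\Zg(\cat X)\to\Zg(\cat Y)$ sending $f\mapsto Ff$, $\overline f\mapsto\overline{Ff}$, $\eta_f\mapsto\eta_{Ff}$ and $\varepsilon_f\mapsto\varepsilon_{Ff}$. This is routine from the generators-and-relations presentation of Definition~\ref{def:zigzag-category}, since $F$ preserves composition and identities and the coherence equations for $\eta,\varepsilon$ are preserved on the nose. In particular, for each object $z$ the endofunctors $(-)\times z$ and $z\times(-)$ of $\cat X$ induce $2$-functors of $\Zg(\cat X)$, which I write $(-)\otimes z$ and $z\otimes(-)$; they agree with $\times$ on $0$-cells and on forward $1$-cells and send $\overline f$ to $\overline{f\times\id_z}$, respectively $\overline{\id_z\times f}$.

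Next I would define $\otimes\colon\Zg(\cat X)\times\Zg(\cat X)\to\Zg(\cat X)$ on a pair of $1$-cells $\varphi_1\colon x_1\to y_1$ and $\varphi_2\colon x_2\to y_2$ by $\varphi_1\otimes\varphi_2\coloneq(\varphi_1\otimes x_2);(y_1\otimes\varphi_2)$, and analogously on $2$-cells, with $x\otimes y\coloneq x\times y$ on objects and unit $1$. The content of the proposition is that this is a well-defined strict monoidal $2$-functor. Unwinding functoriality on $1$-cells (composing $(\varphi_1,\varphi_2)$ with $(\psi_1,\psi_2)$ and using that $(-)\otimes z$ and $z\otimes(-)$ are already $2$-functors) shows that the only thing left to check is the interchange law
$$(\psi_1\otimes x_2);(z_1\otimes\varphi_2)=(y_1\otimes\varphi_2);(\psi_1\otimes y_2)$$
for all zigzags $\psi_1\colon y_1\to z_1$ and $\varphi_2\colon x_2\to y_2$. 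I would prove this by induction on the lengths of $\psi_1$ and $\varphi_2$, reducing to the case where both are single generators. When both are forward, the equation is just functoriality of $\times$ in $\cat X$ (both sides equal $\psi_1\times\varphi_2$); when $\psi_1$ is forward and $\varphi_2=\overline g$ is backward, the equation is exactly the equation imposed in the statement (up to the symmetry of $\times$); the backward--forward case is handled symmetrically by swapping the two tensor factors; and the backward--backward case follows from the forward--forward case by applying $\overline f;\overline g=\overline{gf}$ to both sides. Strict associativity and unitality of $\otimes$ are then inherited from those of $\times$ on $\cat X$ via functoriality of $\Zg$ (these are strict in the intended application, e.g.\ $\cat X=\Fim(\cat Y)$, where $\times$ is concatenation).

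Finally I would verify the remaining $2$-functor data: that $\otimes$ respects vertical and horizontal composition of $2$-cells and sends identities to identities. The nontrivial point is the compatibility of the generating $2$-cells $\eta_f$ and $\varepsilon_f$ with the tensor, which follows from the coherence equations $\eta_{f;h}=\eta_f;(\id_f*\eta_h*\id_{\overline f})$ and $\varepsilon_{f;h}=(\id_{\overline h}*\varepsilon_f*\id_h);\varepsilon_h$ together with the $1$-cell interchange already established, so that the quotient by the imposed equation carries a consistent $2$-cell structure. The main obstacle is precisely the inductive verification of the $1$-cell interchange law: the single imposed equation supplies the one genuinely new relation (the mixed forward--backward generator case), and the bulk of the work is bootstrapping from it, via the backward-composition relation and the symmetry of the product, to arbitrary zigzags while confirming that the $2$-cells descend to the quotient.
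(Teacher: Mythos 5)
Your construction is essentially the paper's own: the paper likewise defines the tensor on objects via $\times$, on generating zigzags by exactly your whiskered composites (the mixed clauses $(f,\overline g)\mapsto(f\times\id_w);\overline{(\id_y\times g)}$ and $(\overline f,g)\mapsto\overline{(f\times\id_z)};(\id_x\times g)$), and on generating 2-cells by $\eta_f\times\eta_g\coloneq\eta_{f\times g}$, $\varepsilon_f\times\varepsilon_g\coloneq\varepsilon_{f\times g}$ together with mixed clauses such as $\eta_f\times\varepsilon_g\coloneq\varepsilon_{\id_x\times g};\eta_{f\times\id_w}$, leaving implicit the well-definedness checks you spell out (interchange by induction on generators, with the backward--backward case via $\overline f;\overline g=\overline{gf}$, and descent of the 2-cells to the quotient). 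Your parenthetical ``up to the symmetry of $\times$'' is apt: as printed, the two sides of the imposed equation have transposed (hence mismatched) boundaries, and the type-correct form $(f\times\id_w);\overline{(\id_y\times g)}=\overline{(\id_x\times g)};(f\times\id_z)$ that your forward--backward generator case requires is precisely what the paper's own definitional clauses implement.
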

\begin{proof}
We define the monoidal functor $\times:\Zg(\cat X)\times\Zg(\cat X)\rightarrow\Zg(\cat X)$ on objects by the action of the product $\times$ on $\cat X$, while for all $f:x\rightarrow y$ and $g:z\rightarrow w$ in $\cat X$, we define the product on the generating zigzags as follows, where on the right-hand side $\times$ denotes the cartesian product of $\cat X$:
\begin{align*}
(f,g) &\mapsto f\times g, \\
(\overline f,\overline g) &\mapsto \overline{f\times g}, \\
(f,\overline g) &\mapsto (f\times\id_w);\overline{(\id_y\times g)}, \\
(\overline f,g) &\mapsto \overline{(f\times\id_z)};(\id_x\times g).
\end{align*}
Finally, on the generating 2-cells we stipulate as follows, where, as before, $f:x\rightarrow y$ and $g:z\rightarrow w$ are in $\cat X$ and $\times$ on the right-hand side denotes the cartesian product of $\cat X$:
\begin{align*}
\eta_f\times\eta_g &\coloneq\eta_{f\times g}, \\
\varepsilon_f\times\varepsilon_g &\coloneq\varepsilon_{f\times g}, \\
\eta_f\times\varepsilon_g &\coloneq \varepsilon_{\id_x\times g};\eta_{f\times\id_w}, \\
\varepsilon_f\times\eta_g &\coloneq \varepsilon_{f\times\id_z};\eta_{\id_y\times g}.
\end{align*}
The monoidal unit is given by that of $\cat X$, i.e.~by $1$.
\end{proof}
Whenever we write $\Zg(\Fim(\cat X))$, we assume that it has the monoidal structure as described in Proposition~\ref{prop:zg-monoidal-structure}, which we also denote by $(\times,1)$. Note that, despite notation, this monoidal structure is {\em not} cartesian monoidal.

\begin{definition}[Monoidal deflation]\label{def:monoidal-deflation}
A {\em monoidal deflation} is a deflation $(p,\varphi):\cat Y\rightarrow\Zg(\Fim(\cat X))$ such that $\cat Y$ is a strict monoidal 2-category with the monoidal structure $(\otimes,I)$ such that $p$ strictly preserves and reflects the monoidal structure:
\begin{itemize}
\item for all $a\in\Ob(\cat Y)$, we have $p(a)=1$ if and only if $a=I$,
\item for all $F,G\in\cat Y$, we have $p(F\otimes G)=pF\times pG$,
\item for all $H\in\cat Y$ and $f,g\in\Zg(\Fim(\cat X))$, if $pH=f\times g$, then there are $F,G\in\cat Y$ with $pF=f$, $pG=g$ and $F\otimes G=H$,
\end{itemize}
and the following additional equation holds for all 1-cells $F:a\rightarrow b$ and $G:c\rightarrow d$ in $\cat Y$ that, respectively, are above the domains of the 2-cells $\alpha$ and $\beta$ in $\Zg(\Fim(\cat X))$:
$$\varphi_{a\otimes c,b\otimes d}(F\otimes G,\alpha\times\beta)=\varphi_{a,b}(F,\alpha)\otimes\varphi_{c,d}(G,\beta).$$
\end{definition}
Note that, unlike in the definition of im-opfibrations (Definition~\ref{def:opfib-indexed-mon}), we do not require that the monoidal product on the total category of a monoidal deflation sends liftings to liftings in any sense. This, however, follows from other requirements, as we record in the following proposition.
\begin{proposition}\label{prop:monoidal-deflation-liftings-preserved}
Let $(p,\varphi):\cat Y\rightarrow\Zg(\Fim(\cat X))$ be a monoidal deflation. Let $F:a\rightarrow b$ and $G:c\rightarrow d$, respectively, be liftings of opliftable pairs $(a,f:x\rightarrow y)$ and $(c,g:z\rightarrow w)$. Then $F\otimes G$ is a lifting of $(a\otimes c,f\times g)$.
\end{proposition}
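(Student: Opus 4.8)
The plan is to exhibit $F\otimes G$ directly as a valid decomposition witnessing a lifting of $(a\otimes c,\, f\times g)$, rather than to invoke any universal property. Recall that $F$ being a lifting of $(a,f)$ means precisely that the codomain of the 2-cell $\varphi_{a,a}(\id_a,\eta_f)$ splits as $F;\overline F$ with $F$ above $f$ and $\overline F$ above $\overline f$; likewise the codomain of $\varphi_{c,c}(\id_c,\eta_g)$ splits as $G;\overline G$. Accordingly, I would begin by analysing the 2-cell $\varphi_{a\otimes c,\,a\otimes c}(\id_{a\otimes c},\eta_{f\times g})$, whose codomain is the morphism that must be decomposed to produce a lifting of $(a\otimes c,\, f\times g)$.

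The central computation is an application of the monoidal compatibility axiom of Definition~\ref{def:monoidal-deflation} with the data $F\mapsto\id_a$, $G\mapsto\id_c$, $\alpha\mapsto\eta_f$ and $\beta\mapsto\eta_g$. The hypotheses are met, since $p(\id_a)=\id_x$ and $p(\id_c)=\id_z$ are the domains of $\eta_f$ and $\eta_g$. Two identities from the monoidal structure of $\Zg(\Fim(\cat X))$ (Proposition~\ref{prop:zg-monoidal-structure}) do the bookkeeping, namely $\eta_f\times\eta_g=\eta_{f\times g}$ and $\overline f\times\overline g=\overline{f\times g}$; together with strictness $\id_a\otimes\id_c=\id_{a\otimes c}$, the axiom yields
\begin{equation*}
\varphi_{a\otimes c,\,a\otimes c}\!\left(\id_{a\otimes c},\eta_{f\times g}\right)=\varphi_{a,a}\!\left(\id_a,\eta_f\right)\otimes\varphi_{c,c}\!\left(\id_c,\eta_g\right).
\end{equation*}
Passing to codomains (functoriality of $\otimes$ on 2-cells), the codomain on the left equals $(F;\overline F)\otimes(G;\overline G)$.

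It then remains to reorganise this product. By the interchange law in the strict monoidal 2-category $\cat Y$, we have $(F;\overline F)\otimes(G;\overline G)=(F\otimes G);(\overline F\otimes\overline G)$, so the codomain of $\varphi_{a\otimes c,\,a\otimes c}(\id_{a\otimes c},\eta_{f\times g})$ factors as $(F\otimes G);(\overline F\otimes\overline G)$. Finally I would check the grading using the preservation clauses of Definition~\ref{def:monoidal-deflation}: $p(F\otimes G)=pF\times pG=f\times g$, which lies in $\iota^*\Fim(\cat X)$, while $p(\overline F\otimes\overline G)=\overline f\times\overline g=\overline{f\times g}$. Thus the displayed factorisation decomposes the relevant codomain into a morphism above $f\times g$ followed by a morphism above $\overline{f\times g}$, with first factor $F\otimes G$ --- which is exactly the defining data of a lifting of $(a\otimes c,\, f\times g)$.

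I expect the argument to be essentially formal, the only genuine care being the matching of the monoidal bookkeeping on $\Zg(\Fim(\cat X))$ (the equalities $\eta_f\times\eta_g=\eta_{f\times g}$ and $\overline f\times\overline g=\overline{f\times g}$) with the $\varphi$-compatibility axiom. A secondary point worth noting is that the definition of a lifting permits any such decomposition, so strictly one produces a single admissible witness; Lemma~\ref{lma:deflation-unique-liftings} then guarantees it coincides, up to a unique fibre isomorphism, with any other, confirming that $F\otimes G$ is the lifting in the usual sense.
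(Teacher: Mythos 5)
Your proof is correct and follows essentially the same route as the paper's: both compute $\varphi_{a\otimes c,\,a\otimes c}(\id_{a\otimes c},\eta_{f\times g})=\varphi_{a,a}(\id_a,\eta_f)\otimes\varphi_{c,c}(\id_c,\eta_g)$ via the monoidal compatibility axiom and the identity $\eta_f\times\eta_g=\eta_{f\times g}$, then use interchange to factor the codomain as $(F\otimes G);(\overline F\otimes\overline G)$. Your extra bookkeeping (the explicit grading check $p(F\otimes G)=f\times g$ and the closing appeal to Lemma~\ref{lma:deflation-unique-liftings}) is sound but not needed beyond what the paper records.
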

\begin{proof}
The lifting of $(a\otimes c,f\times g)$ is defined by factorising the codomain of the following 2-cell:
\begin{align*}
\varphi_{a\otimes c,a\otimes c}(\id_{a\otimes c},\eta_{f\times g}) &= \varphi_{a\otimes c,a\otimes c}(\id_{a}\otimes\id_{c},\eta_{f}\times\eta_{g}) \\
&= \varphi_{a,a}(\id_{a},\eta_{f})\otimes\varphi_{c,c}(\id_{c},\eta_{g}),
\end{align*}
which is therefore equal to
$$(F;\overline F)\otimes (G;\overline G) = (F\otimes G);(\overline F;\overline G).$$
Thus, $F\otimes G$ is indeed a lifting.
\end{proof}
Since liftings in deflations are unique up to an isomorphism, Proposition~\ref{prop:monoidal-deflation-liftings-preserved} implies that in monoidal deflations liftings of products are given by products of liftings. We say that a monoidal deflation is {\em split} if the underlying deflation is split, and the chosen liftings of products are the products of chosen liftings.

A morphism of monoidal deflations is given by a morphism of deflations $(H,K)$ such that $H$ is additionally a strict monoidal functor. We denote the various categories of monoidal deflations by $\MonDefl$.

The following lemma establishes a tight link between monoidal deflations and opfibrations with indexed monoids (Definition~\ref{def:opfib-indexed-mon}).
\begin{lemma}\label{lma:monoidal-deflation-opfibration-indexed-monoids}
Let $(p,\varphi):\cat Y\rightarrow\Zg(\Fim(\cat X))$ be a split deflation. If $p$ is monoidal, then the opfibration $p^*$ obtained by restriction
\begin{center}
\scalebox{1}{\tikzfig{monoidal-deflation-restriction-indexed-monoids}}
\end{center}
has indexed monoids. Moreover, if $p$ is minimal, then the converse also holds: if the restriction $p^*$ has indexed monoids, then the deflation $p$ is monoidal.
\end{lemma}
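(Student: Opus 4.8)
The plan is to prove the two implications separately; in both directions the work reduces to transporting structure along the essentially unique liftings of Lemma~\ref{lma:deflation-unique-liftings}, and to recognising cartesianness via Fox's theorem (Proposition~\ref{prop:fox}).

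For the forward implication I would start from Corollary~\ref{cor:decomposition-biretrofunctor-opfibration}, which already gives that $p^*:\cat Y^*\rightarrow\Fim(\cat X)$ is a split opfibration; its base $\Fim(\cat X)$ carries indexed monoids by construction, so the first clause of Definition~\ref{def:opfib-indexed-mon} is free. Since $p$ strictly preserves and reflects $\otimes$ and the monoidal product of two forward zigzags is again forward (Proposition~\ref{prop:zg-monoidal-structure}), $\cat Y^*$ is closed under $(\otimes,I)$ and inherits a strict monoidal structure. I would then lift the symmetry isomorphisms of $\Fim(\cat X)$ opcartesianly; as liftings of isomorphisms are isomorphisms, unique up to the mediators of Lemma~\ref{lma:deflation-unique-liftings}, the resulting family makes $\cat Y^*$ symmetric monoidal, which is the prerequisite for applying Fox.

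The substantial content is the remaining two clauses. I would produce uniform comonoids on $\cat Y^*$ by lifting the uniform comonoids $(d_x,e_x)$ of the cartesian base: the opcartesian lift of $e_x:x\rightarrow 1$ at $a$ lands in the unique object over $1$, namely $I$ (reflection of the unit), giving a counit $a\rightarrow I$, while the opcartesian lift of $d_x:x\rightarrow x\times x$ has codomain over $x\times x$, which I would identify with $a\otimes a$ using product reflection together with the relations $d_x;\pi_i=\id_x$ reindexed through the split structure. Comonoid associativity and counitality, naturality, and compatibility with $\otimes$ then all reduce to equalities of opcartesian liftings over equal base morphisms, and so hold by the uniqueness clause of Lemma~\ref{lma:deflation-unique-liftings}. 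Fox's theorem upgrades $\cat Y^*$ to a cartesian monoidal category with $\otimes$ as its product; strict preservation of products by $p^*$ is then immediate from strictness of $p$, reflection is the monoidal reflection clause of Definition~\ref{def:monoidal-deflation}, and products of opcartesian maps are opcartesian by Proposition~\ref{prop:monoidal-deflation-liftings-preserved}. This verifies all clauses of Definition~\ref{def:opfib-indexed-mon}. I expect the identification of the lift of $d_x$ with $a\otimes a$ to be the main obstacle, as it is the one place where the abstract monoidal product of $\cat Y$ and the fibred product must be reconciled.

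For the converse I would use minimality to reconstruct the whole monoidal structure from the fibrewise data. By Theorem~\ref{thm:minimal-deflation-indexed-category}, $\cat Y$ is (equivalent to) the collage of an opindexed category $D:\Fim(\cat X)\rightarrow\Cat$ whose forward restriction is precisely $p^*$; the hypothesis that $p^*$ has indexed monoids lifts $D$ to a strict opindexed monoidal category (Corollary~\ref{cor:opfibrations-indexed-monoids-opindexed-monoidal}), equipping every fibre with a monoidal product preserved by reindexing. I would define $\otimes$ on objects and forward $1$-cells from this fibrewise structure combined with $\times$ on the base, and then extend it to backward $1$-cells and to all $2$-cells: here minimality is indispensable, since every backward $1$-cell arises as a lifting and every $2$-cell lies in the image of the retrofunctors $\varphi_{a,b}$, so their tensors are forced by the compatibility equation $\varphi_{a\otimes c,b\otimes d}(F\otimes G,\alpha\times\beta)=\varphi_{a,b}(F,\alpha)\otimes\varphi_{c,d}(G,\beta)$ together with the zigzag-product rules of Proposition~\ref{prop:zg-monoidal-structure}. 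It then remains to check that this assignment is well defined, associative and unital (a $2$-categorical bookkeeping that again rests on uniqueness of liftings), and that $p$ strictly preserves and reflects it — preservation being built into the construction and reflection following from reflection in the im-opfibration together with minimality ruling out stray morphisms or $2$-cells. The main obstacle in this direction is the coherent extension of $\otimes$ across the backward morphisms and the $2$-cells, which is exactly what the minimality hypothesis is there to make determinate.
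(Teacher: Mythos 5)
Your forward direction matches the paper's proof essentially step for step: lift the counit, diagonal and symmetry opcartesianly, identify the codomain of the lifted diagonal with $a\otimes a$ using strict preservation/reflection and compositionality of the split liftings (your $d_x;\pi_i=\id_x$ trick is exactly the mechanism the paper leaves implicit), derive the uniform comonoid equations from uniqueness of liftings, and conclude via Fox's theorem, with Proposition~\ref{prop:monoidal-deflation-liftings-preserved} handling opcartesianness of products. In the converse the mathematical core is also the same — define $\otimes$ on forward morphisms by the cartesian product $\times^*$ of $\cat Y^*$, extend to backward morphisms by the formulas forced by the zigzag products of Proposition~\ref{prop:zg-monoidal-structure}, and use minimality to define $\otimes$ on 2-cells through the retrofunctor compatibility equation — but your packaging differs: the paper never passes through Theorem~\ref{thm:minimal-deflation-indexed-category} or the collage; instead it argues directly on $\cat Y$, using Lemma~\ref{lma:deflation-unique-liftings} (and its dual) to show every morphism of $\cat Y$ is generated by composites $F;\overline G$ with $F\in\cat Y^*$ and $\overline G$ a lifting of a backward generator, and then writes the four tensor clauses explicitly. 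The direct route buys you something concrete: the monoidal structure is constructed on the given $\cat Y$ on the nose, whereas your collage detour produces it on an equivalent deflation and you must transport a \emph{strict} monoidal structure back along that comparison — harmless here because splitness and minimality make the collage reconstruction an isomorphism rather than a mere equivalence, but this is a step you should state and justify rather than leave tacit. One small correction: the mixed tensors $F\otimes\overline G$ of 1-cells are not forced by the 2-cell compatibility equation $\varphi_{a\otimes c,b\otimes d}(F\otimes G,\alpha\times\beta)=\varphi_{a,b}(F,\alpha)\otimes\varphi_{c,d}(G,\beta)$; they are forced by strict preservation of $p$ together with the zigzag product rule $(f,\overline g)\mapsto (f\times\id);\overline{(\id\times g)}$ and the uniqueness of liftings — the compatibility equation only pins down the tensor of 2-cells.
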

\begin{proof}
First, suppose that the (not necessarily minimal) deflation $p$ is monoidal. Given objects $a,b\in\Ob(\cat Y)$ with $pa=x$, define the counit $C_a:a\rightarrow 1$ as the opcartesian lifting of the counit $(a,x\rightarrow 1)$, and the maps
\begin{center}
\scalebox{1}{\tikzfig{monoidal-deflation-restriction-indexed-monoids-lifted-maps}}
\end{center}
as the opcartesian liftings of the diagonal and the symmetry, respectively. Compositionality of the opcartesian liftings, together with preservation and reflection of the monoidal structure, then implies that $a'=a''=\hat a$ and $b=\hat b$, so that we may define the above maps as the diagonal and the symmetry. The equations for symmetries and uniform comonoids then follow, once again by compositionality of the opcartesian liftings. Thus, $\cat Y^*$ is cartesian monoidal by Proposition~\ref{prop:fox}, and the cartesian monoidal structure is preserved and reflected by assumption, establishing that $p^*$ has indexed monoids.

For the (partial) converse, suppose that $p$ is minimal and that $p^*$ has indexed monoids. Let us denote by $\times^*:\cat Y^*\times\cat Y^*\rightarrow\cat Y^*$ the cartesian product functor on $\cat Y^*$. We define the monoidal functor $\otimes:\cat Y\times\cat Y\rightarrow\cat Y$ on objects by the action of $\times^*$. For defining $\otimes$ on morphisms, observe that Lemma~\ref{lma:deflation-unique-liftings} (and its dual) implies that the morphisms in $\cat Y$ are generated by morphisms of the form
$$\alpha;F;\beta;\overline G,$$
where $F$ is a lifting of some $f\in\iota^*(\cat X)$ and $\overline G$ is a lifting of some $\overline g\in\iota^{\circ}(\cat X^{op})$, while $\alpha$ and $\beta$ are in the fibres of the domain and codomain of $f$. Absorbing $\alpha$ and $\beta$ into $F$, we conclude that the morphisms of $\cat Y$ are generated by morphisms of the form $F;\overline G$, where $F\in\cat Y^*$ and $G$ is, as before, a lifting of some $\overline g\in\iota^{\circ}(\cat X^{op})$. Hence, it suffices to define the monoidal product on these two classes of morphisms. Hence let $F,F'\in\cat Y^*$ and let $\overline G,\overline{G'}$ lifting of liftable pairs, with $F:a\rightarrow b$ and $\overline G:c\rightarrow d$. We define
\begin{align*}
F\otimes F' &\coloneq F\times^* F' \\
\overline G\otimes\overline{G'} &\coloneq \overline{G\times^* G'} \\
F\otimes\overline G &\coloneq (F\times^*\id_c);\overline{(\id_b\times^* G)} \\
\overline G\otimes F &\coloneq \overline{(G\times^*\id_a)};(\id_d\times^* F).
\end{align*}
The monoidal unit is given by that of $\cat Y^*$, and this monoidal structure is preserved and reflected by $p$. It remains to define $\otimes$ on the 2-cells. Since the deflation $p$ is minimal, it suffices to define $\otimes$ on the liftings of the 2-cells in $\Zg(\Fim(\cat X))$, which we do as follows, where $\alpha$ and $\beta$ are any 2-cells in $\Zg(\Fim(\cat X))$, while $F:a\rightarrow b$ and $G:c\rightarrow d$ are above their domains:
$$\varphi_{a,b}(F,\alpha)\otimes\varphi_{c,d}(G,\beta) \coloneq \varphi_{a\times^*c,b\times^*d}(F\otimes G,\alpha\times\beta).$$
\end{proof}
Note that Lemma~\ref{lma:monoidal-deflation-opfibration-indexed-monoids} implies that every split monoidal deflation has a fibrewise monoidal structure induced by the fibrewise monoidal structure of the split opfibration with indexed monoids obtained by restriction. Moreover, this fibrewise monoidal structure coincides with the fibrewise monoidal structure induced by the split fibration with indexed comonoids obtained by restriction (Corollary~\ref{cor:decomposition-biretrofunctor-opfibration}). As for im-opfibrations, we denote the fibrewise monoidal structure in the fibre of $x$ by $\otimes_x$.

The construction in the proof of Lemma~\ref{lma:monoidal-deflation-opfibration-indexed-monoids} establishes the following theorem, linking monoidal deflations and opfibrations with indexed monoids. By results of Section~\ref{sec:opfibrations-indexed-monoids} (specifically~Theorem~\ref{thm:monopfib-imopfib} and Corollary~\ref{cor:opfibrations-indexed-monoids-opindexed-monoidal}), minimal monoidal deflations are equivalent to opfibrations with a monoid and strict opindexed monoidal categories (i.e.~functors) $\cat X\rightarrow\MonCat_{\mathsf{st}}$.
\begin{theorem}\label{thm:monoidal-deflations-indexed-monoidal}
Let $\cat X$ be a small category. The categories of minimal split monoidal deflations into $\cat X$ and split im-opfibrations into $\Fim(\cat X)$ are equivalent:
$$\MinMonDefl_{\mathsf{sp}}(\cat X)\simeq\imOpFib_{\mathsf{sp}}(\Fim(\cat X)).$$
\end{theorem}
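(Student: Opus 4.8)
The plan is to exhibit the equivalence as a pair of mutually inverse functors: the forward one given by restriction, the backward one by the collage construction of Theorem~\ref{thm:minimal-deflation-indexed-category} enhanced with the monoidal structure supplied by Lemma~\ref{lma:monoidal-deflation-opfibration-indexed-monoids}. First I would define the restriction functor $R:\MinMonDefl_{\mathsf{sp}}(\cat X)\to\imOpFib_{\mathsf{sp}}(\Fim(\cat X))$ sending a minimal split monoidal deflation $(p,\varphi):\cat Y\to\Zg(\Fim(\cat X))$ to its restriction $p^*:\cat Y^*\to\Fim(\cat X)$ along the inclusion $\iota^*$. By Corollary~\ref{cor:decomposition-biretrofunctor-opfibration} this is a split opfibration, and by the forward direction of Lemma~\ref{lma:monoidal-deflation-opfibration-indexed-monoids} it carries indexed monoids, hence is a split im-opfibration in the sense of Definition~\ref{def:opfib-indexed-mon}. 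A morphism $(H,\id)$ of monoidal deflations restricts to $H^*$; since $H$ is a strict monoidal $2$-functor preserving chosen liftings, $H^*$ preserves the cartesian monoidal structure and the opcartesian liftings, yielding a morphism of split im-opfibrations, so $R$ is functorial.

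For the inverse I would define $S$ on a split im-opfibration $q:\cat Z\to\Fim(\cat X)$ as follows. Forgetting the indexed monoids, $q$ is a split opfibration; by Theorem~\ref{thm:opfibrations-opindexed-categories-equivalence} it corresponds to a strict opindexed category $D:\Fim(\cat X)\to\Cat$, and the construction in the proof of Theorem~\ref{thm:minimal-deflation-indexed-category} produces a minimal split deflation as the collage $\coprod\hat D\to\Zg(\Fim(\cat X))$ of the extension $\hat D:\Zg(\Fim(\cat X))\to\Prof$. Because this deflation is minimal and its restriction is (isomorphic to) $q$, which has indexed monoids, the partial converse in Lemma~\ref{lma:monoidal-deflation-opfibration-indexed-monoids} equips $\coprod\hat D$ with a strict monoidal structure $\otimes$ making it a minimal split monoidal deflation; I set $S(q)$ to be this object. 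Functoriality of $S$ is inherited from that of the collage construction together with the naturality of the reconstructed $\otimes$ in $q$.

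The two natural isomorphisms $RS\cong\id$ and $SR\cong\id$ then split into two layers. On underlying $1$-categories, that $R$ and $S$ are mutually inverse is the restriction of the composite equivalence $\MinDefl_{\mathsf{sp}}(\Fim(\cat X))\simeq\OpICat_{\mathsf{st}}(\Fim(\cat X))\simeq\OpFib_{\mathsf{sp}}(\Fim(\cat X))$ of Theorems~\ref{thm:minimal-deflation-indexed-category} and~\ref{thm:opfibrations-opindexed-categories-equivalence}, under which the opfibration part $p^*$ of a minimal deflation and its collage reconstruction correspond up to isomorphism (Proposition~\ref{prop:displayed-factors-opfibration}, Corollary~\ref{cor:preopfib-opfib-factlift}). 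It then remains to match the monoidal data, and this is where the main work lies: for $RS\cong\id$ one checks that the monoidal product produced by the converse of Lemma~\ref{lma:monoidal-deflation-opfibration-indexed-monoids} restricts, on $\cat Z^*=\cat Z$, precisely to the fibrewise indexed monoids of $q$, so that $R(S(q))\cong q$ as im-opfibrations; for $SR\cong\id$ one must recover the original monoidal structure on $\cat Y$. The hard part is the latter: using minimality together with Lemma~\ref{lma:deflation-unique-liftings} and its dual, every $1$-cell of $\cat Y$ factors as $F;\overline G$ with $F\in\cat Y^*$ and $\overline G\in\cat Y^{\circ}$, and every $2$-cell is a lifting, whence the defining equations of a monoidal deflation force $\otimes$ on these generators to coincide with the product reconstructed from $p^*$. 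Verifying that this forced agreement is coherent and natural --- that no structure is lost in passing through $p^*$ and back --- is the crux; the remaining checks that $R$ and $S$ preserve splitness and the fixed base are routine.
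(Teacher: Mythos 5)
Your proposal is correct and shares its skeleton with the paper's proof: both directions hinge on the restriction functor $p\mapsto p^*$ (Corollary~\ref{cor:decomposition-biretrofunctor-opfibration}) and on Lemma~\ref{lma:monoidal-deflation-opfibration-indexed-monoids} to transfer the monoidal structure. Where you genuinely differ is in how the inverse is built. The paper constructs the extension directly: from a split im-opfibration $p:\cat Y\rightarrow\Fim(\cat X)$ it forms $\hat{\cat Y}$ with the \emph{same objects} as $\cat Y$, freely adjoining a reversed morphism $\overline F$ for each opcartesian $F$ (subject to $\overline{\id_a}=\id_a$ and $\overline F;\overline G=\overline{G;F}$) together with exactly the generating 2-cells $\varphi_{a,b}(F,\alpha)$ demanded by the local retrofunctor. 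With this presentation the restriction of $\hat p$ is literally the $p$ one started with, so one round-trip is an identity on the nose, and only the other direction ($SR\cong\id$) needs minimality. You instead route through $\OpICat_{\mathsf{st}}(\Fim(\cat X))$ via Theorem~\ref{thm:opfibrations-opindexed-categories-equivalence} and take the collage of $\hat D$ from Theorem~\ref{thm:minimal-deflation-indexed-category}. This buys reuse of two already-established equivalences in place of the paper's bespoke construction, but at a price: the collage's total category is $\Gr(D)$, only \emph{isomorphic} to the original total category, so you must transport the indexed-monoid structure along that isomorphism before the converse of Lemma~\ref{lma:monoidal-deflation-opfibration-indexed-monoids} applies (you flag this parenthetically), and both round-trips then hold only up to natural isomorphism. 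Your treatment of $SR\cong\id$ is actually more explicit than the paper's, which leaves that verification implicit in the phrase ``minimal split deflation, whose restriction is the split im-opfibration $p$ we started with.''

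One imprecision to fix: it is not true that every 1-cell of $\cat Y$ \emph{factors} as $F;\overline G$ with $F\in\cat Y^*$ and $\overline G\in\cat Y^{\circ}$ --- a 1-cell lying over a genuinely alternating zigzag such as $f;\overline g;h;\overline k$ in $\Zg(\Fim(\cat X))$ admits no single-block factorisation, since the zigzag relations only merge consecutive morphisms of the same orientation. The correct statement, and the one the paper itself uses in the proof of Lemma~\ref{lma:monoidal-deflation-opfibration-indexed-monoids}, is that the 1-cells are \emph{generated} by composites of the form $F;\overline G$ (via Lemma~\ref{lma:deflation-unique-liftings} and its dual). Since $\otimes$ is required to be a 2-functor, determining it on these generating 1-cells --- and on the lifted 2-cells, which by minimality exhaust all non-identity 2-cells --- still forces agreement with the product reconstructed from $p^*$, so your uniqueness argument survives with this rewording; as stated, though, the factorisation claim is false.
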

\begin{proof}
Given a minimal split monoidal deflation $(p,\varphi):\cat Y\rightarrow\cat \Zg(\Fim(X))$, the corresponding split im-opfibration is given by the restriction $p^*:\cat Y^*\rightarrow\Fim(\cat X)$.

Conversely, given a split im-opfibration $p:\cat Y\rightarrow\Fim(\cat X)$, we extend it to a split deflation $(\hat p,\varphi):\hat{\cat Y}\rightarrow\cat \Zg(\Fim(X))$ as follows. The category $\hat{\cat Y}$ has the same objects as $\cat Y$, while the morphisms are defined as:
\begin{itemize}
\item each morphism of $\cat Y$ is a morphism of $\hat{\cat Y}$, so that $\cat Y$ is a subcategory of $\hat{\cat Y}$,
\item for each opcartesian $F\in\cat Y$, we add the morphism $\overline F$ in the opposite direction and let $p\overline F\coloneq\overline{pF}$,
\item we impose the following equations on the morphisms:
\begin{align*}
\overline{\id_a} &= \id_a & \overline F;\overline G &= \overline{G;F}.
\end{align*}
\end{itemize}
The 2-cells are generated by adding exactly those 2-cells required by the local retrofunctor, i.e.~for every 2-cell $\alpha$ in $\Fim(\cat X)$ and every $F:a\rightarrow b$ in $\hat{\cat Y}$ above the domain of $\alpha$ we add a generating 2-cell
$$\varphi_{a,b}(F,\alpha),$$
subject to the equations of a deflation. Now, $(\hat p,\varphi)$ is a minimal split deflation, whose restriction is the split im-opfibration $p$ we started with. Lemma~\ref{lma:monoidal-deflation-opfibration-indexed-monoids} guarantees that $(\hat p,\varphi)$ is monoidal.
\end{proof}

In light of Theorem~\ref{thm:monoidal-deflations-indexed-monoidal}, one might ask why go through all the developments of the current chapter just to obtain something we already defined in the previous chapter? Why not directly work with im-opfibrations and be content? The answer is twofold. First, presenting an (op)indexed monoidal category (equivalently, an im-opfibration) as a monoidal deflation explicitly introduces the functor boundaries in both directions, displaying the categories and the functors involved in a highly modular way. This makes deflations ideal for a purely syntactic treatment, which is the subject of the next chapter. Second, dropping the requirement of minimality, we do obtain a genuine generalisation of (op)indexed monoidal categories. The axiomatisation we develop in the next chapter is sound and complete with respect to all monoidal deflations, not merely the minimal ones.

\chapter{Semantics of layered theories}\label{ch:semantics}
In this final chapter, we finally define the models of opfibrational, fibrational and deflation theories in full generality. In each case, we construct a free-forgetful adjunction with the category of models: opfibrations with indexed monoids, fibrations with indexed monoids and monoidal deflations.

\section{(Op)fibrational models}\label{sec:opfib-models}

Here we consider models of layered theories consisting of a split im-opfibration $\cat Y\rightarrow\Fim(\cat X)$ (see section~\ref{sec:opfibrations-indexed-monoids}) together with an interpretation of all types, terms and 2-terms of the theory.

\begin{definition}[Opfibrational model of a layered signature]\label{def:model-layered-signature}
An {\em opfibrational model} of a layered signature $(\Omega,\mathcal F,\M_{\omega})$ is a split im-opfibration $p:\cat Y\rightarrow\Fim(\cat X)$, together with the following data:
\begin{itemize}
\item a function $i:\Omega\rightarrow\Ob(\cat X)$,
\item for every $\omega,\tau\in\Omega$, a function $i_{\omega,\tau}:\mathcal F(\omega,\tau)\rightarrow\cat X(i\omega,i\tau)$,
\item for every $\omega\in\Omega$, a function $i^{\omega}:C_{\omega}\rightarrow\Ob(\cat Y_{i\omega})$,
\item for every $\omega\in\Omega$ and all $a,b\in C_{\omega}^*$, a function $i^{\omega}_{a,b}:\Sigma_{\omega}(a,b)\rightarrow\cat Y_{i\omega}\left(\left(i^{\omega}\right)^*a,\left(i^{\omega}\right)^*b\right)$,
\end{itemize}
where $(i^{\omega})^*:C_{\omega}^*\rightarrow\Ob(\cat Y_{i\omega})$ is defined by extending $i^{\omega}$ to the fibrewise monoidal structure of $p$.
\end{definition}
Note that the last two bullet points imply that for each $\omega\in\Omega$, the fibre $\cat Y_{i\omega}$ is a model of the monoidal signature $\M_{\omega}$ in the sense of Definition~\ref{def:model-monoidal-signature}. We denote a model of $(\Omega,\mathcal F,\M_{\omega})$ by $(p,i)$.

Given an opfibrational model $(p:\cat Y\rightarrow\Fim(\cat X),i)$ of $\mathcal L=(\Omega,\mathcal F,\M_{\omega})$, it induces a function $i:\Type_{\mathcal L}\rightarrow\Ob(\cat Y)$ recursively defined as follows:
\begin{align*}
\varepsilon : \varepsilon &\mapsto 1, \\
\varepsilon : \omega &\mapsto I_{i\omega} \\
a : \omega &\mapsto i^{\omega}(a) \\
f(A) : \tau &\mapsto i_{\omega,\tau}(f)^*\left(i(A:\omega)\right) \\
AB : \omega &\mapsto i(A:\omega)\otimes_{i\omega} i(B:\omega), \\
T,S &\mapsto i(T)\times i(S),
\end{align*}
where $i_{\omega,\tau}(f)^*:\cat Y_{i\omega}\rightarrow\cat Y_{i\tau}$ is the reindexing functor induced by $i_{\omega,\tau}(f):i\omega\rightarrow i\tau$.

Likewise, the same opfibrational model further induces the function $i:\Term^1_{\mathcal L}\rightarrow\cat Y$ form the basic (Figure~\ref{fig:layered-terms}), symmetry (Definition~\ref{def:symmetry-terms}) and opfibrational (Figure~\ref{fig:opfibrational-terms}) terms as follows:
\begingroup
\addtolength{\jot}{1em}
\begin{align*}
\scalebox{.9}{\tikzfig{emptydiag-sheet}} : (\varepsilon :\omega\mid\varepsilon :\omega) &\mapsto \id_{I_{i\omega}} & \scalebox{.9}{\tikzfig{emptydiag}} &\mapsto \id_1 \\
\scalebox{.9}{\tikzfig{iddiag-sheet}} : (A:\omega\mid A:\omega) &\mapsto \id_{i(A:\omega)} & \scalebox{.9}{\tikzfig{symdiag-sheet1}} &\mapsto \scalebox{.9}{\tikzfig{symmetry}} \\
\scalebox{.9}{\tikzfig{internalsigmadiag}} : (a:\omega\mid b:\omega) &\mapsto i^{\omega}_{a,b}(\sigma) & \scalebox{.9}{\tikzfig{cup}} &\mapsto \scalebox{.9}{\tikzfig{monoid-unit}} \\
\scalebox{.9}{\tikzfig{f-box}} : (f(A):\tau\mid f(B):\tau) &\mapsto i_{\omega,\tau}(f)^*(i(x)) & \scalebox{.9}{\tikzfig{pants}} &\mapsto \scalebox{.9}{\tikzfig{monoid-multiplication}} \\
\scalebox{.9}{\tikzfig{internalxydiag}} : (AC:\omega\mid BD:\omega) &\mapsto i(x)\otimes_{\omega} i(y) & \scalebox{.9}{\tikzfig{a-cap}} &\mapsto \scalebox{.9}{\tikzfig{comonoid-counit}} \\
\scalebox{.9}{\tikzfig{refine-sheet}} : (A:\omega\mid f(A):\tau) &\mapsto i_{\omega,\tau}(f)_{i(A:\omega)} & \scalebox{.9}{\tikzfig{copants-copy}} &\mapsto \scalebox{.9}{\tikzfig{comonoid-comultiplication}} \\
x;y &\mapsto i(x);i(y) & x\otimes y &\mapsto i(x)\times i(y), \\
\end{align*}
\endgroup
where $i_{\omega,\tau}(f)^*:\cat Y_{i\omega}\rightarrow\cat Y_{i\tau}$ is the reindexing functor induced by $i_{\omega,\tau}(f):i\omega\rightarrow i\tau$, and
$$i_{\omega,\tau}(f)_{i(A:\omega)}:i(A:\omega)\rightarrow i(f(A):\tau)$$
is the opcartesian lifting of the opliftable pair
$$\left(i(A:\omega), i_{\omega,\tau}(f):i\omega\rightarrow i\tau\right).$$

\begin{definition}\label{def:model-opfib-thy}
A {\em model of an opfibrational layered theory} with signature $\mathcal L$ is an opfibrational model $(p,i)$ of $\mathcal L$ such that the induced functions $i:\Type_{\mathcal L}\rightarrow\Ob(\cat Y)$ and $i:\Term^1_{\mathcal L}\rightarrow\cat Y$ preserve the 0- and the 1-equations, respectively.
\end{definition}

\begin{proposition}\label{prop:opfib-equations-preserved}
Any opfibrational model of a layered signature preserves the structural opfibrational equations (Definition~\ref{def:str-opfib-eqns}).
\end{proposition}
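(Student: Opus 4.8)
The plan is to decompose the set of structural opfibrational equations (Definition~\ref{def:str-opfib-eqns}) into the families out of which it is assembled, and to check, family by family, that interpreting the two sides under the induced function $i:\Term^1_{\mathcal L}\rightarrow\cat Y$ yields equal morphisms (the structural $2$-equations being vacuous, as opfibrational theories carry no generating $2$-cells). The verification draws entirely on the structural properties of a split im-opfibration recorded in Chapter~\ref{ch:indexed-monoidal}: that the total category $(\cat Y,\times,1)$ is cartesian monoidal, which following the convention of Chapter~\ref{ch:indexed-monoidal} we treat strictly in the string-diagram calculus; that by Corollary~\ref{cor:opfibrations-indexed-monoids-opindexed-monoidal} each fibre $\cat Y_{i\omega}$ carries the fibrewise strict monoidal structure $\otimes_{i\omega}$ and every reindexing functor $f^*$ is a strict monoidal functor; that splitness gives $\id^*=\id$ and $(f;g)^*=f^*;g^*$ on the nose; and that the product of opcartesian maps is again opcartesian (Definition~\ref{def:opfib-indexed-mon}).

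First I would dispatch the purely formal families. The structural $0$-equations (the clauses $f(AB)=f(A)f(B)$, $f(\varepsilon)=\varepsilon$, and associativity and unitality of the internal product) hold because $f^*$ is a strict monoidal functor and $\otimes_{i\omega}$ is strictly associative and unital. The structural identities of Definition~\ref{def:str-id} read externally, with~\ref{term:ext-tensor} interpreted by $\times$, hold because $(\cat Y,\times,1)$ is a strict monoidal category, and read internally for~\ref{term:int-tensor} because each fibre is one; the internal monoidal-functor equations of Figure~\ref{fig:structural-twocells-functors-int} are then exactly the statement that $f^*$ is strict monoidal. The uniform-comonoid equations for the~\ref{term:diag} and~\ref{term:diag-counit} terms hold by Fox's theorem (Proposition~\ref{prop:fox}): their interpretations are the diagonal and the terminal map of the cartesian structure on $\cat Y$, which automatically form uniform comonoids, while the external symmetry equations are inherited from the symmetry of $\times$.

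Next I would treat the monoidal-category equations of Figure~\ref{fig:structural-twocells-monoidal}, which govern the monoid terms~\ref{term:monoid} and~\ref{term:monoid-unit}. Here the relevant equalities are the associativity and unitality of the fibrewise monoid together with its compatibility with the cartesian comonoids; all of these are precisely the indexed-monoid axioms satisfied by $\cat Y$, the bialgebra-type interaction being an instance of Proposition~\ref{prop:indexed-bialgebra}.

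The hard part will be the monoidal-functor equations of Figure~\ref{fig:structural-twocells-functors-ext} for the refinement terms~\ref{term:ext-gen}, whose interpretation is the chosen opcartesian lifting. The \emph{sliding} equation, pushing an internal term $x$ through the refinement, becomes after interpretation exactly the naturality square that defines the action of $f^*$ on morphisms via the universal property of opcartesian maps, so it holds by construction, and splitness is what turns the two candidate factorisations into a genuine equality rather than a mere isomorphism. The remaining equations, expressing that refinement commutes with the monoid multiplication, its unit, and the cartesian diagonal and counit, are where the universal property must be invoked most carefully: each follows from the monoid multiplication and the cartesian structure being morphisms of opfibrations (hence sending opcartesian liftings to opcartesian liftings), combined with the clause of Definition~\ref{def:opfib-indexed-mon} that products of opcartesian maps are opcartesian, so that both sides are opcartesian liftings of the \emph{same} opliftable pair and therefore coincide by uniqueness. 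I expect this last family to be the main obstacle, since it is the only place where the bookkeeping of opcartesian liftings, their uniqueness, and their compatibility with both the fibrewise and the external monoidal structures all have to be combined at once.
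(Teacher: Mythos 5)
Your proposal is correct and follows essentially the same route as the paper's proof: a family-by-family verification using strict (cartesian) monoidality of the total category and of the fibres, strict monoidality and functoriality of the reindexing functors, cartesianity for the uniform comonoids and symmetry equations, and the universal property of the chosen opcartesian lifting for the sliding equation of Figure~\ref{fig:structural-twocells-functors-ext}. The only divergence is that where the paper dispatches the remaining external functor equations by citing Theorem~\ref{thm:monopfib-imopfib}, you unfold that theorem's argument directly; just beware that opcartesian liftings of the same opliftable pair coincide only up to isomorphism, so the on-the-nose equality you need comes from both sides being the \emph{chosen} lifting, strictly preserved in the split setting, rather than from opcartesianness and uniqueness alone.
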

\begin{proof}
The structural 0-equations (Definition~\ref{def:str-0eqns}) are preserved by associativity and unitality of the monoids, and by the fact that the reindexing functors preserve the monoidal multiplication and units.

The structural 1-equations (Definition~\ref{def:str-1eqns}) are preserved since $\cat Y$ is (cartesian) monoidal, each fibre $\cat Y_{i\omega}$ is a monoidal category, and $\otimes_x:\cat Y_x\times\cat Y_x\rightarrow\cat Y_x$ is a functor.

The structural 1-equations in Figure~\ref{fig:structural-twocells-functors-int} are preserved since the reindexing functors between the fibres are strictly functorial and monoidal.

The symmetry equations are preserved since $\cat Y$ is cartesian monoidal, hence in particular symmetric monoidal. Likewise, the equations of uniform comonoids are preserved since $\cat Y$ is cartesian monoidal.

The equations in Figure~\ref{fig:structural-twocells-monoidal} are preserved by lifting the corresponding equations from the base category.

Preservation of the equations in Figure~\ref{fig:structural-twocells-functors-ext} is somewhat less immediate. For the first equation, assuming the sorts $x:(A:\omega\mid B:\omega)$ and $f\in\mathcal F(\omega,\tau)$ we have to show that
$$i(x);i_{\omega,\tau}(f)_{i(B:\omega)} = i_{\omega,\tau}(f)_{i(A:\omega)};i_{\omega,\tau}(f)^*(i(x)),$$
which holds by the definition of $i_{\omega,\tau}(f)^*(i(x))$: it is the unique map induced by the lifting property of the opcartesian map $i_{\omega,\tau}(f)_{i(A:\omega)}$ that makes the square
\begin{center}
\scalebox{1}{\tikzfig{interpretation-lifting-x}}
\end{center}
commute. The remaining two equations in Figure~\ref{fig:structural-twocells-monoidal} are preserved by Theorem~\ref{thm:monopfib-imopfib}.
\end{proof}

\begin{definition}\label{def:category-models-opfibrational-theories}
The category of {\em models of opfibrational theories} $\OpFThMod$ has as objects tuples $(\mathcal T,p,i)$ of an opfibrational theory $\mathcal T$ and its model $p:\cat Y\rightarrow\Fim(\cat X)$ (with the corresponding family of interpretation functions $i$). A morphism
$$(F,P,Q) : (\mathcal T,p:\cat Y\rightarrow\cat X,i)\rightarrow (\mathcal K,q:\cat Y'\rightarrow\cat X',j)$$
is given by a morphism of opfibrational theories $F:\mathcal T\rightarrow\mathcal K$ together with a morphism of im-opfibrations $(P,Q):p\rightarrow q$ such that the diagrams below commute, where we assume that the layered signature of $\mathcal T$ is $(\Omega,\mathcal F,\M_{\omega})$ and the layered signature of $\mathcal K$ is $(\Psi,\mathcal G,\M_{\psi})$:
\begin{center}
\tikzfig{mor-lay-mod}.
\end{center}
Note that the bottom diagrams say that for every $\omega\in\Omega$, the restriction of $P:\cat Y\rightarrow\cat Y'$ to the fibre $P_{i\omega}:\mathcal Y_{i\omega}\rightarrow\mathcal Y'_{jF(\omega)}$ makes $(F^{\omega},P_{i\omega})$ into a morphism in $\MMod$.
\end{definition}

We denote the full subcategory of $\OpFThMod$ defined by the theories whose sets of equations are empty by $\OpFMod$, and call it the category of {\em opfibrational models of layered signatures}, since models of empty theories are in one-to-one correspondence with models of signatures. We summarise the relationship between opfibrational theories, models and layered signatures in the following proposition (cf.~Proposition~\ref{prop:monoidal-signatures-theories-models}).

\begin{proposition}\label{prop:opfibrational-signatures-theories-models}
The vertical forgetful functors in the diagram below are fibrations, while the horizontal forgetful functors form a morphism of fibrations:
\begin{center}
\scalebox{1}{\tikzfig{opfibrational-signatures-theories-models}}.
\end{center}
Moreover, the objects in the fibre $\OpFMod(\mathcal L)$ are precisely the opfibrational models of the layered signature $\mathcal L$, and the objects in the fibre $\OpFThMod(\mathcal T)$ are precisely the models of the opfibrational theory $\mathcal T$.
\end{proposition}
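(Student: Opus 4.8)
The plan is to follow the proof of Proposition~\ref{prop:monoidal-signatures-theories-models} essentially verbatim, the only change being that the model data now comprises a split im-opfibration $p:\cat Y\rightarrow\Fim(\cat X)$ together with the family of interpretation functions of Definition~\ref{def:model-layered-signature}, rather than a single strict monoidal functor. First I would identify the cartesian morphisms: a morphism $(F,P,Q)$ in $\OpFThMod$ (respectively $\OpFMod$) should be cartesian exactly when its im-opfibration part $(P,Q)$ is (naturally isomorphic to) the identity morphism on the underlying im-opfibration, so that such a morphism only changes the theory and reinterprets the generators along $F$.

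For the cartesian lifting, given a model $(\mathcal K, q:\cat Y'\rightarrow\Fim(\cat X'),j)$ and a morphism of opfibrational theories $F:\mathcal T\rightarrow\mathcal K$, I would keep the same im-opfibration $q$ and define the interpretation functions of the domain model by precomposing $j$ with the components of $F$: on layers $F;j$, on the generators between layers $F_{\omega,\tau};j_{F\omega,F\tau}$, on colours $F^{\omega};j^{F\omega}$, and analogously on the internal generators. Since $i\omega = jF\omega$ for every $\omega$, these composites are well-typed, so $(q,F;j)$ is a legitimate opfibrational model of $\mathcal T$ and the cartesian lifting is
$$(F,\id_{\cat Y'},\id_{\Fim(\cat X')}) : (\mathcal T, q, F;j)\rightarrow(\mathcal K, q, j).$$

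The key verification is that the interpretation induced by the precomposed model factors through $F$: for every type $T$ and every term $t$ of $\mathcal T$ one has $i(T)=j(F(T))$ and $i(t)=j(F(t))$, where $i$ is the interpretation induced by $F;j$ and $j$ the interpretation induced by the model of $\mathcal K$. This is proved by induction over the recursive sorting procedure, using that a morphism of layered signatures extends compatibly to types and terms (Definition~\ref{def:sorting-procedure}); each generating rule is matched on the two sides because $q$ is held fixed, so the reindexing functors, opcartesian liftings and fibrewise monoidal structure entering the interpretation are literally the same. Granting this factorisation, equation preservation for the lifted model in the theory-model case is immediate: if $(s,t)\in E^1_{\mathcal T}$ then $(Fs,Ft)\in E^1_{\mathcal K}$ since $F$ is a morphism of theories, whence $i(s)=j(Fs)=j(Ft)=i(t)$ because $(q,j)$ is a model of $\mathcal K$; for $\OpFMod$ there are no equations to preserve.

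Cartesianness then follows exactly as in the monoidal case: any morphism into $(\mathcal K,q,j)$ lying over $F$ factors uniquely through the lifting, the mediator being the identity-on-im-opfibration morphism determined by the factorisation of the interpretation. The horizontal functors $\OpFThMod\rightarrow\OpFMod$ and $\OpFTh\rightarrow\LSgn$ visibly commute with the vertical ones (both forget the equations), and the top functor carries the cartesian lifting above to the cartesian lifting of the image liftable pair, exhibiting the square as a morphism of fibrations. For the fibres, the objects of $\OpFMod(\mathcal L)$ are precisely the im-opfibrations-with-interpretation of $\mathcal L$ carrying no equations, i.e.~the opfibrational models of $\mathcal L$, while those of $\OpFThMod(\mathcal T)$ additionally preserve the equations of $\mathcal T$, i.e.~are models of $\mathcal T$ in the sense of Definition~\ref{def:model-opfib-thy}. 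The main obstacle is the factorisation lemma $i(t)=j(F(t))$: although conceptually routine, it must be carried out over all the opfibrational and symmetry rules of Figure~\ref{fig:opfibrational-terms} and Definition~\ref{def:symmetry-terms}, checking in particular that precomposing the interpretation of layers and generators commutes with forming the reindexing functors and their opcartesian liftings.
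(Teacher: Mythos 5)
Your proposal is correct and matches the paper's approach: the paper gives no separate proof for this proposition, deferring to the argument of Proposition~\ref{prop:monoidal-signatures-theories-models}, whose proof identifies the cartesian maps as those whose model part is (naturally isomorphic to) the identity and constructs the cartesian lifting by precomposing the interpretation functions along the theory morphism while keeping the underlying (im-op)fibration fixed --- exactly as you do. Your explicit factorisation lemma $i(t)=j(F(t))$ is the inductive verification the paper leaves implicit, so your write-up is simply a more detailed rendering of the same proof.
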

Similarly to Proposition~\ref{prop:monoidal-signatures-theories-models}, we note that $\OpFTh\rightarrow\LSgn$ is also a fibration, while the functor $\OpFThMod\rightarrow\OpFMod$ fails to be a fibration.

We conclude the discussion of opfibrational models by constructing free models generated by opfibrational theories, i.e.~by constructing the left adjoint to the vertical forgetful functors of Proposition~\ref{prop:opfibrational-signatures-theories-models}. We first define the total and the base categories generated by a theory and construct the im-opfibration between them.
\begin{definition}[Total category generated by an opfibrational theory]
Let $\mathcal T=(\mathcal L,E^0,E^1)$ be an opfibrational theory. The {\em total category} $T(\mathcal T)$ generated by $\mathcal T$ has the equivalence classes of types $\Type_{\mathcal L}$ under the type congruence generated by $E^0\cup S^0$ as objects, and the morphisms are the equivalence classes of terms $\Term^1_{\mathcal L}$ under the term congruence generated by $E^1\cup S_{\mathsf{opf}}^1$.
\end{definition}

\begin{definition}[Base category generated by a layered signature]
Let $\mathcal L=(\Omega,\mathcal F,\M_{\omega})$ be a layered signature. The {\em base category} $B(\mathcal L)$ generated by $\mathcal L$ is the free category with indexed monoids generated by $(\Omega,\mathcal F)$ viewed as a monoidal signature.
\end{definition}

Given an opfibrational theory $\mathcal T=(\mathcal L,E^0,E^1)$ with $\mathcal L=(\Omega,\mathcal F,\M_{\omega})$, define the functor $p_{\mathcal T}:T(\mathcal T)\rightarrow B(\mathcal L)$ on internal types by $A:\omega\mapsto\omega$, on all types by extension to products, and by action on the generating terms as follows:
\begin{center}\label{page:total-to-base-opf}
\scalebox{.87}{\tikzfig{total-to-base}},
\end{center}
where $x:(A:\omega\mid B:\omega)$ is any internal term.

\begin{proposition}\label{prop:syntactic-opfibrational-model}
For any opfibrational theory $\mathcal T$ with the layered signature $\mathcal L$, the functor $p_{\mathcal T}:T(\mathcal T)\rightarrow B(\mathcal L)$ is a split im-opfibration.
\end{proposition}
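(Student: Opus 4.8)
The plan is to verify directly the four requirements in Definition~\ref{def:opfib-indexed-mon}: that the base $B(\mathcal L)$ has indexed monoids, that the total category $T(\mathcal T)$ is cartesian monoidal, that $p_{\mathcal T}$ preserves and reflects cartesian products strictly, and that opcartesian liftings exist, compose to chosen liftings (splitness), and are closed under cartesian product. Before any of this I would discharge the preliminary obligation that the assignment $p_{\mathcal T}$ given on generating types and terms descends to a genuine functor on the quotients. On objects this is immediate, since every type is sent to the word of its layers and every $0$-equation in $E^0\cup S^0$ relates internal types sharing a layer; on morphisms one checks that $p_{\mathcal T}$ is constant on congruence classes by inspecting the generators of the congruence. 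The substantive case is the structural family $S^1_{\mathsf{opf}}$ (Definition~\ref{def:str-opfib-eqns}): each such equation is carried to an instance of a law that already holds in $B(\mathcal L)$, namely the monoidal, functorial, uniform-comonoid and bialgebra identities (Proposition~\ref{prop:indexed-bialgebra}), precisely because $B(\mathcal L)=\Fim(\Omega,\mathcal F)$ is by construction a category with indexed monoids.

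Next I would establish the cartesian monoidal structure of $T(\mathcal T)$. The external product (rule~\ref{term:ext-tensor}) together with the external unit (rule~\ref{term:ext-unit}) gives a symmetric monoidal structure, the symmetry coming from the symmetry terms (rule~\ref{term:swap}) and the external symmetry equations; the generators~\ref{term:diag} and~\ref{term:diag-counit}, subject to the uniform-comonoid equations included in $S^1_{\mathsf{opf}}$, equip every object with a uniform comonoid. Fox's theorem (Proposition~\ref{prop:fox}) then promotes this to a cartesian monoidal structure. Since $p_{\mathcal T}$ sends $\otimes$ to the cartesian product $\times$ of $B(\mathcal L)$ and sends the diagonal and counit generators to the diagonal and counit of $B(\mathcal L)$, strict preservation and reflection of products is immediate from the definition of $p_{\mathcal T}$.

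For the opfibration structure I would specify the chosen opcartesian lifting of an opliftable pair $(a,f)$ by recursion on the base morphism $f$, which is itself a string diagram in $\Fim(\Omega,\mathcal F)$: over a functor generator the lift is the refinement term (rule~\ref{term:ext-gen}), over the monoid, unit, comonoid and counit generators the lifts are the corresponding terms of Figure~\ref{fig:opfibrational-terms}, over a symmetry the symmetry term, and over a composite the composite of the chosen lifts. Splitness ($\id_x^\ast=\id$ and $(f;g)^\ast=f^\ast;g^\ast$) is then built into this definition, and closure of opcartesian maps under $\otimes$ follows because $p_{\mathcal T}(x\otimes y)=p_{\mathcal T}(x)\times p_{\mathcal T}(y)$ and products of chosen lifts are the chosen lift of the product, so the required factorisations multiply. (As a sanity check one may compare with Theorem~\ref{thm:monopfib-imopfib}, which is the semantic counterpart of this compositional bookkeeping.)

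The hard part will be proving that these liftings genuinely satisfy the opcartesian universal property of Definition~\ref{def:opcartesian-map}: given $G\colon a\to c$ with $p_{\mathcal T}G=f;h$, there is a \emph{unique} $H$ over $h$ with $F;H=G$, where $F$ is the chosen lift of $f$. For existence I would argue by induction on the construction of the representative term $G$, repeatedly using the sliding equations of Figure~\ref{fig:structural-twocells-functors-ext} (and naturality of the monoid and comonoid) to migrate the leading structural generator witnessing $f$ to the front of the diagram, thereby exhibiting the factorisation $G=F;H$. Uniqueness is the genuine obstacle: I expect to need a normal-form (confluent rewriting) analysis of opfibrational terms showing that, once the prefix realising $f$ has been pulled out, the remainder $H$ is determined up to the term congruence — this is where the freeness of $B(\mathcal L)$ and the comonoid (branching) structure do the work of making the mediating term rigid. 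This uniqueness argument, rather than any of the routine structural verifications, is where the real content of the proposition lies.
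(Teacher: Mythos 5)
Your proposal is correct and follows essentially the same route as the paper's proof: the chosen lifting of a base morphism is the ``inflated'' diagram built from the opfibrational terms of Figure~\ref{fig:opfibrational-terms} with trivial internal content, opcartesianness is verified by sliding internal terms through the boundary via the structural equations, and splitness, compatibility with the cartesian monoidal structure, and the indexed monoids all hold by construction. The only divergence is one of detail: the paper characterises opcartesian maps as those classes containing a term not built with rule~\ref{term:int-gen} and simply asserts uniqueness of the lift and of the mediating map, whereas you rightly flag that uniqueness clause as the step requiring a rigidity or normal-form analysis of the term congruence --- a point the paper's proof leaves implicit.
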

\begin{proof}
We observe that the opcartesian morphisms in $T(\mathcal T)$ are given by those terms that are equal to a term with no non-trivial internal structure. More precisely, a morphism is opcartesian if and only if the equivalence class contains a term whose construction does not involve the rule~\ref{term:int-gen}. We now observe that if we set the internal term $x$ to be the identity term in the definition of $p_{\mathcal T}:T(\mathcal T)\rightarrow B(\mathcal L)$, then for any type $S$ and a generating morphism $f:p_{\mathcal T}(S)\rightarrow\omega$, there is a unique term $\hat f:S\rightarrow K$ in $T(\mathcal T)$ with $M_{\mathcal T}\left(\hat f\right)=f$, which is moreover opcartesian. Extending this to all morphisms gives the split opfibration structure of $p_{\mathcal T}$, which is, moreover, consistent with the (cartesian) monoidal structure of $T(\mathcal T)$. Finally, the opfibration $p_{\mathcal T}$ has indexed monoids by construction.
\end{proof}
\begin{corollary}
Any opfibrational theory $\mathcal T=(\mathcal L,E^0,E^1)$ gives rise to the opfibrational model $\left(p_{\mathcal T}:T(\mathcal T)\rightarrow B(\mathcal L),i\right)$, whose interpretation functions $i$ are given by:
\begin{itemize}
\item $i(\omega)=\omega$,
\item $i_{\omega,\tau}(f)=f$,
\item $i^{\omega}(a)= a:\omega$,
\item $i^{\omega}_{A,B}(\sigma)= \sigma : (A:\omega\mid B:\omega)$.
\end{itemize}
\end{corollary}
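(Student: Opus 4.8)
The statement packages two assertions: that the listed interpretation functions $i$, together with the functor $p_{\mathcal T}$, form an opfibrational model of the \emph{signature} $\mathcal L$ in the sense of Definition~\ref{def:model-layered-signature}, and that this model additionally satisfies Definition~\ref{def:model-opfib-thy}, i.e.~its induced functions preserve $E^0$ and $E^1$. Since Proposition~\ref{prop:syntactic-opfibrational-model} already establishes that $p_{\mathcal T}:T(\mathcal T)\rightarrow B(\mathcal L)$ is a split im-opfibration, the plan is to (a) confirm well-typedness of $i$, (b) identify the \emph{induced} functions on types and terms with the canonical quotient maps defining $T(\mathcal T)$, and (c) deduce preservation of equations.

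First I would check well-typedness against Definition~\ref{def:model-layered-signature}. By the definition of $B(\mathcal L)$, each layer $\omega$ is an object and each $f\in\mathcal F(\omega,\tau)$ is a morphism $\omega\rightarrow\tau$, so $i(\omega)=\omega$ and $i_{\omega,\tau}(f)=f$ land in the correct object set and hom-set. The fibre $\cat Y_{i\omega}$ over $\omega$ consists of the internal types and terms of layer $\omega$ (as identified in the proof of Proposition~\ref{prop:syntactic-opfibrational-model}), so $i^{\omega}(a)=a:\omega$ is an object of the fibre and $i^{\omega}_{A,B}(\sigma)=\sigma:(A:\omega\mid B:\omega)$ is a morphism of the fibre. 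By the remark following Definition~\ref{def:model-layered-signature}, this makes each fibre a model of $\M_{\omega}$, so the signature-level data is valid.

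The crux is step (b). Proceeding by induction on the generation of types and terms, I would show that the recursively defined $i:\Type_{\mathcal L}\rightarrow\Ob(T(\mathcal T))$ and $i:\Term^1_{\mathcal L}\rightarrow T(\mathcal T)$ are exactly the maps sending each expression to its equivalence class under the congruences generated by $E^0\cup S^0$ and $E^1\cup S_{\mathsf{opf}}^1$. Each clause of the recursive definition must be matched against the explicit description of the syntactic opfibration from Proposition~\ref{prop:syntactic-opfibrational-model}: the unit $1$ and the fibre units $I_{\omega}$ are the classes of $\varepsilon:\varepsilon$ and $\varepsilon:\omega$; the fibrewise product $\otimes_{\omega}$ and the cartesian product $\times$ are realised by the~\ref{term:int-tensor} and~\ref{term:ext-tensor} rules; the reindexing functor $i_{\omega,\tau}(f)^*$ sends the class of $A:\omega$ to that of $f(A):\tau$ via the~\ref{term:int-box} rule; and the opcartesian lifting $i_{\omega,\tau}(f)_{i(A:\omega)}$ is the class of the refinement term generated by~\ref{term:ext-gen}. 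Checking these clauses gives that $i$ is the quotient map.

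Once $i$ is identified with the quotient maps, step (c) is immediate: by construction of $T(\mathcal T)$ one has $E^0\subseteq{\Zeq}$ and $E^1\subseteq{\Oeq}$, so any pair in $E^0$ (respectively $E^1$) is collapsed to a single equivalence class, i.e.~$i$ preserves the $0$- and $1$-equations. Combined with Proposition~\ref{prop:opfib-equations-preserved}, which already guarantees preservation of all structural equations, this verifies Definition~\ref{def:model-opfib-thy}, so $(p_{\mathcal T}:T(\mathcal T)\rightarrow B(\mathcal L),i)$ is indeed a model of $\mathcal T$. I expect the only real labour to be the inductive identification in step (b), and specifically the reindexing and opcartesian-lifting clauses; but each of these reduces directly to the syntactic structure already exhibited in Proposition~\ref{prop:syntactic-opfibrational-model}, so no genuinely new argument is required.
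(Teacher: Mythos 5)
Your proposal is correct and matches the paper's (implicit) argument: the corollary is stated without separate proof precisely because, as you observe, $p_{\mathcal T}$ is already a split im-opfibration by Proposition~\ref{prop:syntactic-opfibrational-model}, the induced functions on types and terms are exactly the quotient maps onto equivalence classes, and hence $E^0\subseteq{\Zeq}$ and $E^1\subseteq{\Oeq}$ give preservation of the 0- and 1-equations by construction of $T(\mathcal T)$. The only slight redundancy is your closing appeal to Proposition~\ref{prop:opfib-equations-preserved}: since the term congruence defining $T(\mathcal T)$ is generated by $E^1\cup S_{\mathsf{opf}}^1$, the structural equations hold in the syntactic model by definition and need no separate argument.
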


We have finally arrived to the main theorem about opfibrational theories.
\begin{theorem}\label{thm:opfibrational-adjoint}
The construction of opfibrational models from opfibrational theories extends to a functor $\OpFTh\rightarrow\OpFThMod$, which is moreover the left adjoint to the forgetful functor.
\end{theorem}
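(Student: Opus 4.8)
The plan is to mirror the proof of the monoidal free--forgetful adjunction (the proposition preceding Corollary~\ref{cor:monoidal-free-forgetful}), now in the fibred setting, taking the syntactic model $p_{\mathcal T}:T(\mathcal T)\rightarrow B(\mathcal L)$ of Proposition~\ref{prop:syntactic-opfibrational-model} as the value of the free functor. Since the underlying theory of the free model is the theory itself, i.e.~$UF(\mathcal T)=\mathcal T$, the unit of the adjunction will be the identity, so it suffices to establish the universal property of the unit: for every model $\mathcal N=(\mathcal K,q:\cat Y'\rightarrow\Fim(\cat X'),j)$ and every morphism of opfibrational theories $G:\mathcal T\rightarrow\mathcal K$, there is a unique morphism $\widehat G:F(\mathcal T)\rightarrow\mathcal N$ in $\OpFThMod$ with $U(\widehat G)=G$.

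First I would make $F$ into a functor. On objects it sends $\mathcal T$ to the syntactic model $(\mathcal T,p_{\mathcal T},i)$ constructed above. On a theory morphism $G:\mathcal T\rightarrow\mathcal K$, the induced functions on types and terms (Definitions~\ref{def:types-layered-signature} and~\ref{def:sorting-procedure}) descend to the quotients defining $T(\mathcal T)$ and $T(\mathcal K)$, since a morphism of theories preserves the $0$- and $1$-equations; this yields a functor on total categories, while the underlying morphism of monoidal signatures $(\Omega,\mathcal F)\rightarrow(\Psi,\mathcal G)$ induces a functor on base categories by functoriality of $\Fim$. I would then check that this pair, together with $G$, satisfies the commuting squares of Definition~\ref{def:category-models-opfibrational-theories} and constitutes a morphism of split im-opfibrations; functoriality of $F$ then follows from functoriality of the term- and type-extensions.

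The core step is the universal property. Given $G$ and $\mathcal N$ as above, the interpretation functions $j$ extend --- by exactly the recursive clauses displayed before Definition~\ref{def:model-opfib-thy} --- to a function $\Term^1_{\mathcal L}\rightarrow\cat Y'$, which by Definition~\ref{def:model-opfib-thy} and Proposition~\ref{prop:opfib-equations-preserved} respects the term congruence generated by $E^1\cup S_{\mathsf{opf}}^1$ and hence descends to a functor $P:T(\mathcal T)\rightarrow\cat Y'$; likewise the base interpretation yields $Q:B(\mathcal L)\rightarrow\Fim(\cat X')$. Setting $\widehat G=(G,P,Q)$ gives the required lift. Uniqueness holds because $P$ is forced on the generating types and terms, and $Q$ on the generating colours and inter-layer generators, by the commuting diagrams of Definition~\ref{def:category-models-opfibrational-theories} together with $U(\widehat G)=G$; and $T(\mathcal T)$ is generated by these under composition, monoidal product and the opfibrational term formers, all of which a model morphism must preserve.

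I expect the main obstacle to be verifying that the induced pair $(P,Q)$ is genuinely a morphism of \emph{split} im-opfibrations, rather than merely a commuting square of functors. Concretely, one must check that $P$ sends the chosen opcartesian liftings in $T(\mathcal T)$ --- which by Proposition~\ref{prop:syntactic-opfibrational-model} are precisely the equivalence classes of terms built without the rule~\ref{term:int-gen} --- to the chosen opcartesian liftings $i_{\omega,\tau}(f)_{-}$ of the model, and that it strictly preserves the cartesian monoidal structure and the indexed-monoid data. Here the definition of $j$ on the \ref{term:ext-gen} terms (as opcartesian liftings) and Theorem~\ref{thm:monopfib-imopfib} do the real work; the bookkeeping of matching liftings up to the term congruence, and of checking strictness in the split case, is the part that demands genuine care rather than routine recursion.
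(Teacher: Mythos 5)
Your proposal is correct and follows essentially the same route as the paper: the free functor is obtained by recursively extending theory morphisms to types and terms, and the adjunction is exhibited via the identity unit together with the initiality of the syntactic model, which the paper states as ``the term model of $\mathcal T$ is the initial object of the fibre $\OpFThMod(\mathcal T)$'' (resting on Proposition~\ref{prop:opfib-equations-preserved}) plus factorisation through the term model of $\mathcal K$. Your closing observation correctly identifies the one verification the paper's two-paragraph proof elides -- that the induced $(P,Q)$ preserves the chosen opcartesian liftings and the indexed-monoid structure, which indeed follows because the model interprets the \ref{term:ext-gen} terms as chosen liftings while Proposition~\ref{prop:syntactic-opfibrational-model} characterises the opcartesian maps of $T(\mathcal T)$ as exactly the classes of terms avoiding \ref{term:int-gen}.
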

\begin{proof}
Any morphism of opfibrational theories $F:\mathcal T\rightarrow\mathcal K$ recursively extends to types and terms, as shown in Section~\ref{sec:types-terms}, giving the functor between the total categories $T(\mathcal T)\rightarrow T(\mathcal K)$. The induced functor between the base categories $B(\mathcal T)\rightarrow B(\mathcal K)$ is given by extending the action of $F$ on the layers and the generators to a morphism of indexed monoids. This gives a morphism of im-opfibrations.

By Proposition~\ref{prop:opfib-equations-preserved}, the term model of $\mathcal T$ is the initial object of the fibre $\OpFThMod(\mathcal T)$. Moreover, any morphism out of the term model of $\mathcal T$ into any model of $\mathcal K$ factors through the term model of $\mathcal K$, thus exhibiting the desired adjunction.
\end{proof}
\begin{corollary}\label{cor:opfibrational-completeness}
For any opfibrational theory $\mathcal T$ and layered signature $\mathcal L$, we have the following equivalences:
\begin{align*}
\OpFThMod(\mathcal T) &\simeq \quot{F(\mathcal T)}{\MonOpFib_{\mathsf{sp}}}, \\
\OpFMod(\mathcal L) &\simeq \quot{F(\mathcal L)}{\MonOpFib_{\mathsf{sp}}}.
\end{align*}
\end{corollary}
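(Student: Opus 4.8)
The plan is to deduce this directly from the free--forgetful adjunction of Theorem~\ref{thm:opfibrational-adjoint}, following the same pattern that yields the monoidal statement in Corollary~\ref{cor:monoidal-free-forgetful}. The essential point is that, for a fixed theory $\mathcal T$, the fibre $\OpFThMod(\mathcal T)$ of the fibration $\OpFThMod\rightarrow\OpFTh$ (Proposition~\ref{prop:opfibrational-signatures-theories-models}) is precisely the category of models of $\mathcal T$, and that the free model $F(\mathcal T)=(p_{\mathcal T},i)$ is initial in this fibre. Indeed, as recorded in the proof of Theorem~\ref{thm:opfibrational-adjoint}, the unit of the adjunction is the identity $\eta_{\mathcal T}=\id_{\mathcal T}$, so $F$ lands in the fibre over $\mathcal T$, and Proposition~\ref{prop:opfib-equations-preserved} exhibits $F(\mathcal T)$ as the initial object there.

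Next I would set up the comparison functors. Transporting the underlying split monoidal opfibration of a model along the equivalence $\MonOpFib_{\mathsf{sp}}(\cat X)\simeq\imOpFib_{\mathsf{sp}}(\Fim(\cat X))$ of Theorem~\ref{thm:monopfib-imopfib}, I would define $\Phi\colon\OpFThMod(\mathcal T)\rightarrow\quot{F(\mathcal T)}{\MonOpFib_{\mathsf{sp}}}$ by sending a model $(p,i)$ to the unique morphism $\hat{i}\colon F(\mathcal T)\rightarrow p$ of im-opfibrations that extends the interpretation $i$ along the free model, and sending a fibre morphism $(\id_{\mathcal T},P,Q)$ to its underlying pair $(P,Q)$; the coslice triangle commutes by the uniqueness clause of freeness. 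In the other direction, $\Psi$ sends a morphism $\phi\colon F(\mathcal T)\rightarrow q$ to the model $(q,i;\phi)$, whose interpretation functions are obtained by composing those of the free model with the action of $\phi$ on objects and morphisms. The crucial verification here is that $(q,i;\phi)$ really is a model of $\mathcal T$: since $\phi$ is a morphism of im-opfibrations --- hence strict monoidal on total and base categories and compatible with opcartesian liftings and the indexed-monoid structure --- it respects the term congruence generated by $E^1\cup S_{\mathsf{opf}}^1$, so all $0$- and $1$-equations remain satisfied.

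I would then check that $\Phi$ and $\Psi$ are mutually inverse up to coherent isomorphism. One composite recovers the original morphism by the uniqueness of the extension $\hat{i}$; the other recovers the model because the interpretation of $(q,i;\phi)$ extends uniquely to $\phi$ again, using that the images of the colours and generators under $i$ generate $F(\mathcal T)$. Functoriality and compatibility with the morphism data of Definition~\ref{def:category-models-opfibrational-theories} (in particular the fibrewise conditions relating $P$ to the $\MMod$-morphisms) then follow by unwinding the definitions. Finally, the signature statement $\OpFMod(\mathcal L)\simeq\quot{F(\mathcal L)}{\MonOpFib_{\mathsf{sp}}}$ is the special case of empty equation sets $E^0=E^1=\eset$ and requires no separate argument.

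The step I expect to be the main obstacle is the bookkeeping across the equivalence $\MonOpFib_{\mathsf{sp}}\simeq\imOpFib_{\mathsf{sp}}(\Fim(-))$ of Theorem~\ref{thm:monopfib-imopfib}: the free models arise naturally as im-opfibrations over $\Fim(\cat X)$ with $\cat X$ varying, whereas the coslice is phrased in terms of monoids on split opfibrations over a varying base, so I must ensure the comparison is coherent with base change (the functor $Q$ between bases) and transports the chosen liftings strictly. Once this transport is pinned down, the remainder is the standard ``models of a theory equal the coslice under the free model'' argument, and the equation-preservation checks are routine given functoriality.
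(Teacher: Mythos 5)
Your proposal is correct and follows exactly the route the paper intends: the corollary is stated without proof as an immediate consequence of Theorem~\ref{thm:opfibrational-adjoint}, whose proof already records the two facts you build on --- that $F(\mathcal T)$ is initial in the fibre $\OpFThMod(\mathcal T)$ (via Proposition~\ref{prop:opfib-equations-preserved}) and that morphisms out of the term model factor appropriately --- so your $\Phi$/$\Psi$ construction is just the standard unwinding of that adjunction into a coslice equivalence, mirroring Corollary~\ref{cor:monoidal-free-forgetful}. The transport across Theorem~\ref{thm:monopfib-imopfib} that you flag is indeed the only bookkeeping the paper glosses over, and your treatment of it is consistent with the morphism data of Definitions~\ref{def:morphism-im-opfib} and~\ref{def:category-models-opfibrational-theories}.
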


\subsection{Fibrational models}\label{subsec:fibrational-models}

The class of models characterised by fibrational theories are {\em fibrations with indexed comonoids} $\cat Y\rightarrow\Fim(\cat X)^{op}$ obtained by appropriately dualising Definition~\ref{def:opfib-indexed-mon}. The free fibrational models are obtained by imposing the structural fibrational equations (Definition~\ref{def:str-fib-eqns}) on the fibrational terms~\ref{fig:fibrational-terms}.

\section{Deflational models}\label{sec:deflational-models}

As anticipated, the deflational models are given by split monoidal deflations (Definition~\ref{def:monoidal-deflation}).

\begin{definition}[Deflational model]\label{def:deflational-model}
A {\em deflational model} of a layered signature $(\Omega,\mathcal F,\mathcal M_{\omega})$ is a split monoidal deflation $(p,\varphi):\cat Y\rightarrow\Zg(\Fim(\cat X))$ together with the following data:
\begin{itemize}
\item a function $i:\Omega\rightarrow\Ob(\cat X)$,
\item for every $\omega,\tau\in\Omega$, a function $i_{\omega,\tau}:\mathcal F(\omega,\tau)\rightarrow\cat X(i\omega,i\tau)$,
\item for every $\omega\in\Omega$, a function $i^{\omega}:C_{\omega}\rightarrow\Ob(\cat Y_{i\omega})$,
\item for every $\omega\in\Omega$ and all $a,b\in C_{\omega}^*$, a function $i^{\omega}_{a,b}:\Sigma_{\omega}(a,b)\rightarrow\cat Y_{i\omega}\left(\left(i^{\omega}\right)^*a,\left(i^{\omega}\right)^*b\right)$,
where $(i^{\omega})^*:C_{\omega}^*\rightarrow\Ob(\cat Y_{i\omega})$ is defined by extending $i^{\omega}$ to the fibrewise monoidal structure of $p$.
\end{itemize}
\end{definition}
Note that a deflational model on $(p,\varphi):\cat Y\rightarrow\Zg(\Fim(\cat X))$ is at the same time an opfibrational model on the restriction $p^*:\cat Y^*\rightarrow\Fim(\cat X)$ {\em and} a fibrational model on the restriction $p^{\circ}:\cat Y^{\circ}\rightarrow\Fim(\cat X)^{op}$. We denote a model of $(\Omega,\mathcal F,\mathcal M_{\omega})$ by $(p,\varphi,i)$.

Since any deflational model $(p,\varphi,i)$ of $\mathcal L$ restricts to an opfibrational model, it induces the function $i:\Type_{\mathcal L}\rightarrow\Ob(\cat Y)$, as $\cat Y^*$ and $\cat Y$ have the same objects. Moreover, this function is equal to the function induced by the restricted fibrational model.

The opfibrational restriction $p^*:\cat Y^*\rightarrow\Fim(\cat X)$ is likewise used to obtain the function from the basic (Figure~\ref{fig:layered-terms}), symmetry (Definition~\ref{def:symmetry-terms}) and opfibrational (Figure~\ref{fig:opfibrational-terms}) terms into $\cat Y^*$, as described in Section~\ref{sec:opfib-models}. In order to obtain the function $i:\Term_{\mathcal L}^1\rightarrow\cat Y$, it remains to interpret the fibrational terms (Figure~\ref{fig:fibrational-terms}), as well as to extend the interpretation to composite terms. Hence, let each fibrational term be interpreted as the dual of the corresponding opfibrational term, while the composition and product are interpreted as composition and the (global) monoidal product in $\cat Y$. Explicitly, we stipulate:
\begingroup
\addtolength{\jot}{1em}
\begin{align*}
\scalebox{.9}{\tikzfig{coarsen-sheet}} &\mapsto \overline{i_{\omega,\tau}(f)_{i(A:\omega)}} \\
\scalebox{.9}{\tikzfig{cap}} &\mapsto \scalebox{.9}{\tikzfig{monoid-unit-op}} & \scalebox{.9}{\tikzfig{copants}} &\mapsto \scalebox{.9}{\tikzfig{monoid-multiplication-op}} \\
\scalebox{.9}{\tikzfig{a-cup}} &\mapsto \scalebox{.9}{\tikzfig{comonoid-counit-op}} & \scalebox{.9}{\tikzfig{pants-copy}} &\mapsto \scalebox{.9}{\tikzfig{comonoid-comultiplication-op}} \\
x;y &\mapsto i(x);i(y) & x\otimes y &\mapsto i(x)\otimes i(y). \\
\end{align*}
\endgroup

\begin{definition}\label{def:deflational-theory-model}
A {\em model of a deflational layered theory} $(\mathcal L,E^0,E^1,\eta,E^2)$ consists of: (1) a deflational model $(p,\varphi,i)$ of $\mathcal L$ (Definition~\ref{def:deflational-model}) such that the induced functions $i:\Type_{\mathcal L}\rightarrow\Ob(\cat Y)$ and $i:\Term_{\mathcal L}^1\rightarrow\cat Y_1$ preserve $E^0$ and $E^1$, and (2) for each parallel pair of terms $(x,y)\in P^1_{\mathcal L}$ with sort $(T\mid S)$ a function
$$i_{x,y}:\eta(x,y)\sqcup\eta_{\mathsf{str}}(x,y)\rightarrow\cat Y_2\left(i(x),i(y)\right)$$
from the generating 2-cells into the 2-cells $i(x)\rightarrow i(y)$, such that the structural 2-cells are mapped to the following 2-cells returned by the retrofunctor:
\begin{align*}
&\eta_x\mapsto\varphi_{i(T),i(T)}\left(\id_{i(T)},\eta_{i(x)}\right) & &\varepsilon_x\mapsto\varphi_{i(S),i(S)}\left(i(\bar x);i(x),\varepsilon_{i(x)}\right),
\end{align*}
and the function $i:\Term^2_{\mathcal L}\rightarrow\cat Y_2$ on all 2-terms (Definition~\ref{def:2terms}) recursively obtained from $\eta\sqcup\eta_{\mathsf{str}}$ preserves $E^2$.
\end{definition}

\begin{proposition}\label{prop:deflational-model-preserves-structural}
Any deflational model of a deflational layered theory preserves the structural deflational equations (Definition~\ref{def:str-defl-eqns}).
\end{proposition}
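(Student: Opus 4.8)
The plan is to decompose the structural deflational equations, following Definition~\ref{def:str-defl-eqns}, into three groups: the structural opfibrational 1-equations (Definition~\ref{def:str-opfib-eqns}), the structural fibrational 1-equations (Definition~\ref{def:str-fib-eqns}), and the additional deflational data — the structural 2-equations of Definition~\ref{def:str-2eqns} together with the zigzag (adjunction) equations and the two naturality equations of Definition~\ref{def:str-defl-eqns} that express compatibility of the structural 2-cells with sliding internal terms. Each group is then handled by exploiting the decomposition of a deflation into its opfibrational and fibrational restrictions.

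First I would invoke the observation recorded immediately after Definition~\ref{def:deflational-model}: any deflational model $(p,\varphi,i)$ restricts simultaneously to an opfibrational model on $p^*:\cat Y^*\to\Fim(\cat X)$ and to a fibrational model on $p^{\circ}:\cat Y^{\circ}\to\Fim(\cat X)^{op}$ (Corollary~\ref{cor:decomposition-biretrofunctor-opfibration}). Since the interpretation of the opfibrational terms into $\cat Y$ factors through $\cat Y^*$ and agrees there with the interpretation of the opfibrational restriction defined in Section~\ref{sec:opfib-models}, Proposition~\ref{prop:opfib-equations-preserved} immediately yields preservation of all structural opfibrational equations (which, as noted in its proof, subsume the structural 0- and 1-equations). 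Dually, the interpretation of the fibrational terms factors through $\cat Y^{\circ}$, and the horizontal-reflection of Proposition~\ref{prop:opfib-equations-preserved} discussed in Subsection~\ref{subsec:fib-theories} gives preservation of the structural fibrational equations. This disposes of both groups of 1-equations.

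The substance then lies in the 2-equations. The structural 2-equations of Definition~\ref{def:str-2eqns} (associativity, unitality and the interchange laws in Figure~\ref{fig:str-2eqns}) hold automatically because the total category $\cat Y$ of a monoidal deflation is a strict monoidal 2-category, so its 2-cells satisfy these axioms by definition. For the zigzag equations, I would use that, by Definition~\ref{def:deflational-theory-model}, the structural 2-cells are interpreted as the lifted unit and counit $\varphi_{i(T),i(T)}(\id_{i(T)},\eta_{i(x)})$ and $\varphi_{i(S),i(S)}(i(\bar x);i(x),\varepsilon_{i(x)})$. Because the retrofunctor $(p,\varphi_{a,b})$ strictly preserves horizontal and vertical composition of 2-cells (the coherence equation of Definition~\ref{def:local-retrofunctor}), each composite occurring in a zigzag equation in $\cat Y$ is the image under $\varphi$ of the corresponding composite of $\eta_{i(x)}$ and $\varepsilon_{i(x)}$ in $\Zg(\Fim(\cat X))$; these base composites equal identities by the defining zigzag equations of the zigzag 2-category (Definition~\ref{def:zigzag-category}), and $\varphi$ sends identity 2-cells to identity 2-cells, so the zigzag equations are preserved.

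For the two naturality equations $\id_y*\eta_x=\eta_x*\id_y$ and $\id_z*\varepsilon_x=\varepsilon_x*\id_z$, with $y,z$ internal or products of internal terms, I would once more use the retrofunctor coherence to rewrite each horizontal composite as a single application of $\varphi$, reducing the claim to naturality of $\eta$ and $\varepsilon$ in the base $\Zg(\Fim(\cat X))$ — part of the coherence data of the zigzag 2-category — together with the monoidal compatibility clause of Definition~\ref{def:monoidal-deflation} and Proposition~\ref{prop:zg-monoidal-structure} to cover the case where $y$ or $z$ is a product. I expect this last step to be the main obstacle: the difficulty is not conceptual but lies in the bookkeeping, since the internal terms $i(y),i(z)$ live in the fibres while the structural 2-cells arise from liftings, so one must carefully match the syntactic sliding equation with the base naturality transported through $\varphi$ and verify that the fibre morphism genuinely commutes past the lifted unit and counit in the required sense.
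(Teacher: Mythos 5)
Your proposal is correct and follows essentially the same route as the paper's own (very terse) proof: the structural opfibrational and fibrational 1-equations are handled by Proposition~\ref{prop:opfib-equations-preserved} and its dual via the restrictions $p^*$ and $p^{\circ}$, and the 2-equations follow because the local retrofunctor preserves vertical and horizontal composition, so equations between 2-cells in $\Zg(\Fim(\cat X))$ lift along $\varphi$. Your elaboration of the zigzag and sliding cases is a faithful unpacking of that second step (with one harmless imprecision: since internal terms lie above identity 1-cells in the base, the reduction lands on the unit laws for whiskering rather than on a separate naturality coherence of $\eta$ and $\varepsilon$), so no gap remains.
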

\begin{proof}
The structural opfibrational and fibrational equations are preserved by Proposition~\ref{prop:opfib-equations-preserved} (and its dual). The 2-equations are preserved since the local retrofunctor preserves both vertical and composition, and therefore equations between 2-cells.
\end{proof}

\begin{definition}\label{def:category-deflational-models}
The category of {\em models of deflational theories} $\DeflThMod$ has as objects tuples $(\mathcal T,p,\varphi,i)$ of a deflational theory $\mathcal T$ and its model $(p,\varphi):\cat Y\rightarrow\Zg(\Fim(\cat X))$ with the family of interpretation functions $i$. A morphism
$$(F,F^2,P,Q) : (\mathcal T,p,\varphi,i)\rightarrow (\mathcal K,q,\rho,j),$$
is given by a morphism of deflational theories $(F,F^2):\mathcal T\rightarrow\mathcal K$ together with a morphism of monoidal deflations $(P,Q):(p,\varphi)\rightarrow (q,\rho)$ such that the diagrams of Definition~\ref{def:category-models-opfibrational-theories} as well as the diagrams below commute, where $(p,\varphi):\cat Y\rightarrow\Zg(\Fim(\cat X))$ and $(q,\rho):\cat Y'\rightarrow\Zg(\Fim(\cat X'))$, and we write $\eta_{\mathcal T}$ and $\eta_{\mathcal K}$ for the choice of 2-cells in $\mathcal T$ and $\mathcal K$, and $(\Omega,\mathcal F,\mathcal M_{\omega})$ for the layered signature of $\mathcal T$ and $(\Psi,\mathcal G,\mathcal M_{\psi})$ for the layered signature of $\mathcal K$:
\begin{center}
\scalebox{1}{\tikzfig{category-deflational-models-diagram}},
\end{center}
where $F:\Term^1_{\mathcal L}\rightarrow\Term^1_{\mathcal K}$, $i:\Term^1\rightarrow\cat Y_1$ and $j:\Term^1_{\mathcal K}\rightarrow\cat Y'_1$ are the induced map, and $Q^2$ denotes the action of $Q$ on 2-cells.
\end{definition}

As for monoidal and opfibrational theories, we denote the full subcategory of $\DeflThMod$ of {\em deflational models of layered signatures} defined by the theories whose sets of equations and 2-cells are empty by $\DeflMod$. As in Propositions~\ref{prop:monoidal-signatures-theories-models} and~\ref{prop:opfibrational-signatures-theories-models}, below we summarise the relationship between theories, models and signatures.

\begin{proposition}\label{prop:deflational-signatures-theories-models}
The vertical forgetful functors in the diagram below are fibrations, while the horizontal forgetful functors form a morphism of fibrations:
\begin{center}
\scalebox{1}{\tikzfig{deflational-signatures-theories-models}}.
\end{center}
Moreover, the objects in the fibre $\DeflMod(\mathcal L)$ are precisely the deflational models of the layered signature $\mathcal L$, and the objects in the fibre $\DeflThMod(\mathcal T)$ are precisely the models of the deflational theory $\mathcal T$.
\end{proposition}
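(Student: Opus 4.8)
Proposition~\ref{prop:deflational-signatures-theories-models} asserts that the forgetful functors between deflational models, theories, and signatures form a commuting grid of fibrations, exactly paralleling Propositions~\ref{prop:monoidal-signatures-theories-models} and~\ref{prop:opfibrational-signatures-theories-models}. Since the statement is structurally identical to these earlier results, the plan is to mimic their proofs, with the additional bookkeeping required by the 2-cell data.

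\medskip

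The plan is to proceed as follows. First I would identify the cartesian morphisms in each fibration. For the vertical functor $\DeflThMod\rightarrow\DeflTh$, the cartesian morphisms should be those tuples $(F,F^2,P,Q)$ whose deflation part $(P,Q)$ is (isomorphic to) an identity, exactly as the cartesian maps in $\OpFThMod\rightarrow\OpFTh$ were the ones whose opfibration part is the identity. Given a deflational theory morphism $F:(\mathcal K,\dots)\rightarrow(\mathcal T,\dots)$ together with a model $(p,\varphi,i)$ of $\mathcal T$, I would construct the cartesian lifting by precomposing the interpretation functions with $F$: the model of $\mathcal K$ has interpretation $F;i$ on layers and generators, and on the chosen 2-cells one uses $F^2$ followed by $i$. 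This is the evident generalisation of the cartesian lifting $(f,\id_{\cat C})$ constructed in Proposition~\ref{prop:monoidal-signatures-theories-models}, where now the deflation $(p,\varphi)$ and its 2-categorical structure are simply carried along unchanged. The key point to verify is that $F;i$ (together with the induced action on 2-terms) still preserves the structural deflational equations and the chosen equations of $\mathcal K$; this follows because $F$ preserves the relevant equations by definition of a morphism of deflational theories, and $i$ preserves them since it is a model. Proposition~\ref{prop:deflational-model-preserves-structural} guarantees preservation of the structural equations.

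\medskip

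Next I would treat the horizontal functors $\DeflMod\rightarrow\DeflSgn$ (the signature-level version) and the compatibility with the vertical ones, showing the outer square commutes and that the top horizontal functor preserves cartesian liftings, exactly as in the final paragraph of the proof of Proposition~\ref{prop:monoidal-signatures-theories-models}. The identification of the fibres is then immediate from the definitions: by Definition~\ref{def:deflational-theory-model}, an object over a fixed signature $\mathcal L$ with empty equations and empty choice of 2-cells is precisely a deflational model of $\mathcal L$, and an object over a fixed theory $\mathcal T$ is precisely a model of $\mathcal T$; the only subtlety is that the choice-of-2-cells data $\eta$ and the structural 2-cells $\eta_{\mathsf{str}}$ must be matched correctly, which is handled by the disjoint union $\eta\sqcup\eta_{\mathsf{str}}$ appearing in the interpretation.

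\medskip

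The main obstacle I anticipate is the 2-categorical bookkeeping absent from the opfibrational case: a morphism in $\DeflThMod$ carries both a theory-level 2-cell map $F^2$ and a deflation 2-functor action $Q^2$, and the commuting diagrams of Definition~\ref{def:category-deflational-models} involve the generating 2-cells $\eta_{\mathcal T}$, the structural 2-cells, and the retrofunctor $\varphi$ simultaneously. Verifying that the cartesian lifting respects these 2-cell diagrams — in particular that the structural 2-cells $\eta_x,\varepsilon_x$ are sent to the retrofunctor-generated 2-cells $\varphi_{\cdot,\cdot}(\dots,\eta_{i(x)})$ compatibly under precomposition with $F$ — requires care, but it reduces to the naturality and functoriality of the local retrofunctor (Definition~\ref{def:local-retrofunctor}) together with the fact that $F^2$ preserves the 2-equations. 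Once this compatibility is established, stability of deflations and their morphisms under the relevant pullbacks (the deflational analogue of Proposition~\ref{prop:opfibrations-pullbacks}, applied fibrewise) closes the argument that the vertical functors are genuinely fibrations rather than merely having liftings.
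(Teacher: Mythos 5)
Your proposal is correct and takes essentially the same route as the paper, which states this proposition without proof precisely because the argument is the evident adaptation of Propositions~\ref{prop:monoidal-signatures-theories-models} and~\ref{prop:opfibrational-signatures-theories-models}: cartesian maps are those whose deflation part is (isomorphic to) the identity, cartesian liftings are built by precomposing the interpretation functions with $F$ (and the 2-cell interpretations with $F^2$), and preservation of the structural deflational equations is Proposition~\ref{prop:deflational-model-preserves-structural}. Two small remarks: the base of the bottom fibration is the category of layered signatures $\LSgn$ (there is no separate signature category for the deflational case), and your closing appeal to a deflational analogue of Proposition~\ref{prop:opfibrations-pullbacks} is unnecessary --- since the deflation component of the lifting is the identity, the universal property of the cartesian lifting is verified directly, exactly as in the monoidal case.
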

As in Propositions~\ref{prop:monoidal-signatures-theories-models} and~\ref{prop:opfibrational-signatures-theories-models}, we note that $\DeflTh\rightarrow\LSgn$ is also a fibration, while the functor $\DeflThMod\rightarrow\DeflMod$ fails to be a fibration.

We now turn to the construction of free deflational models. As one might anticipate from the preceding development, the free models are given by quotienting the types, terms and 2-terms by the structural equations, and any free deflational model restricts to a free opfibrational model and a free fibrational model.
\begin{definition}[Total category generated by a deflational theory]
Let $\mathcal T=(\mathcal L,E^0,E^1,\eta,E^2)$ be a deflational theory. The {\em deflational total category} $T_{\mathsf{defl}}(\mathcal T)$ generated by $\mathcal T$ has the equivalence classes of types $\Type_{\mathcal L}$ under the type congruence generated by $E^0\cup S^0$ as the 0-cells, the equivalence classes of terms $\Term^1_{\mathcal L}$ under the term congruence $E^1\cup S_{\mathsf{defl}}^1$ as the 1-cells, and the equivalence classes of terms $\Term^2_{\mathcal L}$ generated by $\eta\sqcup\eta_{\mathsf{str}}$ under the term congruence $E^2\cup S_{\mathsf{defl}}^2$ as the 2-cells.
\end{definition}

\begin{definition}[Zigzag base category generated by a layered signature]
Let $\mathcal L=(\Omega,\mathcal F,\M_{\omega})$ be a layered signature. The {\em zigzag base category} $\Zg(B(\mathcal L))$ generated by $\mathcal L$ is the zigzag 2-category of the free category with indexed monoids generated by $(\Omega,\mathcal F)$ viewed as a monoidal signature.
\end{definition}

Given a deflational theory $\mathcal T=(\mathcal L,E^0,E^1,\eta,E^2)$ with $\mathcal L=(\Omega,\mathcal F,\M_{\omega})$, define the functor $p_{\mathcal T}:T_{\mathsf{defl}}(\mathcal T)\rightarrow \Zg(B(\mathcal L))$ on objects by letting $A:\omega\mapsto\omega$ on internal types, and on all types by extension to monoidal products. On morphisms, $p$ is defined by mapping the basic, symmetry and opfibrational terms exactly as in the opfibrational case (see Page~\pageref{page:total-to-base-opf}), while each fibrational term is mapped to the dual of the image of the corresponding fibrational term. Explicitly, we define:
\begin{center}
\scalebox{.87}{\tikzfig{total-to-base-fib}}.
\end{center}
The definition of the retrofunctors
$$\left(p_{\mathcal T}, \varphi_{T,S}\right) : T_{\mathsf{defl}}(\mathcal T)(T\mid S)\rightarrow \Zg(B(\mathcal L))(pT,pS)$$
is more interesting. First, for every $f\in\mathcal F(\omega,\tau)$ define the functions
\begin{align*}
\varphi_{A:\omega,B:\omega}(-,\eta_f) : T_{\mathsf{defl}}(\mathcal T)_{\omega}(A:\omega\mid B:\omega)_0 &\rightarrow T_{\mathsf{defl}}(\mathcal T)(A:\omega\mid B:\omega)_1 \\
\varphi_{C:\tau,D:\tau}(-,\varepsilon_f) : T_{\mathsf{defl}}(\mathcal T)_{\bar f;f}(C:\tau\mid D:\tau)_0 &\rightarrow T_{\mathsf{defl}}(\mathcal T)(C:\tau\mid D:\tau)_1
\end{align*}
via the following assignments:
\begin{center}
\scalebox{1}{\tikzfig{local-retrofunctor-free-model}},
\end{center}
where $x:(A:\omega\mid B:\omega)$ is any term in the fibre above $\omega$, while $y:(C:\tau\mid f(A):\tau)$ and $z:(f(A):\tau\mid D)$ are any terms in the fibre above $\tau$: note that the terms in the fibres are precisely the internal terms. Further, note that the structural equations guarantee that it does not matter which side of the term $x$ the unit is applied at: the resulting 2-cells are equal. When defining the retrofunctor on the counit, note that indeed any term above $\bar f;f$ decomposes in the displayed way, by pushing any internal terms between $\coarsen_f$ and $\refine_f$ outside: the structural equations guarantee that it does not matter which side the terms are pushed to. The functions $\varphi_{T,S}$ are defined analogously on any opfibrational-fibrational term pair. Since the only non-trivial 2-cells in $\Zg(B(\mathcal L))$ are the units and the counits, it remains to extend the functions to composite terms, which is done by the following recursion:
\begin{align*}
\varphi_{(T,T'),(S,S')}\left(x\otimes y,\eta_{f\times g}\right) &\coloneq\varphi_{T,S}\left(x,\eta_f\right)\otimes\varphi_{T',S'}\left(y,\eta_g\right) \\
\varphi_{(T,T'),(S,S')}\left(x\otimes y,\varepsilon_{f\times g}\right) &\coloneq\varphi_{T,S}\left(x,\varepsilon_f\right)\otimes\varphi_{T',S'}\left(y,\varepsilon_g\right) \\
\varphi_{T,S}(x,\alpha;\alpha') &\coloneq \varphi_{T,S}(x,\alpha);\varphi_{T,S}(y,\alpha') \\
\varphi_{T,S}(x;y,\alpha*\alpha') &\coloneq \varphi_{T,U}(x,\alpha)*\varphi_{U,S}(y,\alpha').
\end{align*}
\begin{remark}
Note that it would have been sufficient to only define $\varphi_{A:\omega,A:\omega}\left(\id_A,\eta_f\right)$ on all the internal identity terms and, likewise, $\varphi_{f(A):\tau,f(A):\tau}\left(\coarsen_f;\refine_f,\varepsilon_f\right)$ on only the opfibrational-fibrational composites -- the above recursion then takes care of all the other terms. However, we find it more intuitive to define the whole function for a fixed (co)unit at once. This also makes it evident that the construction is well-defined with respect to the structural identities.
\end{remark}

\begin{proposition}
For any deflational theory $\mathcal T$ with the layered signature $\mathcal L$, the local retrofunctor
$$\left(p_{\mathcal T},\varphi\right):T_{\mathsf{defl}}(\mathcal T)\rightarrow \Zg(B(\mathcal L))$$ is a split monoidal deflation.
\end{proposition}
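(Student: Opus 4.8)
The plan is to verify the defining conditions of a split monoidal deflation (Definition~\ref{def:monoidal-deflation}) one at a time, leaning on the opfibrational analysis already in hand. First I would record that the restriction of $p_{\mathcal T}$ to the opfibrational wide subcategory is exactly the split im-opfibration $T(\mathcal T)\to B(\mathcal L)$ of Proposition~\ref{prop:syntactic-opfibrational-model}, and dually that its restriction to the fibrational wide subcategory is the split fibration with indexed comonoids furnished by the horizontal reflection of that proposition. This at once supplies the opcartesian and cartesian liftings: a term-congruence class is (op)cartesian precisely when it contains a representative built without the rule~\ref{term:int-gen}, and these generator-free representatives give the chosen split liftings of the forward and backward generators of $\Zg(B(\mathcal L))$.

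Next I would check that $(p_{\mathcal T},\varphi)$ is a local retrofunctor (Definition~\ref{def:local-retrofunctor}). The function part of each $\varphi_{T,S}$ is the action of $p_{\mathcal T}$ on morphisms by construction, so the content is that the assignment of unit and counit $2$-cells is well defined on congruence classes and satisfies the retrofunctor and coherence axioms. Well-definedness is the crux: the structural deflational equations of Definition~\ref{def:str-defl-eqns} --- in particular the sliding equations of Figures~\ref{fig:structural-twocells-functors-ext} and~\ref{fig:structural-twocells-monoidal}, together with the naturality $2$-equations $\id_y*\eta_x=\eta_x*\id_y$ --- guarantee that it is irrelevant to which side of an internal term a (co)unit is applied, so $\varphi$ is independent of the chosen representative. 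The identity and composition laws of a retrofunctor, and the horizontal coherence equation $\varphi_{a,b}(F,\alpha)*\varphi_{b,c}(F',\alpha')=\varphi_{a,c}(F;F',\alpha*\alpha')$, are then forced by the recursion defining $\varphi$ on composite terms.

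The main obstacle is the factorisation lifting property (Definition~\ref{def:decomposition-lifting}). Given a class $F:a\to c$ above a zigzag factored as $g;g'$ in $\Zg(B(\mathcal L))$, I must produce a compatible factorisation of $F$ and show it is unique up to path components. Existence is obtained by a normal-form argument: using the structural equations one rewrites any representative so that, along each segment, every opfibrational generator precedes every fibrational one and all internal terms are absorbed into the fibres, after which the base factorisation is matched segment by segment by refinement and coarsening terms. Uniqueness up to path components is the delicate point, and it is precisely the syntactic counterpart of Lemma~\ref{lma:deflation-unique-liftings}: any two such factorisations differ by sliding internal terms across the functor boundaries, which the structural equations realise as a zigzag of fibrewise mediating $2$-cells. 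I expect the bulk of the labour to lie here, in confirming that these rewrites exhaust all ambiguity.

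It then remains to verify the counit condition of Definition~\ref{def:deflation} and the monoidal data. For the former, that the codomain of $\varphi_{b,b}(\overline F;F,\varepsilon_f)$ is $\id_b$ for a chosen lifting $F$ is an instance of the structural deflational $2$-equations, since the composite $\refine_f;\coarsen_f$ equipped with the counit collapses by the zigzag equations of Figure~\ref{fig:structural-twocells-adjoints}. For the monoidal structure I would equip $T_{\mathsf{defl}}(\mathcal T)$ with the external product $\otimes$ (rule~\ref{term:ext-tensor}) on $1$- and $2$-cells and the empty type as unit; the structural $1$- and $2$-equations make this a strict monoidal $2$-category, and $p_{\mathcal T}$ strictly preserves and reflects it because $p_{\mathcal T}(T,S)=p_{\mathcal T}T\times p_{\mathcal T}S$ on objects, with $p_{\mathcal T}T=1$ iff $T$ is the empty type, while every product type and term decomposes syntactically. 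The monoidal coherence equation for $\varphi$ holds by its recursive definition on $\otimes$-terms. Finally, splitness follows since the chosen liftings are the generator-free representatives, whose composites remain generator-free by functoriality of $\refine_f$ and $\coarsen_f$ (Figure~\ref{fig:structural-twocells-functors-ext} and its dual) and whose products are products of chosen liftings, in accordance with Proposition~\ref{prop:monoidal-deflation-liftings-preserved}.
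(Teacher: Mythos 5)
Your proposal is correct, and its core coincides with the paper's argument: the factorisation lifting is obtained by inflating a base factorisation into the corresponding sequence of (op)fibrational terms with the internal terms of the original morphism slid inside (the paper states uniqueness as outright equality of such liftings, you state it up to path components, which is all Definition~\ref{def:decomposition-lifting} demands); the counit condition holds essentially by the definition of $\varphi$, since any term above $\bar f;f$ decomposes by pushing internal terms outside; and splitness comes from the chosen generator-free liftings. Where you genuinely diverge is on monoidality: the paper invokes Lemma~\ref{lma:monoidal-deflation-opfibration-indexed-monoids} to reduce the question to the opfibrational restriction having indexed monoids, which is exactly Proposition~\ref{prop:syntactic-opfibrational-model}, whereas you verify the axioms of Definition~\ref{def:monoidal-deflation} directly --- the external tensor via rule~\ref{term:ext-tensor}, strict preservation and reflection from $p_{\mathcal T}(T,S)=p_{\mathcal T}T\times p_{\mathcal T}S$, and the coherence equation $\varphi_{a\otimes c,b\otimes d}(F\otimes G,\alpha\times\beta)=\varphi_{a,b}(F,\alpha)\otimes\varphi_{c,d}(G,\beta)$ from the recursive clauses defining $\varphi$. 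Your direct route is longer but arguably cleaner on one point: the converse direction of Lemma~\ref{lma:monoidal-deflation-opfibration-indexed-monoids} is stated only for \emph{minimal} deflations, and the syntactic deflation of a theory with generating 2-cells need not be minimal, so checking the axioms by hand (as you do) sidesteps a hypothesis the paper's reduction glosses over; the lemma's reduction, in turn, buys brevity and reuse of the opfibrational work. Your re-verification of the local retrofunctor axioms and well-definedness of $\varphi$ on congruence classes duplicates material the paper establishes in the construction preceding the proposition (where it is noted that the structural equations make the side of application of the (co)unit immaterial), but it is sound and does no harm.
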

\begin{proof}
The functor $p$ is a decomposition lifting by construction: given a decomposition in the base category, its chosen lifting is given by the corresponding sequence of (op)fibrational terms (i.e.~by ``inflating'' the diagrams), interspersed with the internal terms present in the original morphism. Any two such liftings are equal, since the internal terms slide inside the diagrams. Since the codomain of $\varphi_{S,S}\left(\bar F;F,\varepsilon_f\right)$, where $F:T\rightarrow S$ is a lifting of $f$ (i.e.~some composite of opfibrational terms), is $\id_S$ by definition, we indeed have a split deflation.

To show that the deflation is monoidal, we use Lemma~\ref{lma:monoidal-deflation-opfibration-indexed-monoids}: it is equivalent to show that the opfibrational restriction has indexed monoids. We observe that the opfibrational restriction gives the syntactic opfibrational model $p_{\mathcal T}:T(\mathcal T)\rightarrow B(\mathcal L)$, which, by Proposition~\ref{prop:syntactic-opfibrational-model} has indexed monoids.
\end{proof}

\begin{corollary}
Any deflational theory $\mathcal T=(\mathcal L,E^0,E^1,\eta,E^2)$ gives rise to the deflational model $\left(p_{\mathcal T},\varphi,i\right)$ whose interpretation functions $i$ are given by:
\begin{itemize}
\item $i(\omega)=\omega$,
\item $i_{\omega,\tau}(f)=f$,
\item $i^{\omega}(a)= a:\omega$,
\item $i^{\omega}_{A,B}(\sigma)= \sigma : (A:\omega\mid B:\omega)$,
\item $i_{x,y}(\alpha)=\alpha$.
\end{itemize}
\end{corollary}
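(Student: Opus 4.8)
The plan is to verify that the interpretation data $i$ exhibits the split monoidal deflation $(p_{\mathcal T},\varphi)$ --- already shown to be a split monoidal deflation by the preceding proposition --- as a model of the deflational theory $\mathcal T$ in the sense of Definition~\ref{def:deflational-theory-model}. The guiding principle is that $T_{\mathsf{defl}}(\mathcal T)$ is built as the quotient of the types, terms and 2-terms by the structural equations together with $E^0$, $E^1$ and $E^2$, so that $i$ is nothing but the family of quotient maps; consequently most of the verification is tautological, and the only genuine content lies in matching the recursively induced interpretation functions with the term-model structure and in checking that the structural 2-cells are sent to the prescribed retrofunctor values.

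First I would check that the four interpretation functions are well-typed. Since $B(\mathcal L)=\Fim(\Omega,\mathcal F)$ has the layers $\Omega$ among its objects and the generators $\mathcal F(\omega,\tau)$ among its morphisms $\omega\rightarrow\tau$, the assignments $i(\omega)=\omega$ and $i_{\omega,\tau}(f)=f$ land in $\Ob(\cat X)$ and $\cat X(i\omega,i\tau)$ respectively. The class $a:\omega$ lies in the fibre $\cat Y_{\omega}$ because $p_{\mathcal T}(a:\omega)=\omega$, and likewise the class of the generator term $\sigma:(A:\omega\mid B:\omega)$ is internal, hence in the fibre; this is exactly the requirement that each fibre be a model of $\M_{\omega}$. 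With $i$ so defined, the induced functions $i:\Type_{\mathcal L}\rightarrow\Ob(\cat Y)$, $i:\Term^1_{\mathcal L}\rightarrow\cat Y$ and $i:\Term^2_{\mathcal L}\rightarrow\cat Y_2$ of Section~\ref{sec:deflational-models} coincide with the maps sending a type, term or 2-term to its equivalence class: this follows by induction on the generating rules, because the composition, monoidal product and 2-cell operations of $T_{\mathsf{defl}}(\mathcal T)$ are defined precisely so as to agree with the recursive clauses, and the reindexing and lifting clauses match the definition of $p_{\mathcal T}$ on the (op)fibrational terms.

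Next, preservation of equations is immediate from the quotient construction. The objects of $T_{\mathsf{defl}}(\mathcal T)$ are types modulo $E^0\cup S^0$, so any pair in $E^0$ is sent to a single object, whence $i$ preserves $E^0$; the same argument with $E^1\cup S_{\mathsf{defl}}^1$ on 1-cells and with $E^2\cup S_{\mathsf{defl}}^2$ on 2-cells gives preservation of $E^1$ and $E^2$. For the 2-cell data, $i_{x,y}(\alpha)=\alpha$ is well-defined because a 2-term of sort $(x,s)$ passes to a 2-cell between the classes of $x$ and $s$ in $\cat Y_2$, and the 2-term congruence respects parallelism (Proposition~\ref{prop:parallel-2terms-preserved}). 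The one point requiring the explicit construction is the clause on structural 2-cells: one must see that $\eta_x$ and $\varepsilon_x$ are sent to $\varphi_{i(T),i(T)}(\id_{i(T)},\eta_{i(x)})$ and $\varphi_{i(S),i(S)}(i(\bar x);i(x),\varepsilon_{i(x)})$. But this holds by the very definition of the retrofunctor $\varphi$ in the term model: the assignments defining $\varphi_{-}(-,\eta_f)$ and $\varphi_{-}(-,\varepsilon_f)$, together with their analogues for every opfibrational--fibrational pair, were chosen so that applying $\varphi$ to the relevant identity, respectively opfibrational--fibrational composite, returns exactly the structural unit $\eta_x$, respectively counit $\varepsilon_x$, of Figure~\ref{fig:structural-twocells-adjoints}.

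Finally, I would invoke Proposition~\ref{prop:deflational-model-preserves-structural} to conclude that all structural deflational equations hold automatically, so that $(p_{\mathcal T},\varphi,i)$ is indeed a deflational model of $\mathcal T$; equivalently, one may observe that its restrictions to $\cat Y^*$ and $\cat Y^{\circ}$ are the free opfibrational and fibrational models, reducing the one-dimensional bookkeeping to the opfibrational corollary and its dual. I do not expect a serious obstacle here: the statement is essentially a repackaging of the term-model construction, and the only step demanding care is confirming that the recursively induced interpretation of composite 2-terms agrees with the recursive definition of $\varphi$ on composites, which is a direct comparison of the two sets of recursion clauses.
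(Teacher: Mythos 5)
Your proposal is correct and follows exactly the route the paper intends: the corollary is stated without proof because it is immediate from the preceding proposition and the construction of $T_{\mathsf{defl}}(\mathcal T)$ as a quotient by $E^0\cup S^0$, $E^1\cup S^1_{\mathsf{defl}}$ and $E^2\cup S^2_{\mathsf{defl}}$, so that $i$ is the family of quotient maps and the structural 2-cell clause holds by the very definition of $\varphi$ in the free model --- precisely the two points you single out. Your closing appeal to Proposition~\ref{prop:deflational-model-preserves-structural} is harmless but redundant, since the structural equations are already in the defining congruences of the term model.
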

As for opfibrational theories, the syntactic models give the left adjoint to the forgetful functors from models to theories.
\begin{theorem}\label{thm:opfibrational-adjoint}
The construction of deflational models from deflational theories extends to a functor $\DeflTh\rightarrow\DeflThMod$, which is moreover the left adjoint to the forgetful functor.
\end{theorem}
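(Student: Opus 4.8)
The plan is to follow the same strategy as the opfibrational adjunction, upgraded to account for the $2$-cells and the local retrofunctor. On objects, the free functor sends a deflational theory $\mathcal T=(\mathcal L,E^0,E^1,\eta,E^2)$ to the tuple $(\mathcal T,p_{\mathcal T},\varphi,i)$, i.e.\ to the syntactic deflational model constructed just above, which is a genuine model of $\mathcal T$ by Proposition~\ref{prop:deflational-model-preserves-structural} (it preserves all the structural deflational equations of Definition~\ref{def:str-defl-eqns}). It then remains to define the action on morphisms and to verify the universal property.

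On morphisms, given $(F,F^2):\mathcal T\rightarrow\mathcal K$, I would first invoke the recursive extensions of Section~\ref{sec:types-terms}: $F$ extends to types and $1$-terms, yielding a strict monoidal functor on the underlying $1$-categories $T_{\mathsf{defl}}(\mathcal T)\rightarrow T_{\mathsf{defl}}(\mathcal K)$, while $F^2$ extends to $2$-terms, promoting this to a strict $2$-functor $P$. The base functor $Q:\Zg(B(\mathcal L))\rightarrow\Zg(B(\mathcal K))$ is obtained by freely extending the action of $F$ on layers and generators to a morphism of free categories with indexed monoids and then to the zigzag $2$-categories. The crucial point is to check that $(P,Q)$ is a morphism of monoidal deflations in the sense of Definition~\ref{def:morphism-deflations}, i.e.\ that $P(\varphi_{a,b}(G,\alpha))=\varphi_{Pa,Pb}(PG,Q\alpha)$ for every lifted $2$-cell. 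Since in the syntactic model the only nontrivial base $2$-cells are the units $\eta_f$ and counits $\varepsilon_f$, whose retrofunctor images are exactly the structural $2$-cells of Definition~\ref{def:str-twocells}, this compatibility reduces to the fact that $F^2$ sends structural $2$-cells to structural $2$-cells; the general case then follows by the recursion defining $\varphi$ on composites. Strict monoidality of $P$ and preservation of chosen liftings are inherited from the opfibrational and fibrational restrictions.

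For the adjunction proper, I would show, exactly as in the opfibrational case, that the syntactic model is the initial object of the fibre $\DeflThMod(\mathcal T)$. Given any model $(p,\varphi,i)$ of $\mathcal T$, its interpretation functions determine, by recursion on the construction of types, $1$-terms and $2$-terms, a unique morphism of deflational models out of the syntactic model; well-definedness with respect to the congruences is guaranteed by Proposition~\ref{prop:deflational-model-preserves-structural} together with preservation of $E^0$, $E^1$ and $E^2$. This makes the unit $\eta_{\mathcal T}:\mathcal T\rightarrow UF(\mathcal T)=\mathcal T$ the identity and the counit the interpretation morphism, with the triangle identities immediate. One finally checks that any morphism out of the syntactic model of $\mathcal T$ into a model of $\mathcal K$ factors through the syntactic model of $\mathcal K$, exhibiting the desired free--forgetful adjunction.

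The hard part will be the retrofunctor coherence of Definition~\ref{def:morphism-deflations}: unlike the opfibrational case, where no $2$-cells are present and matching types and $1$-terms suffices, here one must verify that $P$ commutes with the lifted $2$-cells $\varphi_{a,b}(G,\alpha)$ uniformly. The delicate issue is that $\varphi$ on the free model is defined by a case split (units applied on either side of an internal term, counits obtained by sliding internal terms out of an opfibrational--fibrational composite), so establishing $P\circ\varphi=\varphi\circ(P,Q)$ requires that $F$ commute with all the sliding moves licensed by the structural equations, which in turn rests on $F$ preserving $S_{\mathsf{defl}}^1$ and $F^2$ preserving $S_{\mathsf{defl}}^2$. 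Once this single compatibility is secured, the remainder of the argument is a routine transcription of the opfibrational proof, with Lemma~\ref{lma:monoidal-deflation-opfibration-indexed-monoids} ensuring that nothing new is needed on the monoidal side.
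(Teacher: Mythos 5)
Your proposal is correct and follows essentially the same route as the paper's own proof: the free functor sends a theory to its syntactic deflational model, morphisms are handled by the recursive extensions of Section~\ref{sec:types-terms} on the total side and by extension to indexed monoids and zigzag 2-categories on the base side, and the adjunction is exhibited via initiality of the syntactic model in its fibre (Proposition~\ref{prop:deflational-model-preserves-structural}) together with factorisation through the syntactic model of the codomain theory. The retrofunctor-coherence check you flag as the hard part is exactly the step the paper compresses into ``compatible with the morphism of theories,'' and your reduction of it to $F^2$ preserving the structural 2-cells (with the general case by the recursion defining $\varphi$) is the right way to fill that gap.
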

\begin{proof}
Any morphism of deflational theories $F:\mathcal T\rightarrow\mathcal K$ recursively extends to types, terms and 2-terms, as shown in Section~\ref{sec:types-terms}, giving the 2-functor between the total categories $T_{\mathsf{defl}}(\mathcal T)\rightarrow T_{\mathsf{defl}}(\mathcal K)$. The induced functor between the base categories $\Zg(B(\mathcal T))\rightarrow\Zg(B(\mathcal K))$ is given by extending the action of $F$ on the layers and the generators to a morphism of indexed monoids, and then further extending to a morphism of zigzag 2-categories. This gives a morphism of deflations compatible with the morphism of theories, and hence a morphism in $\DeflThMod$.

By Proposition~\ref{prop:deflational-model-preserves-structural}, the syntactic model of $\mathcal T$ is the initial object of the fibre $\OpFThMod(\mathcal T)$. Moreover, any morphism out of the term model of $\mathcal T$ into any model of $\mathcal K$ factors through the term model of $\mathcal K$, thus exhibiting the desired adjunction.
\end{proof}
\begin{corollary}\label{cor:deflational-completeness}
For any deflational theory $\mathcal T$ and layered signature $\mathcal L$, we have the following equivalences:
\begin{align*}
\DeflThMod(\mathcal T) &\simeq \quot{F(\mathcal T)}{\MonDefl_{\mathsf{sp}}}, \\
\DeflMod(\mathcal L) &\simeq \quot{F(\mathcal L)}{\MonDefl_{\mathsf{sp}}}.
\end{align*}
\end{corollary}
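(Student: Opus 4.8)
The plan is to derive both equivalences directly from the free-forgetful adjunction for deflational theories (Theorem~\ref{thm:opfibrational-adjoint}) together with the fibration structure recorded in Proposition~\ref{prop:deflational-signatures-theories-models}, mirroring exactly how Corollary~\ref{cor:opfibrational-completeness} follows from its opfibrational counterpart. The first step is to observe that the unit of the adjunction is the identity, so the free model $F(\mathcal T)$ lies in the fibre $\DeflThMod(\mathcal T)$, and that by the universal property established in Theorem~\ref{thm:opfibrational-adjoint} it is the \emph{initial} object of this fibre: for any model $(N,i)$ of $\mathcal T$ there is a unique fibre-morphism $F(\mathcal T)\to (N,i)$ lying over $\id_{\mathcal T}$, corresponding under the hom-set bijection of the adjunction to $\id_{\mathcal T}$.

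Next I would set up the two functors witnessing the equivalence. In one direction, a model $(N,i)\in\DeflThMod(\mathcal T)$ is sent to the pair $(N,\phi_i)$, where $\phi_i:\underline{F(\mathcal T)}\to N$ is the underlying split monoidal deflation morphism of the unique fibre-morphism above, and $\underline{F(\mathcal T)}$ denotes the underlying object of the free model in $\MonDefl_{\mathsf{sp}}$. In the other direction, an object $\phi:\underline{F(\mathcal T)}\to N$ of the coslice is sent to the model $(N,\phi\circ i_{\mathcal T})$, where $i_{\mathcal T}$ is the canonical interpretation of the free model (with $i(\omega)=\omega$, $i_{\omega,\tau}(f)=f$, and so on) specified just before Theorem~\ref{thm:opfibrational-adjoint}. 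Since $\phi$ strictly preserves the entire deflational structure and the total category $T_{\mathsf{defl}}(\mathcal T)$ validates the imposed equations by construction (being the quotient of terms by the structural and chosen congruences), the composite $\phi\circ i_{\mathcal T}$ is automatically a legitimate model of $\mathcal T$. A short computation shows these assignments are mutually inverse: the round trip $(N,i)\mapsto\phi_i\mapsto (N,\phi_i\circ i_{\mathcal T})$ returns $i$ because the initial fibre-morphism is by definition the one factoring $i$ through $i_{\mathcal T}$, while $\phi\mapsto (N,\phi\circ i_{\mathcal T})\mapsto\phi_{\phi\circ i_{\mathcal T}}$ returns $\phi$ by the uniqueness half of the universal property; functoriality and the action on morphisms are then routine.

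The hard part, and the genuinely new content beyond the opfibrational argument, will be tracking the 2-cell data through this correspondence. A deflational model carries interpretation functions $i_{x,y}$ on $\eta\sqcup\eta_{\mathsf{str}}$, and Definition~\ref{def:deflational-theory-model} forces the structural 2-cells (the units $\eta_x$ and counits $\varepsilon_x$) to be sent to the specific 2-cells produced by the local retrofunctor $\varphi$. I would argue that any morphism $\phi$ in $\MonDefl_{\mathsf{sp}}$ automatically preserves these retrofunctor-generated 2-cells, since this is precisely the coherence condition $H(\varphi_{a,b}(F,\alpha))=\varphi_{Ha,Hb}(HF,K\alpha)$ of Definition~\ref{def:morphism-deflations}; consequently the only residual freedom in a model morphism is the choice of images of the freely generating 2-cells in $\eta$, which is exactly the data recorded by $i_{x,y}$ restricted to $\eta$. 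This identifies the 2-cell-level data on both sides and closes the correspondence. Finally, the second equivalence $\DeflMod(\mathcal L)\simeq\quot{F(\mathcal L)}{\MonDefl_{\mathsf{sp}}}$ is the specialisation of the first to the deflational theory with empty equation sets and no generating 2-cells, exactly as $\DeflMod$ is the corresponding full subcategory identified in Proposition~\ref{prop:deflational-signatures-theories-models}.
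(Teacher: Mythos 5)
Your proposal is correct and follows exactly the route the paper intends: the corollary is an immediate unfolding of the free--forgetful adjunction of Theorem~\ref{thm:opfibrational-adjoint}, using that the unit is the identity and that the syntactic model is initial in the fibre $\DeflThMod(\mathcal T)$, so that models over $\mathcal T$ correspond to morphisms out of $F(\mathcal T)$ in $\MonDefl_{\mathsf{sp}}$ and model morphisms to commuting triangles. Your extra care in checking that the 2-cell data matches (structural 2-cells preserved automatically by the coherence condition of Definition~\ref{def:morphism-deflations}, with only the generating 2-cells in $\eta$ as free data) is exactly the point the paper leaves implicit, and it is handled correctly.
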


\subsection{Deflational theories and (op)indexed monoidal categories}\label{subsec:defl-th-opin-moncat}

Given a deflational theory, there are two inequivalent ways to obtain an opindexed monoidal category from it\footnote{Note that diagram~\ref{eq:deflational-thy-to-indexed-monoidal} does not commute!}:
\begin{equation}\label{eq:deflational-thy-to-indexed-monoidal}
\scalebox{1}{\tikzfig{deflational-thy-to-indexed-monoidal}}.
\end{equation}
First, one can take the underlying opfibrational theory, form its syntactic category, and apply the equivalence of opfibrations with indexed monoids and opindexed monoidal categories (the down-right path in diagram~\eqref{eq:deflational-thy-to-indexed-monoidal}). Second, one can form the syntactic category of the deflational theory, take its opfibrational restriction and then apply the equivalence (the right-down path). The resulting opindexed monoidal categories will not, in general, be equivalent, as the first approach ignores information (the 2-cells). We have already seen an important example of this in Proposition~\ref{prop:cobox-iff-faithful}. Here, however, we focus on the case when the two approaches {\em are} equivalent.

A deflational theory is {\em minimal} if it has no generating 2-cells. Thus, a minimal deflational theory only has the structural 2-cells. Clearly, for a minimal deflational theory the two ways of obtaining an indexed monoidal category are equivalent.

Now consider a (not necessarily minimal) deflational theory $\mathcal T$. Recall that $\mathcal T_{\varepsilon}$ stands for the theory obtained from $\mathcal T$ by adding a section to each counit 2-cell in Figure~\ref{fig:structural-twocells-adjoints}.
\begin{proposition}
For any deflational theory $\mathcal T$, both $\mathcal T$ and $\mathcal T_{\varepsilon}$ result in the same (op)indexed monoidal category via first constructing the syntactic deflational category, then restricting to an im-opfibration (the right-down path in diagram~\eqref{eq:deflational-thy-to-indexed-monoidal}). Moreover, if $\mathcal T$ is minimal, then $\mathcal T_{\varepsilon}$ gives rise to the same indexed monoidal category via both paths in diagram~\eqref{eq:deflational-thy-to-indexed-monoidal}.
\end{proposition}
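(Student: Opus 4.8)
The plan is to treat the two assertions separately, exploiting throughout that the opfibrational restriction of a deflation, and the subsequent passage to an opindexed monoidal category, are purely $1$-categorical operations. For the first assertion I would observe that $\mathcal T$ and $\mathcal T_{\varepsilon}$ differ only in their $2$-dimensional data: passing from $\mathcal T$ to $\mathcal T_{\varepsilon}$ adjoins the generating $2$-cells $\kappa$ (the sections) together with the $2$-equations $\kappa;\varepsilon=\id$, while leaving the layered signature, the $0$-equations $E^0$, the $1$-equations $E^1$, and hence the structural deflational $1$-equations $S^1_{\mathsf{defl}}$ untouched. Consequently $T_{\mathsf{defl}}(\mathcal T)$ and $T_{\mathsf{defl}}(\mathcal T_{\varepsilon})$ have identical $0$-cells and identical $1$-cells. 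The right-down path extracts the im-opfibration $\cat Y^*\to\Fim(\cat X)$ by restricting to the forward-lying morphisms (Corollary~\ref{cor:decomposition-biretrofunctor-opfibration}) and then applies the equivalence of Corollary~\ref{cor:opfibrations-indexed-monoids-opindexed-monoidal}; both steps use only the underlying $1$-category and its indexed-monoid structure, which is itself $1$-categorical. Since this data agrees for $\mathcal T$ and $\mathcal T_{\varepsilon}$, the two resulting opindexed monoidal categories coincide.

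For the second assertion I would compare the two paths for $\mathcal T_{\varepsilon}$ by constructing the evident comparison functor $\Phi:T(\mathcal T_{\mathsf{opf}})\to\cat Y^*$, from the opfibrational total category of the underlying opfibrational theory $\mathcal T_{\mathsf{opf}}$ to the forward-lying subcategory of $T_{\mathsf{defl}}(\mathcal T_{\varepsilon})$; it is the identity on $0$-cells and sends the class of an opfibrational term to its class in the deflational category, and is well-defined because $S^1_{\mathsf{opf}}\subseteq S^1_{\mathsf{defl}}$. I would then show that $\Phi$ is an isomorphism of split im-opfibrations, which forces the two extracted opindexed monoidal categories to agree. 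Fullness is a projection argument: since the $1$-congruence relates only parallel terms, every representative of a class in $\cat Y^*$ projects to the same forward morphism of $\iota^*\Fim(\cat X)$; as each fibrational generator projects to a backward arrow and the zigzag equations never turn a word containing a backward letter into a forward one, no representative contains a fibrational generator, so each is an opfibrational term lying in the image of $\Phi$. Faithfulness rests on the clean splitting of the structural deflational $1$-equations into the structural opfibrational equations and their fibrational duals, with no mixed $1$-equations — the mixing occurs only among the adjunction $2$-cells (Definition~\ref{def:str-defl-eqns}); hence any rewrite applicable to a purely opfibrational term produces again an opfibrational term, so the deflational $1$-congruence restricted to opfibrational terms coincides with the opfibrational one.

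The role of minimality and of the sections is to pin down that this comparison is compatible with the semantic equivalences, so that both paths land on genuinely the same opindexed monoidal category rather than on im-opfibrations that only happen to be isomorphic. When $\mathcal T$ is minimal, $T_{\mathsf{defl}}(\mathcal T)$ carries only the structural $2$-cells $\eta_{\mathsf{str}}$, all of which are images of the retrofunctors, so it is a minimal deflation (Definition~\ref{def:minimal-deflation}); Theorem~\ref{thm:minimal-deflation-indexed-category} together with Theorem~\ref{thm:monoidal-deflations-indexed-monoidal} then identify it (via Lemma~\ref{lma:monoidal-deflation-opfibration-indexed-monoids}) with the collage of the opindexed monoidal category produced along the down-right path from $T(\mathcal T_{\mathsf{opf}})$, and the sections $\kappa$ are exactly what is needed to reproduce the split counits of that collage, whose components are surjective by Remark~\ref{rem:counit-surjective-twocells}. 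The step I expect to be the main obstacle is precisely this faithful identification: one must verify that the freely adjoined backward boundaries and the adjunction $2$-equations do not conspire to collapse opfibrational terms that $\mathcal T_{\mathsf{opf}}$ keeps distinct, and that the restriction genuinely recovers the \emph{same} functor $\cat X\to\MonCat_{\mathsf{st}}$ rather than an equivalent one. The splitting of $S^1_{\mathsf{defl}}$ and the absence of generating $2$-cells under minimality are what make this controllable.
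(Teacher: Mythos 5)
Your handling of the first assertion is correct, and it is in fact a different and more elementary route than the paper's: the paper proves the entire proposition semantically in two sentences, observing that the sections $\kappa$ demand exactly that the interpreted counit components split, and that the components of the profunctor counits from Proposition~\ref{prop:prof-embedding-adjoints} always have sections (Remark~\ref{rem:counit-surjective-twocells}), so the extra axiom of $\mathcal T_{\varepsilon}$ is vacuous at the level of the extracted (op)indexed monoidal category. Your syntactic observation --- that $\mathcal T_{\varepsilon}$ adds only $2$-dimensional data, so the $1$-cells of $T_{\mathsf{defl}}$ and hence the restriction $p^*$ are literally unchanged --- is sound and self-contained. Note also that for the minimal case the paper offers no comparison functor: the agreement of the two paths for a minimal theory is asserted as clear in the paragraph preceding the proposition, so your second part is attempting strictly more than what the paper writes down.

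However, both concrete lemmas in your second part are false as stated. For fullness, your zigzag normal-form argument ignores that the base is $\Zg(B(\mathcal L))$ where $B(\mathcal L)$ is free \emph{with indexed monoids}, hence not a free category: backward letters compose by $\overline f;\overline g=\overline{gf}$ and can collapse to $\overline{\id}=\id$, and the interchange equation $(f\times\id);\overline{(\id\times g)}=\overline{(g\times\id)};(\id\times f)$ of Proposition~\ref{prop:zg-monoidal-structure} moves letters across directions. Concretely, let $x=(u\otimes\id_A);m : (A:\omega\mid A:\omega)$, where $u$ and $m$ are the fibrational terms \ref{term:codiag-unit} and \ref{term:codiag}; writing $d$ and $e$ for the diagonal and terminal map on $\omega$, its projection is $\overline{e\times\id};\overline d=\overline{d;(e\times\id)}=\overline{\id_\omega}=\id_\omega$ by cartesianness of $B(\mathcal L)$, so $x$ lies in $\cat Y^*$ despite containing fibrational generators, and by the dual uniform-monoid unitality of Definition~\ref{def:str-fib-eqns} it is a representative of $\id_A$ --- refuting your claim that no representative of a class in $\cat Y^*$ contains a fibrational generator. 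The same equation read right-to-left refutes your faithfulness premise: $S_{\mathsf{defl}}^1$ contains fibrational $1$-equations with a generator-free side (the unitality equations, whose one side is an identity term), so a rewrite applicable to a purely opfibrational term \emph{can} introduce fibrational generators, and congruence chains between opfibrational terms may detour through mixed terms. What your $\Phi$ actually needs is a conservativity statement --- that the deflational congruence restricted to opfibrational terms does not exceed the opfibrational one --- and the ``clean splitting'' of $S_{\mathsf{defl}}^1$ does not deliver it. A workable repair is to abandon word-level normal forms and instead use the essential uniqueness of liftings (Lemma~\ref{lma:deflation-unique-liftings}) together with the generation of morphisms as composites $F;\overline G$ from Lemma~\ref{lma:monoidal-deflation-opfibration-indexed-monoids} and the extension construction in the proof of Theorem~\ref{thm:monoidal-deflations-indexed-monoidal}, or to argue semantically via initiality of the syntactic models, as the paper implicitly does.
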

\begin{proof}
Requiring the generating counits in the total category to be surjective corresponds to the components $\varepsilon_{fa,fa}$ of the counits as defined in Proposition~\ref{prop:prof-embedding-adjoints} having sections. But such components always have a section by Remark~\ref{rem:counit-surjective-twocells}, so the requirement is already satisfied.
\end{proof}

The discussion in this subsection shows that (op)indexed monoidal categories are inadequate as a semantics for deflational theories, in the sense that they fail to distinguish between distinct theories. This is one of the reasons we chose to interpret our theories in monoidal deflations rather than indexed monoidal categories. An interesting question thus arises: what are the appropriate 2-cells between (op)indexed monoidal categories (or their embedding into $\Prof$) that would remedy this (see Remark~\ref{rem:minimal-deflation})?

\chapter{Discussion and future work}\label{ch:layered-discussion}
We have introduced layered monoidal signatures (Definition~\ref{def:layered-signature}) and theories (Definition~\ref{def:layered-theory}), as well as provided three kinds of semantics to interpret these, the most general semantics being monoidal deflations (Definition~\ref{def:monoidal-deflation}). The search for semantics for terms representing the bidirectional functor boundaries has lead us to formulate the definition of a deflation~\ref{def:deflation}, which acquires the appropriate monoidal structure by promoting it to a {\em monoidal} deflation. In addition to proving soundness and completeness with respect to three kinds of models via the means of free-forgetful adjunctions, we have started exploring what kinds of constructions are definable {\em within} deflational theories by constructing functor boxes and coboxes in Chapter~\ref{ch:functor-boxes}, and showing how they can be used to detect properties about the monoidal theories and translations at hand. Much more, however, remains to be done. In Chapter~\ref{ch:layered-examples}, we have provided ample examples of how we envision applications of the general theory, hopefully convincing the reader that the theoretical developments have been worthwhile.

We conclude with a discussion of related work in Section~\ref{sec:related-work}, as well as potential future developments. We split the latter in two parts: Section~\ref{sec:theoretical-developments} discusses theoretical questions inherent to the interplay of mathematical structures hitherto considered, while Section~\ref{sec:potential-applications} discusses potential applications of layered monoidal theories outside of the structures needed to define them.

\section{Related work}\label{sec:related-work}

This work connects to two lines of research: {\em internal string diagrams}, especially their usage to reason about profunctors, and the study of monoidal structure on fibrations.

\subsection{Internal string diagrams}

The internal string diagram notation was first introduced in the study of topological quantum field theories (TQFTs) by Bartlett, Douglas, Schommer-Pries and Vicary~\cite{modular-categories}. The notation was further exploited in the study of profunctors and traced monoidal categories by Hu~\cite{hu-thesis} and Hu \& Vicary~\cite{hu-vicary21}. The current work can be seen as providing the generators and relations for the internal string diagrams, constructed via more semantic means in previous work.

\subsection{Open diagrams for coend calculus}

Rom\' an~\cite{roman} has introduced so-called {\em open diagrams} that represent elements inside coends. The semantics of the diagrams is given by {\em pointed profunctors}, which are closely related to collages (Definition~\ref{def:collage}).

\subsection{Collages of string diagrams}

Braithwaite and Rom\' an~\cite{braithwaite-roman23} have studied {\em collages of string diagrams}, which is closely related to the notion of a collage (Definition~\ref{def:collage}) discussed here. Their notion ``flattens'' -- or creates a collage of -- a {\em bimodular category}: a category with compatible left and right monoidal actions. The terms~\ref{term:monoid} and~\ref{term:comonoid} can be thought of as the left and right monoidal actions of a monoidal category on itself. This gives a potential direction for a generalisation of this work: what kinds of diagrams account for an action of distinct monoidal categories?

\subsection{String diagrams with holes}

One way to think about the cobox (Definition~\ref{def:cobox}) is as a ``hole'' into which any morphism in the codomain can be plugged. There are many instances of holey string diagrams appearing in the literature, such as in the definition of context-free languages of string diagrams by Earnshaw and Rom\' an~\cite{earnshaw-roman24}. Particularly interesting are the holes in the study of quantum combs, as these have been characterised by using internal string diagrams by Hefford and Comfort~\cite{hefford-comfort23}.

\subsection{Ponto-Shulman string diagrams for monoidal fibrations}

Ponto and Shulman~\cite{ponto-shulman12} have introduced string diagrams to reason about the situation where every category in the image of an $\cat X$-indexed monoidal category (Definition~\ref{def:opindexed-monoidal-category}) is symmetric, and the indexing category $\cat X$ has finite products. Moeller and Vasilakopoulou~\cite{monoidal-gro} have shown that in such case the fiberwise and the ``global'' monoidal structure coincide. It would be interesting to see whether such diagrams can be obtained as a layered monoidal theory.

\subsection{Free adjoint construction}

A zigzag 2-category (Definiton~\ref{def:zigzag-category}) is obtained from a 1-category by freely adding a right adjoint to each morphism: for each morphism, we add a new morphism in the opposite direction, as well as a minimal amount of 2-cells to make the newly added morphism the right adjoint of the original one. This is somewhat brute force, as the 2-cells are simply postulated by {\em fiat}. A very similar construction is given by Dawson, Paré and Pronk~\cite{adjoining-adjoints03}, where the 2-cells, rather than being additional structure, are diagrams of a particular shape in the original 1-category called {\em fences}. The authors show that this construction indeed makes each arrow of the original 1-category (seen as embedded into the constructed 2-category) a left adjoint, and that the construction is, moreover, universal with respect to functors that send morphisms to left adjoints. It would, therefore, be interesting to investigate whether the construction given here is equivalent to the one of~~\cite{adjoining-adjoints03}.

\section{Further theoretical developments}\label{sec:theoretical-developments}

We list some theoretical developments dictated by the internal mathematical workings of the structures discussed here.

\subsection{2-cells of deflations}

The most pressing theoretical issue was already alluded to in Remark~\ref{rem:minimal-deflation} and Subsection~\ref{subsec:defl-th-opin-moncat}. Namely, it is not clear what do the 2-cells of (monoidal) deflations correspond to in terms of (op)indexed categories, or their embeddings into $\Prof$.

\subsection{Non-strict theories}

As mentioned in the Preliminaries (Subsection~\ref{subsec:opfibrations-indexed-monoidal}), it would be interesting to study graphically the non-strict case of the equivalence between (op)fibrations with monoids and (op)indexed monoidal categories (Theorem~\ref{thm:moeller-and-vasilakopoulou}). Layered monoidal theories provide enough generality to make this possible: one would have to define a theory where both 0-equations and 1-equations are empty, and account for all identifications by isomorphic 1-cells.

\subsection{Extending the definition of a deflation}

Currently, the definition of a deflation requires the codomain be a zigzag 2-category $\Zg(\cat X)$, which seems like an {\em ad hoc} restriction. The improved definition should, instead, require existence, preservation and reflection of finite products, as well as that every 1-cell in the codomain has a right adjoint, akin to various flavours of definitions hyperdoctrines. Closely related to this is investigating the case of global monoidal products, which we address next.

\subsection{Global monoidal products}

When defining string diagrams for fiberwise monoidal products, we noticed that it is convenient to freely add cartesian products to the base category, and impose the condition on the opfibration that it preserves and reflects products. This gives a hint on how the string diagrams for the ``global'' monoidal structure should look like. In this case, both total and the base category are monoidal, the (op)fibration is a strict monoidal functor, while the monoidal product of the total category preserves (op)cartesian liftings~\cite{monoidal-gro}.

\section{Potential applications}\label{sec:potential-applications}

Here we list potential applications of layered monoidal theories, both within mathematics and to other fields.

\subsection{Connections with linear logic}

The connection between linear logic and profunctors has been explored in the literature several times, for example by Dunn~\cite{dunn-thesis} and Comfort~\cite{comfort-profunctors-linear-logic}. We hope that the syntactic approach presented here will help in making precise the connection between internal string diagrams and proof nets~\cite{comfort-profunctors-linear-logic}.

\subsection{Infinite number of layers}

All the examples of layered monoidal theories we have seen in Chapter~\ref{ch:layered-examples} -- with the exception of digital circuits in Section~\ref{sec:digital-circuits} -- have had a finite number of layers. There is, of course, no reason to restrict one's attention to the finite case: the digital circuits example already gives a glimpse of why it is natural to consider an infinite number of layers in certain contexts. Other situations which the author believes are possible to capture with an infinite number of layers are:
\begin{enumerate}
\item {\em infinite monoidal products}: following the construction of infinite tensor products by Fritz and Rischel~\cite{fritz-rischel20}, it would be interesting to explore the case where the layers consist of the finite subsets of an infinite set representing the infinite monoidal product; the resulting layered theory would describe the ``finite stages of construction'' of the infinite tensor product,
\item {\em distances between string diagrams}: following the quantale enriched string diagrams introduced by Lobbia, Różowski, Sarkis and Zanasi~\cite{quantitative-monoidal-algebra}, it would be interesting to see whether the distance structure can arise from indexing rather than enrichment; the layers would be upward closed sets of a quantale ordered by inclusion, and passing from one layer to another could be seen as either losing or gaining precision,
\item more generally, it would be interesting to study order or domain theoretic structure on the indexing category, and see how it interacts with the monoidal structure.
\end{enumerate}

\subsection{Time evolution}

The examples we have seen are static, in the sense that there is no notion of change, and all layers are assumed to be simultaneously accessible. One could introduce {\em time dependence} by having an order on the layers, and some notion of dynamics, determining which future layers are accessible from the current one. This would yield a picture in which a system evolves over time, and even the rules (i.e.~equations and 2-cells) that govern the system may differ depending on the time instance.

\subsection{Microstates versus macrostates}

Even though in the introduction (Section~\ref{sec:layers-abstraction}) we excluded systems where several different microstates correspond to the same macrostate (such as statistical mechanics and thermodynamics) from the discussion, this was mostly in order to limit the scope of the current work. Generally, there is nothing in the mathematics of layered monoidal theories that prevents modelling such scenarios. Especially the ideas combining multiple layers with time evolution seem particularly fruitful in such case.

\subsection{Functional versus mechanistic explanations in biology}

Given the abundance of biological data from different abstraction levels, there is a need for formal tools in systems biology that are able to rigorously separate the levels of description, most notably the {\em mechanistic} and {\em functional} rules~\cite{rosen-life-itself,modular1999,krivine-siglog}. Differentiating the levels has at least two advantages. First, it gives a way to incorporate the vast amount of experimental data (that is growing at a fast rate) into mechanistic models, where the empirically inferred rules take the role of a higher level description~\cite{krivine-siglog}. Second, it provides a formal framework for {\em counterfactual models}, which has become an important aspect when reasoning about complex systems~\cite{counterfactual, pearl-causality, blaming2015}: it is relevant to know not only that some process $A$ caused some phenomenon $P$ to happen, but also whether $P$ could still happen if $A$ did not occur (or some other process $C$, which potentially inhibits $P$, did occur). Roughly, the lower levels can be thought of as {\em causes} for the {\em effects} at higher levels, which allows to mix-and-match the processes at lower levels to answer counterfactual questions. Layered monoidal theories may provide a suitable framework to study such questions.

\part{A synthetic approach to synthetic chemistry}\label{part:chemistry}

\chapter{Introduction}\label{ch:intro-background}
A chemical reaction can be understood as a rule which tells us what the outcome molecules (or molecule-like objects, such as ions) are when several molecules are put together. If, moreover, the reaction records the precise proportions of the molecules as well as the conditions for the reaction to take place (temperature, pressure, concentration, presence of a solvent etc.), it can be seen as a precise scientific prediction, whose truth or falsity can be tested in a lab, making the reaction reproducible. Producing complicated molecules, as required e.g.~by the pharmaceutical industry, requires, in general, a chain of several consecutive reactions in precisely specified conditions. The general task of synthetic chemistry is to come up with reproducible reaction chains to generate previously unknown molecules (with some desired properties)~\cite{organic-synthesis}. Successfully achieving a given synthetic task requires both understanding of the chemical mechanisms and the empirical knowledge of existing reactions. Both of these are increasingly supported by computational methods~\cite{strieth2020machine}: rule-based and dynamical models are used to suggest potential reaction mechanisms, while database search is used to look for existing reactions that would apply in the context of interest~\cite{compuuter-aided2022}. The key desiderata for such tools are tunability and specificity. Tunability endows a synthetic chemist with tools to specify a set of goals (e.g.~adding or removing a functional group\footnote{Part of a molecule that is known to be responsible for certain chemical function.}), while by specificity we mean maximising yield and minimising side products.

\section{Retrosynthetic analysis}

In this part of the thesis, we focus on the area of synthetic chemistry known as {\em retrosynthesis}~\cite{corey1988robert,compuuter-aided2022,warren1991designing}. While reaction prediction asks what reactions will occur and what outcomes will be obtained when some molecules are allowed to interact, retrosynthesis goes backwards: it starts with a target molecule that we wish to produce, and it proceeds in the ``reverse'' direction by asking what potential reactants would produce the target molecule. While many automated tools for retrosynthesis exist (see e.g.~\cite{route-designer,Coley2017-similarity,coley2018machine,Lin2020,Chen2021,ucak2022,dong2022deep}), there is no uniform mathematical grounding for compositional reasoning about retrosynthesis. The primary contribution of this part is to provide such a mathematically sound framework relevant for the retrosynthetic practice. Indeed, all three kinds of transformations of chemical graphs we consider (reactions, reaction schemes and disconnection rules) appear in the automated retrosynthetic tools. By formalising the methodology at this level of mathematical generality, we are able to provide insights into incorporating features that the current automated retrosynthesis tools lack: these include modelling chirality, the reaction environment, and the protection-deprotection steps (see for example~\cite{filice2010recent}), which are all highly relevant to practical applications. Our formalism, therefore, paves the way for new automated retrosynthesis tools, accounting for the aforementioned features.

Mathematically, we shall formulate retrosynthesis as a {\em layered monoidal theory} developed in Part~\ref{part:layered} of the thesis. In the context of chemistry, the layers play a threefold role: first, each layer represents a reaction environment, second, the morphisms in different layers are taking care of different synthetic tasks, and third, the rules that are available in a given layer reflect the structure that is deemed relevant for the next retrosynthetic step. The latter can be seen as a kind of coarse-graining, where by deliberately restricting to a subset of all available information, we reveal some essential features about the system. Additionally, organising chemical processes into layers allows us to include conditions that certain parts of a molecule are to be kept intact. The presentation here is self-contained up to and including Chapter~\ref{ch:disc-rules}, and in particular, does not require familiarity with the theory developed in Part~\ref{part:layered}. It is only in Chapter~\ref{ch:retrosynthesis} that a particular layered monoidal theory will be constructed.

\section{Threefold view of chemical graphs}

Throughout the presentation, we take three perspectives on chemical processes (Figure~\ref{fig:chemical-trinity}), and discuss the ways in which they are interlinked. The first perspective is that of {\em reaction schemes} (Section~\ref{sec:reaction-schemes}), which encode how bonds and charges change when two parts of chemical compounds interact. Reaction schemes give rise to {\em reactions} via {\em double pushout graph rewriting} (Section~\ref{sec:morphisms}). The category of formal reactions (Definition~\ref{def:category-reactions}) is our second perspective: reactions can be thought of as combinatorial rearrangements of molecules that preserve matter and charge. The third perspective are the local graph rewrites used in retrosynthesis -- known as {\em disconnection rules} -- which capture any possible local change in charge or connectivity. The disconnection rules can be seen as a subset of reaction schemes, and as an axiomatisation of reactions: while they provide a fine-grained view of the chemical transformations, they allow us to recursively define a functor to reactions.
\begin{figure}[h]
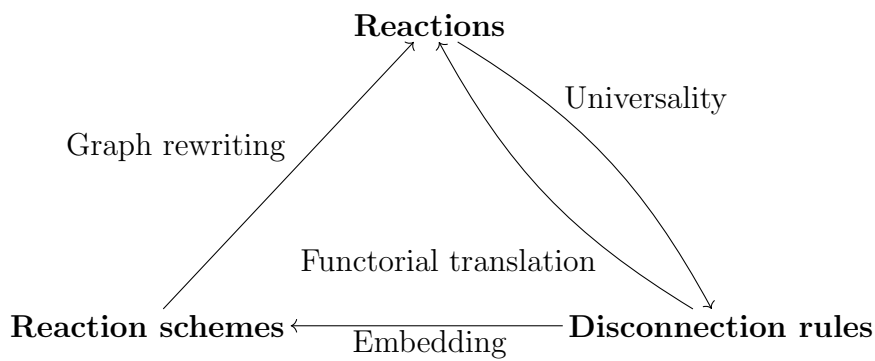

\centering
\scalebox{1}{\tikzfig{thesis-ch2/chemical-trinity}}
\caption{The three perspectives on chemical processes\label{fig:chemical-trinity}}
\end{figure}

Whereas chemical reactions have been studied formally before, a mathematical description of disconnection rules has received far less attention~\cite{rules-to-term2006,approach-ictac,disc-rules2024,tcs-cat-model-org-chem}. Our approach takes a novel perspective on the basic units of retrosynthetic analysis -- the disconnection rules -- by making them first-class citizens of reaction representation. The mathematical and conceptual justification for doing so lies in the fact that, as we show, both disconnection rules and reactions can be arranged into (monoidal) categories~\cite{approach-ictac}, such that there is a functor taking each sequence of disconnection rules to a reaction. Our main result concerning the disconnection rules states that, under a certain axiomatisation, the functor is faithful and full up to isomorphism. Such a categorical perspective provides a precise mathematical meaning to the claim that disconnection rules are sound, complete and universal with respect to the reactions. This implies that every reaction can be decomposed into a sequence of disconnection rules (universality) in an essentially unique way (completeness). More broadly, our contribution incorporates disconnection rules within the framework of applied category theory~\cite{SevenSketches}, which emphasises compositional modelling as a means to uniformly study systems across various disciplines of science.

\section{Structure of Part~\ref{part:chemistry} of the thesis}

The remaining chapters are structured as follows.
\begin{itemize}
\item[Chapter~\ref{ch:chem-background}:] We give a brief overview of the methodology of retrosynthetic analysis, as well as of the existing tools for automating it.
\item[Chapter~\ref{ch:chem-graphs}:] We define the labelled graphs that we use to represent molecular entities: {\em pre-chemical graphs} do not impose any valence conditions on the bonds, while {\em chemical graphs} require the valence of each atom to be respected. The latter form the objects of all three categories appearing in Figure~\ref{fig:chemical-trinity}.
\item[Chapter~\ref{ch:reactions-schemes}:] This chapter focusses on the categorical constructions needed for representing reactions as double pushout graph rewriting, which are used in Section~\ref{sec:reaction-schemes} to define reactions and reaction schemes.
\item[Chapter~\ref{ch:disc-rules}:] We formalise retrosynthetic disconnection rules, from which we define a functor to reactions in Section~\ref{sec:disc-to-react}, where it is moreover proved to be faithful and full up to isomorphism. This provides the soundness, completeness and universality theorem of disconnection rules with respect to the reactions.
\item[Chapter~\ref{ch:retrosynthesis}:] All the perspectives on chemical reactions considered in the previous chapters are put together in the layered theory defined here. We, moreover, describe how to reason about retrosynthesis within the resulting layered theory.
\item[Chapter~\ref{ch:conclusion}:] We conclude and sketch the prospects for future work.
\end{itemize}

\section{Summary of contributions}\label{sec:summary-contributions-chem}

The material in the second part of the thesis has been published in a series of three articles~\cite{approach-ictac,disc-rules2024,tcs-cat-model-org-chem}. We list the novel contributions per chapter.
\begin{itemize}
\item[Chapter~\ref{ch:chem-graphs}:] While the constructions of various labelled graphs is somewhat original, it is simply formalising existing usage of labelled graphs for chemistry.
\item[Chapter~\ref{ch:reactions-schemes}:] Notion of morphisms of pre-chemical graphs as functions that do not remove existing charge or matter is new, as is the proof that the resulting category is $(\mathcal M,\mathcal N)$-adhesive (Theorem~\ref{thm:adhesive}). Likewise, the definition of the category of reactions (Definition~\ref{def:category-reactions}) is new, as is the proof that all reactions arise as double pushout diagrams of reaction schemes (Theorem~\ref{thm:matching-react}, Proposition~\ref{prop:reaction-concrete}).
\item[Chapter~\ref{ch:disc-rules}:] Mathematical formalisation of retrosynthetic disconnection rules (Definition~\ref{def:disc-rules}) and their equational axiomatisation (Definition~\ref{def:disc-cat}) are new. The ensuing construction of a functorial translation from the disconnection rules to reactions, and the proofs of completeness (Theorem~\ref{thm:completeness}) and universality (Theorem~\ref{thm:universality}) are likewise new.
\item[Chapter~\ref{ch:retrosynthesis}:] The mathematical description of retrosynthesis is new.
\end{itemize}

\chapter{Background: Retrosynthesis}\label{ch:chem-background}
Retrosynthetic analysis starts with a target molecule we wish to produce but do not know how. The aim is to ``reduce'' the target molecule to known (commercially available) outcome molecules in such a way that when the outcome molecules react, the target molecule is obtained as a product. This is done by (formally) partitioning the target molecule into functional parts referred to as {\em synthons}, and finding actually existing molecules that are chemically equivalent to the synthons; these are referred to as {\em synthetic equivalents}~\cite{logic-chemical,organic-synthesis,organic-chemistry}. If no synthetic equivalents can be found that actually exist, the partitioning step can be repeated, this time using the synthetic equivalents themselves as the target molecules, and the process can continue until either known molecules are found, or a maximum number of steps is reached and the search is stopped. Note that the synthons themselves do not refer to any molecule as such, but are rather a convenient formal notation for parts of a molecule. For this reason, passing from synthons to synthetic equivalents is a non-trivial step involving intelligent guesswork and chemical know-how of how the synthons {\em would} react if they were independent chemical entities.
\begin{figure}
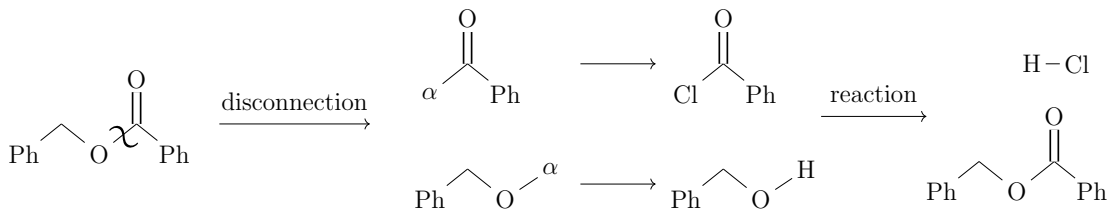

  \centering
    \scalebox{.8}{\tikzfig{thesis-ch2/retro-example}}
  \caption{A retrosynthetic sequence for benzyl benzoate \label{fig:clayden}}
\end{figure}

Clayden, Warren and Greeves~\cite{organic-chemistry} give the example in Figure~\ref{fig:clayden} when introducing retrosynthesis (a synthesis of benzyl benzoate). Here the molecule on the left-hand side (benzyl benzoate) is the target, while the resulting two parts with the symbol $\alpha$ are the synthons. We use the symbol $\alpha$ to indicate where the cut has been made, and hence which atoms have unpaired electrons. Replacing the symbols $\alpha$ in the synthons with $\mathtt{Cl}$ and $\mathtt{H}$, we obtain the candidate synthetic equivalents (benzoyl chloride and benzyl alcohol) shown one step further to the right. As the next step, we find a reaction using the synthetic equivalents as reactants and having the target amongst the products (this can be done by e.g.~looking up a reaction database). This is the simplest possible instance of a retrosynthetic sequence, in the sense that a reaction pathway is found in one iteration. In general, the interesting sequences are much longer, and, importantly, contain information under what conditions the reactions will take place.

\section{Existing tools}
Many tools for automatic retrosynthesis have been successfully developed starting from the 1960s~\cite{route-designer,Coley2017-similarity,Lin2020,Chen2021,ucak2022}. They can be divided into two classes~\cite{compuuter-aided2022}: {\em template-based}~\cite{fortunato2020data,yan2022retrocomposer} and {\em template-free}~\cite{Lin2020,somnath2020learning}. Template-based tools contain a rule database (the {\em template}), which is either manually encoded or automatically extracted. Given a molecule represented as a graph, the model checks whether any rules are applicable to it by going through the database and comparing the conditions of applying the rule to the subgraphs of the molecule~\cite{compuuter-aided2022}. Choosing the order in which the rules from the template and the subgraphs are tried are part of the model design. Template-free tools, on the other hand, are data-driven and treat the retrosynthetic rule application as a translation between graphs or their representations as strings: the suggested transforms are based on learning from known transforms, avoiding the need for a database of rules~\cite{compuuter-aided2022,ucak2022}.

While successful retrosynthesic sequences have been predicted by the computational retrosynthesis tools, they lack a rigorous mathematical foundation, which makes them difficult to compare, combine or modify. Other common drawbacks of the existing approaches include not including the reaction conditions or all cases of chirality as part of the reaction template~\cite{compuuter-aided2022,Lin2020}, as well as the fact that the existing models are unlikely to suggest protection-deprotection steps. Additionally, the template-free tools based on machine learning techniques sometimes produce output that does not correspond to molecules in any obvious way, and tend to reproduce the biases present in the literature or a data set~\cite{compuuter-aided2022}.

For successful prediction, the reaction conditions are, of course, crucial. These include such factors as temperature and pressure, the presence of a solvent (a compound which takes part in the reaction and whose supply is essentially unbounded), the presence of a reagent (a compound without which the reaction would not occur, but which is not the main focus or the target), as well as the presence of a catalyst (a compound which increases the rate at which the reaction occurs, but is itself unaltered by the reaction). The above factors can change the outcome of a reaction dramatically~\cite{matwijczuk2017effect,cook2007determination}. There have indeed been several attempts to include reaction conditions into the forward reaction prediction models~\cite{marcou2015,gao2018,walker2019,maser2021}. However, the search space in retrosynthesis is already so large that adding another search criterion should be done with caution. A major challenge for predicting reaction conditions is that they tend to be reported incompletely or inconsistently in the reaction databases~\cite{coley2017-outcomes}.

Chirality (mirror-image asymmetry) of a molecule can alter its chemical and physiological properties, and hence constitutes a major part of chemical information pertaining to a molecule. While template-based methods have been able to successfully suggest reactions involving chirality (e.g.~\cite{Coley2017-similarity}), the template-free models have difficulties handling it~\cite{Lin2020}. This further emphasises usefulness of a framework which is able to handle both template-based and template-free models.

The protection-deprotection steps are needed when more than one functional group of a molecule $A$ would react with a molecule $B$. To ensure the desired reaction, the undesired functional group of $A$ is first ``protected'' by adding a molecule $X$, which guarantees that the reaction product will react with $B$ in the required way. Finally, the protected group is ``deprotected'', producing the desired outcome of $B$ reacting with the correct functional group of $A$. So, instead of having a direct reaction $A+B\rightarrow C$ (which would not happen, or would happen imperfectly, due to a ``competing'' functional group), the reaction chain is:
\begin{align*}
&\text{(1) } A + X \rightarrow A'\text{ (protection)}, \\
&\text{(2) } A' + B \rightarrow C', \\
&\text{(3) } C' + Y \rightarrow C\text{ (deprotection).}
\end{align*}
The trouble with the protection-deprotection steps is that they temporarily make the molecule larger, which means that an algorithm whose aim is to make a molecule smaller will not suggest them.

\chapter{Chemical graphs}\label{ch:chem-graphs}
We define a {\em chemically labelled graph} (Definitions~\ref{def:chemlabgraph}) as a labelled graph whose edge labels indicate the bond type (either covalent or ionic), and whose vertex labels are either atoms or binding sites together with a charge (Definitions~\ref{def:chemlabgraph},~\ref{def:prechemgraph} and~\ref{def:chemgraph}). The intermediate notion of {\em pre-chemical graph} (Definitions~\ref{def:prechemgraph}) further imposes some conditions on the ionic bonds and the binding sites. The most restrictive notion is that of a {\em chemical graph}, which further requires that the valence of the atoms is respected, and the charge of the binding sites is either $0$ or $1$. In the latter case, the binding site may be thought of as an unpaired electron.

Chemical graphs are the objects of the reaction category (Definition~\ref{def:category-reactions}) and the disconnection category (Definition~\ref{def:disc-cat}), as well as the targets on which the reaction schemes operate (Definition~\ref{def:reaction} and Theorem~\ref{thm:matching-react}).

In order to account for chirality, we add spatial information to chemical graphs, promoting them to {\em oriented (pre-)chemical graphs} (Definition~\ref{def:orient-molpart}) in Section~\ref{sec:chirality}.

\section{Chemically labelled graphs}

Let us fix the following notions needed for Definitions~\ref{def:chemlabgraph},~\ref{def:prechemgraph} and~\ref{def:chemgraph}:
\begin{itemize}
\item a countable set of {\em vertex names} $\VS$,
\item a set of {\em vertex labels} $\Atset$ such that
\begin{enumerate}[label=(\arabic*)]
\item $\Atset$ is finite,
\item $\Atset$ contains the special symbol $\alpha$,
\item $\Atset\setminus\{\alpha\}$ has at least two elements,
\end{enumerate}
\item a {\em valence function} $\mathbf v:\Atset\rightarrow\N$ such that $\mathbf v(\alpha)=1$,
\item the set of {\em edge labels} $\Lab\coloneqq\{0,1,2,3,4,\ib\}$,
\item the functions $\cov,\ion:\Lab\rightarrow\N$ defined by
\begin{itemize}
\item if $x\in\{0,1,2,3,4\}$, then $\cov(x)\coloneqq x$ and $\ion(x)\coloneqq 0$,
\item $\cov(\ib)\coloneqq 0$ and $\ion(\ib)\coloneqq 1$.
\end{itemize}
\end{itemize}

We denote the vertex names by either positive integers or lowercase Latin letters, as appropriate to the situation. While we only make three formal assumptions about the set of vertex labels, in all the examples we shall assume that $\Atset$ contains a symbol for each main-group element of the periodic table: $\{H,C,O,P,\dots\}\sse\Atset$. For this reason, we will also refer to $\Atset$ as the {\em atom labels}. The special symbol $\alpha$ may be thought of as representing an unpaired electron or a free charge. Similarly, we shall assume in the examples that the valence of an element symbol is the number of electrons in its outer electron shell. The integers $\{0,1,2,3,4\}$ in the set of edge labels stand for covalent bonds, while $\ib$ stands for an ionic bond.
\begin{remark}
The reason for choosing such level of generality for the atom labels and their valencies is the ability to model elements which exhibit different valence depending on the context. For instance, one could have separate atom labels for nitrogen whose valence is $5$ (all outer shell electrons are shared or take part in a reaction) or $3$ (two of the outer shell electrons pair with each other).
\end{remark}

\begin{definition}[Chemically labelled graph]\label{def:chemlabgraph}
A {\em chemically labelled graph} is a triple $(V,\tau,m)$, where $V\sse\VS$ is a finite set of {\em vertices}, $\tau:V\rightarrow\Atset\times\Z$ is a {\em vertex labelling function}, and $m:V\times V\rightarrow\Lab$ is an {\em edge labelling function} satisfying $m(v,v)=0$ and $m(v,w)=m(w,v)$ for all $v,w\in V$.
\end{definition}
Thus, a chemically labelled graph is irreflexive (we interpet the edge label $0$ as no edge) and symmetric, and each of its vertices is labelled with an element of $\Atset$, together with an integer indicating the charge. Given a chemically labelled graph $A$, we write $(V_A,\tau_A,m_A)$ for its vertex set and the labelling functions. We abbreviate the vertex labelling function followed by the first projection as $\tau_A^{\At}$, and similarly we write $\tau_A^{\crg}$ for composition with the second projection.

Given a chemically labelled graph $A$ and vertex names $u,v\in\VS$ such that $u\in V_A$ but $v\notin V_A\setminus\{u\}$, we denote by $A(u\mapsto v)$ the chemically labelled graph whose vertex set is $(V_A\setminus\{u\})\cup\{v\}$, and whose vertex and edge labelling functions agree with those of $A$, treating $v$ as if it were $u$. Further, we define the following special subsets of vertices:
\begin{itemize}
\item {\em $\alpha$-vertices}, whose label is the special symbol: $\alpha(A)\coloneqq\tau_A^{-1}(\alpha,\Z)$,
\item {\em chemical vertices}, whose label is not $\alpha$: $\Chem A\coloneqq V_A\setminus\alpha(A)$,
\item {\em neutral vertices}, whose charge is zero: $\Neu A\coloneqq\tau_A^{-1}(\Atset,0)$,
\item {\em charged vertices}, which have a non-zero charge: $\Crg A\coloneqq V_A\setminus\Neu A$,
\item {\em negative vertices}, which have a negative charge:
$$\Crgn A\coloneqq\{v\in V_A : \tau_A^{\crg}(v)<0\},$$
\item {\em positive vertices}, which have a positive charge:
$$\Crgp A\coloneqq\{v\in V_A : \tau_A^{\crg}(v)>0\}.$$
\end{itemize}
The {\em net charge} of a subset $U\sse V_A$ is the integer $\Net U\coloneqq\sum_{v\in U}\tau_A^{\crg}(v)$.

When drawing chemically labelled graphs, we adopt the following conventions:
\begin{enumerate}[label=(\arabic*)]
\item the vertex label from $\Atset$ is drawn at the centre of a vertex,
\item the vertex name is drawn as a superscript on the left (so within a single graph, no left superscript appears twice),
\item a non-zero charge is drawn as a superscript on the right (hence the lack of a right superscript indicates zero charge)
\item the charge $-1$ is abbreviated as $-$, and similarly the charge $1$ as $+$,
\item $n$-ary covalent bonds are drawn as $n$ parallel lines,
\item ionic bonds are drawn as dashed lines.
\end{enumerate}

\begin{example}\label{ex:labelled-graph}
We give three examples of chemically labelled graphs: \textbf{A}, \textbf{B} (carbonate anion) and \textbf{C} (sodium cloride):
\begin{center}
\scalebox{1}{\tikzfig{thesis-ch2/example-labelling}}.
\end{center}
Below we give a table with different kinds of vertex subsets for the graphs:
\begin{center}
\begin{tabular}{ c | c | c | c }
                  & \textbf{A} & \textbf{B} & \textbf{C} \\ \hline
$\alpha$-vertices & $\{a,b\}$ & $\eset$ & $\eset$ \\ \hline
chemical vertices & $\{r,u\}$ & $V_B$ & $V_C$ \\ \hline
neutral vertices & $V_A$ & $\{u,v\}$ & $\eset$ \\ \hline
charged vertices & $\eset$ & $\{w,z\}$ & $V_C$ \\ \hline
negative vertices & $\eset$ & $\{w,z\}$ & $\{v\}$ \\ \hline
positive vertices & $\eset$ & $\eset$ & $\{u\}$ \\ \hline
net charge & $0$ & $-2$ & $0$
\end{tabular}
\end{center}
\end{example}

\begin{definition}[Neighbours]
Given a chemically labelled graph $A$ and a vertex $u\in V_A$, we define the sets of {\em neighbours} $\Nbr_A(u)$, {\em covalent neighbours} $\CN_A(u)$ and {\em ionic neighbours} $\IN_A(u)$ of $u$ as follows:
\begin{align*}
\Nbr_A(u) &\coloneqq\{v\in V_A : m_A(u,v)\neq 0\}, \\
\CN_A(u) &\coloneqq\{v\in V_A : \cov(m_A(u,v))\neq 0\}, \\
\IN_A(u) &\coloneqq\{v\in V_A : \ion(m_A(u,v))\neq 0\}.
\end{align*}
\end{definition}

\begin{definition}[Pre-chemical graph]\label{def:prechemgraph}
A {\em pre-chemical graph} $A=(V_A,\tau_A,m_A)$ is a chemically labelled graph satisfying the following additional conditions:
\begin{enumerate}
\item for all $v\in\alpha(A)$ and $w\in V_A$ we have
\begin{enumerate}
\item $\tau_A^{\crg}(v)\in\{-1,0,1\}$,\label{cgraph:alpha1}
\item $m_A(v,w)\in\{0,1,\ib\}$,\label{cgraph:alpha2}
\item $\Nbr_A(v)$ has at most one element, and if $w\in\Nbr_A(v)$, then $w\in\Chem{A}$,\label{cgraph:alpha3}
\end{enumerate}
\item for all $v\in\Chem A$ we have
\begin{enumerate}
\item either $\IN_A(v)=\{u\}$ for some $u\in\Chem A$, or $\IN_A(v)\sse\alpha(A)\cap\Crgp A$, or $\IN_A(v)\sse\alpha(A)\cap\Crgn A$,\label{cgraph:ion1}
\item if $\IN_A(v)\neq\eset$, then $v\in\Crg A$ and $\tau_A^{\crg}(v)=-\Net{\IN_A(v)}$.\label{cgraph:ion2}
\end{enumerate}
\end{enumerate}
\end{definition}
Conditions~\ref{cgraph:alpha1}-\ref{cgraph:alpha3} say that a vertex labelled by $\alpha$ is either neutral or has charge $\pm 1$, has at most one neighbour, which is necessarily chemical and to which it is connected either via an ionic or a single covalent bond. Conditions~\ref{cgraph:ion1}-\ref{cgraph:ion2} say that an edge with label $\ib$ only connects vertices which have opposite charges such that at least one is chemical and the net charges are equal in magnitude.

We say that a pre-chemical graph $A$ is {\em valence-complete} if for all $v\in V_A$, we have
$$\left|\tau_A^{\crg}(v)\right|+\sum_{u\in V_A}\cov\left(m_A(u,v)\right) = \mathbf v\tau_A^{\At}(v).$$

\begin{definition}\label{def:chemgraph}
A {\em chemical graph} is a valence-complete pre-chemical graph $A$ such that $\alpha(A)\cap\Crgp A=\eset$.
\end{definition}

A {\em synthon} is a chemical graph which is moreover connected. The collection of chemical graphs is, therefore, generated by the disjoint unions of synthons. A {\em molecular graph} is a chemical graph with no $\alpha$-vertices. A {\em molecular entity} is a connected molecular graph.
\begin{example}\label{ex:synthon-molecular}
The chemically labelled graphs in Example~\ref{ex:labelled-graph} are, in fact, chemical graphs with the standard valences of the atoms (i.e.~$\mathbf v(\text{C})=4$, $\mathbf v(\text{O})=2$ and $\mathbf v(\text{H})=\mathbf v(\text{Cl})=\mathbf v(\text{Na})=1$). Since the three graphs are connected, they are all synthons. Moreover, \textbf{B} and \textbf{C} are molecular entities.
\end{example}
\begin{example}\label{ex:prechemgraph}
We give an example of a pre-chemical graph which fails to be a chemical graph (the valence of the vertices $1$, $2$, $4$ and $6$ is not correct). This graph appears as part of the reaction scheme for glucose phosphorylation in Example~\ref{ex:reaction-scheme}.
\begin{center}
\scalebox{1}{\tikzfig{thesis-ch2/example-prechemgraph}}.
\end{center}
\end{example}

\section{Chirality}\label{sec:chirality}
An important part of chemical data is stereochemistry, that is, spatial orientation of the molecule: many molecules of interest (like pharmaceuticals) possess chiral enantiomers (i.e.~molecules that have the same atoms and connectivity, but are mirror images of each other due to spatial orientation) which have different properties. We therefore wish to incorporate (rudimentary) spatial information into (pre-)chemical graphs. The idea is to record for each triple of atoms whether they are on the same line or not, and similarly, for each quadruple of atoms whether they are in the same plane or not.

While the results of the following sections do not account for orientation and only manipulate the connectivity of a graph, we feel that spatial orientation is such an important aspect of a chemical compound that it has to be accounted for as the immediate next step in this line of work (see Subsection~\ref{sec:future-work}). We therefore include this subsection, outlining how to incorporate spatial orientation at the level of objects.

  \begin{definition}[Triangle relation]\label{def:plane-rel}
    Let $S$ be a set. We call a ternary relation $\mathcal P\sse S\times S\times S$ a {\em triangle relation} if the following hold for all elements $A$, $B$ and $C$ of $S$: (1) $ABB\notin\mathcal P$, and (2) if $\mathcal P(ABC)$ and $\mathfrak p(ABC)$ is any permutation of the three elements, then $\mathcal P(\mathfrak p(ABC))$.
  \end{definition}
  \begin{definition}[Tetrahedron relation]\label{def:tet-rel}
    Let $S$ be a set, and let $\mathcal P$ be a fixed triangle relation on $S$. We call a quaternary relation $\mathcal T\sse S\times S\times S\times S$ a {\em tetrahedron relation} if the following hold for all elements $A$, $B$, $C$ and $D$ of $S$: (1) if $\mathcal T(ABCD)$, then $\mathcal P(ABC)$, and (2) if $\mathcal T(ABCD)$ and $\mathfrak p(ABCD)$ is any even permutation of the four elements, then $\mathcal T(\mathfrak p(ABCD))$.
  \end{definition}

Unpacking the above definitions, a triangle relation is closed under the action of the symmetric group $S_3$ such that any three elements it relates are pairwise distinct, and a tetrahedron relation is closed under the action of the alternating group $A_4$ such that if it relates some four elements, then the first three are related by some (fixed) triangle relation (this, inter alia, implies that any related elements are pairwise distinct, and any $3$-element subset is related by the fixed triangle relation).

\begin{figure}
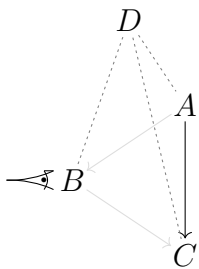

  \centering
    \tikzfig{thesis-ch2/observer}
  \caption{Observer looking at the edge $AC$ from $B$ sees $D$ on their right. \label{fig:observer}}
\end{figure}

The intuition is that the triangle and tetrahedron relations capture the spatial relations of (not) being on the same line or plane: $\mathcal P(ABC)$ stands for $A$, $B$ and $C$ not being on the same line, that is, determining a triangle; similarly, $\mathcal T(ABCD)$ stands for $A$, $B$, $C$ and $D$ not being in the same plane, that is, determining a tetrahedron. The tetrahedron is moreover oriented: $\mathcal T(ABCD)$ does not, in general, imply $\mathcal T(DABC)$. We visualise $\mathcal T(ABCD)$ in Figure~\ref{fig:observer} by placing an ``observer'' at $B$ who is looking at the edge $AC$ such that $A$ is above $C$ for them. Then $D$ is on the right for this observer. Placing an observer in the same way in a situation where $\mathcal T(DABC)$ (which is equivalent to $\mathcal T(CBAD)$), they now see $D$ on their left.
\begin{remark}
We chose not to include the orientation of the triangle, which amounts to the choice of $S_3$ over $A_3$ in the definition of a triangle relation (Definition~\ref{def:plane-rel}). This is because we assume that our molecules float freely in space (e.g.~in a solution), so that there is no two-dimensional orientation.
\end{remark}

The following example demonstrates that the triangle and tetrahedron relations indeed capture triangles and tetrahedrons in the Euclidean setting.
\begin{example}
Let us define the triangle relation $\mathcal P$ on the $3$-dimensional Euclidean space $\R^3$ by letting $\mathcal P(abc)$ if and only if $(b-a)\times (c-a)\neq 0$, where $\times$ denotes the vector product. We then have $\mathcal P(abc)$ precisely when $c$ does not lie on the line determined by $a$ and $b$, that is, when the three points uniquely determine a plane in $\R^3$.

With respect to the above triangle relation, let us define the tetrahedron relation $\mathcal T$ by letting $\mathcal T(abcd)$ if and only if $\overline{(b-a)(c-a)(d-a)}>0$, where the bar denotes the scalar triple product. We then have $\mathcal T(abcd)$ precisely when the points $a$, $b$, $c$ and $d$ are vertices of a non-degenerate (a non-zero volume) tetrahedron in such a way that $d$ lies on that side of the plane determined by $a$, $b$ and $c$ to which the vector $(b-a)\times (c-a)$ points (see Figure~\ref{fig:observer}).
\end{example}

\begin{definition}[Oriented chemically labelled graph]\label{def:orient-molpart}
An {\em oriented chemically labelled graph} is a tuple $(A,\mathcal P,\mathcal T)$ where $A$ is a chemically labelled graph, $\mathcal P$ is a triangle relation on $V_A$ and $\mathcal T$ is a tetrahedron relation on $V_A$ with respect to $\mathcal P$.
\end{definition}
An {\em oriented (pre-)chemical graph} is a chemically labelled graph, which is also a (pre-)chemical graph (Definitions~\ref{def:prechemgraph} and~\ref{def:chemgraph}).
\begin{definition}[Preservation and reflection of orientation]
Let $(M,\mathcal P_M,\mathcal T_M)$ and $(N,\mathcal P_N,\mathcal T_N)$ be oriented chemically labelled graphs, and let $f:M\rightarrow N$ be a labelled graph isomorphism. We say that $f$ {\em preserves orientation} (or is {\em orientation-preserving}) if for all vertices $A$, $B$, $C$ and $D$ of $M$ we have:
\begin{enumerate}[label=(\arabic*)]
\item $\mathcal P_M(ABC)$ if and only if $\mathcal P_N(fA,fB,fC)$, and
\item $\mathcal T_M(ABCD)$ if and only if $\mathcal T_N(fA,fB,fC,fD)$.
\end{enumerate}

Similarly, we say that $f$ {\em reflects orientation} (or is {\em orientation-reflecting}) if for all vertices $A$, $B$, $C$ and $D$ of $M$ we have:
\begin{enumerate}[label=(\arabic*)]
\item $\mathcal P_M(ABC)$ if and only if $\mathcal P_N(fA,fB,fC)$, and
\item $\mathcal T_M(ABCD)$ if and only if $\mathcal T_N(fD,fA,fB,fC)$.
\end{enumerate}
\end{definition}
Note that an orientation-reflecting isomorphism differs from an orientation-preserving one only in that preservation requires the tetrahedron relations to be the same (up to an even permutation), while reflection requires them to be the same up to an odd permutation.

\begin{definition}[Chirality]\label{def:chirality}
We say that two oriented chemically labelled graphs are {\em chiral} if there exists an orientation-reflecting isomorphism, but no orientation-preserving isomorphism between them.
\end{definition}
\begin{example}\label{ex:2butanol}
Consider 2-butanol, whose molecular structure we draw in two different ways at the top of Figure~\ref{fig:example1}\footnote{Figure~\ref{fig:example1}, as well as the figure in Example~\ref{ex:13dichloroallene}, have been created by Ella Gale, for which I'm very grateful.}. Here we adopt the usual chemical convention for drawing spatial structure: a dashed wedge indicates that the bond points ``into the page'', and a solid wedge indicates that the bond points ``out of the page''. The spatial structure is formalised by defining the tetrahedron relation for the graph on the left-hand side as the closure under the action of $A_4$ of $\mathcal T(1234)$, and for the one on the right-hand side as (the closure of) $\mathcal T(4123)$. In both cases, the triangle relation is dictated by the tetrahedron relation, so that any three-element subset of $\{1,2,3,4\}$ is in the triangle relation. Now the identity map (on labelled graphs) reflects orientation. It is furthermore not hard to see that every isomorphism restricts to the identity on the vertices labelled with superscripts, so that there is no orientation-preserving isomorphism. Thus the two molecules are chiral according to Definition~\ref{def:chirality}.

By slightly modifying the structures, we obtain two configurations of isopentane, drawn at the bottom of Figure~\ref{fig:example1}. However, in this case we can find an orientation-preserving isomorphism (namely the one that swaps vertices $2$ and $4$), so that the molecules are not chiral.
\end{example}
\begin{figure}
  \centering
    \includegraphics[scale=0.23]{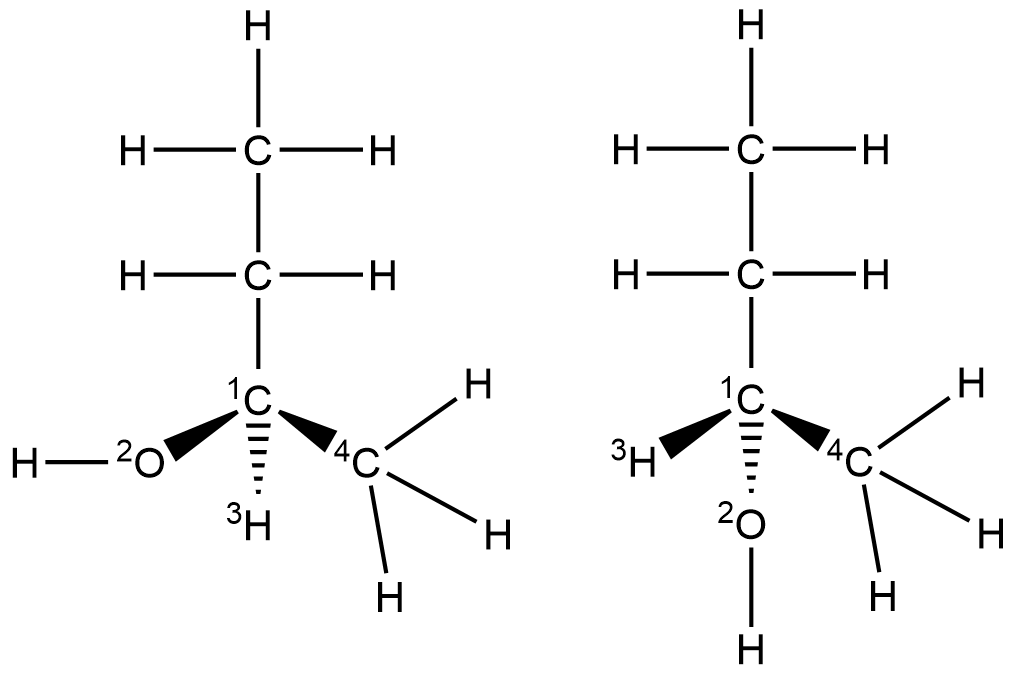}

    \vspace{5pt}

    \includegraphics[scale=0.23]{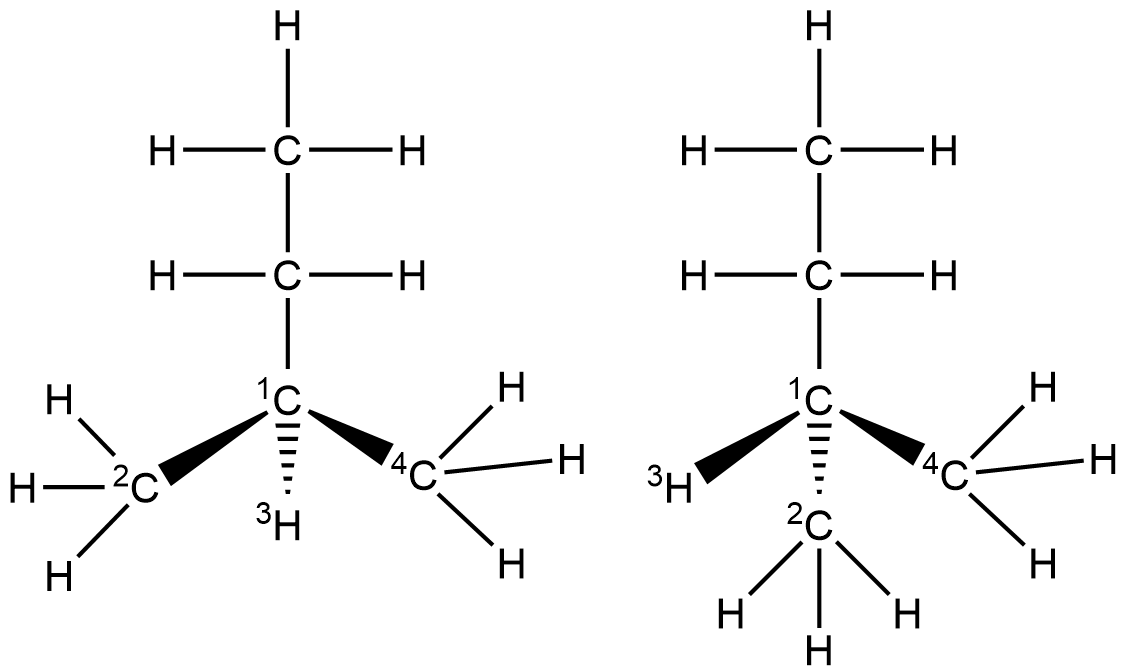}
    \caption{Top: two configurations of 2-butanol. Bottom: two configurations of isopentane.\label{fig:example1}}
\end{figure}
\begin{example}\label{ex:13dichloroallene}
Example~\ref{ex:2butanol} with 2-butanol demonstrated how to capture central chirality using Definition~\ref{def:chirality}. Now consider 1,3-dichloroallene as an example of axial chirality. We draw two versions, as before:
\begin{center}
\includegraphics[scale=0.23]{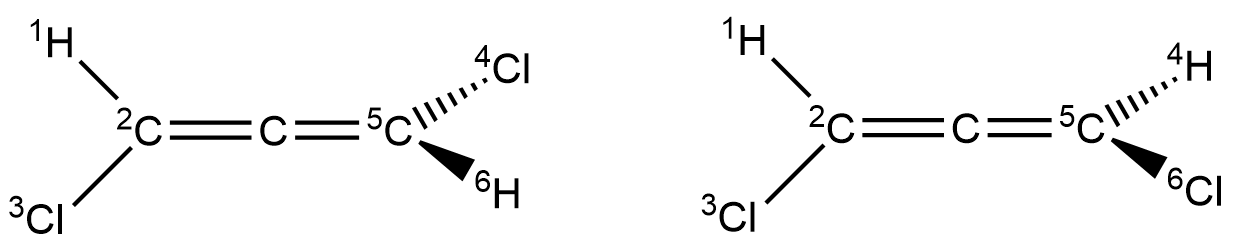}
\end{center}
The tetrahedron relation is generated by $\mathcal T(1234)$ and $\mathcal T(6123)$ for both molecules (note, however, that the vertices $4$ and $6$ have different labels). Now the isomorphism which swaps vertices $4$ and $6$ and is identity on all other vertices is orientation-reflecting, but not orientation-preserving. The only other isomorphism is $\{1\mapsto 4, 2\mapsto 5, 3\mapsto 6, 4\mapsto 3, 5\mapsto 2, 6\mapsto 1\}$, which does not preserve orientation. Thus the two molecules are indeed chiral.
\end{example}

\chapter{Reactions and reaction schemes}\label{ch:reactions-schemes}
In this chapter we equip pre-chemical graphs with morphisms, making them into a category. We identify two classes of morphisms, {\em vertex embeddings} (Definition~\ref{def:vertex-embedding}) and {\em matchings} (Definition~\ref{def:matching}), which are used for double pushout rewriting in the next section. As a mathematical prerequisite for double pushout graph rewriting, we prove in Section~\ref{sec:adhesivity} that vertex embeddings and matchings give the category of pre-chemical graphs an adhesive structure (Theorem~\ref{thm:adhesive}).

A morphism of pre-chemical graphs is a function which, intuitively, preserves any resources and structure (matter, charge and bonds) present in the domain. While the main notion of a chemically meaningful transformation is that of a reaction (Definition~\ref{def:category-reactions}), it is not a natural notion of a graph morphism, as it only captures a subclass of bijective maps. The morphisms are needed to capture reaction schemes (Definition~\ref{def:reaction-scheme}) as well as their instances (Definition~\ref{def:reaction}), which we show to be in one-to-one correspondence with reactions (with some mild assumptions on the subsets changed by the reaction) in Proposition~\ref{prop:reaction-concrete}.

\section{Morphisms of pre-chemical graphs}\label{sec:morphisms}

Recall that a pre-chemical graph (Definition~\ref{def:prechemgraph}) is a chemically labelled graph such that the charge of an $\alpha$-vertex in $\{-1,0,1\}$, the $\alpha$-vertices may only be connected to chemical vertices via a single covalent or ionic bond, and ionic bonds connect (sets of) vertices with equal but opposite net charge.

\begin{definition}[Morphism of pre-chemical graphs]\label{def:morphism}
A {\em morphism} of pre-chemical graphs $f:A\rightarrow B$ is a function $f:V_A\rightarrow V_B$ such that its restriction to the chemical vertices $f|_{\Chem A}$ is injective, the images $f(\Chem A)$ and $f(\alpha(A))$ are disjoint, and for all $v,u\in V_A$ and $w,z\in V_B$ we have
\begin{enumerate}
\item if $v\in\Chem A$, then $\tau_B^{\At}(fv)=\tau_A^{\At}(v)$,
\item if $v\in\Crgp A$, then $f(v)\in\Crgp B$; and if $v\in\Crgn A$, then $f(v)\in\Crgn B$,
\item if $\Net{f^{-1}(w)}\neq 0$, then $\Net{f^{-1}(w)}=\tau_B^{\crg}(w)$,\label{morph:crg-alpha}
\item if $m_A(v,u)=\ib$, then $m_B(fv,fu)=\ib$,
\item if $v,u\in\Chem A$ and $m_A(v,u)\neq 0$, then $m_B(fv,fu)=m_A(v,u)$,\label{morph:chem-edge}
\item if $w\in f(\alpha(A))$ and $z=f(b)$ for some $b\in\Chem A$ such that
$$k\coloneqq\sum_{a\in f^{-1}(w)} \cov(m_A(a,b))\neq 0,$$
then $k=\cov(m_B(w,z))$.\label{morph:chem-alpha}
\end{enumerate}
\end{definition}
\begin{example}\label{ex:morphism}
The idea of a morphism is that it preserves all the atoms, bonds and charges present in the domain, potentially more being present in the codomain. We give an example below, where we use superscripts to indicate the underlying function: each vertex in the domain is mapped to the vertex in the codomain with the same superscript:
\begin{center}
\scalebox{1}{\tikzfig{thesis-ch2/morphism-example}}.
\end{center}
\end{example}
Let us denote by $\PChem$ the category of pre-chemical graphs and their morphisms. Note that $\PChem$ has a symmetric monoidal structure given by the disjoint union of chemical graphs.
\begin{proposition}\label{prop:covalent-nbh-subset}
If $f:A\rightarrow B$ is a morphism of pre-chemical graphs, then for all $a,v\in V_A$, we have
\begin{enumerate}[label=(\arabic*)]
\item $\cov\left(m_A(a,v)\right)\leq\cov\left(m_B(fa,fv)\right)$,\label{morphism-edge1}
\item $\sum_{u\in V_A}\cov\left(m_A(a,u)\right)\leq\sum_{u\in V_B}\cov\left(m_B(fa,u)\right)$,\label{morphism-edge2}
\item $f\left(\CN_A(a)\right)\sse\CN_B(fa)$.\label{morphism-edge3}
\end{enumerate}
\end{proposition}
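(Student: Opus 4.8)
The plan is to prove the three claims in order, since each one builds on the morphism conditions of Definition~\ref{def:morphism}, and parts~\ref{morphism-edge2} and~\ref{morphism-edge3} will follow almost immediately from part~\ref{morphism-edge1}. The essential work is therefore in establishing~\ref{morphism-edge1}: the inequality $\cov(m_A(a,v))\leq\cov(m_B(fa,fv))$ for every pair $a,v\in V_A$.

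First I would prove~\ref{morphism-edge1} by a case analysis on the types of the vertices $a$ and $v$, since the edge conditions of Definition~\ref{def:morphism} are stated separately for chemical and $\alpha$-vertices. If $\cov(m_A(a,v))=0$ there is nothing to prove, as covalent multiplicities are non-negative, so I may assume $a$ and $v$ are joined by a covalent bond, which forces $\Nbr_A(a)\ni v$ and $\Nbr_A(v)\ni a$. By condition~\ref{cgraph:alpha3} of Definition~\ref{def:prechemgraph}, an $\alpha$-vertex can only be covalently bonded to a chemical vertex, so at most one of $a,v$ is an $\alpha$-vertex. If both are chemical, condition~\ref{morph:chem-edge} gives the stronger equality $m_B(fa,fv)=m_A(a,v)$, whence the covalent multiplicities agree and the inequality holds. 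If exactly one is an $\alpha$-vertex, say $a\in\alpha(A)$ and $v\in\Chem A$ (the symmetric case being identical by symmetry of $m$), then I would apply condition~\ref{morph:chem-alpha}: taking $w\coloneqq f(a)\in f(\alpha(A))$ and $z\coloneqq f(v)$, the quantity $k=\sum_{a'\in f^{-1}(w)}\cov(m_A(a',v))$ is at least $\cov(m_A(a,v))>0$ (since $a$ itself lies in $f^{-1}(w)$ and all summands are non-negative), so $k\neq 0$ and the condition yields $k=\cov(m_B(w,z))$; combining these gives $\cov(m_A(a,v))\leq k=\cov(m_B(fa,fv))$, as required.

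With~\ref{morphism-edge1} in hand, part~\ref{morphism-edge2} follows by summing: for fixed $a$, I would write $\sum_{u\in V_A}\cov(m_A(a,u))\leq\sum_{u\in V_A}\cov(m_B(fa,fu))$ using~\ref{morphism-edge1} termwise, and then observe that the right-hand sum is bounded by $\sum_{u'\in V_B}\cov(m_B(fa,u'))$. Here I must take care that the reindexing $u\mapsto fu$ does not overcount: since distinct $u$ with $\cov(m_A(a,u))\neq 0$ lie in $\Chem A\cup\{\text{one }\alpha\text{-neighbour}\}$ and $f$ is injective on $\Chem A$ with $f(\Chem A)$ disjoint from $f(\alpha(A))$, the contributing vertices map to distinct vertices of $V_B$, so the reindexed sum is a subsum of the full sum over $V_B$. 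Part~\ref{morphism-edge3} is then essentially a restatement of~\ref{morphism-edge1}: if $v\in\CN_A(a)$ then $\cov(m_A(a,v))\neq 0$, so by~\ref{morphism-edge1} $\cov(m_B(fa,fv))\neq 0$, i.e.\ $fv\in\CN_B(fa)$.

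The main obstacle is the injectivity bookkeeping needed in~\ref{morphism-edge2}: because a morphism may identify several $\alpha$-vertices of $A$ with a single vertex of $B$ (morphisms are only required to be injective on $\Chem A$), I need to confirm that the covalent neighbours of a fixed chemical vertex $a$ do not suffer such collapsing. This is where I would lean on condition~\ref{cgraph:alpha3}, which guarantees each $\alpha$-vertex has at most one neighbour, so among the covalent neighbours of a single vertex $a$ there is at most one $\alpha$-vertex adjacent to $a$, and the rest are chemical and hence separated by injectivity of $f|_{\Chem A}$. Articulating this cleanly, rather than the case analysis in~\ref{morphism-edge1} itself, is the delicate step.
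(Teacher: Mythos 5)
Your part~(1) reproduces the paper's argument essentially verbatim: dispose of the zero case, invoke condition~\ref{cgraph:alpha3} of Definition~\ref{def:prechemgraph} to conclude at least one endpoint is chemical, use condition~\ref{morph:chem-edge} (equality) in the chemical--chemical case, and use condition~\ref{morph:chem-alpha} with the sum over the fibre $f^{-1}f(a)$ in the mixed case. Part~(3) is the same one-line consequence of~(1) that the paper intends. The genuine gap is in part~(2), at the reindexing step. Your claim that condition~\ref{cgraph:alpha3} guarantees ``among the covalent neighbours of a single vertex $a$ there is at most one $\alpha$-vertex'' reverses the quantifier: that condition bounds the neighbours of each $\alpha$-vertex, not the number of $\alpha$-neighbours of a chemical vertex, which can be arbitrary (valence completions, Definition~\ref{def:valence-completion}, routinely produce a chemical vertex with several $\alpha$-neighbours). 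A morphism may moreover collapse several such $\alpha$-neighbours onto one vertex of $B$: take $A$ with a neutral carbon $a$ and two neutral $\alpha$-vertices $u_1,u_2$ with $m_A(a,u_i)=1$, and $B$ with two chemical vertices $a',w$ joined by a double bond; the map $a\mapsto a'$, $u_1,u_2\mapsto w$ satisfies every clause of Definition~\ref{def:morphism} (condition~\ref{morph:chem-alpha} holds with $k=2$), yet $\sum_{u\in V_A}\cov\left(m_B(fa,fu)\right)=4>2=\sum_{u'\in V_B}\cov\left(m_B(fa,u')\right)$. So your intermediate claim that the reindexed sum is a subsum of the sum over $V_B$ is false, and the chain ``apply~(1) termwise, then reindex injectively'' does not close.

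The statement~(2) is of course still true; the repair is to group the left-hand sum by fibres instead of bounding it termwise. For $u\in\CN_A(a)\cap\Chem A$, condition~\ref{morph:chem-edge} gives termwise \emph{equality} and injectivity of $f|_{\Chem A}$ keeps the images distinct; for the $\alpha$-part, for each $w\in f\left(\CN_A(a)\cap\alpha(A)\right)$ condition~\ref{morph:chem-alpha} gives the equality $\sum_{u\in f^{-1}(w)}\cov\left(m_A(a,u)\right)=\cov\left(m_B(fa,w)\right)$ (the hypothesis $k\neq 0$ holds because some $u$ in the fibre is a covalent neighbour of $a$, and $f^{-1}(w)\sse\alpha(A)$ since $f(\Chem A)$ and $f(\alpha(A))$ are disjoint), so an entire fibre's contribution on the left is matched by a single term on the right. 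The resulting terms on the right are pairwise distinct by the same disjointness, and~(2) follows; when $a$ itself is an $\alpha$-vertex the sum has at most one non-zero term and~(1) suffices. This fibrewise computation is precisely the one the paper carries out later in the proof of Lemma~\ref{lma:match-closed}. The paper's own proof dismisses~(2) and~(3) as ``easy consequences'' of~(1); (3) genuinely is, but your attempt shows that~(2) requires the fibrewise equality of condition~\ref{morph:chem-alpha} and not merely the inequality of~(1).
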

\begin{proof}
If $\cov\left(m_A(a,v)\right)=0$, then~\ref{morphism-edge1} is immediate. Hence let $v\in\CN_A(a)$. By condition~\ref{cgraph:alpha3} of being a pre-chemical graph (Definition~\ref{def:prechemgraph}), at least one of $a$ and $v$ is chemical. If both are chemical, then condition~\ref{morph:chem-edge} of Definition~\ref{def:morphism} yields $m_B(fa,fv)=m_A(a,v)$, so that~\ref{morphism-edge1} holds. Hence suppose that exactly one is an $\alpha$-vertex: without loss of generality, suppose that $a\in\alpha(A)$ and $v\in\Chem A$. Condition~\ref{morph:chem-alpha} Definition~\ref{def:morphism} then yields
$$m_B(fa,fv) = \sum_{u\in f^{-1}f(a)} m_A(u,v)\geq m_A(a,v),$$
as is required for~\ref{morphism-edge1}. Items~\ref{morphism-edge2} and~\ref{morphism-edge3} are easy consequences of~\eqref{morphism-edge1}.
\end{proof}

\begin{definition}[Vertex embedding]\label{def:vertex-embedding}
We say that a morphism $f:A\rightarrow B$ in $\PChem$ is a {\em vertex embedding} if it is injective, bijective on chemical vertices, and for all $u\in V_A$ we have $\tau_B^{\At}(fu)=\tau_A^{\At}(u)$.
\end{definition}
\begin{example}\label{ex:vertex-embedding}
A vertex embedding is an injective morphism that preserves all the atom labels (including $\alpha$) such that all the chemical vertices of the codomain are in its image. In other words, it can only add new charges, edges or $\alpha$-vertices to the domain graph. We give an example below:
\begin{center}
\scalebox{1}{\tikzfig{thesis-ch2/embedding-example}}.
\end{center}
\end{example}
We denote the class of vertex embeddings by $\E$, and will often refer to the elements of $\E$ simply as {\em embeddings}.

\begin{definition}[Ion-closed subset]\label{def:ion-closed-set}
Let $A$ be a pre-chemical graph. A subset $U\sse V_A$ is {\em ion-closed} if for all vertices $u,v\in V_A$, if $u\in U$ and $m_A(u,v)=\ib$, then also $v\in U$.
\end{definition}

\begin{definition}[Matching]\label{def:matching}
A {\em matching} is a morphism $f:A\rightarrow C$ in $\PChem$ such that the conditions~\eqref{morph:crg-alpha},~\eqref{morph:chem-edge} and~\eqref{morph:chem-alpha} of being a morphism (Definition~\ref{def:morphism}) hold without the exception for the zero charge or bond case, and the image $f(V_A)$ is ion-closed.
\end{definition}
\begin{example}\label{ex:matching}
A matching is a morphism with the further restrictions that all charges are preserved (including the zero charge), no new bonds can be added between existing vertices. We thus think of a matching as identifying the domain as a substructure of the codomain. We slightly modify the morphism in Example~\ref{ex:morphism} to obtain a matching:
\begin{center}
\scalebox{1}{\tikzfig{thesis-ch2/matching-example}}.
\end{center}
\end{example}

Let us denote the class of matchings by $\M$. We wish to characterise the images of matchings between valence-complete pre-chemical graphs, which we achieve in Proposition~\ref{prop:matching-matchable}. To this end, we need to define the {\em valence completion} (Definition~\ref{def:valence-completion}) and {\em charge decomposition} (Definition~\ref{def:charge-decomp}) of a subset of a valence-complete pre-chemcial graph. We begin with the following observation.
\begin{lemma}\label{lma:match-closed}
If $m:A\rightarrow C$ is a matching whose domain and codomain are valence-complete, then for every $u\in\Chem A$ we have $m\left(\CN_A(u)\right)=\CN_C(mu)$.
\end{lemma}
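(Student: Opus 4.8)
The plan is to prove the set equality $m\left(\CN_A(u)\right)=\CN_C(mu)$ by establishing the two inclusions separately, with the forward inclusion being essentially immediate and the reverse inclusion requiring the valence-completeness hypothesis together with the already-established fact that matchings preserve covalent bond orders on the nose. First I would recall that since $m$ is a matching, it is in particular a morphism, so by Proposition~\ref{prop:covalent-nbh-subset}\ref{morphism-edge3} we immediately get $m\left(\CN_A(u)\right)\sse\CN_C(mu)$; this gives one inclusion with no further work. Moreover, since $m$ is a matching and $u\in\Chem A$, condition~\eqref{morph:chem-edge} of Definition~\ref{def:morphism} holds \emph{without} the zero-bond exception, so for chemical neighbours the covalent bond order is preserved exactly: $m_C(mu,mv)=m_A(u,v)$ for all $v\in\Chem A$.

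The substance of the argument is the reverse inclusion $\CN_C(mu)\sse m\left(\CN_A(u)\right)$. Here I would use a counting argument based on valence-completeness. The key quantity is the total covalent valence at $u$, namely $\sum_{v\in V_A}\cov\left(m_A(u,v)\right)$. Since $m$ is a matching, all charges are preserved exactly (including at $u$), so $\tau_C^{\crg}(mu)=\tau_A^{\crg}(u)$, hence $\left|\tau_C^{\crg}(mu)\right|=\left|\tau_A^{\crg}(u)\right|$; and since $u\in\Chem A$, we have $\tau_C^{\At}(mu)=\tau_A^{\At}(u)$, so the two vertices have the same valence $\mathbf v\tau^{\At}$. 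Applying the valence-completeness equation at $u$ in $A$ and at $mu$ in $C$ and subtracting, the valence and charge contributions cancel, forcing
$$\sum_{v\in V_A}\cov\left(m_A(u,v)\right)=\sum_{w\in V_C}\cov\left(m_C(mu,w)\right).$$
Thus the total covalent valence \emph{used up} at $u$ equals that at $mu$. Since, by the forward inclusion and the exact preservation of bond orders on chemical neighbours, each covalent bond at $u$ maps to a bond of the same order at $mu$, the total covalent weight already accounted for in $m\left(\CN_A(u)\right)$ exhausts all the covalent valence at $mu$. Therefore there can be no covalent neighbour of $mu$ outside $m\left(\CN_A(u)\right)$, giving $\CN_C(mu)\sse m\left(\CN_A(u)\right)$.

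The main obstacle I anticipate is handling the $\alpha$-vertices correctly in the counting, and in verifying that the injectivity needed to sum bond orders over $\CN_A(u)$ without collision actually holds. For the first point, since $u$ is chemical, condition~\ref{cgraph:alpha3} of Definition~\ref{def:prechemgraph} restricts how $\alpha$-vertices attach, and I would need to check that $\alpha$-neighbours of $u$ contribute to the valence sum in a way compatible with condition~\eqref{morph:chem-alpha} of the matching; the ion-closedness of $m(V_A)$ and the matching condition without exceptions ensure no covalent contribution at $mu$ is silently dropped or duplicated. For the second point, $m$ restricted to $\Chem A$ is injective, and the image of the chemical part is disjoint from the image of the $\alpha$-part (both by Definition~\ref{def:morphism}), so the covalent bond orders summed over $m\left(\CN_A(u)\right)$ genuinely recover the left-hand sum above without overcounting. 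Once these bookkeeping points are discharged, the two inclusions combine to give the claimed equality.
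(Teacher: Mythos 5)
Your proposal is correct and follows essentially the same route as the paper's proof: the forward inclusion via part~\ref{morphism-edge3} of Proposition~\ref{prop:covalent-nbh-subset}, then a valence-completeness counting argument (using that a matching preserves the atom label and, by condition~\eqref{morph:crg-alpha} without the zero exception, the charge at $u$) to show $\sum_{w\in\CN_A(u)} m_A(u,w)=\sum_{w\in\CN_C(mu)} m_C(mu,w)$, whence the inclusion of covalent-neighbour sets must be an equality since all summands are positive. The $\alpha$-vertex bookkeeping you flag at the end is discharged in the paper exactly as you anticipate: the sum over $\CN_A(u)$ is split into chemical and $\alpha$ parts, and condition~\eqref{morph:chem-alpha} (again without the exception) ensures that $\alpha$-neighbours collapsed by $m$ contribute their summed bond orders at the common image, so the sum over the image set $m\left(\CN_A(u)\right)$ recovers the original sum without overcounting.
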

\begin{proof}
The inclusion $m\left(\CN(u)\right)\sse \CN(mu)$ is part~\ref{morphism-edge3} of Proposition~\ref{prop:covalent-nbh-subset}. Since $\tau_A(u)=\tau_C(mu)$, we have
$$\sum_{w\in \CN(u)} m_A(u,w) = \sum_{w\in \CN(mu)} m_C(mu,w).$$
If $w\in \CN(u)$ is chemical, then $m_C(mu,mw)=m_A(u,w)$, while if it is an $\alpha$-vertex, then $m_C(mu,mw)=\sum_{z\in m^{-1}m(w)}m_A(u,z)$. It follows that
\begin{align*}
\sum_{w\in \CN(u)} m_A(u,w) &= \sum_{w\in \CN(u)\cap\Chem A} m_A(u,w) + \sum_{w\in \CN(u)\cap\alpha(A)} m_A(u,w) \\
&= \sum_{w\in \CN(u)\cap\Chem A} m_C(mu,mw) \\
&+ \sum_{w\in m(\CN(u)\cap\alpha(A))}\sum_{z\in m^{-1}(w)} m_A(u,z) \\
&= \sum_{w\in m\left(\CN(u)\cap\Chem A\right)} m_C(mu,w) + \sum_{w\in m\left(\CN(u)\cap\alpha(A)\right)} m_C(mu,w) \\
&= \sum_{w\in m\left(\CN(u)\right)} m_C(mu,w).
\end{align*}
Combining this with the first equality, we get that
$$\sum_{w\in \CN(mu)} m_C(mu,w)=\sum_{w\in m\left(\CN(u)\right)} m_C(mu,w).$$
Since $m\left(\CN(u)\right)\sse \CN(mu)$, it follows that the sets are, in fact, equal.
\end{proof}

\begin{definition}[Valence completion]\label{def:valence-completion}
Let $A$ be a valence-complete pre-chemical graph and let $U\sse\Chem A$. We define the sets of formal symbols called the {\em indexed covalent neighbours}, and the {\em indexed ionic neighbours} as follows:
\begin{align*}
\mathcal{CN}(U)\coloneqq\left\{v^u_j : u\in U, v\in\CN_A(u)\cap (V_A\setminus U)\text{ and } j=1,\dots,m_A(u,v)\right\}, \\
\mathcal{IN}(U)\coloneqq\left\{v^{u,\ib}_j : u\in U, v\in\IN_A(u)\cap (V_A\setminus U)\text{ and } j=1,\dots,\left|\tau_A^{\crg}(v)\right|\right\}.
\end{align*}
The {\em valence completion} of $U$ is the pre-chemical graph\footnote{We assume a unique choice of a vertex name for every element in $\mathcal{CN}(U)\cup\mathcal{IN}(U)$, disjoint from the vertex names of $V_A$. For legibility, we omit this technical detail.}
$$U^{\alpha}\coloneqq(U\cup\mathcal{CN}(U)\cup\mathcal{IN}(U),\tau_{\alpha},m_{\alpha})$$
defined by the following labelling functions: for $u,w\in U$ and $z\in\mathcal{CN}(U)\cup\mathcal{IN}(U)$, let $\tau_{\alpha}(u)\coloneqq\tau_A(u)$, $m_{\alpha}(u,w)\coloneqq m_A(u,w)$ and $\tau^{\At}_{\alpha}(z)\coloneqq\alpha$, and for the sets defined above we let
\begin{align*}
\tau^{\crg}_{\alpha}(v^u_j) &\coloneqq 0, \\
\tau^{\crg}_{\alpha}(v^{u,\ib}_j) &\coloneqq \begin{cases}
                                             1 \text{ if } \tau_A^{\crg}(v)>0, \\
                                             -1 \text{ if } \tau_A^{\crg}(v)<0,
                                             \end{cases} \\
m_{\alpha}(u,v^u_j) &\coloneqq 1, \\
m_{\alpha}(u,v^{u,\ib}_j) &\coloneqq \ib.
\end{align*}
\end{definition}

\begin{example}\label{ex:valence-completion}
Consider the chemical graph below left. The valence completion of the dashed subset is given on the right. Note that there is a matching from the valence completion into the original graph given by the identity on the vertices.
\begin{center}
\scalebox{1}{\tikzfig{thesis-ch2/valence-completion-example}}.
\end{center}
\end{example}

\begin{definition}[Charge decomposition]\label{def:charge-decomp}
Let $A$ be a valence complete pre-chemical graph and let $B\sse\Crg A$ be a subset of charged vertices. We define the following sets of formal symbols: {\em indexed positive vertices}, and {\em indexed negative vertices} as follows:
\begin{align*}
\mathcal{PV}(B)\coloneqq\left\{b^{+}_j : b\in B\cap\Crgp A\text{ and } j=1,\dots,\left|\tau_A^{\crg}(b)\right|\right\}, \\
\mathcal{NV}(B)\coloneqq\left\{b^{-}_j : b\in B\cap\Crgn A\text{ and } j=1,\dots,\left|\tau_A^{\crg}(b)\right|\right\}.
\end{align*}
The {\em charge decomposition} of $U$ is the pre-chemical graph
$$B^{\crg}\coloneqq\left(\mathcal{PV}(B)\cup\mathcal{NV}(B),\tau_{\crg},m_{\crg}\right)$$
defined by $\tau_{\crg}\left(b^{+}_j\right)\coloneqq (\alpha,1)$, $\tau_{\crg}\left(b^{-}_j\right)\coloneqq (\alpha,-1)$, and $m_{\crg}$ is the constantly zero function.
\end{definition}

\begin{example}\label{ex:charge-decomp}
Consider the chemical graph below left. The charge decomposition of the oxygen vertex with vertex name $6$ is given by the single negatively charged $\alpha$-vertex drawn on the right. As for the charge decomposition, note that the identity mapping is a matching.
\begin{center}
\scalebox{1}{\tikzfig{thesis-ch2/charge-decomposition-example}}.
\end{center}
\end{example}

We now define the notion of a {\em matchable subset}, which lists the conditions under which a subset of a valence-complete chemical graph comes from a matching.
\begin{definition}[Matchable subset]\label{def:match-subset}
Let $A$ be a valence-complete pre-chemical graph. We say that a subset $U\sse V_A$ is {\em matchable} if it is ion-closed, and for every $u\in U$, either $u\in\Chem U$ and $\Nbr(u)\sse U$, or $u\in\Crg A$, or there is a $v\in\Chem U\cap\Nbr(u)$ with $\Nbr(v)\sse U$.
\end{definition}
Thus a subset is matchable if all boundary vertices and $\alpha$-vertices are either charged or connected to at least one interior chemical vertex. We first observe that being matchable is a necessary condition for a subset to be an image of a matching.
\begin{lemma}\label{lma:matching-matchable}
First, $m(A)$ is ion-closed by definition. Let $m:A\rightarrow B$ be a matching such that $A$ and $B$ are valence-complete. Then $m(A)$ is a matchable subset of $B$.
\end{lemma}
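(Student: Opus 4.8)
The plan is to verify the two defining requirements of a matchable subset (Definition~\ref{def:match-subset}) for $U\coloneqq m(V_A)$ separately. Ion-closedness is immediate, as it is built into the definition of a matching (Definition~\ref{def:matching}). For the remaining pointwise condition I would fix an arbitrary $u\in U$ and exploit the fact that, by the morphism axioms (Definition~\ref{def:morphism}), the images $m(\Chem A)$ and $m(\alpha(A))$ are disjoint, so $U=m(\Chem A)\sqcup m(\alpha(A))$ and $u$ lies in exactly one of the two pieces. The argument then splits into a chemical case and an $\alpha$-case, matching the three disjuncts of Definition~\ref{def:match-subset}.

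First, suppose $u=m(a)$ with $a\in\Chem A$ (such $a$ is unique, since $m$ is injective on chemical vertices). Preservation of atom labels on chemical vertices gives $\tau_B^{\At}(u)=\tau_A^{\At}(a)\neq\alpha$, so $u\in\Chem U$, and to secure the first disjunct I must show $\Nbr_B(u)\sse U$. Here I would invoke Lemma~\ref{lma:match-closed}, which applies precisely because $m$ is a matching with valence-complete domain and codomain: it yields $\CN_B(u)=m(\CN_A(a))\sse U$. The ionic neighbours are absorbed by ion-closedness of $U$, so any $w$ with $m_B(u,w)=\ib$ already lies in $U$. Since $\Nbr_B(u)=\CN_B(u)\cup\IN_B(u)$, this gives $\Nbr_B(u)\sse U$, establishing the first disjunct.

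Second, suppose $u=m(a)$ with $a\in\alpha(A)$. If $u\in\Crg B$ the middle disjunct holds and we are done. Otherwise $u$ is neutral; I would first argue that the preimage $a$ must itself be neutral, using the sign-preservation clause of Definition~\ref{def:morphism} (a positively or negatively charged vertex maps to one of the same sign, so a neutral image forces a neutral source). Valence-completeness of $A$ applied to the neutral $\alpha$-vertex $a$, together with $\mathbf v(\alpha)=1$, then forces $a$ to carry exactly one single covalent bond, whose other endpoint $b$ is chemical by condition~\ref{cgraph:alpha3} of Definition~\ref{def:prechemgraph}. Setting $z\coloneqq m(b)\in\Chem U$, the zero-exception-free version of clause~(\ref{morph:chem-alpha}) available for matchings gives $\cov(m_B(u,z))\geq\cov(m_A(a,b))=1$, so $z\in\Nbr_B(u)$; and the chemical case above, applied to $b$, shows $\Nbr_B(z)\sse U$. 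This is exactly the third disjunct of matchability for $u$.

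I expect the $\alpha$-case to be the main obstacle, since it is where the matching axioms, valence-completeness and the pre-chemical graph constraints must be combined most delicately: the crux is recognising that a neutral image of an $\alpha$-vertex cannot arise from a charged source, so valence-completeness pins the local structure down to a single covalent bond to a chemical vertex, allowing us to bounce back to the already-settled chemical case. The chemical case itself is routine once Lemma~\ref{lma:match-closed} is in hand.
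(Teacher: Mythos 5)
Your proof is correct and follows essentially the same route as the paper's (much terser) argument: a case split aligned with the three disjuncts of Definition~\ref{def:match-subset}, with Lemma~\ref{lma:match-closed} plus ion-closedness handling chemical vertices, sign preservation handling charged $\alpha$-vertices, and valence-completeness of $A$ producing the single covalent bond to a chemical neighbour in the neutral case. The only cosmetic difference is that the paper cases on the charge of the \emph{source} $\alpha$-vertex rather than of its image, which lets it skip your neutral-image-forces-neutral-source step.
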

\begin{proof}
If $u\in\Chem A$, then $\Nbr_B(mu)=m(\Nbr_A(u))\sse m(A)$. If $u\in\alpha(A)\cap\Crg A$, then $m(u)\in\Crg B$. If $u\in\alpha(A)\cap\Neu A$, then there is a $v\in\Chem A$ with $m_A(u,v)=1$, so that $m(v)\in\Chem{m(A)}\cap\Nbr_B(mu)$.
\end{proof}

\begin{proposition}\label{prop:matching-matchable}
Let $A$ be a valence-complete pre-chemical graph. A subset $S\sse V_A$ is matchable if and only if there is a matching $m:C\rightarrow A$ with a valence-complete domain such that $m(C)=S$.
\end{proposition}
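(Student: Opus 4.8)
The statement is a biconditional, and the forward implication is essentially already available. If $m:C\to A$ is a matching whose domain $C$ is valence-complete, then, since $A$ is assumed valence-complete, Lemma~\ref{lma:matching-matchable} (applied with domain $C$ and codomain $A$) gives at once that $m(C)=S$ is matchable. So the entire content lies in the converse: given a matchable $S\sse V_A$, the plan is to build a valence-complete $C$ together with a matching $m:C\to A$ whose image is exactly $S$. I would do this by an explicit construction gluing together the two auxiliary gadgets introduced earlier, namely the valence completion (Definition~\ref{def:valence-completion}) and the charge decomposition (Definition~\ref{def:charge-decomp}).

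The construction I propose is as follows. Let $U\coloneqq\{u\in\Chem A\cap S : \Nbr_A(u)\sse S\}$ be the interior chemical vertices of $S$. The backbone of $C$ is the valence completion $U^{\alpha}$: it retains $U$ with its labels and internal edges, and for every covalent (resp.\ ionic) bond from some $u\in U$ to a vertex $v\in(\CN_A(u)\cup\IN_A(u))\setminus U$ it introduces dangling $\alpha$-vertices $v^{u}_j$ (resp.\ $v^{u,\ib}_j$). Because $\Nbr_A(u)\sse S$ for $u\in U$, every such $v$ still lies in $S\setminus U$. To $U^{\alpha}$ I adjoin, as a disjoint union of isolated $\alpha$-vertices, the charge decomposition $B^{\crg}$ of the set $B\sse\Crg A\cap(S\setminus U)$ of charged vertices of $S$ that are \emph{not} ionic neighbours of any $u\in U$ --- these are precisely the charged vertices whose charge is not already carried by the ionic dangling vertices of $U^{\alpha}$. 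The map $m:C\to A$ is then defined on vertex names: it is the identity on $U$, it sends each $v^{u}_j$ and $v^{u,\ib}_j$ to $v$, and it sends each vertex of $B^{\crg}$ to the vertex of $B$ it decomposes.

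Three things then have to be checked. First, $C$ is valence-complete: each $u\in U$ keeps the covalent degree and charge it has in $A$ (external covalent bonds replaced by the same number of single bonds to neutral $\alpha$-vertices, external ionic bonds by ionic bonds to charged $\alpha$-vertices), while every added $\alpha$-vertex has valence $1$, being either a single covalent bond with charge $0$ or an ionic / charge-decomposition vertex of charge $\pm1$ carrying no covalent bond. Second, $m(C)=S$: interior chemical vertices are their own images, and, by the definition of a matchable subset (Definition~\ref{def:match-subset}), every other $v\in S$ is charged (hence decomposed, or an ionic neighbour and so hit by an ionic dangling vertex) or adjacent to some interior chemical vertex (hence hit by a dangling vertex); conversely no image escapes $S$, since dangling vertices land in $\Nbr_A(u)\sse S$ and decomposition vertices land in $B\sse S$, and ion-closedness of $m(C)$ is exactly ion-closedness of $S$. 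Third, $m$ must be shown to be a matching in the sense of Definition~\ref{def:matching}: injectivity on chemical vertices and disjointness of $m(\Chem C)$ and $m(\alpha(C))$ are immediate, but conditions~\ref{morph:crg-alpha},~\ref{morph:chem-edge} and~\ref{morph:chem-alpha} must hold in their strict, no-exception form.

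The main obstacle will be precisely this last bookkeeping. For condition~\ref{morph:crg-alpha} one must verify $\Net{m^{-1}(v)}=\tau_A^{\crg}(v)$ for every $v$: the preimage splits into neutral covalent dangling vertices (net $0$), ionic dangling vertices (net $\tau_A^{\crg}(v)$, where the pre-chemical ion conditions~\ref{cgraph:ion1}--\ref{cgraph:ion2} force $v$ to have a single, sign-coherent ionic partner so there is no double counting), and charge-decomposition vertices (net $\tau_A^{\crg}(v)$ when present); the definition of $B$ is tuned so that exactly one of the latter two is active. For condition~\ref{morph:chem-alpha} one uses that, per interior neighbour $u\in U$, the $m_A(u,v)$ covalent dangling vertices attached to $u$ sum to $\cov(m_A(v,u))$, whereas for a vertex not adjacent to any $u\in U$ the sum is $0$ and matchability guarantees $\cov(m_A(v,u))=0$, as the strict form of the condition demands. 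The delicate point is to organise these local verifications into a single uniform argument rather than a proliferating split over covalent / ionic / charged / isolated boundary vertices; Proposition~\ref{prop:covalent-nbh-subset} and the structural constraints of Definition~\ref{def:prechemgraph} are exactly the tools I would use to keep the case analysis finite and to force the charge and bond sums to close up.
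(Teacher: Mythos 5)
Your proposal is correct and follows essentially the same route as the paper: the `if' direction via Lemma~\ref{lma:matching-matchable}, and for the converse the same sets $U=\{s\in\Chem S : \Nbr(s)\sse S\}$ and $B=\Crg{S\setminus(U\cup\IN(U))}$, with the matching $U^{\alpha}+B^{\crg}\rightarrow A$ whose image is $S$. The paper's proof simply asserts $U\cup\Nbr(U)\cup B=S$ and leaves the matching and valence-completeness checks implicit, whereas you carry out that bookkeeping explicitly (and correctly, up to one cosmetic slip: that $\cov(m_A(v,u))=0$ for $v$ not adjacent to any $u\in U$ follows from non-adjacency itself, not from matchability).
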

\begin{proof}
The `if' direction is Lemma~\ref{lma:matching-matchable}. Hence suppose $S$ is matchable. We define the following sets:
\begin{align*}
U &\coloneqq \{ s\in\Chem S : \Nbr(s)\sse S\}, \\
B &\coloneqq \Crg{S\setminus (U\cup\IN(U))}.
\end{align*}
It follows that $U\cup\Nbr(U)\cup B=S$, so that the image of the matching $U^{\alpha}+B^{\crg}\rightarrow A$ is precisely $S$.
\end{proof}

\section{Adhesivity}\label{sec:adhesivity}

Adhesive categories were introduced by Lack and Soboci\' nski~\cite{lack-sobocinski-adhesive2004,lack-sobocinski-adhesive2005} as a categorical setting where pushouts along monomorphisms are well-behaved. The main motivation is to provide an abstract mathematical framework for double pushout graph rewriting. Adhesive categories have been generalised to $\mathcal M$-adhesive categories by Ehrig, Golas and Hermann~\cite{m-adhesive}, allowing for good behaviour of pushouts along a specified subclass of monomorphisms. A further generalisation restricts the class of morphisms that have pushouts (still along a restricted class of monomorphisms), resulting in $(\mathcal M,\mathcal N)$-adhesive categories of Habel and Plump~\cite{mn-adhesive2012}. Here we prove that the category of pre-chemical graphs $\PChem$ is $(\E,\M)$-adhesive (Theorem~\ref{thm:adhesive}), where $\E$ are the vertex embeddings, and $\M$ are the matchings. This result enables double pushout rewriting in $\PChem$, which we will use in the next section to define reaction schemes and their instances.

We begin by stating the definition of an $(\mathcal M,\mathcal N)$-adhesive category. For the sake of brevity, we do so without a detailed discussion of the terms appearing in the definition. We refer the reader to Habel and Plump~\cite{mn-adhesive2012}, Castelnovo, Gadducci and Miculan~\cite{cgm-adhesive2022}, and Castelnovo and Miculan~\cite{cm-axioms-adhesive2024} for the details.
\begin{definition}[$(\mathcal M,\mathcal N)$-adhesive category. Definition~1 in~\cite{mn-adhesive2012}]
Let $\cat C$ be a category, $\mathcal M$ a class of monomorphisms and $\mathcal N$ a class of morphisms in $\cat C$. We say that $\cat C$ is {\em $(\mathcal M,\mathcal N)$-adhesive} if the following properties hold:
\begin{enumerate}
\item $\mathcal M$ and $\mathcal N$ contain all isomorphisms and are closed under composition and decomposition. Moreover, $\mathcal N$ is closed under $\mathcal M$-decomposition: if $g\circ f\in\mathcal N$ and $g\in\mathcal M$, then $f\in\mathcal N$.
\item $(\mathcal M,\mathcal N)$-pushouts and pullbacks along $\mathcal M$-morphisms exist in $\cat C$. Also, $\mathcal M$ and $\mathcal N$ are stable under $(\mathcal M,\mathcal N)$-pushouts and $\mathcal M$-pullbacks.
\item $(\mathcal M,\mathcal N)$-pushouts are $(\mathcal M,\mathcal N)$-van Kampen squares.
\end{enumerate}
\end{definition}

Let us denote by $\mathcal P_{fin}(\VS)$ the category whose objects are the finite subsets of $\VS$, and whose morphisms are functions (so the category is equivalent to the usual category of finite sets). Let $U:\PChem\rightarrow\mathcal P_{fin}(\VS)$ denote the evident forgetful functor. We also write $U:\PChem\rightarrow\Set$ for the forgetful functor into the category of sets and functions. Now define a functor $F:\mathcal P_{fin}(\VS)\rightarrow\PChem$ by letting $F(V)=(V,\tau^{\alpha},m^{0})$, where $\tau^{\alpha}$ and $m^{0}$ are constant functions sending every element to $(\alpha,0)$ and $0$, respectively. A function $f:V\rightarrow W$ is mapped to itself: note that since $F(V)$ does not have any chemical or charged vertices, all the conditions of Definition~\ref{def:morphism} trivialise, so that any function with $F(V)$ as the domain is a morphism of chemical graphs.

\begin{proposition}
The functors
$$F:\mathcal P_{fin}(\VS)\rightleftarrows\PChem :U$$
are adjoint with $F\dashv U$.
\end{proposition}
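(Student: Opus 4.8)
The plan is to prove $F\dashv U$ by exhibiting a bijection
$$\PChem\bigl(F(V),A\bigr)\;\cong\;\mathcal P_{fin}(\VS)\bigl(V,UA\bigr),$$
natural in $V$ and $A$, which is literally the identity on underlying functions. Concretely, I would show that a function $f\colon V\to V_A$ is the underlying function of a morphism $F(V)\to A$ for \emph{every} such $f$, and conversely that every morphism $F(V)\to A$ is nothing more than such a function. Since both $U$ and $UF$ act as the identity on underlying sets, the assignment $f\mapsto f$ is then the desired natural isomorphism, and the adjunction follows by the standard hom-set characterisation.

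The crux --- indeed the only real content --- is the observation flagged just before the statement: the graph $F(V)=(V,\tau^{\alpha},m^{0})$ carries no chemical structure at all. Its chemical and charged vertex sets are empty, $\Chem F(V)=\eset$ and $\Crg F(V)=\eset$, and its edge-labelling $m^{0}$ is identically zero. I would then run through the clauses of Definition~\ref{def:morphism} for an arbitrary function $f\colon V\to V_A$ and check that each holds vacuously. The injectivity of $f|_{\Chem F(V)}$, the disjointness of $f(\Chem F(V))$ and $f(\alpha(F(V)))$, and the clauses \eqref{morph:chem-edge} and \eqref{morph:chem-alpha} (together with the atom- and charge-preservation clauses) are all vacuous because they quantify over chemical vertices or $\ib$-edges of the \emph{domain}, of which $F(V)$ has none. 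The one clause whose hypothesis concerns the codomain, \eqref{morph:crg-alpha}, is likewise vacuous: every vertex of $F(V)$ has charge zero, so $\Net{f^{-1}(w)}=0$ for all $w\in V_A$ and the antecedent never triggers. Hence every function $V\to V_A$ underlies a unique morphism, giving the bijection on the nose.

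Naturality is then immediate. For $h\colon V'\to V$ in $\mathcal P_{fin}(\VS)$ and $g\colon A\to A'$ in $\PChem$, the $\PChem$-side reindexing sends $f\mapsto Fh;f;g$ while the $\mathcal P_{fin}(\VS)$-side sends $f\mapsto h;f;Ug$; these agree because $Fh$ and $Ug$ are the identity assignments on underlying functions ($Fh=h$ and $Ug=g$ as functions). Equivalently, I could present the data explicitly: the unit $\eta_V=\id_V\colon V\to UF(V)$ and the counit $\varepsilon_A=\id_{V_A}\colon FU(A)\to A$ --- the latter a morphism by exactly the same vacuity argument, since $FU(A)$ is again structure-free --- with both triangle identities collapsing to identities of functions.

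I expect no genuine obstacle here: the ``hard part'' is purely the bookkeeping of Definition~\ref{def:morphism}, and in particular remembering to treat clause \eqref{morph:crg-alpha}, whose hypothesis refers to preimages in the codomain rather than to structure in the domain. Once one records that $F(V)$ has empty chemical and charged vertex sets and no edges, every clause collapses and the adjunction is formal.
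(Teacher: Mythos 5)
Your proposal is correct and matches the paper's own argument: the paper likewise takes the identity function as unit $\eta_V:V\rightarrow UF(V)$ and counit $\varepsilon_A:FU(A)\rightarrow A$, with the morphism conditions of Definition~\ref{def:morphism} holding vacuously because $F(V)$ (and $FU(A)$) has no chemical or charged vertices and no edges. Your hom-set bijection formulation and the explicit treatment of clause~\eqref{morph:crg-alpha} are just a more detailed packaging of the same proof.
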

\begin{proof}
The unit $\eta_V:V\rightarrow UF(V)$ is given by the identity, as $UF(V)=V$. The counit $\varepsilon_A : FU(A)\rightarrow A$ is given by the identity function: the conditions of being a morphism are automatically verified. It is then straightforward to verify the triangle identities.
\end{proof}
\begin{corollary}\label{cor:forget-preserve}
The forgetful functor $U:\PChem\rightarrow\Set$ preserves all finite limits that exist in $\PChem$.
\end{corollary}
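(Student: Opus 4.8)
The plan is to exhibit the set-valued forgetful functor as a composite of two functors, each of which preserves the relevant class of limits. First I would observe that the underlying-vertex-set functor $U:\PChem\to\Set$ factors as
$$\PChem\xrightarrow{\ U\ }\mathcal P_{fin}(\VS)\xrightarrow{\ \iota\ }\Set,$$
where $\iota$ is the faithful functor sending a finite subset $V\subseteq\VS$ to the set $V$ and a function to itself. Up to the evident equivalence $\mathcal P_{fin}(\VS)\simeq\mathbf{FinSet}$, this second functor is just the inclusion of finite sets into all sets. Thus it suffices to show that the first factor preserves all limits and that $\iota$ preserves finite ones.

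For the first factor, I would invoke the adjunction $F\dashv U$ established in the preceding proposition: since $U:\PChem\to\mathcal P_{fin}(\VS)$ is a right adjoint, it preserves all limits that exist in its domain. Concretely, whenever a finite diagram in $\PChem$ has a limit, $U$ carries the limiting cone to a limiting cone in $\mathcal P_{fin}(\VS)$. No further computation is needed here, as this is the standard theorem that right adjoints preserve limits.

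It then remains to check that $\iota$ preserves finite limits, and this is the only step requiring a moment's care; it is precisely where finiteness enters. The terminal object of $\mathcal P_{fin}(\VS)$ is a singleton, binary products are ordinary cartesian products (which remain finite), and equalizers are subsets of the domain (again finite), so each finite limit in $\mathcal P_{fin}(\VS)$ is computed exactly as in $\Set$ and is sent by $\iota$ to the corresponding limit there. By contrast, $\iota$ fails to preserve infinite products (an infinite product of finite sets can be infinite), which is exactly why the statement is restricted to finite limits. Composing the two preservation facts yields that $U:\PChem\to\Set$ preserves every finite limit that exists in $\PChem$.

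The main obstacle is conceptual bookkeeping rather than any genuine difficulty: once the factorisation through $\mathcal P_{fin}(\VS)$ is spotted, both halves are routine, and the only subtlety is recognising that the embedding $\iota$ preserves exactly the finite limits, matching the scope of the claim. I expect the proof to be only a few lines once these two observations are assembled.
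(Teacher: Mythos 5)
Your proof is correct and follows essentially the same route as the paper, which derives the corollary directly from the adjunction $F\dashv U$ via the fact that right adjoints preserve limits. The only difference is that you make explicit the factorisation through $\mathcal P_{fin}(\VS)$ and the check that the inclusion into $\Set$ preserves exactly the finite limits — a step the paper leaves implicit, and which correctly accounts for the restriction to finite limits in the statement.
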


\begin{proposition}\label{prop:pb-embedding}
The pullbacks along vertex embeddings exist in $\PChem$. Moreover, the vertex embeddings are stable under pullbacks, and matchings are stable under $\E$-pullbacks.
\end{proposition}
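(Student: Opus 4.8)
The plan is to compute the pullback on underlying vertex sets and then show that there is exactly one way to decorate it with pre-chemical structure so that both projections become morphisms. Fix a cospan $A\xrightarrow{f}C\xleftarrow{e}B$ with $e\in\E$. By the adjunction $F\dashv U$, the forgetful functor $U\colon\PChem\rightarrow\Set$ preserves whatever limits exist (Corollary~\ref{cor:forget-preserve}), so the underlying set of any pullback is forced to be
\[
P\coloneqq\{(a,b)\in V_A\times V_B : f(a)=e(b)\},
\]
with projections $p$ and $q$. Since $e$ is injective, each $a\in V_A$ has at most one partner $b$, so $p$ is injective and I may identify $P$ with the subset $S\coloneqq\{a\in V_A : f(a)\in e(V_B)\}\sse V_A$; the other projection is then the function $q\colon S\rightarrow V_B$ sending $a\mapsto e^{-1}(f(a))$, well defined because $e$ is injective.

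Next I would equip $P$ with structure. The atom labels are inherited from $A$ (equivalently, transported from $B$ along $e$, since $e$ preserves atom labels and $f,e$ agree on $P$). The subtle part is the bonds and charges: a general morphism may collapse $\alpha$-vertices and freely add bonds and charges, so $\PChem$ is far from a presheaf category and the decorations cannot be read off pointwise. The guiding principle is that $P$ should retain exactly those ``units'' of covalent bond and of charge present in $A$ whose image in $C$ is simultaneously witnessed by $B$---the common part of the $A$- and $B$-structures seen inside $C$. Between chemical vertices this forces $m_P(a,a')$ to be the covalent multiplicity shared by $m_A(a,a')$ and $m_B(q(a),q(a'))$ (these are automatically equal when both are nonzero, by condition~\eqref{morph:chem-edge} for $f$ and for $e$, cf.\ Proposition~\ref{prop:covalent-nbh-subset}), while the bonds incident to, and the charges carried by, the $\alpha$-vertices are obtained by the same ``commonly witnessed units'' recipe. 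I would then verify that $P$ so decorated is a genuine pre-chemical graph (Definition~\ref{def:prechemgraph}); the only non-routine cases are the $\alpha$-vertex conditions~\ref{cgraph:alpha1}--\ref{cgraph:alpha3} and the ionic-balance condition~\ref{cgraph:ion2}, which one deduces from the corresponding conditions in $A$ and $B$ together with the embedding property of $e$.

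With $P$ in hand the universal property is comparatively formal: for a competing cone $(X,g,h)$ with $fg=eh$, the mediator $u\colon X\rightarrow P$, $x\mapsto(g(x),h(x))$, is forced on underlying sets, and it is a morphism precisely because the structure on $P$ was defined as the common part of $A$ and $B$---any bond or charge unit preserved by both $g$ and $h$ lands on a unit of $P$, so $u$ preserves it, while the non-creation conditions hold because $P$'s decoration sits inside that of $A$ via the injection $p$. For the stability claims I would argue as follows. The map $p$ is injective and atom-preserving by construction, and it is bijective on chemical vertices: since $e$ is bijective on chemical vertices we have $\Chem C=e(\Chem B)$, so every $a\in\Chem A$ satisfies $f(a)\in\Chem C\sse e(V_B)$, whence $\Chem A\sse S$ and $p$ restricts to a bijection $\Chem P\rightarrow\Chem A$; thus $p\in\E$ (Definition~\ref{def:vertex-embedding}). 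When $f\in\M$ (Definition~\ref{def:matching}), a matching adds no structure and preserves charges and bonds exactly, so over $S$ the decorations of $A$ and of $B$ already coincide, and $q$ realises $S$ as an ion-closed substructure of $B$; hence $q\in\M$.

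I expect the main obstacle to be the second step: pinning down the bond-and-charge decoration of $P$ so that it is at once a pre-chemical graph, makes both $p$ and $q$ morphisms, and is universal. The difficulty is genuinely about the $\alpha$-vertices and the ionic bonds---because morphisms can collapse $\alpha$-vertices, and because the image $e(V_B)$ need not be ion-closed (Definition~\ref{def:ion-closed-set}) in $C$, the subset $S$ need not be ion-closed in $A$. One therefore cannot simply restrict $A$'s structure to $S$, but must sever the ionic bonds leaving $S$ and, in the same move, restore the charge-balance condition~\ref{cgraph:ion2} for the vertices left with an unmatched ionic partner.
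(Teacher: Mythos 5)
Your construction is essentially the paper's: the paper likewise takes as carrier the subset $\{a\in V_A : f(a)\in e(V_C)\}$ of $V_A$ with projections the inclusion and $e^{-1}f$, decorates it with exactly those bonds and charges that agree in $A$ and in the embedding's domain (your ``commonly witnessed units'' recipe), verifies the universal property by the forced mediator, and obtains the stability claims just as you do ($p\in\E$ via bijectivity of $e$ on chemical vertices, $q\in\M$ via ion-closedness of the matching's image).

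Two local points. First, in the matching-stability step your justification contains a false intermediate claim: it is not true that ``over $S$ the decorations of $A$ and of $B$ already coincide'' when $f\in\M$, because $e$ is only a vertex embedding and may freely \emph{add} edges and charges towards its codomain, so $B$ can lack bonds or charges that $A$ carries over $S$. What is true, and suffices, is that the common-part decoration of the pullback agrees with $B$'s decoration along $q$: whenever $m_B(qa,qa')\neq 0$, condition~\eqref{morph:chem-edge} for $e$ together with exactness of the matching $f$ gives $m_B(qa,qa')=m_C(fa,fa')=m_A(a,a')$, and when $m_B(qa,qa')=0$ the common part is $0$ as well; the same argument via condition~\eqref{morph:crg-alpha} handles charges, so $q$ preserves exactly and is a matching once ion-closedness of $f(V_A)$ is transported through the injective $e$. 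Second, your closing worry about having to ``restore'' condition~\ref{cgraph:ion2} is unnecessary: that condition is vacuous once $\IN(v)=\eset$, so a charged vertex whose ionic bonds were all severed is unproblematic in $\PChem$; and for ionic bonds witnessed on both sides the balance comes for free, since nonzero charges are transported exactly into the common codomain and, by condition~\ref{cgraph:ion1}, the ionic $\alpha$-neighbourhood of a chemical vertex carries charges of a single sign, which forces a vertex keeping its charge in the pullback to keep its entire ionic neighbourhood. That said, your instinct that the ionic/$\alpha$ bookkeeping is the delicate point is well placed: the paper's own clause for $\alpha$-incident edges retains only covalent bonds, whereas the universal property (already for the cospan $\id,\id$ on a graph containing a chemical--$\alpha$ ionic bond) requires retaining commonly witnessed ionic bonds as well, which is precisely what your uniform recipe does.
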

\begin{proof}
Consider the cospan $A\xrightarrow{f}B\xleftarrow{e}C$ where $e\in\E$. Define the chemical graph $Z$ as follows:
\begin{itemize}
\item $V_Z\coloneqq\{a\in V_A : f(a)\in e(V_C)\}$
\item for all $a\in V_Z$, let $\tau_Z^{\At}(a)\coloneqq\tau_A^{\At}(a)$,
\item for $a\in\Chem Z$, if $\tau_A^{\crg}(a)=\tau_C^{\crg}(e^{-1}f(a))\eqqcolon n$, then let $\tau_Z^{\crg}(a)\coloneqq n$, otherwise let $\tau_Z^{\crg}(a)\coloneqq 0$,
\item for $a\in\alpha(Z)$, if both $a\in\Crg A$ and $e^{-1}f(a)\in\Crg C$, then let $\tau_Z^{\crg}(a)\coloneqq\tau_A^{\crg}(a)$, otherwise let $\tau_Z^{\crg}(a)\coloneqq 0$,
\item for $a,q\in\Chem Z$, if $m_A(a,q) = m_C(e^{-1}f(a),e^{-1}f(q))\eqqcolon k$, let $m_Z(a,q)\coloneqq k$, otherwise let $m_Z(a,q)\coloneqq 0$,
\item for $a\in\alpha(Z)$ and $q\in\Chem Z$, if both $a\in\CN_A(q)$ and $e^{-1}f(a)\in\CN_C(e^{-1}f(a))$, then let $m_Z(a,q)\coloneqq 1$, otherwise let $m_Z(a,q)\coloneqq 0$.
\end{itemize}
Define the map $e^*:Z\rightarrow A$ as the identity on vertices, and the map $f^*:Z\rightarrow C$ by the action of $e^{-1}f$ on vertices. Then the resulting square commutes by construction, and the universal property of a pullback is readily verified. Evidently, $e^*\in\E$.

If $f\in\M$, then $f^*(V_Z)$ is ion-closed since $f(V_A)$ is, and $f^*$ preserves all charges and bonds since $f$ does and by construction of $Z$, so that $f^*\in\M$.
\end{proof}

\begin{lemma}\label{lma:e-pbs-reflect}
The forgetful functor $U:\PChem\rightarrow\Set$ preserves and reflects $\E$-pullbacks. In more detail, the commutative square on the left with $e,e'\in\E$ is a pullback in $\PChem$ if and only if its image on the right is a pullback in $\Set$:
\begin{center}
\tikzfig{thesis-ch2/pres-refl-pullback}.
\end{center}
\end{lemma}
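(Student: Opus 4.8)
The plan is to handle the two implications separately: preservation is immediate from the adjunction, while reflection reduces to showing that a comparison morphism into the genuine pullback is an isomorphism.

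For preservation, I would appeal directly to Corollary~\ref{cor:forget-preserve}: since $U:\PChem\to\Set$ is a right adjoint it preserves every finite limit that exists in $\PChem$, and Proposition~\ref{prop:pb-embedding} guarantees that the $\E$-pullback of the cospan underlying the square exists. Hence, if the left-hand square is an $\E$-pullback, its $U$-image is the pullback of the image cospan in $\Set$. This is the ``only if'' direction.

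For reflection, suppose the square has apex $P$ with legs $e':P\to A$ and $q:P\to C$ (with $e,e'\in\E$) and that its $U$-image is a pullback in $\Set$. I would form the concrete $\E$-pullback $Z$ of Proposition~\ref{prop:pb-embedding}, whose legs are $e^{*}:Z\to A$ (again in $\E$) and $f^{*}:Z\to C$, and obtain the unique comparison morphism $\theta:P\to Z$ satisfying $e^{*}\circ\theta=e'$ and $f^{*}\circ\theta=q$. Applying $U$ and using uniqueness of set-pullbacks up to a canonical bijection shows that $U(\theta)$ is a bijection; it then suffices to prove $\theta$ is an isomorphism in $\PChem$, for then $P$ inherits the universal property of $Z$. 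One direction of the structure comparison is automatic: from $e^{*}\circ\theta=e'$ with $e^{*},e'\in\E$ preserving atom labels exactly, $\theta$ preserves atom labels exactly, so being a vertex bijection it is bijective on chemical vertices and hence lies in $\E$ (Definition~\ref{def:vertex-embedding}). Moreover $\theta$ being a morphism already forces $P$ to carry \emph{at most} the structure of $Z$.

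The hard part will be the reverse inclusion, namely showing that $P$ carries \emph{at least} the ``meet'' structure that defines $Z$, so that $\theta^{-1}$ is again a morphism. The genuine difficulty is that morphisms in $\PChem$ \emph{preserve} but do not \emph{reflect} bonds and charges (conditions (3) and (5) of Definition~\ref{def:morphism} are one-directional implications, and an embedding may add charges, bonds or $\alpha$-vertices), so the embedding hypotheses on the two vertical maps alone give only $P\le Z$. To recover $Z\le P$ I would have to use both legs simultaneously together with the explicit description of $Z$: for each chemical bond or nonzero charge recorded in $Z$, it is by construction present in both $A$ (seen through $e^{*}$) and $C$ (seen through $f^{*}$), and I would transport this back through $e'$ and $q$, exploiting injectivity of $e'\in\E$ and the precise interaction of the two legs over the shared cospan to pin down the corresponding structure in $P$. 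I expect this bookkeeping — ruling out that $P$ has strictly fewer bonds or charges than the genuine pullback — to be the delicate step on which the whole reflection argument turns, and the one requiring the embedding hypotheses in an essential way.
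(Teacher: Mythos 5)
Your skeleton coincides with the paper's own proof: preservation is dispatched through Corollary~\ref{cor:forget-preserve} exactly as in the paper, and for reflection the paper likewise forms the concrete pullback $Z$ of Proposition~\ref{prop:pb-embedding}, obtains the comparison bijection from the set-level pullback property, and then asserts that ``using the properties of the pullback and the fact that $e'$ is an embedding, one then checks'' that the comparison is an isomorphism in $\PChem$. So you have correctly reproduced the architecture and, moreover, correctly located where all the content lies: the paper compresses into that one sentence precisely the step you single out as delicate. Your preliminary observations are also sound: $U(\theta)$ is a bijection by uniqueness of set pullbacks, and $\theta$ preserves atom labels (hence lies in $\E$) because $e^{*}\circ\theta=e'$ with $e^{*}$ injective and label-preserving.

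However, the mechanism you propose for closing the delicate step does not work as stated, and this is a genuine gap. You plan to ``transport back through $e'$ and $q$, exploiting injectivity of $e'$,'' but injectivity is the wrong lever: no map in sight \emph{reflects} structure. Conditions~\ref{morph:crg-alpha} and~\ref{morph:chem-edge} of Definition~\ref{def:morphism} are one-directional, and a vertex embedding may freely add bonds, charges and $\alpha$-vertices (Example~\ref{ex:vertex-embedding}), so knowing that a bond of $Z$ is present in $A$ at $e'(p),e'(p')$ and in $C$ at $q(p),q(p')$ says nothing about $m_P(p,p')$. Concretely, let $A_0$ be two carbon vertices with no bond and $A_1$ the same vertices with a single bond; the vertex-identity map $A_0\rightarrow A_1$ is a vertex embedding. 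The square with apex $A_0$, with $A_1$ at the other three corners and all maps vertex-identities, has both designated legs in $\E$, commutes, and its $U$-image (identities on a two-element set) is a pullback in $\Set$; yet the meet construction gives $Z\cong A_1$, and the comparison $A_0\rightarrow Z$ is the bond-adding embedding, which is not invertible since morphisms cannot delete bonds between chemical vertices (condition~\ref{morph:chem-edge}). So the bookkeeping you anticipate cannot be completed from the hypotheses you invoke: the leverage must come from the \emph{horizontal} legs reflecting structure, as matchings do --- for $q\in\M$, conditions~\ref{morph:crg-alpha}, \ref{morph:chem-edge} and~\ref{morph:chem-alpha} hold without the zero exception, giving $m_P(p,p')=m_C(qp,qp')$ and the analogous identities for charges, which is what actually makes $\theta^{-1}$ a morphism; and this is the shape in which the lemma is used on the front faces of the van Kampen cube in Theorem~\ref{thm:adhesive}. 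Your proposal should either import this matching hypothesis explicitly or explain what replaces it; ``injectivity of $e'$ plus the interaction of the two legs'' does not suffice.
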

\begin{proof}
Preservation is a special case of Corollary~\ref{cor:forget-preserve}. For reflection, suppose that the square on the right is a pullback in $\Set$. This means there is a unique isomorphism from $u:U(A)\rightarrow U(Z)$ to the pullback of $C\xrightarrow f D\xleftarrow e B$ as constructed in the proof of Proposition~\ref{prop:pb-embedding} such that $Uf^*\circ u=Uf'$ and $Ue^*\circ u=Ue'$. Using the properties of the pullback and the fact that $e'$ is an embedding, one then checks that $u$ is an isomorphism in $\PChem$.
\end{proof}

\begin{proposition}\label{prop:em-pushouts}
$\PChem$ has $(\E,\M)$-pushouts, and the classes $\E$ and $\M$ are stable under $(\E,\M)$-pushouts.
\end{proposition}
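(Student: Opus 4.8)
The plan is to build the pushout concretely on underlying vertex sets and then equip the result with the unique labelling making both coprojections morphisms of the required kind. Following the Habel--Plump reading, an $(\E,\M)$-pushout is the pushout of a span $C\xleftarrow{n}A\xrightarrow{e}B$ with $n\in\M$ a matching and $e\in\E$ an embedding. First I would form the pushout $D_0$ of the underlying sets $U(C)\leftarrow U(A)\rightarrow U(B)$, which, since $e$ is injective, is computed elementarily as the disjoint union of $V_C$ with the fresh vertices $V_B\setminus e(V_A)$, glued along $n$ on the image of $A$; in particular the coprojection $\bar e\colon V_C\to D_0$ is injective.

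Next I would define the labelling functions $\tau_D,m_D$ on $D_0$ by taking the join of the two incident structures. On vertices coming only from $C$ I keep $\tau_C$, on the fresh vertices from $B$ I keep $\tau_B$, and on an overlap vertex $\bar e(na)=\bar n(ea)$ for $a\in V_A$ I record the atom label of whichever side is chemical (unambiguous, since $e$ preserves all atom labels, $n$ preserves atom labels of chemical vertices, and the only clash arises when $n$ has already filled an $\alpha$-binding site with a genuine atom, which we then keep), take the charge contributed by $B$ (this already incorporates any charge $e$ added to $a$), and add the incident bonds; here I would note that the bonds $e$ adjoins reach only the freshly created $\alpha$-vertices while the bonds $C$ contributes reach only the $C$-environment, so the two families attach to disjoint vertex sets and cannot interfere. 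I would then verify that $D=(D_0,\tau_D,m_D)$ is a genuine pre-chemical graph, i.e.\ that the $\alpha$-vertex and ionic side-conditions of Definition~\ref{def:prechemgraph} survive the join: the new $\alpha$-vertices inherit their at-most-single, chemical neighbour from $B$, and the overlap vertices inherit compatible charge and bond data from both sides.

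With $D$ constructed, the universal property reduces to that of $D_0$ in $\Set$: any competing cocone induces a unique function out of $D_0$, and one checks directly that it satisfies the morphism conditions of Definition~\ref{def:morphism}, using that our labelling is the minimal (join) one, so nothing present in $D$ can fail to be present in a competitor. For stability I would show $\bar e\colon C\to D$ lies in $\E$ and $\bar n\colon B\to D$ lies in $\M$. The map $\bar e$ is injective and preserves all atom labels by construction, and it is bijective on chemical vertices because every vertex freshly adjoined from $B$ is an $\alpha$-vertex; hence $\bar e\in\E$. For $\bar n\in\M$ the only nonroutine requirement is that $\bar n(V_B)$ be ion-closed in $D$, and this is exactly where $n\in\M$ enters: since $n(V_A)$ is ion-closed in $C$, no ionic bond of $C$ joins the overlap to the $C$-environment, so every ionic bond incident to $\bar n(V_B)$ remains inside $\bar n(V_B)$.

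The main obstacle I expect is not any single verification but the combined bookkeeping of the label join on the overlap, the persistence of all the pre-chemical-graph side-conditions, and the simultaneous landing of both coprojections in their correct classes. In particular, the interaction between charges added by the embedding $e$ and binding sites filled by the matching $n$ must be shown consistent, and the crucial structural input that makes everything cohere is the ion-closedness built into the definition of a matching (Definition~\ref{def:matching}), which prevents ionic bonds from leaking across the pushout; this is precisely why stability is to be expected for $(\E,\M)$-pushouts rather than for arbitrary spans.
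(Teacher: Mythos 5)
Your construction is essentially the paper's proof: the pushout is built on the vertex set of the matching's codomain together with the fresh $\alpha$-vertices adjoined by the embedding (the paper notes, as you do, that this complement contains only $\alpha$-vertices because embeddings are bijective on chemical vertices), atom labels are taken from the matching side, charges and bonds on the overlap from the embedding side, the coprojection out of the matching's codomain is the embedding $\bar e\in\E$, the other coprojection is the matching $\bar n\in\M$ with ion-closedness doing exactly the work you identify, and the universal property is reduced to the set-level pushout (the paper records this preservation/reflection separately in Lemma~\ref{lma:em-pos-reflect}).

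One justification in your plan is false, however, even though the construction survives it. You claim that the bonds adjoined by the embedding $e$ reach only the freshly created $\alpha$-vertices, so that the two families of new bonds attach to disjoint vertex sets. By Definition~\ref{def:vertex-embedding} an embedding may also add a bond between two pre-existing chemical vertices (condition~\eqref{morph:chem-edge} of Definition~\ref{def:morphism} only constrains \emph{non-zero} bonds, and Example~\ref{ex:vertex-embedding} states explicitly that an embedding can add new edges, not just $\alpha$-vertices), so interference on the overlap is possible a priori. What actually rules it out is the strictness built into Definition~\ref{def:matching}: a matching preserves bonds and charges \emph{without} the zero exception, so the data of the matching's codomain on the overlap coincides exactly with that of $A$ and is dominated by the embedding-side data; taking the embedding-side labels on the overlap is then automatically compatible, since the coprojection out of the matching's codomain preserves everything on the nose while the coprojection out of the embedding's codomain is itself only required to be an embedding and is free to add the extra bonds and charges. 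Relatedly, since a matching need not be injective on $\alpha$-vertices, the overlap charge and the chemical-to-$\alpha$ bonds must be defined as sums over the merged matching preimages, as in the paper's formulas $\tau_Y^{\crg}(b)=\sum_{d\in em^{-1}(b)}\tau_C^{\crg}(d)$ and $m_Y(b,p)=\sum_{d\in em^{-1}(p)}m_C(em^{-1}(b),d)$; your phrase ``the charge contributed by the embedding side'' elides this summation. With these two corrections your argument is the paper's.
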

\begin{proof}
Consider the span $B\xleftarrow{m}A\xrightarrow{e}C$ where $m\in\M$ and $e\in\E$. Define the chemical graph $Y$ whose vertex set is that of $B$ together with the complement of the image of $e$:
$$V_Y\coloneqq V_B\cup (V_C\setminus e(V_A)).$$
Note that $V_C\setminus e(V_A)$ only contains $\alpha$-vertices. The labelling functions are defined as follows for all $b,p\in V_B$ and $c\in V_C\setminus e(V_A)$:
\begin{itemize}
\item $\tau_Y^{\At}(b)\coloneqq\tau_B^{\At}(b)$ and $\tau_Y(c)\coloneqq\tau_C(c)$,
\item if $b\in m(A)$, then $\tau_Y^{\crg}(b)\coloneqq\sum_{d\in em^{-1}(b)}\tau_C^{\crg}(d)$, and $\tau_Y^{\crg}(b)\coloneqq\tau_B^{\crg}(b)$ otherwise,
\item if $b,p\in m(\Chem A)$, then $m_Y(b,p)\coloneqq m_C(em^{-1}(b),em^{-1}(p))$; and if $b\in m(\Chem A)$ and $p\in m(\alpha(A))$, then
$$m_Y(b,p)\coloneqq\sum_{d\in em^{-1}(p)} m_C(em^{-1}(b),d),$$
and $m_Y(b,p)=m_B(b,p)$ otherwise,
\item if $b\in m(\Chem A)$, then $m_Y(b,c)\coloneqq m_C(em^{-1}(b),c)$ and $m_Y(b,c)\coloneqq 0$ otherwise.
\end{itemize}
The map $e^*:B\rightarrow Y$ is defined as inclusion on vertices, and the map $m^*:C\rightarrow Y$ as $me^{-1}$ on the image of $e$, and as inclusion otherwise. Then, by construction, the resulting square commutes and we have $e^*\in\E$ and $m^*\in\M$, and the universal property of a pushout is readily verified.
\end{proof}

\begin{lemma}\label{lma:em-pos-reflect}
The forgetful functor $U:\PChem\rightarrow\Set$ preserves and reflects $(\E,\M)$-pushouts. In more detail, the commutative square on the left with $e,e'\in\E$ and $m,m'\in\M$ is a pushout in $\PChem$ if and only if its image on the right is a pushout in $\Set$:
\begin{center}
\tikzfig{thesis-ch2/pres-refl-pushout}.
\end{center}
\end{lemma}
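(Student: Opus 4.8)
The plan is to mirror the proof of the dual statement for pullbacks (Lemma~\ref{lma:e-pbs-reflect}), exploiting the explicit construction of $(\E,\M)$-pushouts from Proposition~\ref{prop:em-pushouts}. For preservation, I would start from the observation that the underlying vertex set of the pushout $Y$ built there is already the set-theoretic pushout. Indeed, since $e\in\E$ is injective, the pushout of $U(B)\xleftarrow{Um}U(A)\xrightarrow{Ue}U(C)$ in $\Set$ has underlying set $V_B\cup(V_C\setminus e(V_A))$, with cocone given by the inclusion $e^{*}$ and by the map acting as $m e^{-1}$ on $e(V_A)$ and as the inclusion on $V_C\setminus e(V_A)$ — exactly the cocone of $Y$. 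Hence $U$ sends the $\PChem$-pushout of Proposition~\ref{prop:em-pushouts} to a pushout in $\Set$, and since any $(\E,\M)$-pushout is isomorphic to this canonical one and $U$ is a functor, $U$ preserves all $(\E,\M)$-pushouts.

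For reflection, suppose the square on the left, with legs $e'\in\E$ and $m'\in\M$ into an apex $D$, is sent by $U$ to a pushout in $\Set$. First I would form the genuine $(\E,\M)$-pushout $Y$ of the span $B\xleftarrow{m}A\xrightarrow{e}C$ via Proposition~\ref{prop:em-pushouts}; its universal property yields a unique comparison morphism $u:Y\rightarrow D$ in $\PChem$ with $u\circ e^{*}=e'$ and $u\circ m^{*}=m'$. By the preservation half just established, $U(Y)$ is a pushout in $\Set$, so $U(u)$ is the canonical comparison between two pushouts of the same span and is therefore a bijection. It then remains to upgrade the bijective morphism $u$ to an isomorphism of $\PChem$, for which it suffices to check that $u^{-1}$ is again a morphism, i.e.\ that no vertex of $D$ carries charge, and no pair of vertices carries covalent or ionic bonding, beyond what is prescribed by $Y$.

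The verification of this last point is where the structural hypotheses do the work. Since $u$ is bijective, chemical vertices of $Y$ correspond to chemical vertices of $D$ and $\alpha$-vertices to $\alpha$-vertices, so atom labels agree because $u$ is a morphism. For charges, bijectivity lets me read condition~\eqref{morph:crg-alpha} of Definition~\ref{def:morphism} as the equality $\tau^{\crg}_{D}(uv)=\tau^{\crg}_{Y}(v)$ whenever $\tau^{\crg}_{Y}(v)\neq 0$; the remaining inequality (no \emph{extra} charge in $D$) I would obtain from the factorisations $e'=u\circ e^{*}$ and $m'=u\circ m^{*}$ together with $e'\in\E$, $m'\in\M$ and the epiness of the set-level cocone $D=e'(V_B)\cup m'(V_C)$: every charge and every bond of $D$ is witnessed inside $B$ or inside $C$, and $e^{*},m^{*}$ already transport precisely these into $Y$, while $u$ being a morphism bounds them from below, forcing equality. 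An identical bookkeeping handles bond multiplicities.

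The main obstacle I anticipate is exactly this charge-and-bond accounting in the reflection step, because a pushout in $\Set$ records only the vertex identifications and is blind to the labelling. The delicate case is when several $\alpha$-vertices of $A$ are merged by the matching $m$ and their charges, and their covalent bonds to a shared chemical neighbour, are summed — these are precisely the sums $\sum_{d\in em^{-1}(b)}\tau^{\crg}_{C}(d)$ and $\sum_{d\in em^{-1}(p)}m_C(\ldots)$ appearing in the construction of $Y$. Ruling out spurious charge or bonding at such merged vertices in $D$, using only that $e'$ is a vertex embedding (Definition~\ref{def:vertex-embedding}) and $m'$ a matching (Definition~\ref{def:matching}), is the technical heart of the argument, closely analogous to the neighbourhood computation of Lemma~\ref{lma:match-closed}.
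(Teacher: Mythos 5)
Your route is the same as the paper's --- preservation via the explicit construction of Proposition~\ref{prop:em-pushouts}, reflection via a comparison map $u:Y\rightarrow D$ that is bijective because both $U(Y)$ and $U(D)$ are pushouts of the same span in $\Set$ --- and up to that point everything is fine (obtaining $u$ from the universal property of $Y$ in $\PChem$, so that it is a morphism for free, is if anything tidier than the paper, which produces $u$ in $\Set$ first). The gap sits exactly in the step you yourself call the technical heart. Your claim that ``every charge and every bond of $D$ is witnessed inside $B$ or inside $C$'' only ever bounds the labelling of $D$ \emph{from below}: a vertex embedding transports the structure of $B$ into $D$ but does not reflect it (Example~\ref{ex:vertex-embedding} explicitly allows the codomain to carry extra charges and edges), and the strict matching conditions \eqref{morph:crg-alpha}, \eqref{morph:chem-edge}, \eqref{morph:chem-alpha} pin the labelling down exactly only on (pairs of vertices in) the image of $m'$. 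Nothing in the hypotheses --- and nothing in the set-level pushout, which records vertex identifications and is blind to labels --- controls charges on $e'(V_B)\setminus m'(V_C)$ or bonds with at most one endpoint in $m'(V_C)$. So the ``forcing equality'' move does not follow from what you have.

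The failure is concrete: let $A=C$ be a single neutral carbon vertex $w$ with $e=\id$; let $B$ add a second, disconnected neutral carbon $v$, with $m:A\rightarrow B$ the inclusion (a matching); and let $D$ be $B$ with one covalent bond added between $w$ and $v$, with $e':B\rightarrow D$ the identity on vertices (a vertex embedding) and $m':C\rightarrow D$ given by $w\mapsto w$ (a matching: its image $\{w\}$ carries no new charge or bond and is trivially ion-closed). The square commutes, all four legs lie in the stated classes, and its $U$-image is a pushout in $\Set$; yet the comparison $u:Y=B\rightarrow D$ is a bijective morphism whose inverse violates condition~\eqref{morph:chem-edge} of Definition~\ref{def:morphism}, and indeed $D$ is not a pushout in $\PChem$, since no morphism $h:D\rightarrow B$ can satisfy $e';h=\id_B$ (morphisms preserve bonds). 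So the reflection step cannot be completed from $e'\in\E$, $m'\in\M$ and the set-level data alone; the verification the paper compresses into ``one then checks that $u$ is an isomorphism'' silently needs more than the stated hypotheses --- in its actual use inside Theorem~\ref{thm:adhesive}, the relevant face of the cube additionally sits between pullback faces along embeddings, which is what supplies the missing upper bound on the labelling of $D$. Note finally that the case you single out as delicate --- the summed charges and bonds at merged $\alpha$-vertices --- is actually the unproblematic one: on $m'(V_C)$ those equalities follow directly from the strict matching conditions, by the same computation as Lemma~\ref{lma:match-closed}; the genuine difficulty lies entirely outside that image.
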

\begin{proof}
For preservation, observe that $U(Y)$ is the pushout of $U(B)\xleftarrow{Um}U(A)\xrightarrow{Ue}U(C)$, where $Y$ is the pushout of $B\xleftarrow{m}A\xrightarrow{e}C$ as constructed in the proof of Proposition~\ref{lma:em-pos-reflect}. For reflection, suppose that the square on the right is a pushout in $\Set$. This means there is a unique isomorphism from $u:U(Y)\rightarrow U(D)$ such that $u\circ Um^*=Um'$ and $u\circ Ue^*=Ue'$. Using the properties of the pushout and the facts that $e'$ is an embedding and $m'$ is a matching, one then checks that $u$ is an isomorphism in $\PChem$.
\end{proof}

\begin{theorem}\label{thm:adhesive}
The category $\PChem$ is $(\E,\M)$-adhesive.
\end{theorem}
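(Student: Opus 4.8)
The plan is to verify the three defining conditions of an $(\mathcal M,\mathcal N)$-adhesive category in turn, with the vertex embeddings $\E$ playing the role of the distinguished monomorphism class $\mathcal M$ and the matchings $\M$ playing the role of $\mathcal N$. Most of the structural content is already in hand. The second condition is immediate: Proposition~\ref{prop:pb-embedding} supplies pullbacks along vertex embeddings together with stability of $\E$ under pullback and of $\M$ under $\E$-pullback, while Proposition~\ref{prop:em-pushouts} supplies $(\E,\M)$-pushouts together with stability of both classes under them. For the first condition, I would check directly from Definitions~\ref{def:vertex-embedding} and~\ref{def:matching} that $\E$ and $\M$ each contain all isomorphisms and are closed under composition and decomposition. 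The only slightly less routine point is $\M$-closure under $\E$-decomposition, namely that $g\circ f\in\M$ with $g\in\E$ forces $f\in\M$; this holds because an embedding $g$ reflects atom labels, charges and covalent bond orders, so that each matching condition satisfied by $g\circ f$ transfers back to $f$.

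The only genuinely new work is the third condition, that every $(\E,\M)$-pushout is an $(\E,\M)$-van Kampen square. My strategy is to transfer the van Kampen property from $\Set$ along the forgetful functor $U:\PChem\rightarrow\Set$, exploiting that $\Set$ is adhesive, so that pushouts along monomorphisms in $\Set$ are van Kampen, and that vertex embeddings are injective and hence sent by $U$ to monomorphisms. Given a commutative cube whose bottom face is the given $(\E,\M)$-pushout, whose connecting (vertical) morphisms lie in $\E$, and whose two rear faces are pullbacks, I would apply $U$ to the entire cube. By Lemma~\ref{lma:em-pos-reflect} the image of the bottom face is a pushout in $\Set$, by Corollary~\ref{cor:forget-preserve} and Lemma~\ref{lma:e-pbs-reflect} the images of the rear faces are pullbacks, and since $U(\E)$ consists of monomorphisms, adhesivity of $\Set$ yields that the image top face is a pushout if and only if the image front faces are pullbacks.

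It then remains to reflect this biconditional back into $\PChem$. Here I would use the stability clauses of Propositions~\ref{prop:pb-embedding} and~\ref{prop:em-pushouts} to certify that the morphisms appearing in the top and front faces already lie in the sanctioned classes, so that the reflection clauses of Lemmas~\ref{lma:e-pbs-reflect} and~\ref{lma:em-pos-reflect} apply: $U$ reflects $\E$-pullbacks and $(\E,\M)$-pushouts, whence a $\Set$-pushout (respectively pullback) on a diagram of $\PChem$-morphisms of the correct type is already such in $\PChem$, and the biconditional transfers verbatim. I expect this reflection bookkeeping to be the main obstacle: one must track throughout the cube which edges are embeddings and which are matchings, and confirm that every face to which a class-specific reflection lemma is applied is genuinely built from morphisms of the required classes, since the van Kampen cube mixes $\E$ and $\M$ whereas the preservation and reflection lemmas are stated per class. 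Everything else reduces to the already-established behaviour of $U$ and the adhesivity of $\Set$.
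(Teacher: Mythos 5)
Your proposal is correct and follows essentially the same route as the paper's proof: the same condition-by-condition verification citing Propositions~\ref{prop:pb-embedding} and~\ref{prop:em-pushouts}, and the same transfer of the van Kampen property to $\Set$ via the forgetful functor $U$, using Corollary~\ref{cor:forget-preserve} and the reflection Lemmas~\ref{lma:e-pbs-reflect} and~\ref{lma:em-pos-reflect}, with the stability clauses certifying that the top-face morphisms lie in $\E$ and $\M$. Your one-line justification of $\M$-closure under $\E$-decomposition slightly understates the work --- the paper needs a case analysis on the signs of the summed preimage charges and on the bond multiplicities, squeezing equalities from the two morphism inequalities for $f$ and $g$ --- but the idea you name is the right one and the argument goes through.
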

\begin{proof}
It is straightforward to see that both $\E$ and $\M$ contain all isomorphisms, and are closed under composition and decomposition.

To see that $\M$ is closed under $\E$-decomposition, suppose that $g\in\E$ with type $g:B\rightarrow C$ and for some composable morphism $f:A\rightarrow B$ we have $gf\in\M$. The fact that $f(V_A)$ is ion-closed follows from ion-closedness of $gf(V_A)$ and from injectivity of $g$. Since $gf$ preserves the labels of chemical vertices (and of edges between chemical vertices), and the only charge (edge label) that can be mapped by $g$ to zero is zero, we conclude that $f$ must preserve the labels of chemical vertices and of the edges between them. For the last two conditions of being a matching, let $w\in f(\alpha(A))$ and $b\in\Chem A$. Since $gf$ is a matching and $g$ is an embedding, we have
$$\tau_C^{\crg}(gw) = \sum_{a\in f^{-1}(w)}\tau_A^{\crg}(a)\eqqcolon n.$$
If $n=0$, then $\tau_C^{\crg}(gw)=\tau_B^{\crg}(w)$, so that the desired equality holds. If $n<0$, then $\tau_B^{\crg}(w)\leq n$ as $f$ is a morphism, but also $n\leq\tau_B^{\crg}(w)$ as $g$ is a morphism, so that $\tau_B^{\crg}(w)=n$ and the desired equality holds. The case for $n<0$ is symmetric.

For the last condition, we again use the facts that $gf$ is a matching and $g$ is an embedding to obtain
$$m_C\left(g(w),gf(b)\right) = \sum_{a\in f^{-1}(w)}m_A(a,b)\eqqcolon d.$$
If either $d=0$ or $w\in\Chem B$, then $m_C\left(g(w),gf(b)\right)= m_B\left(w,f(b)\right)$, so that the desired equality holds. Hence suppose that $d>0$ and $w\in\alpha(B)$. It follows that there is exactly one $a\in f^{-1}(w)$ such that $m_A(a,b)=1$, and $d=m_B\left(w,f(b)\right)=1$, so that the desired equality holds. Thus indeed $f\in\M$.

Existence of $\E$-pullbacks, as well as stability of $\E$ and $\M$ under ($\E$-)pullbacks was shown in Proposition~\ref{prop:pb-embedding}. Existence of $(\E,\M)$-pushouts and stability of $\E$ and $\M$ under such pushouts is Proposition~\ref{prop:em-pushouts}. Thus it remains to show that $(\E,\M)$-pushouts are van Kampen squares. Consider the commutative cube
\begin{center}
\scalebox{1}{\tikzfig{thesis-ch2/van-kampen}}
\end{center}
whose bottom face is a pushout of an embedding $e$ and a matching $m$, whose back faces are both pullbacks, and whose all vertical morphisms are embeddings. We thus have that $m^*,m'\in\M$ and $e^*,e'\in\E$. We have to show that the top face is a pushout if and only if the front faces are pullbacks. By Lemmas~\ref{lma:e-pbs-reflect} and~\ref{lma:em-pos-reflect}, the top face is a pushout in $\PChem$ if and only if it is a pushout in $\Set$, and the front faces are pullbacks in $\PChem$ if and only if they are pullbacks in $\Set$. Lemma~\ref{lma:em-pos-reflect}, the bottom face is a pushout in $\Set$. Since in $\Set$ pushouts along monomorphisms are van Kampen squares~\cite{lack-sobocinski-adhesive2005}, we indeed have that the top face is a pushout in $\Set$ if and only if the front faces are pullbacks in $\Set$, thus completing the proof.
\end{proof}

\begin{corollary}[Pushout complements]\label{cor:po-complements}
The solid diagram below, where $e\in\E$ and $m\in\M$, can be uniquely completed to a pushout square, with $\hat e\in\E$ and $\hat m\in\M$:
\begin{center}
\tikzfig{thesis-ch2/po-complement}.
\end{center}
\end{corollary}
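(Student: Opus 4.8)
The plan is to establish existence and uniqueness of the pushout complement separately, leveraging the $(\E,\M)$-adhesive structure already proven in Theorem~\ref{thm:adhesive}. For existence, I would give an explicit construction of the intermediate graph, call it $G$, obtained from $B$ by ``deleting'' the part of $m(A)$ that is not in the image of the composite $m\circ e$, while keeping $B$ intact outside this region. Concretely, since $m$ is a matching, its image $m(A)$ is an ion-closed, matchable subset of $B$ (Lemma~\ref{lma:matching-matchable}); I would define $V_G$ by removing from $V_B$ the vertices in $m(A)\setminus m(e(V_A))$, and transfer the labels from $B$, adjusting bonds and charges at the boundary exactly as in the pushout construction of Proposition~\ref{prop:em-pushouts} run ``in reverse''. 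The maps $\hat m:G\to B$ and $\hat e:A\to G$ would then be defined by inclusion and by the factorisation of $m$ through $G$, respectively, and one checks $\hat m\in\M$ (it preserves charges and bonds and its image is ion-closed) and $\hat e\in\E$ (it is injective, bijective on chemical vertices, label-preserving).

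The key verification is that the resulting square is indeed a pushout. Here I would invoke Lemma~\ref{lma:em-pos-reflect}: since $U:\PChem\to\Set$ preserves and reflects $(\E,\M)$-pushouts, it suffices to check that the image under $U$ is a pushout in $\Set$. In $\Set$, pushout complements along a monomorphism (here $Ue$) exist and are computed by the expected set-theoretic difference-and-glue construction, so this reduces to a routine check that $U(G)$ with the two underlying functions satisfies the set-level universal property. The labelling data then lifts uniquely because the van Kampen property (part of $(\E,\M)$-adhesivity) controls how labels at the boundary must behave.

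For uniqueness, I would argue as follows. Suppose $(G',\hat e',\hat m')$ is another such pushout square with $\hat e'\in\E$ and $\hat m'\in\M$. Applying $U$ and using Lemma~\ref{lma:em-pos-reflect} again, both $U(G)$ and $U(G')$ are pushout complements of $Um$ along the monomorphism $Ue$ in $\Set$; since $Ue$ is mono, the pushout complement in $\Set$ is unique up to unique isomorphism (this is the standard fact for adhesive categories, applicable because $\Set$ is adhesive and pushouts along monos there are van Kampen, as used in the proof of Theorem~\ref{thm:adhesive}). This yields a unique bijection $U(G)\cong U(G')$ compatible with $\hat m,\hat m'$ and $\hat e,\hat e'$. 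It remains to upgrade this set bijection to an isomorphism in $\PChem$: because $\hat m,\hat m'$ are matchings and $\hat e,\hat e'$ are embeddings, the compatibility forces the induced bijection to preserve all atom labels, charges and bond labels, hence to be an isomorphism of pre-chemical graphs, exactly as in the reflection arguments of Lemmas~\ref{lma:e-pbs-reflect} and~\ref{lma:em-pos-reflect}.

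The main obstacle I anticipate is the boundary bookkeeping in the explicit construction of $G$: when a chemical vertex of $B$ has some of its neighbours inside the deleted region $m(A)\setminus m(e(V_A))$ and others outside, the construction must correctly introduce $\alpha$-vertices (dangling bonds) and adjust charges so that $G$ is a genuine pre-chemical graph and $\hat m\in\M$, $\hat e\in\E$ hold simultaneously. This is precisely the information encoded by the van Kampen condition, so I expect the cleanest route is not to verify these conditions by hand but to deduce existence and the pre-chemical-graph structure abstractly from $(\E,\M)$-adhesivity, using the set-level construction only to pin down the underlying vertex set. The uniqueness-up-to-isomorphism then follows formally, with the only genuine work being the label-preservation check that promotes the set-isomorphism to a $\PChem$-isomorphism.
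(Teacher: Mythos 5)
You have the completed morphism classes the wrong way around, and this is not a cosmetic slip. In the paper's completion (forced by its use in Definition~\ref{def:reaction} and Theorem~\ref{thm:matching-react}), the matching $\hat m\in\M$ goes from the \emph{interface} (the domain of $e$) into the complement, and the embedding $\hat e\in\E$ includes the complement into the \emph{host}; you assign $\M$ to the inclusion complement$\,\to\,$host and $\E$ to interface$\,\to\,$complement. Both assignments fail in general. First, a vertex embedding is bijective on chemical vertices (Definition~\ref{def:vertex-embedding}), but the complement retains every chemical vertex of the host lying outside the match image, so interface$\,\to\,$complement can lie in $\E$ only when the match is surjective on chemical vertices. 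Second, a matching preserves all charges and bonds strictly, \emph{including the zero cases} (Definition~\ref{def:matching}); if complement$\,\to\,$host were a matching, the complement would be the full induced substructure of the host on its vertex set and could never ``forget'' the bonds and charges that $e$ adds over the interface --- yet forgetting exactly that structure is what makes the square a pushout. This is precisely the content of the paper's label assignment, which your construction misses: on the preserved region $m(V_A)$ the charge and edge labels of the complement $Z$ are inherited from the interface $B$, not from the host $C$ (your ``transfer the labels from the host, adjusting at the boundary'' gets this wrong on the entire preserved region, not merely at its boundary). With your typing the relevant span also has two $\E$-legs, so neither Proposition~\ref{prop:em-pushouts} nor the reflection Lemma~\ref{lma:em-pos-reflect} applies to your square, undercutting the set-level verification you propose. (Smaller slips of the same kind: $m(e(V_A))$ and $\hat e:A\to G$ do not typecheck --- you mean the image of the interface under $m\circ e$, and the factorisation of $m\circ e$ through $G$; note also that the paper deletes $m(V_A\setminus e(V_B))$, which differs from your $m(V_A)\setminus m(e(V_B))$ when the matching identifies a deleted $\alpha$-vertex with a kept one.)

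The second genuine gap is your proposed ``cleanest route'' of deducing \emph{existence} of the complement, and its pre-chemical-graph structure, ``abstractly from $(\E,\M)$-adhesivity''. Adhesivity (the van Kampen property) yields \emph{uniqueness} of pushout complements when they exist; it never yields existence --- even in $\Set$ or ordinary adhesive graph categories, existence requires a separate gluing/dangling argument. Here existence is a concrete fact about $\PChem$, built into the definitions of $\E$ and $\M$ (only $\alpha$-material is deleted, since $e$ is bijective on chemical vertices; images are ion-closed), and the paper accordingly proves it by writing the complement down explicitly --- vertex set $V_C\setminus m(V_A\setminus e(V_B))$, atom labels from $C$, charge/edge labels from $C$ off $m(V_A)$ and from $B$ on it --- reserving the appeal to Theorem~\ref{thm:adhesive} (via the Habel--Plump result~\cite{mn-adhesive2012}) for uniqueness alone. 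Your claim that ``the van Kampen property controls how labels at the boundary must behave'' is not a construction mechanism: van Kampen-ness is a property of pushout squares you must first exhibit, and since the adhesivity of $\PChem$ is itself proved from the existence of these pushouts, invoking it to manufacture the labels is circular. Your uniqueness outline (set-level uniqueness of complements along monos, upgraded by label-preservation) is workable in spirit and close to the paper's one-line citation, but it, too, rests on Lemma~\ref{lma:em-pos-reflect} and hence on the correct $(\E,\M)$-typing of the square.
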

\begin{proof}
The pre-chemical graph $Z$ is defined to have the vertex set $V_C\setminus m(V_A\setminus e(V_B))$. The atom labels are inherited from $C$, and likewise for the charge and edge labels on $V_C\setminus m(V_A)$, while on $m(V_A)$ the charge and edge labels are those of $B$. Uniqueness is a consequence of $(\E,\M)$-adhesivity~\cite{mn-adhesive2012}.
\end{proof}

\section{Reaction schemes}\label{sec:reaction-schemes}

In this section, we first give a formalisation of chemical reactions using double pushout rewriting -- our first perspective on chemical processes. Our approach is very similar, and inspired by, that of Andersen, Flamm, Merkle and Stadler~\cite{inferring-rule-composition}, with some important differences, such as having more strict requirements on the graphs representing molecular entities, and allowing for the free and unpaired electrons, represented by the symbol $\alpha$. After this, we characterise all the possible graph transformations resulting from such double pushout rewriting as certain partial bijections (Proposition~\ref{prop:reaction-concrete}). This gives rise to our second perspective on chemical processes -- the category of reactions (Definition~\ref{def:category-reactions}).

\begin{definition}[Reaction scheme]\label{def:reaction-scheme}
A {\em reaction scheme} is a span $A\xleftarrow f K\xrightarrow g B$ in $\PChem$ such that $A$ and $B$ are valence-complete and have the same net charge, $f$ and $g$ are vertex embeddings, and the span is terminal in the subcategory of spans with boundaries $A$ and $B$ and legs in $\E$.
\end{definition}

\begin{example}\label{ex:reaction-scheme}
The rule shown below appears in the reaction describing glucose phosphorylation. It is a reaction scheme in the sense of Definition~\ref{def:reaction-scheme}:
\begin{center}
\scalebox{1}{\tikzfig{thesis-ch2/chem-rule}}.
\end{center}
\end{example}

\begin{definition}[Reaction instance]\label{def:reaction}
A {\em reaction instance} is a double pushout diagram
\ctikzfig{thesis-ch2/reaction}
in $\PChem$ such that the top span $A\leftarrow K\rightarrow B$ is a reaction scheme, $m,m',m''\in\M$ are matchings, and $C$ and $E$ are chemical graphs.
\end{definition}

\begin{example}
The following reaction (glucose phosphorylation) is an instance of the reaction scheme in Example~\ref{ex:reaction-scheme}; we have labelled the vertices in the images of matchings on both sides (the sets $U_A$ and $U_B$), and we use the convention from chemistry where an unlabelled vertex is a carbon atom with an appropriate number of hydrogen atoms attached:
\begin{center}
\scalebox{.75}{\tikzfig{thesis-ch2/phosphorylation}}.
\end{center}
Note that the graphs appearing as the boundaries of the span of the reaction scheme are the disjoint unions of the valence completion and charge decomposition of the graphs in the above reaction (cf.~Examples~\ref{ex:valence-completion} and \ref{ex:charge-decomp}). We make this observation precise in Proposition~\ref{prop:reaction-concrete}.
\end{example}

\begin{theorem}\label{thm:matching-react}
Let $C$ be a chemical graph. Given a matching and a reaction scheme as below left, the diagram can be uniquely completed to the reaction instance on the right.
\ctikzfig{thesis-ch2/reaction-matching}
\end{theorem}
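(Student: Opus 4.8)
The plan is to build the two squares of the double pushout one at a time, exploiting the $(\E,\M)$-adhesive structure of $\PChem$ from Theorem~\ref{thm:adhesive}, and then to verify separately that the graph produced on the right is genuinely chemical. First I would construct the left-hand square. We are given the embedding $f:K\rightarrow A$ (a leg of the reaction scheme, so $f\in\E$) and the matching $m:A\rightarrow C$ (so $m\in\M$), forming a composable pair $K\xrightarrow{f}A\xrightarrow{m}C$. By Corollary~\ref{cor:po-complements} this pair has a unique pushout complement: there is a pre-chemical graph $D$ together with a matching $m':K\rightarrow D$ and an embedding $D\rightarrow C$ making the square a pushout. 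This yields the left square with $m'\in\M$ as required, and the uniqueness clause of Corollary~\ref{cor:po-complements} pins down $D$, together with its two legs, up to isomorphism.

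Next I would construct the right-hand square as an honest pushout. The span $D\xleftarrow{m'}K\xrightarrow{g}B$ has its $\M$-leg $m'$ and its $\E$-leg $g$ (again a leg of the reaction scheme), so Proposition~\ref{prop:em-pushouts} guarantees that its $(\E,\M)$-pushout exists: we obtain a pre-chemical graph $E$ with an embedding $D\rightarrow E$ and a matching $m'':B\rightarrow E$, the latter lying in $\M$ by the stability clause of that proposition. Assembling the two squares produces a diagram of exactly the shape of Definition~\ref{def:reaction}, with top span the given reaction scheme and all three vertical maps $m,m',m''$ matchings.

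The main obstacle is the remaining requirement of Definition~\ref{def:reaction}: that $E$ be a \emph{chemical} graph, i.e.\ (Definition~\ref{def:chemgraph}) a valence-complete pre-chemical graph with $\alpha(E)\cap\Crgp E=\eset$. Membership in $\PChem$ is already supplied by Proposition~\ref{prop:em-pushouts}, so the work is to check valence-completeness and the absence of positive binding sites. Here I would argue locally, using the explicit vertex-set descriptions in the proofs of Corollary~\ref{cor:po-complements} and Proposition~\ref{prop:em-pushouts}: the vertices of $E$ split into those coming from the context $D$ (untouched by the rewrite, hence inheriting their valence and charge from the chemical graph $C$) and those in the image of $m''$. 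For the latter I would use that $A$ and $B$ are both valence-complete and share the same net charge, together with the fact that the reaction scheme is \emph{terminal} among $\E$-legged spans with boundaries $A$ and $B$ (Definition~\ref{def:reaction-scheme}): terminality forces $K$ to record precisely the common binding sites of $A$ and $B$, so that the $\alpha$-vertices of $B$ are glued onto exactly the same context atoms that the $\alpha$-vertices of $A$ were matched to. This makes the passage from the $A$-pattern to the $B$-pattern a valence- and charge-preserving local replacement, whence $E$ inherits valence-completeness from $C$, while equality of the net charges of $A$ and $B$ rules out the creation of any positively charged $\alpha$-vertex. I expect this boundary bookkeeping --- reconciling the $\alpha$-vertices of $A$, $K$ and $B$ with their images in $C$, $D$ and $E$ --- to be the delicate part of the argument.

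Finally, uniqueness of the completion follows by combining the two uniqueness statements already invoked: $D$ with its legs is determined by the pushout-complement clause of Corollary~\ref{cor:po-complements}, and $E$ with its legs is then determined, up to isomorphism, by the universal property of the pushout. Hence the whole reaction instance is unique up to isomorphism, as claimed.
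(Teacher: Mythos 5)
Your proposal follows exactly the paper's proof: $D$ is obtained as the unique pushout complement of $K\xrightarrow{f}A\xrightarrow{m}C$ via Corollary~\ref{cor:po-complements}, and $E$ as the $(\E,\M)$-pushout of the span $D\xleftarrow{m'}K\xrightarrow{g}B$ via Proposition~\ref{prop:em-pushouts}, with uniqueness supplied by those same two results. The only difference is that you additionally sketch the check that $E$ is a genuine chemical graph (valence-complete, with $\alpha(E)\cap\Crgp E=\eset$), a condition that Definition~\ref{def:reaction} requires but the paper's two-line proof leaves implicit --- this is a worthwhile strengthening rather than a deviation.
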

\begin{proof}
The pre-chemical graph $D$ can be constructed as the pushout complement (Corollary~\ref{cor:po-complements}). The pre-chemical graph $E$ is then obtained as the pushout, which exists by Proposition~\ref{prop:em-pushouts}.
\end{proof}

A reaction instance can be presented in a more concrete (yet equivalent) way, which involves mappings that are not morphisms in $\PChem$.
\begin{proposition}\label{prop:reaction-concrete}
Let $C$ and $E$ be chemical graphs. The data of a reaction instance $C\rightarrow E$ can be equivalently (up to an isomorphism) presented as a tuple $(U_C,U_E,b,i)$ where $U_C\sse V_C$ and $U_E\sse V_E$ are matchable subsets with equal net charge, $b:\Chem{U_C}\rightarrow \Chem{U_E}$ is a bijection preserving the atom labels, and $i:V_C\setminus U_C\rightarrow V_E\setminus U_E$ is an isomorphism of pre-chemical graphs such that for all $u\in\Chem{U_A}$ and $a\in V_A\setminus U_A$ we have $m_A(u,a)=m_B(bu,ia)$.
\end{proposition}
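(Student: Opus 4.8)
The plan is to establish a correspondence between reaction instances $C \to E$ (double pushout diagrams in the sense of Definition~\ref{def:reaction}) and tuples $(U_C, U_E, b, i)$ by carefully unpacking the structure of the double pushout squares and using the concrete descriptions of pushouts and pushout complements established in Section~\ref{sec:adhesivity}. The strategy is bidirectional: first I would extract the tuple from a given reaction instance, then conversely reconstruct the instance from the tuple, and finally verify that these assignments are mutually inverse up to isomorphism.

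For the forward direction, given a reaction instance as in Definition~\ref{def:reaction}, I would set $U_C \coloneqq m'(V_A)$ and $U_E \coloneqq m''(V_B)$, where $m' : A \to C$ and $m'' : B \to E$ are the matchings in the bottom row. By Lemma~\ref{lma:matching-matchable} (via Proposition~\ref{prop:matching-matchable}), these images are matchable subsets; moreover since $A$ and $B$ have equal net charge in the reaction scheme and matchings preserve charge, $U_C$ and $U_E$ have equal net charge. The bijection $b : \Chem{U_C} \to \Chem{U_E}$ is defined by transporting the vertex correspondence through the reaction scheme: on chemical vertices the span legs $f, g$ are bijective (since $A, B, K$ share their chemical vertices as a reaction scheme is terminal among $\E$-spans), so $b \coloneqq (m''|_{\Chem B}) \circ g \circ f^{-1} \circ (m'|_{\Chem A})^{-1}$ is a well-defined atom-label-preserving bijection. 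The isomorphism $i : V_C \setminus U_C \to V_E \setminus U_E$ is read off from the intermediate graph $D$: by the explicit pushout complement construction (Corollary~\ref{cor:po-complements}) the complement vertices of $C$ pass unchanged through $D$ into $E$, and the two pushout squares identify them, yielding the required isomorphism with $m_C(u,a) = m_E(bu, ia)$ following from how edges between the matched region and its complement are copied in the pushout construction of Proposition~\ref{prop:em-pushouts}.

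For the converse, given a tuple $(U_C, U_E, b, i)$, I would use Proposition~\ref{prop:matching-matchable} to produce matchings $m' : A \to C$ and $m'' : B \to E$ with images $U_C$ and $U_E$ respectively, taking $A$ and $B$ to be the disjoint unions of valence completions and charge decompositions (Definitions~\ref{def:valence-completion} and~\ref{def:charge-decomp}) of the matched subsets. The bijection $b$ together with the equal-net-charge condition lets me build the span $A \leftarrow K \rightarrow B$ and verify it is a reaction scheme: I would check $A$ and $B$ are valence-complete, have equal net charge, and that the span is terminal among $\E$-spans with those boundaries. Theorem~\ref{thm:matching-react} then completes the matching $m'$ and the reaction scheme into a full reaction instance, producing $D$ and $E$ canonically; the isomorphism $i$ must be used to certify that this canonically produced codomain agrees with the given $E$ up to isomorphism.

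The main obstacle I anticipate is the verification that the two constructions are mutually inverse \emph{up to isomorphism}, and in particular pinning down exactly how much data the reaction scheme records beyond the tuple. The subtlety is that a reaction scheme carries valence-completion and charge-decomposition $\alpha$-vertices (the formal symbols of Definitions~\ref{def:valence-completion} and~\ref{def:charge-decomp}) that do not appear in $C$ or $E$ directly but encode the dangling bonds and charges at the boundary of the matched region; I must check that the terminality condition in Definition~\ref{def:reaction-scheme} forces these to be exactly the valence completion and charge decomposition of $U_C$ (equivalently $U_E$), so that no spurious freedom remains. Establishing this rigidity — that the reaction scheme is determined up to isomorphism by $(U_C, U_E, b)$ together with the compatibility condition on boundary edges — is where the bulk of the careful bookkeeping lies, and it rests essentially on the uniqueness clauses in Corollary~\ref{cor:po-complements} and the terminality in Definition~\ref{def:reaction-scheme}.
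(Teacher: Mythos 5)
Your overall route coincides with the paper's: the forward direction reads off $U_C=m(A)$, $U_E=m''(B)$, and the paper obtains $b$ and $i$ simultaneously as restrictions of $g'(f')^{-1}$ through the interface graph $D$ of the bottom cospan; your variant of transporting through the top span, $b=(m''|_{\Chem B})\circ g\circ f^{-1}\circ (m'|_{\Chem A})^{-1}$, agrees with this by chasing the two commuting squares, so that direction is fine (note only that bijectivity of $f,g$ on chemical vertices is already part of Definition~\ref{def:vertex-embedding}, not a consequence of terminality). For the converse, your choice of $A$ and $B$ as disjoint unions of valence completions and charge decompositions is exactly the paper's, via Proposition~\ref{prop:matching-matchable}.

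The genuine gap is in how you propose to obtain the middle object $K$ of the span. You hope that the terminality clause of Definition~\ref{def:reaction-scheme} together with the uniqueness of pushout complements (Corollary~\ref{cor:po-complements}) will \emph{force} the scheme, i.e.\ that the scheme is ``determined up to isomorphism by $(U_C,U_E,b)$'' and can therefore be invoked abstractly. This is not how the argument can go, and the rigidity you posit is stronger than what holds: the bijection $b$ is only given on chemical vertices, and the whole difficulty is to extend it to the boundary $\alpha$-vertices of $U_C^{\alpha}$ and $U_E^{\alpha}$, where no canonical extension exists. The paper must construct this extension by hand: a case analysis building a bijection $\bar b$ on the $\alpha$-vertices sitting over each matched chemical vertex (pairing neutral covalent-neighbour $\alpha$-vertices whose chemical neighbours correspond under $b$, then charged ionic ones with matching charge and $b$-compatible ionic neighbourhoods), followed by choosing subsets $A^{\mathtt c}_u$, $A^{\mathtt i\pm}_u$, $A^{\pm}$ of $\alpha$-vertices whose cardinalities are \emph{minima} of counts on the two sides, and finally pairing the leftovers arbitrarily. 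Only after this pairing can $K$ be defined (keeping a charge or edge label exactly when the two sides agree under $\hat b$), and $D$ then arises as the common pushout complement. In particular the construction involves non-canonical choices for the leftover $\alpha$-vertices, so different choices yield different (though instance-equivalent) schemes — terminality ensures $K$ retains all structure common to the two sides \emph{relative to a chosen pairing}, but it does not select the pairing for you. Your proposal correctly locates the bookkeeping but omits the one idea that makes the converse go through, namely this explicit boundary-matching procedure; invoking Theorem~\ref{thm:matching-react} afterwards, as you suggest, is consistent with the paper (which equivalently takes $D$ as the pushout complement on both sides), but it cannot be reached without first building $\bar b$ and $K$ concretely.
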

\begin{proof}
Given a reaction instance
\begin{center}
\tikzfig{thesis-ch2/reaction},
\end{center}
we obtain the desired tuple as $(m(A),m''(B),b,i)$, where $b$ and $i$ are both appropriate restrictions of $g'(f')^{-1}$.

Conversely, given a tuple as in the statement of the proposition, we obtain the following reaction instance
\begin{center}
\tikzfig{thesis-ch2/reaction-modified},
\end{center}
where $m$ and $m'$ are the matchings defined from matchable subsets in the proof of Proposition~\ref{prop:matching-matchable}. In order to define the graph $K$, we first note that the bijection $b:\Chem{U_C}\rightarrow \Chem{U_E}$ induces an atom label preserving bijection
$$\bar b : m^{-1}(\Chem C)\cap\alpha(U_C^*)\rightarrow (m')^{-1}(\Chem E)\cap\alpha(U_E^*)$$
as follows. For every $c\in\Chem{U_C}\cap m(\alpha(U_C^*))$, we define $m^{-1}(c)\rightarrow (m')^{-1}(bc)$ by the following procedure:
\begin{enumerate}
\item Let $N_c\coloneqq a_1,\dots,a_n$ and $C_c\coloneqq b_1,\dots,b_m$ be lists all the neutral and charged vertices of $m^{-1}(c)$, respectively. Similarly, let $N_{bc}\coloneqq c_1,\dots,c_q$ and $C_{bc}\coloneqq d_1,\dots,d_p$ be lists all the neutral and charged vertices of $(m')^{-1}(bc)$. Note that we have $n+m=p+q$.
\item For each $i=1,\dots,n$, let $n_i$ be the unique covalent neighbour of $a_i$. If there is a $c_j\in N_{bc}$ such that $c_j\in\CN_{U_E^*}(bn_i)$, define $a_i\mapsto c_j$, and remove $a_i$ from $N_c$ and $c_j$ from $N_{bc}$.
\item For each $i=1,\dots,m$, if there is a $d_j\in C_{bc}$ with the same charge as $b_i$ such that $b\left(\IN_{U_C^*}(b_i)\right)=\IN_{U_E^*}(d_j)$, define $b_i\mapsto d_j$, and remove $b_i$ from $C_c$ and $d_j$ from $C_{bc}$.
\item The remaining vertices in $N_c$ and $C_c$ may be mapped to any remaining vertices in $N_{bc}$ and $C_{bc}$ (as long as we have a bijection).
\end{enumerate}
Next, define the following subset of $\alpha$-vertices $A\sse V_{U_C}$. For every $u\in\Chem{U_C}$, let $A^{\mathtt c}_u$ be any subset of $\CN_C(u)\cap\alpha(U_C)$ of size
$$\min\left(\left|\CN_C(u)\cap\alpha(U_C)\right|,\left|\CN_E(bu)\cap\alpha(U_E)\right|\right).$$
Let $A^{\mathtt i-}_u$ be any subset of $\IN_C(u)\cap\alpha(U_C)\cap\Crgn{U_C}$ of size
$$\min\left(\left|\IN_C(u)\cap\alpha(U_C)\cap\Crgn{U_C}\right|,\left|\IN_E(bu)\cap\alpha(U_E)\cap\Crgn{U_E}\right|\right).$$
Similarly, let $A^{\mathtt i+}_u$ be any subset of positive ionic $\alpha$-neighbours of $u$, whose size is that of positive ionic $\alpha$-neighbours of $u$ or positive ionic $\alpha$-neighbours of $b(u)$, whichever is smaller. Let $A^{-}$ be any subset of isolated negative $\alpha$-vertices in $U_C$, whose size is the smaller of isolated negative $\alpha$-vertices in $U_C$ and isolated negative $\alpha$-vertices in $U_E$. The set $A^{+}$ is defined similarly, but for isolated positive $\alpha$-vertices. We then define
$$A\coloneqq A^{-}\cup A^{+}\cup\bigcup_{u\in\Chem{U_C}} A^{\mathtt c}_u\cup A^{\mathtt i-}_u\cup A^{\mathtt i+}_u.$$
Note that there is an injection $\iota : A\rightarrow U_E^*$ that takes each vertex $a\in A$ to an $\alpha$-vertex $\iota(a)$ with the same charge, such that the neighbour of $a$ is mapped by $b$ to the neighbour of $\iota(a)$. We denote by
$$\hat b : \Chem{U_C^*}\cup\left(m^{-1}(\Chem C)\cap\alpha(U_C^*)\right)\cup A\rightarrow U_E^*$$
the map that acts as $b+\bar b+\iota$ on the disjoint summands. We now define the graph $K$ as follows:
\begin{itemize}
\item the vertex set is $V_K\coloneqq\Chem{U_C^*}\cup\left(m^{-1}(\Chem C)\cap\alpha(U_C^*)\right)\cup A$,
\item the atom labels are the same as in $U_C^*$,
\item for every $k\in V_K$, if $\tau^{\crg}_{U_C^*}(k)=\tau^{\crg}_{U_E^*}(\hat bk)\eqqcolon n$, then define $\tau^{\crg}_K(k)\coloneqq n$, otherwise define $\tau^{\crg}_K(k)\coloneqq 0$,
\item for all $k,t\in V_K$, if $m_{U_C^*}(k,t)=m_{U_E^*}(\hat bk,\hat bt)\eqqcolon n$, then define $m_K(k,t)\coloneqq n$, otherwise define $m_K(k,t)\coloneqq 0$.
\end{itemize}
The graph $D$ is then constructed as the pushout complement for both $m$ and $m'$.
\end{proof}

Proposition~\ref{prop:reaction-concrete} motivates the following definition:
\begin{definition}[Category of reactions]\label{def:category-reactions}
We denote by $\React$ the {\em category of reactions}, whose
\begin{itemize}
\item objects are chemical graphs,
\item morphisms $A\rightarrow B$ are tuples $(U_A,U_B,b,i)$, where
\begin{itemize}
\item $U_A\sse V_A$ and $U_B\sse V_B$ are subsets with $\Net{U_A}=\Net{U_B}$,
\item $b:\Chem{U_A}\rightarrow\Chem{U_B}$ is a bijection preserving the atom labels,
\item $i:V_A\setminus U_A\rightarrow V_B\setminus U_B$ is an isomorphism of pre-chemical graphs,
\end{itemize}
such that for all $u\in\Chem{U_A}$ and $a\in V_A\setminus U_A$ we have
$$m_A(u,a)=m_B(bu,ia),$$
\item the composition of $(U_A,U_B,b,i):A\rightarrow B$ and $(W_B,W_C,c,j):B\rightarrow C$ is given by
$$(Z_A,Z_C,(c+j)(b+i),ji):A\rightarrow C,$$
where $Z_A\coloneqq U_A\cup i^{-1}(W_B\setminus U_B))$ and $Z_C\coloneqq W_C\cup j(U_B\setminus W_B))$,
\item for a molecular graph $A$, the identity is given by $(\eset,\eset,!,\id_A)$, where $!$ is the unique endomorphism on the empty set.
\end{itemize}
\end{definition}
Note that the composition in $\React$ is {\em not} the composition in the usual category of partial bijections: instead, it crucially relies on the fact that there is an isomorphism between the unchanged parts of the graph. The category $\React$ has a dagger structure: the dagger of $(U_A,U_B,b,i):A\rightarrow B$ is given by $(U_B,U_A,b^{-1},i^{-1}):B\rightarrow A$. Given a morphism $r\in\React$, we will denote its dagger by $\overline r$.

\begin{remark}
Note that the morphisms in $\React$ are slightly more general than tuples arising from reaction schemes in Proposition~\ref{prop:reaction-concrete}. Namely, we do not require the subsets in a morphism in $\React$ to be matchable. This generalisation is, however, merely technical, as we can always extend any subset to the smallest matchable one while keeping the same information about which bonds and charges are reconnected and exchanged. The reason for allowing more morphisms in $\React$ is to obtain an exact correspondence with the disconnection rules (Chapter~\ref{ch:disc-rules}), which operate only on a single edge or vertex at a time, and hence are unable to capture global conditions, such as being matchable.
\end{remark}

\chapter{Disconnection rules}\label{ch:disc-rules}
A {\em disconnection rule} is a partial endofunction on the set of chemical graphs. We define four classes of disconnection rules, all of which have a clear chemical significance: two versions of {\em electron detachment}, {\em ionic bond breaking} and {\em covalent bond breaking}. These local chemical transformations are our third perspective on chemical processes. The reader may want to check Figure~\ref{fig:disc-rules} before Definition~\ref{def:disc-rules} below, as it gives an intuitive explanation of our approach.

For the purposes of mathematical precision, our set of four disconnection rules is more fine-grained than what one would see in a typical textbook on retrosynthesis, where movement of electrons is usually implicitly modelled in the same step as disconnecting a bond, rather than including electron detachment as a separate step (see, for instance, the discussion on the choice of polarity in~\cite[p.~9]{organic-synthesis}).

\section{The category of disconnection rules}\label{sec:disc-rules}

We treat the disconnection rules as syntax, which generate the {\em terms} (Definition~\ref{def:terms}), whose equivalence classes under the equations of Figure~\ref{fig:disc-axioms} form the morphisms in the {\em disconnection category} (Definition~\ref{def:disc-cat}). The payoff such a syntactic presentation is an axiomatic view of chemical reactions: in Section~\ref{sec:disc-to-react}, we construct a functor from the disconnection category to the category of reactions, and show that every reaction can be represented as a sequence of disconnection rules in an essentially unique way.

\begin{figure}
\centering
\scalebox{1}{\tikzfig{disc-rules-as-rewrites}}
\caption{The four disconnection rules.\label{fig:disc-rules}}
\end{figure}

\begin{definition}[Disconnection rules]\label{def:disc-rules}
Let $u,v,a,b\in\VS$ be pairwise distinct vertex names. Let $U\in\{u,uv\}$ and $D\in\{\eset,ab\}$ range over the specified lists of vertex names. The four {\em disconnection rules} are defined by the tables in Figure~\ref{fig:disc-rules-partial-fns} as follows: a chemical graph $A$ is in the domain of $d^U_D$ if $U\sse V_A$ but $D\cap V_A=\eset$, and the additional conditions of the first column (top table) hold; the output chemical graph $d(A)$ has the vertex set $V_A\cup D$, and the labelling functions on $U\cup D$ are defined by the remaining columns (vertex labelling in the top table, edge labelling in the bottom table), while the labelling functions agree with those of $A$ on $V_A\setminus U$.
\end{definition}
\begin{figure}
\centering
\begin{tabular}{| c || c | c | c | c | c | }
\hline
$d^U_D$ & $A\in\dom(d)$ & $\tau^{\crg}_{d(A)}(u)$ & $\tau^{\crg}_{d(A)}(v)$ & $\tau_{d(A)}(a)$ & $\tau_{d(A)}(b)$ \rule{0pt}{10pt}\rule[-5pt]{0pt}{0pt} \\ \hhline{|=#=|=|=|=|=|}
$E^u_{ab}$ & \makecell{$u\in\Chem A$ \\ $u\in\Crgn A$} & $\tau_A^{\crg}(u)+1$ & N/A & $(\alpha,0)$ & $(\alpha,-1)$ \\ \hline
$E^{uv}$ & \makecell{$u\in\Chem A$ \\ $u\notin\Crgn A$ \\ $v\in\alpha(A)$ \\ $m_A(u,v)=1$} & $\tau_A^{\crg}(u)+1$ & $-1$ & N/A & N/A \\ \hline
$I^{uv}$ & \makecell{$m_A(u,v)=\ib$ \\ $u\in\Crgp A$ \\ $v\in\Crgn A$} & $\tau_A^{\crg}(u)$ & $\tau_A^{\crg}(v)$ & N/A & N/A \\ \hline
$C^{uv}_{ab}$ & \makecell{$u,v\in\Chem A$ \\ $m_A(u,v)\notin\{0,\ib\}$} & $\tau_A^{\crg}(u)$ & $\tau_A^{\crg}(v)$ & $(\alpha,0)$ & $(\alpha,0)$ \\ \hline
\end{tabular}
\begin{tabular}{| c || c | c | c |}
\hline
$d^U_D$ & $m_{d(A)}(u,v)$ & $m_{d(A)}(u,a)$ & $m_{d(A)}(v,b)$ \rule{0pt}{10pt}\rule[-5pt]{0pt}{0pt} \\ \hhline{|=#=|=|=|}
$E^u_{ab}$ & N/A & $1$ & N/A \\ \hline
$E^{uv}$ & $0$ & N/A & N/A \\ \hline
$I^{uv}$ & $0$ & N/A & N/A \\ \hline
$C^{uv}_{ab}$ & $m_A(u,v)-1$ & $1$ & $1$ \\ \hline
\end{tabular}
\caption{The disconnection rules defined as partial functions.\label{fig:disc-rules-partial-fns}}
\end{figure}

Note that the disconnection rules look a lot like (a subset of) reaction schemes (Definition~\ref{def:reaction-scheme}): indeed, each disconnection rule can be realised as a collection of reaction schemes. In Section~\ref{sec:reaction-schemes}, we already saw that reactions arise from reaction schemes (Proposition~\ref{prop:reaction-concrete}). In Section~\ref{sec:disc-to-react}, we shall strengthen this result by showing that the disconnection rules generate and axiomatise all the reactions as morphisms in $\React$.

We observe that each disconnection rule is injective (as a partial function), and hence has an inverse partial function. We use the disconnection rules to define the {\em terms}, which will be used to define the morphisms in the disconnection category.
\begin{definition}[Terms]\label{def:terms}
The set of {\em terms} with types is generated by the following recursive procedure:
\begin{itemize}
\item for every chemical graph $A$, let $\id:A\rightarrow A$ be a term,
\item for every chemical graph $A$ and every $u\in V_A$, let $S^u:A\rightarrow A$ be a term,
\item for every chemical graph $A$, every $u\in\alpha(A)$ and every $v\in\VS$ such that $v\notin V_A\setminus\{u\}$, let $R^{u\mapsto v}:A\rightarrow A(u\mapsto v)$ be a term,
\item for every disconnection rule $d$ and every chemical graph $A$ in the domain of $d$, both $d:A\rightarrow d(A)$ and $\bar d:d(A)\rightarrow A$ are terms,
\item if $\mathtt t:A\rightarrow B$ and $\mathtt s:B\rightarrow C$ are terms, then $\mathtt t;\mathtt s:A\rightarrow C$ is a term.
\end{itemize}
\end{definition}
The first and the fifth items take care of the usual categorical structure, while the terms $S^u$ generated by the second item correspond to ``touching'' the vertex $u$ without changing the structure of the graph, and the terms $R^{u\mapsto v}$ rename an existing $\alpha$-vertex $u$ into a fresh vertex $v$.

We refer to the terms of the form $d:A\rightarrow B$ and $\bar d:B\rightarrow A$ generated by the fourth item as {\em disconnections} and {\em connections}, respectively. More specifically, we use the symbols $E^{<0}$, $E^{\geq 0}$, $I$ and $C$ to denote the disconnections corresponding to the specific disconnection rules, and similarly the symbols $\bar{E}^{<0}$, $\bar{E}^{\geq 0}$, $\bar{I}$ and $\bar{C}$ refer to the corresponding connections. Similarly, $S$ and $R$ refer to the terms generated by the second and third items. The same letters in the typewriter type font ($\mathtt E^{<0}$, $\mathtt E^{\geq 0}$, $\mathtt I$, $\mathtt C$, $\bar{\mathtt E}^{<0}$, $\bar{\mathtt E}^{\geq 0}$, $\bar{\mathtt I}$, $\mathtt S$ and $\mathtt R$) are used to denote a sequence of terms of the corresponding kind.

Let us define the endofunction $\overline{()}$ on terms by the following recursion:
\begin{itemize}
\item $\left(\id:A\rightarrow A\right) \mapsto \left(\id:A\rightarrow A\right)$,
\item $\left(S^u:A\rightarrow A\right) \mapsto \left(S^u:A\rightarrow A\right)$,
\item $\left(R^{u\mapsto v}:A\rightarrow A(u\mapsto v)\right) \mapsto \left(R^{v\mapsto u}:A(u\mapsto v)\rightarrow A\right)$,
\item $\left(d:A\rightarrow B\right) \mapsto \left(\bar d:B\rightarrow A\right)$,
\item $\left(\bar d:A\rightarrow B\right) \mapsto \left(d:B\rightarrow A\right)$,
\item $\overline{\mathtt t;\mathtt s}\coloneqq\overline{\mathtt s};\overline{\mathtt t}$.
\end{itemize}

For defining equations, it will be useful to allow untyped terms: the equations (Figure~\ref{fig:disc-axioms}) capture interactions between local graph transformations (i.e.~the disconnection rules), so that the same equation should hold for a whole class of chemical graphs.
\begin{definition}[Untyped terms, well-typedness]\label{def:untyped-terms}
An {\em untyped term} is an element of the free monoid on the set
$$\{\id,S^u,R^{a\mapsto b},E^{ua},E^u_{ab},C^{uv}_{ab},I^{uv},\bar E^{ua},\bar E^u_{ab},\bar C^{uv}_{ab},\bar I^{uv} : u,v,a,b\in\VS\},$$
where we use the symbol $;$ to indicate the multiplication of the monoid.

Given an untyped term $\mathtt t$ and chemical graphs $A$ and $B$, we say that the expression $\mathtt t : A\rightarrow B$
is {\em well-typed} if it is in fact a term, that is, if it can be constructed using the recursive procedure of Definition~\ref{def:terms}.
\end{definition}
We define the binary relation $\leq$ on the set of untyped terms by letting $\mathtt t\leq\mathtt s$ if whenever $\mathtt t : A\rightarrow B$ is well-typed, then so is $\mathtt s : A\rightarrow B$.

The endofunction $\overline{()}$ on the untyped terms is defined in exactly the same way as for the terms with types, simply ignoring the types. Note that $\mathtt t : A\rightarrow B$ is well-typed if and only if $\overline{\mathtt t} : B\rightarrow A$ is. Moreover, observe that $\leq$ defines a preorder on the untyped terms. Consequently, we have $\mathtt t\leq\mathtt s$ if and only if $\overline{\mathtt t}\leq\overline{\mathtt s}$.

Given an untyped term $\mathtt t$, there are either no chemical graphs such that $\mathtt t : A\rightarrow B$ is well-typed, or there are infinitely many such graphs. The latter case is the reason for introducing the untyped terms: we want certain equalities to hold {\em whenever} both sides are well-typed.
\begin{definition}[Term equality]\label{def:term-id}
Let $\approx$ be an equivalence relation on the set of untyped terms. This induces the equivalence relation $\equiv$ on the set of terms as follows: for two terms $\mathtt t,\mathtt s:A\rightarrow B$ with the same type, we let $\mathtt t\equiv\mathtt s$ if either $\mathtt t\approx\mathtt s$ or $\overline{\mathtt t}\approx\overline{\mathtt s}$ as untyped terms.
\end{definition}
Given an equivalence relation $\approx$ on the untyped terms, let us introduce the following shorthand binary relations on the untyped terms:
\begin{itemize}
\item $\mathtt t\lesssim\mathtt s$ if $\mathtt t\approx\mathtt s$ and $\mathtt t\leq\mathtt s$,
\item $\mathtt t\simeq\mathtt s$ if $\mathtt t\lesssim\mathtt s$ and $\mathtt s\lesssim\mathtt t$.
\end{itemize}

\begin{definition}[Disconnection category]\label{def:disc-cat}
The {\em disconnection category} $\Disc$ has as objects the chemical graphs. The set of morphisms $\Disc(A,B)$ is given by the terms of type $A\rightarrow B$, subject to the usual associativity and unitality equations of a category, together with the identities $\equiv$ induced (in the sense of Definition~\ref{def:term-id}) by the equivalence relation defined in Figure~\ref{fig:disc-axioms}.
\end{definition}
\begin{figure}[p]
\centering
\begin{minipage}{0.52\textwidth}
\fbox{
\begin{minipage}{\textwidth}
\begin{align}
R^{u\mapsto z};R^{z\mapsto w} &\lesssim R^{u\mapsto w}\label{disc-eq:trans} \\
R^{u\mapsto z};R^{v\mapsto w} &\approx R^{v\mapsto w};R^{u\mapsto z}\label{disc-eq:rcomm} \\
R^{u\mapsto u} &\lesssim S^u\label{disc-eq:refl} \\
R^{b\mapsto z};R^{a\mapsto b} &\approx S^b;R^{a\mapsto z} \label{disc-eq:rsymm} \\
R^{u\mapsto v};S^w &\approx S^w;R^{u\mapsto v}\label{disc-eq:sr1} \\
R^{u\mapsto v};S^v &\simeq S^u;R^{u\mapsto v} \simeq R^{u\mapsto v}\label{disc-eq:sr2}
\end{align}
\end{minipage}}\vspace{0.2cm}
\fbox{
\begin{minipage}{\textwidth}
\begin{align}
R^{u\mapsto v};d^U_D &\approx d^U_D;R^{u\mapsto v}\label{disc-eq:rd1} \\
R^{u\mapsto v};E^{wv} &\simeq E^{wu};R^{u\mapsto v}\label{disc-eq:rd2} \\
d^U_{D[u]};R^{u\mapsto v} &\simeq d^U_{D[v/u]}\label{disc-eq:rd3} \\
d^{U'}_{ij};\bar h^U_{ab};R^{i\mapsto c};R^{j\mapsto d} &\lesssim \bar h^U_{ab};d^{U'}_{cd}\label{disc-eq:rd4}
\end{align}
\end{minipage}}\vspace{0.2cm}
\fbox{
\begin{minipage}{\textwidth}
\begin{align}
d^U_{ab};\bar d^U_{cd} &\approx S^U;R^{c\mapsto a};R^{d\mapsto b}\label{disc-eq:ddbar1} \\
d^U_{ab};\bar d^U_{cb} &\approx S^U;R^{c\mapsto a}\label{disc-eq:ddbar2} \\
d^U_{ab};\bar d^U_{ad} &\approx S^U;R^{d\mapsto b}\label{disc-eq:ddbar3} \\
d^U_D;\bar d^U_D &\lesssim S^U\label{disc-eq:ddbar4} \\
\bar d^U_D;d^U_D &\lesssim S^U;S^D\label{disc-eq:ddbar4-2} \\
E^{ua};\bar E^{ub} &\approx S^u;R^{a\mapsto z};R^{b\mapsto a};R^{z\mapsto b}\label{disc-eq:eebar} \\
\bar d^{uv};d^{wz} &\approx d^{wz};\bar d^{uv}\label{disc-eq:comm2}
\end{align}
\end{minipage}}%
\end{minipage}%
\begin{minipage}{0.04\textwidth}
\hspace{0.1cm}
\end{minipage}%
\begin{minipage}{0.36\textwidth}
\fbox{
\begin{minipage}{\textwidth}
\begin{align}
S^u;S^v &\simeq S^v;S^u\label{disc-eq:scomm} \\
S^u;S^u &\simeq S^u\label{disc-eq:sidem} \\
S^u;d^U_D &\lesssim d^U_D;S^u\label{disc-eq:sd1} \\
d^{U[v]}_D;S^v &\simeq d^{U[v]}_D\label{disc-eq:sd2} \\
C^{uv}_{ab} &\simeq C^{vu}_{ba}\label{disc-eq:cs}
\end{align}
\end{minipage}}\vspace{0.2cm}
\fbox{
\begin{minipage}{\textwidth}
\begin{align}
d^U_D;d^{U'}_{D'} &\simeq d^{U'}_{D'};d^U_D\label{disc-eq:comm1} \\
C^{uv}_{ab};I^{wz} &\simeq I^{wz};C^{uv}_{ab}\label{disc-eq:comm3} \\
E^u_{ab};I^{wz} &\lesssim I^{wz};E^u_{ab}\label{disc-eq:comm4} \\
E^{uv};I^{wz} &\lesssim I^{wz};E^{uv}\label{disc-eq:comm5} \\
\bar E^{uv};I^{wz} &\lesssim I^{wz};\bar E^{uv}\label{disc-eq:comm6} \\
\bar E^u_{ab};I^{wz} &\lesssim I^{wz};\bar E^u_{ab}\label{disc-eq:comm7} \\
\bar C^{uv}_{ab};I^{wz} &\lesssim I^{wz};\bar C^{uv}_{ab}\label{disc-eq:comm8} \\
E^u_{ab};C^{wz}_{cd} &\simeq C^{wz}_{cd};E^u_{ab}\label{disc-eq:comm9} \\
E^{uv};C^{wz}_{cd} &\lesssim C^{wz}_{cd};E^{uv}\label{disc-eq:comm10} \\
\bar E^{uv};C^{wz}_{cd} &\simeq C^{wz}_{cd};\bar E^{uv}\label{disc-eq:comm11} \\
E^{uv};E^w_{cd} &\lesssim E^w_{cd};E^{uv}\label{disc-eq:comm12} \\
\bar E^{uv};E^w_{cd} &\simeq E^w_{cd};\bar E^{uv}\label{disc-eq:comm13}
\end{align}
\end{minipage}}
\end{minipage}
\caption{The equivalence relation $\approx$ inducing the identities in the disconnection category. Here $d$ and $h$ range over $\{E,C,I\}$, while $S^U$ stands for the sequence $S^u;S^w$ if $U=uw$. Given vertex names $a,b\in\VS$, the notation $D[a]$ means $a$ occurs in $D$, and $D[b/a]$ means the occurrence of $a$ in $D$ is replaced with $b$. Note that we use the shorthand relations $\lesssim$ and $\simeq$: these are strictly speaking not part of the definition, but are used to provide the extra information of when well-typedness of one side of an identity implies well-typedness of the other.\label{fig:disc-axioms}}
\end{figure}

Note that the assignment $\overline{()}:\Disc\rightarrow\Disc$ is functorial, thus making $\Disc$ a dagger category~\cite{selinger-dagger,heunen-vicary-book}.

\begin{proposition}\label{prop:disc-ids}
The following identities are derivable in $\Disc$:
\begin{align}
d^U_{D[a]};S^a &\simeq d^U_{D[a]},\label{prop:disc-sabsorb} \\
\bar d^U_{ab};d^U_{cd} &\approx S^U;R^{a\mapsto j};R^{b\mapsto d};R^{j\mapsto c},\label{prop:disc-ids0} \\
R^{z\mapsto c};R^{w\mapsto d};d^U_{ab} &\approx R^{z\mapsto a};R^{w\mapsto b};d^U_{cd},\label{prop:disc-ids1} \\
R^{z\mapsto c};d^U_{ab} &\approx R^{z\mapsto a};d^U_{cb},\label{prop:disc-ids2} \\
R^{w\mapsto d};d^U_{ab} &\approx R^{w\mapsto b};d^U_{ad},\label{prop:disc-ids3} \\
d^U_{ab};d^U_{cd} &\simeq d^U_{ad};d^U_{cb}.\label{disc-eq:ddindex}
\end{align}
\end{proposition}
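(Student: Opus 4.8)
The plan is to derive all six identities as formal consequences of the generating relations of Figure~\ref{fig:disc-axioms}, treating each as a rewriting of \emph{untyped} terms and then checking the well-typedness annotations ($\approx$, $\lesssim$, $\simeq$) separately. Three moves will do most of the work: commuting a renaming past a disconnection via \eqref{disc-eq:rd1}, relabelling an output of a disconnection via \eqref{disc-eq:rd3}, and composing or transposing renamings via \eqref{disc-eq:trans} and \eqref{disc-eq:rsymm}. I would establish the identities in the printed order, since the later ones reuse the earlier renaming-transport steps. The absorption \eqref{prop:disc-sabsorb} is immediate: rewriting $S^a\approx R^{a\mapsto a}$ by \eqref{disc-eq:refl} and then applying \eqref{disc-eq:rd3} with $u=v=a$ gives $d^U_{D[a]};S^a\approx d^U_{D[a]};R^{a\mapsto a}\simeq d^U_{D[a]}$, with both typing directions holding because $a\in D$ is an output $\alpha$-vertex exactly when $S^a$ (equivalently $R^{a\mapsto a}$) is applicable.

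Next I would prove the single-renaming transport laws \eqref{prop:disc-ids2} and \eqref{prop:disc-ids3}, which are mirror images relabelling the $u$-side and the $v$-side output respectively. For \eqref{prop:disc-ids3} the chain I have in mind is: reshape the right-hand disconnection by \eqref{disc-eq:rd3}, writing $d^U_{ad}\simeq d^U_{ab};R^{b\mapsto d}$; commute the leading renaming past the disconnection by \eqref{disc-eq:rd1}; collapse the resulting composite $R^{w\mapsto b};R^{b\mapsto d}$ to $R^{w\mapsto d}$ by \eqref{disc-eq:trans}; and commute back by \eqref{disc-eq:rd1} to recover $R^{w\mapsto d};d^U_{ab}$. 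Identity \eqref{prop:disc-ids1} then follows by applying \eqref{prop:disc-ids2} and \eqref{prop:disc-ids3} in succession (equivalently, by running the same chain with both renamings present).

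For \eqref{prop:disc-ids0} the idea is to reduce to the equal-output case. Two applications of \eqref{disc-eq:rd3} give $d^U_{cd}\simeq d^U_{ab};R^{a\mapsto c};R^{b\mapsto d}$, turning $\bar d^U_{ab};d^U_{cd}$ into $\bar d^U_{ab};d^U_{ab};R^{a\mapsto c};R^{b\mapsto d}$; the cancellation \eqref{disc-eq:ddbar4-2} replaces $\bar d^U_{ab};d^U_{ab}$ by $S^U;S^a;S^b$, after which $S^a;R^{a\mapsto c}$ and $S^b;R^{b\mapsto d}$ simplify by \eqref{disc-eq:sr2}. The fresh name $j$ in the statement is exactly what \eqref{disc-eq:rd3} forces when $\{c,d\}$ meets $\{a,b\}$ (e.g. the transposition $c=b$, $d=a$), producing the through-$j$ form $R^{a\mapsto j};R^{b\mapsto d};R^{j\mapsto c}$, which \eqref{disc-eq:rcomm} and \eqref{disc-eq:trans} reduce to $R^{a\mapsto c};R^{b\mapsto d}$ when no clash occurs. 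Finally, \eqref{disc-eq:ddindex} transposes the two same-side outputs of consecutive disconnections on the same $U$; here I would introduce a fresh name $e$ and iterate the transport argument, pulling the swap $R^{d\mapsto e};R^{b\mapsto d};R^{e\mapsto b}$ through the pair $d^U_{ad};d^U_{cb}$ by using \eqref{disc-eq:rd1} to skip the disconnection that does not own a given name and \eqref{disc-eq:rd3} to relabel the one that does, arriving at $d^U_{ab};d^U_{cd}$.

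The main obstacle will be the typing bookkeeping rather than the rewriting. The commuting relation \eqref{disc-eq:rd1} is only an $\approx$-identity and carries no well-typedness implication, so for each of \eqref{prop:disc-ids1}--\eqref{disc-eq:ddindex} I must argue the asserted direction ($\lesssim$ or $\simeq$) by hand from the freshness side-conditions, verifying that every intermediate term is well-typed in the required direction and that the auxiliary names ($j$ in \eqref{prop:disc-ids0}, $e$ in \eqref{disc-eq:ddindex}) are chosen disjoint from all names in play. Getting these freshness and clash conditions exactly right, so that a transposition of output names factors through the auxiliary vertex precisely as recorded in the stated normal forms, is the delicate part of the argument.
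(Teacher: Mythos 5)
Your proposal is correct, but your route through the six identities differs genuinely from the paper's. The paper proves \eqref{prop:disc-ids0} \emph{first}, using the interchange axiom \eqref{disc-eq:rd4} to pull $d^{uv}_{cd}$ to the left of $\bar d^{uv}_{ab}$ and then cancelling with \eqref{disc-eq:ddbar1}; it then obtains \eqref{prop:disc-ids1} by evaluating the sandwich $d^{uv}_{cd};\bar d^{uv}_{zw};d^{uv}_{ab}$ in two ways (\eqref{disc-eq:ddbar1} on the left pair, \eqref{prop:disc-ids0} on the right pair), and reads off \eqref{prop:disc-ids2} and \eqref{prop:disc-ids3} as the specialisations $z=c$ and $w=d$. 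You invert this order: you prove the transport laws \eqref{prop:disc-ids2} and \eqref{prop:disc-ids3} directly by peeling a renaming off the disconnection with \eqref{disc-eq:rd3}, commuting with \eqref{disc-eq:rd1} and merging with \eqref{disc-eq:trans}; you compose them (with \eqref{disc-eq:rcomm}) to get \eqref{prop:disc-ids1}; and you prove \eqref{prop:disc-ids0} by rewriting $d^U_{cd}$ through the fresh name $j$ as $d^U_{ab};R^{a\mapsto j};R^{b\mapsto d};R^{j\mapsto c}$ and cancelling via \eqref{disc-eq:ddbar4-2} and \eqref{disc-eq:sr1}/\eqref{disc-eq:sr2} --- avoiding \eqref{disc-eq:rd4} and \eqref{disc-eq:ddbar1} altogether. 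One small correction to your narrative: the three-renaming $j$-form should be used uniformly in that rewriting (your two-renaming form $d^U_{ab};R^{a\mapsto c};R^{b\mapsto d}$ is only its no-clash collapse under \eqref{disc-eq:rcomm} and \eqref{disc-eq:trans}), and with that reading the chain lands exactly on the stated right-hand side. For \eqref{prop:disc-sabsorb} your argument coincides with the paper's, and for \eqref{disc-eq:ddindex} your fresh-name gadget is essentially an inlining of the paper's three-step derivation via \eqref{disc-eq:rd3} and \eqref{prop:disc-ids3}. What each approach buys: the paper's sandwich trick concentrates all connection--disconnection interaction in \eqref{disc-eq:rd4} and \eqref{disc-eq:ddbar1} and keeps the renaming bookkeeping minimal, whereas your route is more elementary (plain cancellation instead of the interchange axiom) and makes the transport laws self-contained, which is convenient since exactly these moves recur in the normal-form arguments of Section~\ref{sec:normal-form}.

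One caveat, which applies equally to the paper's own derivation of \eqref{prop:disc-ids1}: your uses of \eqref{disc-eq:rd1} include instances where the renaming target occurs among the disconnection's subscripts (e.g.\ $R^{w\mapsto b};d^U_{ab}\approx d^U_{ab};R^{w\mapsto b}$), whose intermediate terms are never well-typed. This is harmless precisely because \eqref{prop:disc-ids0}--\eqref{prop:disc-ids3} are stated as plain $\approx$-identities on untyped terms, so intermediate typability is not required --- the caption of Figure~\ref{fig:disc-axioms} makes clear that the $\lesssim$/$\simeq$ annotations are not part of the definition of $\approx$. You correctly reserve the typing bookkeeping for the two $\simeq$-claims \eqref{prop:disc-sabsorb} and \eqref{disc-eq:ddindex}, which is exactly where it is needed.
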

\begin{proof}
We compute~\eqref{prop:disc-sabsorb} by applying equations~\eqref{disc-eq:refl} and~\eqref{disc-eq:rd3}:
$$d^U_{D[a]};S^a \simeq d^U_{D[a]};R^{a\mapsto a} \simeq d^U_{D[a]}.$$
Equality~\eqref{prop:disc-ids0} is derived as follows:
\begin{align*}
\bar d^{uv}_{ab};d^{uv}_{cd} &\approx d^{uv}_{ij};\bar d^{uv}_{ab};R^{i\mapsto c};R^{j\mapsto d}\tag{by~\eqref{disc-eq:rd4}} \\
&\approx S^u;S^v;R^{a\mapsto i};R^{b\mapsto j};R^{i\mapsto c};R^{j\mapsto d}\tag{by~\eqref{disc-eq:ddbar1}} \\
&\approx S^u;S^v;R^{a\mapsto j};R^{b\mapsto d};R^{j\mapsto c}.\tag{by~\eqref{disc-eq:rcomm} and~\eqref{disc-eq:trans}}
\end{align*}
For~\eqref{prop:disc-ids1}, we first use~\eqref{prop:disc-ids0} to get
$$d^{uv}_{cd};\bar d^{uv}_{zw};d^{uv}_{ab}\simeq d^{uv}_{cd};S^u;S^v;R^{z\mapsto a};R^{w\mapsto b},$$
where on the right-hand side we used~\eqref{disc-eq:rcomm} and~\eqref{disc-eq:trans}. Observing that~\eqref{disc-eq:ddbar1} applies on the left-hand side, and simplifying using~\eqref{disc-eq:sr1},~\eqref{disc-eq:sd2} and~\eqref{disc-eq:rd1}, we obtain precisely~\eqref{prop:disc-ids1}. Identities~\eqref{prop:disc-ids2} and~\eqref{prop:disc-ids3} are derived similarly, by letting $w=d$ and $z=c$, respectively.

Identity~\eqref{disc-eq:ddindex} is derived as follows:
\begin{align*}
d^U_{ab};d^U_{cd} &\simeq d^U_{ad};R^{d\mapsto b};d^U_{cd}\tag{by~\eqref{disc-eq:rd3}} \\
&\simeq d^U_{ad};R^{d\mapsto d};d^U_{cb}\tag{by~\eqref{prop:disc-ids3}} \\
&\simeq d^U_{ad};d^U_{cb}.\tag{by~\eqref{disc-eq:rd3}}
\end{align*}
\end{proof}

\section{Normal form}\label{sec:normal-form}

In this section, we define a normal form (Definition~\ref{def:normal-form}), and show that every term is equal to a term in a normal form under the equalities of $\Disc$ (Proposition~\ref{prop:normal-form-existence}). We also identify a class of syntactic manipulations of terms in a normal form (Definition~\ref{def:nf-equivalence}) that both keep the normal form and preserve equality (Lemma~\ref{lma:nf-equivalence}). These results are used in the next section to prove completeness.

\begin{definition}[$ICE$-form]\label{def:ICE-form}
We say that a term is in an {\em $ICE$-form} if it is either an identity term, or if it has the following structure:
$$\mathtt I;\mathtt C;\mathtt E^{<0};\mathtt E^{\geq 0};\bar{\mathtt E}^{\geq 0};\bar{\mathtt E}^{<0};\bar{\mathtt C};\bar{\mathtt I};\mathtt R;\mathtt S,$$
where every letter is a sequence of generating terms of the corresponding kind.
\end{definition}

\begin{proposition}\label{prop:ICE-form}
Any term is equal to a term in an $ICE$-form.
\end{proposition}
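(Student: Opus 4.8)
The plan is to read the equational axioms of Figure~\ref{fig:disc-axioms} as a terminating rewriting system that sorts the generators of a term into the prescribed left-to-right order. I assign to the ten kinds of generators the priority
$$I \prec C \prec E^{<0} \prec E^{\geq 0} \prec \bar E^{\geq 0} \prec \bar E^{<0} \prec \bar C \prec \bar I \prec R \prec S,$$
which is exactly the order of the $ICE$-form, and I show that any adjacent pair occurring out of this order can be brought into order, possibly after a cancellation that strictly shortens the term. Since the identity is already in $ICE$-form, I may assume the term contains at least one nontrivial generator, and I induct on a suitable well-founded measure (discussed below).

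First I would push all bookkeeping generators ($R$ and $S$) to the right. A touch $S^u$ slides past any disconnection or connection by \eqref{disc-eq:sd1} and its dagger, being absorbed by \eqref{disc-eq:sd2} and \eqref{prop:disc-sabsorb} when its vertex already occurs; a rename $R^{u\mapsto v}$ slides past a disconnection by \eqref{disc-eq:rd1}--\eqref{disc-eq:rd3} and past a connection by the dagger of \eqref{disc-eq:rd4}. Once every rename and touch sits at the right end, equations \eqref{disc-eq:scomm}--\eqref{disc-eq:sr2} together with \eqref{disc-eq:trans}--\eqref{disc-eq:rsymm} normalise the trailing $\mathtt R;\mathtt S$ block into the required shape.

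The core of the argument is the middle block, where all disconnections must precede all connections. Here I repeatedly inspect an adjacent pair consisting of a connection $\bar d$ immediately followed by a disconnection $e$. If the two act on disjoint data they commute by \eqref{disc-eq:comm2}; if they are an inverse pair on the same vertices, the cancellation equations \eqref{disc-eq:ddbar1}--\eqref{disc-eq:ddbar4-2} (and \eqref{disc-eq:eebar} in the electron-detachment case) replace them by a product of $S$ and $R$ generators, which the first phase then sweeps to the right, crucially reducing the number of disconnection/connection generators. Having separated the two, I sort the disconnection block into $I;C;E^{<0};E^{\geq 0}$ using \eqref{disc-eq:comm1}, \eqref{disc-eq:comm3}--\eqref{disc-eq:comm5}, \eqref{disc-eq:comm9}, \eqref{disc-eq:comm10} and \eqref{disc-eq:comm12} (with \eqref{disc-eq:cs} supplying the $C$-symmetry), and then obtain the reversed order $\bar E^{\geq 0};\bar E^{<0};\bar C;\bar I$ on the connection block for free: the middle block $\mathtt I;\mathtt C;\mathtt E^{<0};\mathtt E^{\geq 0};\bar{\mathtt E}^{\geq 0};\bar{\mathtt E}^{<0};\bar{\mathtt C};\bar{\mathtt I}$ is palindromic in its generator types and $\overline{()}$ is a dagger functor on $\Disc$, so applying $\overline{()}$ reduces connection-sorting to the disconnection case already handled.

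The main obstacle is termination together with the bookkeeping of well-typedness, rather than any single rewrite. For termination I order terms lexicographically by the triple (number of disconnection-and-connection generators, number of adjacent pairs violating the target order, total length): cancellations drop the first entry, commutations preserve it while dropping the second, and the length-reducing bookkeeping rules preserve the first two while dropping the third. The delicate point is that the equivalence $\approx$ of Figure~\ref{fig:disc-axioms} lives on \emph{untyped} terms and several equations are only one-directionally type-safe (those marked $\lesssim$), so at each step I must use the $\lesssim$/$\simeq$ annotations to check that the side I rewrite \emph{to} is well-typed at the current boundary objects; Definition~\ref{def:term-id} then upgrades each rewrite to a genuine equality of morphisms in $\Disc$. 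Confirming that the type-safe direction is always available in every cancellation and commutation case is where the bulk of the careful case analysis will lie.
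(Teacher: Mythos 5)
Your overall strategy --- commute generators into the prescribed order using the axioms, cancel inverse disconnection/connection pairs into $S$- and $R$-terms, and sweep the bookkeeping generators to the right --- is exactly the strategy of the paper's proof, which implements it in Appendix~\ref{ch:ice-form} as a chain of structural-induction lemmas (Lemmas~\ref{lma:I-commutes}--\ref{lma:S-commutes}), one per generator class, each showing that the class at hand commutes leftwards past everything not yet processed. Your dagger-symmetry observation (sorting the connection block by applying $\overline{()}$ to the disconnection-sorting argument, exploiting that the target order is type-palindromic) is a genuine small economy the paper does not exploit: it proves the four connection lemmas separately, even though their base cases are just daggered instances of the same axioms. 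Note only that after daggering, the $S$- and $R$-byproducts emerge at the \emph{left} end of the connection block and must be swept right again; your interleaved phase structure can absorb this, but it should be said.

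The genuine gap is the termination measure. When a subscripted disconnection ($C^{uv}_{ab}$ or $E^{u}_{ab}$) must cross a connection with a different superscript, equation~\eqref{disc-eq:rd4} is the only applicable rule --- \eqref{disc-eq:comm2} covers only the subscript-free generators, so your dichotomy ``disjoint data $\Rightarrow$ \eqref{disc-eq:comm2}, inverse pair $\Rightarrow$ cancel'' silently omits this case, as well as the same-superscript/different-subscript case handled by~\eqref{prop:disc-ids0}. The \eqref{disc-eq:rd4}-exchange emits two fresh $R$-terms immediately to the left of whatever follows: the offending connection--disconnection inversion is resolved, but each new $R$-term sitting to the left of a remaining connection or disconnection is a \emph{new} violating adjacent pair, so your second component need not decrease while the third strictly increases --- the proposed lexicographic triple can go up, and the rewriting system as measured is not terminating. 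The repair is routine and is implicit in the paper's organization: count inversions only among the disconnection/connection generators (that count strictly decreases under every exchange and cancellation, since cancellations drop your first component), and govern the $R,S$-sweeping by a separate subordinate measure such as the number of disconnection/connection generators to the right of any $R$- or $S$-term; this is precisely what the paper's per-class inductions achieve without ever naming a measure. Your point about the $\lesssim$/$\simeq$ annotations and Definition~\ref{def:term-id} is well taken and matches the paper's practice, where the barred instances of the axioms are invoked through the dagger.
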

\begin{proof}[Proof sketch]
The proof proceeds by repeated inductions: one first shows that all $I$-terms can always be commuted to the left, then that all $C$-terms can be commuted to the left of anything that is not an $I$-term, and so on. We give the full details of the induction in the Appendix (\ref{lma:I-commutes}-\ref{cor:ICE-form}).
\end{proof}

\begin{definition}[Renaming form]\label{def:renaming-form}
A well-typed sequence of renaming terms \mbox{$\mathtt R:H\rightarrow G$} is in a {\em renaming form} if there are sets $A=\{a_1,\dots,a_n\}$, $B=\{b_1,\dots,b_n\}$, $C=\{c_1,\dots,c_m\}$ and $D=\{d_1,\dots,d_m\}$ of vertex names such that
\begin{enumerate}[label=(\arabic*)]
\item $\mathtt R$ can be split into two sequences $\mathtt R = \mathtt A;\mathtt B$ with
$$\mathtt A = R^{a_1\mapsto b_1};\dots;R^{a_n\mapsto b_n}\quad \text{and}\quad \mathtt B = R^{c_1\mapsto d_1};\dots;R^{c_m\mapsto d_m},$$
where $\mathtt B$ can be possibly empty,\label{renaming-form-1}
\item $A\cap B=\eset$,\label{renaming-form-2}
\item $C\sse B$,\label{renaming-form-3}
\item $D\sse A$,\label{renaming-form-4}
\item if $c_i\in C$ and $b_j\in B$ is the unique element such that $b_j=c_i$, then $\Nbr_H(a_j)\neq\Nbr_H(d_i)$.\label{renaming-form-5}
\end{enumerate}
\end{definition}

\begin{lemma}\label{lma:renaming-form}
Any well-typed sequence of renaming terms is equal to a term $\mathtt R;\mathtt S$, where $\mathtt R = \mathtt A;\mathtt B$ is in a renaming form and $\mathtt S$ is a sequence of $S$-terms.
\end{lemma}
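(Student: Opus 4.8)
The plan is to argue by induction on the length of the renaming sequence, maintaining throughout the stronger invariant that any well-typed prefix $\mathtt R_0 : H\rightarrow G'$ is equal in $\Disc$ to a term $\mathtt A;\mathtt B;\mathtt S$ in which $\mathtt A;\mathtt B$ is in renaming form (Definition~\ref{def:renaming-form}) and $\mathtt S$ is a sequence of $S$-terms. The guiding intuition is that a sequence of renamings computes a net injection on $\alpha$-vertex names, and that well-typedness (Definition~\ref{def:terms}) forces every renaming target to be fresh, i.e.\ absent from the current graph; the renaming form then records this net injection by first routing every moved name through a fresh intermediate (the block $\mathtt A$, with $A\cap B=\eset$) and afterwards sending those intermediates that must land on a reused source name back into $A$ (the block $\mathtt B$, with $C\sse B$ and $D\sse A$). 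The base case (the empty sequence and a single $R^{u\mapsto v}$) is immediate: if $u=v$ then \eqref{disc-eq:refl} gives $R^{u\mapsto u}\lesssim S^u$, and if $u\neq v$ the single renaming is already in renaming form with empty $\mathtt B$.

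For the inductive step I would append a renaming $R^{u\mapsto v}$, whose target $v$ is fresh, to $\mathtt A;\mathtt B;\mathtt S$. First I commute it leftwards past the $S$-block using \eqref{disc-eq:sr1} and \eqref{disc-eq:sr2} (absorbing an $S$-term when $v$ meets an $S$-index, otherwise merely swapping), so that it becomes adjacent to $\mathtt A;\mathtt B$. Well-typedness guarantees $u$ is a live name, hence either untouched so far or a net target of $\mathtt A;\mathtt B$, giving three cases. If $u$ is untouched, then freshness of $v$ makes $R^{u\mapsto v}$ a new forward renaming appended to $\mathtt A$, and $A\cap B=\eset$ is preserved. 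If $u=b_j\in B\setminus C$ is a forward target, I use \eqref{disc-eq:rcomm} to bring $R^{u\mapsto v}$ next to $R^{a_j\mapsto b_j}$ in $\mathtt A$ (their names are disjoint, as $u$ occurs nowhere in $\mathtt B$) and contract via \eqref{disc-eq:trans}, splitting on whether $v\in A$: if $v\notin A$ the target $b_j$ is simply updated to $v$; if $v=a_j$ the chain closes and \eqref{disc-eq:refl} turns it into an $S$-term; and if $v\in A\setminus\{a_j\}$ the composite $R^{a_j\mapsto v}$ must instead be kept fresh-routed, promoting $R^{b_j\mapsto v}$ to a back-renaming of $\mathtt B$ (so now $b_j\in C$ and $v\in D$). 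The remaining case, $u=d_i\in D$ a back-target, is handled symmetrically: \eqref{disc-eq:rcomm} brings the relevant pair together and \eqref{disc-eq:trans} collapses $a_j\to c_i\to u\to v$ to a single forward renaming $R^{a_j\mapsto v}$, removing $c_i$ and freeing $d_i$.

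The main obstacle, and the delicate part of the argument, is re-establishing the renaming-form conditions after these contractions, in particular the non-redundancy condition~(5). Whenever a promotion produces a back-renaming $R^{b_j\mapsto v}$ with $v\in A$, I must test the domain condition $\Nbr_H(a_j)\neq\Nbr_H(v)$: if it holds, the renaming is genuinely structural and kept in $\mathtt B$, but if $\Nbr_H(a_j)=\Nbr_H(v)$ the two $\alpha$-vertices are interchangeable and the back-renaming is secretly trivial, so it must be rewritten away into $S$-terms. This elimination is exactly where the interplay of \eqref{disc-eq:rsymm}, \eqref{disc-eq:refl} and the $S$-equations \eqref{disc-eq:scomm}, \eqref{disc-eq:sidem} is essential, and it is also where the $\lesssim$-annotations matter, since these rewrites are only guaranteed well-typed under the neighbourhood hypothesis. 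Verifying that conditions (2)--(4) survive each case is by comparison routine set-theoretic bookkeeping (using freshness of $v$ and $A\cap B=\eset$), and collecting the stray $S$-terms produced along the way into the final block $\mathtt S$, via \eqref{disc-eq:scomm} and \eqref{disc-eq:sidem}, completes the induction.
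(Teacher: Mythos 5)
Your proof follows the paper's argument almost step for step: the same induction on the length of the sequence, the same trichotomy on the source $u$ (untouched, a forward target in $B\setminus C$, or a back-target in $D$), the same contractions via \eqref{disc-eq:rcomm} and \eqref{disc-eq:trans}, and the same use of \eqref{disc-eq:rsymm} to eliminate a would-be back-renaming exactly when the neighbourhoods agree, i.e.\ exactly when condition~(5) of Definition~\ref{def:renaming-form} would fail. That last observation --- that the $\approx$-equation \eqref{disc-eq:rsymm} is applicable precisely when the two $\alpha$-vertices are interchangeable, which is precisely when the back-renaming is redundant --- is the heart of the lemma, and you have it right.

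There is, however, one concrete gap in your treatment of the \emph{target} $v$, and it is not the routine bookkeeping you declare it to be. Freshness of $v$ means $v\notin V_G$ for the \emph{current graph} $G$; it does not mean $v$ is unused as a name inside $\mathtt A;\mathtt B$. In particular $v$ may lie in $C$: a dummy intermediate introduced by $\mathtt A$ and renamed away again by $\mathtt B$ is absent from $G$, so appending $R^{u\mapsto v}$ is perfectly well-typed. In that situation your prescriptions fail: in the untouched case, ``appending $R^{u\mapsto v}$ to $\mathtt A$'' produces two forward renamings with the same target $v$, and since all of $\mathtt B$ comes after all of $\mathtt A$, no ordering of such a sequence is well-typed (whichever renaming fires first puts $v$ into the graph before the second one); and in your middle case, commuting $R^{u\mapsto v}$ leftwards past the $\mathtt B$-term $R^{v\mapsto d_\ell}$ is not an instance of \eqref{disc-eq:rcomm}. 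The paper dispatches this up front: if the new target lies in $C$, first rename the dummy intermediate throughout $\mathtt A$ and $\mathtt B$ to a globally fresh name (contract $R^{a_k\mapsto v};R^{v\mapsto d_\ell}$ via \eqref{disc-eq:trans} and re-expand through the fresh name), so that one may assume $v\notin B$. Relatedly, in your untouched case you assert that ``$A\cap B=\eset$ is preserved'' without splitting on whether $v\in A$; when $v\in A$ (a source renamed away, hence fresh in $G$), a direct append breaks condition~(2), and you need the same fresh-routing $R^{u\mapsto c};R^{c\mapsto v}$, or the \eqref{disc-eq:rsymm} elimination when $\Nbr_H(u)=\Nbr_H(v)$, that you describe only for the $u=b_j$ branch --- this is exactly the paper's Subcase~1.2. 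Both repairs use machinery already present in your writeup, so the argument is salvageable, but as stated the case analysis on $v$ is incomplete and two of your steps would produce ill-typed or non-renaming-form sequences.
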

\begin{proof}
The idea is that if a vertex $a$ is to be renamed to $b$, then $a\in A$, and we have two cases: (1) $b$ does not already occur in the original chemical graph, and (2) $b$ does occur in the original graph. If (1), then $R^{a\mapsto b}\in\mathtt A$ and $b\in B\setminus C$. If (2), then we first rename $a$ using some ``dummy'' name $c$, so that $R^{a\mapsto c}\in\mathtt A$, $R^{c\mapsto b}\in\mathtt B$, $c\in C$ and $b\in D$. Note that condition~\ref{renaming-form-4} of the renaming form is satisfied, as $b$ must itself be renamed in order for the vertex name become free. Any term of the form $R^{a\mapsto a}$ is replaced by $S^a$. The formal proof proceeds by induction on the length of the original sequence.

The term $R^{a\mapsto b}$ is equal to $S^a$ if $a=b$, or is already in a renaming form by taking $A=\{a\}$, $B=\{b\}$ and $C=D=\eset$ if $a\neq b$.

Suppose that the statement of the lemma holds for all sequences of renaming terms of length at most $n$. Let $\mathtt R$ be such a sequence of length $n$ such that $\mathtt R;R^{a\mapsto b}$ is well-typed. By the inductive hypothesis, we may assume that $\mathtt R=\mathtt A;\mathtt B;\mathtt S$ where $\mathtt A;\mathtt B$ is a renaming form with vertex name sets $A$, $B$, $C$ and $D$ as in Definition~\ref{def:renaming-form}. Using the equations for $S$- and $R$-terms, we may commute $R^{a\mapsto b}$ past $\mathtt S$, possibly changing the vertex name $a$, so that it suffices to show that the lemma holds for $\mathtt A;\mathtt B;R^{a\mapsto b}$. If $a=b$, the sequence is equal to $\mathtt A;\mathtt B;S^a$ and we are done; hence assume that $a\neq b$. Note that it follows that $a\notin A\setminus D$ and $a\notin C$, as every vertex name in $A\setminus D$ or $C$ is removed, without being reintroduced. Similarly, we have that $b\notin D$ and $b\notin B\setminus C$. Moreover, if $b\in C$, rename the occurrence of $b$ in both $\mathtt A$ and $\mathtt B$ with a fresh vertex name, updating the sets $C$ and $B$ accordingly. Thus we may assume that $b\notin B$. The remaining cases are as follows.

\noindent\textbf{Case 1:} $a\notin A\cup B$.

\textbf{Subcase 1.1:} $b\notin A$. We rewrite the term to $\mathtt A;R^{a\mapsto b};\mathtt B$ and update the sets $A\mapsto A\cup\{a\}$ and $B\mapsto B\cup\{b\}$.

\textbf{Subcase 1.2:} $b\in A$. It follows that $b\in A\setminus D$, so that $R^{b\mapsto z}\in\mathtt A$ and $a,b$ do not appear in $\mathtt B$. If $\Nbr(a)\neq\Nbr(b)$, let $c$ be a fresh vertex name. We rewrite the term to $\mathtt A;R^{a\mapsto c};R^{c\mapsto b};\mathtt B$ and update the sets $A\mapsto A\cup\{a\}$, $B\mapsto B\cup\{c\}$, $C\mapsto C\cup\{c\}$ and $D\mapsto D\cup\{b\}$. If $\Nbr(a)=\Nbr(b)$, we use equation~\eqref{disc-eq:rsymm} to rewrite $R^{b\mapsto z};R^{a\mapsto b}$ to $S^b;R^{a\mapsto z}$, which reduces the number of $R$-terms to $n$, so that the inductive hypothesis applies.

\noindent\textbf{Case 2:} $a\in A$. It follows that $a\in D$. Now $R^{a\mapsto b}$ commutes with all other terms in $\mathtt B$ except for the unique term $R^{c_i\mapsto d_i}$ where $d_i=a$. But $R^{c_i\mapsto a};R^{a\mapsto b}\equiv R^{c_i\mapsto b}$, which reduces the length of the sequence to $n$, so it is has a renaming form by the inductive hypothesis.

\noindent\textbf{Case 3:} $a\in B$. It follows that $a\in B\setminus C$.

\textbf{Subcase 3.1:} $b\notin A$. Now $R^{a\mapsto b}$ commutes with all the terms in $\mathtt B$, and with all other terms in $\mathtt A$ except for the unique term $R^{a_i\mapsto b_i}$ where $b_i=a$. But $R^{a_i\mapsto a};R^{a\mapsto b}\equiv R^{a_i\mapsto b}$, which reduces the length of the sequence to $n$, so it is has a renaming form by the inductive hypothesis.

\textbf{Subcase 3.2:} $b\in A$. It follows that $b\in A\setminus D$. Now $R^{a\mapsto b}$ commutes with all the terms in $\mathtt B$, and with all other terms in $\mathtt A$ except for the terms $R^{a_i\mapsto a}$ and $R^{b\mapsto b_j}$. There are two options: (1) $a_i=b$ and $b_j=a$, so that these are the same term, (2) the terms are distinct, in which case they commute. In both cases, we use the substitution $R^{a_i\mapsto a};R^{a\mapsto b}\equiv R^{a_i\mapsto b}$ to reduce the length of the sequence, so that the inductive hypothesis applies.

This completes the induction.
\end{proof}

A term is said to be in an {\em $ICER$-form} if it is in an $ICE$-form whose sequence of renaming terms is in a renaming form (or is empty).

\begin{definition}[Normal form]\label{def:normal-form}
Let
$$\mathtt t = \mathtt I;\mathtt C;\mathtt E^{<0};\mathtt E^{\geq 0};\bar{\mathtt E}^{\geq 0};\bar{\mathtt E}^{<0};\bar{\mathtt C};\bar{\mathtt I};\mathtt A;\mathtt B;\mathtt S$$
be a term in an $ICER$-form. Let us denote the sets of vertex names in the renaming form by $A_{\mathtt t}$, $B_{\mathtt t}$, $C_{\mathtt t}$ and $D_{\mathtt t}$. Let us additionally define the following sets of vertex names occurring in $\mathtt t$:
\begin{itemize}
\item $D^{add}_{\mathtt t}\coloneqq\left\{a\in\VS : d^U_{D[a]}\in\mathtt t\right\}$ -- the vertex names appearing as subscripts in the disconnections,
\item $D^{remove}_{\mathtt t}\coloneqq\left\{a\in\VS : \bar d^U_{D[a]}\in\mathtt t\right\}$ -- the vertex names appearing as subscripts in the connections,
\item $U_{\mathtt t}\coloneqq\left\{v\in\VS : d^{U[v]}_D\in\mathtt t\text{ or } \bar d^{U[v]}_D\in\mathtt t\right\}$ -- the vertex names appearing as superscripts of the (dis)connections,
\item $S_{\mathtt t}\coloneqq\left\{u\in\VS : S^u\in\mathtt t\right\}$ -- the vertex names appearing in the $S$-terms.
\end{itemize}
We say that a term $\mathtt t$ is in a {\em normal form} if it is in an $ICER$-form as above, and additionally the following conditions hold:
\begin{enumerate}[label=(\arabic*)]
\item for every $u\in S_{\mathtt t}$, the term $S^u$ occurs in $\mathtt t$ exactly once,\label{nf:S0}
\item $\left(U_{\mathtt t}\cup A_{\mathtt t}\cup B_{\mathtt t}\right)\cap S_{\mathtt t}=\eset$,\label{nf:S1}
\item $D^{add}_{\mathtt t}\setminus D^{remove}_{\mathtt t}\sse A_{\mathtt t}\setminus D_{\mathtt t}$,\label{nf:discrename}
\item $D^{add}_{\mathtt t}\cap B_{\mathtt t}=\eset$,\label{nf:dbdisjoint}
\item if a connection $\bar d^U_{D[a]}:A\rightarrow B$ and a renaming term $R^{z\mapsto a}$ both occur, then $A$ is not in the domain of $\bar d^U_{D[z/a]}$,\label{nf:connrename}
\item if $d\neq I$ and a disconnection $d^U_D$ occurs in $\mathtt t$, then the connections $\bar d^U_{F}$ and $\bar d^{U^r}_{F}$ do not occur in $\mathtt t$ for any $F$ (here $U^r$ denotes the reverse string),\label{nf:disconnection}
\item if the disconnection $E^{uv}$ occurs in $\mathtt t$, then for any vertex name $w\in\VS$, the connection $\bar E^{uw}$ does not occur in $\mathtt t$,\label{nf:electron}
\item if the disconnection $I^{uv}$ and the connection $\bar I^{uv}$ both occur in $\mathtt t$, then one of the terms $E^v_D$, $\bar E^v_D$, $E^{va}$ or $\bar E^{va}$ occurs in $\mathtt t$.\label{nf:ion}
\end{enumerate}
\end{definition}

\begin{proposition}\label{prop:normal-form-existence}
In $\Disc$, any term is equal to a term in normal form.
\end{proposition}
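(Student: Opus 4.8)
The plan is to begin from the $ICER$-form supplied by the earlier results and then repair each of the eight defining conditions of the normal form in turn, controlling the process by a well-founded measure. First I would apply Proposition~\ref{prop:ICE-form} to rewrite the given term to an $ICE$-form, and then Lemma~\ref{lma:renaming-form} to replace its block of renaming terms by $\mathtt A;\mathtt B;\mathtt S$ with $\mathtt A;\mathtt B$ in renaming form. This already produces an $ICER$-form, so only conditions~\ref{nf:S0}--\ref{nf:ion} of Definition~\ref{def:normal-form} remain to be enforced, and I would do so by repeatedly locating a failing condition and applying the equation that removes the obstruction.

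The conceptual core is to treat conditions~\ref{nf:disconnection}, \ref{nf:electron} and~\ref{nf:ion} as cancellation conditions. Whenever a non-$I$ disconnection $d^U_D$ coexists with a matching connection $\bar d^U_F$ (condition~\ref{nf:disconnection}), I would commute the two generators together using the commutation equations of Figure~\ref{fig:disc-axioms} — applicable in an $ICE$-form precisely because all disconnections precede all connections — and then cancel them via one of~\eqref{disc-eq:ddbar1}--\eqref{disc-eq:ddbar4}. The analogous repair for condition~\ref{nf:electron} uses~\eqref{disc-eq:eebar}, and for condition~\ref{nf:ion} one commutes the surviving $I^{uv}$ rightwards past the intervening generators (possible exactly when no $E$-term touches $v$) and cancels the $I/\bar I$ pair by~\eqref{disc-eq:ddbar4} or~\eqref{disc-eq:ddbar4-2}. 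Each such cancellation strictly decreases the number of (dis)connection generators, at the cost of producing extra $S$- and $R$-terms; these I would push rightwards past the remaining (dis)connections (by~\eqref{disc-eq:sd1},~\eqref{disc-eq:rd1},~\eqref{disc-eq:rd2}) and reabsorb into the tail via a fresh application of Lemma~\ref{lma:renaming-form}, which re-establishes the renaming form.

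Conditions~\ref{nf:discrename}, \ref{nf:dbdisjoint} and~\ref{nf:connrename}, which govern the subscripts of the surviving (dis)connections relative to the renaming form, I would then enforce using the subscript-renaming equations~\eqref{disc-eq:rd3} and~\eqref{prop:disc-ids1}--\eqref{prop:disc-ids3}: a disconnection subscript that is neither removed by a later connection nor already a source of an $\mathtt A$-renaming can be renamed to a fresh name and recorded in $\mathtt A$, while a connection whose subscript clashes with a renaming target can be rewritten to remove the original name directly. These steps create no new (dis)connection generators, so they respect the measure used in the cancellation phase. Finally, conditions~\ref{nf:S0} and~\ref{nf:S1} concern only the $S$-terms: duplicates are merged by idempotence~\eqref{disc-eq:sidem} and commutativity~\eqref{disc-eq:scomm}, and an $S^u$ whose vertex already occurs as a (dis)connection superscript or in a renaming is absorbed by~\eqref{disc-eq:sd2},~\eqref{disc-eq:sr2} or~\eqref{prop:disc-sabsorb}; since these touch neither the (dis)connection block nor the renaming form, they can be performed last.

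The main obstacle I anticipate is organizing the whole repair process into a single well-founded induction, since the cancellation steps generate renaming and $S$-terms whose reabsorption could in principle reintroduce a failure of an already-repaired condition. I would address this with a lexicographic measure — first the number of connection (equivalently disconnection) generators, then a complexity measure on the renaming block, then the number of $S$-terms — and verify that the repair associated with each failing condition strictly decreases this measure while preserving the $ICER$-form. The delicate bookkeeping is to check that the reabsorption and re-sorting steps never increase the leading component, so that the progress made by cancellation is never undone; granting this, well-foundedness forces termination at a term satisfying all eight conditions, which is therefore in normal form.
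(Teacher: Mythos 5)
Your overall skeleton coincides with the paper's proof: pass to an $ICER$-form via Proposition~\ref{prop:ICE-form} and Lemma~\ref{lma:renaming-form}, absorb excess $S$-terms by \eqref{disc-eq:sr2}, \eqref{disc-eq:sidem}, \eqref{disc-eq:sd2} and \eqref{prop:disc-sabsorb}, handle the subscript conditions by treating the names in $D^{add}$ as dummies and invoking \eqref{disc-eq:rd3} and \eqref{prop:disc-ids2}--\eqref{prop:disc-ids3}, and eliminate violations of conditions~\ref{nf:disconnection}--\ref{nf:ion} by commuting each disconnection next to its matching connection and cancelling with \eqref{disc-eq:ddbar1}--\eqref{disc-eq:ddbar4-2} and \eqref{disc-eq:eebar}. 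Your lexicographic measure is a reasonable, and arguably more explicit, substitute for the paper's informal termination bookkeeping.

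There is, however, a genuine gap at the central step: you assert that the commutations bringing $d^U_D$ adjacent to $\bar d^U_F$ are ``applicable in an $ICE$-form precisely because all disconnections precede all connections''. That ordering is not sufficient. Many commutation identities in Figure~\ref{fig:disc-axioms} are only $\lesssim$, so well-typedness transfers in one direction only, and the direction you need (sliding a disconnection rightwards) is often the unguaranteed one: \eqref{disc-eq:comm10} lets you rewrite $E^{uv};C^{wz}_{cd}$ into $C^{wz}_{cd};E^{uv}$ but not conversely, so an intervening $E^{ui}$ or $E^{vj}$ genuinely obstructs sliding $C^{uv}_{ab}$ towards $\bar C^{uv}_{cd}$. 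The paper resolves this with an ordered case analysis that your proposal lacks: first cancel all $E^{\geq 0}$/$\bar E^{\geq 0}$ pairs on a common vertex, so that for each vertex at most one of $E^{u\cdot}$, $\bar E^{u\cdot}$ survives; then use that absence, together with \eqref{disc-eq:comm2} and \eqref{disc-eq:comm11}, to shuttle the obstructing $E^{ui}$, $E^{vj}$ terms rightwards past the $\bar E^{\geq 0}$-block and past $\bar C^{uv}_{cd}$; and only then commute the $C$-pair together. The $E^{<0}$ case likewise presupposes the $E^{\geq 0}$ case. Condition~\ref{nf:ion} needs a further argument you omit: to know that $I^{uv}$ commutes all the way to $\bar I^{uv}$, one must show that if no $E$- or $\bar E$-term touches $u$ then none touches $v$ either, which follows from charge bookkeeping ($\bar I^{uv}$ requires $u$ and $v$ to carry equal and opposite charges, and by the already-enforced conditions~\ref{nf:disconnection} and~\ref{nf:electron} no charge change can be reversed). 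Without these ordering dependencies and the typing-aware shuttling, your ``commute and cancel'' step can halt at a rewrite whose target is not well-typed, so the measure argument never gets the chance to apply.
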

\begin{proof}
By Propositions~\ref{prop:ICE-form} and~\ref{lma:renaming-form}, every term is equal to a term in an $ICER$-form: let us fix such a term
$$\mathtt t = \mathtt I;\mathtt C;\mathtt E^{<0};\mathtt E^{\geq 0};\bar{\mathtt E}^{\geq 0};\bar{\mathtt E}^{<0};\bar{\mathtt C};\bar{\mathtt I};\mathtt A;\mathtt B;\mathtt S.$$

Conditions~\ref{nf:S0} and~\ref{nf:S1} are obtained by absorbing the ``excess'' $S$-terms into other terms using equations~\eqref{disc-eq:sr2},~\eqref{disc-eq:sidem},~\eqref{disc-eq:sd2} and~\eqref{prop:disc-sabsorb}. Conditions~\ref{nf:discrename} and~\ref{nf:dbdisjoint} are obtained by treating all the vertex names in $D^{add}_{\mathtt t}$ as ``dummy'' names, which are removed either by a connection or a renaming term.

For~\ref{nf:connrename}, suppose that $\bar d^U_{D[a]}:A\rightarrow B$ and $R^{z\mapsto a}$ both occur, and moreover $A$ is in the domain of $\bar d^U_{D[z/a]}$. We commute the renaming term to the left to obtain $\bar d^U_{D[a]};R^{z\mapsto a}$. But this is equal to $\bar d^U_{D[z/a]};R^{a\mapsto a}$ by equations~\eqref{prop:disc-ids2} and~\eqref{prop:disc-ids3}, which gets rid of the renaming term by $R^{a\mapsto a} \equiv S^a$.

For~\ref{nf:disconnection} and~\ref{nf:electron}, we consider the three cases separately.

\noindent\textbf{Case 1:} $E^{uv}$ and $\bar E^{uw}$ occur in $\mathtt t$. We commute the terms so that they occur one after the other $E^{uv};\bar E^{uw}$, which by equations~\eqref{disc-eq:ddbar4} and~\eqref{disc-eq:eebar} is equal to some combination of $S$- and $R$-terms. Noticing that the rewriting procedure to obtain an ICE-form either commutes or absorbs $S$- and $R$-terms (specifically, the results from Lemma~\ref{lma:Enonnegbar-commutes} onwards apply), we conclude that $\mathtt t$ has an ICE-form without $E^{uv}$ and $\bar E^{uw}$.

\noindent\textbf{Case 2:} $E^u_{ab}$ and $\bar E^u_{cd}$ occur in $\mathtt t$. In combination with Case~1, it follows that the terms $E^{uv}$ and $\bar E^{uw}$ do not occur, so that there are no obstructions for commuting $E^u_{ab}$ next to $\bar E^u_{cd}$, obtaining $E^u_{ab};\bar E^u_{cd}$. By equations~\eqref{disc-eq:ddbar1},~\eqref{disc-eq:ddbar2} and~\eqref{disc-eq:ddbar3}, this is equal to some combination of $S$- and $R$-terms, which we commute to the right as in Case~1.

\noindent\textbf{Case 3:} $C^{uv}_{ab}$ and either $\bar C^{uv}_{cd}$ or $\bar C^{vu}_{cd}$ occur in $\mathtt t$. The latter case simply reduces to the former by equation~\eqref{disc-eq:cs}. Thus suppose that $\bar C^{uv}_{cd}$ occurs. First, we use equation~\eqref{disc-eq:comm9} to commute $C^{uv}_{ab}$ to the right past all the $E^{<0}$-terms, and $\bar C^{uv}_{cd}$ to the left past all the $\bar E^{<0}$-terms. Next, we commute any terms of the form $E^{ui}$ and $E^{vj}$ to the right past the $\bar E^{\geq 0}$-terms and $\bar C^{uv}_{cd}$ using the fact that by Case~1 the terms $\bar E^{ui}$ and $\bar E^{vj}$ do not occur, together with the equations~\eqref{disc-eq:comm2} and~\eqref{disc-eq:comm11}. Now there are no obstructions for commuting $C^{uv}_{ab}$ to the right past all the $E^{\geq 0}$- and $\bar E^{\geq 0}$-terms, obtaining $C^{uv}_{ab};\bar C^{uv}_{cd}$. As in Case~2, equations~\eqref{disc-eq:ddbar1},~\eqref{disc-eq:ddbar2} and~\eqref{disc-eq:ddbar3} yield that this is equal to some combination of $S$- and $R$-terms, which we commute to the right as in Case~1. Finally, we return the terms of the form $E^{ui}$ and $E^{vj}$ back to the left past all the $\bar E^{\geq 0}$-terms.

For~\ref{nf:ion}, suppose that both $I^{uv}$ and $\bar I^{uv}$ occur in $\mathtt t$ such that no $E$- or $\bar E$ term containing $u$ occurs. It follows that no $E$- or $\bar E$ term containing $v$ occurs either: application of $\bar I^{uv}$ requires for $u$ and $v$ to have equal and opposite charge, so if the charge of $u$ is unchanged, so is the charge of $v$; moreover, by~\ref{nf:disconnection} and~\ref{nf:electron}, we may assume that no change can be reversed, so that we indeed cannot have any $E$- or $\bar E$ term containing $v$. But now there are no obstructions for commuting $I^{uv}$ all the way to the right until we obtain $I^{uv};\bar I^{uv}$, which is equal to $S^u;S^v$ by~\eqref{disc-eq:ddbar4}.
\end{proof}

\begin{definition}[Normal form equivalence]\label{def:nf-equivalence}
Let
$$\mathtt t = \mathtt I;\mathtt C;\mathtt E^{<0};\mathtt E^{\geq 0};\bar{\mathtt E}^{\geq 0};\bar{\mathtt E}^{<0};\bar{\mathtt C};\bar{\mathtt I};\mathtt A;\mathtt B;\mathtt S$$
be a term in a normal form. Define the following syntactic manipulations of $\mathtt t$:
\begin{enumerate}
\item commuting the terms inside each of the named sequences in the normal form,\label{nf-eq:comm}
\item permuting vertex names in $C$-terms: if the term $C^{uv}_{ab}$ occurs, we may substitute it with $C^{vu}_{ba}$,\label{nf-eq:permute}
\item if $d\in\{C,E,\bar C,\bar E\}$ such that $d^U_{ab};d^U_{cd}$ occurs, we may substitute $d^U_{ab};d^U_{cd}\mapsto d^U_{ad};d^U_{cb}$,\label{nf-eq:same-swap}
\item renaming of vertices that are introduced and removed: if $a\in D^{add}_{\mathtt t}\cup C_{\mathtt t}$ and $z\in\VS$ does not occur in $\mathtt t$ or its domain, then we may substitute $\mathtt t\mapsto\mathtt t[z/a]$,\label{nf-eq:rename}
\item exchanging vertex names between renaming terms: if both $R^{a\mapsto b}$ and $R^{c\mapsto d}$ occur in $\mathtt A$ such that $\Nbr(a)=\Nbr(c)$, we may swap $a$ and $c$,\label{nf-eq:r-swap}
\item exchanging vertex names between connections and renaming terms: if $d\in\{E,C\}$, and $\bar d^U_{D[a]}:A\rightarrow B$ and $R^{z\mapsto b}$ both occur such that $A$ is in the domain of $\bar d^U_{D[z/a]}$, then we may substitute $\bar d^U_{D[a]}\mapsto \bar d^U_{D[z/a]}$ and $R^{z\mapsto b}\mapsto R^{a\mapsto b}$.\label{nf-eq:r-exchange}
\end{enumerate}
We say that two terms $\mathtt t$ and $\mathtt s$ in a normal form are {\em equivalent}, written $\mathtt t\sim\mathtt s$, if one can be obtained from the other by a sequence of the syntactic manipulations defined above.
\end{definition}
Observing that each syntactic manipulation in Definition~\ref{def:nf-equivalence} is reversible, we see that $\sim$ is an equivalence relation on the set of terms in normal form.
\begin{lemma}\label{lma:nf-equivalence}
Let $\mathtt t$ and $\mathtt s$ be terms in normal forms such that $\mathtt t\sim\mathtt s$. Then $\mathtt t \equiv \mathtt s$.
\end{lemma}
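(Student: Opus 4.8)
The plan is to exploit that $\sim$ is, by Definition~\ref{def:nf-equivalence}, the equivalence relation generated by the six elementary syntactic manipulations listed there. Since $\equiv$ is the equality of morphisms in $\Disc$ and is therefore transitive, it suffices to prove that each single application of manipulations~(1)--(6) transforms a term into an $\equiv$-equal one; the general statement then follows by induction on the length of the chain of manipulations witnessing $\mathtt t\sim\mathtt s$. Throughout I would use the dagger functor $\overline{(\cdot)}:\Disc\rightarrow\Disc$ to halve the work: by the definition of $\equiv$ (Definition~\ref{def:term-id}), one has $\mathtt t\equiv\mathtt s$ as soon as $\overline{\mathtt t}\approx\overline{\mathtt s}$, so every identity established for disconnections automatically yields its counterpart for the corresponding connections.

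First I would dispatch the routine manipulations, each of which is an instance of an axiom in Figure~\ref{fig:disc-axioms} or of a derived identity from Proposition~\ref{prop:disc-ids}. Manipulation~(1), commuting within a named block, is covered by~\eqref{disc-eq:comm1} for the (dis)connection blocks (and its dagger for the barred blocks), by~\eqref{disc-eq:rcomm} for the renaming blocks $\mathtt A,\mathtt B$, and by~\eqref{disc-eq:scomm} for the $S$-block. Manipulation~(2) is exactly~\eqref{disc-eq:cs}. Manipulation~(3) is the derived identity~\eqref{disc-eq:ddindex} when $d\in\{C,E\}$, and its dagger when $d\in\{\bar C,\bar E\}$. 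Manipulation~(6) follows, after commuting the renaming term into place past any intervening (dis)connections using~\eqref{disc-eq:rd1}, from the daggers of~\eqref{prop:disc-ids2} and~\eqref{prop:disc-ids3}, whose hypotheses match precisely the well-typedness side-condition imposed in~(6).

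The substantive work lies in manipulations~(4) and~(5), which concern consistent renaming of \emph{internal} vertex names. For~(4), the vertex name $a\in D^{add}_{\mathtt t}\cup C_{\mathtt t}$ is, by the normal-form conditions~\ref{nf:discrename}--\ref{nf:dbdisjoint} (respectively by the structure of the renaming form), created by a disconnection or used only as a temporary renaming target, and is subsequently eliminated; hence it occurs in neither the domain nor the codomain of $\mathtt t$. I would introduce a fresh renaming term $R^{a\mapsto z}$ at the point where $a$ is created and slide it rightward through every intervening generator until it reaches the term that eliminates $a$: commuting past (dis)connections and other renamings is handled by~\eqref{disc-eq:rd1}--\eqref{disc-eq:rd4} together with~\eqref{disc-eq:rcomm}, while the subscript-rewriting identities~\eqref{prop:disc-ids2}--\eqref{prop:disc-ids3} convert the name inside any disconnection it passes, so that absorbing it at the elimination site reproduces exactly $\mathtt t[z/a]$. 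For~(5), where $R^{a\mapsto b}$ and $R^{c\mapsto d}$ sit in $\mathtt A$ with $\Nbr(a)=\Nbr(c)$, the equal-neighbourhood hypothesis is what makes $a$ and $c$ genuinely interchangeable; I expect to route the swap through fresh intermediate names using transitivity and symmetry of renaming~\eqref{disc-eq:trans} and~\eqref{disc-eq:rsymm} together with~\eqref{disc-eq:rcomm}.

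The main obstacle I anticipate is precisely this well-typedness bookkeeping in~(4) and~(5). Because most of the relevant equations hold only in the directed forms $\lesssim$ or $\simeq$ (Figure~\ref{fig:disc-axioms}), at each sliding step I must verify that the side on which I rewrite is well-typed, so that the equation may be applied in the required direction. The normal-form data are exactly what guarantee this: the disjointness conditions~\ref{nf:S0}--\ref{nf:dbdisjoint} on the various vertex-name sets ensure that the fresh names I introduce never collide with existing ones, and the neighbourhood constraint~\ref{renaming-form-5} of the renaming form (Definition~\ref{def:renaming-form}) ensures that the intermediate renamings remain well-typed throughout. Once each elementary manipulation is shown to preserve $\equiv$, the lemma follows immediately by composing the individual steps.
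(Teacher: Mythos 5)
Your proposal is correct and follows essentially the same route as the paper's proof: you verify each of the six manipulations against the same axioms and derived identities the paper cites --- (1) via~\eqref{disc-eq:comm1}, \eqref{disc-eq:rcomm} and~\eqref{disc-eq:scomm}; (2) via~\eqref{disc-eq:cs}; (3) via~\eqref{disc-eq:ddindex}; (4) by using~\eqref{disc-eq:rd3} to introduce a renaming term at the creation site and absorb it at the elimination site; (5) via~\eqref{disc-eq:rsymm} and~\eqref{disc-eq:trans}; (6) via~\eqref{prop:disc-ids2} and~\eqref{prop:disc-ids3} --- and then compose by transitivity of $\equiv$. The one cosmetic slip is in manipulation~(4): the renaming produced by~\eqref{disc-eq:rd3} at the creation site is $R^{z\mapsto a}$, so that $d^U_{D[a]}\equiv d^U_{D[z/a]};R^{z\mapsto a}$, rather than $R^{a\mapsto z}$, but the slide-and-absorb mechanism you describe is exactly the paper's.
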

\begin{proof}
This follows by noticing that every syntactic manipulation of Definition~\ref{def:nf-equivalence} keeps the term in a normal form, and moreover preserves the equality $\equiv$:
\begin{enumerate}
\item the terms may be commuted by equations~\eqref{disc-eq:comm1},~\eqref{disc-eq:rcomm} and~\eqref{disc-eq:scomm},
\item vertex names in $C$-terms may be permuted by~\eqref{disc-eq:cs},
\item the indices in repeated (dis)connections may be exchanged by~\eqref{disc-eq:ddindex},
\item if $a\in D^{add}_{\mathtt t}$, so that there is a disconnection $d^U_{D[a]}$, we use equation~\eqref{disc-eq:rd3} to obtain $d^U_{D[a]} \equiv d^U_{D[z/a]};R^{z\mapsto a}$ to introduce the desired fresh variable $z$; the renaming term can then be absorbed into the second occurrence of $a$, hence replacing $a$ with $z$ (the case when $a\in C_{\mathtt t}$ is similar),
\item the exchange of $\alpha$-vertices with the same neighbour is obtained by equation~\eqref{disc-eq:rsymm}:
$$R^{a\mapsto b};R^{c\mapsto d} \equiv S^c;R^{a\mapsto b};R^{c\mapsto d} \equiv R^{c\mapsto b};R^{a\mapsto c};R^{c\mapsto d} \equiv R^{c\mapsto b};R^{a\mapsto d},$$
\item the last syntactic manipulation is obtained by equations~\eqref{prop:disc-ids2} and~\eqref{prop:disc-ids3}.
\end{enumerate}
\end{proof}

\section{From disconnections to reactions, functorially}\label{sec:disc-to-react}

This section establishes a tight link between the category of disconnection rules (Definition~\ref{def:disc-cat}) and the category of reactions (Definition~\ref{def:category-reactions}) by constructing a functor 
$$R:\Disc\rightarrow\React,$$
establishing soundness of the disconnection rules with respect to the reactions. Moreover, we prove that $R$ is faithful and full up to isomorphism\footnote{See Remark~\ref{rem:equiv-subcat} for the discussion on why $R$ is not full.}: a fact that entails completeness and universality (Theorems~\ref{thm:completeness} and~\ref{thm:universality}). In combination, the results of this section allow for algebraic reasoning about the reactions using the equations for the disconnection rules (Figure~\ref{fig:disc-axioms}).

We define a function $R$ from terms to morphisms in $\React$ as follows. Given a term $\mathtt t:A\rightarrow B$, the morphism $R(\mathtt t):A\rightarrow B$ has the form
$$R(R_1(\mathtt t),R_2(\mathtt t),\id,\id),$$
where $\id:\Chem{R_1(\mathtt t)}\rightarrow\Chem{R_2(\mathtt t)}$ and $\id:V_A\setminus R_1(\mathtt t)\rightarrow V_B\setminus R_2(\mathtt t)$ are both identity maps. Since all the terms are mapped to morphisms whose bijection and isomorphism parts are the identities, we omit these, and simply write $R(\mathtt t)=(R_1(\mathtt t),R_2(\mathtt t))$. The recursive definition of this mapping is given below:
\begin{align*}
R(\id_A) &\coloneqq (\eset,\eset) & R(E^{uv}) &\coloneqq (\{u,v\},\{u,v\}) \\
R\left(S^u\right) &\coloneqq (\{u\},\{u\}) & R(I^{uv}) &\coloneqq (\{u,v\},\{u,v\}) \\
R\left(R^{u\mapsto v}\right) &\coloneqq (\{u\},\{v\}) & R(C^{uv}_{ab}) &\coloneqq (\{u,v\},\{u,v,a,b\}) \\
R(E^{u}_{ab}) &\coloneqq (\{u\},\{u,a,b\}) & R\left(\bar d^{uv}_{ab}\right) &\coloneqq\overline{R\left(d^{uv}_{ab}\right)} \\
 &\phantom{\coloneqq} & R(\mathtt t;\mathtt s) &\coloneqq R(\mathtt t);R(\mathtt s).
\end{align*}
Observe that for all the disconnections we have $R(d^U_D)=(U,U\cup D)$.

Soundness of disconnection rules with respect to reactions is expressed as functoriality:
\begin{proposition}\label{prop:r-dagger-functor}
The assignment $R:\Disc\rightarrow\React$ is a dagger functor.
\end{proposition}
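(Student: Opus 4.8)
The plan is to show that $R$ respects identities, composition, and the dagger. The definition of $R$ is already given recursively, so the bulk of the work is verifying that this definition is well-defined (i.e.~compatible with the equations of $\Disc$ in Figure~\ref{fig:disc-axioms}) and then checking the three functoriality/dagger conditions.

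First I would establish well-definedness, which I expect to be the main obstacle. Since morphisms in $\Disc$ are equivalence classes of terms under the relation $\equiv$ induced by $\approx$ (Definition~\ref{def:term-id}), I must check that whenever $\mathtt t\equiv\mathtt s$ as terms of the same type, we have $R(\mathtt t)=R(\mathtt s)$ in $\React$. This amounts to verifying that each generating equation in Figure~\ref{fig:disc-axioms} is soundly interpreted by $R$. For each equation $\mathtt t\approx\mathtt s$ (restricted to well-typed instances), I would compute both $R(\mathtt t)$ and $R(\mathtt s)$ using the recursive clauses together with the composition rule of $\React$ (Definition~\ref{def:category-reactions}), recalling that composition of $(U_A,U_B):A\to B$ and $(W_B,W_C):B\to C$ yields $(U_A\cup (W_B\setminus U_B),\, W_C\cup (U_B\setminus W_B))$ once we account for the identity isomorphisms on the unchanged parts. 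Since every term is sent to a morphism whose bijection and isomorphism components are identities, the computation reduces to set-theoretic bookkeeping of the subsets $R_1$ and $R_2$. I would check the renaming/$S$-term equations (\ref{disc-eq:trans}--\ref{disc-eq:sr2}), the commutation equations, and the (dis)connection cancellation equations (\ref{disc-eq:ddbar1}--\ref{disc-eq:comm13}) one family at a time, using the observation $R(d^U_D)=(U,U\cup D)$ and $R(\bar d^U_D)=\overline{R(d^U_D)}=(U\cup D,U)$.

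Next I would verify the three structural properties, which are comparatively routine. Preservation of identities is immediate from the clause $R(\id_A)=(\eset,\eset)$, which is exactly the identity morphism in $\React$. Preservation of composition holds by definition, since the final recursive clause sets $R(\mathtt t;\mathtt s)\coloneqq R(\mathtt t);R(\mathtt s)$; I need only note that the types match so the composite on the right is defined in $\React$. For the dagger, I would use the observation that $R(\bar d^U_D)=\overline{R(d^U_D)}$ on generators, together with the contravariant clause $R(\mathtt t;\mathtt s)=R(\mathtt t);R(\mathtt s)$ and the fact that $\overline{()}$ reverses composition (both in $\Disc$, where $\overline{\mathtt t;\mathtt s}=\overline{\mathtt s};\overline{\mathtt t}$, and in $\React$, where the dagger of a composite is the reversed composite of daggers). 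An induction on the structure of terms then gives $R(\overline{\mathtt t})=\overline{R(\mathtt t)}$ for all terms, establishing that $R$ is a dagger functor.

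The hard part will be the sheer number of equations to check for well-definedness, but each individual check is a short calculation with finite subsets of vertex names, so no single step is conceptually difficult. The key simplification throughout is that $R$ collapses all the fine syntactic data of a term into just the pair of ``touched'' subsets $(R_1,R_2)$, with trivial bijection and isomorphism parts; this is precisely why the many equations of Figure~\ref{fig:disc-axioms}, which encode the non-trivial syntactic interactions, all become equalities of subsets under $R$.
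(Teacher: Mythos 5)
Your proposal is correct and follows essentially the same route as the paper's proof: the substantive work is verifying that each equation in Figure~\ref{fig:disc-axioms} is preserved, which reduces to set-theoretic bookkeeping of the pairs $(R_1,R_2)$ since all bijection and isomorphism components are identities, while preservation of identities, composition, and the dagger is immediate from the recursive definition. The paper likewise treats functoriality and dagger preservation as following directly from the definition and only spells out the handful of less obvious equation checks (e.g.~\eqref{disc-eq:rd1}, \eqref{disc-eq:rd4}, \eqref{disc-eq:eebar}, \eqref{disc-eq:comm1}), exactly the kind of computation you describe.
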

\begin{proof}
Functoriality and preservation of dagger structure follow immediately from the definition. We have to show that $R$ preserves the equalities in $\Disc$, generated by the identities in Figure~\ref{fig:disc-axioms}. Most of these follow immediately by assuming that the expressions on both sides of the equality have the same type, and showing that they are mapped to the same pair of sets by $R$. Hence we only give the cases that are less obvious or require more computation. To further simplify the notation, we omit the curly brackets of set-builder notation as well as the commas separating vertex names from each other: so e.g.~$(uv,uvab)$ stands for $(\{u,v\},\{u,v,a,b\})$.

For~\eqref{disc-eq:rd1}, suppose that $R^{u\mapsto v};d^U_D\equiv d^U_D;R^{u\mapsto v}$. In particular, it follows that $u,v\notin U\cup D$. Let us write $R(d^U_D)=(R_1,R_2)$, so that $u,v\notin R_1,R_2$. We use this to show that both sides of the equality evaluate to the same map:
$$(u,v);(R_1,R_2) = (\{u\}\cup R_1, R_2\cup\{v\}) = (R_1,R_2);(u,v).$$

For~\eqref{disc-eq:rd4}, suppose that $d^{U'}_{ij};\bar h^U_{ab};R^{i\mapsto c};R^{j\mapsto d} \equiv \bar h^U_{ab};d^{U'}_{cd}$. Denote $R(d^{U'}_{ij})=(u'v',u'v'ij)$, $R(d^{U'}_{cd})=(u'v',u'v'cd)$ and $R(\bar h^{U}_{ab})=(uvab,uv)$. Note that $d$ and $h$ are not $E^{\geq 0}$-terms, whence it follows that $i,j\notin U$ and $a,b\notin U'$. From the fact that the left-hand side is defined, we obtain that $\{i,j\}$ and $\{a,b\}$ are disjoint. The left-hand side is thus translated to
\begin{flalign*}
(u'v',u'v'ij);(uvab,uv);(i,c);(j,d) &= (u'v'uvab,uvu'v'ij);(i,c);(j,d) \\
                              &= (u'v'uvab,cuvu'v'j);(j,d) \\
                              &= (u'v'uvab,dcuvu'v') \\
                              &= (uvu'v'ab,uvu'v'cd) \\
                              &= (uvab,uv);(u'v',u'v'cd),
\end{flalign*}
which we recognise as the translation of the right-hand side.

For~\eqref{disc-eq:eebar}, suppose $E^{ua};\bar E^{ub} \equiv S^u;R^{a\mapsto z};R^{b\mapsto a};R^{z\mapsto b}$. We start from the translation of the right-hand side:
\begin{flalign*}
(u,u);(a,z);(b,a);(z,b) &= (ua,zu);(bz,ba) \\
                        &= (uab,bau) \\
                        &= (uab,uba) \\
                        &= (ua,ua);(ub,ub),
\end{flalign*}
which we recognise as the translation of the left-hand side.

For~\eqref{disc-eq:comm1}, write $R(d^U_D)=(U,U\cup D)$ and $R(d^{U'}_{D'})=(U',U'\cup D')$, so that we get
$$R(d^U_D;d^{U'}_{D'}) = (U\cup U', U\cup U'\cup D\cup D') = R(d^{U'}_{D'};d^U_D).$$
\end{proof}

Recall the syntactic manipulations of terms in normal form we introduced in Definition~\ref{def:nf-equivalence}. We have seen that these manipulations preserve equality (Lemma~\ref{lma:nf-equivalence}). The following lemma is the core of the completeness argument.
\begin{lemma}\label{lma:nf-equality-equivalence}
Let $\mathtt t$ and $\mathtt s$ be terms in a normal form such that $R(\mathtt t)=R(\mathtt s)$. Then $\mathtt t\sim\mathtt s$.
\end{lemma}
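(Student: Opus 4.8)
Let $\mathtt t$ and $\mathtt s$ be terms in normal form with $R(\mathtt t)=R(\mathtt s)$; I must show $\mathtt t\sim\mathtt s$, i.e. one is obtainable from the other by the six syntactic manipulations of Definition~\ref{def:nf-equivalence}.

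<br>

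My plan is to extract enough combinatorial data from the pair $R(\mathtt t)=(R_1(\mathtt t),R_2(\mathtt t))$ to pin down each of the named blocks $\mathtt I, \mathtt C, \mathtt E^{<0}, \mathtt E^{\geq 0}, \bar{\mathtt E}^{\geq 0}, \bar{\mathtt E}^{<0}, \bar{\mathtt C}, \bar{\mathtt I}, \mathtt A, \mathtt B, \mathtt S$ up to the allowed manipulations. The key observation to establish first is that, because $\mathtt t$ is in normal form, the normal-form conditions \ref{nf:S0}--\ref{nf:ion} force the (dis)connections occurring in $\mathtt t$ to be determined by the abstract map $R(\mathtt t)$. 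Concretely, I would argue that the multiset of generating (dis)connection and $S$/$R$-terms can be \emph{read off} from how $R(\mathtt t)$ acts: the vertices lying in $R_1(\mathtt t)$ but fixed by the underlying identity bijection record the $S$-terms (via $S_{\mathtt t}$), the vertices added going from $R_1$ to $R_2$ record the subscripts $D^{add}_{\mathtt t}$ introduced by disconnections (conditions \ref{nf:discrename}, \ref{nf:dbdisjoint}), and so on. The crucial point is that normal form eliminates all ``cancelling'' pairs: condition \ref{nf:disconnection} prevents a disconnection $d^U_D$ coexisting with its reverse $\bar d^U_F$, condition \ref{nf:electron} does the same for the $E^{uv}$/$\bar E^{uw}$ interaction, and condition \ref{nf:ion} constrains $I^{uv}$/$\bar I^{uv}$ coexistence. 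Without these, two terms with equal $R$-image could differ by a cancelling pair invisible to $R$; the normal-form conditions are precisely what rules this out.

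<br>

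The main body of the argument is then a block-by-block matching. I would proceed along the normal form from left to right. For the covalent and electron disconnection blocks $\mathtt C, \mathtt E^{<0}, \mathtt E^{\geq 0}$, the sets $U_{\mathtt t}, D^{add}_{\mathtt t}$ together with the edge-label bookkeeping of $R$ (recall $R(C^{uv}_{ab})=(\{u,v\},\{u,v,a,b\})$, etc.) determine which generators appear and on which vertices; any remaining ambiguity in \emph{which} fresh name $a,b$ was chosen is absorbed by manipulation \ref{nf-eq:rename} (renaming introduced/removed vertices), ambiguity in the order within a block by manipulation \ref{nf-eq:comm} (commuting within a named sequence), ambiguity in the pairing of subscripts in repeated (dis)connections by \ref{nf-eq:same-swap}, and the $u\leftrightarrow v$ symmetry of $C$-terms by \ref{nf-eq:permute}. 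The reverse blocks $\bar{\mathtt C},\bar{\mathtt E}^{\geq 0},\bar{\mathtt E}^{<0},\bar{\mathtt I}$ are handled dually, using $D^{remove}_{\mathtt t}$. Finally the renaming blocks $\mathtt A,\mathtt B$ and the $S$-block are matched using the renaming-form structure (Definition~\ref{def:renaming-form}): the partial bijection recorded by $R$ on the vertices not fixed by identity determines the renaming pairs up to the freedom of manipulation \ref{nf-eq:r-swap} (exchanging source names of renaming terms with equal neighbourhoods — here condition \ref{renaming-form-5} of the renaming form is exactly the obstruction to a forced choice) and \ref{nf-eq:r-exchange} (exchanging names between connections and renaming terms).

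<br>

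\textbf{The main obstacle} I anticipate is the interaction between the renaming blocks and the connection blocks, precisely the case handled by manipulations \ref{nf-eq:r-swap} and \ref{nf-eq:r-exchange}. Two distinct normal-form terms can legitimately record the ``same'' reaction while distributing the job of relabelling $\alpha$-vertices differently between a connection's subscript and a downstream renaming term, or between two renaming terms whose sources have identical neighbourhoods. The functor $R$ composes via the $\React$-composition of Definition~\ref{def:category-reactions}, which merges subsets across the chain, so these distinctions are genuinely invisible to $R$; I must show that any two such distributions are related by \ref{nf-eq:r-swap}/\ref{nf-eq:r-exchange} and not by some manipulation outside the allowed list. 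The delicate step will be verifying that condition \ref{nf:connrename} of the normal form (no renaming term $R^{z\mapsto a}$ can be ``absorbed'' into a coexisting connection $\bar d^U_{D[a]}$ whose domain also admits $\bar d^U_{D[z/a]}$) leaves exactly enough rigidity that the residual freedom is captured by \ref{nf-eq:r-exchange}. I expect to handle this by an induction on the number of renaming terms, peeling off one renaming pair at a time and using condition \ref{renaming-form-5} to detect whether a swap is needed; the bookkeeping of which vertex names are ``dummy'' (introduced by a disconnection or in $C_{\mathtt t}$) versus genuinely present will be the fiddly part.
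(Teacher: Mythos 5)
Your proposal follows the paper's own proof essentially step for step: you first use the normal-form conditions~\ref{nf:disconnection}, \ref{nf:electron} and~\ref{nf:ion} to rule out cancelling (dis)connection pairs, so that $R(\mathtt t)=R(\mathtt s)$ forces the two terms to contain the same disconnections and connections up to renaming of $\alpha$-vertices, exactly as the paper argues, and you then locate the genuine difficulty in the renaming/connection interaction, to be resolved by a case analysis on whether a source name lies in $D^{add}_{\mathtt t}$ and a target name in $C_{\mathtt t}$ using manipulations~\ref{nf-eq:r-swap}, \ref{nf-eq:r-exchange} and~\ref{nf-eq:rename} --- which is precisely the paper's four-case analysis --- before absorbing the residual ordering freedom with manipulations~\ref{nf-eq:comm} and~\ref{nf-eq:same-swap}. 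This is the same approach as the paper's proof, and your anticipated handling of the delicate points (condition~\ref{nf:connrename} versus manipulation~\ref{nf-eq:r-exchange}, and the $S$-terms via conditions~\ref{nf:S0} and~\ref{nf:S1}) matches it as well.
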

\begin{proof}
Let us write
\begin{align*}
\mathtt t &= \mathtt I_{\mathtt t};\mathtt C_{\mathtt t};\mathtt E^{<0}_{\mathtt t};\mathtt E^{\geq 0}_{\mathtt t};\bar{\mathtt E}^{\geq 0}_{\mathtt t};\bar{\mathtt E}^{<0}_{\mathtt t};\bar{\mathtt C}_{\mathtt t};\bar{\mathtt I}_{\mathtt t};\mathtt A_{\mathtt t};\mathtt B_{\mathtt t};\mathtt S_{\mathtt t}, \\
\mathtt s &= \mathtt I_{\mathtt s};\mathtt C_{\mathtt s};\mathtt E^{<0}_{\mathtt s};\mathtt E^{\geq 0}_{\mathtt s};\bar{\mathtt E}^{\geq 0}_{\mathtt s};\bar{\mathtt E}^{<0}_{\mathtt s};\bar{\mathtt C}_{\mathtt s};\bar{\mathtt I}_{\mathtt s};\mathtt A_{\mathtt s};\mathtt B_{\mathtt s};\mathtt S_{\mathtt s}.
\end{align*}
Similarly, let us denote the vertex name sets in the respective renaming forms by $A_{\mathtt t},B_{\mathtt t},C_{\mathtt t},D_{\mathtt t}$ and $A_{\mathtt s},B_{\mathtt s},C_{\mathtt s},D_{\mathtt s}$. Let us denote the morphism $R(\mathtt t)=R(\mathtt s)$ by $(R_1,R_2):A\rightarrow B$.

First, we observe that if $E^{ua}\in\mathtt E^{\geq 0}_{\mathtt t}$, then condition~\ref{nf:electron} of normal form (Definition~\ref{def:normal-form}) implies that the charge of $u$ cannot be increased, whence there is a vertex name $b\in\VS$ such that $E^{ub}\in\mathtt E^{\geq 0}_{\mathtt s}$. Similarly, by condition~\ref{nf:disconnection} of normal form, if $E^u_{ab}\in\mathtt E^{<0}_{\mathtt t}$, then $E^u_{cd}\in\mathtt E^{<0}_{\mathtt s}$ for some $c,d\in\VS$; and if $C^{uv}_{ab}\in\mathtt C_{\mathtt t}$, then $C^{uv}_{cd}\in\mathtt C_{\mathtt s}$ for some $c,d\in\VS$. By condition~\ref{nf:ion} of normal form, if $I^{uv}\in\mathtt I_{\mathtt t}$ then $I^{uv}\in\mathtt I_{\mathtt s}$. Since the connections cannot undo the disconnections, a similar inclusion up to $\alpha$-vertices holds for them. Thus we obtain that the sequences of disconnections and connections must coincide, up to renaming the $\alpha$-vertices.

Next, suppose that $R^{a\mapsto b}\in\mathtt A_{\mathtt t}$, so that $a\in A_{\mathtt t}$ and $b\in B_{\mathtt t}$. There are four cases.

\noindent\textbf{Case 1:} $a\notin D^{add}_{\mathtt t}$ and $b\notin C_{\mathtt t}$. This implies that $a\in R_1$ and $b\in R_2$. Moreover, if $b\in R_1$, then condition~\ref{nf:connrename} of normal form yields that $\Nbr(a)\neq\Nbr(b)$. It follows that either $R^{a\mapsto b}\in\mathtt A_{\mathtt s}$, or both $\bar d^U_{D[a]}$ and $R^{z\mapsto b}$ occur in $\mathtt s$ such that $\bar d^U_{D[z/a]}$ is defined. But in the latter case the vertex names $a$ and $z$ may be exchanged by syntactic manipulation~\ref{nf-eq:r-exchange} (Definition~\ref{def:nf-equivalence}), so that we may assume $R^{a\mapsto b}\in\mathtt A_{\mathtt s}$.

\noindent\textbf{Case 2:} $a\notin D^{add}_{\mathtt t}$ and $b\in C_{\mathtt t}$. This means that $R^{b\mapsto d}\in\mathtt B_{\mathtt t}$ for some $d\in D_{\mathtt t}$, and for some $x\in\VS$, we have $R^{d\mapsto x}\in\mathtt A_{\mathtt t}$. Condition~\ref{nf:discrename} of normal form implies that $d\notin D^{add}_{\mathtt t}$, so that we have $a\in R_1$ and $d\in R_1\cap R_2$. If $x\notin C_{\mathtt t}$, then by Case~1, $R^{d\mapsto x}\in\mathtt A_{\mathtt s}$, so that also $R^{a\mapsto z}\in\mathtt A_{\mathtt s}$ and $R^{z\mapsto d}\in\mathtt B_{\mathtt s}$ for some $z\in\VS$. If $x\in C_{\mathtt t}$, then we inductively repeat Case~2. By syntactic manipulation~\ref{nf-eq:rename}, we may assume that $R^{a\mapsto b}\in\mathtt A_{\mathtt s}$ and $R^{b\mapsto d}\in\mathtt B_{\mathtt s}$.

\noindent\textbf{Case 3:} $a\in D^{add}_{\mathtt t}$ and $b\notin C_{\mathtt t}$. Thus there is a disconnection $d^U_{D[a]}\in\mathtt t$, so that $d^U_{D[x/a]}\in\mathtt s$ for some $x\in\VS$. This implies $a\notin R_1\cup R_2$ and $b\in R_2$. As in Case~1, it follows that $R^{z\mapsto b}\in\mathtt A_{\mathtt s}$ for some $z\in\VS$. Moreover, in this case we must have $\Nbr(x)=\Nbr(z)$, whence by syntactic manipulation~\ref{nf-eq:r-swap} we may assume that $R^{x\mapsto b}\in\mathtt A_{\mathtt s}$ with $x\in D^{add}_{\mathtt s}$. By syntactic manipulation~\ref{nf-eq:rename}, we may assume that $R^{a\mapsto b}\in\mathtt A_{\mathtt s}$ and $d^U_{D[a]}\in\mathtt s$.

\noindent\textbf{Case 4:} $a\in D^{add}_{\mathtt t}$ and $b\in C_{\mathtt t}$. We thus have $a,b\notin R_1\cup R_2$. This means that $R^{b\mapsto d}\in\mathtt B_{\mathtt t}$ for some $d\in D_{\mathtt t}$, so that $R^{d\mapsto x}\in\mathtt A_{\mathtt t}$ for some $x\in\VS$. Condition~\ref{nf:discrename} of normal form implies that $d\notin D^{add}_{\mathtt t}$, so that $d\in R_1\cap R_2$ and either Case~1 or Case~2 applies to $R^{d\mapsto x}$. In both cases we conclude that $R^{d\mapsto x}\in\mathtt A_{\mathtt s}$. Since $d\in R_2$, we have $R^{w\mapsto d}\in\mathtt B_{\mathtt s}$ for some $w\in\VS$. Since there is a disconnection $d^U_{D[a]}\in\mathtt t$, we have $d^U_{D[y/a]}\in\mathtt s$ for some $y\in\VS$. Note that we have $\Nbr(y)=\Nbr(w)$. Consequently, by syntactic manipulation~\ref{nf-eq:r-swap}, we may assume that $R^{y\mapsto w}\in\mathtt A_{\mathtt s}$ with $y\in D^{add}_{\mathtt s}$ and $w\in C_{\mathtt s}$. By syntactic manipulation~\ref{nf-eq:rename}, we may assume that $R^{a\mapsto b}\in\mathtt A_{\mathtt s}$, $R^{b\mapsto d}\in\mathtt B_{\mathtt s}$ and $d^U_{D[a]}\in\mathtt s$.

Thus we have shown that $\mathtt t$ and $\mathtt s$ have the same renaming sequences (up to $\sim$), and up to the syntactic manipulations, $D^{add}_{\mathtt t}=D^{add}_{\mathtt s}$ and $D^{remove}_{\mathtt t}=D^{remove}_{\mathtt s}$.

If $S^u\in\mathtt S_{\mathtt t}$, then $u\in R_1\cap R_2$ and, by conditions~\ref{nf:S0} and~\ref{nf:S1} of normal form, $u$ does not occur anywhere else in $\mathtt t$. The argument so far entails that $u\notin U_{\mathtt s}\cup A_{\mathtt s}\cup B_{\mathtt s}$, so that $S^u\in\mathtt S_{\mathtt s}$. Thus $\mathtt S_{\mathtt t} =\mathtt S_{\mathtt s}$.

Now the only difference left between $\mathtt t$ and $\mathtt s$ is in which order the vertex names are introduced and removed. This is taken care of precisely by syntactic manipulations~\ref{nf-eq:comm} and~\ref{nf-eq:same-swap}.
\end{proof}

Combining the above lemma with the results from the previous section, we conclude that the functor $R:\Disc\rightarrow\React$ is faithful. We spell this out in detail in the following:
\begin{theorem}[Completeness]\label{thm:completeness}
For all terms $\mathtt t$ and $\mathtt s$, we have $\mathtt t\equiv\mathtt s$ in $\Disc$ if and only if $R(\mathtt t)=R(\mathtt s)$ in $\React$.
\end{theorem}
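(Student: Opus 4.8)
The plan is to split the biconditional into its two directions, each of which assembles results established earlier. The forward implication is pure soundness: since $R:\Disc\rightarrow\React$ is a functor (Proposition~\ref{prop:r-dagger-functor}), it sends equal morphisms to equal morphisms, so $\mathtt t\equiv\mathtt s$ in $\Disc$ immediately yields $R(\mathtt t)=R(\mathtt s)$ in $\React$. No further work is needed here.

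For the converse (completeness proper), I would argue by passing through normal forms. Suppose $R(\mathtt t)=R(\mathtt s)$. First I would invoke Proposition~\ref{prop:normal-form-existence} to obtain terms $\mathtt t'$ and $\mathtt s'$ in normal form with $\mathtt t\equiv\mathtt t'$ and $\mathtt s\equiv\mathtt s'$. Applying functoriality of $R$ once more gives
$$R(\mathtt t')=R(\mathtt t)=R(\mathtt s)=R(\mathtt s'),$$
so the two normal-form terms agree under $R$. Now Lemma~\ref{lma:nf-equality-equivalence} applies directly: having equal images under $R$, the normal-form terms satisfy $\mathtt t'\sim\mathtt s'$, i.e.~one is obtained from the other by the syntactic manipulations of Definition~\ref{def:nf-equivalence}. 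Finally, Lemma~\ref{lma:nf-equivalence} converts this normal-form equivalence back into an honest equality $\mathtt t'\equiv\mathtt s'$ in $\Disc$. Chaining the equalities then gives $\mathtt t\equiv\mathtt t'\equiv\mathtt s'\equiv\mathtt s$, as required.

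The main point worth flagging is that the entire substance of the argument has been front-loaded into the preceding lemmas, so the proof of the theorem itself is a short, purely formal composition: the round trip through normal forms is equality-preserving in one direction (Lemma~\ref{lma:nf-equivalence}) and reaction-image-preserving in the other (Proposition~\ref{prop:r-dagger-functor}), with Lemma~\ref{lma:nf-equality-equivalence} bridging the two. The genuinely hard step lies in that last lemma — showing that two normal forms with the same reaction image differ only by the permitted syntactic manipulations, which rests on the delicate case analysis of how $\alpha$-vertices are introduced and removed across the renaming sequences. Given that lemma together with the normal-form existence result, the present theorem requires no further chemical or combinatorial input.
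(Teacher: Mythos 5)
Your proof is correct and follows exactly the paper's route: the forward direction is functoriality (Proposition~\ref{prop:r-dagger-functor}), and the converse passes through normal forms via Proposition~\ref{prop:normal-form-existence}, Lemma~\ref{lma:nf-equality-equivalence} and Lemma~\ref{lma:nf-equivalence}, just as the paper does. You have merely written out the chain $\mathtt t\equiv\mathtt t'\equiv\mathtt s'\equiv\mathtt s$ in more detail than the paper's one-line proof, and your closing observation that the real work lives in Lemma~\ref{lma:nf-equality-equivalence} is accurate.
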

\begin{proof}
The `only if' direction is functoriality (Proposition~\ref{prop:r-dagger-functor}). The `if' direction follows from the fact that every term is equal to a term in normal form (Proposition~\ref{prop:normal-form-existence}) and from Lemmas~\ref{lma:nf-equality-equivalence} and~\ref{lma:nf-equivalence}.
\end{proof}

The argument for universality turns out to be much simpler than that for completeness. However, in combination with Theorem~\ref{thm:completeness}, it gives a rather strong representation result for reactions: not only can every reaction be decomposed into a sequence of disconnection rules, but this sequence is also unique, up to changing the vertex names and up to the equations in $\Disc$. In abstract terms, the statement of universality is that the functor $R:\Disc\rightarrow\React$ is full up to isomorphism in $\React$. As for completeness, we spell out the details:
\begin{theorem}[Universality]\label{thm:universality}
Given a reaction $r:A\rightarrow C$ in $\React$, there is a term $\mathtt t:A\rightarrow B$ in $\Disc$ and an isomorphism $\iota:B\xrightarrow{\sim} C$ in $\React$ such that $R(\mathtt t);\iota = r$.
\end{theorem}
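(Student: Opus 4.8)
The plan is to exploit the fact that the functor $R$ always produces reactions whose bijection and isomorphism components are identities: for any term $\mathtt t$ we have $R(\mathtt t) = (R_1(\mathtt t), R_2(\mathtt t), \id, \id)$. Consequently the intended role of $\iota$ is precisely to absorb the relabelling data $b$ and $i$ of the given reaction $r = (U_A, U_C, b, i)$. First I would fix, once and for all, the target isomorphism: let $\iota : B \to C$ be a chemical-graph isomorphism that restricts to $b$ on the chemical vertices of the active region, to $i$ on the complement $V_A \setminus U_A$, and to some chosen bijection on the auxiliary $\alpha$-vertices. The graph $B$ is then nothing but $C$ retagged so that its chemical active vertices carry the names of $\Chem{U_A}$ and its complement carries the names of $V_A \setminus U_A$; in particular $B$ agrees with $A$ outside the active region and on the chemical labels of the active region. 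It remains to produce a term $\mathtt t : A \to B$ whose image is $(U_A, U_B, \id, \id)$ and to check that composing this with $\iota$ in $\React$ returns $r$.

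To build $\mathtt t$ I would follow the disconnect-then-reconnect shape already isolated in the $ICE$-form (Definition~\ref{def:ICE-form}, Proposition~\ref{prop:ICE-form}). The disconnection phase reduces the internal structure of $U_A$ --- all bonds lying within $U_A$ and all charges on $U_A$-vertices --- to a canonical atomic configuration, leaving each chemical atom $u \in \Chem{U_A}$ carrying only its (unchanged) bonds to the complement together with a decoration of $\alpha$-vertices recording its remaining valence: ionic bonds are severed with $I$-rules, covalent bonds are broken down to order zero with $C$-rules, and charges are discharged onto free $\alpha$-vertices with the electron-detachment rules $E^{<0}$ and $E^{\geq 0}$. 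This atomic configuration is exactly the kind of object described by the valence completion and charge decomposition (Definitions~\ref{def:valence-completion} and~\ref{def:charge-decomp}) of the active region. The reconnection phase is then the dagger (Proposition~\ref{prop:r-dagger-functor}) of the analogous disconnection applied to the $U_C$-side, transported to $B$'s names, so I would take $\mathtt t = \mathtt d_A ; \mathtt R ; \overline{\mathtt d_B}$, where $\mathtt d_A$ disconnects $U_A$, $\mathtt d_B$ disconnects $U_B$, and $\mathtt R$ is a sequence of renaming terms aligning the fresh $\alpha$-vertex names produced by the two phases.

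The crucial lemma underpinning this construction --- and the main obstacle --- is that the two atomic configurations coincide up to renaming of $\alpha$-vertices, so that $\overline{\mathtt d_B}$ is genuinely composable after $\mathtt d_A ; \mathtt R$. This is where valence-completeness and the defining hypotheses of a reaction enter: for each $u$ the total valence $\mathbf v(\tau^{\At}(u))$ is fixed, the covalent degree of $u$ towards the complement equals that of $b(u)$ (since $m_A(u,a) = m_C(bu, ia)$ and $i$ is a bijection), and therefore $u$ and $b(u)$ have the same internal valence budget to be decomposed, even though the split between charge and covalent bonds may differ. The global balance $\Net{U_A} = \Net{U_C}$ ensures the free charged $\alpha$-vertices produced on the two sides can be matched. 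The delicate point is that when $r$ alters charges, the atomic forms are reached only after interconverting charge and dangling bonds via the $E$-rules and their inverses (for instance, a residual positive charge is neutralised by $\bar E$ at the cost of a free negative $\alpha$-vertex); I would therefore fix a normalised atomic form --- every chemical atom neutral, carrying only neutral single-bond $\alpha$-vertices, together with a pool of free charged $\alpha$-vertices summing to the net charge --- and argue that both $U_A$ and $U_C$ disconnect to it. Verifying that each rule application meets its side conditions, so that $\mathtt t$ is well-typed, is routine but must be done with care, as is the interplay with $\alpha$-vertices already present in $U_A$ or $U_C$.

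Finally I would verify $R(\mathtt t) ; \iota = r$ directly from the composition formula in $\React$ (Definition~\ref{def:category-reactions}). By construction $R(\mathtt t) = (U_A, U_B, \id, \id)$ realises the intrinsic rearrangement of the active region, and post-composing with the isomorphism $\iota$ relabels the chemical active vertices by $b$ and the complement by $i$, which is exactly the data of $r$; the dagger bookkeeping and the matching of $\alpha$-vertices guarantee that the reconnected bonds and charges agree with those of $C$. I expect the representation-theoretic content to lie entirely in the atomic-form lemma of the previous paragraph, with the remaining steps being careful but mechanical verifications.
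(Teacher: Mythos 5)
Your proposal follows essentially the same route as the paper's proof: factorise $r$ as $(U_A,U_B,\id,\id);(\eset,\eset,!,\iota)$, then realise the relabelling-free part by a term in $ICE$-shape that disconnects the active region to a canonical configuration and reconnects via the dagger, with well-typedness guaranteed by valence-completeness and the equality of atoms and net charge on the two sides. The paper's proof is exactly this, written out as one explicit product of $I$-, $C$-, $E$-, $\bar E$-, $\bar C$-, $\bar I$-, $R$- and $S$-terms.

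There is, however, one concrete defect in your choice of canonical intermediate. You normalise to ``every chemical atom neutral, carrying only neutral single-bond $\alpha$-vertices, together with a pool of free charged $\alpha$-vertices summing to the net charge.'' When $\Net{U_A}>0$ this configuration is not a chemical graph: Definition~\ref{def:chemgraph} forbids positively charged $\alpha$-vertices ($\alpha(A)\cap\Crgp A=\eset$), so a positive pool cannot exist, and since all objects of $\Disc$ are chemical graphs, your term would fail to be well-typed at that intermediate stage. The same imbalance shows up operationally: neutralising a unit of positive charge on a chemical vertex requires $\bar E^{uv}$ (Figure~\ref{fig:disc-rules-partial-fns}), which consumes a free $\alpha$-vertex of charge $-1$, and when the positive charge in $U_A$ exceeds the negative charge there are not enough such vertices to reach your all-neutral form. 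The paper sidesteps this by ionising in the opposite direction: after the $E^{<0}$-phase each chemical vertex $u$ is stripped of \emph{all} its bonded $\alpha$-neighbours by $\mathbf v\tau^{\At}(u)-\max(\tau^{\crg}(u),0)$ applications of $E^{\geq 0}$, leaving every atom at charge $\mathbf v\tau^{\At}(u)$ (positive charge on \emph{chemical} vertices is permitted) together with free negative $\alpha$-vertices; this fully ionised form is always a valence-complete chemical graph, and the reconnection phase is its dagger on the $U_B$ side. Replacing your atomic form by this one (or any form that leaves surplus positive charge on the chemical vertices rather than in the $\alpha$-pool), the rest of your argument, including the final verification that $R(\mathtt t);\iota=r$ via the composition formula of Definition~\ref{def:category-reactions}, goes through as in the paper.
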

\begin{proof}
Observe that every reaction $r:A\rightarrow C$ factorises as
$$(U_A,U_B,\id,\id);(\eset,\eset,!,\iota),$$
where $(U_A,U_B):A\rightarrow B$ is some reaction and $\iota:B\rightarrow C$ is an isomorphism of labelled graphs. Now, we may disconnect all possible bonds inside $U_A$, and then connect all possible bonds to obtain $U_B$. The fact that $U_A$ and $U_B$ have the same atom vertices and the same net charge guarantee that this can always be done. Precisely, the sought-after term $\mathtt t:A\rightarrow B$ is then given by
\begin{flalign*}
&\prod_{\substack{u\in\Crgp{U_A} \\ v\in\Crgn{U_A}}}\left(I^{uv}\right)^{\ion(m_A(u,v))}; \prod_{u,v\in\Chem{U_A}}\prod_{i=1}^{\cov(m_A(u,v))}C^{uv}_{a_ib_i}; \\
&\prod_{u\in\Crgn{U_A}}\prod_{i=1}^{-\tau^{\crg}_A(u)}E^u_{a_ib_i}; \prod_{u\in\Chem{U_A}}\prod_{i=1}^{\mathbf v\tau^{\At}_A(u) - \max\left(\tau^{\crg}_A(u),0\right)}E^{ua_i}; \\
&\prod_{u\in\Chem{U_B}}\prod_{i=1}^{\mathbf v\tau^{\At}_B(u) - \max\left(\tau^{\crg}_B(u),0\right)}\bar E^{ua_i}; \prod_{u\in\Crgn{U_B}}\prod_{i=1}^{-\tau^{\crg}_B(u)}\bar E^u_{a_ib_i}; \\
&\prod_{u,v\in\Chem{U_B}}\prod_{i=1}^{\cov(m_B(u,v))}\bar C^{uv}_{a_ib_i}; \prod_{\substack{u\in\Crgp{U_B} \\ v\in\Crgn{U_B}}}\left(\bar I^{uv}\right)^{\ion(m_B(u,v))}; \\
&\prod_{a\in\alpha(U_A)\setminus D}R^{a\mapsto b_a}; \prod_{b\in\alpha(U_B)}R^{a_b\mapsto b}; \prod_{u\in U_B}S^u,
\end{flalign*}
where the vertex names introduced by the $C$- and $E^{<0}$-terms are chosen so that they do not appear anywhere in $A$ or $B$, and their set is denoted by $I$. The vertex names removed by the $\bar C$- and $\bar E^{<}$-terms are chosen from $U_A\cup I$ such that the connection is well-typed: their set is denoted by $D$. Similarly, the $\alpha$-vertices appearing in the $E^{\geq 0}$- and $\bar E^{\geq 0}$-terms are chosen from $U_A\cup I$ such that the terms are well-typed. The vertex names introduced by the $R^{a\mapsto b_a}$-terms, where $a\in \alpha(U_A)\setminus D$, are chosen so that they do not appear in $A$, $B$ or $I$: their set is denoted by $R$. Finally, the vertex names removed by the $R^{a_b\mapsto b}$-terms, where $b\in\alpha(U_B)$ are chosen from $I\cup R$ in such a way that the terms are well-typed.

Note that while the term we obtain is in an $ICE$-form, it will not, in general, be in normal form.
\end{proof}
\begin{remark}\label{rem:equiv-subcat}
Theorem~\ref{thm:universality} entails that $\Disc$ is equivalent to the subcategory of $\React$ whose morphisms are of the form $(U_A,U_B,\id,\id)$, which is the image of the functor $R:\Disc\rightarrow\React$. The functor is not itself an equivalence, as morphisms which change labels of chemical vertices cannot be captured by the disconnection rules. The theorem, moreover, states that any morphism in $\React$ factorises as a morphism in the image of $R$ followed by an isomorphism. Thus, the disconnection rules capture all the reactions, up to renaming of the chemical vertices.
\end{remark}

\begin{example}\label{ex:disconnection-sequence}
Consider the reaction from Figure~\ref{fig:clayden} (formation of benzyl benzoate from benzoyl chloride and benzyl alcohol), which we redraw with vertex names below. Here both $b$ and $i$ are identity maps, and $\mathtt{Ph}$ stands for the phenyl group:
\begin{center}
\scalebox{.9}{\tikzfig{thesis-ch2/example-reaction}}.
\end{center}
Following the procedure of Theorem~\ref{thm:universality}, the reaction decomposes into the following sequence of (dis)connection rules:
\begin{align*}
& C^{zu}_{ab};C^{vw}_{cd};C^{ru}_{ij};C^{ru}_{nm};E^{vc};E^{wd};E^{za};E^{ub};E^{ri};E^{uj};E^{rn};E^{um}; \\
& \bar E^{vc};\bar E^{wd};\bar E^{za};\bar E^{ub};\bar E^{ri};\bar E^{uj};\bar E^{rn};\bar E^{um};\bar C^{ru}_{ij};\bar C^{ru}_{nm};\bar C^{wz}_{da};\bar C^{uv}_{bc}; \\
& S^z;S^u;S^v;S^w;S^r.
\end{align*}
The normal form of the above sequence is given by:
$$C^{zu}_{ab};C^{vw}_{cd};\bar C^{wz}_{da};\bar C^{uv}_{bc};S^r.$$
\end{example}

\chapter{Retrosynthesis as a layered theory}\label{ch:retrosynthesis}
This penultimate chapter puts to use all three perspectives on chemical processes developed so far -- reaction schemes, reactions, and disconnection rules -- as well as the theory developed in Part~\ref{part:layered} of the thesis. We combine the three perspectives into a layered theory to propose a mathematical framework for retrosynthesis. The layers of the layered theory we propose as the habitat for retrosynthesis all share the same set of objects -- namely, the chemical graphs. The morphisms of a layer are either matchings, disconnection rules or reactions, parameterised by environmental molecules (these can act as solvents, reagents or catalysts).

\section{The layers}

Given a finite set $M$ of molecular entities, let us enumerate the molecular entities in $M$ as $M_1,\dots,M_k$. Given a list natural numbers $n=(n_1,\dots,n_k)$, we denote by $n_1M_1 + \dots + n_kM_k$ the molecular graph obtained by taking the disjoint union of $n_i$ copies of $M_i$ for all $i=1,\dots,k$. We define three classes of symmetric monoidal categories parameterised by finite sets of molecular entities as follows.
\begin{definition}\label{def:para-cats}
Let $M$ be a finite set of molecular entities. We define the categories $\MMatch$, $\MReact$ and $\MDisc$ as having the chemical graphs as objects. The morphisms are defined as follows:
\begin{itemize}
\item in $\MMatch$, a morphism $A \xrightarrow{m,r} B$ is given by a matching $m:A\rightarrow B$ together with an injection $r:r_1M_1 + \dots + r_kM_k\rightarrow B$ preserving the atom labels such that $\im(m)\cup\im(r) = B$, and $\im(m)\cap\im(b)=m(\alpha(A))$. The composition of $A\xrightarrow{m,r}B\xrightarrow{n,s}C$ is given by
$$m;n:A\rightarrow C \text{ and } r;n+s:(r_1+s_1)M_1 + \dots + (r_k+s_k)M_k\rightarrow C.$$
\item in $\MReact$, a morphism $A\rightarrow B$ is a reaction $n_1M_1 + \dots + n_kM_k + A\xrightarrow r B$ (i.e.~a morphism in $\React$). Given another reaction $m_1M_1 + \dots + m_kM_k + B\xrightarrow s C$, the composite $A\rightarrow C$ is given by
$$(r + \id_{m_1M_1 + \dots + m_kM_k});s : (n_1+m_1)M_1 + \dots + (n_k+m_k)M_k + A\rightarrow C.$$
\item in $\MDisc$, a morphism $A\rightarrow B$ is given by a morphism $n_1M_1 + \dots + n_kM_k + A\xrightarrow d B$ in $\Disc$. Given another morphism $m_1M_1 + \dots + m_kM_k + B\xrightarrow h C$ the composite $A\rightarrow C$ is given by
$$d;h : (n_1+m_1)M_1 + \dots + (n_k+m_k)M_k + A\rightarrow C.$$
\end{itemize}
If $M=\eset$, we may omit the prefix.
\end{definition}
The idea is that the set $M$ models the reaction environment: the parametric definitions above capture the intuition that there is an unbounded supply of these molecules in the environment. The categories $\MReact$ and $\MDisc$ are the parameterised~\cite{fong2019backprop} versions of $\React$ and $\Disc$: a morphism $A\rightarrow B$ implicitly has a finite number of copies of molecules from $M$ in its domain. A morphism $A\rightarrow B$ in $\MMatch$ may be seen as a reaction which preserves the structure of $A$ as it is, while potentially breaking up and rearranging the molecular entities in $M$. We proceed to give an example of this.
\begin{example}\label{ex:mmatch}
A morphism in $\MMatch$ is a matching such that the environment contains enough ``building material'' to cover the complement of the image of the matching. We give an example below, taking $M=\{\texttt{HCl}\}$, where the horizontal map is the matching, and the vertical map is the injection (cf.~Figure~\ref{fig:clayden}):
\begin{center}
\scalebox{1}{\tikzfig{thesis-ch2/mmatch-example}}.
\end{center}
\end{example}
We formalise the fact that morphisms in $\MMatch$ look like special cases of reactions by noticing that for every $M$ there is an identity-on-objects functor
$$\MEmb : \MMatch\rightarrow\MReact$$
defined by $(m,r)\mapsto m|_{\Chem A} + r$, where $A$ is the domain of $(m,r)$. Thus we have the following situation, for every finite set of molecular entities $M$:
\begin{equation}\label{eq:m-functors}
\scalebox{1}{\tikzfig{thesis-ch2/m-functors}},
\end{equation}
where $\MR$ is defined by the action of the functor $R$ constructed in Section~\ref{sec:disc-to-react}. Additionally, for every pair of finite sets of molecular entities such that $M\sse N$, there is an inclusion functor for each of the three classes of categories.

\section{Retrosynthetic steps and sequences}

With the definitions of the previous section, and with the notation for deflational theories from Section~\ref{sec:opfib-defl-theories}, we are able to give mathematical definitions of retrosynthetic steps and sequences internal to the deflational theory defined by the diagram~\eqref{eq:m-functors}.

\begin{definition}[Retrosynthetic step]\label{def:retro-step}
A {\em retrosynthetic step} consists of
\begin{itemize}
\item molecular graphs $T$ and $B$, called the {\em target}, and the {\em byproduct},
\item a finite set of molecular entities $M$, called the {\em environment},
\item a chemical graph $S$, whose connected components are called the {\em synthons},
\item a molecular graph $E$, whose connected components are called the {\em synthetic equivalents},
\item morphisms $d\in\Disc(T,S)$, $m\in\MMatch(S,E)$, and $r\in\MReact(E,T+B)$.
\end{itemize}
\end{definition}
\begin{proposition}\label{prop:step-graphically}
The data of a retrosynthetic step are equivalent to existence of the following term in the deflational theory generated by the diagram~\eqref{eq:m-functors}:
\begin{center}
\scalebox{1}{\tikzfig{thesis-ch2/retro-layered-prop}}.
\end{center}
\end{proposition}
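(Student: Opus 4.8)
The plan is to prove the biconditional as a dictionary argument: I unpack the definition of a term in the deflational theory determined by diagram~\eqref{eq:m-functors} and match its constituents against the list of data in Definition~\ref{def:retro-step}. Since the theory is deflational, every term is assembled from internal terms (each living in a single layer), the functor boundaries $\refine$ and $\coarsen$, and sequential and parallel composition. Moreover, since each layer is presented as a monoidal theory whose free model is the category bearing its name, monoidal soundness and completeness (Corollary~\ref{cor:monoidal-free-forgetful}) give that the internal terms of a layer $\omega$, taken modulo the term congruence, are exactly the morphisms of that category; in particular internal $\Disc$-terms, $\MMatch$-terms and $\MReact$-terms are morphisms of $\Disc$, $\MMatch$ and $\MReact$ respectively. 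The crucial simplification is that every functor in diagram~\eqref{eq:m-functors} — the inclusions along $M\subseteq N$, the translations $\MR$, and the embeddings $\MEmb$ — is the identity on objects, and every layer has the chemical graphs as objects; hence type-matching across a functor boundary reduces to equality of chemical graphs, with no further coherence conditions to verify.

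For the forward direction, suppose a retrosynthetic step is given. I realise $d$ as an internal $\Disc$-term, $m$ as an internal $\MMatch$-term and $r$ as an internal $\MReact$-term, and transport the first two into the common layer $\MReact$ by functor boxes: $d$ is pushed along $\Disc\hookrightarrow\MDisc\xrightarrow{\MR}\MReact$ and $m$ along $\MMatch\xrightarrow{\MEmb}\MReact$, which are precisely the translations displayed as the functor boundaries in the figure. By Proposition~\ref{prop:cowindow-box-equality} each such box is interchangeable (up to a bidirectional 2-cell) with the corresponding translated internal term, so the boxed terms are well-sorted and compose in $\MReact$ along the shared object types $S$ and $E$. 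This yields a term of the displayed shape, with boundary types $T$, $S$, $E$ and $T+B$, and with the choice of environment $M$ recorded by which copy of the $\MReact$/$\MMatch$ layers is used.

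For the converse, a term of the displayed shape is, by the generation rules for deflational terms, forced to decompose as three internal terms separated by the two functor boundaries shown. Reading these off and applying the completeness correspondence recovers $d\in\Disc(T,S)$, $m\in\MMatch(S,E)$ and $r\in\MReact(E,T+B)$, while the outermost layer fixes the environment $M$ and the boundary labels fix $T$, $S$, $E$ and $B$ — exactly the data of Definition~\ref{def:retro-step}. The one step needing care, and the only genuine obstacle, is to confirm that the two boundaries in the figure are the images of $\MR$ (precomposed with the inclusion from $\Disc$) and of $\MEmb$, and not some other generating functor of the diagram, and that the deflational structure imposes no hidden extra constraint. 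This is settled by noting that for the source and target layers printed on each boundary there is a unique generating functor between them in diagram~\eqref{eq:m-functors}, so the decomposition of the term into the triple $(d,m,r)$ is unambiguous; everything remaining is routine sort-bookkeeping, rendered trivial by the identity-on-objects property observed above.
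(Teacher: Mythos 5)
Your proposal is correct and matches the paper's treatment: the paper states Proposition~\ref{prop:step-graphically} without proof, regarding it as a direct unpacking of Definition~\ref{def:retro-step} against the term-formation rules of the deflational theory, which is precisely your dictionary argument (identity-on-objects functors making the boundary sorts trivial, internal terms modulo congruence being the morphisms of each layer, and uniqueness of the generating functor between each printed pair of layers making the reverse reading unambiguous). Note only that your appeal to Proposition~\ref{prop:cowindow-box-equality} is unnecessary: the boxed translations of $d$ and $m$ are already well-sorted basic terms by the rule~\ref{term:int-box} as soon as the internal terms exist, so no bidirectional 2-cell is needed for the forward direction.
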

The morphism in the above proposition should be compared to the informal diagram in Figure~\ref{fig:clayden}.

The immediate advantage of presenting a retrosynthetic step as a term in a layered theory is that it illustrates how the different parts of the definition fit together in a highly procedural manner. Equally importantly, this presentation is fully compositional: note that the three morphisms constituting a retrosynthetic step can be divided between several parties (e.g.~different labs or computers), so long as their boundaries match in the specified way. Moreover, one can reason about different components of the step while preserving a precise mathematical interpretation, so long as one sticks to the rewrites (2-equations and 2-cells) of the deflational theory: we illustrate this in the following proposition.
\begin{proposition}\label{prop:step-equiv}
The term of Proposition~\ref{prop:step-graphically} is equal (up to a bidirectional 2-cell) to the following term:
\begin{center}
\scalebox{1}{\tikzfig{thesis-ch2/retro-layered-prop-equiv}}.
\end{center}
\end{proposition}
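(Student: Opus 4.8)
The plan is to prove the equality of the two terms by exhibiting a bidirectional 2-cell between them, using the reasoning tools available in a symmetric deflational theory (Chapter~\ref{ch:functor-boxes}). The two terms of Propositions~\ref{prop:step-graphically} and~\ref{prop:step-equiv} are built from the same three 1-cells $d\in\Disc(T,S)$, $m\in\MMatch(S,E)$, and $r\in\MReact(E,T+B)$, but arranged differently with respect to the functor boundaries $\refine$ and $\coarsen$ coming from diagram~\eqref{eq:m-functors}. Concretely, the difference between the two presentations is where the matching $m$ and the disconnection $d$ sit relative to the functor boxes (boxes and coboxes) induced by $\MEmb$, $\MR$, and the inclusion $\Disc\hookrightarrow\MDisc$. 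The key observation is that one term is obtained from the other by sliding an internal term through a functor boundary and then applying the cobox-sliding properties.

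First I would identify precisely which 1-cells in the two diagrams are internal terms and which are the functor boundaries. The disconnection $d$ lives in $\Disc$ (or its image in $\MDisc$), the matching $m$ lives in $\MMatch$, and the reaction $r$ lives in $\MReact$; the translations $\MR$, $\MEmb$, and the inclusions appear as boxes and coboxes. Then I would invoke Proposition~\ref{prop:cobox-slides}, which states that in a symmetric deflational theory the cobox can be slid past internal terms up to bidirectional 2-cells, together with the box-cobox decomposition of Proposition~\ref{prop:cowindow-box-equality}. The strategy is to take the term of Proposition~\ref{prop:step-graphically}, decompose the relevant functor box into a coarsening followed by a refinement (a cowindow), slide the internal term of interest through the functor boundary using the first equation in Figure~\ref{fig:structural-twocells-functors-ext}, and then recompose; this sequence of moves produces exactly the rearranged term of Proposition~\ref{prop:step-equiv}.

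The main obstacle will be verifying that the commutativity of diagram~\eqref{eq:m-functors} is being used correctly: the sliding moves that reposition $m$ and $d$ relative to the boundaries implicitly rely on the fact that the square formed by $\MR$, $\MEmb$, $R$, and the inclusions commutes, so that translating along one path agrees with translating along the other. I would need to check that the functor-box identities (Figure~\ref{fig:structural-twocells-functors-int}), which in a deflational theory are derivable from the external identities by the Corollary following Proposition~\ref{prop:cowindow-box-equality}, are exactly what licenses passing the internal term across the commuting square. The delicate point is ensuring that each sliding step stays within the class of bidirectional 2-cells (rather than merely one-directional ones), which is where the assumption that the theory is \emph{symmetric} deflational, and that the counits have sections $\kappa$ (as assumed at the end of Subsection~\ref{subsec:deflational-theories}), becomes essential.

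Finally, I would assemble the individual bidirectional 2-cells into a single composite bidirectional 2-cell witnessing the equality of the two terms, using the fact that bidirectional 2-cells are closed under the horizontal and vertical composition operations $*$ and $;$ governed by the structural 2-equations of Figure~\ref{fig:str-2eqns}. Since all the intermediate moves are reversible by construction, the resulting 2-cell is bidirectional, completing the argument. I expect the bulk of the write-up to be a short sequence of diagrammatic manipulations, each justified by one of Propositions~\ref{prop:cowindow-box-equality}, \ref{prop:cobox-slides}, and the commutativity of~\eqref{eq:m-functors}, with the only genuine subtlety being the bookkeeping of which direction each 2-cell points.
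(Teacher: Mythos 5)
Your strategy coincides with the one the paper intends: the proposition is stated without an explicit proof precisely because it is meant as an immediate illustration of the Chapter~\ref{ch:functor-boxes} calculus, and the equality is obtained exactly as you describe — decompose the functor boxes into boundaries (Proposition~\ref{prop:cowindow-box-equality}), slide the internal terms across the boundaries via the 1-equations of Figure~\ref{fig:structural-twocells-functors-ext} and Proposition~\ref{prop:cobox-slides}, cancel the resulting $\coarsen$–$\refine$ pairs using the counits together with their sections $\kappa$, and compose the bidirectional 2-cells, which are closed under $;$ and $*$ by Figure~\ref{fig:str-2eqns}. The subtleties you flag are the right ones, and the commutativity of~\eqref{eq:m-functors} that you worry about holds definitionally in the generated theory, since $\MR$ is defined by the action of $R$ and the inclusions are identity-on-objects, so no additional equations need to be imposed.
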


While a retrosynthetic step represents a single step in a reaction search, the final outcome should consist of reactions that may be performed one after another, terminating with the target (together with potential byproducts). This is captured by the notion of a {\em retrosynthetic sequence}.

\begin{definition}[Retrosynthetic sequence]\label{def:retro-sequence}
A {\em retrosynthetic sequence} for a target molecular entity $T$ is a sequence of morphisms $r_1,r_2,\dots,r_n$ with
\begin{align*}
r_1 &\in M_1\mdash\React(E_1,T+B_0) \\
r_2 &\in M_2\mdash\React(E_2,E_1+B_1) \\
&\vdots \\
r_n &\in M_1\mdash\React(E_n,E_{n-1}+B_{n-1})
\end{align*}
such that the codomain of $r_{i+1}$ is the disjoint union of the domain of $r_i$ with some other molecular graph:
\begin{center}
\scalebox{1}{\tikzfig{thesis-ch2/retro-sequence}}.
\end{center}
\end{definition}
Thus a retrosynthetic sequence is a chain of reactions, together with reaction environments, such that the products of one reaction can be used as the reactants for the next one, so that the reactions can occur one after another (assuming that the products can be extracted from the reaction environment, or one environment transformed into another one). In the formulation of a generic retrosynthesis procedure below, we shall additionally require that each reaction in the sequence comes from ``erasing'' everything but the rightmost cell in a retrosynthetic step.

\section{Formalisation of retrosynthetic analysis}

We are now ready to formulate step-by-step retrosynthetic analysis. The procedure is a high-level mathematical description that, we suggest, is flexible enough to capture all instances of retrosynthetic algorithms. As a consequence, it can have various computational implementations. Let $T$ be some fixed molecular entity. We initialise by setting $i=0$ and $E_0\coloneqq T$.
\begin{enumerate}
\item Choose a subset $\mathcal D$ of sequences of disconnection rules,
\item Provide at least one of the following:
\begin{enumerate}
\item a finite set of reaction schemes $\mathcal S$,
\item a function $\mathfrak F$ from molecular graphs to finite sets of molecular graphs,
\end{enumerate}
\item Search for a retrosynthetic step with morphisms $d\in\Disc(E_i,S)$, $m\in\MMatch(S,E)$, and $r\in\MReact(E,E_i+B_i)$ such that $d\in\mathcal D$, and we have at least one of the following:
\begin{enumerate}
\item there is an $s\in\mathcal S$ such that the reaction $r$ is an instance of $s$,
\item $E_i + B_i\in\mathfrak F(E)$;
\end{enumerate}
if successful, set $E_{i+1}\coloneqq E$, $M_{i+1}\coloneqq M$, $r_{i+1}\coloneqq r$ and proceed to Step~4; if unsuccessful, stop,
\item Check if the molecular entities in $E_{i+1}$ are known (commercially available): if yes, terminate; if no, increment $i\mapsto i+1$ and return to Step~1.
\end{enumerate}
Note how our framework is able to incorporate both template-based and template-free retrosynthesis, corresponding to the choices between (a) and (b) in Step~2: the set $\mathcal S$ is the template, while the function $\mathfrak F$ can be a previously trained algorithm, or other unstructured empirical model of reactions. We can also consider hybrid models by providing both $\mathcal S$ and $\mathfrak F$, hence allowing for combinations of existing algorithms.

We take the output retrosynthetic sequence to always come with a specified reaction environment for each reaction. Currently existing tools rarely provide this information (mostly for complexity reasons), and hence, in our framework, correspond to the set $M$ always being empty in Step~3.

The retrosynthetic steps outputted by the above procedure are highly tunable: the choice of the set $\mathcal D$ determines what kinds of bonds are disconnected (one could, for example, put an upper bound to the number of disconnected covalent bonds), while the set $\mathcal S$ can be used to enforce the presence of a functional group. Introducing negative application conditions for double pushout rewriting~\cite{ehrig-constraints2004,machado2015} would further allow enforcing an absence of a functional group\footnote{I thank an anonymous reviewer of {\em Theoretical Computer Science} for bringing up this point.}. A rudimentary form of specificity is obtained by minimising the size of the byproduct molecular graph $B_i$. Further adjustments increasing the yield include choosing the environment $M_i$ (which, inter alia, can function as a catalyst) as well as introducing protection-deprotection steps.

Steps~1 and~2 both require making some choices. Two approaches to reduce the number of choices, as well as the search space in Step~3, have been proposed in the automated retrosynthesis literature: to use molecular similarity~\cite{Coley2017-similarity}, or machine learning~\cite{Lin2020}. Chemical similarity can be used to determine which disconnection rules, reactions and environment molecules are actually tried: e.g.~in Step~1, disconnection rules that appear in syntheses of molecules similar to $T$ can be prioritised.

Ideally, each unsuccessful attempt to construct a retrosynthetic step in Step~3 should return some information on why the step failed: e.g.~if the codomain of a reaction fails to contain $E_i$, then the output should be the codomain and a measure of how far it is from $E_i$. Similarly, if several reactions are found in Step~3, some of which result in products $O$ that do not contain $E_i$, the step should suggest minimal alterations to $E$ such that these reactions do not occur. This can be seen as a {\em deprotection} step: the idea is that in the next iteration the algorithm will attempt to construct (by now a fairly complicated) $E$, but now there is a guarantee this is worth the computational effort, as this prevents the unwanted reactions from occurring ({\em protection} step). Passing such information between the layers would take the full advantage of the layered formalism.

\chapter{Discussion and future work}\label{ch:conclusion}
We have discussed in detail three mathematical perspectives on chemical processes: reaction schemes, category of reactions and disconnection rules. Reaction schemes are a compact way to store chemical reaction data, and generate all the formal chemical reactions via double pushout rewriting. The category of reactions captures combinatorially all the theoretically possible chemical transformations of graphs, as well as providing a uniform notion of composition for reactions. Disconnection rules provide the fine-grained, low level syntax of all chemically feasible local graph transformations. The completeness and universality results of Section~\ref{sec:disc-to-react} show that the disconnection rules and reactions are tightly linked, further motivating the use of disconnection rules, hitherto only appearing informally in the (computational) retrosynthesis literature, for both storing reaction data and as part of retrosynthetic analysis.

Universality can be thought of as a consistency result for reactions: their definition captures exactly those rearrangements of chemical graphs which result from local, chemically motivated rewrite rules. Completeness says that there is no redundancy in the representation: treating the (dis)connection rules as terms, the terms can be endowed with equations such that the terms describing the same reaction are identified. As the decomposition of a reaction into a sequence of (dis)connection rules is algorithmic, these results can be used to automatically break up a reaction (or its part) into smaller components: the purpose can be, {\em inter alia}, retrosynthetic analysis or storing reaction data in a systematic way.

The main conceptual contributions achieved by formulating retrosynthesis as a layered theory are the explicit mathematical descriptions of retrosynthetic steps (Definition~\ref{def:retro-step}) and sequences (Definition~\ref{def:retro-sequence}), which allows for a precise formulation of the entire process, as well as of more fine-grained concepts.

\section{Future work}\label{sec:future-work}

The future developments can be approached from three perspectives: applications to chemistry, computational implementations, and further mathematical developments.

\subsection{Chemical questions}

While stereochemistry is relatively straightforward to account for on the level of chemical graphs and reactions (Section~\ref{sec:chirality}), it is unclear how to do this for the disconnection rules, as they only operate at one or two vertices at a time. A more straightforward extension of disconnection category would introduce energy and dynamics into the disconnection rules by quantifying how much energy each (dis)connection (in a particular context) requires to occur.

While in the current presentation we showed how to account for the available disconnection rules, reactions and environmental molecules as part of the retrosynthetic reaction search, the general formalism of layered theories immediately suggests how to account for other environmental factors (e.g.~temperature and pressure). Namely, these should be represented as posets which control the morphisms that are available between the chemical compounds. One idea for accounting for the available energy is via the disconnection rules: the higher the number of bonds that we are able to break in one step, the more energy is required to be present in the environment\footnote{I thank Fredrik Dahlqvist for this suggestion.}.

\subsection{Computational questions}

Given the algorithmic nature of both completeness and universality proofs, the next step is to implement both. The first algorithm would take an arbitrary reaction as an input, and output a sequence of disconnection rules representing it. The second algorithm would decide whether two terms are equal or not, implementing the normalisation procedure. Another direction for connecting this work with more standard approaches to computational chemistry would be translating our formalism to a widely used notation such as SMILES~\cite{smiles88,daylight-smiles}.

On the side of retrosynthetic design, the crucial next step is to take existing retrosynthesis algorithms and encode them in our framework. This requires implementing the terms of the layered theory in the previous section in some software. As the terms in a layered theory are represented by string diagrams, one approach is to use proof formalisation software specific to string diagrams and their equational reasoning, such as~\cite{cartographer}. Alternatively, these terms could be encoded in a programming language like python or Julia. The latter is especially promising, as there are modules formalising category theory available for it~\cite{catlab2020,AlgebraicJulia}. As a lower level description, the disconnection rules and the reactions presented could be encoded in some graph rewriting language, such as Kappa~\cite{kappa-language,kappa2004,krivine-siglog,rewriting-life21}, which is used to model systems of interacting agents, or M\O{}D~\cite{mod-language,intermediate-level,chem-trans-motifs,rewriting-life21}, which represents molecules as labelled graphs and generating rules for chemical transformations as spans of graphs (akin to this work). Another promising direction is to treat the terms arising from disconnection rules as {\em proofs} that a certain reaction is possible. This opens the possibility of formalising the sequences of disconnection rules in a modern proof assistant such as Lean~\cite{lean4}, which would allow using tactics and automation for deriving (potential) chemical reactions as proof terms.

\subsection{Mathematical questions}

An important mathematical development is to introduce monoidal terms into the disconnection category, so as to allow parallel reactions, making the discussion in Chapter~\ref{ch:retrosynthesis} mathematically rigorous, as well as allowing for the usage of graphical calculi for monoidal categories. We have, in fact, already started this development as one of our case studies (Section~\ref{sec:glucose}) in Part~\ref{part:layered}.

Another mathematical question is whether the categories $\Disc$ and $\React$ have any interesting categorical structure, such as being restriction categories~\cite{cocket-lack02}.

At the level of the layered formalism, the next step is to model translations between the reaction environments as functors of the form $M\mdash\React\rightarrow N\mdash\React$. This would allow presenting a retrosynthetic sequence as a single, connected diagram, closely corresponding to actions to be taken in a lab. Similarly, we note that the informal algorithmic description in Chapter~\ref{ch:retrosynthesis} should ideally be presented internally in the layered theory: Steps 1 and 2 amount to choosing subcategories of $\Disc$ and $\React$.

\newpage
\phantomsection
\addcontentsline{toc}{part}{Bibliography}
\bibliographystyle{./sty/leosbibstyle}
\bibliography{bibliography}

\appendix

\part*{Appendices}\label{part:appendices}
\addcontentsline{toc}{part}{\nameref{part:appendices}}

\chapter{Parametric categories}\label{ch:para-copara}
Here we cover the graphical notation for the {\em (co)para construction}. The construction has appeared several times in the (applied) category theory literature, explicitly under this name in~\cite{fong2019backprop,gavranovic19,cruttwell21}.

\begin{definition}\label{def:copara}
Given a strict braided monoidal category $(\cat C,\otimes,1)$, we denote by $\Copara(\cat C)$ the {\em coparametric category over $\cat C$}: the objects are those of $\cat C$, the monoidal generators are
\begin{center}
\scalebox{1}{\tikzfig{copara-generators}},
\end{center}
subject to the equations in Figure~\ref{fig:copara-eqns}. Additionally, it will be convenient to omit the box parameterised by the monoidal unit, and to allow ``deparemeterisation'':
\begin{center}
\scalebox{1}{\tikzfig{copara-shorthand}}.
\end{center}
\end{definition}
\begin{figure}
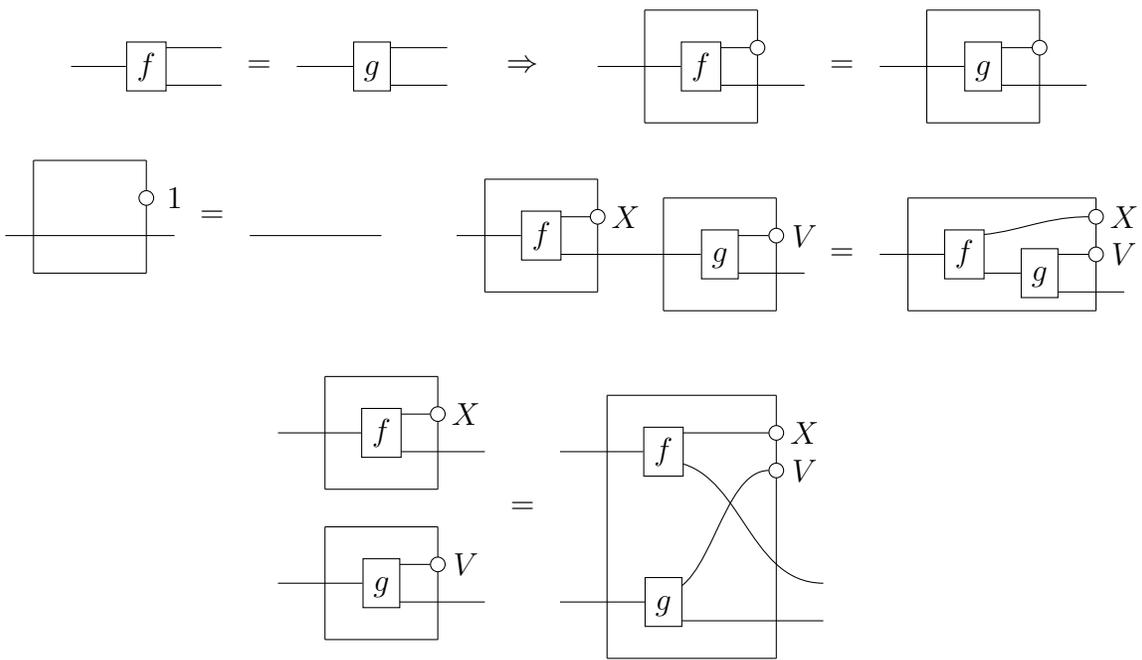

  \centering
  \scalebox{1}{\tikzfig{copara-eqns}}
  \caption{Equations of the coparametric category\label{fig:copara-eqns}}
\end{figure}

\begin{proposition}
There is a faithful functor $\cat C\rightarrow\Copara(\cat C)$ taking each object and morphism to itself.
\end{proposition}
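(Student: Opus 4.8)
The plan is to define the functor $\iota\colon\cat C\to\Copara(\cat C)$ explicitly, check functoriality from the generating equations of Figure~\ref{fig:copara-eqns} together with the unit-box shorthand of Definition~\ref{def:copara}, and then settle faithfulness via a normal form for coparametric morphisms. First I would define $\iota$ to be the identity on objects (legitimate, since $\Copara(\cat C)$ has the same objects as $\cat C$), and to send a morphism $f\colon X\to Y$ of $\cat C$ to the coparametric morphism whose underlying box is $f$ and whose coparameter is the monoidal unit $1$. By the shorthand convention of Definition~\ref{def:copara} -- which omits the box parameterised by the monoidal unit -- this is drawn simply as $f$ with no coparameter wire, so on the nose $\iota$ really does "take each morphism to itself".

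For functoriality, $\iota$ preserves identities because $\id_X$ is sent to the box $\id_X$ with trivial coparameter, which by strictness and the unit-box shorthand is exactly the coparametric identity on $X$. For composition, given $f\colon X\to Y$ and $g\colon Y\to Z$, the coparametric composite $\iota(f);\iota(g)$ tensors the two trivial coparameters into $1\otimes 1=1$ (strictness of $\otimes$) and has underlying box $f;g$; using the equations of Figure~\ref{fig:copara-eqns} that let a box slide along the coparameter wire, this diagram reduces to $\iota(f;g)$. I expect this to be a short diagrammatic calculation with no surprises.

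Faithfulness is the crux. The plan is to use that every morphism $X\to Y$ of $\Copara(\cat C)$ has a normal form $(V,g)$ consisting of an object $V$ (the coparameter) and a morphism $g\colon X\to V\otimes Y$ of $\cat C$, and that the passage from a morphism to the underlying $\cat C$-datum of its normal form is well defined on equivalence classes. Granting this, $\iota(f)$ has normal form $(1,f)$, so $\iota(f)=\iota(f')$ forces $(1,f)=(1,f')$ and hence $f=f'$ in $\cat C$ (using $1\otimes Y=Y$). Equivalently, deparameterisation restricted to the image of $\iota$ provides a left inverse to the hom-set map $\cat C(X,Y)\to\Copara(\cat C)(X,Y)$, which is exactly injectivity.

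The main obstacle will be establishing that the underlying-$\cat C$-morphism assignment is genuinely well defined, i.e. that it respects every generating equation of the monoidal theory presenting $\Copara(\cat C)$. This is where the real work lies: I would read off the normal form from the presentation of $\Copara(\cat C)$ as a monoidal theory (deferred to the appendix), and verify that each equation in Figure~\ref{fig:copara-eqns} preserves the pair $(V,g)$ up to the intended identification -- in effect a soundness-and-completeness statement for the theory against the coparametric construction. Once that bookkeeping is in place, faithfulness of $\iota$ is immediate from the observation that its image lands among the morphisms with coparameter $1$, on which the normal form is simply the original $\cat C$-morphism.
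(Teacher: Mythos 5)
Your overall strategy is sound, and it is worth noting that the paper itself states this proposition \emph{without any proof} -- it accompanies Definition~\ref{def:copara} as a bare observation -- so your argument supplies content the paper omits rather than deviating from an existing proof. The definition of the embedding and the functoriality check are exactly as they should be: identities and composites of trivially-coparameterised boxes reduce to the expected terms by strictness ($1\otimes 1=1$) and the equations of Figure~\ref{fig:copara-eqns}, a short diagrammatic computation. You are also right that faithfulness is the only substantive point, and your method -- exhibit an invariant of the presented monoidal theory, namely the normal form $(V,g)$ with $g\colon X\rightarrow V\otimes Y$ (equivalently, a deparameterisation map well defined on equivalence classes), then observe that on the image of the embedding this invariant returns the original $\cat C$-morphism -- is the standard soundness argument for a generators-and-relations presentation, and it works here.

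One point deserves to be made explicit when you execute the deferred verification. The argument that $(1,f)=(1,f')$ forces $f=f'$ relies on the equations of Figure~\ref{fig:copara-eqns} never altering the coparameter object except through strict unitality and tensoring; this holds for the 1-categorical $\Copara$ construction as presented, where no equation reindexes the coparameter along a morphism $h\colon V\rightarrow V'$ of $\cat C$. Had such reindexing equations been included (as happens when one collapses the 2-cells of the bicategorical $\Copara$), the pair $(V,g)$ would only be invariant up to the identification $(V,g)\sim\bigl(V',(h\otimes\id_Y)\circ g\bigr)$, and faithfulness at $V=1$ would additionally require that endomorphisms of the monoidal unit act trivially -- which is false in general monoidal categories. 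So when you check each equation of Figure~\ref{fig:copara-eqns} against the normal form, the thing to confirm is precisely that every equation preserves the pair $(V,g)$ on the nose, not merely up to coparameter reindexing; granted that, your proof is complete.
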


The dual construction, with the generators flipped, is the {\em parametric category}, denoted by $\Para(\cat C)$. A variant of this construction is given by $\ParaSwap(\cat C)$, where the equation defining composition is replaced by
\begin{center}
\scalebox{1}{\tikzfig{paraswap-composition}}.
\end{center}

\chapter{ZX-calculus and measurement based quantum computing}\label{ch:zx-mbqc}
The basic unit of quantum computation is a {\em qubit}, represented by the Hilbert space $\C^2$. The {\em state} of a qubit is a ray in $\C^2$, and can, therefore, be represented by a linear combination of the standard basis vectors $\ket{0}\coloneq (1,0)$ and $\ket{1}\coloneq (0,1)$, where two linear combinations represent the same state if they are equal up to multiplication by a non-zero scalar. As is standard in the quantum information literature, we represent the basis states in the tensor products of qubits by listing the elements in each component: e.g.~$\ket{00}=\ket{0}\otimes\ket{0}$ is a state in $\C^2\otimes\C^2$. Similarly, the linear maps between qubits are represented by reversing the notation for states: e.g.~$\bra{0}:\C^2\rightarrow\C$ is the linear functional defined by taking the inner product with $\ket{0}$, and $\ketbra{0}{0}:\C^2\rightarrow\C^2$ is the projection to the state $\ket{0}$ (this is known as Dirac's {\em bra-ket notation}). In the context of quantum computing, the standard orthonormal basis $\left\{\ket{0},\ket{1}\right\}$ is often referred to as the {\em computational basis}, to distinguish it from the {\em Hadamard basis} $\left\{\ket{+},\ket{-}\right\}$, defined by $\ket{+}\coloneq\frac{1}{\sqrt 2}\left(\ket{0}+\ket{1}\right)$ and $\ket{-}\coloneq\frac{1}{\sqrt 2}\left(\ket{0}-\ket{1}\right)$. The ZX-calculus can then be seen as an axiomatisation of how these two bases interact.

There is a strict monoidal functor from $\ZX$ (with generators~\eqref{eq:zx-generators} and equations~\eqref{eq:zx-rules}) to the category of finite-dimensional Hilbert spaces given by $n\mapsto\bigotimes_{i=1}^n\C^2$ (the n-fold tensor product of $\C^2$ with itself) on objects, and the Z- and X-spiders are, respectively, mapped to
$$\ketbra{0\cdots 0}{0\cdots 0} + e^{i\alpha}\ketbra{1\cdots 1}{1\cdots 1}\text{ and } \ketbra{+\cdots +}{+\cdots +} + e^{i\alpha}\ketbra{-\cdots -}{-\cdots -}.$$
Note that equation \HadamardDef in~\eqref{eq:zx-rules} then forces the interpretation of the Hadamard gate to be (up to a scalar) the transformation that maps the computational basis to the Hadamard basis: $\ket{0}\mapsto\ket{+}$ and $\ket{1}\mapsto\ket{-}$. As an example of this functorial interpretation, the computational basis states $\ket{0}$ and $\ket{1}$ are now represented by the X-spiders
$$\tikzfig{comp-zero}\quad\text{and }\quad\tikzfig{comp-one},$$
and, likewise, the Hadamard basis states $\ket{+}$ and $\ket{-}$ are represented by the Z-spiders
$$\tikzfig{hadamard-plus}\quad\text{and }\quad\tikzfig{hadamard-minus}.$$

The above functor is full, corresponding to universality of the calculus: any linear map between finite tensor products of qubits can be expressed as a ZX-diagram. Under the equations we chose to present, the functor is not quite faithful, so the calculus is not complete. However, restricting the diagrams to the {\em Clifford fragment}, i.e.~the diagrams where the phase $\alpha$ is a multiple of $\frac{\pi}{2}$ makes the calculus complete, so long as we ignore the non-zero scalars~\cite{backens-2014}. Adding more equations, the calculus has been shown complete for the Clifford+T fragment~\cite{jeandel-univ-complete}, and finally for the full ZX-calculus~\cite{hadzihasanovic-zx18}.

Recall that $\mathcal P\coloneq\{\XYplane,\XZplane,\YZplane\}$ denotes the set of {\em measurement planes}.
\begin{definition}[Measurement pattern~\cite{measurement-calculus}]\label{def:measurement-pattern}
Let $(V,I,O)$ be a triple of a finite set $V$ with two chosen subsets $I,O\sse V$. The set $\mathcal S$ of {\em signals} consists of the formal sums of $0$, $1$ and $s_i$, where $i\in V\setminus O$. A {\em measurement pattern} over $(V,I,O)$ is an element in the free monoid generated by the set
$$\left\{N_j, E_{nm}, M^{\lambda,\alpha}_k, X_n^s, Z_n^s\right\},$$
where $j\in V\setminus I$, $k\in V\setminus O$, $n,m\in V$ with $n\neq m$, $\lambda\in\mathcal P$, $\alpha\in [0,2\pi)$, and $t,s\in\mathcal S$, and for $C\in\{X,Z\}$ we identify $C_n\coloneq C_n^1$ and $C_n^0=\varepsilon$.
\end{definition}
A measurement pattern on $(V,I,O)$ can be thought of as a sequence of commands that takes in $|I|$ qubits, executes the commands one at a time in the order they appear, and outputs $|O|$ qubits. The intuition behind the commands is as follows (see also Table~\ref{tab:MBQC-to-ZX} for the translation of measurement patterns to the ZX-calculus):
\begin{itemize}
\item $N_j$ prepares the non-input qubit $j$ in the state $\ket{+}$,
\item $E_{nm}$ entangles two distinct qubits $n$ and $m$ by applying a CZ-gate to them,
\item $M^{\lambda,\alpha}_k$ destructively measures the non-output qubit $k$ by projecting it to the Hadamard basis in the plane $\lambda$ in the Bloch sphere shifted by the angle $\alpha$,
\item the corrections $X_n^s$ and $Z_n^s$ act as the Pauli-X and Pauli-Z operators on the qubit $n$ if the signal $s$ evaluates to $1$ in $\Z_2$, and as the identities otherwise.
\end{itemize}
Once a measurement on qubit $k$ is performed, the qubit $k$ is either in the (shifted) state $\ket{+}$, which is modelled by setting the signal $s_k$ to $0$, or in the (shifted) state $\ket{-}$, which is modelled by setting the signal $s_k$ to $1$. A correction is said to {\em depend} on the outcome of measuring a qubit $k$ if the signal $s_k$ appears in its expression.

Clearly, not all measurement patterns are physically realisable: e.g.~no correction should depend on an outcome of a qubit that has not yet been measured, and commands should only act on qubits that have been either prepared or are input qubits. Patterns satisfying such realisability conditions are called {\em runnable}~\cite{browne-gflow,thereandback}. Any runnable pattern can be written in a standard form, without changing the semantics in terms of linear maps~\cite{measurement-calculus,browne-gflow}. In the standard form, the classes of commands appear in the following order: preparations, entanglement commands, measurements, corrections. Note that this corresponds to the physical order in which the commands would be executed in the experimental implementation. We now proceed to define the semantics of patterns.

Given a runnable measurement pattern with $n$ measurement commands, a {\em branch} of the pattern is a sequence of $0$'s and $1$'s of length $n$. The interpretation of a branch is the experimental scenario in which $i$th measurement has yielded the outcome specified by the corresponding entry in the sequence. A {\em partial branch} is a sequence of $0$'s, $1$'s and symbols $x$ of length $n$. The interpretation of a partial branch is that only some of the qubits were measured, with the outcomes specified by the $0$ and $1$ entires in the sequence. Given a partial branch $b=b_1\cdots b_n$, define the {\em support} of $b$ as $\supp(b)\coloneq\{i : b_i\neq x\}$. We say that a partial branch $c$ {\em complements} a branch $b$ if $\supp(b)\cap\supp(c)=\eset$ and $\supp(b)\cup\supp(c)=\{1,\dots,n\}$. We note that a branch is a partial branch with full support, i.e.~with no symbols $x$, and that any two partial branches $b$ and $c$ that complement each other can be uniquely combined into a branch $b*c$.

\begin{definition}[Pattern coarsening]
Given a measurement pattern $P$ and a partial branch $b=b_1\cdots b_n$, we obtain a new pattern $P_b$ by modifying $P$ according to each entry $b_k$ as follows:
\begin{itemize}
\item if $b_k=x$, do nothing,
\item if $b_k=0$, substitute $s_k\mapsto 0$,
\item if $b_k=1$, substitute $M^{\lambda,\alpha}_k\mapsto M^{\lambda,\alpha+\pi}_k$ and $s_k\mapsto 1$.
\end{itemize}
We say that the pattern $P_b$ is the {\em coarsening} of the pattern $P$ by the partial branch $b$.
\end{definition}
Coarsening a runnable pattern by a branch results in a pattern with all signals replaced with formal sums of $0$'s and $1$'s.

Any branch $b$ of a runnable pattern $P$ realises a linear map from $|I|$ qubits to $|O|$ qubits~\cite{measurement-calculus} as follows:
\begin{enumerate}[label={(\arabic*)}]
\item coarsen the pattern, obtaining the pattern $P_b$ with no signal symbols $s_k$,
\item replace each signal in $P_b$ with either $0$ or $1$ by evaluating the formal sums in $\Z_2$,
\item translate the pattern to a ZX-diagram according to the translation procedure in Table~\ref{tab:MBQC-to-ZX},
\item translate the resulting diagram to a linear map.
\end{enumerate}
The conditions for a runnable pattern guarantee that the procedure defined in Table~\ref{tab:MBQC-to-ZX} is well-defined and terminating. Given a runnable measurement pattern $P=(V,I,O)$ and a branch $b$, we denote the linear map $\bigotimes_{i\in I}\C^2\rightarrow\bigotimes_{o\in O}\C^2$ realised by $b$ by $L(P_b)$. We remark that, since the ZX-calculus is universal and can be made complete, the last step of the translation can often be omitted, and we can reason directly with ZX-diagrams rather than linear maps.

 \begin{table}
  \centering
  \renewcommand{\arraystretch}{2.5}
  \begin{tabular}{c|c}
   command / vertex & diagram \\ \hline
   $i\in I$ & \tikzfig{id-input} \\ \hline
   $N_j$ & \tikzfig{Z-zero} \\ \hline
   $E_{nm}$ & \tikzfig{cz} \\ \hline
   $M^{\XYplane,\alpha}_k$ & \tikzfig{XY-effect} \\ \hline
   $M^{\XZplane,\alpha}_k$ & \tikzfig{XZ-effect} \\ \hline
   $M^{\YZplane,\alpha}_k$ & \tikzfig{YZ-effect} \\ \hline
   $X_n$ & \tikzfig{Pauli-X} \\ \hline
   $Z_n$ & \tikzfig{Pauli-Z} \\ \hline
   $o\in O$ & \tikzfig{id-output}
  \end{tabular}
  \renewcommand{\arraystretch}{1}
  \caption{Translation from measurement patterns to ZX-diagrams: match the labels on the left to labels that have already been drawn, append the specified diagram, and replace the labels with the ones on the right.\label{tab:MBQC-to-ZX}}
 \end{table}

In general, different branches of a runnable pattern may result in different linear maps. For useful algorithms and computations, one is interested in patterns where the linear maps realised by different branches are related. This gives rise to different flavours of {\em determinism}.
\begin{definition}[Determinism~\cite{measurement-calculus,browne-gflow}]\label{def:determinism}
A runnable measurement pattern $P$ is
\begin{itemize}
\item {\em deterministic} if for any two branches $b$ and $c$, the linear maps they realise are pointwise proportional: for any input $q$, the vectors $L(P_b)(q)$ and $L(P_c)(q)$ are scalar multiples of each other,
\item {\em strongly deterministic} if any two branches result in the same linear map, up to a non-zero scalar,
\item {\em uniformly deterministic} if it is deterministic for any choice of the measurement angles,
\item {\em stepwise deterministic} if for any two partial branches $b$ and $c$ with the same support, there is a sequence $C$ of correction commands, which depend only on the qubits that are not in the common support of $b$ and $c$, making the coarsened patterns $P_b$ and $P_c$ equivalent in the following sense: for any partial branches $b'$ and $c'$ that complement $b$ and $c$, the resulting linear maps $L\left((C\cdot P)_{b*b'}\right)$ and $L\left(P_{c*c'}\right)$ are pointwise proportional.
\end{itemize}
\end{definition}

\begin{example}
Consider the following pattern\footnote{Note that the usual order of commands in a measurement pattern is right to left, corresponding to the usual order of composition of operators (functions). We, however, choose to write the commands from left to right, so as to match the diagrammatic order of composition.} on $(\{1,2\},\{1\},\{1\})$
$$N_2E_{12}M_2^{\XYplane,0}X_1^{s_2}.$$
It has two branches, corresponding to the measurement outcomes of $\ket{+}$ ($s_2=0$) and $\ket{-}$ ($s_2=1$), resulting in the following ZX-diagrams upon translating according to Table~\ref{tab:MBQC-to-ZX}:
\begin{center}
\scalebox{1}{\tikzfig{bitflip-correction-1}}.
\end{center}
Simplifying using the equations of the ZX-calculus, we get that the above diagrams are equal to
\begin{center}
\scalebox{1}{\tikzfig{bitflip-correction-2}},
\end{center}
which are $\ketbra{0}{0}$ and $\ketbra{0}{1}$, so that the outcome state is always proportional to $\ket{0}$. Thus the pattern is deterministic, but not strongly deterministic.
\end{example}

\begin{example}
Consider the following pattern on $(\{1,2\},\{1\},\{2\})$
$$N_2E_{12}M_1^{\XYplane,0}X_2^{s_1}.$$
The two branches result in the following diagrams:
\begin{center}
\scalebox{1}{\tikzfig{hadamard-pattern}},
\end{center}
both of which simplify to the Hadamard gate. Thus the pattern is strongly deterministic, and each branch implements the Hadamard gate.
\end{example}

\chapter{ICE-form}\label{ch:ice-form}
Here we give the detailed inductive proof of the fact that any term has an equivalent term in an $ICE$-form (Proposition~\ref{prop:ICE-form}).

\begin{lemma}\label{lma:I-commutes}
Let $\mathtt t$ be a term such that the term $\mathtt t;I^{uv}$ is defined. Then there exists a term $\mathtt t'$ such that $\mathtt t$ and $\mathtt t'$ have the same number of $I$-terms, and one of the following holds:
\begin{enumerate}[label={(\arabic*)}]
\item $\mathtt t;I^{uv}\equiv \mathtt t'$, or
\item there is a disconnection $I^{ab}$ such that $\mathtt t;I^{uv}\equiv I^{ab};\mathtt t'$.
\end{enumerate}
\end{lemma}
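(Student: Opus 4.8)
The plan is to induct on the number of generating terms appearing in $\mathtt t$ (as a $;$-composite of the atomic terms of Definition~\ref{def:terms}), pushing the appended disconnection $I^{uv}$ leftward one generator at a time. Throughout, ``$I$-term'' is read as a disconnection $I$ (connections $\bar I$ are not counted), so that appending the disconnection $I^{uv}$ raises the $I$-count of $\mathtt t$ by exactly one, and the two conclusions of the lemma correspond precisely to whether that surplus $I$ survives at the front (case (2)) or is annihilated en route (case (1)). In the base case $\mathtt t=\id_A$, we simply have $\id_A;I^{uv}\equiv I^{uv};\id_B$, which is case (2) with $\mathtt t'=\id_B$ and $I^{ab}=I^{uv}$, and both $\mathtt t,\mathtt t'$ have no $I$-terms.

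For the inductive step I would write $\mathtt t=\mathtt s;g$ with $g$ the final generator and inspect $g;I^{uv}$. The key structural fact is that $I^{uv}$ commutes past \emph{every} generator except its own inverse: for $g=S^w$ use \eqref{disc-eq:sd1}; for $g=R^{a\mapsto b}$ use \eqref{disc-eq:rd1}; for a disconnection $g\in\{C,E^{\geq 0},E^{<0},I\}$ use \eqref{disc-eq:comm3}, \eqref{disc-eq:comm5}, \eqref{disc-eq:comm4}, \eqref{disc-eq:comm1} respectively; and for a connection $g\in\{\bar C,\bar E^{\geq 0},\bar E^{<0}\}$ or a non-matching $\bar I^{wz}$ with $\{w,z\}\neq\{u,v\}$ use \eqref{disc-eq:comm8}, \eqref{disc-eq:comm6}, \eqref{disc-eq:comm7}, \eqref{disc-eq:comm2}. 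In each such case $g;I^{uv}\equiv I^{wz};g$ for a suitable (possibly renamed, see below) $I^{wz}$, so $\mathtt t;I^{uv}\equiv\mathtt s;I^{wz};g$, and I apply the induction hypothesis to the shorter term $\mathtt s;I^{wz}$; whichever of the two IH conclusions holds transfers to $\mathtt s;g$ after re-appending $g$. The single exceptional case is $g=\bar I^{uv}$, where \eqref{disc-eq:ddbar4-2} gives $\bar I^{uv};I^{uv}\lesssim S^u;S^v$, so $\mathtt t;I^{uv}\equiv\mathtt s;S^u;S^v=:\mathtt t'$, landing in case (1).

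The $I$-count is then checked mechanically in each branch: in the commutation branch $g$ is carried along unchanged, so the count of $\mathtt t'$ equals that of $\mathtt s'$ plus the contribution of $g$, which by the IH equals the count of $\mathtt s$ plus that of $g$, i.e.\ the count of $\mathtt t$; in the annihilation branch the appended $I^{uv}$ disappears together with the connection $\bar I^{uv}$, leaving the disconnection-$I$ count of $\mathtt t'=\mathtt s;S^u;S^v$ equal to that of $\mathtt s=$ that of $\mathtt t$. The main obstacle I anticipate is not the combinatorics but the well-typedness bookkeeping: several of the commutation equations (notably \eqref{disc-eq:rd1} and \eqref{disc-eq:comm2}) are stated with $\approx$ rather than $\lesssim$, so one must separately verify that the commuted term is genuinely well-typed, and must track the fact that passing $I^{uv}$ through a renaming $R^{a\mapsto b}$ acting on a charged $\alpha$-vertex can alter its superscript. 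This renaming effect is exactly why case (2) is phrased with an arbitrary disconnection $I^{ab}$ rather than $I^{uv}$ itself, and handling it cleanly requires keeping the $\alpha$-vertex/charged-vertex distinction of Definition~\ref{def:prechemgraph} explicit in the induction.
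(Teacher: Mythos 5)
Your proof is correct and essentially the paper's own: the paper argues by structural induction on $\mathtt t$ with exactly your case analysis, citing the same equations (\eqref{disc-eq:sd1}, \eqref{disc-eq:rd1}, \eqref{disc-eq:comm3}, \eqref{disc-eq:comm4}, \eqref{disc-eq:comm5}, \eqref{disc-eq:comm6}, \eqref{disc-eq:comm7}, \eqref{disc-eq:comm8}, \eqref{disc-eq:comm2}) and the same single annihilation case $\bar I^{uv};I^{uv}\equiv S^u;S^v$ via \eqref{disc-eq:ddbar4-2}, the only cosmetic difference being that it inducts over binary composites $\mathtt t;\mathtt s$ (applying the hypothesis first to $\mathtt s$, then to $\mathtt t$) rather than peeling off one final generator at a time. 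Your flagged subtlety about a renaming altering the superscript of the $I$-term is actually \emph{not} addressed in the paper's proof, which commutes $R^{w\mapsto z};I^{uv}\equiv I^{uv};R^{w\mapsto z}$ unconditionally by \eqref{disc-eq:rd1} (contrast the explicit $z=v$ subcase in the later $E^{uv}$ lemma using \eqref{disc-eq:rd2}), so your extra bookkeeping is more careful than, but does not diverge from, the paper's route.
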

\begin{proof}
We proceed by induction on the structure of $\mathtt t$.
\paragraph{Base cases}
\begin{align*}
\id;I^{uv} &\equiv I^{uv};\id, \\
S^w;I^{uv} &\equiv I^{uv};S^w, \tag{by~\eqref{disc-eq:sd1}} \\
R^{w\mapsto z};I^{uv} &\equiv I^{uv};R^{w\mapsto z}, \tag{by~\eqref{disc-eq:rd1}} \\
I^{wz};I^{uv} &\phantom{\equiv} \text{ is already in the right form} \\
C^{wz}_{ab};I^{uv} &\equiv I^{uv};C^{wz}_{ab}, \tag{by~\eqref{disc-eq:comm3}} \\
E^{w}_{ab};I^{uv} &\equiv I^{uv};E^{w}_{ab}, \tag{by~\eqref{disc-eq:comm4}} \\
E^{wz};I^{uv} &\equiv  I^{uv};E^{wz}, \tag{by~\eqref{disc-eq:comm5}} \\
\bar I^{wz};I^{uv} &\equiv \begin{cases} S^u;S^v \text{ if } w=u, z=v \\
                                    I^{uv};\bar I^{wz} \text{ otherwise,}
                      \end{cases} \tag{by~\eqref{disc-eq:ddbar4-2} and~\eqref{disc-eq:comm2}} \\
\bar C^{wz}_{ab};I^{uv} &\equiv I^{uv};\bar C^{wz}_{ab}, \tag{by~\eqref{disc-eq:comm8}} \\
\bar E^{w}_{ab};I^{uv} &\equiv I^{uv};\bar E^{w}_{ab}, \tag{by~\eqref{disc-eq:comm7}} \\
\bar E^{wz};I^{uv} &\equiv I^{uv};\bar E^{wz}. \tag{by~\eqref{disc-eq:comm6}}
\end{align*}
\paragraph{Inductive case} Let $\mathtt t:A\rightarrow B$ and $\mathtt s:B\rightarrow C$ be terms such that the statement of the lemma holds. Suppose that the term $\mathtt t;\mathtt s;I^{uv}$ is defined. Then also the term $\mathtt s;I^{uv}$ is defined, so by the inductive hypothesis for $\mathtt s$, there is a term $\mathtt s'$ with the same number of $I$-terms as $\mathtt s$ such that either (1) $\mathtt s;I^{uv}\equiv\mathtt s'$, or (2) $\mathtt s;I^{uv}\equiv I^{ab};\mathtt s'$ for some $I$-term $I^{ab}$. In the first case, we have
$$\mathtt t;\mathtt s;I^{uv} \equiv \mathtt t;\mathtt s',$$
and since $\mathtt t;\mathtt s'$ has the same number of $I$-terms as $\mathtt t;\mathtt s$, it is the sought-after term for the inductive case satisfying (1). In the second case, we have
$$\mathtt t;\mathtt s;I^{uv} \equiv \mathtt t;I^{ab};\mathtt s',$$
so that $\mathtt t;I^{ab}$ is defined. By the inductive hypothesis for $\mathtt t$, there is a term $\mathtt t'$ with the same number of $I$-terms as $\mathtt t$ such that either (1) $\mathtt t;I^{ab}\equiv\mathtt t'$, or (2) $\mathtt t;I^{ab}\equiv I^{wz};\mathtt t'$ for some $I$-term $I^{wz}$. In the first case, we get
$$\mathtt t;\mathtt s;I^{uv} \equiv \mathtt t';\mathtt s',$$
satisfying (1) for the inductive case, as $\mathtt t';\mathtt s'$ and $\mathtt t;\mathtt s$ have the same number of $I$-terms. In the second case, we obtain
$$\mathtt t;\mathtt s;I^{uv} \equiv I^{wz};\mathtt t';\mathtt s',$$
satisfying (2) for the inductive case.
\end{proof}

\begin{corollary}\label{cor:I-form}
Any term is equal to a term of the form $\mathtt I;\mathtt t$, where $\mathtt I$ is a sequence of $I$-terms, and the term $\mathtt t$ contains no $I$-terms.
\end{corollary}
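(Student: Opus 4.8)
The plan is to prove the statement by induction on the number $n$ of $I$-terms (i.e.\ disconnections of the form $I^{uv}$) occurring in the term, using Lemma~\ref{lma:I-commutes} to migrate the $I$-terms to the left one at a time. Up to the associativity and unitality equations of a category, I would fix a representative of the term written as a sequence of generating terms, so that phrases like ``the leftmost $I$-term'' are unambiguous.

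For the base case $n=0$ the term already contains no $I$-term, and so it is of the required form with an empty prefix $\mathtt I$ (equivalently, $\mathtt t\equiv\id;\mathtt t$).

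For the inductive step, assume the claim for every term with fewer than $n$ occurrences of $I$-terms, and let $\mathtt t$ contain exactly $n\geq 1$ of them. I would locate the leftmost $I$-term and factor $\mathtt t\equiv\mathtt a;I^{uv};\mathtt b$, where $\mathtt a$ contains no $I$-term and $\mathtt b$ contains the remaining $n-1$. Since $\mathtt t$ is well-typed, so is its prefix $\mathtt a;I^{uv}$, whence Lemma~\ref{lma:I-commutes} applies to $\mathtt a$: it produces a term $\mathtt a'$ with the same number of $I$-terms as $\mathtt a$ (namely zero) such that either $\mathtt a;I^{uv}\equiv\mathtt a'$, or $\mathtt a;I^{uv}\equiv I^{ab};\mathtt a'$ for some disconnection $I^{ab}$. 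Because $\equiv$ is a congruence with respect to composition, I can substitute this equivalence into $\mathtt t$: in the first case $\mathtt t\equiv\mathtt a';\mathtt b$, and in the second $\mathtt t\equiv I^{ab};\mathtt a';\mathtt b$. In both cases the term $\mathtt a';\mathtt b$ has exactly $n-1$ occurrences of $I$-terms, so the inductive hypothesis yields $\mathtt a';\mathtt b\equiv\mathtt I';\mathtt c$ with $\mathtt I'$ a sequence of $I$-terms and $\mathtt c$ free of $I$-terms. This already settles the first case, while in the second case prepending $I^{ab}$ to $\mathtt I'$ gives the desired decomposition.

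Since Lemma~\ref{lma:I-commutes} already performs all the nontrivial commutation and absorption computations (via equations~\eqref{disc-eq:sd1}, \eqref{disc-eq:rd1}, \eqref{disc-eq:comm2}--\eqref{disc-eq:comm5}, \eqref{disc-eq:comm7}, \eqref{disc-eq:comm8}, \eqref{disc-eq:ddbar4-2} and their analogues), I do not expect any genuine obstacle here. The only points requiring care are the bookkeeping of the $I$-term count — ensuring that each application of the lemma strictly decreases the count and hence that the induction terminates — and the appeal to the congruence property of $\equiv$ that licenses rewriting the isolated prefix inside the larger composite.
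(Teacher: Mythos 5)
Your proof is correct and is precisely the argument the paper leaves implicit: the corollary is stated without proof because it follows from Lemma~\ref{lma:I-commutes} by exactly this induction on the number of $I$-terms, peeling off one occurrence at a time and prepending any $I^{ab}$ produced in case~(2). Your attention to the two bookkeeping points (the strict decrease of the $I$-count in both cases of the lemma, and the congruence property of $\equiv$ licensing substitution into the composite) is exactly what is needed, so there is nothing to add.
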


\begin{lemma}\label{lma:C-commutes}
Let $\mathtt t$ be a term not containing any $I$-terms such that the term $\mathtt t;C^{uv}_{ab}$ is defined. Then there exists a term $\mathtt t'$ not containing any $I$-terms such that $\mathtt t$ and $\mathtt t'$ have the same number of $C$-terms, and one of the following holds:
\begin{enumerate}[label={(\arabic*)}]
\item $\mathtt t;C^{uv}_{ab}\equiv\mathtt t'$, or
\item there is a disconnection $C^{wz}_{cd}$ such that $\mathtt t;C^{uv}_{ab}\equiv C^{wz}_{cd};\mathtt t'$.
\end{enumerate}
\end{lemma}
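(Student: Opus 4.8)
The plan is to prove this by structural induction on $\mathtt t$, following verbatim the bookkeeping of the proof of Lemma~\ref{lma:I-commutes}: in every case I either push the appended $C^{uv}_{ab}$ to the far left of $\mathtt t$ (producing case~(2), with some surviving disconnection $C^{wz}_{cd}$ at the front) or let it cancel against a connection already present in $\mathtt t$ (producing case~(1)). The invariant to maintain is that the number of $C$\emph{-disconnections} is preserved, since the appended $C^{uv}_{ab}$ is not itself part of $\mathtt t$: if it survives it reappears as a single $C^{wz}_{cd}$ at the front, and if it is absorbed it only deletes a $\bar C$-connection from $\mathtt t$, leaving the count of $C$-disconnections unchanged.

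For the base cases, where $\mathtt t$ is a single generating term other than an $I$-disconnection, I would read off the relevant commutation identity from Figure~\ref{fig:disc-axioms}. The terms $\id$, $S^w$, $R^{w\mapsto z}$, $C^{wz}_{cd}$, $E^w_{cd}$ and $E^{wz}$ all yield case~(2) directly, via the trivial identity and equations~\eqref{disc-eq:sd1}, \eqref{disc-eq:rd1}, \eqref{disc-eq:comm1}, \eqref{disc-eq:comm9} and~\eqref{disc-eq:comm10} respectively. The connections $\bar E^{wz}$ and $\bar I^{wz}$ are handled by~\eqref{disc-eq:comm11} and by the $\overline{(-)}$-image of~\eqref{disc-eq:comm8} (using that both $\leq$ and $\approx$ are compatible with $\overline{(-)}$, so that $\bar I^{wz};C^{uv}_{ab}\equiv C^{uv}_{ab};\bar I^{wz}$ whenever the left side is well-typed), again giving case~(2). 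The genuinely branching base case is $\bar C^{wz}_{cd}$: when its superscript coincides with $\{u,v\}$ (after possibly applying~\eqref{disc-eq:cs} to realign the vertex order) the pair cancels through~\eqref{prop:disc-ids0} into a product of $S$- and $R$-terms, giving case~(1); otherwise the two operations touch distinct bonds and commute by~\eqref{disc-eq:comm2}, giving case~(2). As in Lemma~\ref{lma:I-commutes}, each case requires a short, routine check that the right-hand side is defined whenever the left-hand side is, which is precisely the content of the $\lesssim$/$\simeq$ annotations.

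The main obstacle is the base case $\mathtt t=\bar E^w_{cd}$ (a connection of type $\bar E^{<0}$), since Figure~\ref{fig:disc-axioms} contains \emph{no} primitive equation commuting a covalent disconnection past $\bar E^{<0}$. I would supply the missing commutation $\bar E^w_{cd};C^{uv}_{ab}\equiv C^{uv}_{ab};\bar E^w_{cd}$ by a ``pump'' derivation. The key observation is that in the intermediate graph the vertex $w$ is negatively charged, so for fresh $e,f$ the term $E^w_{ef};\bar E^w_{ef}$ is well-typed and equals $S^w$ by~\eqref{disc-eq:ddbar4}; inserting this after $C^{uv}_{ab}$ lets me rewrite $\bar E^w_{cd};C^{uv}_{ab}$ as $\bar E^w_{cd};C^{uv}_{ab};E^w_{ef};\bar E^w_{ef}$. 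I then commute $C^{uv}_{ab}$ past $E^w_{ef}$ using~\eqref{disc-eq:comm9}, collapse the adjacent pair $\bar E^w_{cd};E^w_{ef}$ into $S$- and $R$-terms through~\eqref{prop:disc-ids0}, slide those past $C^{uv}_{ab}$ by~\eqref{disc-eq:sd1} and~\eqref{disc-eq:rd1}, and finally recognise the trailing $S$- and $R$-terms together with $\bar E^w_{ef}$ as $\bar E^w_{cd}$ again (re-applying~\eqref{prop:disc-ids0}, \eqref{disc-eq:ddbar4} and the $S$-absorption identities). Verifying that every step preserves well-typedness — in particular that the charge of $w$ is exactly where it must be for the inserted $E^{<0}$-pair to be legal — is the delicate part of this argument and is what I expect to consume most of the effort.

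Finally, the inductive step for a composite $\mathtt t;\mathtt s$ copies the argument of Lemma~\ref{lma:I-commutes} line for line: apply the induction hypothesis to $\mathtt s$ to move $C^{uv}_{ab}$ to the front of $\mathtt s$ or absorb it; in the surviving case $\mathtt t;\mathtt s;C^{uv}_{ab}\equiv\mathtt t;C^{wz}_{cd};\mathtt s'$, then apply the induction hypothesis to $\mathtt t;C^{wz}_{cd}$ and recombine. Preservation of the $C$-disconnection count follows because each invocation of the hypothesis preserves it, and I anticipate no difficulty here beyond carefully tracking which of the two cases is produced at each stage, exactly as in the model proof.
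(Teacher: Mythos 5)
Your overall strategy is sound and matches the paper's: the same structural induction, the same invariant (counting only $C$-disconnections, so that an absorbed $C^{uv}_{ab}$ merely deletes a $\bar C$-connection), and an inductive step copied from Lemma~\ref{lma:I-commutes}. Most base cases are cited correctly, and your bar-dual reading of~\eqref{disc-eq:comm8} for $\bar I^{wz}$ is exactly how the paper's citation works. However, two of your base cases go wrong for the same underlying reason: you have overlooked axiom~\eqref{disc-eq:rd4}, $d^{U'}_{ij};\bar h^U_{ab};R^{i\mapsto c};R^{j\mapsto d}\lesssim\bar h^U_{ab};d^{U'}_{cd}$, which is precisely the primitive commuting any disconnection past any \emph{subscripted} connection. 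First, your claim that Figure~\ref{fig:disc-axioms} contains no equation moving $C^{uv}_{ab}$ past $\bar E^w_{cd}$ is false: instantiating~\eqref{disc-eq:rd4} with $d=C$ and $h=E$ gives $\bar E^{w}_{cd};C^{uv}_{ab}\equiv C^{uv}_{ij};\bar E^{w}_{cd};R^{i\mapsto a};R^{j\mapsto b}$, which is exactly the paper's one-line treatment. Your ``pump'' derivation is therefore an unnecessary detour; worse, its closing step (reassembling the trailing $S$- and $R$-terms with $\bar E^w_{ef}$ back into $\bar E^w_{cd}$) needs connection-side analogues of~\eqref{prop:disc-ids2} and~\eqref{prop:disc-ids3}, whose derivations in Proposition~\ref{prop:disc-ids} themselves rest on~\eqref{disc-eq:rd4} — so the axiom you thought was missing is unavoidable even on your route. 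Second, and more seriously, your ``otherwise'' branch for $\mathtt t=\bar C^{wz}_{cd}$ invokes~\eqref{disc-eq:comm2}; but that axiom reads $\bar d^{uv};d^{wz}\approx d^{wz};\bar d^{uv}$ for \emph{unsubscripted} rules only, i.e.~it covers $E^{\geq 0}$ and $I$ (and that is the only way the paper ever uses it), never the $C$-family, which always carries the subscript $D$. A plain swap $\bar C^{wz}_{cd};C^{uv}_{ab}\approx C^{uv}_{ab};\bar C^{wz}_{cd}$ can moreover be ill-typed: on the left, the names $a,b$ are introduced only after $c,d$ have been removed, so $\{a,b\}\cap\{c,d\}\neq\eset$ is possible, and then the right-hand side is not a term at all. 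The correct move is again~\eqref{disc-eq:rd4} with $d=h=C$, producing $C^{uv}_{ij};\bar C^{wz}_{cd};R^{i\mapsto a};R^{j\mapsto b}$ with fresh $i,j$, as in the paper.

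A smaller slip of the same flavour: the base case $\mathtt t=R^{w\mapsto z}$ is not disposed of by~\eqref{disc-eq:rd1} alone. When $w\in\{a,b\}$, the direct commutation changes meaning (on the right, the renaming would act on the freshly introduced $\alpha$-vertex rather than the old one), and one must first substitute a fresh subscript via~\eqref{disc-eq:rd3}, exactly as in the paper's three-way case split $w\notin\{a,b\}$, $w=a$, $w=b$. With these repairs — replacing both~\eqref{disc-eq:comm2} and the pump by the appropriate instances of~\eqref{disc-eq:rd4}, and adding the renaming case split — your argument coincides with the paper's proof.
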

\begin{proof}
By induction on $\mathtt t$.
\paragraph{Base cases}
\begin{align*}
\id;C^{uv}_{ab} &\equiv C^{uv}_{ab};\id, \\
S^w;C^{uv}_{ab} &\equiv C^{uv}_{ab};S^w, \tag{by~\eqref{disc-eq:sd1}} \\
R^{w\mapsto z};C^{uv}_{ab} &\equiv \begin{cases} C^{uv}_{ab};R^{w\mapsto z} \text{ if } w\neq\{a,b\}, \\
                                            C^{uv}_{kb};R^{a\mapsto z};R^{k\mapsto a} \text{ if } w=a, \\
                                            C^{uv}_{ak};R^{b\mapsto z};R^{k\mapsto b} \text{ if } w=b,
                              \end{cases} \tag{by~\eqref{disc-eq:rd1} and~\eqref{disc-eq:rd3}} \\
C^{wz}_{cd};C^{uv}_{ab} &\phantom{\equiv} \text{ is already in the right form,} \\
E^{w}_{cd};C^{uv}_{ab} &\equiv C^{uv}_{ab};E^{w}_{cd}, \tag{by~\eqref{disc-eq:comm9}} \\
E^{wz};C^{uv}_{ab} &\equiv C^{uv}_{ab};E^{wz}, \tag{by~\eqref{disc-eq:comm10}} \\
\bar I^{wz};C^{uv}_{ab} &\equiv C^{uv}_{ab};\bar I^{wz}, \tag{by~\eqref{disc-eq:comm8}} \\
\bar C^{wz}_{cd};C^{uv}_{ab} &\equiv \begin{cases} S^w;S^z;R^{c\mapsto j};R^{d\mapsto b};R^{j\mapsto a} \text{ if } w=u, z=v, \\
                                                   S^w;S^z;R^{c\mapsto j};R^{d\mapsto a};R^{j\mapsto b} \text{ if } w=v, z=u, \\
                                                   C^{uv}_{ij};\bar C^{wz}_{cd};R^{i\mapsto a};R^{j\mapsto b} \text{ otherwise,}
                                     \end{cases} \tag{by~\eqref{prop:disc-ids0} and~\eqref{disc-eq:rd4}} \\
\bar E^{w}_{cd};C^{uv}_{ab} &\equiv C^{uv}_{ij};\bar E^{w}_{cd};R^{i\mapsto a};R^{j\mapsto b}, \tag{by~\eqref{disc-eq:rd4}} \\
\bar E^{wz};C^{uv}_{ab} &\equiv C^{uv}_{ab};\bar E^{wz}. \tag{by~\eqref{disc-eq:comm11}}
\end{align*}
The inductive case is very similar to that of Lemma~\ref{lma:I-commutes}.
\end{proof}

\begin{corollary}\label{cor:C-form}
Any term is equal to a term of the form $\mathtt I;\mathtt C;\mathtt t$, where $\mathtt I$ and $\mathtt C$ are sequences of $I$-terms and $C$-terms, and the term $\mathtt t$ contains no $I$-terms or $C$-terms.
\end{corollary}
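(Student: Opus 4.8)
The plan is to peel off the forward $I$- and $C$-disconnections in two successive passes, reusing the machinery already assembled for the $I$-terms. First I would invoke Corollary~\ref{cor:I-form} to write an arbitrary term $\mathtt u$ as $\mathtt u \equiv \mathtt I;\mathtt t$, where $\mathtt I$ is a sequence of $I$-terms and $\mathtt t$ contains no $I$-terms. It then suffices to show that every $I$-free term $\mathtt t$ is equal to a term $\mathtt C;\mathtt t'$ with $\mathtt C$ a sequence of $C$-terms and $\mathtt t'$ free of both $I$- and $C$-terms; concatenating gives $\mathtt u \equiv \mathtt I;\mathtt C;\mathtt t'$, which is the required form.

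For the second pass I would argue by induction on the number of generating terms composing the $I$-free term $\mathtt t$, exactly mirroring the way Corollary~\ref{cor:I-form} is obtained from Lemma~\ref{lma:I-commutes}. A single generator is trivially of the form $\mathtt C;\mathtt t'$ (with one of the two sequences empty). For the inductive step write $\mathtt t \equiv \mathtt s;g$ with $g$ a single generator, and apply the inductive hypothesis to the shorter term $\mathtt s$, obtaining $\mathtt s \equiv \mathtt C;\mathtt r$ with $\mathtt r$ free of $I$- and $C$-terms. If $g$ is not a $C$-term, then $\mathtt r;g$ is still free of $C$-terms, and free of $I$-terms because $\mathtt t$ is, so $\mathtt t \equiv \mathtt C;(\mathtt r;g)$ already has the desired shape. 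If $g = C^{uv}_{ab}$, then since $\mathtt r$ is $I$-free and $\mathtt r;C^{uv}_{ab}$ is defined, Lemma~\ref{lma:C-commutes} supplies an $I$-free $\mathtt r'$ with the same number of $C$-terms as $\mathtt r$ --- hence zero --- such that either $\mathtt r;C^{uv}_{ab}\equiv\mathtt r'$ or $\mathtt r;C^{uv}_{ab}\equiv C^{wz}_{cd};\mathtt r'$. In the first case $\mathtt t\equiv\mathtt C;\mathtt r'$, and in the second $\mathtt t\equiv\mathtt C;C^{wz}_{cd};\mathtt r'$, where $\mathtt C;C^{wz}_{cd}$ is again a sequence of $C$-terms and $\mathtt r'$ is free of $I$- and $C$-terms.

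The step I expect to require the most care is verifying that the second pass never reintroduces an $I$-term and so does not spoil the output of the first pass; this is exactly why Lemma~\ref{lma:C-commutes} is phrased for terms containing no $I$-terms and guarantees an $I$-free output, and it is the reason the passes must be performed in the order $I$ then $C$. A secondary point of bookkeeping is the convention that ``contains no $C$-terms'' refers to the forward disconnections $C$ only, the connections $\bar C$ being permitted in $\mathtt t'$ (they are collected in a later pass, as the layering of the $ICE$-form in Definition~\ref{def:ICE-form} makes explicit); under this reading the counting argument ``same number of $C$-terms, hence zero'' is precisely what drives the induction to termination.
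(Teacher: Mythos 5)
Your proof is correct and follows exactly the route the paper intends: the corollary is left as an immediate consequence of Corollary~\ref{cor:I-form} together with an induction on generators using Lemma~\ref{lma:C-commutes}, which is precisely your two-pass argument. Your closing observations — that Lemma~\ref{lma:C-commutes} preserves $I$-freeness (forcing the order of the passes) and that the $C$-count excludes $\bar C$-connections — are the right bookkeeping points and match the paper's conventions.
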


\begin{lemma}\label{lma:Eneg-commutes}
Let $\mathtt t$ be a term not containing any $I$- or $C$-terms such that the term $\mathtt t;E^{u}_{ab}$ is defined. Then there exists a term $\mathtt t'$ not containing any $I$- or $C$-terms such that $\mathtt t$ and $\mathtt t'$ have the same number of $E^{<0}$-terms, and one of the following holds:
\begin{enumerate}[label={(\arabic*)}]
\item $\mathtt t;E^{u}_{ab}\equiv\mathtt t'$, or
\item there is a disconnection $E^{w}_{cd}$ such that $\mathtt t;E^{u}_{ab}\equiv E^{w}_{cd};\mathtt t'$.
\end{enumerate}
\end{lemma}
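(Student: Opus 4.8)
The plan is to prove the statement by induction on the construction of $\mathtt t$, reusing verbatim the machinery of Lemmas~\ref{lma:I-commutes} and~\ref{lma:C-commutes}. The inductive step is formally identical to those proofs: for $\mathtt t = \mathtt t_1;\mathtt t_2$ I would first apply the inductive hypothesis to $\mathtt t_2;E^u_{ab}$, and, if it returns a term $E^w_{cd};\mathtt t_2'$ of the form~(2), feed the emitted front disconnection into a second application to $\mathtt t_1;E^w_{cd}$. The $E^{<0}$-count is preserved throughout because the leading $E^w_{cd}$ is never counted toward $\mathtt t_2'$. All the content therefore lies in the base cases, in which $\mathtt t$ is a single generator that is neither an $I$- nor a $C$-disconnection: namely $\id$, $S^w$, $R^{w\mapsto z}$, $E^w_{cd}$, $E^{wz}$, $\bar E^w_{cd}$, $\bar E^{wz}$, $\bar C^{wz}_{cd}$, or $\bar I^{wz}$.

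The routine base cases reuse the commuting and reindexing equations already invoked for the $C$-terms. The structural generators slide past $E^u_{ab}$ by \eqref{disc-eq:sd1} (for $S^w$) and by \eqref{disc-eq:rd1}, with the split on whether the renamed vertex coincides with $a$ or $b$ resolved by \eqref{disc-eq:rd3}, exactly as in Lemma~\ref{lma:C-commutes}. A preceding $E^{<0}$-term already yields the form~(2) (after optionally reordering by \eqref{disc-eq:comm1}); a preceding $E^{\geq 0}$-term commutes by \eqref{disc-eq:comm12} and a preceding $\bar E^{\geq 0}$-term by \eqref{disc-eq:comm13}. The single genuinely absorbing case is $\bar E^u_{cd};E^u_{ab}$ with matching superscripts, which \eqref{prop:disc-ids0} (together with the $\bar d;d$ cancellations such as \eqref{disc-eq:ddbar4-2}) rewrites into a product of $S$- and $R$-terms; this is of the form~(1) and preserves the $E^{<0}$-count, both sides containing none.

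The hard part will be to move $E^u_{ab}$ past the connections $\bar I^{wz}$ and $\bar C^{wz}_{cd}$, and past a $\bar E^{<0}$-term $\bar E^w_{cd}$ with $w\neq u$, for none of which is there a listed equation of the form $(\text{connection});E^u_{ab}$. My plan is to exploit the dagger symmetry built into $\equiv$: since $\mathtt t\equiv\mathtt s$ holds as soon as $\overline{\mathtt t}\approx\overline{\mathtt s}$, I would pass to the mirror composite, in which the leading connection becomes a trailing disconnection. For $\bar I^{wz}$ this turns the goal into $\bar E^u_{ab};I^{wz}\approx I^{wz};\bar E^u_{ab}$, which is precisely \eqref{disc-eq:comm7}, and the $\lesssim$ orientation of \eqref{disc-eq:comm7} is exactly what certifies that the reordered term is well-typed. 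The $\bar C^{wz}_{cd}$ and mismatched $\bar E^{<0}$ cases are not closed under a single rule even after dualising, so I expect them to require a case split on whether the vertex set of the connection meets that of $E^u_{ab}$: on disjoint supports the two operations act independently and the commutation should be derivable by inserting and then cancelling a matching (dis)connection pair through \eqref{prop:disc-ids0}, whereas overlapping supports force the superscripts to coincide and are absorbed as in the previous paragraph. Keeping this overlap analysis correct, and in particular verifying that the one-directional relations $\lesssim$ compose in the right direction so that well-typedness of $\mathtt t;E^u_{ab}$ transfers to the reordered term, is where I expect the real difficulty to lie.
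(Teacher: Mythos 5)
Your inductive scaffolding, the routine base cases, and the dagger treatment of $\bar I^{wz}$ all match the paper's proof: the paper also disposes of $\bar I^{wz};E^u_{ab}$ by citing \eqref{disc-eq:comm7}, which is exactly the mirror-composite move you make explicit, and the matching-superscript case $\bar E^u_{cd};E^u_{ab}$ is indeed absorbed via \eqref{prop:disc-ids0}. The genuine gap is in the two cases you flag as hard, $\bar C^{wz}_{cd};E^u_{ab}$ and $\bar E^w_{cd};E^u_{ab}$ with $w\neq u$. The paper does not need any overlap analysis there: both are direct instances of the axiom \eqref{disc-eq:rd4}, namely $d^{U'}_{ij};\bar h^U_{ab};R^{i\mapsto c};R^{j\mapsto d}\lesssim\bar h^U_{ab};d^{U'}_{cd}$ with $d,h$ ranging over $\{E,C,I\}$. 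Its right-hand side has precisely the shape $(\text{connection});(\text{disconnection})$, so your claim that no listed equation of this form exists overlooks it. Taking $h=C$, $d=E$ yields $\bar C^{wz}_{cd};E^u_{ab}\equiv E^u_{ij};\bar C^{wz}_{cd};R^{i\mapsto a};R^{j\mapsto b}$, and $h=d=E$ with distinct superscripts settles the mismatched $\bar E^{<0}$ case; in both, the trailing renaming terms are absorbed into $\mathtt t'$, which therefore contains no $I$- or $C$-terms and has the same $E^{<0}$-count.

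Your proposed fallback for these cases does not go through. First, \eqref{prop:disc-ids0} is itself derived from \eqref{disc-eq:rd4} (see the proof of Proposition~\ref{prop:disc-ids}), so an argument leaning on it is not independent of the axiom you are trying to circumvent. Second, the insert-and-cancel step is unavailable: a pair $E^u_{ij};\bar E^u_{ij}$ can only be introduced by expanding an existing $S^u$ via \eqref{disc-eq:ddbar4}, and no such $S^u$ occurs in $\bar C^{wz}_{cd};E^u_{ab}$; note that $S$-terms are not units in $\Disc$ (the functor $R$ tracks them, $R(S^u)=(\{u\},\{u\})$), so they cannot be conjured from nothing, and even granting the insertion, cancelling the inserted connection against the trailing $E^u_{ab}$ would require commuting it past $\bar C^{wz}_{cd}$ --- the very step being proven. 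Third, your dichotomy is wrongly analysed: overlapping supports do not force absorbable superscripts --- e.g.\ $\bar C^{uz}_{cd};E^u_{ab}$ shares the chemical vertex $u$ yet admits no cancellation, and is again resolved only by \eqref{disc-eq:rd4}. (One mild point worth recording, which the paper glosses over: \eqref{disc-eq:rd4} is oriented so that well-typedness flows from the commuted form to $\bar h^U_{ab};d^{U'}_{cd}$, whereas here one needs the converse; this is fine because with fresh $i,j$ the commuted form is well-typed whenever the given composite is, but it deserves a remark.)
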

\begin{proof}
By induction on $\mathtt t$.
\paragraph{Base cases}
\begin{align*}
\id;E^{u}_{ab} &\equiv E^{u}_{ab};\id, \\
S^w;E^{u}_{ab} &\equiv E^{u}_{ab};S^w, \tag{by~\eqref{disc-eq:sd1}} \\
R^{w\mapsto z};E^{u}_{ab} &\equiv \begin{cases} E^{u}_{ab};R^{w\mapsto z} \text{ if } w\neq\{a,b\}, \\
                                           E^{u}_{kb};R^{a\mapsto z};R^{k\mapsto a} \text{ if } w=a, \\
                                           E^{u}_{ak};R^{b\mapsto z};R^{k\mapsto b} \text{ if } w=b,
                             \end{cases} \tag{by~\eqref{disc-eq:rd1} and~\eqref{disc-eq:rd3}} \\
E^{w}_{cd};E^{u}_{ab} &\phantom{\equiv} \text{ is already in the right form,} \\
E^{wz};E^{u}_{ab} &\equiv E^{u}_{ab};E^{wz}, \tag{by~\eqref{disc-eq:comm12}} \\
\bar I^{wz};E^{u}_{ab} &\equiv E^{u}_{ab};\bar I^{wz}, \tag{by~\eqref{disc-eq:comm7}} \\
\bar C^{wz}_{cd};E^{u}_{ab} &\equiv E^{u}_{ij};\bar C^{wz}_{cd};R^{i\mapsto a};R^{j\mapsto b}, \tag{by~\eqref{disc-eq:rd4}} \\
\bar E^{w}_{cd};E^{u}_{ab} &\equiv \begin{cases} S^w;R^{c\mapsto j};R^{d\mapsto b};R^{j\mapsto a} \text{ if } w=u, \\
                                                 E^u_{ij};\bar E^u_{cd};R^{i\mapsto a};R^{j\mapsto b} \text{ otherwise},
                                   \end{cases} \tag{by~\eqref{prop:disc-ids0} and~\eqref{disc-eq:rd4}} \\
\bar E^{wz};E^u_{ab} &\equiv E^u_{ab};\bar E^{wz}. \tag{by~\eqref{disc-eq:comm13}}
\end{align*}
The inductive case is very similar to that of Lemma~\ref{lma:I-commutes}.
\end{proof}

\begin{corollary}\label{cor:Eneg-form}
Any term is equal to a term of the form $\mathtt I;\mathtt C;\mathtt E^{<0};\mathtt t$, where $\mathtt I$, $\mathtt C$ and $\mathtt E^{<0}$ are sequences of $I$-, $C$-, and $E^{<0}$-terms, and the term $\mathtt t$ contains no $I$-, $C$-, or $E^{<0}$-terms.
\end{corollary}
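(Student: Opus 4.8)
The plan is to derive this corollary from Lemma~\ref{lma:Eneg-commutes} in exactly the same manner that Corollary~\ref{cor:C-form} was obtained from Lemma~\ref{lma:C-commutes} (and Corollary~\ref{cor:I-form} from Lemma~\ref{lma:I-commutes}). First I would invoke Corollary~\ref{cor:C-form} to rewrite an arbitrary term as $\mathtt I;\mathtt C;\mathtt t$, where $\mathtt I$ and $\mathtt C$ are sequences of $I$- and $C$-terms and $\mathtt t$ contains no $I$- or $C$-terms. Since the prefix $\mathtt I;\mathtt C$ is never touched in what follows, it suffices to show that any term $\mathtt t$ free of $I$- and $C$-terms is equal to a term $\mathtt E^{<0};\mathtt t'$, where $\mathtt E^{<0}$ is a sequence of $E^{<0}$-terms and $\mathtt t'$ contains no $I$-, $C$-, or $E^{<0}$-terms.

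I would prove this by induction on the number $n$ of $E^{<0}$-terms occurring in $\mathtt t$. The base case $n=0$ is immediate: take $\mathtt E^{<0}$ empty and $\mathtt t'=\mathtt t$. For the inductive step with $n\geq 1$, locate the leftmost $E^{<0}$-term and factor $\mathtt t\equiv\mathtt p;E^u_{ab};\mathtt q$, so that the prefix $\mathtt p$ contains no $I$-, $C$-, or $E^{<0}$-terms and $\mathtt p;E^u_{ab}$ is defined. Applying Lemma~\ref{lma:Eneg-commutes} to $\mathtt p$ yields two cases. In case (1), $\mathtt p;E^u_{ab}\equiv\mathtt p'$ for some $\mathtt p'$ with no $I$- or $C$-terms and the same number of $E^{<0}$-terms as $\mathtt p$, namely zero; hence $\mathtt t\equiv\mathtt p';\mathtt q$ contains $n-1$ many $E^{<0}$-terms and no $I$- or $C$-terms, so the inductive hypothesis applies. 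In case (2), $\mathtt p;E^u_{ab}\equiv E^w_{cd};\mathtt p'$ with $\mathtt p'$ again free of $I$-, $C$-, and $E^{<0}$-terms, giving $\mathtt t\equiv E^w_{cd};\mathtt p';\mathtt q$; the tail $\mathtt p';\mathtt q$ now has $n-1$ many $E^{<0}$-terms and no $I$- or $C$-terms, and applying the inductive hypothesis to it followed by prepending $E^w_{cd}$ completes the step.

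The only genuine care required lies in maintaining the two bookkeeping invariants throughout the induction, and this is precisely what the ``same number of $E^{<0}$-terms'' clause of Lemma~\ref{lma:Eneg-commutes} is designed to guarantee. First, the tail must remain free of $I$- and $C$-terms: this holds because the lemma delivers a $\mathtt p'$ with no such terms, extracting an $E^{<0}$-term to the front introduces none, and $\mathtt q$ had none to begin with. Second, the count of $E^{<0}$-terms must strictly decrease so that the recursion is well-founded: in case (1) the leftmost $E^{<0}$-term is absorbed (dropping the count from $n$ to $n-1$), while in case (2) it is pulled out to the front of the diagram, again leaving $n-1$ in the tail. I do not anticipate any substantive obstacle beyond this careful counting; as in the previous two corollaries, the real content already resides in the term-by-term commutation lemma, and the corollary is a routine iteration of it.
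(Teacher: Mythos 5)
Your proposal is correct and matches the paper's (implicit) argument exactly: the corollary is obtained by iterating Lemma~\ref{lma:Eneg-commutes} on the remainder produced by Corollary~\ref{cor:C-form}, pushing each $E^{<0}$-term leftwards past the $E^{<0}$-free prefix, with the ``same number of $E^{<0}$-terms'' clause guaranteeing that the count in the tail strictly decreases so the induction is well-founded. Your bookkeeping of the two invariants is precisely the content the paper leaves tacit, so there is nothing to add.
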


\begin{lemma}\label{lma:Enonneg-commutes}
Let $\mathtt t$ be a term not containing any $I$-, $C$-, or $E^{<0}$-terms such that the term $\mathtt t;E^{uv}$ is defined. Then there exists a term $\mathtt t'$ not containing any $I$-, $C$-, or $E^{<0}$-terms such that $\mathtt t$ and $\mathtt t'$ have the same number of $E^{\geq 0}$-terms, and one of the following holds:
\begin{enumerate}[label={(\arabic*)}]
\item $\mathtt t;E^{uv}\equiv\mathtt t'$, or
\item there is a disconnection $E^{wz}$ such that $\mathtt t;E^{uv}\equiv E^{wz};\mathtt t'$.
\end{enumerate}
\end{lemma}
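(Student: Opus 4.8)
The plan is to follow exactly the same template as the four preceding commutation lemmas (Lemmas~\ref{lma:I-commutes} through~\ref{lma:Enonneg-commutes}), proceeding by induction on the structure of the term $\mathtt t$. The statement asserts that an $E^{\geq 0}$-term $E^{uv}$ can be pushed to the left past any term $\mathtt t$ that has already been stripped of its $I$-, $C$-, and $E^{<0}$-terms, landing either as a genuinely absorbed term (case~(1), when it cancels against a matching connection) or as a leading disconnection $E^{wz}$ (case~(2)), while preserving the count of $E^{\geq 0}$-terms. The generating terms that $\mathtt t$ may now contain are exactly: $\id$, $S^w$, $R^{w\mapsto z}$, $E^{wz}$ (the $E^{\geq 0}$-terms themselves), and the four families of connections $\bar I^{wz}$, $\bar C^{wz}_{cd}$, $\bar E^{w}_{cd}$, $\bar E^{wz}$. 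The base cases handle each of these one at a time.

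First I would dispatch the easy base cases. For $\id$ and $S^w$ the term commutes trivially (using~\eqref{disc-eq:sd1} for the latter); for $R^{w\mapsto z}$ one commutes via~\eqref{disc-eq:rd1} when $w$ is not one of the introduced $\alpha$-vertices of $E^{uv}$, and otherwise uses~\eqref{disc-eq:rd2}, which is precisely the renaming equation tailored to $E^{\geq 0}$-terms (note $E^{uv}$ introduces the vertex $v$, so this is the relevant rewrite). The case $E^{wz};E^{uv}$ is already in the required form, satisfying~(2). The connection cases $\bar I^{wz};E^{uv}$ and $\bar E^{wz};E^{uv}$ commute by~\eqref{disc-eq:comm6} and~\eqref{disc-eq:comm13} respectively, read from right to left (these are the dual commutations already available). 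The case $\bar C^{wz}_{cd};E^{uv}$ should commute using~\eqref{disc-eq:comm10} together with a renaming introduced by~\eqref{disc-eq:rd4}, in the same shape as $\bar C^{wz}_{cd};E^u_{ab}$ was handled in the proof of Lemma~\ref{lma:Eneg-commutes}.

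The genuinely interesting base case, and the main obstacle, is $\bar E^{w}_{cd};E^{uv}$, i.e.~interchanging an $E^{<0}$-connection with an $E^{\geq 0}$-disconnection. Here I expect the argument to split on whether the vertices coincide. When they are disjoint, one should simply commute, introducing renamings via~\eqref{disc-eq:rd4}, analogously to the mixed $E$-cases already treated. When they interact (the connection undoes charge that the disconnection is acting on at the same chemical vertex), the pair $\bar E^{w}_{cd};E^{uv}$ should collapse to a combination of $S$- and $R$-terms via the cancellation equations~\eqref{disc-eq:ddbar4},~\eqref{disc-eq:eebar} and~\eqref{prop:disc-ids0}, thereby reducing the number of $E$-terms; this is the same mechanism used in Case~1 and Case~2 of the normal-form argument (Proposition~\ref{prop:normal-form-existence}). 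The delicate point is to verify that the two polarities ($E^u_{ab}$ decreasing negative charge versus $E^{uv}$ increasing charge from a non-negative value) genuinely interact in the way the cancellation equations demand, so that well-typedness is preserved; I would check the charge bookkeeping carefully here.

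Finally, the inductive case is entirely formal and identical in structure to the inductive step of Lemma~\ref{lma:I-commutes}: given $\mathtt t;\mathtt s;E^{uv}$ with both $\mathtt t$ and $\mathtt s$ $I$-, $C$-, $E^{<0}$-free, apply the hypothesis to $\mathtt s$ first, then either conclude immediately (if $E^{uv}$ is absorbed) or push the resulting leading $E^{ab}$ into $\mathtt t$ and apply the hypothesis again, obtaining a term $\mathtt t'$ with the same number of $E^{\geq 0}$-terms. The subsequent corollary (producing the $\mathtt I;\mathtt C;\mathtt E^{<0};\mathtt E^{\geq 0};\mathtt t$ prefix) then follows exactly as Corollaries~\ref{cor:I-form}--\ref{cor:Eneg-form} did. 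I anticipate no difficulty there, so the entire weight of the proof rests on getting the $\bar E^{w}_{cd}$ base case right.
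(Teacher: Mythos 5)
Your overall skeleton is right — structural induction over the generators, with the same formal inductive step as Lemma~\ref{lma:I-commutes} — but your base-case analysis inverts where the genuine interaction lies, and as written the proof would fail. You claim $\bar E^{wz};E^{uv}$ commutes unconditionally (citing~\eqref{disc-eq:comm13}, which actually concerns $\bar E^{\geq 0}$ against $E^{<0}$). It does not: when $w=u$ and $z=v$, the would-be commuted term $E^{uv};\bar E^{uv}$ is ill-typed on the relevant domain, since after $E^{uv}$ the bond between $u$ and $v$ is gone ($m(u,v)=0$) while applicability of $E^{uv}$ requires $m(u,v)=1$. This is precisely the case that must be resolved by cancellation: the paper sets $\bar E^{uv};E^{uv}\equiv S^u;S^v$ via~\eqref{disc-eq:ddbar4-2}, landing in alternative~(1), and commutes via~\eqref{disc-eq:comm2} only when $(w,z)\neq(u,v)$. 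Conversely, the case you flag as ``the main obstacle,'' $\bar E^{w}_{cd};E^{uv}$ with an anticipated cancellation subcase, has no interacting well-typed instance at all: if $u=w$, then $\bar E^{w}_{cd}$ leaves $w$ with charge $\leq -1$ (its domain has $\tau^{\crg}(w)\leq 0$), while $E^{wv}$ demands $w\notin\Crgn$, a contradiction. So that case commutes unconditionally by (the dagger of)~\eqref{disc-eq:comm13}, and the mixed-polarity cancellation machinery you invoke (\eqref{disc-eq:eebar}, \eqref{prop:disc-ids0}) never comes into play in this lemma.

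Two smaller mis-citations compound this. For $\bar C^{wz}_{cd};E^{uv}$ the correct equation is~\eqref{disc-eq:comm11} (read through the dagger), not \eqref{disc-eq:comm10} plus~\eqref{disc-eq:rd4}; indeed no renaming bookkeeping is needed anywhere in this lemma because $E^{uv}$ has empty subscript list $D=\eset$ and introduces no fresh $\alpha$-vertices — contrary to your remark that ``$E^{uv}$ introduces the vertex $v$'' ($v$ must already exist, with $m(u,v)=1$). Relatedly, the $R$-case splits on whether the renaming \emph{target} equals $v$: if $z=v$ one uses~\eqref{disc-eq:rd2} to get $R^{w\mapsto v};E^{uv}\equiv E^{uw};R^{w\mapsto v}$, and otherwise~\eqref{disc-eq:rd1}. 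Your identification of~\eqref{disc-eq:rd2} as the relevant rewrite is correct, but the stated side condition is not. Repairing the proof therefore requires moving the case split and cancellation from $\bar E^{w}_{cd};E^{uv}$ to $\bar E^{wz};E^{uv}$, verifying the charge bookkeeping that rules out interaction in the former, and fixing the equation references; with those changes the induction goes through exactly as in the paper.
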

\begin{proof}
By induction on $\mathtt t$.
\paragraph{Base cases}
\begin{align*}
\id;E^{uv} &\equiv E^{uv};\id, \\
S^w;E^{uv} &\equiv E^{uv};S^w, \tag{by~\eqref{disc-eq:sd1}} \\
R^{w\mapsto z};E^{uv} &\equiv \begin{cases} E^{uv};R^{w\mapsto z} \text{ if } z\neq v, \\
                                       E^{uw};R^{w\mapsto v} \text{ if } z=v,
                         \end{cases} \tag{by~\eqref{disc-eq:rd1} and~\eqref{disc-eq:rd2}} \\
E^{wz};E^{uv} &\phantom{\equiv} \text{ is already in the right form,} \\
\bar I^{wz};E^{uv} &\equiv E^{uv};\bar I^{wz}, \tag{by~\eqref{disc-eq:comm6}} \\
\bar C^{wz}_{cd};E^{uv} &\equiv E^{uv};\bar C^{wz}_{cd}, \tag{by~\eqref{disc-eq:comm11}} \\
\bar E^{w}_{cd};E^{uv} &\equiv E^{uv};\bar E^{w}_{cd}, \tag{by~\eqref{disc-eq:comm13}} \\
\bar E^{wz};E^{uv} &\equiv \begin{cases} S^u;S^v \text{ if } w=u \text{ and } z=v, \\
                                    E^{uv};\bar E^{wz} \text{ otherwise.}
                      \end{cases} \tag{by~\eqref{disc-eq:ddbar4-2} and~\eqref{disc-eq:comm2}}
\end{align*}
The inductive case is very similar to that of Lemma~\ref{lma:I-commutes}.
\end{proof}

\begin{corollary}\label{cor:Enonneg-form}
Any term is equal to a term of the form
$$\mathtt I;\mathtt C;\mathtt E^{<0};\mathtt E^{\geq 0};\mathtt t,$$
where $\mathtt I$, $\mathtt C$, $\mathtt E^{<0}$ and $\mathtt E^{\geq 0}$ are sequences of $I$-, $C$-, $E^{<0}$, and $E^{\geq 0}$-terms, and the term $\mathtt t$ contains no $I$-, $C$-, $E^{<0}$, or $E^{\geq 0}$-terms.
\end{corollary}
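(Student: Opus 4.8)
The plan is to reduce immediately to the situation already prepared by Corollary~\ref{cor:Eneg-form} and then push the $E^{\geq 0}$-terms to the front by a single induction, exactly mirroring the proofs of Corollaries~\ref{cor:I-form}, \ref{cor:C-form} and~\ref{cor:Eneg-form}. Concretely, by Corollary~\ref{cor:Eneg-form} every term is equal to one of the form $\mathtt I;\mathtt C;\mathtt E^{<0};\mathtt t$ in which $\mathtt t$ contains no $I$-, $C$- or $E^{<0}$-terms. It therefore suffices to show that such a tail $\mathtt t$ is equal to a term $\mathtt E^{\geq 0};\mathtt t'$, where $\mathtt E^{\geq 0}$ is a sequence of $E^{\geq 0}$-terms and $\mathtt t'$ contains none of the $I$-, $C$-, $E^{<0}$- or $E^{\geq 0}$-terms; prepending the already-sorted prefix then yields the claimed form.

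First I would set up the induction on the number of $E^{\geq 0}$-terms occurring in $\mathtt t$. The base case (no $E^{\geq 0}$-terms) is immediate with empty $\mathtt E^{\geq 0}$. For the inductive step, write $\mathtt t = \mathtt t_1;E^{uv};\mathtt t_2$, where $E^{uv}$ is the leftmost $E^{\geq 0}$-term, so that $\mathtt t_1$ contains no $I$-, $C$-, $E^{<0}$- or $E^{\geq 0}$-terms. Since $\mathtt t_1;E^{uv}$ is defined, Lemma~\ref{lma:Enonneg-commutes} applies to it and produces a term $\mathtt t_1'$, free of $I$-, $C$- and $E^{<0}$-terms, with either $\mathtt t_1;E^{uv}\equiv\mathtt t_1'$ or $\mathtt t_1;E^{uv}\equiv E^{wz};\mathtt t_1'$ for some $E^{\geq 0}$-term $E^{wz}$; in both cases $\mathtt t_1'$ has the same number of $E^{\geq 0}$-terms as $\mathtt t_1$, namely zero.

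In the first case we obtain $\mathtt t\equiv\mathtt t_1';\mathtt t_2$, which contains one fewer $E^{\geq 0}$-term than $\mathtt t$, so the inductive hypothesis finishes the step. In the second case we obtain $\mathtt t\equiv E^{wz};\mathtt t_1';\mathtt t_2$, where the tail $\mathtt t_1';\mathtt t_2$ again has one fewer $E^{\geq 0}$-term and contains no $I$-, $C$- or $E^{<0}$-terms; applying the inductive hypothesis to it and reattaching the leading $E^{wz}$ gives the desired $\mathtt E^{\geq 0};\mathtt t'$.

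The step I expect to require the most care is purely bookkeeping rather than mathematical: I must check at each application of Lemma~\ref{lma:Enonneg-commutes} that commuting an $E^{\geq 0}$-term leftward never regenerates an $I$-, $C$- or $E^{<0}$-term, as otherwise the prefix assembled so far would be disturbed. This is exactly guaranteed by the lemma, whose output term is asserted to contain no terms of those three kinds; the same clause is what makes the nested inductions of Corollaries~\ref{cor:I-form}--\ref{cor:Eneg-form} compose cleanly. Hence no genuine obstacle arises, and the corollary follows by the same template as its predecessors.
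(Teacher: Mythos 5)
Your proof is correct and follows essentially the same route as the paper, which establishes the corollary from Lemma~\ref{lma:Enonneg-commutes} by exactly the nested-induction template sketched in the proof of Proposition~\ref{prop:ICE-form}: start from the form guaranteed by Corollary~\ref{cor:Eneg-form} and push each $E^{\geq 0}$-term leftward through the tail, with the lemma's clause that the output contains no $I$-, $C$- or $E^{<0}$-terms ensuring the already-sorted prefix is undisturbed. Your bookkeeping (inducting on the number of $E^{\geq 0}$-terms and splitting at the leftmost one, so that the lemma is applied to a tail-prefix $\mathtt t_1$ with zero $E^{\geq 0}$-terms) is a clean instantiation of that template and needs no repair.
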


\begin{lemma}\label{lma:Enonnegbar-commutes}
Let $\mathtt t$ be a term not containing any $I$-, $C$- or $E$-terms such that the term $\mathtt t;\bar E^{uv}$ is defined. Then there exists a term $\mathtt t'$ not containing any $I$-, $C$- or $E$-terms such that $\mathtt t$ and $\mathtt t'$ have the same number of $\bar E^{\geq}$-terms, and one of the following holds:
\begin{enumerate}[label={(\arabic*)}]
\item $\mathtt t;\bar E^{uv}\equiv\mathtt t'$, or
\item there is a connection $\bar E^{wz}$ such that $\mathtt t;\bar E^{uv}\equiv\bar E^{wz};\mathtt t'$.
\end{enumerate}
\end{lemma}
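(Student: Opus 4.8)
Looking at this lemma, it is the sixth in a sequence of nearly identical "commutation" lemmas (Lemmas~\ref{lma:I-commutes} through~\ref{lma:Enonneg-commutes}), each pushing one class of generator leftward past everything of a "later" kind. The pattern is completely mechanical, and my plan is to follow it verbatim.

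The plan is to proceed by induction on the structure of the term $\mathtt t$, exactly as in the preceding lemmas. For the \textbf{base cases}, I would enumerate every generating term that can validly precede $\bar E^{uv}$, given that $\mathtt t$ contains no $I$-, $C$- or $E$-terms. This restriction is crucial: it means the only generators I must handle are $\id$, $S^w$, $R^{w\mapsto z}$, the connections $\bar I^{wz}$, $\bar C^{wz}_{cd}$, $\bar E^{w}_{cd}$, and $\bar E^{wz}$ itself. For each, I would exhibit the rewrite that either commutes $\bar E^{uv}$ to the front or leaves it absorbed:
\begin{align*}
\id;\bar E^{uv} &\equiv \bar E^{uv};\id, \\
S^w;\bar E^{uv} &\equiv \bar E^{uv};S^w, \\
R^{w\mapsto z};\bar E^{uv} &\phantom{\equiv} \text{(commute via~\eqref{disc-eq:rd1}, or adjust the superscript via~\eqref{disc-eq:rd2} when } z=v), \\
\bar I^{wz};\bar E^{uv} &\equiv \bar E^{uv};\bar I^{wz} \quad\text{(via~\eqref{disc-eq:comm2})}, \\
\bar C^{wz}_{cd};\bar E^{uv} &\equiv \bar E^{uv};\bar C^{wz}_{cd} \quad\text{(via~\eqref{disc-eq:comm2})}, \\
\bar E^{w}_{cd};\bar E^{uv} &\equiv \bar E^{uv};\bar E^{w}_{cd} \quad\text{(via~\eqref{disc-eq:comm2})}, \\
\bar E^{wz};\bar E^{uv} &\phantom{\equiv} \text{is already in the required form.}
\end{align*}
The governing principle is that equation~\eqref{disc-eq:comm2}, which asserts that any two connections commute ($\bar d^{uv};d^{wz} \approx d^{wz};\bar d^{uv}$, read for the connection-connection case), handles all the connection-past-connection moves uniformly, while the renaming interactions~\eqref{disc-eq:rd1} and~\eqref{disc-eq:rd2} handle the $R$-terms and the $S$-terms slide by~\eqref{disc-eq:sd1}. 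Since $\mathtt t$ is free of $I$-, $C$- and $E$-terms by hypothesis, no case involving those generators can arise, so the enumeration above is exhaustive.

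For the \textbf{inductive case}, I would copy the argument from Lemma~\ref{lma:I-commutes} essentially word for word: given $\mathtt t;\mathtt s$ with the statement holding for both factors, I first apply the hypothesis to $\mathtt s;\bar E^{uv}$, obtaining either $\mathtt s'$ or $\bar E^{ab};\mathtt s'$; in the latter case I then apply the hypothesis to $\mathtt t;\bar E^{ab}$, yielding either $\mathtt t'$ or $\bar E^{wz};\mathtt t'$, and assemble the result. The bookkeeping that the number of $\bar E^{\geq 0}$-terms is preserved follows because each rewrite in the base cases neither creates nor destroys a $\bar E^{\geq 0}$-term (the only "absorption" phenomena, such as the annihilation in~\eqref{disc-eq:ddbar4-2}, involve an $E$-term meeting its $\bar E$ partner, which cannot occur here since there are no $E$-terms in $\mathtt t$). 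I expect \textbf{no genuine obstacle}: the truly delicate case in this family of lemmas was the $\bar E^{wz};E^{uv}$ cancellation in Lemma~\ref{lma:Enonneg-commutes}, where the annihilation to $S^u;S^v$ had to be tracked; here, because the forbidden classes now include all $E$-terms, every base case is a clean commutation or a trivial identity, making this lemma the most routine in the sequence. The only minor care needed is the superscript-adjustment subcase for $R^{w\mapsto z}$ when $z=v$, which is dispatched by~\eqref{disc-eq:rd2} exactly as in Lemma~\ref{lma:Enonneg-commutes}.
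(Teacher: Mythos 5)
Your overall strategy --- structural induction with the same seven base cases, the $R$-term case split, and the inductive step copied from Lemma~\ref{lma:I-commutes} --- is exactly the route the paper takes, and your observations that the enumeration is exhaustive and that no annihilation as in~\eqref{disc-eq:ddbar4-2} can occur (so the $\bar E^{\geq 0}$-count is preserved) are correct. However, there is a concrete error in your justification of the three connection-past-connection base cases. You cite~\eqref{disc-eq:comm2} as asserting that ``any two connections commute''; it does not say that. Equation~\eqref{disc-eq:comm2} reads $\bar d^{uv};d^{wz}\approx d^{wz};\bar d^{uv}$: it commutes a connection past a \emph{disconnection} of the \emph{same} kind $d\in\{E,C,I\}$, and only in the superscript-only (subscript-free) forms. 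It therefore covers none of your three cases: $\bar I^{wz};\bar E^{uv}$ and $\bar E^{w}_{cd};\bar E^{uv}$ are cross-kind pairs of connections, and $\bar C^{wz}_{cd}$, $\bar E^{w}_{cd}$ carry subscripts that~\eqref{disc-eq:comm2} does not accommodate; taking daggers does not help, since the dagger of~\eqref{disc-eq:comm2} is again a connection--disconnection statement.

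The equalities you need are nevertheless derivable, and the correct mechanism is the dagger-closure built into $\equiv$ (Definition~\ref{def:term-id}): since $\overline{\bar I^{wz};\bar E^{uv}} = E^{uv};I^{wz}$, equation~\eqref{disc-eq:comm5} yields $\bar I^{wz};\bar E^{uv}\equiv\bar E^{uv};\bar I^{wz}$; likewise~\eqref{disc-eq:comm10} handles $\bar C^{wz}_{cd};\bar E^{uv}$ and~\eqref{disc-eq:comm12} handles $\bar E^{w}_{cd};\bar E^{uv}$ --- and these are precisely the citations in the paper's proof. (Your use of~\eqref{disc-eq:rd2} for the $z=v$ renaming subcase is fine and matches the paper, again implicitly via the dagger.) So the gap is repairable by replacing the three appeals to~\eqref{disc-eq:comm2} with the daggered forms of~\eqref{disc-eq:comm5}, \eqref{disc-eq:comm10} and~\eqref{disc-eq:comm12}; as written, those three steps are not supported by the axiom you invoke.
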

\begin{proof}
By induction on $\mathtt t$.
\paragraph{Base cases}
\begin{align*}
\id;\bar E^{uv} &\equiv\bar E^{uv};\id, \\
S^w;\bar E^{uv} &\equiv \bar E^{uv};S^w, \tag{by~\eqref{disc-eq:sd1}} \\
R^{w\mapsto z};\bar E^{uv} &\equiv \begin{cases} \bar E^{uv};R^{w\mapsto z} \text{ if } z\neq v, \\
                                            \bar E^{uw};R^{w\mapsto v} \text{ if } z=v,
                              \end{cases} \tag{by~\eqref{disc-eq:rd1} and~\eqref{disc-eq:rd2}} \\
\bar I^{wz};\bar E^{uv} &\equiv \bar E^{uv};\bar I^{wz}, \tag{by~\eqref{disc-eq:comm5}} \\
\bar C^{wz}_{cd};\bar E^{uv} &\equiv \bar E^{uv};\bar C^{wz}_{cd}, \tag{by~\eqref{disc-eq:comm10}} \\
\bar E^{w}_{cd};\bar E^{uv} &\equiv \bar E^{uv};\bar E^{w}_{cd}, \tag{by~\eqref{disc-eq:comm12}} \\
\bar E^{wz};\bar E^{uv} &\phantom{\equiv} \text{ is already in the right form.}
\end{align*}
The inductive case is very similar to that of Lemma~\ref{lma:I-commutes}.
\end{proof}

\begin{corollary}\label{cor:Enonnegbar-form}
Any term is equal to a term of the form
$$\mathtt I;\mathtt C;\mathtt E^{<0};\mathtt E^{\geq 0};\bar{\mathtt E}^{\geq 0};\mathtt t,$$
where the term $\mathtt t$ contains no $I$-, $C$-, $E$-, or $\bar E^{\geq 0}$-terms.
\end{corollary}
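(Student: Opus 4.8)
The plan is to obtain Corollary~\ref{cor:Enonnegbar-form} exactly as the preceding corollaries (\ref{cor:I-form} through \ref{cor:Enonneg-form}) are obtained from their companion lemmas: bootstrap from the previous normal-form result and then iterate the commutation lemma just proved. Concretely, I would start from Corollary~\ref{cor:Enonneg-form}, which already rewrites an arbitrary term as $\mathtt I;\mathtt C;\mathtt E^{<0};\mathtt E^{\geq 0};\mathtt s$, where $\mathtt s$ contains no $I$-, $C$-, $E^{<0}$- or $E^{\geq 0}$-terms, i.e.\ no $I$-, $C$- or $E$-terms at all. It therefore suffices to prove the auxiliary claim that any such $\mathtt s$ is equal to a term $\bar{\mathtt E}^{\geq 0};\mathtt t$ in which $\bar{\mathtt E}^{\geq 0}$ is a sequence of $\bar E^{\geq 0}$-terms and $\mathtt t$ contains no $I$-, $C$-, $E$- or $\bar E^{\geq 0}$-terms; prepending $\mathtt I;\mathtt C;\mathtt E^{<0};\mathtt E^{\geq 0}$ then yields the stated form.

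I would prove this auxiliary claim by induction on the number $n$ of $\bar E^{\geq 0}$-terms occurring in $\mathtt s$. If $n=0$ there is nothing to do, taking $\bar{\mathtt E}^{\geq 0}$ empty and $\mathtt t=\mathtt s$. If $n>0$, locate the leftmost $\bar E^{\geq 0}$-term and write $\mathtt s=\mathtt s_1;\bar E^{uv};\mathtt s_2$, where $\mathtt s_1$ contains no $\bar E^{\geq 0}$-terms; since $\mathtt s$ contains no $I$-, $C$- or $E$-terms, neither do $\mathtt s_1$ or $\mathtt s_2$. Now $\mathtt s_1;\bar E^{uv}$ is defined and $\mathtt s_1$ satisfies the hypothesis of Lemma~\ref{lma:Enonnegbar-commutes}, so the lemma produces a term $\mathtt s_1'$ with no $I$-, $C$- or $E$-terms such that $\mathtt s_1;\bar E^{uv}\equiv\bar E^{wz};\mathtt s_1'$ for some connection $\bar E^{wz}$. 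Consequently $\mathtt s\equiv\bar E^{wz};(\mathtt s_1';\mathtt s_2)$, and $\mathtt s_1';\mathtt s_2$ again contains no $I$-, $C$- or $E$-terms while having exactly $n-1$ occurrences of $\bar E^{\geq 0}$-terms (the count in $\mathtt s_1'$ equals that in $\mathtt s_1$, namely $0$, by the count-preservation clause of the lemma). The inductive hypothesis applied to $\mathtt s_1';\mathtt s_2$ then gives $\mathtt s_1';\mathtt s_2\equiv\bar{\mathtt E}^{\geq 0}_{0};\mathtt t$, and $\mathtt s\equiv\bar E^{wz};\bar{\mathtt E}^{\geq 0}_{0};\mathtt t$ is of the desired form.

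The one point that deserves explicit verification—and which I expect to be the only genuine, if mild, obstacle—is that the application of Lemma~\ref{lma:Enonnegbar-commutes} always lands in alternative~(2) rather than~(1): appending the connection $\bar E^{uv}$ to a term with no $E$-disconnections leaves nothing for it to annihilate against via \eqref{disc-eq:eebar} or \eqref{disc-eq:ddbar4-2}, so the $\bar E^{\geq 0}$-count cannot drop and a fresh $\bar E^{\geq 0}$-term must emerge at the front (consistent with the fact that every base case in the proof of Lemma~\ref{lma:Enonnegbar-commutes} is of form~(2)). Granting this, the induction measure $n$ strictly decreases at each step and the hypothesis ``no $I$-, $C$- or $E$-terms'' is preserved throughout, so the argument terminates. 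Everything else is the same bookkeeping that produces Corollaries~\ref{cor:I-form}--\ref{cor:Enonneg-form}, and no new equational reasoning beyond Lemma~\ref{lma:Enonnegbar-commutes} is required.
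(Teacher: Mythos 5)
Your proof is correct and is exactly the paper's (implicit) argument: the paper states this corollary without separate proof, as the routine iteration of Lemma~\ref{lma:Enonnegbar-commutes} following the same pattern as Corollaries~\ref{cor:I-form}--\ref{cor:Enonneg-form}, which is precisely what your induction on the number of $\bar E^{\geq 0}$-terms in the residual part spells out. One small remark: your ``genuine obstacle'' about alternative~(1) is in fact a non-issue -- if the lemma returned form~(1), then $\mathtt s_1;\bar E^{uv}\equiv\mathtt s_1'$ with $\mathtt s_1'$ having zero $\bar E^{\geq 0}$-terms, so $\mathtt s_1';\mathtt s_2$ has $n-1$ occurrences and the inductive hypothesis applies directly, meaning the induction terminates regardless of which alternative occurs (though your observation that all base cases of the lemma produce form~(2) is also correct).
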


\begin{lemma}\label{lma:Enegbar-commutes}
Let $\mathtt t$ be a term not containing any $I$-, $C$-, $E$-, or $\bar E^{\geq 0}$-terms such that the term $\mathtt t;\bar E^{u}_{ab}$ is defined. Then there exists a term $\mathtt t'$ not containing any $I$-, $C$-, $E$-, or $\bar E^{\geq 0}$-terms such that $\mathtt t$ and $\mathtt t'$ have the same number of $\bar E^{<0}$-terms, and one of the following holds:
\begin{enumerate}[label={(\arabic*)}]
\item $\mathtt t;\bar E^{u}_{ab}\equiv\mathtt t'$, or
\item there is a connection $\bar E^{w}_{cd}$ such that $\mathtt t;\bar E^{u}_{ab}\equiv\bar E^{w}_{cd};\mathtt t'$.
\end{enumerate}
\end{lemma}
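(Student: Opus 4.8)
The plan is to reproduce, step for step, the template of the preceding commuting lemmas (Lemmas~\ref{lma:I-commutes}, \ref{lma:C-commutes}, \ref{lma:Eneg-commutes}, \ref{lma:Enonneg-commutes} and~\ref{lma:Enonnegbar-commutes}): argue by induction on the structure of $\mathtt t$, dispatch the atomic terms in the base cases, and reduce the composite case $\mathtt t=\mathtt t_1;\mathtt t_2$ to the inductive hypothesis exactly as in Lemma~\ref{lma:I-commutes}. Since $\mathtt t$ contains no $I$-, $C$-, $E$- or $\bar E^{\geq 0}$-terms, the only atomic terms that can occur are $\id$, $S^w$, $R^{w\mapsto z}$, $\bar I^{wz}$, $\bar C^{wz}_{cd}$ and the $\bar E^{<0}$-terms $\bar E^{w}_{cd}$, so it suffices to commute $\bar E^u_{ab}$ to the left past each of these. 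As in the earlier lemmas, I would obtain the required identities for the connections by applying the dagger $\overline{(\cdot)}$ to the disconnection equations of Figure~\ref{fig:disc-axioms}, using that $\mathtt t\leq\mathtt s$ iff $\overline{\mathtt t}\leq\overline{\mathtt s}$ so that the well-typedness side conditions transfer.

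The routine base cases are $\id;\bar E^u_{ab}\equiv\bar E^u_{ab};\id$; $S^w;\bar E^u_{ab}\equiv\bar E^u_{ab};S^w$ by the dagger of~\eqref{disc-eq:sd1}; $\bar I^{wz};\bar E^u_{ab}\equiv\bar E^u_{ab};\bar I^{wz}$ by the dagger of~\eqref{disc-eq:comm4}; $\bar C^{wz}_{cd};\bar E^u_{ab}\equiv\bar E^u_{ab};\bar C^{wz}_{cd}$ by the dagger of~\eqref{disc-eq:comm9}; and $\bar E^{w}_{cd};\bar E^u_{ab}$, which is already of the required form, being two adjacent $\bar E^{<0}$-terms. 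Each of these realises outcome~(2), pulling $\bar E^u_{ab}$ to the front and leaving the atomic term as $\mathtt t'$; each preserves the number of $\bar E^{<0}$-terms; and none introduces a forbidden generator, so the restriction on $\mathtt t'$ is maintained.

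The one delicate case, and where I expect the main obstacle to lie, is $R^{w\mapsto z};\bar E^u_{ab}$, the analogue of the renaming-into-superscript subtlety of Lemma~\ref{lma:Enonnegbar-commutes}. If $z\notin\{a,b\}$ the renaming is disjoint from the connection and they commute by the dagger of~\eqref{disc-eq:rd1}. When $z=a$ or $z=b$, however, $R^{w\mapsto z}$ renames a vertex into one of the two $\alpha$-vertices that $\bar E^u_{ab}$ deletes; here the dagger of~\eqref{disc-eq:rd3}, which reads $R^{q\mapsto p};\bar d^U_{D[p]}\simeq\bar d^U_{D[q/p]}$, collapses the composite into a single connection, giving $R^{w\mapsto a};\bar E^u_{ab}\equiv\bar E^u_{wb}$ and $R^{w\mapsto b};\bar E^u_{ab}\equiv\bar E^u_{aw}$. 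The subtle point to verify is the bookkeeping of the $\bar E^{<0}$-count: as $\mathtt t=R^{w\mapsto z}$ contains no $\bar E^{<0}$-terms, these identities must be read as instances of outcome~(2) with $\mathtt t'=\id$, the surviving single connection being the $\bar E^{w}_{cd}$ pulled to the front, rather than as outcome~(1); otherwise the count would spuriously increase by one. With this reading the base cases are complete, and the inductive step proceeds exactly as in Lemma~\ref{lma:I-commutes}, applying the hypothesis first to $\mathtt t_2;\bar E^u_{ab}$ and, if a connection is extracted, then to $\mathtt t_1;\bar E^{w}_{cd}$.
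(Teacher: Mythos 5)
Your proposal is correct and follows essentially the same route as the paper's proof: structural induction with exactly the same base cases ($\id$, $S^w$, $R^{w\mapsto z}$, $\bar I^{wz}$, $\bar C^{wz}_{cd}$, $\bar E^{w}_{cd}$), the same citations of~\eqref{disc-eq:sd1}, \eqref{disc-eq:comm4}, \eqref{disc-eq:comm9}, \eqref{disc-eq:rd1} and~\eqref{disc-eq:rd3} (the dagger transfer you invoke is already built into $\equiv$ via Definition~\ref{def:term-id}), and the inductive step deferred to the pattern of Lemma~\ref{lma:I-commutes}. Your reading of the $z=a$ and $z=b$ renaming cases as instances of outcome~(2) with $\mathtt t'=\id$ is the correct way to preserve the $\bar E^{<0}$-count, a point the paper leaves implicit.
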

\begin{proof}
By induction on $\mathtt t$.
\paragraph{Base cases}
\begin{align*}
\id;\bar E^{u}_{ab} &\equiv\bar E^{u}_{ab};\id, \\
S^w;\bar E^{u}_{ab} &\equiv \bar E^{u}_{ab};S^w, \tag{by~\eqref{disc-eq:sd1}} \\
R^{w\mapsto z};\bar E^{u}_{ab} &\equiv \begin{cases} \bar E^{u}_{ab};R^{w\mapsto z} \text{ if } z\notin\{a,b\}, \\
                                                \bar E^{u}_{wb} \text{ if } z=a, \\
                                                \bar E^{u}_{aw} \text{ if } z=b,
                                  \end{cases} \tag{by~\eqref{disc-eq:rd1} and~\eqref{disc-eq:rd3}} \\
\bar I^{wz};\bar E^{u}_{ab} &\equiv \bar E^{u}_{ab};\bar I^{wz}, \tag{by~\eqref{disc-eq:comm4}} \\
\bar C^{wz}_{cd};\bar E^{u}_{ab} &\equiv \bar E^{u}_{ab};\bar C^{wz}_{cd}, \tag{by~\eqref{disc-eq:comm9}} \\
\bar E^{w}_{cd};\bar E^{u}_{ab} &\phantom{\equiv} \text{ is already in the right form.}
\end{align*}
The inductive case is very similar to that of Lemma~\ref{lma:I-commutes}.
\end{proof}

\begin{corollary}\label{cor:Enegbar-form}
Any term is equal to a term of the form
$$\mathtt I;\mathtt C;\mathtt E^{<0};\mathtt E^{\geq 0};\bar{\mathtt E}^{\geq 0};\bar{\mathtt E}^{<0};\mathtt t,$$
where the term $\mathtt t$ contains no $I$-, $C$-, $E$- or $\bar E$-terms.
\end{corollary}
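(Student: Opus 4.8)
The plan is to obtain Corollary~\ref{cor:Enegbar-form} from the preceding Corollary~\ref{cor:Enonnegbar-form} together with the commutation Lemma~\ref{lma:Enegbar-commutes}, following verbatim the ``bubble-sort'' induction that produces each corollary in this appendix from its associated commutation lemma (cf.\ the passage from Lemma~\ref{lma:I-commutes} to Corollary~\ref{cor:I-form}). First I would take an arbitrary term and invoke Corollary~\ref{cor:Enonnegbar-form} to rewrite it, up to $\equiv$, as $\mathtt I;\mathtt C;\mathtt E^{<0};\mathtt E^{\geq 0};\bar{\mathtt E}^{\geq 0};\mathtt s$, where the tail $\mathtt s$ contains no $I$-, $C$-, $E$-, or $\bar E^{\geq 0}$-terms. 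It then remains to show that any such $\mathtt s$ equals $\bar{\mathtt E}^{<0};\mathtt t$ for some sequence $\bar{\mathtt E}^{<0}$ of $\bar E^{<0}$-terms and a term $\mathtt t$ free of $I$-, $C$-, $E$-, and $\bar E$-terms; substituting this back into the display yields the claimed form.

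The auxiliary statement about $\mathtt s$ I would prove by induction on the number $n$ of $\bar E^{<0}$-terms occurring in $\mathtt s$. In the base case $n=0$ the term $\mathtt s$ has no $\bar E^{<0}$-terms and, by hypothesis, no $\bar E^{\geq 0}$-terms either, hence no $\bar E$-terms at all (and no $I$-, $C$-, $E$-terms), so one takes the block $\bar{\mathtt E}^{<0}$ empty and $\mathtt t=\mathtt s$. For the inductive step with $n\geq 1$, I would isolate the leftmost $\bar E^{<0}$-term, writing $\mathtt s=\mathtt s_1;\bar E^u_{ab};\mathtt s_2$, where the prefix $\mathtt s_1$ contains no $\bar E^{<0}$-term and none of $I$, $C$, $E$, $\bar E^{\geq 0}$, and hence satisfies the hypotheses of Lemma~\ref{lma:Enegbar-commutes}. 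Applying the lemma to $\mathtt s_1;\bar E^u_{ab}$ gives either the absorption case $\mathtt s_1;\bar E^u_{ab}\equiv\mathtt s_1'$, whence $\mathtt s\equiv\mathtt s_1';\mathtt s_2$ now contains $n-1$ occurrences of $\bar E^{<0}$-terms, or the commutation case $\mathtt s_1;\bar E^u_{ab}\equiv\bar E^w_{cd};\mathtt s_1'$, whence $\mathtt s\equiv\bar E^w_{cd};(\mathtt s_1';\mathtt s_2)$ with $\mathtt s_1';\mathtt s_2$ again carrying $n-1$ such terms. In both cases $\mathtt s_1'$, and therefore $\mathtt s_1';\mathtt s_2$, still avoids $I$-, $C$-, $E$-, and $\bar E^{\geq 0}$-terms by the lemma, so the induction hypothesis applies to $\mathtt s_1';\mathtt s_2$ and finishes each case, the extracted $\bar E^w_{cd}$ being absorbed into the front block $\bar{\mathtt E}^{<0}$.

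The two clauses of Lemma~\ref{lma:Enegbar-commutes} are exactly what make this induction go through, so I expect no genuine obstacle beyond careful bookkeeping. The guarantee that $\mathtt t'$ still avoids $I$-, $C$-, $E$-, and $\bar E^{\geq 0}$-terms ensures the fixed head block $\mathtt I;\mathtt C;\mathtt E^{<0};\mathtt E^{\geq 0};\bar{\mathtt E}^{\geq 0}$ is never disturbed and that the auxiliary hypothesis is preserved across the recursion, while the guarantee that $\mathtt t$ and $\mathtt t'$ carry the same number of $\bar E^{<0}$-terms is precisely what forces the count $n$ to drop at each step, making the induction well-founded. The only point demanding a line of justification is that the prefix $\mathtt s_1$ of the leftmost $\bar E^{<0}$-term meets the lemma's hypotheses, which holds by leftmost-ness together with the form supplied by Corollary~\ref{cor:Enonnegbar-form}; the remaining corollaries of this appendix, culminating in the full $ICE$-form of Proposition~\ref{prop:ICE-form}, then follow by further iterations of the identical scheme for the $\bar C$- and $\bar I$-terms, after which the $R$- and $S$-terms are already at the end.
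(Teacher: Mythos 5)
Your proposal is correct and matches the argument the paper intends: Corollary~\ref{cor:Enegbar-form} is derived from Corollary~\ref{cor:Enonnegbar-form} by iteratively applying the two clauses of Lemma~\ref{lma:Enegbar-commutes} to migrate each $\bar E^{<0}$-term to the front of the tail, with the lemma's preservation of the excluded term classes and of the $\bar E^{<0}$-count ensuring the hypotheses persist and the induction terminates. Your bookkeeping (induction on the number of $\bar E^{<0}$-terms, isolating the leftmost occurrence so the prefix satisfies the lemma's hypotheses) is exactly the standard scheme this appendix uses for all the analogous corollaries.
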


\begin{lemma}\label{lma:Cbar-commutes}
Let $\mathtt t$ be a term not containing any $I$-, $C$-, $E$- or $\bar E$-terms such that the term $\mathtt t;\bar C^{uv}_{ab}$ is defined. Then there exists a term $\mathtt t'$ not containing any $I$-, $C$-, $E$- or $\bar E$-terms such that $\mathtt t$ and $\mathtt t'$ have the same number of $\bar C$-terms, and one of the following holds:
\begin{enumerate}[label={(\arabic*)}]
\item $\mathtt t;\bar C^{uv}_{ab}\equiv\mathtt t'$, or
\item there is a connection $\bar C^{wz}_{cd}$ such that $\mathtt t;\bar C^{uv}_{ab}\equiv\bar C^{wz}_{cd};\mathtt t'$.
\end{enumerate}
\end{lemma}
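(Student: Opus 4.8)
The plan is to prove this lemma by mimicking exactly the structure of the preceding commutation lemmas, most closely Lemma~\ref{lma:Enegbar-commutes}, proceeding by induction on the structure of $\mathtt t$. Since $\mathtt t$ contains no $I$-, $C$-, $E$- or $\bar E$-terms, the only single-generator terms that can occur in it are $\id$, $S^w$, $R^{w\mapsto z}$, $\bar I^{wz}$ and $\bar C^{wz}_{cd}$, so there are exactly five base cases to check. Crucially, none of these is a disconnection, so the $\bar C^{uv}_{ab}$ being pushed leftwards never meets a partner that could annihilate it into $S$- and $R$-terms; every base case is therefore a clean commutation, and the argument is in fact simpler than some of the earlier lemmas.

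For the base cases I would verify the following equalities, where $\bar C^{uv}_{ab}$ is moved to the left whenever possible:
\begin{align*}
\id;\bar C^{uv}_{ab} &\equiv \bar C^{uv}_{ab};\id, \\
S^w;\bar C^{uv}_{ab} &\equiv \bar C^{uv}_{ab};S^w, \\
R^{w\mapsto z};\bar C^{uv}_{ab} &\equiv \begin{cases} \bar C^{uv}_{ab};R^{w\mapsto z} & \text{if } z\notin\{a,b\}, \\ \bar C^{uv}_{wb} & \text{if } z=a, \\ \bar C^{uv}_{aw} & \text{if } z=b, \end{cases} \\
\bar I^{wz};\bar C^{uv}_{ab} &\equiv \bar C^{uv}_{ab};\bar I^{wz}, \\
\bar C^{wz}_{cd};\bar C^{uv}_{ab} &\phantom{{}\equiv{}}\text{ is already in the right form.}
\end{align*}
The $\id$ case is trivial, the $S$ case follows from~\eqref{disc-eq:sd1}, and the $R$ case follows from~\eqref{disc-eq:rd1} (when the renamed vertex avoids the removed pair $\{a,b\}$) together with the dagger of~\eqref{disc-eq:rd3} (when it collides). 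For the $\bar I$ case, the key observation is that the dagger of equation~\eqref{disc-eq:comm3}, which reads $C^{uv}_{ab};I^{wz}\simeq I^{wz};C^{uv}_{ab}$, yields $\bar I^{wz};\bar C^{uv}_{ab}\simeq \bar C^{uv}_{ab};\bar I^{wz}$; this is legitimate because $\equiv$ is induced (via Definition~\ref{def:term-id}) by both $\approx$ and the involution $\overline{()}$ on untyped terms. The final case produces a term already of the required shape, placing the result in alternative~(2) with the leading $\bar C^{wz}_{cd}$ emitted to the left and $\mathtt t'=\bar C^{uv}_{ab}$.

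The inductive step for a composite $\mathtt t;\mathtt s$ I would carry out verbatim as in Lemma~\ref{lma:I-commutes}: apply the inductive hypothesis to $\mathtt s$, and then, in the branch where a connection $\bar C^{wz}_{cd}$ is emitted to the left, apply it again to $\mathtt t$, checking throughout that the number of $\bar C$-terms is preserved. I expect the only point requiring care—rather than any genuine difficulty—to be the bookkeeping of vertex names in the renaming subcase and the correct dualisation used to extract the $\bar I$-commutation from a rule stated for disconnections. Both are routine once one notes that every equation in Figure~\ref{fig:disc-axioms} dualises under $\overline{()}$ and that the relevant well-typedness side conditions (such as $z\notin\{a,b\}$) are automatically satisfied whenever the source composite $\mathtt t;\bar C^{uv}_{ab}$ is defined.
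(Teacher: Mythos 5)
Your proof is correct and is essentially identical to the paper's: the same five base cases with the same case split on $z\in\{a,b\}$ in the renaming case, justified by~\eqref{disc-eq:sd1}, \eqref{disc-eq:rd1}, \eqref{disc-eq:rd3} and~\eqref{disc-eq:comm3}, with the inductive step delegated verbatim to the argument of Lemma~\ref{lma:I-commutes}. Your explicit remark that the $\bar I$- and $R$-cases use the \emph{daggers} of~\eqref{disc-eq:comm3} and~\eqref{disc-eq:rd3}, legitimated by Definition~\ref{def:term-id}, only spells out a step the paper leaves implicit.
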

\begin{proof}
By induction on $\mathtt t$.
\paragraph{Base cases}
\begin{align*}
\id;\bar C^{uv}_{ab} &\equiv \bar C^{uv}_{ab};\id, \\
S^w;\bar C^{uv}_{ab} &\equiv \bar C^{uv}_{ab};S^w, \tag{by~\eqref{disc-eq:sd1}} \\
R^{w\mapsto z};\bar C^{uv}_{ab} &\equiv \begin{cases} \bar C^{uv}_{ab};R^{w\mapsto z} \text{ if } z\notin\{a,b\}, \\
                                                 \bar C^{uv}_{wb} \text{ if } z=a, \\
                                                 \bar C^{uv}_{aw} \text{ if } z=b,
                                   \end{cases} \tag{by~\eqref{disc-eq:rd1} and~\eqref{disc-eq:rd3}} \\
\bar I^{wz};\bar C^{uv}_{ab} &\equiv \bar C^{uv}_{ab};\bar I^{wz}, \tag{by~\eqref{disc-eq:comm3}} \\
\bar C^{wz}_{cd};\bar C^{uv}_{ab} &\phantom{\equiv} \text{ is already in the right form.}
\end{align*}
The inductive case is very similar to that of Lemma~\ref{lma:I-commutes}.
\end{proof}

\begin{corollary}\label{cor:Cbar-form}
Any term is equal to a term of the form
$$\mathtt I;\mathtt C;\mathtt E^{<0};\mathtt E^{\geq 0};\bar{\mathtt E}^{\geq 0};\bar{\mathtt E}^{<0};\bar{\mathtt C};\mathtt t,$$
where the term $\mathtt t$ contains no $I$-, $C$-, $E$-, $\bar E$- or $\bar C$-terms.
\end{corollary}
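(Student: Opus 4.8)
The plan is to derive this corollary from Lemma~\ref{lma:Cbar-commutes} in exactly the same way as the preceding corollaries were obtained from their respective commutation lemmas, namely by a single induction that pulls every $\bar C$-term to the left while preserving the normal form already achieved for the earlier classes. First I would invoke Corollary~\ref{cor:Enegbar-form}, which tells me that any term is equal to a term of the shape
\[
\mathtt I;\mathtt C;\mathtt E^{<0};\mathtt E^{\geq 0};\bar{\mathtt E}^{\geq 0};\bar{\mathtt E}^{<0};\mathtt t,
\]
where the tail $\mathtt t$ contains no $I$-, $C$-, $E$- or $\bar E$-terms. It therefore suffices to show that such a $\mathtt t$ is equal to $\bar{\mathtt C};\mathtt t'$, with $\bar{\mathtt C}$ a sequence of $\bar C$-terms and $\mathtt t'$ containing no $I$-, $C$-, $E$-, $\bar E$- or $\bar C$-terms; prepending the fixed prefix then yields the claimed form.

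Second, I would prove this last claim by induction on the number of $\bar C$-terms occurring in $\mathtt t$. If there are none, $\mathtt t$ already has the required shape with $\bar{\mathtt C}$ empty. Otherwise, write $\mathtt t \equiv \mathtt u;\bar C^{uv}_{ab};\mathtt w$, where $\mathtt u$ is the segment preceding the first $\bar C$-term, so that $\mathtt u$ itself contains no $\bar C$-terms (and, being a subterm of $\mathtt t$, no $I$-, $C$-, $E$- or $\bar E$-terms either). The composite $\mathtt u;\bar C^{uv}_{ab}$ is defined, so Lemma~\ref{lma:Cbar-commutes} applies: either $\mathtt u;\bar C^{uv}_{ab}\equiv\mathtt u'$ or $\mathtt u;\bar C^{uv}_{ab}\equiv\bar C^{wz}_{cd};\mathtt u'$, where in both cases $\mathtt u'$ again contains no $I$-, $C$-, $E$- or $\bar E$-terms. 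In the first case $\mathtt t\equiv\mathtt u';\mathtt w$ has strictly fewer $\bar C$-terms, and the induction hypothesis finishes the job. In the second case $\mathtt t\equiv\bar C^{wz}_{cd};\mathtt u';\mathtt w$, and applying the induction hypothesis to $\mathtt u';\mathtt w$ (which has one fewer $\bar C$-term) yields $\mathtt u';\mathtt w\equiv\bar{\mathtt C}';\mathtt t'$; concatenating the pulled-out $\bar C^{wz}_{cd}$ gives the desired decomposition.

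The step I expect to carry the weight is not the induction above, which is purely bookkeeping, but rather the invariant that underlies it: that commuting a $\bar C$-term leftward never reintroduces an $I$-, $C$-, $E$- or $\bar E$-term. This closure property is exactly what Lemma~\ref{lma:Cbar-commutes} is engineered to guarantee, by restricting both its hypothesis and its conclusion to terms free of those four classes; all the genuine case analysis — in particular the interaction of $\bar C$ with renaming terms via~\eqref{disc-eq:rd3} and with the $\bar I$-, $S$- and $R$-terms — was already discharged there. Hence the only real obstacle has been front-loaded into the lemma, and the corollary follows formally. This places the overall $ICE$-derivation one step from completion: the residual tail $\mathtt t'$ now contains only $\bar I$-, $R$- and $S$-terms, which the next commutation lemma sorts into the final block $\bar{\mathtt I};\mathtt R;\mathtt S$.
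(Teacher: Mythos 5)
Your proposal is correct and matches the paper's (implicit) argument exactly: the paper obtains Corollary~\ref{cor:Cbar-form} by applying Corollary~\ref{cor:Enegbar-form} and then pulling each $\bar C$-term leftward through the residual tail via Lemma~\ref{lma:Cbar-commutes}, with the lemma's preservation clauses (no reintroduction of $I$-, $C$-, $E$- or $\bar E$-terms, and an unchanged $\bar C$-count) supplying precisely the invariant your induction on the number of $\bar C$-terms relies on. Nothing is missing.
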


\begin{lemma}\label{lma:Ibar-commutes}
Let $\mathtt t$ be a term not containing any $I$-, $C$-, $E$-, $\bar E$- or $\bar C$-terms such that the term $\mathtt t;\bar I^{uv}$ is defined. Then there exists a term $\mathtt t'$ not containing any $I$-, $C$-, $E$-, $\bar E$- or $\bar C$-terms such that $\mathtt t$ and $\mathtt t'$ have the same number of $\bar I$-terms, and one of the following holds:
\begin{enumerate}[label={(\arabic*)}]
\item $\mathtt t;\bar I^{uv}\equiv\mathtt t'$, or
\item there is a connection $\bar I^{ab}$ such that $\mathtt t;\bar I^{uv}\equiv\bar I^{ab};\mathtt t'$.
\end{enumerate}
\end{lemma}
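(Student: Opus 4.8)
The plan is to prove the statement by induction on the structure of $\mathtt t$, following verbatim the strategy of Lemmas~\ref{lma:I-commutes}--\ref{lma:Cbar-commutes}: each of those lemmas shows that a single generator of a fixed kind can be commuted to the left past any term built from generators that have already been pushed further left, and the present lemma is the last link in that chain. Since the hypothesis forbids $I$-, $C$-, $E$-, $\bar E$- and $\bar C$-terms in $\mathtt t$, the term $\mathtt t$ is built solely from $\id$, the $S$-terms, the $R$-terms and the $\bar I$-terms, so only four base cases arise, and the inductive case is word-for-word that of Lemma~\ref{lma:I-commutes}.

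For the base cases I would argue as follows. The case $\id;\bar I^{uv}\equiv\bar I^{uv};\id$ is immediate, and the case $\bar I^{wz};\bar I^{uv}$ is already of the form required by alternative~(2), taking $\bar I^{ab}\coloneqq\bar I^{wz}$ and $\mathtt t'\coloneqq\bar I^{uv}$ (note that two connections never annihilate, in contrast with the pair $I^{wz};\bar I^{uv}$ that forced the $S^u;S^v$ case in Lemma~\ref{lma:I-commutes}). The remaining cases $S^w;\bar I^{uv}$ and $R^{w\mapsto z};\bar I^{uv}$ are handled through the dagger: applying $\overline{()}$ gives $\overline{S^w;\bar I^{uv}}=I^{uv};S^w$ and $\overline{R^{w\mapsto z};\bar I^{uv}}=I^{uv};R^{z\mapsto w}$, and equations~\eqref{disc-eq:sd1} and~\eqref{disc-eq:rd1} commute these disconnection expressions, so Definition~\ref{def:term-id} yields $S^w;\bar I^{uv}\equiv\bar I^{uv};S^w$ and $R^{w\mapsto z};\bar I^{uv}\equiv\bar I^{uv};R^{w\mapsto z}$, both instances of alternative~(2).

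For the inductive step, suppose $\mathtt t;\mathtt s;\bar I^{uv}$ is defined. I would first apply the induction hypothesis to $\mathtt s;\bar I^{uv}$, obtaining $\mathtt s'$ with either $\mathtt s;\bar I^{uv}\equiv\mathtt s'$, in which case $\mathtt t;\mathtt s'$ finishes alternative~(1), or $\mathtt s;\bar I^{uv}\equiv\bar I^{ab};\mathtt s'$. In the latter case $\mathtt t;\bar I^{ab}$ is defined, so applying the induction hypothesis again bubbles a further $\bar I$-term to the very front, exactly as in Lemma~\ref{lma:I-commutes}; tracking the $\bar I$-counts along the way produces the required $\mathtt t'$. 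Chaining this lemma after Corollary~\ref{cor:Cbar-form} then completes the induction behind Proposition~\ref{prop:ICE-form}: every term equals one in $ICE$-form.

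The only delicate point is the renaming base case $R^{w\mapsto z};\bar I^{uv}$ when the fresh name $z$ (or the erased name $w$) coincides with an ionic superscript $u$ or $v$, since then the naively commuted expression can fail to be well-typed. I expect this to be resolved exactly as the analogous coincidence is treated for $I$ in Lemma~\ref{lma:I-commutes}, namely by connecting along the old name and renaming afterwards, mirroring~\eqref{disc-eq:rd2} through the dagger; because $\bar I$ neither creates nor destroys $\alpha$-vertices, no further interaction can occur, and the requirement that the two ionic endpoints carry equal and opposite charge keeps the rewritten term well-typed. All equalities above transport legitimately between disconnections and connections because $\overline{()}$ is a dagger functor on $\Disc$ and $\equiv$ is defined (Definition~\ref{def:term-id}) to hold whenever either a term or its dagger satisfies $\approx$, so no axioms beyond those in Figure~\ref{fig:disc-axioms} are needed.
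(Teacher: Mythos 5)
Your proof is correct and follows essentially the same route as the paper's: the same structural induction with the four base cases $\id$, $S^w$, $R^{w\mapsto z}$ and $\bar I^{wz}$, the same inductive step imported from Lemma~\ref{lma:I-commutes}, and your dagger detour for the $S$- and $R$-cases merely makes explicit what the paper's citations of~\eqref{disc-eq:sd1} and~\eqref{disc-eq:rd1} already presuppose via Definition~\ref{def:term-id}. Your closing worry about the collision $z\in\{u,v\}$ is misattributed, though: Lemma~\ref{lma:I-commutes} performs no such case split (that treatment, via~\eqref{disc-eq:rd2}, occurs only for $E^{uv}$ and $\bar E^{uv}$ in Lemmas~\ref{lma:Enonneg-commutes} and~\ref{lma:Enonnegbar-commutes}), and the paper's own proof of the present lemma commutes $R^{w\mapsto z}$ past $\bar I^{uv}$ unconditionally, so your extra paragraph is cautious surplus rather than a step of the paper's argument.
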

\begin{proof}
By induction on $\mathtt t$.
\paragraph{Base cases}
\begin{align*}
\id;\bar I^{uv} &\equiv\bar I^{uv};\id, \\
S^w;\bar I^{uv} &\equiv \bar I^{uv};S^w, \tag{by~\eqref{disc-eq:sd1}} \\
R^{w\mapsto z};\bar I^{uv} &\equiv \bar I^{uv};R^{w\mapsto z}, \tag{by~\eqref{disc-eq:rd1}} \\
\bar I^{wz};\bar I^{uv} &\phantom{\equiv} \text{ is already in the right form.}
\end{align*}
The inductive case is very similar to that of Lemma~\ref{lma:I-commutes}.
\end{proof}

\begin{corollary}\label{cor:Ibar-form}
Any term is equal to a term of the form
$$\mathtt I;\mathtt C;\mathtt E^{<0};\mathtt E^{\geq 0};\bar{\mathtt E}^{\geq 0};\bar{\mathtt E}^{<0};\bar{\mathtt C};\bar{\mathtt I};\mathtt t,$$
where the term $\mathtt t$ contains only $S$-, $R$-, and identity terms.
\end{corollary}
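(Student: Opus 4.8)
The plan is to obtain this statement as the terminal step of the very same left-to-right peeling procedure that produced the preceding seven corollaries, combining Corollary~\ref{cor:Cbar-form} with Lemma~\ref{lma:Ibar-commutes}. By Corollary~\ref{cor:Cbar-form}, an arbitrary term is equal (under $\equiv$) to one of the form
$$\mathtt I;\mathtt C;\mathtt E^{<0};\mathtt E^{\geq 0};\bar{\mathtt E}^{\geq 0};\bar{\mathtt E}^{<0};\bar{\mathtt C};\mathtt t,$$
where the residual $\mathtt t$ contains no $I$-, $C$-, $E$-, $\bar E$-, or $\bar C$-terms, and is therefore built solely from $\bar I$-, $S$-, $R$-, and identity generators. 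It thus remains only to re-express $\mathtt t$ itself as $\bar{\mathtt I};\mathtt s$, where $\bar{\mathtt I}$ is a (possibly empty) sequence of $\bar I$-terms and $\mathtt s$ is built solely from $S$-, $R$-, and identity generators; slotting this decomposition back in after $\bar{\mathtt C}$ then yields exactly the asserted form, with $\mathtt t$ (in the statement) taken to be $\mathtt s$.

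First I would isolate this sub-claim about $\mathtt t$ and prove it by induction on the word length of $\mathtt t$ written as a composite of generators, mirroring precisely how Corollary~\ref{cor:I-form} follows from Lemma~\ref{lma:I-commutes}. The base case (an empty or single-generator $\mathtt t$) is immediate. For the inductive step I write $\mathtt t=\mathtt t_0;g$ with $g$ its last generator; by the inductive hypothesis $\mathtt t_0\equiv\bar{\mathtt I}_0;\mathtt s_0$ with $\mathtt s_0$ containing only $S$-, $R$-, and identity terms. If $g$ is an $S$-, $R$-, or identity term, I append it to $\mathtt s_0$ and am done. If $g=\bar I^{uv}$, then since $\mathtt s_0$ contains none of the $I$-, $C$-, $E$-, $\bar E$-, or $\bar C$-terms, Lemma~\ref{lma:Ibar-commutes} applies to the well-typed composite $\mathtt s_0;\bar I^{uv}$ and rewrites it either as $\mathtt s_0'$ or as $\bar I^{ab};\mathtt s_0'$; in both cases the guarantees of the lemma keep $\mathtt s_0'$ in the class of $S$-, $R$-, and identity terms, so $\mathtt t\equiv\bar{\mathtt I}_0;\mathtt s_0'$ or $\mathtt t\equiv\bar{\mathtt I}_0;\bar I^{ab};\mathtt s_0'$, completing the induction.

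The step requiring the most care is the bookkeeping around the invariants of Lemma~\ref{lma:Ibar-commutes}, rather than any genuinely hard argument. I must verify that each application neither reintroduces any already-sorted generator class into $\mathtt s_0'$ nor alters its $\bar I$-count (which is exactly what keeps $\mathtt s_0'$ in the desired class), and that well-typedness of the composite $\mathtt s_0;\bar I^{uv}$ is inherited from well-typedness of $\mathtt t$ through the congruence $\equiv$. All the genuinely substantive content — the case analysis showing that a $\bar I$-term commutes to the left past every other generator it can meet, absorbing into $S$/$R$-terms where appropriate — is already discharged inside Lemma~\ref{lma:Ibar-commutes}, so no new obstacle arises at this final stage: the corollary is pure structural bookkeeping layered on top of the commutation lemmas.
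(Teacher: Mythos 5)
Your proposal is correct and matches the paper's (implicit) argument exactly: the corollary is obtained from Corollary~\ref{cor:Cbar-form} by inductively pushing the remaining $\bar I$-terms to the left via Lemma~\ref{lma:Ibar-commutes}, in the same pattern by which Corollary~\ref{cor:I-form} follows from Lemma~\ref{lma:I-commutes}. Your bookkeeping observations — that the lemma's ``same number of $\bar I$-terms'' clause together with its exclusion of the already-sorted generator classes keeps the residual term in the $S$-, $R$-, identity class, and that well-typedness passes through $\equiv$ — are precisely the points the paper leaves tacit, so nothing is missing.
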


\begin{lemma}\label{lma:S-commutes}
Let $\mathtt t$ be a term containing only $S$-, $R$-, and identity terms such that the term $S^{u};\mathtt t$ is defined. Then there exists a term $\mathtt t'$ containing only $S$-, $R$-, and identity terms such that $\mathtt t$ and $\mathtt t'$ have the same number of $S$-terms, and one of the following holds:
\begin{enumerate}[label={(\arabic*)}]
\item $S^{u};\mathtt t\equiv\mathtt t'$, or
\item there is a term $S^{v}$ such that $S^{u};\mathtt t\equiv\mathtt t';S^{v}$.
\end{enumerate}
\end{lemma}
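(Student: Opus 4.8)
The plan is to prove the statement by structural induction on $\mathtt t$, following exactly the template of Lemmas~\ref{lma:I-commutes}--\ref{lma:Ibar-commutes}, with the sole difference that the distinguished term $S^u$ now sits on the \emph{left} of $\mathtt t$ and is transported \emph{rightward}, dually to the disconnections and connections handled previously. This is the final commuting lemma: Corollary~\ref{cor:Ibar-form} has already reduced an arbitrary term to the shape $\mathtt I;\mathtt C;\mathtt E^{<0};\mathtt E^{\geq 0};\bar{\mathtt E}^{\geq 0};\bar{\mathtt E}^{<0};\bar{\mathtt C};\bar{\mathtt I};\mathtt t$ with $\mathtt t$ built only from $S$-, $R$- and identity terms, and the present lemma lets one migrate every $S$-term to the very end, past the block of renaming terms. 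Because $\mathtt t$ is built only from these three generators, the base cases are $\id$, $S^w$ and $R^{w\mapsto z}$, and the only inductive case is a binary composite.

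For the base cases I would argue as follows, checking in each instance that the resulting $\mathtt t'$ carries the same number of $S$-terms as $\mathtt t$. When $\mathtt t = \id$ I take $\mathtt t' = \id$ and $S^v = S^u$, since $S^u;\id \equiv \id;S^u$, so alternative (2) holds. When $\mathtt t = S^w$ there are two subcases,
\begin{align*}
S^u;S^w &\simeq S^w;S^u \quad (u \neq w), \\
S^u;S^u &\simeq S^u \quad (u = w),
\end{align*}
by \eqref{disc-eq:scomm} and \eqref{disc-eq:sidem} respectively; the first gives (2) with $\mathtt t' = S^w$ and $S^v = S^u$, the second gives (1) with $\mathtt t' = S^u$. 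When $\mathtt t = R^{w\mapsto z}$, the trivial rename $z = w$ reduces to the previous case via $R^{w\mapsto w}\equiv S^w$ (equation~\eqref{disc-eq:refl}); otherwise I split on whether $u$ is the renamed vertex,
\begin{align*}
S^w;R^{w\mapsto z} &\simeq R^{w\mapsto z} \quad (u = w), \\
S^u;R^{w\mapsto z} &\approx R^{w\mapsto z};S^u \quad (u \neq w),
\end{align*}
using \eqref{disc-eq:sr2} and \eqref{disc-eq:sr1}. The first yields (1), and the second yields (2) with $S^v = S^u$.

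For the inductive step, write $\mathtt t = \mathtt p;\mathtt q$ where the statement holds for $\mathtt p$ and $\mathtt q$, and suppose $S^u;\mathtt p;\mathtt q$ is defined. Applying the hypothesis to $\mathtt p$ gives $\mathtt p'$ with either $S^u;\mathtt p \equiv \mathtt p'$ or $S^u;\mathtt p \equiv \mathtt p';S^w$. In the former case $\mathtt p';\mathtt q$ already witnesses (1); in the latter, I transport $S^w$ rightward through $\mathtt q$ by the hypothesis for $\mathtt q$, obtaining either $\mathtt p';\mathtt q'$ (witnessing (1)) or $\mathtt p';\mathtt q';S^v$ (witnessing (2)). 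Additivity of the $S$-count under composition, together with the base-case bookkeeping, guarantees that the output has the same number of $S$-terms as $\mathtt t$.

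The main obstacle is purely the bookkeeping of vertex names and well-typedness, rather than any conceptual difficulty. The two equations that remove an $S$-term, namely idempotence \eqref{disc-eq:sidem} and the absorption $S^u;R^{u\mapsto v}\simeq R^{u\mapsto v}$ from \eqref{disc-eq:sr2}, only ever consume the transported $S$-term itself (yielding alternative (1)), never an $S$-term already present in $\mathtt t$; hence the count in $\mathtt t'$ always equals that in $\mathtt t$. Equally, when invoking \eqref{disc-eq:sr1} to commute $S^u$ past $R^{w\mapsto z}$ with $u \neq w$, I must confirm that $u$ is untouched by the rename — which follows because the freshness condition on $z$ in $R^{w\mapsto z}$ forces $u \neq z$ whenever $S^u;R^{w\mapsto z}$ is well-typed — so that $R^{w\mapsto z};S^u$ is again well-typed of the same type and the annotated relation $\approx$ upgrades to an equality $\equiv$ of terms, exactly as in the preceding lemmas.
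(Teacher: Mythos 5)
Your proof is correct and follows essentially the same route as the paper: structural induction on $\mathtt t$ with the same base-case equations \eqref{disc-eq:sr1}, \eqref{disc-eq:sr2}, \eqref{disc-eq:scomm}, \eqref{disc-eq:sidem} and the same two-stage inductive step (dual to Lemma~\ref{lma:I-commutes}). The only differences are cosmetic — the paper simply observes that $S^u;S^w$ is already of the form $\mathtt t';S^v$ rather than commuting it, and leaves your (correct) well-typedness check that $u\neq z$ in the case $u\neq w$ implicit.
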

\begin{proof}
By induction on $\mathtt t$.
\paragraph{Base cases}
\begin{align*}
S^u;\id &\equiv \id;S^u, \\
S^u;S^w &\phantom{\equiv} \text{ is already in the right form,} \\
S^u;R^{w\mapsto z} &\equiv \begin{cases} R^{w\mapsto z};S^u \text{ if } u\neq w, \\
                                    R^{u\mapsto z} \text{ if } u=w.
                      \end{cases} \tag{by~\eqref{disc-eq:sr1} and~\eqref{disc-eq:sr2}}
\end{align*}
The inductive case is very similar to that of Lemma~\ref{lma:I-commutes}.
\end{proof}

\begin{corollary}[Proposition~\ref{prop:ICE-form}]\label{cor:ICE-form}
Any term is equal to a term in an $ICE$-form.
\end{corollary}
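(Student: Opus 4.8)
The plan is to sort the generators of an arbitrary term into the prescribed order of Definition~\ref{def:ICE-form} by a ``bubble sort'': for each syntactic kind, in the order $I, C, E^{<0}, E^{\geq 0}, \bar E^{\geq 0}, \bar E^{<0}, \bar C, \bar I$, I would establish a migration lemma asserting that a single generator of that kind can be commuted to the left past any term built only from kinds that are supposed to come after it, possibly at the cost of introducing extra $R$- and $S$-terms (which live in the two rightmost, not-yet-sorted layers) or of cancelling against a matching connection. These are exactly Lemmas~\ref{lma:I-commutes}, \ref{lma:C-commutes}, \ref{lma:Eneg-commutes}, \ref{lma:Enonneg-commutes}, \ref{lma:Enonnegbar-commutes}, \ref{lma:Enegbar-commutes}, \ref{lma:Cbar-commutes}, \ref{lma:Ibar-commutes}, and their respective corollaries stack the sorted prefixes one layer at a time, culminating in Corollary~\ref{cor:Ibar-form}.

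Each migration lemma is proved by structural induction on the term being commuted past. The base cases enumerate the outcome of commuting the migrating generator with every single generator, and each such outcome is an instance of an equation of Figure~\ref{fig:disc-axioms} or a derived identity of Proposition~\ref{prop:disc-ids}: pure commutations (e.g.~\eqref{disc-eq:comm1}, \eqref{disc-eq:sd1}), relabelling rewrites that spawn fresh $R$-terms (e.g.~\eqref{disc-eq:rd3}, \eqref{disc-eq:rd4}), and cancellations where a disconnection meets its inverse connection and collapses to $S$/$R$-terms (e.g.~\eqref{disc-eq:ddbar4-2}, \eqref{disc-eq:eebar}, \eqref{prop:disc-ids0}). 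The inductive step for a composite $\mathtt t;\mathtt s$ threads the generator first through $\mathtt s$ and then through $\mathtt t$, exactly as spelled out in the inductive case of Lemma~\ref{lma:I-commutes}.

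Given all of this, the proof of the corollary itself is short. By Corollary~\ref{cor:Ibar-form} every term equals one of the form $\mathtt I;\mathtt C;\mathtt E^{<0};\mathtt E^{\geq 0};\bar{\mathtt E}^{\geq 0};\bar{\mathtt E}^{<0};\bar{\mathtt C};\bar{\mathtt I};\mathtt t$, whose tail $\mathtt t$ consists solely of $S$-, $R$-, and identity terms; the identities are absorbed by unitality. It then remains to push all $S$-terms to the right of all renaming terms. I would do this by repeated application of Lemma~\ref{lma:S-commutes}: taking the leftmost $S$-term in $\mathtt t$ and migrating it rightward past the $R$-terms, where it either is absorbed into a renaming (when $S^u$ meets $R^{u\mapsto z}$, cf.~\eqref{disc-eq:sr2}) or reaches the far right; induction on the number of $S$-terms then yields the tail in the form $\mathtt R;\mathtt S$, producing the full $ICE$-form. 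The degenerate case in which the term reduces to an identity is handled directly by the first clause of Definition~\ref{def:ICE-form}.

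The main obstacle is maintaining two invariants throughout the migration lemmas, on which correctness of the whole pipeline rests. First, every byproduct of a commutation must belong to a layer strictly later than the layer currently being sorted, so that migrating a generator never disturbs the already-sorted prefix; verifying this requires checking, kind by kind, that the spawned terms are always $R$- or $S$-terms (the two rightmost layers) and that cancellations only ever produce such terms. Second, the count of the kind being migrated must be preserved (no new generator of that kind is created), which is what guarantees termination of both the inner structural induction and the outer layer-by-layer iteration. Interwoven with these is the bookkeeping of well-typedness: several rewrites of Figure~\ref{fig:disc-axioms} only hold as $\lesssim$ or $\simeq$, so each commutation must be applied in the direction for which the relevant side is guaranteed to be well-typed, and the migration lemmas are stated so as to preserve the domain and codomain.
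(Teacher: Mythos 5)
Your proposal is correct and reproduces the paper's proof essentially verbatim: the same ordered sequence of migration lemmas (Lemmas~\ref{lma:I-commutes}--\ref{lma:Ibar-commutes}), each proved by the same structural induction whose base cases enumerate single-generator commutations via Figure~\ref{fig:disc-axioms} and Proposition~\ref{prop:disc-ids}, stacked through the intermediate corollaries up to Corollary~\ref{cor:Ibar-form}, with the residual tail of $S$-, $R$-, and identity terms then sorted by repeated application of Lemma~\ref{lma:S-commutes}. The invariants you emphasise --- byproducts of each commutation landing only in the strictly later $R$/$S$ layers, preservation of the count of the migrating kind, and directing each $\lesssim$-rewrite toward the guaranteed-well-typed side --- are precisely the conditions the paper's lemma statements are engineered to maintain, so nothing is missing.
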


\end{document}